\newcommand{\dps}{\displaystyle}
\newcommand{\ii}{\infty}
\newcommand\R{{\ensuremath {\mathbb R} }}
\newcommand\bS{{\ensuremath {\mathbb S} }}
\newcommand\bE{{\ensuremath {\mathbb E} }}
\newcommand\C{{\ensuremath {\mathbb C} }}
\newcommand\N{{\ensuremath {\mathbb N} }}
\newcommand\Z{{\ensuremath {\mathbb Z} }}
\newcommand\1{{\ensuremath {\mathds 1} }}
\newcommand\bP{{\ensuremath {\mathds P} }}
\newcommand\bQ{{\ensuremath {\mathds Q} }}
\renewcommand\phi{\varphi}
\newcommand{\gS}{\mathfrak{S}}
\newcommand{\wto}{\rightharpoonup}
\newcommand{\cR}{\mathcal{R}}
\newcommand{\cX}{\mathcal{X}}
\newcommand{\cE}{\mathcal{E}}
\newcommand{\cF}{\mathcal{F}}
\newcommand{\cD}{\mathcal{D}}
\newcommand{\cL}{\mathscr{L}}
\newcommand{\eps}{\epsilon}
\newcommand{\dr}{{\rm d}r}
\newcommand{\dx}{{\rm d}x}
\newcommand{\dy}{{\rm d}y}
\newcommand{\dz}{{\rm d}z}
\newcommand{\dk}{{\rm d}k}
\newcommand{\du}{{\rm d}u}
\newcommand{\dv}{{\rm d}v}
\newcommand{\dt}{{\rm d}t}
\newcommand{\rd}{{\rm d}}
\renewcommand{\epsilon}{\varepsilon}
\newcommand\pscal[1]{{\ensuremath{\left\langle #1 \right\rangle}}}
\renewcommand{\geq}{\geqslant}
\renewcommand{\leq}{\leqslant}
\renewcommand{\tilde}{\widetilde}
\newcommand{\nn}{\nonumber}
\theoremstyle{plain}
\newtheorem{theorem}{Theorem}
\newtheorem{lemma}[theorem]{Lemma}
\newtheorem{corollary}[theorem]{Corollary}
\newtheorem{conjecture}[theorem]{Conjecture}
\newtheorem{definition}[theorem]{Definition}
\newtheorem{remark}[theorem]{Remark}
\begin{document}

\title{Coulomb and Riesz gases: The known and the unknown} %Title of paper

\author{Mathieu Lewin}
\email[]{mathieu.lewin@math.cnrs.fr}
%\homepage[]{Your web page}
% \thanks{}
%\altaffiliation{}
\affiliation{CNRS \& CEREMADE, Universit\'e Paris-Dauphine, PSL University, Place de Lattre de Tassigny, 75 016 PARIS, FRANCE}

\date{\today}

\begin{abstract}
We review what is known, unknown and expected about the mathematical properties of Coulomb and Riesz gases. Those describe infinite configurations of points in $\R^d$ interacting with the Riesz potential $\pm |x|^{-s}$ (resp.~$-\log|x|$ for $s=0$). Our presentation follows the standard point of view of statistical mechanics, but we also mention how these systems arise in other important situations (e.g. in random matrix theory). The main question addressed in the article is how to properly define the associated infinite point process and characterize it using some (renormalized) equilibrium equation. This is largely open in the long range case $s<d$. For the convenience of the reader we give the detail of what is known in the short range case $s>d$. In the last part we discuss phase transitions and mention what is expected on physical grounds.

\bigskip

\footnotesize \noindent \copyright~2022 by the author. This paper may be reproduced, in its entirety, for non-commercial purposes
\end{abstract}

\pacs{05.20.-y,02.30.Em}% insert suggested PACS numbers in braces on next line

\maketitle %\maketitle must follow title, authors, abstract and \pacs

%%%%%%%%%%%%%%%%%%%%%%%%%%%%%%%%%%%%%%%%%%%%%%%%%%%

\begin{flushright}
\sl in memory of Freeman J. Dyson (1923--2020)
\end{flushright}

\tableofcontents

\section{Introduction: the Riesz potential}
A \emph{Riesz gas}\cite{Riesz-38} is a family of probabilities over random infinite configurations of points in~$\R^d$ (point processes), which has a specific behavior with respect to scaling. It depends on the two parameters
\begin{equation}
 s\in(-2,+\ii],\qquad \Upsilon=\beta\rho^{\frac{s}d}\in[0,\ii],
 \label{eq:parameters}
\end{equation}
where $\rho>0$ is the average number of points per unit volume (also called the intensity), $\beta=1/T\in(0,\ii]$ is the inverse temperature which controls the amount of randomness in the system and $s$ is a degree of homogeneity. More precisely, a Riesz gas is a \emph{Gibbs point process}~\cite{Georgii-11} for which the interaction energy of every point $x_{j_0}$ with all the other points $x_j$ in an infinite configuration is (formally) given by
\begin{equation}
\sum_{j\neq j_0} V_s(x_{j_0}-x_j)
\label{eq:interaction_energy_intro}
\end{equation}
with the homogeneous potential of degree $-s$
\begin{equation}
 V_s(x)=\begin{cases}
 |x|^{-s}&\text{for $s\in(0,+\ii]$,}\\
 -\log|x|&\text{for $s=0$,}\\
 -|x|^{-s}&\text{for $s\in(-2,0)$.}\\
 \end{cases}
\label{eq:def_V_s_intro}
\end{equation}
For $s=0$ the name \emph{log gas} is also sometimes used. When $s\to0^+$ we have $V_s(x)=1-s\log|x|+o(s)_{s\to0^+}$ for every $x\neq0$ and thus  retain the first non-trivial term in the expansion. The \emph{Coulomb gas} corresponds to $s=d-2$, in which case $V_s$ is proportional to the fundamental solution of the Laplacian:
\begin{equation}
 -\Delta V_{d-2}=\begin{cases}
|\bS^{d-1}|\,(d-2)\,\delta_0&\text{for } d\geq3,\\
2\pi \,\delta_0,&\text{for } d=2,\\
2 \,\delta_0,&\text{for } d=1.
\end{cases}
 \label{eq:fund_solution_Coulomb}
\end{equation}
For $s=+\ii$ we recover the \emph{hard sphere gas} with impenetrable spheres of radius $1/2$.

In the definition~\eqref{eq:def_V_s_intro}, the signs are chosen to ensure that $V_s$ is \emph{repulsive}, that is, \emph{decreasing with $|x|$}. This way, the points will not be too close from each other. The amount of repulsion depends on the parameter $s$ and on the distance between the points. The repulsion is stronger at small distances for large $s$, and at large distances for small $s$. A natural threshold is given by $s=d$, the space dimension. For $s>d$ the potential $V_s$ is integrable at infinity but not at the origin. In this case the series~\eqref{eq:interaction_energy_intro} converges for any reasonable infinite configuration of points $\{x_j\}_{j\in\N}\subset\R^d$, for instance when the smallest distance between the $x_j$'s is positive. This is called the \emph{short range case}. On the contrary, when $s\leq d$ the series~\eqref{eq:interaction_energy_intro} will usually never converge and it has to be properly \emph{renormalized}. This is the \emph{long range case} to which belongs the important Coulomb potential at $s=d-2$.

In order to be able to renormalize the series~\eqref{eq:interaction_energy_intro} in the long range case $s\leq d$, it will be very important that $V_s$ be \emph{positive-type}, that is, with a positive Fourier transform. Recall that the Fourier transform of $|x|^{-s}$ is
\begin{equation}
\widehat{\frac1{|x|^s}}(k)=\frac{2^{\frac{d}2-s}\Gamma\left(\frac{d-s}{2}\right)}{\Gamma\left(\frac{s}{2}\right)}\,{\rm f.p.}\!\left(\frac1{|k|^{d-s}}\right),\qquad 0\neq s<d,\quad s\notin -2\N,
 \label{eq:Fourier_V_s}
\end{equation}
where, for $s<0$, ${\rm f.p.}$ denotes the Hadamard finite part (see Ref.~\onlinecite[VII.7]{Schwartz}). In the denominator $\Gamma(s/2)$ changes sign at every non-positive even integer. This explains our constraint that $s>-2$ in~\eqref{eq:def_V_s_intro}. When $-4<s<-2$, no multiple of $|x|^{-s}$ is both repulsive and positive-type.

Whereas $s$ controls the repulsion between the points, the other parameter $\Upsilon=\beta \rho^{s/d}$ is used to monitor the amount of randomness in the system:
\begin{itemize}
 \item When $\Upsilon$ is large, our point process will usually be \emph{strongly correlated}, with the positions of the individual points highly dependent of the others ones. Our Riesz point process could be non  unique (which is related to the breaking of symmetries, as we will see). For $\Upsilon=+\ii$ uniqueness will in fact never hold.
 \item When $\Upsilon$ is small, the point process will be unique, hence invariant under all isometries of $\R^d$. Correlations will decay fast, which means that the points in two regions located far away from each other will be almost independent. This situation is usually called a `gas' in statistical physics.
\end{itemize}
Even if for large values of $\Upsilon$ the `Riesz gas' is in fact not a `gas', this name is nevertheless commonly used in the literature to illustrate that there are infinitely many points in the system.

Long range systems such as~\eqref{eq:def_V_s_intro} for $s\leq d$ play a very important role in physics~\cite{DauRufAriWil-02,CamDauRuf-09,CamDauFanRuf-14}. Typical examples include galaxies or self-gravitating stars (interacting with the Newton force), charged systems such as plasmas (interacting with the 3D Coulomb force), two-dimensional and geophysical flows (interacting with the 2D Coulomb force), dipolar systems such as dielectrics and diamagnets.

Long range Riesz gases also appear in many unexpected mathematical situations, including Ginzburg-Landau vortices~\cite{SanSer-12}, random matrices~\cite{Mehta-10,Forrester-10}, eigenvalues of random Schr\"odinger operators~\cite{AizWar-15}, quantum chaos~\cite{BohGiaSch-84}, Fekete points on manifolds~\cite{HarSaf-04,BorHarSaf-19}, complex geometry~\cite{BerBouWitt-11}, Laughlin functions~\cite{LieRouYng-19,Klevtsov-16}, zeros of random polynomials~\cite{ZeiZel-10}, zeros of the Riemann function~\cite{Montgomery-73,RudSar-96,BouKea-13}, modular forms and sphere packing problems~\cite{CohKumMilRadVia-19_ppt,Viazovska-21,PetSer-20}. Riesz gases appear everywhere and seem to be sort of universal. They have been the object of many old and recent works. Unfortunately, each area comes with its own definition of what a Riesz gas is and its own set of tools to study it. Those are not always easily transferred to other fields of research.

In this paper we will review what is known, unknown and expected for Riesz gases. Our definition of Riesz gases will follow the standard point of view of statistical mechanics, but we will also mention how these systems arise in some other important situations. We will provide some proofs, when they are simple enough or hard to find in the literature. What follows does not attempt to be an exhaustive treatment of Riesz gases and only reflects the point of view of the author. There are other possible approaches to the problem. In particular, we will \emph{not} speak at all about the quantum problem and multi-component systems (which have several types of points).

In spite of the large amount of works on the subject, many important questions are still completely open. Several of them will be mentioned in this work. The most important is probably the mere \emph{existence} of the Riesz point process and its characterization in terms of some equilibrium equation. This is well known for $s>d$ but open in many cases for $s<d$. One related difficulty is to give a meaning to the renormalized potential, a question which will occupy a large part of the article.

The review is dedicated to the memory of Freeman J. Dyson, who was extremely influential in the subject. With Wigner he has essentially created the mathematical field of random matrix theory~\cite{Dyson-62a,*Dyson-62b,*Dyson-62c,DysMeh-63a,DysMeh-63b}, where the log gas ($s=0$) appears naturally, as we will see in Section~\ref{sec:random_matrices}. In this context it is often called the \emph{Dyson gas}\cite{WieZab-00}. He gave the first proof of the instability of quantum bosonic matter with Coulomb forces~\cite{Dyson-67} and, with Lenard, of the stability of fermionic matter~\cite{DysLen-67,DysLen-68}. He was the one mentioning to Montgomery that the conjectured distribution of zeros of the Riemann Zeta function is the same as a log gas in dimension $d=1$~\cite{Montgomery-73}. In 1969, he proved\cite{Dyson-69} that short range lattice Riesz gases can undergo phase transitions in one dimension, if $s\leq2$. In 1971, he considered a classical Coulomb gas with two species of charges and a uniform background (now called a Wigner-Dyson lattice) and suggested this might be relevant for external crustal layers of neutron stars\cite{Dyson-71}.

In Sections~\ref{sec:thermo_limit_short_range} and~\ref{sec:thermo_limit_long_range} we will discuss how Riesz gases are defined in the framework of statistical mechanics. We take a bounded domain $\Omega\subset\R^d$ and a finite number of points $N$ in $\Omega$, with $\rho=N/|\Omega|$. We then look at what is happening to the Gibbs point process, in the limit $N \to\ii$ and $\Omega\nearrow\R^d$. This is usually called a \emph{thermodynamic limit}. In Section~\ref{sec:periodic} we discuss another way of approaching the problem using analytic continuation and periodized systems. Section~\ref{sec:confined} is a quick outline of confined systems, which are very often encountered in practical situations such as random matrices. In Section~\ref{sec:transitions} we discuss some known and conjectured properties of Riesz gases. We particularly focus on \emph{phase transitions}, that is, the uniqueness or non-uniqueness of the Riesz gas depending on the value of the parameters $s$ and $\Upsilon$. We mention there many results from the physical literature. Finally, Section~\ref{sec:appendix} contains some additional proofs.

%%%%%%%%%%%%%%%%%%%%%%%%%%%%%%%%%%%%%%%%%%%%%%%%%%%%%
%%%%%%%%%%%%%%%%%%%%%%%%%%%%%%%%%%%%%%%%%%%%%%%%%%%%%
\section{Thermodynamic limit in the short range case $s>d$}\label{sec:thermo_limit_short_range}
%%%%%%%%%%%%%%%%%%%%%%%%%%%%%%%%%%%%%%%%%%%%%%%%%%%%%
%%%%%%%%%%%%%%%%%%%%%%%%%%%%%%%%%%%%%%%%%%%%%%%%%%%%%

In this section and the following one, we discuss the construction of Riesz gases using the \emph{thermodynamic limit}, as is usually done in statistical mechanics~\cite{Ruelle,Lanford-73}. This method focuses right away on the point process in the whole space $\R^d$ and naturally preserves the symmetries of the problem, in particular the scaling invariance. As we will see later, in applications Riesz gases often appear in other limits. In many problems they describe the behavior of the system at a certain \emph{microscopic scale}, so that seeing it requires solving first the macroscopic problem and then zooming at the right length. These complications do not appear in the thermodynamic limit.

In order to better explain the difficulties of the long range case, we found it natural to first recall in this section what is known in the short range case $s>d$, which is very well understood since the 60--70s. Most of the results we will quote here are due to Ruelle\cite{Ruelle}, Dobru\v{s}in-Minlos\cite{DobMin-67} and Georgii\cite{Georgii-76} but some are a bit more difficult to locate in the literature. Many theorems hold the same with an arbitrary interaction potential decaying fast enough at infinity. However, the proofs for our potential $V_s$ tend to be much simpler, due to its positivity, and we thought we would as well provide some details. As is usual in statistical mechanics, we concentrate first on the convergence of the thermodynamic functions before looking at the point process itself.

Several physical systems may be appropriately described by such purely repulsive power-law potentials $|x|^{-s}$ with $s>d$, including for instance colloidal particles in charge-stabilized suspensions~\cite{PauAckWol-96,SenRic-99,DeiBalWilWei-01} or certain metals under extreme thermodynamic conditions~\cite{HooYouGro-72}. For $s=2$ in dimension $d=1$, one obtains an integrable system, the \emph{classical Calogero-Sutherland-Moser model}~\cite{Calogero-71,Sutherland-71,Sutherland-71b,Moser-75,Sutherland-04}.

\subsection{Canonical ensemble}
We consider $N$ distinct points $x_1,...,x_N$ in a bounded domain $\Omega\subset\R^d$ and then look at the limit $N\to\ii$ and $\Omega\nearrow \R^d$ with $N/|\Omega|\to\rho>0$, the desired intensity of our process. We will often take for simplicity  $\Omega=\ell\omega$ where $\omega$ is a given set of measure $|\omega|=1$, with $\ell\sim (N/\rho)^{1/d}$. Our point process will be defined in terms of the \emph{Riesz energy} of the $N$ points
\begin{equation}
\cE_s(x_1,...,x_N):=\sum_{1\leq j<k\leq N}V_s(x_j-x_k)=\sum_{1\leq j<k\leq N}\frac1{|x_j-x_k|^s},\qquad s>d,
 \label{eq:def_cE_s}
\end{equation}
and will depend on a temperature $T=1/\beta$.

We start with $T=0$ where we just minimize the energy of $N$ points in the domain $\Omega$. We denote by
\begin{equation}
\boxed{ E_s(N,\Omega):=\min_{x_1,...x_N\in\overline\Omega}\cE_s(x_1,...,x_N)}
 \label{eq:def_E_s}
\end{equation}
this minimum and call $\cX_s(N,\Omega)$ the set of the \emph{optimal configurations} $(x_1,...,x_N)\in\overline\Omega^N$ realizing the minimum. Note that, by scaling,
\begin{equation}
E_s(N,\Omega)=
\lambda^s\,E_s\big(N,\lambda \Omega\big),\qquad \cX_s(N,\lambda\Omega)=\lambda\,\cX_s(N,\Omega), \qquad\forall\lambda>0.
\label{eq:scaling_N_T0}
\end{equation}
In particular, we can always fix the density $\rho=N/|\Omega|$ to be any number that we like. The parameter $\rho$ plays no role in our problem. In some cases it will however be instructive to remember how things explicitly depend on $\rho$. We could equivalently fix completely the domain $\Omega$ and not increase its size in the limit $N\to\ii$, but then we would have to zoom in order to see how the points are arranged at the microscopic scale. This point of view is discussed later in Section~\ref{sec:confined}.

Our goal is to look at the limit $N\to\ii$ and at the positions of the optimal points. In the end we hope to obtain an infinite configuration of points with an average of $\rho$ points per unit volume, and thus occupying the whole space. For this it is important that the points do not concentrate too much nor leave big holes. For such good configurations the interaction~\eqref{eq:interaction_energy_intro} of any point $x_{j_0}$ with the other points $x_j$ will be of order one. Due to the double sum, we thus expect the total energy~\eqref{eq:def_E_s} to be of order $N$. This is confirmed by the following elementary result, which can be found in many similar forms in the literature.

\begin{lemma}[Bounds on $E_s(N,\Omega)$]\label{lem:simple_estim_E_s}
Let $s>d$. There exists two constants $c_1,c_2>0$ depending only on $d$ and $s$, such that
\begin{equation}
c_1 \left(\frac{N}{|\Omega|}\right)^{1+\frac{s}d}\leq \frac{E_s(N,\Omega)}{|\Omega|}\leq c_2 \left(\frac{N}{|\Omega|}\right)^{1+\frac{s}d}
\label{eq:estim_E_N}
\end{equation}
for every bounded open set $\Omega$ with $|\partial\Omega|=0$, and every $N\geq N_0$ large enough. Here $N_0$ only depends on the `shape' of $\Omega$, that is, on $\omega=|\Omega|^{-1/d}\Omega$ and not on the volume $|\Omega|$.
\end{lemma}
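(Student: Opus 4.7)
My plan is to prove the two bounds separately, by comparing the $N$-point problem to a simple cubic-lattice picture with spacing $\ell \sim (|\Omega|/N)^{1/d}$, the natural mean interparticle distance at intensity $\rho = N/|\Omega|$. Both arguments are scale-covariant in the sense of~\eqref{eq:scaling_N_T0}, so although one could first normalize $|\Omega|=1$ I would keep $\Omega$ generic and let the powers of $|\Omega|$ track through.

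For the \emph{upper bound} I would exhibit an explicit trial configuration. Fix a spacing $\ell = c_0(|\Omega|/N)^{1/d}$ with $c_0$ small enough that $\#(\ell\Z^d\cap \Omega) \ge N$ whenever $N \ge N_0(\omega)$; the hypothesis $|\partial\Omega|=0$, applied after rescaling to the fixed shape $\omega$, guarantees the existence of such $N_0(\omega)$ through the Jordan/Riemann-sum fact $\ell^d\#(\ell\Z^d\cap\Omega)\to|\Omega|$ as $\ell\to 0$. Picking any $N$ of these lattice sites for $(x_1,\dots,x_N)$ and bounding each pairwise sum by the full translation-invariant lattice sum centred at $x_j$ yields
\[\cE_s(x_1,\dots,x_N) \le \frac{N}{2}\,\ell^{-s}\sum_{n\in \Z^d\setminus\{0\}}|n|^{-s},\]
where the right-hand sum converges precisely because $s>d$. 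Rearranging yields the upper bound with $c_2$ proportional to this convergent Epstein-type sum.

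For the \emph{lower bound} I would combine a pigeonhole count with the observation that two points sharing a small cube are close. Partition $\R^d$ into cubes of side $L = c_1'(|\Omega|/N)^{1/d}$, let $M$ be the number of them meeting $\overline\Omega$, and $n_i$ the number of $x_j$ in the $i$-th such cube. The same boundary-negligibility argument furnishes $M \le 2|\Omega|/L^d$ once $N \ge N_0(\omega)$, and tuning $c_1'$ so that $M \le N/2$ forces
\[\sum_i n_i(n_i-1) = \sum_i n_i^2 - N \ge \frac{N^2}{M} - N \ge \frac{N}{2}\]
by Cauchy--Schwarz. Since pairs lying in a common cube are at distance at most $L\sqrt d$, each such pair contributes at least $(L\sqrt d)^{-s}$ to $\cE_s$; summing gives $\cE_s \ge c\, N\,L^{-s}$, and substituting the chosen $L$ recovers the desired $c_1|\Omega|(N/|\Omega|)^{1+s/d}$.

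The only delicate step is the boundary control entering both arguments: one must know that the number of grid points, or of grid cells, meeting $\Omega$ at spacing $\ell$ differs from $|\Omega|/\ell^d$ by a term of lower order. This is exactly what $|\partial\Omega|=0$ buys, after rescaling to the fixed shape $\omega$, and it is the sole reason the shape-dependent threshold $N_0(\omega)$ appears in the statement. Apart from this classical point, the two bounds are elementary and rely only on the convergence of $\sum_{n\neq 0}|n|^{-s}$ for $s>d$, which is exactly the short-range regime.
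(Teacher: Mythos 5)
Your proposal is correct and takes essentially the same route as the paper: upper bound by placing the points on a dense enough lattice and extending the pairwise sum to the full convergent Epstein-type sum, lower bound by tiling with cubes of side larger than the mean spacing and using Cauchy--Schwarz on the occupation numbers. The only cosmetic difference is that the paper first normalizes to $|\Omega|=N$ via~\eqref{eq:scaling_N_T0} and fixes numerical spacings ($\Z^d/2$ for the upper bound, $2\Z^d$ for the lower), whereas you keep $|\Omega|$ generic and carry tunable constants $c_0,c_1'$; the boundary control via $|\partial\omega+\eps C|\to 0$ is identical.
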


In our case we will look at the limit where $N/|\Omega|\sim\rho$ is a fixed density. The result thus says that $E_s(N,\Omega)$ is of order $\rho^{1+s/d}|\Omega|\sim \rho^{s/d}N$.

\begin{proof}
Let $\omega$ be any bounded open set with $|\omega|=1$ and $|\partial\omega|=0$. By the scaling property~\eqref{eq:scaling_N_T0}, it suffices to prove the inequality~\eqref{eq:estim_E_N} for $\Omega=N^{1/d}\omega$, that is, at density $\rho=N/|\Omega|=1$. For the upper bound  we place our points on a lattice with density $>1$, for instance $\Z^d/2$ which has density $2^d$. The number of points of $\Z^d/2$ intersecting $\Omega$ is at least equal to the number of cells located inside $\Omega$ and can thus be estimated by
$$\#\Omega\cap \frac{\Z^d}{2}\geq 2^d|\Omega|-2^d|\partial\Omega+ C|=2^dN\left(1-\big|\partial\omega+N^{-\frac1d}C\big|\right),$$
where $2^d|\partial\Omega+ C|$ is an estimate on the number of cells intersecting the boundary and $C=[-\frac14,\frac14)^d$ is the cube of side length $1/2$. Since $\partial\omega$ is compact, we have $|\partial\omega+N^{-1/d}C|\to|\partial\omega|=0$ by dominated convergence and thus there are $2^dN+o(N)$ points in $\Omega$. This is more than necessary. We place our $N$ points on any such sites of $\Omega\cap \Z^d/2$. Completing the series, we obtain the upper bound
$$E_s(N,\Omega)=\frac12\sum_{j=1}^N\sum_{k\neq j}\frac{1}{|x_j-x_k|^s}\leq \frac{N}2\sum_{z\in\frac{\Z^d}2\setminus\{0\}}\frac{1}{|z|^s}=2^{s-1}N\sum_{z\in\Z^d\setminus\{0\}}\frac{1}{|z|^s},$$
as soon as there are at least $N$ points in $\Omega$, that is, $|\partial\omega+N^{-\frac1d}C|\leq 1-2^{-d}$. This is the claimed upper bound in~\eqref{eq:estim_E_N}.

For the lower bound, we consider a lattice with density $<1$, for instance $2\Z^d$. This defines the tiling $\R^d=\cup_{z\in 2\Z^d}(z+C')$
where, this time, $C'=[-1,1)^d$. The number of cells intersecting $\Omega$ satisfies, similarly as before,
\begin{equation*}
K:=\#\{z\in 2\Z^d\ :\ (z+C')\cap \Omega\neq\emptyset\}
\leq2^{-d}N\left(1+\big|\partial\omega+N^{-\frac1d}C'\big|\right).
\end{equation*}
We call $z_1,...,z_K\in2\Z^d$ the centers of the $K$ cubes intersecting $\overline\Omega$. Let $x_1,...,x_N$ be a minimizer for $E_s(N,\Omega)$ and denote by $n_k:=\#\{x_1,...,x_N\}\cap (z_k+C')$ the number of points located in the cube $z_k+C'$. Ignoring the interactions between the points in different cubes and using that $|x_i-x_j|\leq 2\sqrt{d}$ for the $n_k(n_k-1)/2$ pairs in each cube, we obtain
\begin{equation*}
E_s(N,\Omega)\geq \frac{1}{2^{1+s}d^{\frac{s}2}}\sum_{k=1}^K n_k(n_k-1)=\frac{1}{2^{1+s}d^{\frac{s}2}}\left(\sum_{k=1}^K n_k^2-N\right).
\end{equation*}
To bound $\sum_{k=1}^K n_k^2$ from below, we write
$$\sum_{k=1}^K n_k^2=\sum_{k=1}^K \left(n_k-\frac{N}{K}\right)^2+\frac{N^2}{K}\geq \frac{N^2}{K}$$
since $\sum_{k=1}^Kn_k=N$. This provides the lower bound
$$E(N,\Omega)\geq \frac{N}{2^{1+s}d^{\frac{s}2}}\left(\frac{N}{K}-1\right)\geq \frac{N}{2^{2+s}d^{\frac{s}2}}(2^d-1)$$
when $N$ is large enough so that $|\partial\omega+N^{-\frac1d}C'|\leq (2^d-1)/(2^d+1)$.
\end{proof}

Next we turn to the positive temperature case $T=1/\beta>0$. We have to consider random sets of $N$ points and a probability measure on such sets, which is the same as taking a symmetric positive measure $\bP$ on $\Omega^N$ with the normalization condition
\begin{equation}
 \frac{1}{N!}\int_{\Omega^N}\rd \bP(x_1,...,x_N)=1
 \label{eq:normalization_bP}
\end{equation}
and the constraint that the `diagonal' (where some of the $x_j$ coincide) has zero $\bP$-measure. The $1/N!$ is because we want to count each configuration only once. Our probability measure is obtained by minimizing the \emph{free energy} (`energy minus $T\times$entropy')
\begin{multline*}
\cF_s(\beta,N,\Omega,\bP)=\frac{1}{N!}\int_{\Omega^N}\cE_s(x_1,...,x_N)\bP(x_1,...,x_N)\,\dx_1\cdots \dx_N\\+\frac{\beta^{-1}}{N!}\int_{\Omega^N}\bP(x_1,...,x_N)\log\bP(x_1,...,x_N)\,\dx_1\cdots \dx_N,
\end{multline*}
under the constraint~\eqref{eq:normalization_bP}. The value of the minimum is
\begin{equation}
\boxed{F_s(\beta,N,\Omega)=\min_{\bP}\cF_s(\beta,N,\Omega,\bP)=-\beta^{-1} \log Z_s(\beta,N,\Omega),}
\label{eq:Gibbs_var}
\end{equation}
with unique minimizer
\begin{equation}
\boxed{ \bP_{s,\beta,N,\Omega}(x_1,...,x_N)=Z_s(\beta,N,\Omega)^{-1}e^{-\beta\cE_s(x_1,...,x_N)}}
 \label{eq:def_bP_canonical}
\end{equation}
called the \emph{canonical Gibbs measure}. The normalization factor
\begin{equation}
Z_s(\beta,N,\Omega)=\frac1{N!}\int_{\Omega^N}e^{-\beta\cE_s(x_1,...,x_N)}\dx_1\cdots \dx_N=e^{-\beta F_s(\beta,N,\Omega)}
\label{eq:partition_fn_canonical}
\end{equation}
is called the \emph{partition function}. Note that $\bP_{s,\beta,N,\Omega}$ in~\eqref{eq:def_bP_canonical} is in fact a smooth function on $\Omega^N$ which vanishes exponentially fast on the diagonal. The scaling relation now reads
\begin{equation}
F_s(\beta,N,\Omega)=
\lambda^{s}\,F_s\big(\beta\lambda^s,N,\lambda\,\Omega\big)+\beta^{-1}N\log(\lambda^d).
\label{eq:scaling_N_T}
\end{equation}
Choosing $\lambda=\rho^{1/d}$ we are reduced to considering the case where $\rho=1$ and $\beta$ is replaced by the parameter $\Upsilon=\beta\rho^{s/d}$ announced in the introduction. In the limit $\beta=1/T\to+\ii$, $F_s(\beta,N,\Omega)$ converges to our previous minimum energy $E_s(N,\Omega)$. The following contains simple bounds similar to that of Lemma~\ref{lem:simple_estim_E_s}.

\begin{lemma}[Bounds on $F_s(\beta,N,\Omega)$]\label{lem:simple_estim_F_s}
Under the same conditions as in Lemma~\ref{lem:simple_estim_E_s}, we have
\begin{multline}
c_1\left(\frac{N}{|\Omega|}\right)^{1+\frac{s}d}+\frac{N\beta^{-1}}{|\Omega|}\left(\log\frac{N}{|\Omega|}-1\right)\leq \frac{F_s(\beta,N,\Omega)}{|\Omega|}\\
\leq c_3\left(\frac{N}{|\Omega|}\right)^{1+\frac{s}d}+\frac{N\beta^{-1}}{|\Omega|}\left(\log\frac{N}{|\Omega|}+c_4\right).
\label{eq:estim_F_N}
\end{multline}
\end{lemma}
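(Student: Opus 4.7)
Both bounds follow from the Gibbs variational principle \eqref{eq:Gibbs_var}: the lower bound comes from using Lemma~\ref{lem:simple_estim_E_s} directly in the partition function \eqref{eq:partition_fn_canonical}, and the upper bound from a carefully chosen trial probability.

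For the \emph{lower bound}, I would use $\cE_s(x_1,\dots,x_N)\geq E_s(N,\Omega)$ pointwise on $\Omega^N$ in \eqref{eq:partition_fn_canonical} to obtain
$$Z_s(\beta,N,\Omega)\leq e^{-\beta E_s(N,\Omega)}\frac{|\Omega|^N}{N!},$$
then apply the elementary Stirling inequality $N!\geq (N/e)^N$, which gives
$$F_s(\beta,N,\Omega)=-\beta^{-1}\log Z_s(\beta,N,\Omega)\geq E_s(N,\Omega)+\beta^{-1}N\left(\log\frac{N}{|\Omega|}-1\right).$$
Dividing by $|\Omega|$ and inserting the left inequality of Lemma~\ref{lem:simple_estim_E_s} produces the lower bound in \eqref{eq:estim_F_N}, with the same constant $c_1$ and for $N\geq N_0$ depending only on the shape of $\Omega$.

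For the \emph{upper bound}, I would build a trial measure $\bP$ localized around a lattice. Set $L:=\tfrac12(|\Omega|/N)^{1/d}$ and, exactly as in the upper-bound part of the proof of Lemma~\ref{lem:simple_estim_E_s}, select $N$ points $z_1,\dots,z_N$ of the scaled lattice $L\Z^d$ so that the balls $B_j:=B(z_j,L/4)$ lie inside $\Omega$ (the boundary cells are absorbed into the threshold $N\geq N_0$). Then take
$$\bP(x_1,\dots,x_N):=\sum_{\sigma\in S_N}\prod_{j=1}^N\frac{\1_{B_{\sigma(j)}}(x_j)}{|B_j|},$$
which is symmetric, nonnegative, and satisfies $\tfrac{1}{N!}\int_{\Omega^N}\bP=1$ because the balls are pairwise disjoint. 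On the support exactly one permutation contributes, so $\bP$ equals the constant $\prod_j|B_j|^{-1}$ there, and
$$\frac{1}{N!}\int_{\Omega^N}\bP\log\bP=-N\log|B_1|=N\left(\log\frac{N}{|\Omega|}+c_4\right),$$
with $c_4=c_4(d)$ coming from $|B_j|= c_d(L/4)^d$. For the energy, distinct centers satisfy $|z_{\sigma(j)}-z_{\sigma(k)}|\geq L$ and the $x_j$'s lie within $L/4$ of their center, hence $|x_j-x_k|\geq\tfrac12|z_{\sigma(j)}-z_{\sigma(k)}|$ on the support. Completing the pair sum into the lattice series $\sum_{z\in L\Z^d\setminus\{0\}}|z|^{-s}$, which converges thanks to $s>d$, yields $\cE_s(x_1,\dots,x_N)\leq c_3\,|\Omega|\,(N/|\Omega|)^{1+s/d}$ uniformly on the support. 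Plugging into \eqref{eq:Gibbs_var} and dividing by $|\Omega|$ gives the right-hand side of \eqref{eq:estim_F_N}.

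The main technical issue is the balance between energy and entropy in the upper bound: the trial density must spread each coordinate over a volume comparable to $|\Omega|/N$ to recover the correct free-entropy term $\beta^{-1}N\log(N/|\Omega|)$, while keeping particles at mutual distance $\gtrsim(|\Omega|/N)^{1/d}$ so that the Riesz pair sum remains controlled by its lattice counterpart; the choice of radius $L/4$ around the lattice sites realizes exactly this compromise. It is also the positivity of $V_s$ that allows the energy and the entropy parts of $\cF_s$ to be bounded separately without any cross-cancellation.
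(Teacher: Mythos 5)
Your proof is correct and takes essentially the same route as the paper: lower bound from $\cE_s\geq E_s(N,\Omega)$ plus Stirling, upper bound via the Gibbs variational principle \eqref{eq:Gibbs_var} with a trial state that spreads independent points around lattice sites. The only differences are cosmetic. The paper works at unit density $\rho=1$, uses the lattice $\Z^d/2$ with a smooth $\chi$ supported strictly inside $B_{1/4}$ as the single-particle profile, and then obtains the general case via the scaling relation \eqref{eq:scaling_N_T}; you instead work at general density directly by rescaling the lattice to $L\Z^d$ with $L=\tfrac12(|\Omega|/N)^{1/d}$ and use normalized indicators of disjoint balls of radius $L/4$. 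Your choice has a minor advantage of transparency: the balls are at mutual distance $\geq L/2$, so the energy bound $\cE_s\leq c_3 N(N/|\Omega|)^{s/d}$ holds pointwise on the support and there is no need to check convergence of the averaged Riesz integral $\sum_{z\neq0}\iint\chi(x)\chi(y)|x-y+z|^{-s}\,\dx\,\dy$ that appears as $c_3$ in the paper (a convergence which, with $\chi$ supported in $\overline{B_{1/4}}$ and lattice spacing $1/2$, relies on $\chi$ being supported strictly inside the ball). Your closing remark attributing the separate treatment of energy and entropy to the positivity of $V_s$ is a bit loosely phrased — the two terms are additive in $\cF_s$ by definition, and what the positivity of $V_s$ really buys is the pointwise lower bound $\cE_s\geq E_s(N,\Omega)\geq 0$ used in the lower bound — but this does not affect the proof.
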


\begin{proof}
Using $\cE_s\geq E_s(N,\Omega)$ and Stirling's estimate $N!\geq(N/e)^N$, we get
\begin{equation}
F_s(\beta,N,\Omega)\geq E_s(N,\Omega)-\beta^{-1}\log\frac{|\Omega|^N}{N!}\geq E_s(N,\Omega)+N\beta^{-1}\left(\log\frac{N}{|\Omega|}-1\right).
 \label{eq:estim_lower_simple_temp}
\end{equation}
Inserting the lower bound~\eqref{eq:estim_E_N} from Lemma~\ref{lem:simple_estim_E_s}, we obtain the lower bound in~\eqref{eq:estim_F_N}. For the upper bound we argue as in the proof of Lemma~\ref{lem:simple_estim_E_s}, smearing out the points a little to have a finite entropy. We give ourselves a smooth non-negative function $\chi$ supported in the ball $B_{1/4}$ (centered at the origin and of radius $1/4$) with $\int\chi=1$ and $\int\chi|\log\chi|<\ii$. Assuming $N/|\Omega|=\rho=1$ for simplicity, we then place $N$ independent points, identically distributed with $\chi$, around the same $N$ points $z_1,...,z_N\in \Z^d/2\cap \Omega$ as in the $T=0$ case. In other words, we take the symmetric probability
$$\bP=\sum_{\sigma\in\gS_N}\prod_{j=1}^N\chi(x_j-z_{\sigma(j)}).$$
From the minimization principle~\eqref{eq:Gibbs_var}, we find
$$F_s(\beta,N,\Omega)\leq \frac1{N!}\int \cE_s\bP+\frac1{N!\beta}\int\bP\log\bP\leq Nc_3+N\beta^{-1}c_4$$
with the constants
$$c_3=\sum_{z\in\Z^d/2\setminus\{0\}}\iint_{\R^d\times\R^d}\frac{\chi(x)\chi(y)}{|x-y+z|^s}\dx\,\dy,\qquad c_4=\int_{\R^d}\chi\log\chi.$$
The upper bound in~\eqref{eq:estim_F_N} follows from~\eqref{eq:scaling_N_T} with $\lambda^d=N/|\Omega|$.
\end{proof}

We have seen that $E_s(N,\Omega)$ and $F_s(\beta,N,\Omega)$ behave linearly in $N$ and $|\Omega|$ at fixed density $\rho=N/|\Omega|$. Next, we address the existence of the thermodynamic limit. The following is a standard result dating from the 60s.

\begin{theorem}[Canonical thermodynamic functions~\cite{Ruelle-63a,Fisher-64,FisRue-66,Ruelle}]\label{thm:limit_C}
Assume that $s>d$. Let $\omega$ be any bounded open set with $|\omega|=1$ and $|\partial\omega|=0$. Then, for any $\rho>0$ and $\beta>0$, the following limits
\begin{equation}
\lim_{\substack{N\to\ii\\ \frac{N}{\ell^d}\to\rho}}\frac{E_s(N,\ell\omega)}{\ell^d}=e(s)\rho^{1+\frac{s}d},\qquad
\lim_{\substack{N\to\ii\\ \frac{N}{\ell^d}\to\rho}}\frac{F_s(\beta,N,\ell\omega)}{\ell^d}=f(s,\beta,\rho)
\label{eq:thermo_limit_free_energy}
\end{equation}
exist and are independent of $\omega$. The function $f$ satisfies the relation
\begin{equation}
 f(s,\beta,\rho)=\rho^{1+\frac{s}d}f\big(s,\beta\rho^{\frac{s}d},1\big)+\beta^{-1}\rho\log\rho.
 \label{eq:relation_f}
\end{equation}
We have
$$c_1\leq e(s)\leq c_2,\qquad c_1-\Upsilon^{-1}\leq f(s,\Upsilon,1)\leq c_3+\Upsilon^{-1}c_4$$
for all $s>d$ and $\Upsilon>0$, with the constants given by Lemmas~\ref{lem:simple_estim_E_s} and~\ref{lem:simple_estim_F_s}.
The function $(\rho,\beta)\mapsto \beta f(s,\beta,\rho)$ is convex in $\rho$ and concave in $\beta$. In particular, $\Upsilon\mapsto f(s,\Upsilon,1)$ is continuous on $(0,\ii)$. At infinity, we have
$\lim_{\Upsilon\to+\ii}f(s,\Upsilon,1)=e(s)$.
\end{theorem}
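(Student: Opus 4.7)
The plan is to establish existence of the free-energy thermodynamic limit first, and deduce all remaining assertions from it. My first step is to apply the scaling identity (30) with $\lambda=\rho^{1/d}$ to reduce to unit density with parameter $\Upsilon=\beta\rho^{s/d}$; the scaling relation (31) then drops out as bookkeeping, and the energy case (29) follows either by repeating the argument below without the entropy term or by taking $\beta\to\infty$, with the rate controlled by Lemma~\ref{lem:simple_estim_F_s}.

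For existence at $\rho=1$, set $g(L):=L^{-d}F_s(\Upsilon,L^d,C_L)$ on a sequence of integer-side cubes $C_L=[0,L]^d$; deviations from integer particle numbers are absorbed using Lemma~\ref{lem:simple_estim_F_s}. To obtain subadditivity up to a vanishing correction, I would tile a large cube $C_M$ by disjoint subcubes of side $L+a$, reserve inside each a core of side $L$ surrounded by a corridor of width $a$, and place $L^d$ points uniformly in each core. Feeding the product of canonical Gibbs measures on the cores into the variational principle (25) gives
\[
\frac{F_s(\Upsilon,KL^d,C_M)}{M^d}\leq\frac{F_s(\Upsilon,L^d,C_L)}{(L+a)^d}+O(a^{d-s}),\qquad K=\lfloor M/(L+a)\rfloor^d,
\]
where the crucial short-range estimate is $\sum_{y\in\text{other cores}}|x-y|^{-s}\leq\int_{|z|\ge a}|z|^{-s}\,\dz=O(a^{d-s})$ for every source point $x$, using positivity of $V_s$ and $s>d$. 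Choosing $a=a(L)\to\infty$ with $a=o(L)$ kills the corridor term while $L^d/(L+a)^d\to 1$ absorbs the density mismatch; this forces $g(L)$ to be Cauchy as $L\to\infty$. The cleanest matching lower bound goes through the grand canonical ensemble $\Xi(\beta,\mu,\Omega):=\sum_{N\ge 0}e^{\beta\mu N}Z_s(\beta,N,\Omega)$: positivity of $V_s$ gives the manifest subadditivity $\Xi(C_M)\leq\Xi(C_L)^K$ (no corridor needed), and the corridor construction gives the matching $\Xi(C_M)\geq\Xi(C_L)^K e^{-O(M^d a^{d-s})}$. The grand canonical pressure $p(\beta,\mu)=\lim|\Omega|^{-1}\beta^{-1}\log\Xi$ therefore exists, and $\beta f$ is recovered as its Legendre transform in $\mu$. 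Van Hove approximation of $\omega$ by inner and outer unions of disjoint cubes (using $|\partial\omega|=0$) extends the limit to general $\omega$.

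The remaining assertions then follow routinely. Passing to the limit in Lemmas~\ref{lem:simple_estim_E_s} and~\ref{lem:simple_estim_F_s} at $\rho=1$ yields the stated bounds on $e(s)$ and $f(s,\Upsilon,1)$. Convexity of $\rho\mapsto\beta f(s,\beta,\rho)$ is automatic from the Legendre duality with the manifestly convex (in $\mu$) function $\log\Xi(\beta,\mu,\Omega)$. Concavity of $\beta\mapsto\beta f$ is inherited from concavity of $\beta\mapsto\beta F_s(\beta,N,\Omega)=-\log Z_s(\beta,N,\Omega)$ (the log of a Laplace transform is convex in its parameter) and passes to the limit. Continuity of $\Upsilon\mapsto f(s,\Upsilon,1)$ on $(0,\infty)$ then follows from the concavity of $\Upsilon\mapsto\Upsilon f(s,\Upsilon,1)$. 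Finally, $f\ge e$ comes from $F_s\ge E_s$; for the matching upper bound $f\le e+O(\Upsilon^{-1}\log\Upsilon)$, one smears a minimizer of $E_s$ (which for $s>d$ has a uniformly positive minimum separation, by an elementary Voronoi/``empty ball'' argument comparing with moving one point of the closest pair into a free region) in balls of radius $\delta(\beta)\to 0$ and optimizes in $\delta$.

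The hard part will be the existence step: the fixed particle number in the canonical ensemble forces a careful match of densities inside cores with those of the enlarged tiles, and the proof uses the short-range condition $s>d$ exactly at the estimate $\int_{|z|\ge a}|z|^{-s}\,\dz=O(a^{d-s})\to 0$. Without this, no corridor width would render the inter-tile interaction negligible, which is precisely why the long-range case $s\le d$ requires fundamentally different techniques.
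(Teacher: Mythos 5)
Your route is the same as the paper's sketch: subadditivity along cubes with security corridors of width $a$, exploiting $\int_{|z|\ge a}|z|^{-s}\,\dz=O(a^{d-s})\to 0$, followed by Van Hove approximation for general $\omega$ and scaling to reduce the density. The deduction of the ancillary statements (scaling, bounds from Lemmas~\ref{lem:simple_estim_E_s}--\ref{lem:simple_estim_F_s}, concavity of $\beta\mapsto\beta f$ from H\"older, continuity from concavity, $f\to e$ at $\Upsilon\to\infty$ by smearing a minimizer) is all correct.

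The one step that does not survive scrutiny as written is the cross-core estimate ``$\sum_{y\in\text{other cores}}|x-y|^{-s}\le\int_{|z|\ge a}|z|^{-s}\,\dz$ for every source point $x$.'' The points $y$ in the other cores are random and can cluster arbitrarily near the corridor wall, so there is no pointwise bound of this form. What makes the argument work is that the trial measure is a product of Gibbs measures in the cores, and the \emph{expected} cross-interaction equals $\sum_{i\ne j}\iint V_s(x-y)\,\rho^{(1)}_i(x)\rho^{(1)}_j(y)\,\dx\,\dy$, which you control using a uniform pointwise bound on $\rho^{(1)}$ (the Dobru\v{s}in--Minlos estimate $Z_s(\beta,N-1,\Omega)/Z_s(\beta,N,\Omega)\le C$, cf.~\eqref{eq:estim_correlations_C}). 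You should state this. Relatedly, the density mismatch when dividing $F_s(\Upsilon,KL^d,C_M)$ by $M^d$ needs a one-particle Lipschitz bound $F_s(\beta,N+1,\Omega)\le F_s(\beta,N,\Omega)+C$ (Lemma~\ref{lem:local_bound_C_T0}) to be absorbed; ``$L^d/(L+a)^d\to 1$'' alone doesn't close the gap. Finally, ``recovering $\beta f$ as the Legendre transform of the grand-canonical pressure'' and deducing convexity of $\rho\mapsto\beta f$ ``from Legendre duality'' is circular as presented: the Legendre transform only returns $f$ if $f$ is already known to be convex in $\rho$, which must be shown directly (e.g.\ by a corridor argument splitting the particle number between two halves of a doubled box). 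The grand-canonical detour is not needed for existence once you have canonical subadditivity plus the lower bound from Lemma~\ref{lem:simple_estim_F_s}; the standard Fekete-type argument then gives $\limsup\le\liminf$.
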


The theorem for any $\rho>0$ follows immediately from the case $\rho=1$ by scaling. The relation~\eqref{eq:relation_f} implies that, at small density,
$f(s,\beta,\rho)\sim_{\rho\to0^+} \beta^{-1}\rho\log\rho$
which is the leading term of the entropy per unit volume of the Poisson point process. In the following we will often write for simplicity $f(s,\Upsilon):=f(s,\Upsilon,1)$ and hope that this does not cause any confusion. The functions $s\mapsto e(s)$ and $(s,\Upsilon)\mapsto f(s,\Upsilon)$ are respectively the \emph{energy per unit volume} and the \emph{free energy per unit volume} of our Riesz gas, at unit density. They can be shown to be continuous in $s>d$. Except in some exceptional cases, their precise value is unknown. We will explain later in Section~\ref{sec:crystal_conjecture} that $e(s)=\zeta(s)$ in dimension $d=1$~\cite{Ventevogel-78}, where $\zeta$ is the Riemann Zeta function, and will make some explicit conjectures on the value of $e(s)$ in dimensions $d=2,3$. The value of $e(s)$ is also known in dimensions $d\in\{8,24\}$~\cite{CohKumMilRadVia-19_ppt}. Finally, the case $s=2$ in dimension $d=1$ is the classical Calogero-Sutherland-Moser model~\cite{Calogero-71,Sutherland-71,Sutherland-71b,Moser-75,Sutherland-04} which is integrable\cite{GalMar-73,Choquard-00}, see Remark~\ref{rmk:CSM} below.

The usual way of proving the existence of the limits~\eqref{eq:thermo_limit_free_energy} is to start with the case of hypercubes and to show that the (free) energy is subadditive, up to small error corrections~\cite{Ruelle-63a,Fisher-64,FisRue-66,Ruelle}. For this the idea is to split the cube into $2^d$ equal smaller cubes and to evaluate the energy of the state obtained by taking independent optimizers in each of the smaller cubes, inserting security corridors. The error is just the interaction energy between the cubes, which is small thanks to the corridors and the integrability of $V_s$ at infinity. The proof for any domain $\omega$ is then done by tiling it with smaller cubes. In fact, the limits~\eqref{eq:thermo_limit_free_energy} hold for general sequences $\Omega_N\nearrow\R^d$ satisfying some regularity conditions of the boundary. It is not necessary that $\Omega_N$ is the rescaling of a fixed $\omega$.

The limit~\eqref{eq:thermo_limit_free_energy} for $e(s)$ was also shown in the recent Ref.~\onlinecite{HarSaf-04,HarSaf-05}, with the same method of proof that we have just described, and for $f(s,\beta,\rho)$ in a more general situation in Ref.~\onlinecite{HarLebSafSer-18}. In the limit $s\to d^+$, we have
\begin{equation}
\lim_{s\to d^+}(s-d)e(s)=\lim_{s\to d^+}(s-d)f(s,\Upsilon,1)=\frac{\pi^{d/2}}{\Gamma\!\left(\frac{d}2\right)}=\frac{|\bS^{d-1}|}{2}.
\label{eq:pole_e_s}
\end{equation}
This is proved in Ref.~\onlinecite{HarMicSaf-19} for $e(s)$ and for $f(s,\Upsilon,1)$ this follows from~\eqref{eq:estim_lower_simple_temp} and the same arguments as in Lemma~\ref{lem:simple_estim_F_s}. Thus the energy and free energy per unit volume diverge when $s\to d^+$, which is the threshold between the short and long range cases. The leading term does not depend on the inverse temperature $\beta$.

\begin{remark}[Large deviations]
At $\beta<\ii$, it is also possible to prove a \emph{Large Deviation Principle}, which is more precise than the thermodynamic limit in Theorem~\ref{thm:limit_C}. For our short range Riesz gas this was done in Ref.~\onlinecite{Georgii-94,Georgii-95} and later in a more general situation in Ref.~\onlinecite{HarLebSafSer-18}. We refer to Ref.~\onlinecite{Lewis-88a,Lewis-88e,LewZagPul-88,Ellis-85,Touchette-09} for a discussion of the importance of large deviations in the context of statistical mechanics.
\end{remark}

\subsection{Grand-canonical ensemble}
Instead of fixing the density $\rho$, it is often very convenient to work in the \emph{grand canonical setting}, where we allow random fluctuations of the number of points, and then control the average density by means of a dual variable $\mu$, called the \emph{chemical potential}. The grand-canonical problem has very nice algebraic properties which simplify many proofs and, in most cases, one can know everything on the canonical problem using the grand-canonical ensemble.

The grand canonical partition function is a kind of Laplace transform of the canonical one:
\begin{equation}
Z_s^{\rm GC}(\beta,\mu,\Omega):= \sum_{n=0}^\ii \frac{e^{\beta\mu n}}{n!}\int_{\Omega^n}e^{-\beta\cE_s(x_1,...,x_n)}\dx_1\cdots \dx_n=1+e^{\beta\mu}+\sum_{n=2}^\ii e^{\beta\mu n} Z_s(\beta,n,\Omega),
\end{equation}
where we used the convention that $E_s(n,\Omega)=0$ for $n\in\{0,1\}$. The grand-canonical free energy is
\begin{equation}
\boxed{F_s^{\rm GC}(\beta,\mu,\Omega):=-\beta^{-1}\log Z_s^{\rm GC}(\beta,\mu,\Omega).}
\end{equation}
The corresponding grand-canonical Gibbs measure is a collection of measures $\bP=(\bP_0,\bP_1,...)$ where each $\bP_n$ is the density for $n$ points, given by
$$\bP_n(x_1,...,x_n)=Z_s^{\rm GC}(\beta,\mu,\Omega)^{-1}e^{-\beta\cE_s(x_1,...,x_n)+\beta\mu n}.$$
This is the unique minimizer of the grand-canonical free energy
$$\cF_s^{\rm GC}(\beta,\mu,\Omega,\bP):=\sum_{n\geq2}\frac1{n!}\int_{\Omega^n}\cE_s\bP_n+\beta^{-1}\sum_{n\geq0}\frac{1}{n!}\int_{\Omega^n}\bP_n\log\bP_n$$
under the normalization constraint $\sum_{n\geq0}\bP_n(\Omega^n)/n!=1$.
The scaling relation now takes the form
\begin{equation}
F_s^{\rm GC}(\beta,\mu,\Omega)=\lambda^sF_s^{\rm GC}\Big(\beta\lambda^s\,,\,\frac{\mu-\beta^{-1}\log(\lambda^d)}{\lambda^s}\,,\,\lambda\Omega\Big).
\label{eq:scaling_N_T_GC}
\end{equation}
We may thus always work at $\mu=0$ after choosing $\lambda=e^{\beta\mu /d}$. We see that the parameter $\Upsilon=\beta\rho^{s/d}$ in the canonical problem is replaced by $\widetilde\Upsilon=\beta e^{\beta\mu s/d}$ in the grand-canonical setting. In other words, the activity $z:=e^{\beta\mu }$ plays the role of a density.
Similarly, at $T=0$ we may define
\begin{equation}
\boxed{E_s^{\rm GC}(\mu,\Omega)=\min_{n\geq0}\left\{E_s(n,\Omega)-\mu n\right\}=\!\!\min_{\substack{n\geq0\\ x_1,...,x_n\in\Omega}}\left\{\sum_{1\leq j<k\leq n}\frac1{|x_j-x_k|^s}-\mu\,n\right\},}
\label{eq:def_E_s_GC}
\end{equation}
which is the limit of $F_s(\beta,\mu,\Omega)$ when $\beta\to\ii$. The minimum in~\eqref{eq:def_E_s_GC} is always attained at a finite $n$ since, by Lemma~\ref{lem:simple_estim_E_s}, $E_s(n,\Omega)$ grows like $n^{1+s/d}$ in the limit $n\to\ii$ when $\Omega$ is fixed. We have
$$E_s^{\rm GC}(\mu,\Omega)=\lambda^{s}E_s^{\rm GC}\left(\lambda^{-s}\mu,\lambda\Omega\right).$$

The thermodynamic limit $|\Omega|\to\ii$ is similar to the canonical case, the two situations being related by a Legendre transform.

\begin{theorem}[Grand-canonical thermodynamic functions~\cite{Ruelle-63a,Fisher-64,FisRue-66,Ruelle}]\label{thm:limit_GC}
Assume that $s>d$. Let $\omega$ be any bounded open set with $|\omega|=1$ and $|\partial\omega|=0$. Then for every $\beta>0$ and $\mu\in\R$,
\begin{align}
\lim_{\ell\to\ii}\frac{E_s^{\rm GC}(\mu,\ell\omega)}{\ell^d}&=\min_{\rho\geq0}\{\rho^{1+\frac{s}d}e(s)-\mu\rho\}=-\frac{s e(s)^{-\frac{d}{s}}}{d\left(1+\frac{s}{d}\right)^{1+\frac{d}{s}}}\mu_+^{1+\frac{d}s},
\label{eq:thermo_limit_energy_GC}\\
\lim_{\ell\to\ii}\frac{F_s^{\rm GC}(\beta,\mu,\ell\omega)}{\ell^d}
&=\min_{\rho>0}\big\{f(s,\beta,\rho)-\mu\rho\big\}=:g(s,\beta,\mu).
\label{eq:thermo_limit_free_energy_GC}
\end{align}
The function $g$ satisfies $g(s,\beta,\mu)=z^{1+\frac{s}d}g(s,\beta z^{\frac{s}d},0)$, with $z=e^{\beta\mu}$.
The function $(\beta,\mu)\mapsto \beta g(s,\beta,\mu)$ is strictly concave in $\mu$ and concave in $\beta$.
\end{theorem}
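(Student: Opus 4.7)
The strategy is to reduce Theorem~\ref{thm:limit_GC} to the canonical Theorem~\ref{thm:limit_C} via a Legendre transform at $T=0$ and its positive-temperature analogue (the Laplace / log-sum-exp approximation). The scaling relation and concavity will then follow from the corresponding finite-volume identities.

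For the zero-temperature limit~\eqref{eq:thermo_limit_energy_GC}, I start from the definition
$$\frac{E_s^{\rm GC}(\mu,\ell\omega)}{\ell^d}=\min_{n\geq0}\left\{\frac{E_s(n,\ell\omega)}{\ell^d}-\mu\,\frac{n}{\ell^d}\right\}.$$
The upper bound is obtained by fixing $\rho\geq0$, taking $n=\lfloor\rho\ell^d\rfloor$, and using Theorem~\ref{thm:limit_C} to get $\limsup_\ell \ell^{-d}E_s^{\rm GC}(\mu,\ell\omega)\leq e(s)\rho^{1+s/d}-\mu\rho$; optimizing in $\rho$ yields the claimed upper bound. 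For the lower bound, let $n_\ell$ realize the minimum and set $\rho_\ell:=n_\ell/\ell^d$. The lower estimate of Lemma~\ref{lem:simple_estim_E_s} gives $E_s^{\rm GC}(\mu,\ell\omega)/\ell^d\geq c_1\rho_\ell^{1+s/d}-\mu\rho_\ell$, which forces $\rho_\ell$ to remain bounded uniformly in $\ell$. Extracting a subsequence $\rho_\ell\to\rho_*$ and applying Theorem~\ref{thm:limit_C} along $n_\ell$ yields $\liminf\geq e(s)\rho_*^{1+s/d}-\mu\rho_*\geq\min_{\rho\geq0}\{\cdots\}$. The closed form on the right of~\eqref{eq:thermo_limit_energy_GC} is obtained by solving $(1+s/d)e(s)\rho^{s/d}=\mu$ for $\mu>0$ and noting the minimum equals $0$ when $\mu\leq0$.

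For~\eqref{eq:thermo_limit_free_energy_GC} at $T>0$ I would use the identity
$$F_s^{\rm GC}(\beta,\mu,\Omega)=-\beta^{-1}\log\sum_{n\geq0}e^{-\beta(F_s(\beta,n,\Omega)-\mu n)},$$
with the natural conventions for the negligible $n\in\{0,1\}$ terms. Keeping only one term in the sum gives the pointwise upper bound $F_s^{\rm GC}(\beta,\mu,\Omega)\leq F_s(\beta,n,\Omega)-\mu n$ for every $n$, hence $F_s^{\rm GC}\leq\min_n\{F_s(\beta,n,\Omega)-\mu n\}$; division by $\ell^d$ and Theorem~\ref{thm:limit_C} yield $\limsup\leq g(s,\beta,\mu)$. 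For the lower bound, the lower estimate of Lemma~\ref{lem:simple_estim_F_s} shows that for $N_{\max}:=C_\mu\ell^d$ with $C_\mu$ large enough, $F_s(\beta,n,\Omega)-\mu n\geq c\,n^{1+s/d}/\ell^{s}$ whenever $n>N_{\max}$, so the tail $\sum_{n>N_{\max}}e^{-\beta(F_s-\mu n)}$ is $O(1)$. This produces
$$\sum_{n\geq0}e^{-\beta(F_s-\mu n)}\leq\bigl(N_{\max}+O(1)\bigr)\exp\!\left(-\beta\min_{0\leq n\leq N_{\max}}\{F_s(\beta,n,\Omega)-\mu n\}\right),$$
so $F_s^{\rm GC}\geq\min_{n\leq N_{\max}}\{F_s-\mu n\}-\beta^{-1}\log(C\ell^d)$. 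The error $\beta^{-1}\ell^{-d}\log(C\ell^d)$ vanishes, and since $\rho\mapsto F_s(\beta,\lfloor\rho\ell^d\rfloor,\ell\omega)/\ell^d-\mu\rho$ is (essentially) convex and converges pointwise to $f(s,\beta,\rho)-\mu\rho$, the convergence is uniform on the relevant compact range, giving $\liminf\geq g(s,\beta,\mu)$.

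The scaling identity $g(s,\beta,\mu)=z^{1+s/d}g(s,\beta z^{s/d},0)$ with $z=e^{\beta\mu}$ follows from the finite-volume relation~\eqref{eq:scaling_N_T_GC} by choosing $\lambda=z^{1/d}$ (which kills the chemical potential) and passing to the thermodynamic limit. For concavity I observe that with the change of variables $\alpha_1:=\beta\mu,\ \alpha_2:=\beta$,
$$-\log Z_s^{\rm GC}(\beta,\mu,\Omega)=-\log\sum_{n\geq0}\frac1{n!}\int_{\Omega^n}e^{\alpha_1 n-\alpha_2\cE_s(x_1,...,x_n)}\dx_1\cdots\dx_n$$
is concave in $(\alpha_1,\alpha_2)\in\R\times(0,\ii)$, being the negative of a log-sum-exp of affine functions in $(\alpha_1,\alpha_2)$. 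This joint concavity of $\beta F_s^{\rm GC}$ in $(\beta\mu,\beta)$ persists in the limit and yields concavity of $\beta g$ in $\mu$ (at fixed $\beta$) and in $\beta$ (at fixed $\mu$, since the locus $\alpha_1=\mu\alpha_2$ is a straight line in $(\alpha_1,\alpha_2)$-space). The main obstacle is \emph{strict} concavity in $\mu$: at finite volume $\partial_\mu^2(\beta F_s^{\rm GC})=-\beta^2\mathrm{Var}(N)$, and one needs an extensivity lower bound $\mathrm{Var}(N)\gtrsim|\Omega|$ uniform in $\ell$ to conclude in the limit. Equivalently, one may try to establish strict convexity of $\rho\mapsto f(s,\beta,\rho)$, which transfers by Legendre duality to strict concavity of $g$; in the short range case this is accessible via cluster/virial expansions at low activity together with monotonicity arguments, but I would flag this as the most delicate point of the proof.
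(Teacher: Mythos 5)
Your strategy---reducing the grand-canonical limit to the canonical one by Legendre transform at $T=0$ and by a log-sum-exp (Laplace) reduction at $T>0$, then obtaining the scaling identity and concavity from the finite-volume exact formulas---is the standard one and matches the approach in the cited literature (Ruelle, Fisher--Ruelle). The zero-temperature part and the scaling argument are correct. Two points need attention.

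First, in the $T>0$ lower bound you invoke the claim that $\rho\mapsto F_s(\beta,\lfloor\rho\ell^d\rfloor,\ell\omega)/\ell^d$ is ``(essentially) convex,'' in order to upgrade pointwise convergence to uniform convergence of the restricted minimum. This is not available: finite-volume canonical free energies are in general \emph{not} convex in the particle number; convexity of $f(s,\beta,\cdot)$ emerges only in the thermodynamic limit (Theorem~\ref{thm:limit_C}). You do not need convexity here. The same compactness argument you used at $T=0$ closes the gap: having shown the tail $n>N_{\max}=C_\mu\ell^d$ is harmless via Lemma~\ref{lem:simple_estim_F_s}, take a subsequence along which the minimizing $\rho_\ell=n_\ell/\ell^d\in[0,C_\mu]$ converges to some $\rho_*$, apply Theorem~\ref{thm:limit_C} along it (and the elementary entropy lower bound $F_s(\beta,n,\Omega)\geq n\beta^{-1}(\log(n/|\Omega|)-1)$ for the degenerate case $\rho_*=0$) to conclude $\liminf_\ell F_s^{\rm GC}(\beta,\mu,\ell\omega)/\ell^d\geq f(s,\beta,\rho_*)-\mu\rho_*\geq g(s,\beta,\mu)$.

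Second, you correctly identify strict concavity in $\mu$ as the delicate point, but the tool you suggest---cluster/virial expansions at low activity---is restricted to the Mayer--Penrose--Ruelle regime and would at best give strict concavity for small $e^{\beta\mu}$, not for all $(\beta,\mu)$ as the theorem asserts. The right input is Ginibre's variance inequality, recalled in Remark~\ref{rmk:Ginibre}: the estimate~\eqref{eq:variance_Ginibre} gives a \emph{uniform} lower bound $\mathrm{Var}(n)\geq c(\beta,\mu)\,\bE[n]\gtrsim|\Omega|$, valid for every $\beta>0$ and $\mu\in\R$ because $V_s\geq0$. Since $\partial_\mu^2\big(\beta F_s^{\rm GC}\big)=-\beta^2\mathrm{Var}(n)$, this passes to the limit and gives $\partial_\mu^2(\beta g)\leq -\beta^2 c(\beta,\mu)\rho(\mu)<0$, i.e.\ strict concavity. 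Plain concavity of $\beta g$ in $\mu$ and in $\beta$ from the log-convexity of $Z_s^{\rm GC}$ in the affine parameters $(\beta\mu,\beta)$, as you argue, is fine.
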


In Theorem~\ref{thm:limit_C} we have seen that the limiting canonical free energy is a convex function of $\rho$. Therefore, it is as well the Legendre transform of the grand canonical one:
\begin{equation}
f(s,\beta,\rho)=\max_{\mu\in\R}\left\{g(s,\beta,\mu)+\mu\rho\right\}.
\label{eq:Legendre_GC_C}
\end{equation}
To any $\rho$ we can associate the $\mu$'s solving this maximum and to any $\mu$ we can associate the $\rho$'s satisfying the minimum in~\eqref{eq:thermo_limit_free_energy_GC}. The convexity implies that the functions are differentiable, except possibly on a countable set. Whenever the derivative exists, we have $\mu(\rho)=\partial_\rho f(s,\beta,\rho)$ and $\rho(\mu)=-\partial_\mu g(s,\beta,\mu)$. Recall that a jump in the derivative of a convex function corresponds to a constant slope over an interval for its Legendre transform.

The strict concavity in $\mu$ stated in Theorem~\ref{thm:limit_GC} is very important. For positive potentials as in our situation, it was proved by Ginibre in Ref.~\onlinecite{Ginibre-67}. The argument has then been rewritten in a more general context by Ruelle in Ref.~\onlinecite[Sec.~4]{Ruelle-70}. The strict concavity implies that the derivative in $\rho$ of the canonical free energy cannot have any jump, that is, $\Upsilon\mapsto f(s,\Upsilon)$ is in fact $C^1$. For any $\rho$ the maximum in~\eqref{eq:Legendre_GC_C} is attained at a \emph{unique $\mu=\mu(\rho)$}, given by
$$\boxed{\mu(\rho)=\frac\partial{\partial\rho}f(s,\beta,\rho).}$$
On the other hand, the grand-canonical free energy $\mu\mapsto g(s,\beta,\mu)$ can in principle have jumps in its derivative with respect to $\mu$, corresponding to (first order) phase transitions. At such a point several phases of different densities co-exist. Those also correspond to intervals where the canonical free energy $f(s,\beta,\rho)$ is linear in $\rho$.

At zero temperature we have due to~\eqref{eq:thermo_limit_energy_GC}
\begin{equation}
\mu(\rho)=\left(1+\frac{s}{d}\right)e(s)\,\rho^{\frac{s}d},\qquad \rho(\mu)=\frac{\mu_+^{\frac{d}s}}{\left(1+\frac{s}{d}\right)^{\frac{d}s}e(s)^{\frac{d}s}}\qquad\text{at $T=0$.}
\label{eq:mu_rho_T0}
\end{equation}
In this case, the grand-canonical free energy has no jump in its derivative and there are no phase transition when the density is varied. This is of course due to the scaling invariance of the system. Note that all the negative $\mu$'s give the same density $\rho=0$, which is obvious from the definition~\eqref{eq:def_E_s_GC} since $E_s(\Omega,n)>0$ for all $n\geq2$.

\begin{remark}[Extensivity of variance]\label{rmk:Ginibre}
In a bounded domain $\Omega$ the first two derivatives of the free energy with respect to $\mu$ equal
$$\frac{\partial}{\partial\mu}F_s^{\rm GC}(\beta,\mu,\Omega)=-\bE_{s,\beta,\mu,\Omega}[n],\qquad \frac{\partial^2}{\partial\mu^2}F_s^{\rm GC}(\beta,\mu,\Omega)=-\beta\left(\bE_{s,\beta,\mu,\Omega}[n^2]-\bE_{s,\beta,\mu,\Omega}[n]^2\right),$$
where $\bE_{s,\beta,\mu,\Omega}[\cdot]$ denotes the expectation in the grand-canonical Gibbs state and $n$ is the number of points. In other words, the second derivative is proportional to the variance of the number of points. In Ref.~\onlinecite{Ginibre-67}, Ginibre proves that this variance satisfies
\begin{equation}
 \bE_{s,\beta,\mu,\Omega}[n^2]-\bE_{s,\beta,\mu,\Omega}[n]^2\geq \frac{\bE_{s,\beta,\mu,\Omega}[n]}{1+e^{\beta\mu}\beta^{\frac{d}s}\int_{\R^d}(1-e^{-|x|^{-s}})\,\dx}.
 \label{eq:variance_Ginibre}
\end{equation}
Since $\bE_{s,\beta,\mu,\Omega}[n]\sim \rho(\mu)|\Omega|$, this provides a lower bound proportional to the volume. In other words, the strict concavity of $\mu\mapsto g(s,\beta,\mu)$ follows from the variance being an extensive quantity. This is related to the \emph{non} hyperuniformity of the Gibbs point process\cite{GhoLeb-17}, which we will discuss later in Section~\ref{sec:prop}.
\end{remark}

\begin{remark}[Grand-canonical free energy for $s=2$ in $d=1$]\label{rmk:CSM}
Using a work of Ruijsenaars\cite{Ruijsenaars-95}, Choquard proved in Ref.~\onlinecite{Choquard-00} that for the classical Calogero-Sutherland-Moser model $s=2$ in dimension $d=1$, we have the explicit formula
\begin{equation}
 g(2,\beta,\mu)=-\frac{\sqrt2}{\pi\beta^{\frac32}}\int_0^\ii \phi^{-1}\left(\sqrt{2\pi\beta} e^{\beta\mu}\,e^{-k^2}\right)\,\dk
 \label{eq:g_CSM}
\end{equation}
where $\phi(x)=xe^x$ and $\phi^{-1}$ denotes its inverse on $\R_+$. In particular, $g$ is a real-analytic function of $(\mu,\beta)$ on $\R\times(0,\ii)$. The \emph{quantum} Calogero-Sutherland-Moser model can be mapped to a non-interacting gas obeying the Haldane-Wu fractional statistics~\cite{Haldane-91,Wu-94,BerWu-94,Ha-94,Isakov-94,MurSha-94}. The function $\phi$ in~\eqref{eq:g_CSM} is what remains from this mapping in a semi-classical limit~\cite{BhaMurSen-10}. Vaninsky\cite{Vaninsky-97,Vaninsky-00} coined a formula for the canonical function $f(2,\beta,\rho)$ but there seems to exist no proof that this is the Legendre transform of~\eqref{eq:g_CSM}. In Ref.~\onlinecite{BhaMurSen-10} the first terms in the expansion of $f(2,\beta,\rho)$ at small density are derived from~\eqref{eq:g_CSM}.
\end{remark}

\subsection{Local bounds and definition of the point process}\label{sec:local_bd_short}
Now that we have recalled the definition and properties of the macroscopic thermodynamic functions, we look at the Gibbs measure itself. At $T=0$ we have to simply study the positions of the points. In order to make sure that the limit is non trivial, we need to prove \emph{local bounds}.

\bigskip

\paragraph{Zero temperature.}
We start with the zero-temperature case and prove that for minimizing positions, the points $x_j$ never get too close to each other and cannot leave too big holes. This is what is needed to pass to the limit locally and get an infinite configuration of points. It is instructive to first deal with the easier grand-canonical case.

\begin{lemma}[Grand-canonical local bounds, $T=0$]\label{lem:local_bound_GC_T0}
Let $\mu>0$ be any fixed number. Let $\Omega$ be any bounded open set with $|\partial\Omega|=0$. Let $N$ be so that
$$E_s(N,\Omega)-\mu N=\min_{n\geq0}\big\{E_s(n,\Omega)-\mu n\big\}$$
and $x_1,...,x_N$ be any minimizer for $E_s(N,\Omega)$. We assume that $N\geq2$, which is the case if for instance ${\rm diam}(\Omega)> \mu^{-\frac1s}$.

\smallskip

\noindent $(i)$ For any $1\leq j_0\leq N$, we have
\begin{equation}
 \sum_{j\neq j_0}\frac1{|x_j-x_{j_0}|^s}\leq \mu.
 \label{eq:estim_pot_GC}
\end{equation}
In particular, the smallest distance between the points satisfies $\min_{j\neq k}|x_j-x_k|\geq \mu^{-\frac1s}$.

\smallskip

\noindent $(ii)$ We have
\begin{equation}
 \sum_{j=1}^N\frac1{|x-x_{j}|^s}\geq \mu,\qquad\forall x\in\overline\Omega.
 \label{eq:estim_pot_GC_lower}
\end{equation}
This implies the existence of a universal constant $r$ (depending only on $d$ and $s$) such that any ball of radius $r\mu^{-\frac1s}$ and center $x\in\Omega$ contains at least one of the $x_j$'s.
\end{lemma}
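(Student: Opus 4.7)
The plan is to exploit the optimality of the grand-canonical minimizer by comparing with configurations obtained by removing or inserting a single point, and then extract the geometric statements from the resulting two-sided control on the potential generated by the other points.

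For part $(i)$, I would first use that $N$ achieves $\min_{n\geq 0}\{E_s(n,\Omega)-\mu n\}$ to obtain, by comparing with $n=N-1$ (admissible since $N\geq 2$),
\[ E_s(N,\Omega)-E_s(N-1,\Omega)\leq \mu. \]
On the other hand, deleting the single point $x_{j_0}$ from the minimizer produces an admissible $(N-1)$-point configuration in $\overline\Omega$, whose Riesz energy equals $E_s(N,\Omega)-\sum_{j\neq j_0}|x_j-x_{j_0}|^{-s}$. Hence
\[ E_s(N-1,\Omega)\leq E_s(N,\Omega)-\sum_{j\neq j_0}\frac{1}{|x_j-x_{j_0}|^s}. \]
Subtracting yields \eqref{eq:estim_pot_GC}. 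The minimum-distance assertion is then immediate: retaining just the nearest-neighbour term in the sum forces $|x_j-x_{j_0}|^{-s}\leq \mu$ for the closest $j$, i.e.\ the minimum pairwise distance is at least $\mu^{-1/s}$.

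For part $(ii)$ I would argue symmetrically, this time by \emph{inserting} a new point. If $x\in\overline\Omega$ coincides with some $x_j$ the sum is $+\infty$ and there is nothing to prove; otherwise $(x_1,\dots,x_N,x)$ is an admissible $(N+1)$-configuration, so
\[ E_s(N+1,\Omega)\leq E_s(N,\Omega)+\sum_{j=1}^N\frac{1}{|x-x_j|^s}, \]
while the optimality of $N$ gives $E_s(N+1,\Omega)-E_s(N,\Omega)\geq \mu$, proving \eqref{eq:estim_pot_GC_lower}.

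The remaining covering statement is the only genuinely geometric step and combines $(i)$ and $(ii)$. Suppose $x\in\Omega$ has no $x_j$ within distance $R:=r\mu^{-1/s}$, for some $r\geq 1$ to be chosen. Part $(i)$ ensures that the balls $B(x_j,\tfrac12\mu^{-1/s})$ are pairwise disjoint, so a volume-packing argument in the slightly enlarged spherical shell $\{(kR-\tfrac12\mu^{-1/s})\leq |y-x|<(k+1)R+\tfrac12\mu^{-1/s}\}$ bounds the number $N_k$ of points with $kR\leq|y-x|<(k+1)R$ by $C_d\, r^d\, k^{d-1}$ for $k\geq 1$. Plugging into the lower bound from $(ii)$,
\[ \mu\;\leq\;\sum_{j=1}^N\frac{1}{|x-x_j|^s}\;\leq\;\sum_{k\geq 1}\frac{N_k}{(kR)^s}\;\leq\;C'(d,s)\,\mu\,r^{d-s}. \]
Since $s>d$ the factor $r^{d-s}\to 0$ as $r\to\infty$, so choosing $r=r(d,s)$ large enough to make $C'(d,s)\,r^{d-s}<1$ produces the desired contradiction. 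The only point requiring some care in the write-up is the packing bound on $N_k$, because the disjoint balls around the $x_j$'s may protrude slightly outside the shell; this is handled by enlarging the shell by $\tfrac12\mu^{-1/s}$ on each side and is routine once $r$ is taken comparable to $1$ or larger.
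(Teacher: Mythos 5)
Your proof is correct and follows essentially the same route as the paper: parts $(i)$ and $(ii)$ are obtained exactly as in the paper by comparing with the configurations obtained by deleting $x_{j_0}$ (resp.\ inserting $x$) and using the optimality of $N$, and the covering statement is obtained by combining the separation from $(i)$ with the lower bound from $(ii)$ via a packing estimate. The only cosmetic difference is in that last step: the paper isolates the packing estimate as a separate lemma (Lemma~\ref{lem:simple_estim_sum}), proved by bounding each term $|y_j|^{-s}$ by the average of $|y|^{-s}$ over a small disjoint ball around $y_j$ and then integrating, whereas you inline a dyadic/shell decomposition with a volume count; both yield the same $O(r^{d-s})$ bound and lead to the same conclusion.
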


One should interpret~\eqref{eq:estim_pot_GC} as an upper bound on the decrease of energy when we remove the point $x_{j_0}$ from the system. The other estimate~\eqref{eq:estim_pot_GC_lower} provides a lower bound on the increase of energy when we add one point to the system, at the position $x_{N+1}=x$. Understanding the variations of the energy when one point is removed or added is the key to obtain local bounds.

\begin{proof}
We have $E_s(0,\Omega)=E_s(1,\Omega)=0$ and $E_s(2,\Omega)={\rm diam}(\Omega)^{-s}$. The optimal $N$ thus satisfies $N\geq2$ if ${\rm diam}(\Omega)>\mu^{-1/s}$. The minimality of $N$ means that
\begin{equation}
 E_s(N,\Omega)\leq E_s(n,\Omega)+\mu (N-n),\qquad\forall n\geq0.
 \label{eq:minimality_N_GC}
\end{equation}
We then write
\begin{align*}
E_s(N,\Omega)&=\sum_{\substack{1\leq j<k\leq N\\ j,k\neq j_0}}\frac1{|x_j-x_k|^s}+\sum_{j\neq j_0}\frac1{|x_j-x_{j_0}|^s}\\
&\geq E_s(N-1,\Omega)+\sum_{j\neq j_0}\frac1{|x_j-x_{j_0}|^s}\geq E_s(N,\Omega)+\sum_{j\neq j_0}\frac1{|x_j-x_{j_0}|^s}-\mu
\label{eq:decomp_one_particle}
\end{align*}
where we have used~\eqref{eq:minimality_N_GC} for $n=N-1$. We obtain~\eqref{eq:estim_pot_GC}. For~\eqref{eq:estim_pot_GC_lower} we add an extra point in the position $x_{N+1}=x$ to the minimizer for $E_s(N,\Omega)$ and use it as a trial state for $E_s(N+1,\Omega)$. We obtain
$$E_s(N)+\sum_{j=1}^N\frac1{|x-x_j|^s}\geq E_s(N+1)\geq E_s(N)+\mu,$$
using again~\eqref{eq:minimality_N_GC}, this time with $n=N+1$.

To prove the statement concerning the balls of radius $r\mu^{-1/s}$, we use the following elementary lemma.

\begin{lemma}\label{lem:simple_estim_sum}
There exists a universal constant $C=C(s,d)$ such that for every $r\geq1$,
$$\sum_{j}\frac1{|y_j|^s}\leq \frac{C}{r^{s-d}}$$
for any points $y_j\in\R^d\setminus B_r(0)$ with the property that $\min_{j\neq k}|y_j-y_k|\geq1$.
\end{lemma}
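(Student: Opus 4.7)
The idea is to exploit the packing constraint $\min_{j\neq k}|y_j-y_k|\geq 1$: it guarantees that the balls $B_{1/2}(y_j)$ are pairwise disjoint, so the sum $\sum_j |y_j|^{-s}$ can be compared to an integral of $|z|^{-s}$ over a suitable annular region, and the assumption $s>d$ ensures the integral is finite.

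Concretely, my plan is the following. First, for each $j$, observe that for any $z\in B_{1/2}(y_j)$ we have
\begin{equation*}
|z|\leq |y_j|+\tfrac12\leq \tfrac{3}{2}|y_j|
\end{equation*}
since $|y_j|\geq r\geq 1$, so $|y_j|^{-s}\leq (3/2)^s |z|^{-s}$. Averaging this inequality over $B_{1/2}(y_j)$ gives
\begin{equation*}
|y_j|^{-s}\leq \frac{(3/2)^s}{|B_{1/2}|}\int_{B_{1/2}(y_j)}|z|^{-s}\,\dz.
\end{equation*}
Second, observe that every $B_{1/2}(y_j)$ is contained in $\R^d\setminus B_{r-1/2}(0)\subset \R^d\setminus B_{r/2}(0)$ (using again $r\geq 1$), and by the minimum-distance hypothesis the balls $B_{1/2}(y_j)$ are mutually disjoint. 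Summing the previous inequality in $j$ and using this disjointness yields
\begin{equation*}
\sum_j |y_j|^{-s}\leq \frac{(3/2)^s}{|B_{1/2}|}\int_{\R^d\setminus B_{r/2}(0)}|z|^{-s}\,\dz.
\end{equation*}

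Third and last, pass to polar coordinates and compute the remaining integral, using $s>d$:
\begin{equation*}
\int_{\R^d\setminus B_{r/2}(0)}|z|^{-s}\,\dz=|\bS^{d-1}|\int_{r/2}^{\infty}t^{d-1-s}\,\dt=\frac{|\bS^{d-1}|}{s-d}\Bigl(\frac{r}{2}\Bigr)^{d-s}.
\end{equation*}
Collecting all the constants produces $\sum_j |y_j|^{-s}\leq C\,r^{d-s}$ with $C=C(s,d)$ depending only on $s$ and $d$, as claimed.

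There is really no serious obstacle: the argument is a textbook disjoint-packing comparison. The only mild point that must be handled carefully is to ensure the $B_{1/2}(y_j)$ stay clear of the origin so that we can integrate $|z|^{-s}$ over their union; this is exactly what the hypothesis $r\geq 1$ is designed to give, allowing us to control $|z|$ both from above by $(3/2)|y_j|$ and from below by $r/2$ on each $B_{1/2}(y_j)$.
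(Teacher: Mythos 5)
Your argument is correct and is essentially identical to the paper's: the same packing-by-disjoint-balls-of-radius-$1/2$ comparison, the same triangle inequality $|z|\leq (3/2)|y_j|$ to replace the summand by an averaged integral, and the same final integral over $\R^d\setminus B_{r/2}$. (Your collected constant $(3/2)^s|B_{1/2}|^{-1}|\bS^{d-1}|(s-d)^{-1}2^{s-d}$ simplifies to the paper's $3^s d/(s-d)$, as it should.)
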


We choose $r$ so that, in the lemma, $C/r^{s-d}<1$. By scaling we deduce that if there is no $x_j$ in a ball of radius $r\mu^{-1/s}$ centered at some $x$, then
$\sum_{j=1}^N|x-x_j|^{-s}< \mu$.
This is because $\min|x_j-x_k|\geq\mu^{-1/s}$ by $(i)$. When $x\in\Omega$, this contradicts~\eqref{eq:estim_pot_GC_lower} and thus shows the last part of the statement.
\end{proof}

\begin{proof}[Proof of Lemma~\ref{lem:simple_estim_sum}]
For any $y\in B_{1/2}(y_j)$, we have by the triangle inequality $|y|\leq |y_j|+1/2\leq (3/2)|y_j|$, since $|y_j|\geq r\geq1$. Thus $|y_j|^{-s}\leq (3/2)^s |B_{1/2}|^{-1}\int_{B_{1/2}(y_j)}|y|^{-s}\,\dy$. Since the balls $B_{1/2}(y_j)$ are disjoint due to the distance between the $y_j$ and all contained in $\R^d\setminus B_{r/2}(0)$, we obtain
$$\sum_j \frac1{|y_j|^s}\leq (3/2)^s|B_{\frac12}|^{-1}\int_{|y|\geq \frac{r}2}\frac{\dy}{|y|^s}=\frac{3^sd}{(s-d)r^{s-d}}.$$
\end{proof}

The previous proof in the grand-canonical case uses that $\mu$ provides a bound on the variation of the energy when we add or remove one point from the system. In the canonical case we have to first estimate this variation and then the argument is the same as before. For simplicity, we state our result at density $\rho=1$. As usual, the general case follows by scaling.

\begin{lemma}[Local bounds in the canonical case, $T=0$]\label{lem:local_bound_C_T0}
Assume that $s>d$. Let $\omega$ be any bounded open set with $|\omega|=1$ and $|\partial\omega|=0$. Let $\Omega=N^{1/d}\omega$. Then we have
\begin{equation}
E_s(N-1,\Omega)\geq E_s(N,\Omega)-\mu_1,\qquad E_s(N+1,\Omega)\geq E_s(N,\Omega)+\mu_2
\label{eq:estim_one_particle}
\end{equation}
for two universal constants $\mu_1,\mu_2>0$ and $N$ large enough. The conclusions of Lemma~\ref{lem:local_bound_GC_T0} hold for a minimizer $x_1,...,x_N$ of $E_s(N,\Omega)$, with $\mu$ replaced by $\mu_1$ for $(i)$ and by $\mu_2$ for $(ii)$.
\end{lemma}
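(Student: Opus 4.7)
The plan is to reduce parts $(i)$ and $(ii)$ to the two increment bounds \eqref{eq:estim_one_particle}, after which the argument of Lemma~\ref{lem:local_bound_GC_T0} carries over verbatim. Indeed, once \eqref{eq:estim_one_particle} is known, removing $x_{j_0}$ from an $E_s(N,\Omega)$-minimizer yields
\begin{equation*}
E_s(N,\Omega) - \sum_{j\neq j_0}|x_j-x_{j_0}|^{-s} \geq E_s(N-1,\Omega) \geq E_s(N,\Omega)-\mu_1,
\end{equation*}
which is assertion $(i)$ with $\mu_1$ in place of $\mu$. Dually, inserting $x\in\overline\Omega$ as an $(N+1)$-th point and invoking $E_s(N+1,\Omega)\geq E_s(N,\Omega)+\mu_2$ gives assertion $(ii)$, and the minimum-distance and ball-covering statements then follow from Lemma~\ref{lem:simple_estim_sum} exactly as in the grand-canonical proof.

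To establish the first inequality in \eqref{eq:estim_one_particle}, I would take any minimizer $y_1,\dots,y_{N-1}$ of $E_s(N-1,\Omega)$ and insert an $N$-th point $y_N$ in a ``low-density'' region. Setting the constant exclusion radius $r_0:=(2|B_1|)^{-1/d}$, the balls $B_{r_0}(y_j)$ have total volume at most $(N-1)/2$, so the set $A:=\Omega\setminus\bigcup_{j=1}^{N-1} B_{r_0}(y_j)$ has measure at least $(N+1)/2\geq N/2$. Using the short-range assumption $s>d$ to evaluate $\int_{\{|u|\geq r_0\}}|u|^{-s}\,du = \frac{|\bS^{d-1}|}{s-d}r_0^{d-s}=:C_0<\infty$, I would average over $y_N\in A$:
\begin{equation*}
\frac{1}{|A|}\int_A \sum_{j=1}^{N-1}|y_j-y_N|^{-s}\,dy_N \leq \frac{2(N-1)C_0}{N}\leq 2C_0,
\end{equation*}
so some $y_N\in A$ realises $\sum_j |y_j-y_N|^{-s}\leq 2C_0$, giving $E_s(N,\Omega)\leq E_s(N-1,\Omega)+2C_0$, i.e.\ $\mu_1:=2C_0$ works.

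For the second inequality, I would instead delete a point from an $E_s(N+1,\Omega)$-minimizer $z_1,\dots,z_{N+1}$. For every $j_0$ the configuration with $z_{j_0}$ omitted is a trial state for $E_s(N,\Omega)$, yielding $E_s(N+1,\Omega)-E_s(N,\Omega)\geq \sum_{j\neq j_0}|z_j-z_{j_0}|^{-s}$. Averaging over $j_0\in\{1,\dots,N+1\}$ and recognising that $\sum_{j_0}\sum_{j\neq j_0}|z_j-z_{j_0}|^{-s}=2E_s(N+1,\Omega)$, the lower bound of Lemma~\ref{lem:simple_estim_E_s} applied in volume $|\Omega|=N$ gives
\begin{equation*}
E_s(N+1,\Omega)-E_s(N,\Omega)\geq \frac{2E_s(N+1,\Omega)}{N+1}\geq 2c_1\left(\frac{N+1}{N}\right)^{s/d}\geq 2c_1,
\end{equation*}
so $\mu_2:=2c_1$ works for all $N\geq N_0$. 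The main obstacle is the averaging insertion of the second paragraph: the natural $N$-dependent spacing suggested by Lemma~\ref{lem:simple_estim_sum} would shrink the admissible set $A$ below a fixed fraction of $|\Omega|$ and break the $N$-uniformity of $\mu_1$, so one must commit to a \emph{constant} exclusion radius $r_0$ and exploit the integrability of $|u|^{-s}$ at infinity, which is exactly where $s>d$ is used.
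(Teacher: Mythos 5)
Your proof of the first inequality is essentially identical to the paper's: a fixed exclusion radius $r_0=(2|B_1|)^{-1/d}$, an averaging argument over the complement $A$ of the exclusion balls, and integrability of $|u|^{-s}$ at infinity; in fact your constant $\mu_1=2C_0$ simplifies to the paper's $d(2|B_1|)^{s/d}/(s-d)$. The reduction to the grand-canonical Lemma~\ref{lem:local_bound_GC_T0} is also handled the same way.

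For the second inequality, however, you take a genuinely different route. The paper picks the closest pair in a minimizer of $E_s(N+1,\Omega)$, shows via disjoint balls and a boundary estimate that the minimal spacing $2\eta$ is at most $2(2/|B_1|)^{1/d}$ for $N$ large, and bounds the increment from below by the single term $|x_{N+1}-x_N|^{-s}\geq |B_1|^{s/d}/2^{s(d+1)/d}$. You instead average the trial-state inequality $E_s(N+1,\Omega)-E_s(N,\Omega)\geq\sum_{j\neq j_0}|z_j-z_{j_0}|^{-s}$ over which point $z_{j_0}$ is deleted, observe that the sum over $j_0$ equals $2E_s(N+1,\Omega)$, and close with the global lower bound $E_s(N+1,\Omega)\geq c_1((N+1)/N)^{1+s/d}N$ from Lemma~\ref{lem:simple_estim_E_s}. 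Your version avoids the geometric packing estimate near $\partial\Omega$ and replaces it with the already-established volumetric energy bound; the paper's version is more local, pinpointing a single pair responsible for the increment. Both are correct; both need $N$ large enough only because Lemma~\ref{lem:simple_estim_E_s} does (in your case) or because the boundary correction must be small (in the paper's).
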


The lemma is from Dobru\v{s}in-Minlos in 1967 (Ref.~\onlinecite[Sec.~4]{DobMin-67}), see also Georgii in Ref.~\onlinecite[Lem.~(6.2)]{Georgii-76}. Similar arguments appeared later in Ref.~\onlinecite{KuiSaf-98,BorHarSaf-08,HarSafWhi-12,HarSafVla-17}.

\begin{proof}
Assume $N\geq2$ and let $x_1,...,x_{N-1}\in\overline\Omega$. We look at the set
$A:=\Omega\setminus \bigcup_{j=1}^{N-1}B_\delta(x_j)$
obtained by removing small balls around the points. Its volume satisfies
$$|A|\geq |\Omega|-(N-1)|B_1|\delta^d\geq (N-1)\left(1-|B_1|\delta^d\right)\geq \frac{N-1}2$$
if we choose $\delta= (2|B_1|)^{-1/d}$.
Then we compute the average
$$|A|^{-1}\int_A\sum_{j=1}^{N-1}\frac{\dx}{|x-x_j|^s}\leq 2\int_{|x|\geq \delta}\frac{\dx}{|x|^s}=\frac{d(2|B_1|)^{\frac{s}d}}{s-d}=:\mu_1.$$
This proves that there exists an $x\in A\subset\Omega$ such that $\sum_{j=1}^{N-1}|x-x_j|^{-s}\leq \mu_1$. Thus we have
$E_s(N,\omega)\leq \cE_s(x_1,...,x_{N-1},x)\leq \cE_s(x_1,...,x_{N-1})+\mu_1$.
Optimizing over $(x_1,...,x_{N-1})$ we obtain $E_s(N,\Omega)\leq E_s(N-1,\Omega)+\mu_1$. In fact, this argument works for $n$ points in an arbitrary domain $\Omega$ under the sole condition that $2\leq n\leq |\Omega|+1$.

For the other bound we consider a minimizer $x_1,...,x_{N+1}$ for $E(N+1,\Omega)$ and call $\eta=\min_{j\neq k}|x_j-x_k|/2$ half of the smallest distance between the points. For $N$ large enough, we have $\eta\leq (2/|B_1|)^{1/d}$. This is because the balls $B_{\eta}(x_i)$ are disjoint and therefore
$$(N+1)|B_1|\eta^d=\left|\bigcup_{j=1}^{N+1}B_\eta(x_j)\right|\leq N\left(1+\big|\partial \omega+N^{-\frac1d}B_\eta\big|\right)=N+o(N).$$
Up to permutations we can assume that $\eta=|x_{N+1}-x_N|/2$ and then
$$E_s(N+1,\Omega)\geq E_s(N,\Omega)+\sum_{j=1}^N\frac1{|x_{N+1}-x_j|^s}\geq E_s(N,\Omega)+\frac{|B_1|^{\frac{s}d}}{2^{s\frac{d+1}{d}}}.$$
\end{proof}

In Lemma~\ref{lem:local_bound_C_T0} we have only considered the two pairs $(N-1,N)$ and $(N,N+1)$ in a domain of volume $|\Omega|=N$. There are similar bounds for any pair $(N,N+1)$ when $N$ is of the same order as $|\Omega|$.

The local bounds in Lemmas~\ref{lem:local_bound_GC_T0} and~\ref{lem:local_bound_C_T0} allow us to pass to the thermodynamic limit and get, after extraction of a subsequence, an infinite configuration of points in $\R^d$. We would also like to pass to the limit in the minimization problem solved by the points. Of course, since we end up with infinitely many points, their total energy is infinite. However we can still express their optimality by moving, adding and deleting finitely many $x_j$'s, and writing that the energy must go up. Due to the positive distance between the points and the short range nature of the potential, the energy shift is finite. This leads to the following definition.

\begin{definition}[Equilibrium configurations and Riesz point process at $T=0$]\label{def:equilibrium}
Let $s>d$ and $\mu>0$. An \emph{equilibrium configuration at chemical potential $\mu$} is an infinite collection of points $X=\{x_j\}_{j\in\N}\subset\R^d$ such that there exists $\eps>0$ with

\smallskip

\noindent$(i)$ $\min_{j\neq k}|x_j-x_k|\geq\eps \mu^{-\frac1s}$;

\smallskip

\noindent$(ii)$ any ball of radius $\eps^{-1}\mu^{-\frac1s}$ contains at least one of the $x_j$'s;

\smallskip

\noindent$(iii)$ for any bounded domain $D\subset\R^d$, we have
 \begin{multline}
  \sum_{1\leq j<k\leq n}\frac1{|y_j-y_k|^s}+\sum_{j=1}^n\sum_{\ell=N+1}^\ii\frac1{|y_j-x_\ell|^s}\\
  \geq \sum_{1\leq j<k\leq N}\frac1{|x_j-x_k|^s}+\sum_{j=1}^N\sum_{\ell=N+1}^\ii\frac1{|x_j-x_\ell|^s}+\mu(N-n)
  \label{eq:DLR_T0}
 \end{multline}
for all $n\geq0$ and all $y_1,...,y_n\in \overline D$, after relabeling the $x_j$ so that $x_1,...,x_N\in \overline D$ and $x_j\in\R^d\setminus \overline D$ for $j\geq N+1$. In other words, $x_1,...,x_N$ solve the minimization problem
\begin{equation}
\min_{\substack{n\geq0\\ y_1,...,y_n\in \overline D}}\left\{\sum_{1\leq j<k\leq n}\frac1{|y_j-y_k|^s}+\sum_{j=1}^n\sum_{\ell=N+1}^\ii\frac1{|y_j-x_\ell|^s}-\mu n\right\}.
\label{eq:boundary_T0}
\end{equation}
We call $\cX_{s,\mu}$ the set containing all such equilibrium configurations. It is invariant under translations, rotations and it is closed for the local convergence of points. We have the scaling relation $\cX_{s,\mu_2}=(\mu_2/\mu_1)^{\frac1s}\cX_{s,\mu_1}$ for any $\mu_1,\mu_2>0$.
A \emph{Riesz point process} at temperature $T=0$ and chemical potential $\mu$ is by definition a point process $\mathscr{P}$ which concentrates on $\cX_{s,\mu}$, that is, for which the above properties~$(i)$--$(iii)$ hold $\mathscr{P}$--almost surely. The convex set of such processes is denoted by $\cR_{s,\ii,\mu}$.
\end{definition}

The points $x_1,...,x_N$ could also be allowed to move outside of $D$. This does not change anything since $D$ can be arbitrarily large. The property~\eqref{eq:DLR_T0} is the zero-temperature version of the famous Dobru\v{s}in-Lanford-Ruelle (DLR)~\cite{Dobrushin-68a,Dobrushin-68b,Dobrushin-69,LanRue-69} condition which will be discussed below. We have however not found it stated anywhere in the literature.

The local minimization problem~\eqref{eq:boundary_T0} is similar to the definition of $E^{\rm GC}_s(\mu,D)$ in~\eqref{eq:def_E_s_GC}, except for the second sum in the minimum which involves the potential generated by the point $x_j$ outside of $D$. Due to the positive distance between the $x_j$'s, this potential is very small well inside $D$, by Lemma~\ref{lem:simple_estim_sum}. It is essentially only seen by the $y_j$ close to the boundary of $D$. This potential is often called a ``boundary condition'' for the points inside. One can prove that the minimum in~\eqref{eq:boundary_T0} equals $E_s^{\rm GC}(\mu,\ell D)+O(\ell^{d-\frac{s-d}{s-d+1}})$, uniformly with respect to the points outside, under the assumption that $|D|=1$ and $\partial D$ is smooth enough.\footnote{For a smooth domain (for instance satisfying~\eqref{eq:hyp_domain} below), there are of the order of $\ell^{d-1}R$ points located at a distance $R$ from the boundary. Those see a bounded potential. For the $N+o(N)$ particles inside, at a distance $\geq R$, the potential induced by the particles outside is of order $O(R^{d-s})$ by Lemma~\ref{lem:simple_estim_sum}. Thus the energy shift is of order $\ell^{d-1}R+\ell^dR^{d-s}$. After optimizing over $R$, this provides the claimed $O(\ell^{d-\frac{s-d}{s-d+1}})$.} In particular, the points $x_1,...,x_N$ almost minimize $E_s^{\rm GC}(\mu,\ell D)$. From the Legendre relations in Theorem~\ref{thm:limit_GC}, this can be used to show that the points $x_j$ have a well defined intensity and energy per unit volume
\begin{equation}
\lim_{\ell\to\ii}\frac{\#X\cap \ell D}{\ell^d}=\rho(\mu),\qquad \lim_{\ell\to\ii}\frac{\cE_s(\#X\cap \ell D)}{\ell^d}=e(s)\rho(\mu)^{1+\frac{s}d}.
\label{eq:intensity_T0}
\end{equation}
We recall that $\rho(\mu)$ is defined in~\eqref{eq:mu_rho_T0}.

The local bounds of Lemma~\ref{lem:local_bound_GC_T0} can be used to prove that any sequence of optimizers for our grand canonical problem $E_s^{\rm GC}(\mu,\ell\omega)$ converges locally to an equilibrium configuration $X=\{x_j\}_{j\in\N}$ when $\ell\to\ii$, after extraction of a subsequence.

\begin{theorem}[Convergence to equilibrium configurations]\label{thm:CV_equilibrium}
Let $s>d$ and $\mu>0$. Let $\omega$ be an open set containing the origin with $|\partial\omega|=0$. For any $\ell\geq1$, denote by $X_\ell=\{x_{1,\ell},...,x_{N_\ell,\ell}\}$ an optimizer for $E_s^{\rm GC}(\mu,\ell\omega)$. After extraction of a subsequence $\ell_n\to\ii$, $X_{\ell_n}$ converges locally to an infinite equilibrium configuration $X\in \cX_{s,\mu}$. In particular, we have $\cX_{s,\mu}\neq\emptyset$ for all $\mu>0$.
\end{theorem}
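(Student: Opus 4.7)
The plan is to combine the uniform local bounds of Lemma~\ref{lem:local_bound_GC_T0} with a diagonal compactness extraction to obtain an accumulation configuration $X$, and then to pass to the limit in the grand-canonical minimality of $X_\ell$ to verify the DLR-type condition~\eqref{eq:DLR_T0} of Definition~\ref{def:equilibrium}.

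\emph{Step 1 (uniform bounds and extraction).} Since $\omega$ is open and contains $0$, $\ell\omega$ eventually contains any fixed bounded set and $\mathrm{diam}(\ell\omega)>\mu^{-1/s}$ for $\ell$ large. Lemma~\ref{lem:local_bound_GC_T0} then applies to each $X_\ell$ and yields, uniformly in $\ell$, the minimum separation $\min_{j\neq k}|x_{j,\ell}-x_{k,\ell}|\geq\mu^{-1/s}$ and the filling property that every ball of radius $r\mu^{-1/s}$ centered in $\ell\omega$ meets $X_\ell$. The separation provides a uniform packing bound $\#X_\ell\cap B_R(0)\leq C(R)$, so a diagonal Bolzano--Weierstrass extraction produces a subsequence $\ell_n\to\infty$ along which $X_{\ell_n}$ converges locally to an infinite configuration $X=\{x_j\}_{j\in\N}$. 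Properties $(i)$ and $(ii)$ of Definition~\ref{def:equilibrium} pass to $X$ under local convergence (the filling property for balls eventually contained in $\ell_n\omega$ combined with precompactness of the sequence of centers).

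\emph{Step 2 (finite-$\ell$ comparison).} Fix a bounded domain $D$ with $\partial D\cap X=\emptyset$, which may always be achieved by an arbitrarily small enlargement since $X$ is countable, together with $n\geq 0$ and $y_1,\dots,y_n\in\overline D$. For $\ell_n$ large, $\overline D\subset\overline{\ell_n\omega}$. Relabel $X_{\ell_n}$ so that $x_{1,\ell_n},\dots,x_{M_n,\ell_n}\in\overline D$ and the remaining points lie outside, and replace the inside block by $y_1,\dots,y_n$ to obtain an admissible configuration for $E_s^{\rm GC}(\mu,\ell_n\omega)$. The minimality of $X_{\ell_n}$ (inequality~\eqref{eq:minimality_N_GC}), combined with the cancellation of the unchanged outside-outside interactions, yields exactly the inequality~\eqref{eq:DLR_T0} at finite $\ell_n$, with the outer series truncated to the finitely many outside points of $X_{\ell_n}$.

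\emph{Step 3 (passing to the limit — the main obstacle).} The hard part is to pass to the limit in the outer-inner series, whose number of terms tends to infinity. The uniform separation from Step 1 combined with Lemma~\ref{lem:simple_estim_sum} provides the tail estimate
\begin{equation*}
\sum_{\substack{x\in X_{\ell_n}\\ |x-y|\geq R}}\frac{1}{|y-x|^s}\leq \frac{C\mu^{\frac{s-d}{s}}}{R^{s-d}},
\end{equation*}
uniform in $\ell_n$ and in $y$ in a compact set, and summable at infinity since $s>d$. The local convergence $X_{\ell_n}\to X$ gives termwise convergence of the corresponding finite partial sums over any bounded annulus, and a standard $\varepsilon/3$-argument combines these two ingredients to pass to the limit in each outer-inner series. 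The condition $\partial D\cap X=\emptyset$ ensures $M_n\to N:=\#(X\cap\overline D)$ and that the inside block of $X_{\ell_n}$ converges to exactly the points $x_1,\dots,x_N$ of $X\cap\overline D$. The limiting inequality is precisely~\eqref{eq:DLR_T0}, so $X\in\cX_{s,\mu}$, which also shows $\cX_{s,\mu}\neq\emptyset$.
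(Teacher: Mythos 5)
Your proof is correct and is exactly the argument the paper has in mind: the paper states this theorem without a proof, pointing only to the local bounds of Lemma~\ref{lem:local_bound_GC_T0} as the main ingredient, and your Steps~1--3 (uniform separation plus filling, diagonal extraction, and passing to the limit in the finite-volume grand-canonical minimality after cancelling the outside-outside interactions, using Lemma~\ref{lem:simple_estim_sum} as a uniform tail bound) fill in precisely the intended details. The only trivial slip is the power of $\mu$ in the Step~3 tail estimate, which by the scaling $y_j\mapsto \mu^{1/s}y_j$ in Lemma~\ref{lem:simple_estim_sum} should be $C\mu^{d/s}/R^{s-d}$ rather than $C\mu^{(s-d)/s}/R^{s-d}$; this constant is irrelevant to the argument since $\mu$ is fixed.
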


The set $\cX_{s,\mu}$ is invariant under translations and rotations, hence is always infinite. However, it seems reasonable to believe that there will only exist finitely many equilibrium configurations, up to translations and rotations:
\begin{equation}
\cX_{s,\mu}\overset{?}=\bigcup_{j=1}^J\mu^{-\frac1s}\big\{R\mathscr{L}_j+\tau\ :\ R\in SO(d),\ \tau\in\R^d\big\}.
 \label{eq:crystal_conjecture}
\end{equation}
The \emph{crystallization conjecture} (Conjecture~\ref{conj:crystal_state} below) states that the $\cL_j$ are all Bravais lattices~\cite{BlaLew-15}. In dimensions $d\in\{1,8,24\}$ we expect that $J=1$ with a universal lattice $\cL_1$ independent of $s$~\cite{Ventevogel-78,CohKumMilRadVia-19_ppt} (of course $\cL_1=\Z$ in dimension $d=1$). In other dimensions, the optimal lattice could depend on $s$, and then we expect that $J>1$ for some particular values of $s$ where two or more lattices give the same answer. This is what is believed to happen in $d=3$ with the Body-Centered Cubic (BCC) and Face-Centered Cubic (FCC) lattices. This is all discussed later in Section~\ref{sec:crystal_conjecture}

At $T=0$, a Riesz point process is just a probability measure over the equilibrium configurations in $\cX_{s,\mu}$. Such point processes have intensity $\rho=\rho(\mu)$ by~\eqref{eq:intensity_T0} and form a convex set. After averaging, we see that there exists rotation-invariant and/or translation-invariant Riesz point processes.

We have discussed here the easier grand-canonical case. A canonical equilibrium configuration $X=\{x_j\}_{j\in\N}$ is by definition one satisfying all the same properties $(i)$--$(iii)$ of Definition~\ref{def:equilibrium}, with
\begin{itemize}
 \item $\mu^{-\frac1s}$ replaced by $\rho^{-\frac1d}$ in $(i)$ and $(ii)$,
 \item only $n=N$ allowed in~\eqref{eq:DLR_T0}, hence $\mu$ can be discarded there.
\end{itemize}
Choosing $\mu$ so that $\rho(\mu)=\rho$ in~\eqref{eq:mu_rho_T0} we see that the corresponding grand-canonical configurations $\cX_{s,\mu}$ are all canonical. Since $\rho\mapsto\mu(\rho)$ is one-to-one, we expect that these are the only ones but have not found this stated anywhere in the literature. The convergence in the canonical case is studied in Ref.~\onlinecite{HarLebSafSer-18} after performing some averages over translations.

\begin{remark}[Minimizing the (free) energy per unit volume]\label{rmk:energy_per_unit_vol}
Let $X=\{x_j\}_{j\in\N}$ be any infinite configuration of points in $\R^d$. Then we can define the (upper and lower) energy per unit volume and density by
$$\overline{e}_s(X):=\limsup_{R\to\ii}\frac{\cE_s(X\cap B_R)}{|B_R|},\qquad\underline{e}_s(X):=\liminf_{R\to\ii}\frac{\cE_s(X\cap B_R)}{|B_R|},$$
$$\overline{\rho}(X):=\limsup_{R\to\ii}\frac{\#(X\cap B_R)}{|B_R|},\qquad\underline{\rho}(X):=\liminf_{R\to\ii}\frac{\#(X\cap B_R)}{|B_R|}.$$
These macroscopic quantities do not allow for a fine understanding of the optimal configurations. For instance, they do not change if $X$ is modified on a compact set. In addition, the limits might be different if we replace the ball $B_R$ by another set. Nevertheless, these concepts have proved very useful in some situations~\cite{CohKumMilRadVia-19_ppt,HarLebSafSer-18}. For instance, it follows from the definition that the minimal energy satisfies
$e(s)\rho^{1+\frac{s}d}\leq \inf \{\underline{e}_s(X)\ :\ \underline{\rho}(X)=\rho\}$.
Let $X\in\cX_{s,\mu}$ be an equilibrium configuration as in Definition~\ref{def:equilibrium}, which is known to exist by Theorem~\ref{thm:CV_equilibrium}. Then we have $\overline{e}_s(X)=\underline{e}_s(X)=e(s)\rho(\mu)^{1+s/d}$ and $\overline\rho(X)=\underline\rho(X)=\rho(\mu)$ by~\eqref{eq:intensity_T0}. Choosing $\mu$ so that $\rho(\mu)=\rho$, this proves that
\begin{equation}
e(s)\rho^{1+\frac{s}d}=\min_{\substack{X\subset\R^d\\ \underline\rho(X)=\rho}} \underline{e}_s(X).
\label{eq:var_principle_energy_per_unit_vol}
\end{equation}
There is a similar expression in the grand canonical case with $\underline{e}_s(X)-\mu\overline\rho(\mu)$ and without constraint. Thus $e(s)$ has a variational interpretation in terms of infinite configurations of points. Instead of considering individual point configurations $X$ which may not have a clear density or energy per unit volume, it is sometimes better to work with \emph{stationary (that is, translation-invariant) point processes}\cite{BorSer-13,Leble-15,Leble-16} (see Lemma~\ref{lem:periodic_energy_pt_process} below). Variational characterizations of the type of~\eqref{eq:var_principle_energy_per_unit_vol}, involving general point processes, have played an important role in works of Georgii~\cite{Georgii-94,Georgii-95,Leble-16} at positive temperature.
\end{remark}

\bigskip

\paragraph{Positive temperature.}
Next we turn to the positive temperature case. All the configurations of points are now possible and we have to control the probabilities that the points are badly placed. Instead of considering the smallest distance and the largest hole, we discuss weaker bounds which are enough to pass to the limit. We again start with the easier grand-canonical case, for which we introduce the \emph{$k$-point correlation function} of the Gibbs measure in a domain $\Omega$
\begin{equation}
\rho^{(k)}_{s,\beta,\mu,\Omega}(x_1,...,x_k):=Z^{\rm GC}_s(\beta,\mu,\Omega)^{-1}\sum_{n\geq0}\frac{e^{\beta\mu (n+k)}}{n!}\int_{\Omega^n}e^{-\beta \cE_s(x,y)}\dy.
\label{eq:correlation_fn}
\end{equation}
We have used here the short-hand notation $x=(x_1,...,x_k)$, $y=(y_1,...,y_n)$, $\dy=\dy_1\cdots\dy_n$ and $\cE_s(x,y)=\cE_s(x_1,...,x_k,y_1,...,y_n)$. The expectation of a random variable $f$ defined on finite or infinite configurations of points is given by
$$\bE_{s,\beta,\mu,\Omega}[f]:=Z^{\rm GC}_s(\beta,\mu,\Omega)^{-1}\sum_{n\geq0}\frac{e^{\beta\mu n}}{n!}\int_{\Omega^n}f(\{x_1,...,x_n\})e^{-\beta \cE_s(x)}\dx_1\cdots\dx_n.$$
Our goal will be to control the energy $\cE_s(D\cap X)$ and the number of points $n_D(X)=\# X\cap D$ in any given domain $D\subset\R^d$, independently of the large domain $\Omega$. A calculation shows that
\begin{equation}
\bE_{s,\beta,\mu,\Omega}\left[\frac{n_D!}{(n_D-k)!}\right]=\int_{D^\ell}\rho_{s,\beta,\mu,\Omega}^{(k)},\qquad \bE_{s,\beta,\mu,\Omega}\big[\cE_s(D\cap \cdot)\big]=\frac12\iint_{D^2}\frac{\rho_{s,\beta,\mu,\Omega}^{(2)}(x,y)}{|x-y|^s}\dx\,\dy.
 \label{eq:nb_correlation}
\end{equation}
The question is therefore to control the correlation functions, which is very easy for a positive interaction~\cite{Ruelle}.

\begin{lemma}[Local bounds in the grand-canonical case, $T>0$~\cite{Ruelle}]\label{lem:local_bound_GC}
Let $s>d$ and $\Omega\subset \R^d$ be any bounded domain. Then we have the universal bounds
\begin{equation}
\rho_{s,\beta,\mu,\Omega}^{(k)}(x_1,...,x_k)\leq e^{\beta\mu k}e^{-\sum_{1\leq \ell<m\leq k}\frac{\beta}{|x_\ell-x_m|^s}}\leq e^{\beta\mu k},
\label{eq:estim_correlations}
\end{equation}
\begin{equation}
e^{-\frac{\beta }{4{\rm diam}(D)^s}}\;\bE_{s,\beta,\mu,\Omega}\left[e^{\frac{\beta (n_D)^2}{4{\rm diam}(D)^s}}\right]\leq \bE_{s,\beta,\mu,\Omega}\left[e^{\beta\cE_s(D\cap \cdot)}\right]\leq \exp\left(e^{\beta\mu}|D|\right),
\label{eq:estim_local_energy}
\end{equation}
for any bounded domain $D\subset\R^d$.
\end{lemma}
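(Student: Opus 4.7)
The plan is to exploit the positivity $V_s\geq 0$ for the two upper bounds and a crude diameter estimate for the lower bound.

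For the first inequality in~\eqref{eq:estim_correlations}, I would split the energy appearing in the definition~\eqref{eq:correlation_fn} as
$$\cE_s(x,y)=\cE_s(x)+\sum_{\ell=1}^{k}\sum_{j=1}^{n}V_s(x_\ell-y_j)+\cE_s(y),$$
drop the non-negative cross term to obtain $e^{-\beta\cE_s(x,y)}\leq e^{-\beta\cE_s(x)}\,e^{-\beta\cE_s(y)}$, pull $e^{\beta\mu k}\,e^{-\beta\cE_s(x)}$ out of the series, and observe that the remaining sum over $n$ is exactly $Z^{\rm GC}_s(\beta,\mu,\Omega)$, which cancels the prefactor $Z^{\rm GC}_s(\beta,\mu,\Omega)^{-1}$. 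The second inequality then follows from $\cE_s(x)\geq 0$.

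For the upper bound in~\eqref{eq:estim_local_energy} (assuming $D\subset\Omega$), I would re-index each configuration in $\Omega^n$ by the number $k$ of points inside $D$, writing it as $(x_D,x_{D^c})$ with $x_D\in D^k$ and $x_{D^c}\in(\Omega\setminus D)^{n-k}$. The binomial identity
$$\frac1{n!}\int_{\Omega^n}F=\sum_{k+m=n}\frac1{k!\,m!}\int_{D^k}\int_{(\Omega\setminus D)^m}F$$
turns $Z^{\rm GC}_s(\beta,\mu,\Omega)\,\bE_{s,\beta,\mu,\Omega}[e^{\beta\cE_s(D\cap\cdot)}]$ into a double series. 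Positivity of $V_s$ yields $\cE_s(x_D,x_{D^c})-\cE_s(x_D)\geq \cE_s(x_{D^c})$, so the integrand is bounded pointwise by $e^{-\beta\cE_s(x_{D^c})}$; the trivial $x_D$-integration produces a factor $|D|^k$, the $k$-sum then factors out as $\exp(e^{\beta\mu}|D|)$, and the $m$-sum reconstructs $Z^{\rm GC}_s(\beta,\mu,\Omega\setminus D)$. To conclude, one only needs the monotonicity $Z^{\rm GC}_s(\beta,\mu,\Omega\setminus D)\leq Z^{\rm GC}_s(\beta,\mu,\Omega)$, which is immediate from the non-negativity of $e^{-\beta\cE_s}$ by enlarging the domain of integration.

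For the lower bound, any configuration with $n_D$ points in $D$ has all pairwise distances between them at most ${\rm diam}(D)$, so
$$\cE_s(D\cap x)\geq \frac{n_D(n_D-1)}{2\,{\rm diam}(D)^s}\geq \frac{n_D^2-1}{4\,{\rm diam}(D)^s},$$
where the second inequality is the elementary $n_D(n_D-1)-(n_D^2-1)/2=(n_D-1)^2/2\geq0$. Exponentiating pointwise and taking the Gibbs expectation delivers the desired inequality. The only mild subtlety in the whole argument is the combinatorial bookkeeping in the grand-canonical sum; the decisive ingredient throughout is the positivity $V_s\geq 0$.
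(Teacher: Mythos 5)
Your proof is correct and follows essentially the same approach as the paper: dropping the nonnegative cross-interaction to get the correlation bound, re-indexing the grand-canonical sum by the number of points inside $D$ together with positivity and the monotonicity $Z^{\rm GC}_s(\beta,\mu,\Omega\setminus D)\leq Z^{\rm GC}_s(\beta,\mu,\Omega)$ for the upper bound, and the elementary diameter estimate $\cE_s(D\cap\cdot)\geq (n_D^2-1)/(4\,\mathrm{diam}(D)^s)$ for the lower bound. You spell out the monotonicity step that the paper leaves implicit, but the argument is the same.
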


The bound~\eqref{eq:estim_correlations} implies that our Gibbs measure has a uniformly bounded average local energy and number of points. The second bound~\eqref{eq:estim_local_energy} gives that the probability of having more than $\lambda$ points or an energy larger than $\lambda$ in a domain $D$ decays exponentially in $\lambda$, at a rate depending on the size of $D$.

\begin{proof}
Since the potential $V_s(x)=|x|^{-s}$ is positive, we have $\cE_s(x,y)\geq  \cE_s(x)+\cE_s(y)$. Inserting in~\eqref{eq:correlation_fn} immediately provides~\eqref{eq:estim_correlations}. For~\eqref{eq:estim_local_energy} we need to use that for any random variable $f$ with support in a domain $D$
$$\bE_{s,\beta,\mu,\Omega}[f]=\frac1{Z^{\rm GC}_s(\beta,\mu,\Omega)}\sum_{n,k\geq0}\frac{e^{\beta\mu (n+k)}}{n!\,k!}\int_{D^k\times (\Omega\setminus D)^{n-k}}\!\!f(\{x\})e^{-\beta \cE_s(x,y)}\dx\,\dy.$$
This is shown by looking at all the possible numbers of points in and outside $D$.
Taking $f(X)=\exp(\beta \cE_s(X\cap D))$ we obtain the last bound in~\eqref{eq:estim_local_energy}. For the first bound we use that for any $x=(x_1,...,x_n)\in D^n$
$$\cE_s(x)\geq \frac{n(n-1)}{2\,{\rm diam}(D)^s}\geq \frac{n^2-1}{4\,{\rm diam}(D)^s},$$
which concludes the proof.
\end{proof}

There exist similar bounds in the canonical case. The $k$-point correlation of the canonical Gibbs measure is defined by
\begin{equation}
\rho^{(k)}_{s,\beta,N,\Omega}(x):=Z_s(\beta,N,\Omega)^{-1}\frac{1}{(N-k)!}\int_{\Omega^{N-k}}e^{-\beta \cE_s(x,y)}\dy
\label{eq:correlation_fn_C}
\end{equation}
so that we have the pointwise inequality
\begin{equation}
\rho^{(k)}_{s,\beta,N,\Omega}(x)\leq \frac{Z_s(\beta,N-k,\Omega)}{Z_s(\beta,N,\Omega)} e^{-\beta\cE_s(x)}.
\label{eq:estim_correlations_C}
\end{equation}
The same proof as in Lemma~\ref{lem:local_bound_C_T0} was used by Dobru\v{s}in-Minlos (Ref.~\onlinecite[Sec.~4]{DobMin-67}, see also Georgii in Ref.~\onlinecite[Lem.~(6.2)]{Georgii-76}) to show that
$$\frac{Z_s(\beta,N-k,\Omega)}{Z_s(\beta,N,\Omega)}\leq 2^ke^{\beta\mu_1 k},$$
for fixed $k$ and $N$ large-enough (depending on $k$), in any domain satisfying $|\Omega|=N$, where $\mu_1$ is the constant from Lemma~\ref{lem:local_bound_C_T0}. This provides the desired bound on the correlation functions in the canonical case.

With these bounds at hand we can pass to the thermodynamic limit and obtain a point process satisfying similar local bounds. This is what corresponds to $(i)$ and $(ii)$ in Definition~\ref{def:equilibrium}. The positive temperature equivalent of $(iii)$ is
called the Dobru\v{s}in-Lanford-Ruelle (DLR) condition~\cite{Dobrushin-68a,Dobrushin-68b,Dobrushin-69,LanRue-69}, and states that the conditional probability of the points in a domain $D$, given the positions $x=\{x_\ell\}$ of the points outside of $D$ is the grand-canonical probability on $D$
\begin{equation}
\bP_{\beta,\mu,D,x}:= \left(\frac{e^{\beta \mu n}}{Z^{\rm GC}(D,x)}e^{-\beta  \left(\sum_{1\leq j<k\leq n}\frac1{|y_j-y_k|^s}+\sum_{j=1}^n\sum_{\ell}\frac1{|y_j-x_\ell|^s}\right)}\right)_{n\geq0}.
 \label{eq:specification}
\end{equation}
This property holds in a finite domain $\Omega$ and pertains in the limit, almost surely with respect to the outside\cite{Preston-74}. The second sum in~\eqref{eq:specification} is almost-surely finite, due to the average bounds from Lemma~\ref{lem:local_bound_GC}.

\begin{definition}[Riesz point process at $T>0$]\label{def:Gibbs_DLR}
A \emph{Riesz point process} at inverse temperature $\beta$ and chemical potential $\mu\in\R$ is a point process satisfying the same average local bounds as in~\eqref{eq:estim_correlations} and~\eqref{eq:estim_local_energy} for any bounded domain $D$ and the DLR condition that the conditional probability of the points in a domain $D$ given the positions $x_\ell$ in $\R^d\setminus D$ is almost surely given by~\eqref{eq:specification}. The convex set of such processes is denoted by $\cR_{s,\beta,\mu}$ and it is non empty.
\end{definition}

A Riesz point process does not necessarily have a well defined intensity since there can be phase transitions and thus several $\rho$'s corresponding to one $\mu$, which is different from $T=0$. One can also define a concept of canonical Gibbs state. It is proved by Georgii in Ref.~\onlinecite{Georgii-76} that those are all convex combinations of grand-canonical ones.

At $T=0$ we knew from the invariance under translations and rotations that the set $\cX_{s,\mu}$ of equilibrium points cannot be reduced to one point, hence so does the convex set $\cR_{s,\ii,\mu}$. On the other hand, at positive temperature it is perfectly possible that $\cR_{s,\beta,\mu}$ be a single point process, invariant under isometries. The question of whether this happens or not is fundamental in the understanding of phase transitions and will be discussed later in Section~\ref{sec:transitions}.

Instead of using the DLR conditional probability~\eqref{eq:specification}, one can define the Riesz point process through the Kirkwood-Salsburg (KS) equations~\cite{Ruelle-63,Ruelle}. This is an infinite hierarchy of integral equations involving the correlation functions, taking the form
\begin{multline}
\rho^{(k)}(x_1,...,x_k)=e^{-\beta \sum_{j=2}^k\frac1{|x_j-x_1|^s}}e^{\beta\mu}\times\bigg\{\rho^{(k-1)}(x_2,...,x_k)+\\
+\sum_{n=1}^\ii \frac1{n!}\int_ {(\R^d)^n}\prod_{j=1}^n\left(e^{-\frac\beta{|y_j-x_1|s}}-1\right) \rho^{(k-1+n)}(x_2,...,x_k,y_1,...,y_n)\,\dy_1\cdots \dy_n \bigg\}.
\label{eq:KS}
\end{multline}
Yet another point of view is given by the  Bogoliubov-Born-Green-Kirkwood-Yvon (BBGKY) equations, which involve gradients and take the form
\begin{multline}
\nabla_{x_1}\rho^{(k)}(x_1,...,x_k)=-\beta\sum_{j=2}^k\nabla V_s(x_1-x_j)\rho^{(k)}(x_1,...,x_k)\\
 -\beta\int_{\R^d}\nabla V_s(x_1-y)\rho^{(k+1)}(x_1,...,x_k,y)\,\dy.
\label{eq:BBGKY_short}
\end{multline}
It is proved in Ref.~\onlinecite{Dobrushin-68c,LanRue-69,Ruelle-70,GenSim-12} (see also the discussion in Ref.~\onlinecite{GruLugMar-78}) that these are completely equivalent points of view. A solution of the KS or BBGKY equations with correlation functions satisfying the bounds~\eqref{eq:estim_correlations} defines a point process which satisfies the DLR condition~\eqref{eq:specification}, and conversely. There are in fact other equivalent characterizations such as the Kubo-Martin-Schwinger (KMS) condition but they will not be discussed in this article.

%%%%%%%%%%%%%%%%%%%%%%%%%%%%%%%%%%%%%%%%%%%%%%%%%%%%%
%%%%%%%%%%%%%%%%%%%%%%%%%%%%%%%%%%%%%%%%%%%%%%%%%%%%%
\section{Thermodynamic limit in the long range case $s<d$}\label{sec:thermo_limit_long_range}
%%%%%%%%%%%%%%%%%%%%%%%%%%%%%%%%%%%%%%%%%%%%%%%%%%%%%
%%%%%%%%%%%%%%%%%%%%%%%%%%%%%%%%%%%%%%%%%%%%%%%%%%%%%

After this long description of Riesz point processes in the short range case $s>d$, we turn to the more difficult long range case $s<d$, for which many results are still open. We will not discuss the threshold $s=d$ in this article.

\subsection{Canonical ensemble}
We can start the same as in the short range case and investigate the minimal energy
\begin{equation}
 \widetilde E_s(N,\Omega)=\min_{x_1,...,x_N\in\overline\Omega}\sum_{1\leq j<k\leq N}V_s(x_j-x_k)
 \label{eq:def_E_s_stupid}
\end{equation}
where we recall that $V_s$ is given by~\eqref{eq:def_V_s_intro}. It is however not possible to construct good configurations of points this way. The points repel a lot at large distances due to the non-integrability of the potential, and do not repel that much anymore when they are close. The consequence is that they will all escape to a neighborhood of the boundary. This is sometimes called the \emph{evaporation catastrophe}\cite{Gallavotti-99}. In fact, for $-2<s\leq d-2$ the minimum is always attained for $x_1,...,x_N$ all exactly on $\partial\Omega$, and our set $\Omega$ ends up being completely empty. To prove this claim, we recall that
$$\begin{cases}
 \dps -\Delta\frac1{|x|^s}=\frac{s(d-2-s)}{|x|^{s+2}} &\text{in $\R^d\setminus\{0\}$, for $s\neq0$,}\\[0.3cm]
 \dps -\Delta(-\log|x|)=\frac{d-2}{|x|^{2}} &\text{in $\R^d\setminus\{0\}$, for $s=0$. }\\
  \end{cases}
$$
Due to the sign in~\eqref{eq:def_V_s_intro}, we find that $\cE_s$ is superharmonic with respect to each $x_{j_0}$ on $\Omega\setminus\{x_j\}_{j\neq j_0}$, hence attains its minimum at the boundary. This cannot be at one $x_j$ where the potential diverges to $+\ii$ or vanishes, and thus we must have $x_{j_0}\in\partial\Omega$. When $d-2<s<d$, there will be points everywhere but many more close to the boundary than in the interior of~$\Omega$.

Gathering at (or close to) the boundary is the best that our points can do to compensate the slow decay of the potential, but this is not sufficient to make the energy behave well. For $\omega$ any smooth bounded domain so that $|\omega|=1$, it can be proved that (for $s>0$)
\begin{align}
\widetilde{E}_s(N,N^{\frac1d}\omega)&=N^{-\frac{s}d}\min_{y_1,...,y_N\in \overline\omega}\sum_{1\leq j<k\leq N}\frac1{|y_j-y_k|^s}\nn\\
&\underset{N\to\ii}{\sim}\frac{N^{2-\frac{s}d}}2\min_{\nu}\iint_{\overline\omega\times\overline\omega}\frac{\rd\nu(x)\,\rd\nu(y)}{|x-y|^s}=\frac{N^{2-\frac{s}d}}{2\,{\rm Cap}_s(\omega)}\label{eq:mean-field}
\end{align}
where the right side is the Riesz capacity of the domain $\omega$~\cite{Landkof-72} and the second minimum is over all probability measures $\nu$ supported in $\overline\omega$. Unlike the short range case treated in Lemma~\ref{lem:simple_estim_E_s}, the energy grows much faster than $N$ in a large domain. The limit~\eqref{eq:mean-field} dates back at least to Choquet~\cite{Choquet-58} and has appeared in many forms in the literature~\cite{Landkof-72}. There is a similar limit for $s\leq0$.

Adding a fixed temperature will not help. The points will now be forced to visit the whole of $\Omega$ but this is so costly that they will very rarely do so. In fact, the exact same limit as~\eqref{eq:mean-field} holds for the free energy
$$\widetilde F_s(\beta,N,N^{\frac1d}\omega)=-\beta^{-1}\log\left(\int_{(N^{\frac1d}\omega)^N}e^{-\beta\sum_{1\leq j<k\leq N}V_s(x_j-x_k)}\dx_1\cdots \dx_N\right),$$
independently of the value of the temperature $T=1/\beta$. One should take $T\sim N^{\frac{d-s}d}$ in order to see an effect of the temperature to leading order, but this will only affect the value of the constant in~\eqref{eq:mean-field} without changing the behavior in $N^{2-\frac{s}d}$~\cite{MesSpo-82,Kiessling-89,Kiessling-93,CagLioMarPul-92,*CagLioMarPul-95,KieSpo-99}.

We have to find a way of compensating the strong repulsion between the points and prevent their escape to the boundary. On the other hand, we wish to keep nice scaling properties such as~\eqref{eq:scaling_N_T0} and~\eqref{eq:scaling_N_T}. The solution to this problem is well known in physics. The idea is to add a uniform compensating background of density $\rho_b$ which acts as a renormalization of the energy. We thus define
\begin{multline}
\cE_s(x_1,...,x_N,\Omega,\rho_b):=\sum_{1\leq j<k\leq N}V_s(x_j-x_k)-\rho_b\sum_{j=1}^N\int_\Omega V_s(x_j-y)\dy\\+\frac{\rho_b^2}2\iint_{\Omega\times\Omega}V_s(x-y)\,\dx\,\dy.
\label{eq:def_cE_s_Jellium}
\end{multline}
The second and third terms on the right side of~\eqref{eq:def_cE_s_Jellium} are respectively interpreted as the interaction between the points and the uniform background, and the self-energy of the background. The last term is a constant added for convenience, but the second term depends on the location of the points and it can drastically modify their optimal position. It seems very natural to enforce the constraint that $\rho=N/|\Omega|=\rho_b$ and we will soon do so, but for the moment we keep $\rho_b>0$ arbitrary. When $\rho_b=0$ we recover the problematic energy in~\eqref{eq:mean-field}. The corresponding minimum energy is
\begin{equation}
\boxed{E_s(N,\Omega,\rho_b):=\min_{x_1,...,x_N\in\overline\Omega}\cE_s(x_1,...,x_N,\Omega,\rho_b).}
\label{eq:def_E_s_Jellium}
\end{equation}
We use the same notation $\cE_s$ and $E_s$ for $s<d$ as in the short range case $s>d$. We hope this does not cause any confusion. The link with $s>d$ will be clarified later in Section~\ref{sec:periodic}. By convention we will always include the background when $s<d$.
The uniformity of the background provides the same scaling relation as in the short range case, with the exception of $s=0$:
\begin{equation}
E_s(N,\Omega,\rho_b)=\lambda^s\,E_s\big(N,\lambda \Omega,\rho_b\lambda^{-d}\big)+\log\lambda\frac{(N-\rho_b|\Omega|)^2-N}{2}\delta_0(s).
\label{eq:scaling_N_T0_Jellium}
\end{equation}
Our notation means that the second term is only present for $s=0$. The potential seen by any point $x_{j_0}$ in the system is now given by
$$\sum_{\substack{j=1\\j\neq j_0}}^NV_s(x_j-x_{j_0})-\rho_b\int_\Omega V_s(y-x_{j_0})\dy.$$
The hope is that the second term compensates the first if $\rho=\rho_b$ and the points are sufficiently well arranged in the domain $\Omega$.  For instance, if the $x_j$ are uniformly and independently distributed in $\Omega$, the expectation is
$$\left(\frac{N-1}{|\Omega|}-\rho_b\right)\int_\Omega V_s(y-x_{j_0})\,\dy=\left(\rho-\rho_b-\frac1{|\Omega|}\right)\int_\Omega V_s(y-x_{j_0})\,\dy.$$
The integral $\int_\Omega V_s(y-x_{j_0})\,\dy$ behaves like $|\Omega|^{1-\frac{s}{d}}$, a divergence which can be compensated by the factor $(\rho-\rho_b-|\Omega|^{-1})$ only if $\rho_b=\rho$.

The subtraction of a uniform background as in~\eqref{eq:def_cE_s_Jellium} has a long history in physics. In the Coulomb case $s=d-2$, this is often called the \emph{Jellium model}, a name which seems to have been first suggested by Herring at a conference in 1952~\cite{Herring-52,Hugues-06}. The points are interpreted as negative charges moving in a positively charged jelly. Another name which is also found in the literature is the \emph{one-component plasma}.\footnote{In the physics literature, the name `Jellium' is often employed for electrons (which are quantum with spin), whereas the `one-component plasma' is mainly used for classical particles as considered in the present article.} The model seems to have been proposed around 1900 by J.~J.~Thomson~\cite{Thomson-04} -- based on previous ideas of W.~Thomson (Lord Kelvin) -- in order to describe the electrons in an atom, before the discovery of the nuclei. In this context, it is often called the \emph{plum pudding model}. The first mathematical results go back to F\"oppl who studied it in his dissertation with Hilbert in 1912~\cite{Foppl-12}. In a celebrated work, Wigner~\cite{Wigner-34} introduced the quantum version of this model in 1934. Since the renormalization procedure~\eqref{eq:def_cE_s_Jellium} is independent of~$s$, in this paper we will use the name `Jellium' for all $s<d$.

The homogeneous background is a caricature of what charged particles usually experience in real physical systems, but the Jellium model has nevertheless been found to provide both qualitative and quantitative results in a large number of practical situations. It is the reference model in Density Functional Theory~\cite{LunMar-83,ParYan-94,PerKur-03}, where it appears in the Local Density Approximation~\cite{HohKoh-64,KohSha-65,LewLieSei-19,LewLieSei-19_ppt} and is used for deriving the most efficient empirical functionals~\cite{Perdew-91,PerWan-92,Becke-93,PerBurErn-96,SunPerRuz-15,Perdew_etal-16}. Valence electrons in alcaline metals have been found to be described by Jellium to a high precision, for instance in sodium~\cite{Huotari-etal-10} and lithium~\cite{Hiraoka-etal-20}. The Jellium model is also believed to be a good approximation to the deep interior of white dwarfs, where the density of particles is very dense, as was studied first by Salpeter in 1961~\cite{Salpeter-61} (see also Ref.~\onlinecite{BauHan-80} and Ref.~\onlinecite[Chap.~11]{VanHorn-15}). In this case the atoms are fully ionized and the nuclei evolve in a uniform background of electrons. This is all for $d=3$ and $s=d-2=1$ (Coulomb case) but there are many other applications for different values of $d$ and $s$. For instance, $s=0$ plays a role for star polymer solutions, at least at short distances\cite{WitPin-86,Likos-01}. Other values of $s$ and $d\in\{1,2,3\}$ can be artificially produced in the laboratory by tuning the interaction between cold atoms using lasers~\cite{Zhang_etal-17}.

When $\rho_b=N/|\Omega|$, it is useful to think that each point $x_j$ owns a small piece $\Omega_j$ (to be determined) of the background of volume $|\Omega_j|=1/\rho_b$. The potential generated by a point $x_{j_0}$ with its background $\Omega_{j_0}$  behaves like
\begin{equation}
\frac1{|x-x_j|^s}-\rho_b\int_{\Omega_j}\frac{\dy}{|x-y|^s}=s\frac{x\cdot P_j}{|x|^{s+2}}+\frac{s}2\frac{\,x^TQ_jx}{|x|^{s+4}}+O\left(\frac1{|x|^{s+3}}\right)_{|x|\to\ii}
 \label{eq:dipole}
\end{equation}
where
$$P_j=x_j-\rho_b\int_{\Omega_j}y\,\dy,\qquad Q_j=(s+2)x_jx_j^T-|x_j|^2-\rho_b\int_{\Omega_j}\big((s+2)yy^T-|y|^2\big)\dy$$
are the dipole and quadrupole for the Riesz interaction. A similar expansion holds for $s\leq0$. The interaction between two such compounds goes at infinity like $1/|x|^{s+2}$ and even like $1/|x|^{s+4}$ if $P_j=0$. We thus see how the background can serve to improve the decay at infinity of the potential. This generates the hope that the system will be stable for $s\geq d-2$ or maybe even $s\geq d-4$. It turns out that the situation is much better: the background stabilizes the system for all $s>0$ as well as all $-2< s\leq0$ if we impose neutrality, in all space dimensions $d$.

\begin{lemma}[Stability of Jellium~\cite{LewLieSei-19b}]\label{lem:lower_bound_Jellium}
Let $d\geq1$ and $-2<s<d$. For a universal constant $c_1>0$ (depending only on $d$ and $s$), we have
\begin{equation}
E_s(N,\Omega,\rho_b) \geq -
 \begin{cases}
\dps c_1\rho_b^{\frac{s}d}N&\text{for $s>0$, $\rho_b\geq0$,}\\[0.2cm]
\dps \left(c_1+\frac{\log\rho_b}{2d}\right)N&\text{for $s=0$, $\rho_b=\frac{N}{|\Omega|}$,}\\
0&\text{for $-2< s<0$, $\rho_b=\frac{N}{|\Omega|}$.}
 \end{cases}
 \label{eq:lower_bound_Jellium}
\end{equation}
\end{lemma}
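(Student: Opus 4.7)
The unifying idea is to exploit that $V_s$ is positive-type for $-2<s<d$, i.e.\ $D_s(\nu,\nu):=\iint V_s(x-y)\,\rd\nu(x)\,\rd\nu(y)\ge 0$ for any signed measure $\nu$ with appropriate integrability (and $\nu(\R^d)=0$ when $s\le 0$ to handle the f.p.\ of $\widehat V_s$ at the origin; see~\eqref{eq:Fourier_V_s}). The plan is to apply this to the smeared measure
\begin{equation*}
\tilde\mu:=\sum_{j=1}^N\chi^r_{x_j}-\rho_b\1_\Omega,
\end{equation*}
where $\chi^r_{x_j}(y)=r^{-d}\chi\big((y-x_j)/r\big)$ and $\chi\ge 0$ is a fixed radial probability density supported in the unit ball. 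The regularization radius $r>0$ is a free parameter that I would optimize at the end.

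First I would expand $D_s(\tilde\mu,\tilde\mu)\ge 0$ and compare it term by term with~\eqref{eq:def_cE_s_Jellium}. This yields the identity
\begin{equation*}
\cE_s(x_1,\dots,x_N,\Omega,\rho_b)=\tfrac12 D_s(\tilde\mu,\tilde\mu)-\tfrac12\sum_{j=1}^N D_s(\chi^r_{x_j},\chi^r_{x_j})+\cR,
\end{equation*}
where the remainder collects the discrepancies between $V_s$ and its regularization,
\begin{equation*}
\cR:=\sum_{j<k}\Big[V_s(x_j-x_k)-D_s(\chi^r_{x_j},\chi^r_{x_k})\Big]+\rho_b\sum_{j=1}^N\int_\Omega\!\Big[(V_s\ast\chi^r_{x_j})(y)-V_s(y-x_j)\Big]\dy.
\end{equation*}
Neutrality $N=\rho_b|\Omega|$ is automatic in the cases $s\le 0$ where it is imposed, and is unnecessary for $0<s<d$ since $\widehat V_s$ is then locally integrable at the origin, so $D_s(\tilde\mu,\tilde\mu)\ge 0$ in every case of interest. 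Discarding this non-negative term produces $\cE_s\ge -\tfrac12\sum_j D_s(\chi^r_{x_j},\chi^r_{x_j})+\cR$.

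Second I would argue that $\cR\ge 0$, or at worst bounded below by a negligible quantity; this is the core technical step and the main obstacle. Each term in $\cR$ involves the radial, neutral, compactly supported signed measure $\delta_{x_j}-\chi^r_{x_j}$. For Coulomb ($s=d-2$, $d\ge 2$), Newton's theorem yields $V_s\ast(\delta_{x_j}-\chi^r_{x_j})\equiv 0$ outside $B_r(x_j)$, so each pair term vanishes when $|x_j-x_k|>2r$ and is otherwise non-negative by direct computation, and the background correction reduces to an explicit boundary term of order one per point. For general $-2<s<d$, Newton's theorem fails, and the sign information must instead be extracted from the positive-type inequality $D_s(\delta_{x_j}-\chi^r_{x_j},\delta_{x_k}-\chi^r_{x_k})\ge 0$ applied to disjoint pairs, together with the super- or subharmonicity of $V_s$ on $\R^d\setminus\{0\}$ to handle overlapping pairs. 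The background cross term is handled analogously using that $V_s\ast\1_\Omega$ is smooth in the interior of $\Omega$ and varies slowly on scale $r$.

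Third is the self-energy estimate, which is a simple scaling: $D_s(\chi^r_{x_j},\chi^r_{x_j})=r^{-s}D_s(\chi,\chi)$ for $s\ne 0$ and $D_0(\chi^r_{x_j},\chi^r_{x_j})=D_0(\chi,\chi)-\log r$. Choosing $r=\rho_b^{-1/d}$ produces
\begin{align*}
-\tfrac12\textstyle\sum_{j=1}^N D_s(\chi^r_{x_j},\chi^r_{x_j})&=-\tfrac{D_s(\chi,\chi)}{2}\,\rho_b^{s/d}N &&(s>0),\\
-\tfrac12\textstyle\sum_{j=1}^N D_0(\chi^r_{x_j},\chi^r_{x_j})&=-\tfrac{D_0(\chi,\chi)}{2}N-\tfrac{\log\rho_b}{2d}N &&(s=0),
\end{align*}
matching the announced bounds with $c_1=D_s(\chi,\chi)/2$. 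For $-2<s<0$ the argument collapses: since $V_s(0)=0$, no smearing is needed at all and neutrality gives directly
\begin{equation*}
\cE_s=\tfrac12 D_s\!\Big(\textstyle\sum_{j=1}^N\delta_{x_j}-\rho_b\1_\Omega\,,\,\sum_{j=1}^N\delta_{x_j}-\rho_b\1_\Omega\Big)\ge 0,
\end{equation*}
the f.p.\ at $k=0$ of $\widehat V_s$ being harmless because the total charge vanishes. Everything therefore hinges on Step 2: Newton's theorem is a luxury specific to $s=d-2$, and for a generic Riesz exponent $\chi$ must be chosen carefully so that the positive-type inequality can be iterated on neutral local perturbations and the cross term with the uniform background controlled uniformly in the geometry of~$\Omega$.
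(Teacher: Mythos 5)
Your overall strategy---exploit positive-definiteness of $V_s$ together with a charge/potential regularization, then optimize the regularization scale $r\sim\rho_b^{-1/d}$---is the right one, and your treatment of $-2<s<0$ (no smearing needed, $V_s(0)=0$, pure positive-type inequality) and the self-energy scaling in Step~3 are correct. The genuine gap is in Step~2. You need the pointwise inequality
\begin{equation*}
V_s(x_j-x_k)\ \geq\ D_s(\chi^r_{x_j},\chi^r_{x_k})\ =\ \big(V_s\ast\chi^r\ast\chi^r\big)(x_j-x_k),
\end{equation*}
but $V_s\ast\phi\leq V_s$ for a radial probability density $\phi$ requires $V_s$ to be \emph{superharmonic} away from the origin, i.e.\ $s\leq d-2$. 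Since $\Delta V_s=s(s+2-d)|x|^{-s-2}$, for $d-2<s<d$ the potential is subharmonic and $V_s\ast\phi-V_s$ changes sign, so the first sum in your $\cR$ has no definite sign. This already kills the argument in dimensions $d\in\{1,2\}$ for every $s\in(0,d)$, and in $d\geq 3$ for $d-2<s<d$. The positive-type property of $V_s-V_s\ast\phi$ (its Fourier transform is $\hat V_s\,(1-(2\pi)^{d/2}\hat\phi)\geq 0$) does \emph{not} rescue you: it controls the quadratic form $\sum_{j,k}W(x_j-x_k)$, but the diagonal $W(0)=V_s(0)-V_s\ast\phi(0)=+\infty$ cannot be subtracted. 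Your fallback, a ``positive-type inequality $D_s(\delta_{x_j}-\chi^r_{x_j},\delta_{x_k}-\chi^r_{x_k})\geq 0$ applied to disjoint pairs,'' conflates the bilinear and quadratic forms: positive-definiteness gives $D_s(\mu,\mu)\geq 0$, never a sign for $D_s(\mu,\nu)$ with $\mu\neq\nu$.

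The paper instead modifies the \emph{kernel} rather than the \emph{charge}: using the Fefferman--de la Llave representation~\eqref{eq:FeffDeLaLlave}, it sets $V_{s,\eps}(x)=c(s,\chi)\int_\eps^\infty\chi\ast\chi(x/r)\,r^{-1-s}\,\dr$. By construction this is positive-type (superposition of positive-type bumps with positive weights) \emph{and} satisfies $0\leq V_{s,\eps}\leq V_s$ pointwise (the discarded scales $r<\eps$ contribute a nonnegative function). The pointwise bound is exactly what replaces your sign requirement and holds for \emph{all} $0<s<d$, with no appeal to super/subharmonicity. Replacing $V_s$ by $V_{s,\eps}$ then lowers the pair interaction and the background self-energy, the positive-type inequality on the neutral measure $\nu=\sum_j\delta_{x_j}-\rho_b\1_\Omega$ gives $\frac12\iint V_{s,\eps}\,\rd\nu\,\rd\nu\geq 0$, and the only error is the cross term $\sum_j\int_\Omega(V_s-V_{s,\eps})(x_j-y)\,\dy\leq N\|V_s-V_{s,\eps}\|_{L^1}\sim N\eps^{d-s}$. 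Balancing $\eps^{-s}$ against $\rho_b\,\eps^{d-s}$ gives $\eps\sim\rho_b^{-1/d}$, as you anticipated. The single-scale smearing $V_s\ast\chi^r\ast\chi^r$ that you use does not have this pointwise property, which is precisely why Newton's theorem felt like an unavoidable crutch. To repair your proof you would want to replace the smeared Coulomb kernel by this multi-scale truncation, after which Step~2 becomes automatic.
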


This is the equivalent of Lemma~\ref{lem:simple_estim_E_s} in the long range case, except that for $s>0$ $\rho^{s/d}$ is replaced by $\rho_b^{s/d}$ in~\eqref{eq:lower_bound_Jellium}. Note that $E_s(N,\Omega,\rho_b)$ can now be negative for $0\leq s<d$ but becomes again positive for $s<0$.

The lemma is taken from Ref.~\onlinecite[Appendix]{LewLieSei-19b}, but a similar argument was given earlier in Ref.~\onlinecite[App.~B.2]{CotPet-19b}, only for $0<s<d$. The Coulomb case $s=d-2$ was treated much earlier by \textcite{LieNar-75} and \textcite{SarMer-76}, based on ideas of Onsager~\cite{Onsager-39}. The argument deeply relies on the fact, mentioned in the introduction, that $V_s$ has a positive Fourier transform. In the case $s\leq0$ its Fourier transform is a singular distribution but it is positive when restricted to neutral systems, hence the constraint that $\rho_b=N/|\Omega|$ in~\eqref{eq:lower_bound_Jellium}. We quickly outline the proof since it relies on an inequality which we will need later.

\begin{proof}
When $-2<s<0$, since $V_s(0)=0$ we can introduce $\nu:=\sum_{j=1}^N\delta_{x_j}-\rho_b\1_\Omega$ and write
\begin{equation}
\cE_s(x_1,...,x_N,\Omega,\rho_b)=\frac12\iint_{\Omega\times\Omega}V_s(x-y)\,\rd\nu(x)\,\rd\nu(y)=\frac{2^{d-s-1}\pi^{\frac{d}2}\Gamma\left(\frac{d-s}{2}\right)}{-\Gamma\left(\frac{s}{2}\right)}\int_{\R^d}\frac{|\widehat{\nu}(k)|^2}{|k|^{d-s}}\,\rd  k. \label{eq:energy_Fourier_Jellium}
\end{equation}
The last integral converges when $\nu(\R^d)=N-\rho_b|\Omega|=0$ and $s>-2$, since then
$$\widehat{\nu}(k)=-i (2\pi)^{-\frac{d}2}k\cdot \left(\sum_{j=1}^Nx_j-\rho_b\int_\Omega y\,\rd y\right) +O(|k|^2)_{k\to0}.$$
The right side of~\eqref{eq:energy_Fourier_Jellium} is non-negative since $\Gamma(s/2)<0$ for $-2<s<0$.
The case $s\geq0$ is more complicated, due to the singularity at the origin. The idea is to replace $V_s$ in the Jellium energy $\cE_s$ by a truncated potential $V_{s,\eps}$, which is bounded at the origin and still has a positive Fourier transform. Adding the missing diagonal term $j=k$ in the interaction between the points, one obtains that the Jellium energy with $V_{s,\eps}$ equals
\begin{equation}
 \frac{1}2\iint_{\Omega\times\Omega}V_{s,\eps}(x-y)\,\rd\nu(x)\,\rd\nu(y)-\frac{N}{2}V_{s,\eps}(0)\geq -\frac{N}{2}V_{s,\eps}(0),
 \label{eq:lower_bound_Jellium_proof}
\end{equation}
with the same $\nu$ as above. To define the regularized potential $V_{s,\eps}$ we use that for any radial function $\chi\geq0$ with $\int_{\R^d}\chi=1$, we have by scaling
\begin{equation}
\frac{1}{|x|^s}=c(s,\chi)\int_0^\ii \chi\ast\chi (x/r)\, r^{-1-s}{\rm d}r
\label{eq:FeffDeLaLlave}
\end{equation}
where $c(s,\chi)^{-1}=|\bS^{d-1}|^{-1}\iint_{\R^d\times\R^d}\chi(x)\chi(y)|x-y|^{s-d} \dx\,\dy$. This suggests to introduce the truncated potential
$$V_{s,\eps}(x):=c(s,\chi) \int_{\eps}^\ii \chi\ast\chi (x/r) \, r^{-1-s}{\rm d}r$$
which satisfies
$V_{s,\eps}(0)=\frac{c(s,\chi)\eps^{-s}}{s}\int_{\R^d}\chi^2$ and $0\leq V_{s,\eps}\leq V_s$ a.e. Replacing $V_s$ by $V_{s,\eps}$ decreases the first and third term in the Jellium energy~\eqref{eq:def_E_s_Jellium}, whereas the error in the interaction energy can be estimated by
\begin{equation}
 \sum_{j=1}^N\int_\Omega (V_s-V_{s,\eps})(x_j-y)\,\dy\leq N\int_{\R^d}\big(V_s(y)-V_{s,\eps}(y)\big)\,\dy.
 \label{eq:estim_interaction_bkgd_truncation}
\end{equation}
Since $\int_{\R^d}(V_s-V_{s,\eps})<\ii$, this is of order $N$, as claimed. The case $s=0$ is obtained by looking at the limit $s\to0^+$~\cite{LewLieSei-19}.
\end{proof}

\begin{remark}[The log gas]
In the limit $s\to0^+$, we have $V_s(x)\to1$ and thus
$$\lim_{s\to0^+}E_s(N,\Omega,\rho_b)=\frac{(N-\rho_b|\Omega|)^2}{2}-\frac{N}{2}.$$
This is of order $N$ only in the neutral case $N=\rho_b|\Omega|$. Understanding minimizers requires expanding to the next order in $s$, leading to the definition $V_0(x):=-\log|x|$ and the expansion
$$E_s(N,\Omega,\rho_b)=-\frac{N}{2}+s\,E_0(N,\Omega,\rho_b)+o(s)_{s\to0^+},\qquad \text{for $N=\rho_b|\Omega|$.}$$
We hope it does not create any confusion that $E_0$ is the derivative of $E_s$ at $s=0^+$, and not the value of the function.
\end{remark}

\begin{remark}[The threshold $s=-2$]
The case $s=-2$ is similar to $s=0$. A calculation shows that
\begin{equation*}
\lim_{s\to-2}\cE_{s}(x_1,...,x_N,\Omega,\rho_b)=\left(\rho_b|\Omega|-N\right)\left(\sum_{j=1}^N|x_j|^2-\rho_b\int_\Omega|y|^2\,\dy\right)
+\left|\sum_{j=1}^Nx_j-\rho_b\int_{\Omega}y\,\dy\right|^2.
\label{eq:energy_s_-2}
\end{equation*}
This is non-negative when $\rho_b=N/|\Omega|$. The last term vanishes when  $\sum_{j=1}^Nx_j=\rho_b\int_\Omega y\,\dy$, and this proves that $E_{s}(N,\Omega,\rho_b)\to0$ when $s\to(-2)^+$ and $N=\rho_b|\Omega|$.
The minimizers for $E_s$ converge to the configurations satisfying $\sum_{j=1}^Nx_j=\rho_b\int_\Omega y\,\dy$ which further minimize the next order term. It thus seems  appropriate to define $V_{-2}(x):=-|x|^2\log|x|$ and work with the two constraints
$$N=\rho_b|\Omega|,\qquad \sum_{j=1}^Nx_j=\rho_b\int_\Omega y\,\dy.$$
With these charger and dipole conditions, we will have $\cE_s\geq0$ for all $-4<s<-2$ if we again flip the sign of the potential and take $V_s(x)=|x|^{|s|}$. Adding more and more constraints, we can in fact go on like this and consider arbitrarily negative values of $s$. At each negative even integer we should consider $V_{-2k}(x)=\pm |x|^{2k}\log|x|$. In order not to complicate the discussion, we will restrict here our attention to $s>-2$, which is already a much larger region than what was considered in most of the literature so far.
\end{remark}

\begin{remark}[Explicit values]
After choosing an explicit $\chi$, it is possible to provide a concrete value for $c_1$. In the Coulomb case $s=d-2$, one can in fact obtain surprisingly good bounds following a different method due to Lieb and Narnhofer~\cite{LieNar-75}. The idea is to replace $V_s$ by $V_s\ast\nu$ for a radial function $\nu\geq0$ with $\int_{\R^d}\nu=1$. Newton's theorem implies $V_s\ast\nu\leq V_s$, and the interaction with the background is estimated as in~\eqref{eq:estim_interaction_bkgd_truncation}. After optimizing over $\nu$ (the optimum is the uniform measure of a certain ball), this provides the constant~\cite{LieNar-75,SarMer-76}
$$c_1=\begin{cases}
\dps -\frac1{12}&\text{for $s=-1$ in $d=1$,}\\[0.3cm]
\dps \frac38+\frac14\log\pi&\text{for $s=0$ in $d=2$,}\\[0.3cm]
\dps \frac{d^2}{2(d+2)}\left(\frac{2\pi^{\frac{d}2}}{d\Gamma(d/2)}\right)^{1-\frac2d}&\text{for $s=d-2$ in $d\geq3$.}
      \end{cases}
$$
In $d=3$ one obtains $c_1=(3/5)(9\pi/2)^{1/3}\simeq 1.4508$ which is surprisingly close to the conjectured best value $-\zeta_{\rm BCC}(1)\simeq 1.4442$ (see Section~\ref{sec:crystal_conjecture}). In dimension $d=2$, we have $c_1\simeq 0.6612$, which is also remarkably close to the expected best constant $-\zeta'_{\rm trg}(0)\simeq  0.6606$ (see Ref.~\onlinecite[Prop.~B.1]{Lauritsen-21}). In dimension $d=1$, the constant is optimal since we will see in Theorem~\ref{thm:crystallization} that $\lim_{N\to\ii}E_{-1}(N,[0,N],1)/N=-\zeta(-1)=1/12$ where $\zeta$ is the Riemann Zeta function.

The constant $c_1$ obtained by this method has a simple physical interpretation. For $N=|\Omega|$ and $\rho_b=1$, the opposite of $c_1$ is exactly the Jellium energy of a system composed of points sufficiently far away and a background $\Omega$ equal to the union of the balls of volume $1$ centered at the points. The lower bound $-c_1$ implies that this is the minimizer if we would optimize both over the positions of the points and the shape of the background $\Omega$. In chemistry, this is called the ``point charge plus continuum approximation'' of Jellium~\cite{SeiPer-94}.
\end{remark}

When $s\leq 0$, non-neutral systems have a very large \emph{negative} energy, hence the necessity of the neutrality constraint. For instance, taking $N$ points uniformly distributed over $\Omega$, we obtain the simple upper bound
$$E_s(N,\Omega,\rho_b)\leq \left(\frac12\left(\frac{N}{|\Omega|}-\rho_b\right)^2-\frac{N}{2|\Omega|^2}\right)\iint_{\Omega\times\Omega}V_s(x-y)\dx\,\dy.$$
If $N/|\Omega|\to\rho\neq \rho_b$, the right side behaves like ${\rm sgn}(s)|\Omega|^{2-s/d}$ for $s\neq0$ and $-|\Omega|^2\log|\Omega|$ for $s=0$. The constraint that $\rho_b=N/|\Omega|$ is therefore necessary to have a lower bound of order $N$ when $s\leq0$. This will force us to work canonically. There will be no well defined grand-canonical model for $s\leq0$.

When $s>0$ non-neutral systems will also behave badly, but they have a large \emph{positive} energy. It is therefore expected that neutrality will automatically arise and should not be a necessary assumption. The grand-canonical problem will thus be perfectly well defined when $s>0$, but it will essentially be trivial. Changing the value of the chemical potential $\mu$ will not affect the final density which will always be $\rho=\rho_b$. This follows from the following corollary of the proof of Lemma~\ref{lem:lower_bound_Jellium}.

\begin{lemma}[Simple estimate on the total charge]\label{lem:neutral_Jellium}
Assume that $0<s<d$. We have for all $N\geq1$
\begin{equation}
E_s(N,\Omega,\rho_b) \geq \frac{c_1'(N-\rho_b|\Omega|)^2}{|\Omega|^{\frac{s}{d}}}-c_1\rho_b^{\frac{s}{d}}N.
\label{eq:estim_charge}
\end{equation}
for a positive constant $c_1'$ depending only on the shape of $\Omega$, that is, on the set $\omega=|\Omega|^{-\frac1d}\Omega$.
\end{lemma}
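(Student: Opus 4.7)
My plan is to follow the template of the proof of Lemma~\ref{lem:lower_bound_Jellium}, keeping the positive-type quadratic form in $\nu := \sum_{j=1}^N \delta_{x_j} - \rho_b \1_\Omega$ that was discarded there, and extracting from it a lower bound of order $Q^2/|\Omega|^{s/d}$ where $Q := \nu(\R^d) = N-\rho_b|\Omega|$ is the net charge.

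Using the Fefferman-de~la~Llave truncation~\eqref{eq:FeffDeLaLlave} exactly as in Lemma~\ref{lem:lower_bound_Jellium}, I would replace $V_s$ by the bounded positive-type kernel $V_{s,\eps}$ and keep only the manifestly non-negative pair and background self-interaction errors to obtain
\[
\cE_s(x_1,\dots,x_N,\Omega,\rho_b) \geq \frac12 \iint V_{s,\eps}(x-y)\,\rd\nu(x)\,\rd\nu(y) - \frac{N V_{s,\eps}(0)}{2} - \rho_b N \int_{\R^d}(V_s-V_{s,\eps})(y)\,\dy.
\]
The two subtracted terms are precisely those estimated in the proof of Lemma~\ref{lem:lower_bound_Jellium} and are bounded from below by $-c_1 \rho_b^{s/d} N$ upon choosing $\eps \simeq \rho_b^{-1/d}$.

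The new step is to show that the retained positive-type term controls $Q^2/|\Omega|^{s/d}$. Writing $\Omega = L\omega$ with $L := |\Omega|^{1/d}$ and fixing a smooth cutoff $\phi_\omega \in C_c^\infty(\R^d)$ with $\phi_\omega \equiv 1$ on $\overline\omega$, the rescaled function $\phi(x) := \phi_\omega(x/L)$ satisfies $\int \phi\,\rd\nu = Q$ because $\phi \equiv 1$ on $\supp\nu \subset \overline\Omega$. Plancherel and Cauchy-Schwarz in $L^2(\widehat{V_{s,\eps}}\,\dk)$ then yield
\[
|Q|^2 \leq \frac{1}{(2\pi)^d}\left(\int_{\R^d} \frac{|\widehat\phi(k)|^2}{\widehat{V_{s,\eps}}(k)}\,\dk\right) \iint_{\R^d \times \R^d} V_{s,\eps}(x-y)\,\rd\nu(x)\,\rd\nu(y).
\]
A short scaling argument based on $V_{s,\eps}(x) = \eps^{-s} V_{s,1}(x/\eps)$ and the singularity $\widehat{V_{s,1}}(k) \sim c_s |k|^{s-d}$ at $k=0$ (inherited from the Riesz kernel) shows that the test-function integral is bounded by $C(\omega)\, |\Omega|^{s/d}$, uniformly as long as $\eps/L$ stays bounded, the integrability of $|q|^{d-s}|\widehat{\phi_\omega}(q)|^2$ following from Schwartz decay of $\widehat{\phi_\omega}$.

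Combining these ingredients with the optimal choice $\eps \simeq \rho_b^{-1/d}$ yields the claimed inequality with $c_1' = (2\pi)^d/(2C(\omega))$. The main technical obstacle I expect is that the scaling step requires $\eps \lesssim L$, which amounts to $\rho_b|\Omega|$ being bounded below by an absolute constant. In the complementary dilute regime $\rho_b|\Omega| \ll 1$, one should either revise the choice of $\eps$ and absorb the worsened Cauchy-Schwarz constant, or invoke a direct comparison argument; in that regime the term $c_1' Q^2/|\Omega|^{s/d}$ is in any case controllable by the stability bound $-c_1 \rho_b^{s/d} N$ already furnished by Lemma~\ref{lem:lower_bound_Jellium}.
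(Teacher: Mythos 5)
Your overall strategy is the right one and matches the paper's: retain the positive quadratic form $\iint V_{s,\eps}\,\rd\nu\,\rd\nu$ from the proof of Lemma~\ref{lem:lower_bound_Jellium}, absorb the diagonal and truncation errors into $-c_1\rho_b^{s/d}N$ with $\eps\sim\rho_b^{-1/d}$, and apply a Cauchy--Schwarz inequality to extract $Q^2=(N-\rho_b|\Omega|)^2$ from that quadratic form. Where you diverge is in \emph{which} Cauchy--Schwarz you apply, and that is where there is a genuine gap.

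Your Fourier-side Cauchy--Schwarz requires the factor $\int_{\R^d}|\widehat\phi(k)|^2/\widehat{V_{s,\eps}}(k)\,\dk$ to be finite and $\lesssim|\Omega|^{s/d}$. This would follow if $\widehat{V_{s,\eps}}(k)\gtrsim|k|^{s-d}$ held uniformly, but that comparison only holds for $|k|\lesssim1/\eps$: at high frequency $\widehat{V_{s,\eps}}(k)=c\int_\eps^\infty|\widehat\chi(rk)|^2 r^{d-1-s}\,\rd r$ decays \emph{faster} than $|k|^{s-d}$ (superpolynomially, and in fact like $e^{-a\eps^2|k|^2}$ if $\chi$ is Gaussian, or with possible zeros of $\widehat\chi$ if $\chi$ is a compactly supported bump), while $|\widehat\phi(k)|^2=L^{2d}|\widehat{\phi_\omega}(Lk)|^2$ only has Schwartz (super-polynomial, not exponential) decay because $\phi_\omega$ is compactly supported. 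For the Gaussian choice of $\chi$ the ratio $|\widehat\phi(k)|^2/\widehat{V_{s,\eps}}(k)$ therefore diverges as $|k|\to\infty$ and the Cauchy--Schwarz constant is $+\infty$; for a compactly supported bump there is no clean uniform lower bound on $\widehat{V_{s,\eps}}$ because $\widehat\chi$ oscillates and vanishes. In short, the ``small-$k$ singularity'' you invoke is the right source of the factor $|\Omega|^{s/d}$, but the high-frequency tail of $\widehat{V_{s,\eps}}^{-1}$ is not controlled by the Schwartz decay of $\widehat{\phi_\omega}$, and this breaks the estimate.

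The paper sidesteps this by applying Cauchy--Schwarz \emph{in real space, scale by scale} in the Fefferman--de~la~Llave decomposition: since $\iint V_{s,\eps}\,\rd\nu\,\rd\nu=c(s,\chi)\int_\eps^\infty\|\chi_r\ast\nu\|_{L^2}^2\,r^{d-1-s}\,\rd r$ and $\int\chi_r\ast\nu=Q$ for each $r$, one has $\|\chi_r\ast\nu\|_{L^2}^2\geq Q^2/|\Omega_r|$ with $\Omega_r=\Omega+B_r$ the support of $\chi_r\ast\nu$. Restricting the $r$-integral to $r\geq|\Omega|^{1/d}$ makes $|\Omega_r|\lesssim r^d$, so the $r$-integral produces exactly $Q^2\int_{|\Omega|^{1/d}}^\infty r^{-1-s}\,\rd r\sim Q^2/|\Omega|^{s/d}$, with no need for any lower bound on $\widehat{V_{s,\eps}}$ away from zero frequency. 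If you want to salvage the Fourier route, the natural fix is to first discard the scales $r<|\Omega|^{1/d}$ (i.e.\ replace $V_{s,\eps}$ by $V_{s,R}$ with $R=|\Omega|^{1/d}\geq\eps$, which only decreases the quadratic form) so that the relevant frequency band is $|k|\lesssim1/R\sim1/L$, matched to the bandwidth of $\phi$; but at that point the argument has essentially become the paper's real-space one. Finally, your fallback for the dilute regime $\rho_b|\Omega|<1$ (``the term $c_1'Q^2/|\Omega|^{s/d}$ is controllable by the stability bound'') is not correct as stated, since there $Q^2/|\Omega|^{s/d}\sim N^2/|\Omega|^{s/d}$ can dominate $\rho_b^{s/d}N$; that regime is rather handled by the mean-field lower bound $\cE_s\gtrsim N^2/|\Omega|^{s/d}$ coming from the capacity estimate.
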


There are similar estimates on the higher moments. For instance the same proof gives for the dipole
$$\cE_s(x_1,...,x_N,\Omega,\rho_b)\geq \frac{c_1''}{|\Omega|^{\frac{s+2}{d}}}\left|\sum_{j=1}^Nx_j-\rho_b\int_\Omega y\,\dy\right|^2-c_1\rho_b^{\frac{s}{d}}N.$$
These estimates become better when $s$ decreases, which is a simple manifestation of the long range of $V_s$.

When $E_s(N,\Omega,\rho_b)$ is of order $|\Omega|$ we conclude that
$|{N}/{|\Omega|}-\rho_b|\lesssim |\Omega|^{-\frac{d-s}{2d}}\to0$
which is the neutrality mentioned before. In particular, in the Coulomb case $s=d-2$ we find that $N-\rho_b|\Omega|$ must be at most a surface term $|\Omega|^{\frac{d-1}{d}}$. This is optimal, since it is well known that a charge on the surface can generate a constant potential inside, which thus shifts the energy by a constant times $N$. In fact, for $\Omega=(N/\rho_b)^{1/d}\omega$, the difference $(E_s(N+Q,\Omega,\rho_b)-E_s(N,\Omega,\rho_b))/N$ converges to a positive limit depending only on the $s$-capacity of $\omega$, whenever $Q\sim q N^{\frac{s+d}{2d}}$ with $q\neq0$. This is proved in the Coulomb case $s=d-2$ in Ref.~\onlinecite[Sec.~3.5]{LieNar-75} for balls (see also Ref.~\onlinecite[Sec.~VI]{LieLeb-72} and Ref.~\onlinecite{GraSch-95b}). We provide the proof for $d-2<s<d$ in Section~\ref{sec:net_charge} below.

\begin{proof}
By scaling we can assume that $\Omega=N^{\frac1d}\omega$ with $|\omega|=1$, so that $\rho=1$. From the estimate in~\eqref{eq:lower_bound_Jellium_proof} we know that
\begin{align}
\cE_s(x_1,...,x_N,\Omega,\rho_b)&\geq  \frac{1}2\iint_{\Omega\times\Omega}V_{s,\eps}(x-y)\,\rd\nu(x)\,\rd\nu(y)-c_1N\rho_b^{\frac{s}{d}} \nn\\
&=\frac{c(s,\chi)}2 \int_\eps^\ii \left(\int_{\R^d}(\chi_r\ast\nu)^2\right)r^{d-1-s}{\rm d}r-c_1N\rho_b^{\frac{s}{d}},\label{eq:estim_chi}
\end{align}
with $\nu=\sum_{j=1}^N\delta_{x_j}-\rho_b\1_\Omega$ and $\chi_r(x)=r^{-d}\chi(x/r)$. We assume for simplicity that $\chi$ has its support in the unit ball $B_1$.
By Cauchy-Schwarz we have
$$N^2\big(1-\rho_b\big)^2=\left(\int_{\R^d}\chi_r\ast \nu\right)^2\leq |\Omega_r|\int_{\R^d}(\chi_r\ast \nu)^2$$
where $\Omega_r=\Omega+B_{r}=N^{1/d}(\omega+B_{N^{-1/d}r})$ contains the support of $\chi_r\ast \nu$. Using that $|\omega+B_{R}|\leq C(1+R^d)$ for any $R>0$, where $C$ depends on $\omega$, we obtain $|\Omega_r|\leq C(N+r^d)$ and thus
\begin{align*}
\int_\eps^\ii\!\! \int_{\R^d}(\chi_r\ast\nu)^2\,r^{d-1-s}{\rm d}r&\geq \int_{N^{\frac1d}}^\ii \int_{\R^d}(\chi_r\ast\nu)^2\,r^{d-1-s}{\rm d}r\\
&\geq \frac{N^2(1-\rho_b)^2}{C}\int_{N^{\frac1d}}^\ii \frac{r^{d-1-s}}{N+r^d}{\rm d}r=N^{2-\frac{s}{d}}\frac{(1-\rho_b)^2}{C}\int_{1}^\ii \frac{r^{d-1-s}}{1+r^d}{\rm d}r.
\end{align*}
\end{proof}

It remains to discuss upper bounds on $E_s(N,\Omega,\rho_b)$ in the neutral case and show that it is of order $N$, as expected. The situation is more complicated than in the short range case since we really have to place the points everywhere at the correct density, in order to appropriately screen the background. In fact, the proof gets more involved when $s$ decreases and more stringent conditions are needed on the shape of the domain $\Omega$ for $s\leq0$. The main tool for proving upper bounds is the following simple inequality (called the ``method of cells'' in Ref.~\onlinecite{SarMer-76}).

\begin{lemma}[General upper bound]\label{lem:general_upper_bound_Jellium}
Assume that $-2< s<d$. Let $\Omega\subset\R^d$, $N\in\N$ and $\rho_b>0$ be such that $N=\rho_b|\Omega|$. Suppose that $\Omega$ is the disjoint union of some measurable sets $\Omega_j$ with $N_j:=\rho_b|\Omega_j|\in\N$. Then we have
\begin{equation}
 E_s(N,\Omega,\rho_b)\leq -\sum_{j}\frac{\rho_b^2}{2N_j}\iint_{\Omega_j\times\Omega_j}V_s(x-y)\,\dx\,\dy.
 \label{eq:general_upper_bound_Jellium}
\end{equation}
\end{lemma}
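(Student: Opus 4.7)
The natural strategy is to use a random trial configuration. For each cell $\Omega_j$ I would place $N_j$ points independently, each distributed uniformly on $\Omega_j$, with the different cells also independent. Since $\sum_j N_j=\rho_b\sum_j|\Omega_j|=\rho_b|\Omega|=N$, this produces a random $N$-tuple $(x_1,\dots,x_N)\in\Omega^N$. Because every cell has a continuous distribution, the points are almost surely distinct, so the energy $\cE_s(x_1,\dots,x_N,\Omega,\rho_b)$ is almost surely finite (each $\iint_{\Omega_j\times\Omega_k}V_s$ converges since $s<d$). The minimum $E_s(N,\Omega,\rho_b)$ is then bounded above by the expectation of this random energy, and the proof reduces to a direct computation.

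I would evaluate the three pieces of $\cE_s$ in~\eqref{eq:def_cE_s_Jellium} separately. For the pair-interaction term I split according to whether the two indices live in the same cell or in different cells: using independence and uniformity, pairs inside a common $\Omega_j$ contribute $\binom{N_j}{2}|\Omega_j|^{-2}\iint_{\Omega_j\times\Omega_j}V_s$, while pairs in distinct cells $\Omega_j,\Omega_k$ contribute $N_jN_k|\Omega_j|^{-1}|\Omega_k|^{-1}\iint_{\Omega_j\times\Omega_k}V_s$. Substituting $N_j=\rho_b|\Omega_j|$, the intra-cell coefficient becomes $\frac{\rho_b^2}{2}-\frac{\rho_b}{2|\Omega_j|}$ and the inter-cell coefficient becomes $\rho_b^2$, so the pair-interaction expectation equals
\begin{equation*}
\frac{\rho_b^2}{2}\iint_{\Omega\times\Omega}V_s(x-y)\,\dx\,\dy-\sum_j\frac{\rho_b}{2|\Omega_j|}\iint_{\Omega_j\times\Omega_j}V_s(x-y)\,\dx\,\dy.
\end{equation*}
The background--point term contributes in expectation $-\rho_b\sum_j N_j|\Omega_j|^{-1}\int_{\Omega_j}\!\!\int_\Omega V_s(x-y)\,\dy\,\dx=-\rho_b^2\iint_{\Omega\times\Omega}V_s$, and the background self-energy adds $+\tfrac{\rho_b^2}{2}\iint_{\Omega\times\Omega}V_s$.

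Adding the three contributions, the $\iint_{\Omega\times\Omega}V_s$ terms cancel exactly ($\tfrac12-1+\tfrac12=0$), and what remains is
\begin{equation*}
\bE\bigl[\cE_s(x_1,\dots,x_N,\Omega,\rho_b)\bigr]=-\sum_j\frac{\rho_b}{2|\Omega_j|}\iint_{\Omega_j\times\Omega_j}V_s(x-y)\,\dx\,\dy=-\sum_j\frac{\rho_b^2}{2N_j}\iint_{\Omega_j\times\Omega_j}V_s(x-y)\,\dx\,\dy,
\end{equation*}
where in the last step I used $\rho_b/|\Omega_j|=\rho_b^2/N_j$. Since $E_s(N,\Omega,\rho_b)\le \bE[\cE_s]$, this yields~\eqref{eq:general_upper_bound_Jellium}.

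There is no real obstacle here: the argument is purely a first-moment computation, and the only point that requires a little care is to track the replacement of $N_j^2$ by $N_j(N_j-1)$ in the intra-cell pairs, which is exactly what produces the $-\rho_b^2/(2N_j)$ correction in~\eqref{eq:general_upper_bound_Jellium}. The hypothesis $N_j\in\N$ is used precisely to allow the deterministic count of points inside each cell; the hypothesis $s>-2$ is not needed for the upper bound itself, only to ensure that the right-hand side of~\eqref{eq:general_upper_bound_Jellium} makes sense when the cells $\Omega_j$ are unbounded.
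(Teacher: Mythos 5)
Your proof is correct and follows exactly the approach of the paper: the paper also takes the trial state consisting of $N_j$ independent uniform points in each cell $\Omega_j$ (written there as the symmetrized product measure $\mathrm{Sym}\big(\otimes_j(\rho_b\1_{\Omega_j}/N_j)^{\otimes N_j}\big)$) and bounds $E_s$ by the expected energy. Your write-up simply carries out the first-moment computation that the paper leaves implicit, and your tracking of the $\binom{N_j}{2}$ versus $N_j^2/2$ discrepancy is exactly the source of the surviving term.
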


\begin{proof}
Place $N_j$ points uniformly distributed in each $\Omega_j$, that is, use the trial measure $\bP=\text{Sym}\big(\otimes_{j=1}^N(\rho_b\1_{\Omega_j}/N_j)^{\otimes N_j})$ where
\begin{equation}
\text{Sym}(\bQ)(x_1,...,x_N):=\sum_{\sigma\in\gS_N}\bQ(x_{\sigma(1)},...,x_{\sigma(N)})
\label{eq:symmetrization}
\end{equation}
denotes the symmetrization of a probability measure $\bQ$.
\end{proof}

For $N_j=1$, the inequality tells us that any domain $\Omega$ which can be partitioned into $N$ sets $\Omega_j$, all of the same volume $1/\rho_b$ and with a uniformly bounded diameter must have an energy of order $N$. This works for all $s\in(-2,d)$ and includes all sufficiently smooth connected domains, as shown recently by~\textcite{GigLeo-17}. We next explain how to use the lemma with fewer assumptions on $\Omega$, at the expense of restricting the considered values of $s$ to $[-1,d)$.

\begin{lemma}[Upper bound for $s\geq-1$]\label{lem:upper_bound_Jellium}
Let $N\in\N,\rho_b>0$ and $\Omega$ be a bounded domain so that $N=\rho_b|\Omega|$ and $|\partial\Omega|=0$. For $N\geq N_0$ large enough, we have
$$E_s(N,\Omega,\rho_b)\leq c_2 N\begin{cases}
-\rho_b^{\frac{s}{d}}&\text{for $s>0$,}\\
1-\frac{\log\rho_b}{2d}&\text{for $s=0$,}\\
\rho_b^{\frac{s}{d}}&\text{for $-1\leq s<0$.}
\end{cases}
$$
Here $c_2$ only depends on $s,d$, as well as on $\omega=|\Omega|^{-1/d}\Omega$ when $s=-1$, whereas $N_0$ only depends on $s,d,\omega$. If $s\in[-1,0]$ we need to further assume that $\omega$ has a regular boundary in the sense that
\begin{equation}
 |\partial \omega+B_r|\leq Cr,\qquad\forall r\leq r_0.
 \label{eq:hyp_domain}
\end{equation}
\end{lemma}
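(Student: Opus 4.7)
The approach is to apply the cell-method upper bound of the preceding Lemma~\ref{lem:general_upper_bound_Jellium} to an explicit partition of $\Omega$ into unit-density cells, after a preliminary rescaling to $\rho_b=1$.

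\textbf{Reduction.} Taking $\lambda=\rho_b^{1/d}$ in~\eqref{eq:scaling_N_T0_Jellium} and using $N=\rho_b|\Omega|$,
\begin{equation*}
E_s(N,\Omega,\rho_b)=\rho_b^{s/d}\,E_s(N,\Omega',1)\;-\;\frac{N\log\rho_b}{2d}\,\delta_0(s),\qquad \Omega':=\rho_b^{1/d}\Omega=N^{1/d}\omega,
\end{equation*}
so it suffices to prove $E_s(N,\Omega',1)\leq cN$ with $c=c(s,d,\omega)$; the $s=0$ logarithmic correction then appears automatically from the scaling.

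\textbf{Partition and estimate.} With the lattice cubes $C_z=z+[-\tfrac12,\tfrac12)^d$, let $\mathcal{I}=\{z\in\mathbb{Z}^d:C_z\subset\Omega'\}$ and $B=\Omega'\setminus\bigsqcup_{z\in\mathcal{I}}C_z$. Since $B$ lies within $\sqrt{d}/2$ of $\partial\Omega'=N^{1/d}\partial\omega$, one has $|B|=o(N)$ from $|\partial\omega|=0$ alone, and under the Lipschitz-type hypothesis~\eqref{eq:hyp_domain} one upgrades this to $|B|\leq CN^{(d-1)/d}$. Because $|B|=N-|\mathcal{I}|\in\mathbb{N}$, the region $B$ can be further subdivided chart-by-chart along $\partial\omega$ into exactly $|B|$ measurable cells of unit volume and diameters bounded by some $D(\omega,d)$. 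The Lemma then yields
\begin{equation*}
E_s(N,\Omega',1)\leq-\tfrac12\sum_j\iint_{\Omega_j\times\Omega_j}V_s(x-y)\,\dx\,\dy\leq -\tfrac{I_s}{2}\,N+O\bigl(N^{(d-1)/d}\bigr),
\end{equation*}
where $I_s:=\iint_{[-1/2,1/2]^{2d}}V_s(x-y)\,\dx\,\dy$ is finite for every $s\in(-2,d)$ and each boundary cell contributes at most a constant depending only on $s$, $d$ and $D$. For $N\geq N_0$ this is absorbed into $cN$ with the appropriate sign ($c<0$ for $s>0$ since $I_s>0$, and $c>0$ for $s\leq 0$); substituting back through the Reduction step produces the three cases of the statement.

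\textbf{Main obstacle.} The delicate point is the boundary construction in the Partition step: producing cells of \emph{exactly} unit volume and uniformly bounded diameter. This is where the Lipschitz-type hypothesis~\eqref{eq:hyp_domain} enters for $s\in[-1,0]$, because a negative $V_s$ forces honest control of the boundary contributions (for $s>0$ no such care is needed: one may lump all of $B$ into a single cell with $N_{\rm bd}=|B|$ since the resulting contribution $-\iint V_s/(2N_{\rm bd})$ is non-positive and the grid cubes already suffice). The restriction $s\geq -1$ itself seems intrinsic to this elementary cell method: more negative $s$ would require cancelling the dipole moment of each cell against its background, in the spirit of the Remark on $s=-2$.
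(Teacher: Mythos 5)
Your overall plan (scaling to $\rho_b=1$, tiling $\Omega'=N^{1/d}\omega$ by unit cubes, applying Lemma~\ref{lem:general_upper_bound_Jellium}, and treating the $s>0$ case by discarding the nonpositive boundary contribution) matches the paper's, and your error estimate $|B|\lesssim N^{(d-1)/d}$ under~\eqref{eq:hyp_domain} is also the one the paper obtains. The genuine gap is in your handling of the boundary shell for $s\leq 0$: you assert that $B$ ``can be further subdivided chart-by-chart along $\partial\omega$ into exactly $|B|$ measurable cells of unit volume and diameters bounded by some $D(\omega,d)$,'' but this is an unproved and in fact much stronger claim than the hypothesis supports. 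The assumption~\eqref{eq:hyp_domain} is only an upper Minkowski content bound on $\partial\omega$; it does not give a chart or collar structure, and a partition of a thin boundary shell into unit-volume cells of uniformly bounded diameter is precisely the kind of result that needs stronger regularity (the paper cites Gigante--Leopardi for ``sufficiently smooth connected domains'' and explicitly describes the present lemma as doing the opposite: ``use the lemma with fewer assumptions on $\Omega$, at the expense of restricting the considered values of $s$ to $[-1,d)$''). The paper's actual proof avoids your construction by taking $\Omega_{K+1}=B$ as a \emph{single} cell with $N_{K+1}=|B|$ points; the integral $\iint_{B\times B}V_s$ is then controlled using the crude bound $|V_s|\lesssim \operatorname{diam}(\Omega')^{|s|}\sim N^{|s|/d}$ over $B$, giving an error $O(N^{(|s|+d-1)/d})$ (with a $\log N$ for $s=0$), which is $o(N)$ exactly for $|s|<1$ and $O(N)$ with an $\omega$-dependent constant at $s=-1$. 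Note finally an internal inconsistency in your writeup: if your bounded-diameter subdivision of $B$ were valid, the method would in fact extend to all $s\in(-2,d)$ with $c_2$ independent of $\omega$, which contradicts your own closing remark that $s\geq -1$ is ``intrinsic to this elementary cell method''; the restriction and the $\omega$-dependence at $s=-1$ are artifacts of \emph{not} subdividing $B$, which is the route the paper actually takes.
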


\begin{proof}
By scaling we can assume that $\rho_b=1$ and $\Omega=N^{1/d}\omega$.  The idea is the same as in the proof of Lemma~\ref{lem:simple_estim_E_s}, except that we cannot use lattices with densities different from 1, which would badly screen the uniform background. We thus consider the tiling of $\R^d$ made of cubes of side length one and take for $\Omega_j$ all the cubes $C_j$ contained in $\Omega$. Let $K$ be the number of those cubes, which satisfies
$N\big(1-|\partial\omega+N^{-\frac1d}C|\big)\leq K\leq N\big(1+|\partial\omega+N^{-\frac1d}C|\big)$
by the arguments in Lemma~\ref{lem:simple_estim_E_s}. We call $\Omega_{K+1}=\Omega\setminus\bigcup_{C_j\subset\Omega} C_j$ the missing part, which has the volume
$|\Omega_{K+1}|=N-K\leq N|\partial\omega+N^{-\frac1d}C|=o(N)$.
Applying Lemma~\ref{lem:general_upper_bound_Jellium} gives
\begin{equation*}
E_s(N,\Omega,\rho_b)\leq -\frac{K}{2}\iint_{C\times C}V_s(x-y)\,\dx\,\dy-\frac{1}{2|\Omega_{K+1}|}\iint_{\Omega_{K+1}\times\Omega_{K+1}}V_s(x-y)\,\dx\,\dy.
\end{equation*}
When $s>0$ we can just discard the last term which is negative and obtain the desired upper bound $E_s(N,\Omega,\rho_b)\leq -c_2N$ with for instance $c_2=(1/4)\iint_{C\times C}V_s(x-y)\dx\,\dy$. When $s\leq 0$ the last term is (or can be at $s=0$) positive and must be estimated. Using that $\Omega_{K+1}$ has diameter of order $N^{1/d}$, the last term is of order $N^{\frac{|s|}{d}}|\Omega_{K+1}|$ for $s<0$ and $|\Omega_{K+1}|\log N$ for $s=0$. It is hard to go further without more assumptions on $\omega$. Under the assumption~\eqref{eq:hyp_domain}, we obtain $|\Omega_{K+1}|\lesssim N^{\frac{d-1}{d}}$ and thus our error term is a $O(N^{\frac{|s|+d-1}{d}})$ (resp. $O(N^{\frac{d-1}{d}}\log N)$ for $s=0$). For $s\in(-1,0]$, this is a $o(N)$ and we can take the same $c_2:=\iint_{C\times C}|V_s(x-y)|\dx\,\dy$. For $s=-1$ we take $c_2=\iint_{C\times C}|x-y|\dx\,\dy+{\rm diam}(\omega)$.
\end{proof}

We have seen that for sufficiently well behaved domains $\Omega$ (depending on~$s$), the energy is of order $N$ in the neutral case.  We next turn to the positive temperature case and define
$$\boxed{F_s(\beta,N,\Omega,\rho_b):=-\beta^{-1}\log Z_s(\beta,N,\Omega,\rho_b),}$$
with
\begin{equation}
Z_s(\beta,N,\Omega,\rho_b):=\frac1{N!}\int_{\Omega^N}e^{-\beta\cE_s(x_1,...,x_N,\Omega,\rho_b)}\dx_1\cdots dx_N.
\label{eq:Z_Jellium}
\end{equation}
The corresponding Gibbs measure is
$$\bP_{s,\beta,N,\Omega,\rho_b}:=Z(\beta,N,\Omega,\rho_b)^{-1}e^{-\beta\cE_s(x_1,...,x_N,\Omega,\rho_b)}.$$
The scaling relation reads
\begin{equation}
 F_s(\beta,N,\Omega,\rho_b)=\lambda^{s}\,F_s\Big(\beta\lambda^s,N,\lambda\,\Omega,\lambda^{-d}\rho_b\Big)+\beta^{-1}N\log(\lambda^d)-\frac{N}{2}\delta_0(s)\log\lambda.
 \label{eq:scaling_N_T_Jellium}
\end{equation}
We can obtain a lower bound on $F_s(\beta,N,\Omega,\rho_b)$  using the same estimate as in~\eqref{eq:estim_lower_simple_temp} and Lemma~\ref{lem:lower_bound_Jellium}. As for upper bounds, the estimate~\eqref{eq:general_upper_bound_Jellium} becomes
\begin{align*}
F_s(\beta,N,\Omega,\rho_b)&\leq -\sum_{j}\frac{\rho_b^2}{2N_j}\iint_{\Omega_j\times\Omega_j}V_s(x-y)\,\dx\,\dy+\beta^{-1}\sum_j\log\left(\frac{N_j!}{|\Omega_j|^{N_j}}\right)\\
&\leq -\sum_{j}\frac{\rho_b^2}{2N_j}\iint_{\Omega_j\times\Omega_j}V_s(x-y)\,\dx\,\dy+N\beta^{-1}\log\rho_b
 \label{eq:general_upper_bound_Jellium_T}
\end{align*}
and the argument is the same as when $T=0$.

The previous estimates suggest that the thermodynamic limit could exist for all $s>-2$, for sufficiently smooth sequences of domains. Unfortunately, this has not been proved in full generality, to our knowledge. The best result known so far seems to be the following.

\begin{theorem}[Canonical thermodynamic functions, long range case]\label{thm:limit_C_Jellium}
Let $d\geq1$, $\max(0,d-2)\leq s<d$. We also allow $s=-1$ if $d=1$. Let $\omega$ be any smooth bounded open set with $|\omega|=1$. Then, for any $\rho=\rho_b>0$ and $\beta>0$, the following limits
\begin{equation}
\lim_{\substack{N\to\ii\\ \ell^d=N/\rho}}\frac{E_s(N,\ell\omega,\rho)}{\ell^d}=e(s)\rho^{1+\frac{s}d}-\delta_0(s)\frac{\rho\log\rho}{2d},\qquad
\lim_{\substack{N\to\ii\\ \ell^d=N/\rho}}\frac{F_s(\beta,N,\ell\omega,\rho)}{\ell^d}=f(s,\beta,\rho)
\label{eq:thermo_limit_free_energy_Jellium}
\end{equation}
exist and are independent of $\omega$. The function $f$ satisfies the relation
\begin{equation}
 f(s,\beta,\rho)=\rho^{1+\frac{s}d}f\big(s,\beta\rho^{\frac{s}d},1\big)+\left(\beta^{-1}-\frac{\delta_0(s)}{2d}\right)\rho\log\rho.
 \label{eq:relation_f_Jellium}
\end{equation}
\end{theorem}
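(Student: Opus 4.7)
The plan is to adapt the Ruelle--Fisher subadditivity method along a nested sequence of cubes, with the essential difficulty of the long-range potential absorbed by the neutrality of the jellium charge distribution. By the scaling relation~\eqref{eq:scaling_N_T_Jellium} it suffices to work at $\rho=\rho_b=1$, and I would first establish the convergence of $E_s(L^d,Q_L,1)/L^d$ and $F_s(\beta,L^d,Q_L,1)/L^d$ along dyadic cubes $Q_L=[0,L)^d$, $L=2^n$; independence from the shape $\omega$ and the relation~\eqref{eq:relation_f_Jellium} will follow at the end.

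The key step is approximate subadditivity under a dyadic split. Partition $Q_L$ into $2^d$ sub-cubes $Q_{L/2}^{(i)}$ and place in each an optimal $T=0$ (resp.~positive-$T$) configuration $\sigma^{(i)}$ of $(L/2)^d$ points. Because $|Q_{L/2}^{(i)}|=(L/2)^d$ matches the point count in each sub-cube, the signed measures $\nu_i:=\sum_{x\in\sigma^{(i)}}\delta_{x}-\1_{Q_{L/2}^{(i)}}$ are all neutral. Using the Fefferman--De La Llave representation~\eqref{eq:FeffDeLaLlave} and the regularized positive-type potential $V_{s,\epsilon}$ from the proof of Lemma~\ref{lem:lower_bound_Jellium}, the jellium energy of the assembled trial configuration reads
\begin{equation*}
\cE_s(\text{trial},Q_L,1)=\frac{c(s,\chi)}{2}\int_\epsilon^\infty\|\chi_r\ast\nu\|_{L^2(\R^d)}^2\,r^{-1-s}\,\rd r-\frac{N}{2}V_{s,\epsilon}(0)+O(N\epsilon^{d-s}),
\end{equation*}
with $\nu=\sum_i\nu_i$. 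Expanding the squared $L^2$-norm gives the sum of the sub-cube energies plus cross contributions $\sum_{i\neq j}\int_\epsilon^\infty\langle\chi_r\ast\nu_i,\chi_r\ast\nu_j\rangle r^{-1-s}\,\rd r$, and the whole argument reduces to controlling the latter by $o(L^d)$.

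The cross-term estimate is the main obstacle. The naive Cauchy--Schwarz for the positive-type form only yields an $O(L^d)$ bound, which is of the same order as the main term and therefore useless. One must exploit the neutrality of each $\nu_i$: Taylor expanding $\chi_r\ast\chi_r(x-y)$ to second order around the centers $x_i^*,x_j^*$ of the sub-cubes, the constant and linear terms vanish, leaving $|\langle\chi_r\ast\nu_i,\chi_r\ast\nu_j\rangle|\lesssim |P_i||P_j|\,r^{-d-2}$ for $r\gtrsim L$, with dipole moments $|P_i|\lesssim L^{d+1}$; in the complementary regime $r\lesssim L$ one controls the overlap on a boundary slab of thickness $\sim r$. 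After integration and pairwise summation, the total cross contribution is of order $L^{d-\alpha}$ for some $\alpha=\alpha(s,d)>0$ precisely when $s>\max(0,d-2)$, which is the threshold at which both the large-$r$ (dipole) and the small-$r$ (local overlap) pieces are subextensive. The borderline Coulomb case $s=d-2$ is covered by the classical Lieb--Narnhofer screening argument, and the $s=-1$, $d=1$ case by a direct one-dimensional tiling exploiting the explicit form $V_{-1}(x)=|x|$. Together with the coercivity from Lemma~\ref{lem:lower_bound_Jellium}, Fekete's subadditivity lemma along $L=2^n$ then provides the existence of $e(s)$.

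Once convergence along dyadic cubes is established, extension to a general bounded $\omega$ with $|\omega|=1$ proceeds by interior and exterior approximation of $\omega$ by disjoint unit cubes in the spirit of Lemma~\ref{lem:general_upper_bound_Jellium}, the boundary contribution being $\lesssim|\partial\omega+B_1|L^{d-1}=o(L^d)$ under~\eqref{eq:hyp_domain}. The positive-temperature case follows exactly the same template for $F_s$, with the symmetric tensor product of smeared-out sub-cube minimizers providing the trial probability measure, whose entropy is additive under tensorization and whose interaction energy satisfies the same $L^2$-decomposition up to an entropy error controllable by the mollification width. Finally, the relation~\eqref{eq:relation_f_Jellium} follows directly from the scaling identity~\eqref{eq:scaling_N_T_Jellium} with $\lambda=\rho^{1/d}$.
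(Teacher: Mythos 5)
The paper does not prove Theorem~\ref{thm:limit_C_Jellium} by direct subadditivity. For the Coulomb case it invokes the Lieb--Narnhofer and Sari--Merlini results (Newton's theorem); for $d-2<s<d$ it invokes Serfaty \emph{et al} (electric-field formulation via the Caffarelli--Silvestre extension); and its own contribution is the route in Section~\ref{sec:net_charge}, where the grand-canonical limit (Theorem~\ref{thm:limit_GC_Jellium}, itself obtained from Graf--Schenker or Fefferman-type decompositions) is converted into the canonical limit by the net-charge estimates (Lemmas~\ref{lem:net_charge1}--\ref{lem:net_charge2}), the minimum-over-$n$ comparison (Lemma~\ref{lem:GC_min_charge}) and the monotonicity Lemma~\ref{lem:monotonicity}, assembled in Corollary~\ref{cor:equivalence_ensembles}. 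Your proposal---Ruelle--Fisher subadditivity on dyadic cubes, with neutrality of each sub-cube plus the Fefferman--De~La~Llave representation controlling the cross terms---is therefore a genuinely different route, not a reconstruction of the paper's argument.

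The core difficulty with that route, and the reason the literature does not use it, is the cross-term bound you assert but do not establish. Two specific gaps: (i) the dipole bound $|P_i|\lesssim L^{d+1}$ is the trivial one, and the sub-cubes in a dyadic split are only at distance $\sim L$ (comparable to their diameter), a regime where the point-dipole expansion of $\chi_r\ast\chi_r$ is not legitimate; the naive point-dipole count would give $|P_i||P_j|/L^{s+2}\sim L^{2d-s}\gg L^d$. What saves the $r\gtrsim L$ piece is instead the $L^2$-Cauchy--Schwarz bound $\|\chi_r\ast\nu_i\|^2\lesssim r^{-d-2}|P_i|^2$ combined with the coercivity estimate~\eqref{eq:estim_chi} or the dipole analogue of Lemma~\ref{lem:neutral_Jellium}, which gives $|P_i|\lesssim L^{(d+s+2)/2}$ for minimizers; you should invoke one of these explicitly. (ii) The $r\lesssim L$ piece is dismissed with ``one controls the overlap on a boundary slab of thickness $\sim r$,'' but this is precisely the hard part: it requires a \emph{local} $L^2$ bound on $\chi_r\ast\nu_i$ restricted to a slab near $\partial Q_{L/2}^{(i)}$, i.e.~control of the discrepancy of the sub-cube minimizer near its own boundary. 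Lemma~\ref{lem:separation} covers only interior points, Lemma~\ref{lem:nb_points_boundary} and~\eqref{eq:estim_chi} give only averaged/global discrepancy bounds, so nothing in the paper delivers the slab estimate; moreover, without security corridors, two points from adjacent sub-cubes can be arbitrarily close to the shared face and the raw cross-interaction $\sum_{x\in\sigma^{(i)},y\in\sigma^{(j)}}|x-y|^{-s}$ is not controlled. These are not cosmetic issues; they are exactly where the boundary screening constructions of Serfaty \emph{et al} enter in the real proof. Relatedly, the announced threshold $s>\max(0,d-2)$ is asserted to fall out of ``both pieces being subextensive,'' but the large-$r$ calculation gives $O(L^{d-s})$ (threshold $s>0$), and where $d-2$ enters your small-$r$ estimate is never derived. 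The proposal is therefore a sketch of an alternative and, at face value, not-yet-working strategy rather than a proof.
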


We use again the same notation $e(s)$ and $f(s,\beta,\rho)$ as for the short range case $s>d$. The link between the two cases will be discussed in Section~\ref{sec:periodic}.

The theorem was proved by Lieb-Narnhofer\cite{LieNar-75} for $s=1$ in dimension $d=3$, based on earlier work by Lieb-Lebowitz\cite{LieLeb-72} (see also Ref.~\onlinecite{LieSei-09}), and was extended to $s=d-2$ in all dimensions by Sari-Merlini\cite{SarMer-76}. Those proofs make use of Newton's theorem, which is however specific to the Coulomb case. For $s=-1$ in $d=1$ the result is due to~\textcite{Kunz-74}. The proof for the other values of $s$ is due to Serfaty \emph{et al} in a long series of works~\cite{SanSer-12,SanSer-14a,SanSer-15,BorSer-13,PetSer-17,RouSer-16,RotSer-15,Leble-15,LebSer-17,Serfaty-19,ArmSer-21}. There the model is written in an external potential and in terms on the electric field instead of the charge densities. The connection to our definition of Jellium is detailed at zero temperature in Ref.~\onlinecite[Lemma~2.6]{CotPet-19b} and in Section~\ref{sec:confined} below (see, in particular, Remark~\ref{rmk:confined_applies}).
We provide a different proof of Theorem~\ref{thm:limit_C_Jellium} for $s>\max(0,d-2)$ in Section~\ref{sec:net_charge} below. In the Coulomb case $s=d-2\geq0$, Armstrong and Serfaty have provided in Ref.~\onlinecite{ArmSer-21} a quantitative bound of the order $O(\ell^{-1})$ for the two limits in~\eqref{eq:thermo_limit_free_energy_Jellium}.

We insist that the limits~\eqref{eq:thermo_limit_free_energy_Jellium} concern the neutral case where the volume $\ell^d$ is exactly equal to $N/\rho=N/\rho_b$. The result does \emph{not} hold under the sole condition that $N\ell^{-d}\to\rho=\rho_b$ as was the case in the short range case in Theorem~\ref{thm:limit_C}. In fact, one can show that
\begin{align}
 \lim_{\substack{N,\ell\to\ii\\ \ell^{-\frac{d+s}{2}}(N-\rho\ell^d)\to q}}\frac{E_s(N,\ell\omega,\rho)}{\ell^d}&=\left(e(s)+\frac{q^2}{2\,{\rm Cap}_s(\omega)}\right)\rho^{1+\frac{s}d},\nn\\
 \lim_{\substack{N,\ell\to\ii\\ \ell^{-\frac{d+s}{2}}(N-\rho\ell^d)\to q}}\frac{F_s(\beta,N,\ell\omega,\rho)}{\ell^d}&=f(s,\beta,\rho)+\frac{q^2}{2\,{\rm Cap}_s(\omega)}\rho^{1+\frac{s}d}
 \label{eq:net_charge}
\end{align}
for any fixed $q\in[0,+\ii]$, where ${\rm Cap}_s(\omega)$ is the Riesz $s$-capacity defined in~\eqref{eq:mean-field}. This was proved for balls in the Coulomb case $s=d-2$ in Ref.~\onlinecite{Kunz-74,LieNar-75}. We extend this result to all $\max(0,d-2)<s<d$ in Section~\ref{sec:net_charge} (see Corollary~\ref{cor:net_charge}). In particular, we see that we need $\ell^d=N/\rho+o(N^{\frac{d+s}{2d}})$ to obtain the limit to $e(s)\rho^{1+s/d}$ or $f(s,\beta,\rho)$ in~\eqref{eq:thermo_limit_free_energy_Jellium}.

Note that for the log gas $s=0$ the scaling relation~\eqref{eq:relation_f_Jellium} gives
$$ f(0,\beta,\rho)=f\big(0,\beta,1\big)\,\rho+\frac{2d-\beta}{2\beta d}\rho\log\rho.$$
This is a convex function of $\rho$ for $\beta<2d$ but a concave function for $\beta>2d$. It is exactly linear at $\beta=2d$. In fact, $\frac{2d-\beta}{2\beta d}\rho$ is equal to the pressure, which is thus negative for $\beta>2d$~\cite{SalPra-63}.
Using a circular background the free energy could be explicitly computed at the special point $\beta=2$ in dimension $d=2$ in Ref.~\onlinecite{DeuDewFur-79,AlaJan-81,Forrester-98}, leading to
\begin{equation}
 f(0,2,\rho)=-\frac{\log(2\pi^2)}{4}\rho+\frac{\rho\log\rho}{4}\qquad \text{for $s=0$ and $\beta=2$ in dimension $d=2$}.
 \label{eq:formula_f_2D_beta2}
\end{equation}
In dimension $d=1$ it is in fact possible to compute $f(0,\beta,\rho)$ for all values of $\beta>0$ and $\rho>0$, by first periodizing the system. The formula is provided later in~\eqref{eq:f_log_gas} in Remark~\ref{rmk:log_gas}. This is again an integrable system, related to the quantum Calogero-Sutherland-Moser model (see Section~\ref{sec:random_matrices} below). In 1D, the point $\beta=2$ at which the free energy stops to be convex is a BKT phase transition, see Section~\ref{sec:sine-beta}.

\subsection{Grand-canonical ensemble}
We now turn to the grand-canonical case. Since the energy behaves badly for $s\leq0$ in the non-neutral case, we have to assume $s>0$. We define
\begin{equation}
\boxed{E_s^{\rm GC}(\mu,\Omega,\rho_b):=\min_{n\geq0}\left\{E_s(n,\Omega,\rho_b)-\mu n\right\},}
\label{eq:def_E_s_GC_Jellium}
\end{equation}
as well as
\begin{equation}
\boxed{F_s^{\rm GC}(\beta,\mu,\Omega,\rho_b):=-\beta^{-1}\log Z_s^{\rm GC}(\beta,\mu,\Omega,\rho_b),}
\end{equation}
with
\begin{equation}
Z_s^{\rm GC}(\beta,\mu,\Omega,\rho_b):=1+e^{\beta\mu}+\sum_{n=2}^\ii e^{\beta\mu n} Z_s(\beta,n,\Omega,\rho_b).
\label{def:Z_GC}
\end{equation}
As we have explained, the chemical potential $\mu$ will have no effect on the bulk density of the system and might only affect the density close to the boundary. Let us introduce $N:=\rho_b|\Omega|$ and assume for simplicity that this is an integer. Then we have $E_s^{\rm GC}(\mu,\Omega,\rho_b)\leq E_s(N,\Omega,\rho_b)-\mu N\leq (c_2-\mu)N\rho_b^{s/d}$ by Lemma~\ref{lem:upper_bound_Jellium}. On the other hand, denoting by $N_\mu$ an integer so that $E_s(N_\mu,\Omega,\rho_b)-\mu N_\mu=E^{\rm GC}_s(\mu,\Omega,\rho_b)$, we obtain by Lemma~\ref{lem:neutral_Jellium}
$$(N_\mu-N)^2\leq (c_1')^{-1}|\Omega|^{\frac{s}d}\rho_b^{\frac{s}d} (c_2+c_1-\mu)N=o(N^2).$$
This proves the claimed charge neutrality in the limit. There is a similar argument at positive temperature.

In spite of its triviality, the grand-canonical problem is still a useful model for $s>0$. For instance, the proof of the existence of the thermodynamic limit is simplified by the fact that the number of points can be taken arbitrary and the result is known for all $s>0$.

\begin{theorem}[Grand-canonical thermodynamic functions, long range case]\label{thm:limit_GC_Jellium}
Assume that $0<s<d$. Let $\omega$ be any bounded open set with $|\omega|=1$ and $\partial\omega$ satisfying~\eqref{eq:hyp_domain}. Then for every $\beta>0$, $\mu\in\R$ and $\rho_b>0$, the following limit exist
\begin{equation}
\lim_{\ell\to\ii}\frac{E_s^{\rm GC}(\mu,\ell\omega,\rho_b)}{\ell^d}=e^{\rm GC}(s)\rho_b^{1+\frac{s}d}-\mu\,\rho_b,\quad
\lim_{\ell\to\ii}\frac{F_s^{\rm GC}(\beta,\mu,\ell\omega,\rho_b)}{\ell^d}
=f^{\rm GC}(s,\beta,\rho_b)-\mu\,\rho_b.
\label{eq:thermo_limit_GC_Jellium}
\end{equation}
The function $f^{\rm GC}$ satisfies the same scaling relation as in~\eqref{eq:relation_f_Jellium}. If $s\geq d-2$, then $e^{\rm GC}(s)=e(s)$ and $f^{\rm GC}(s,\beta,\rho_b)=f(s,\beta,\rho_b)$, the canonical functions from Theorem~\ref{thm:limit_C_Jellium}.
\end{theorem}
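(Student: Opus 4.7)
The plan is to exploit the automatic charge neutrality enforced for $s>0$ by Lemma~\ref{lem:neutral_Jellium}, which pins the optimal particle number to $n_\ell=\rho_b\ell^d+O(\ell^{(d+s)/2})$ and thereby reduces, when $s\geq d-2$, the grand-canonical thermodynamic limit to the canonical one proved in Theorem~\ref{thm:limit_C_Jellium}. The scaling relation for $f^{\rm GC}$ analogous to~\eqref{eq:relation_f_Jellium} comes from applying the canonical scaling~\eqref{eq:scaling_N_T_Jellium} to each $Z_s(\beta,n,\lambda\Omega,\lambda^{-d}\rho_b)$ inside the sum~\eqref{def:Z_GC}, and the independence from $\omega$ follows by the usual Fisher--Ruelle tiling.

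For $s\geq d-2$ at $T=0$, the upper bound is obtained by taking $n=\lfloor\rho_b\ell^d\rfloor$ as a trial in~\eqref{eq:def_E_s_GC_Jellium}, giving $\ell^{-d}E^{\rm GC}_s(\mu,\ell\omega,\rho_b)\leq \ell^{-d}E_s(\lfloor\rho_b\ell^d\rfloor,\ell\omega,\rho_b)-\mu\rho_b+o(1)\to e(s)\rho_b^{1+s/d}-\mu\rho_b$ by Theorem~\ref{thm:limit_C_Jellium}. For the matching lower bound, let $n_\ell$ realize the minimum; Lemma~\ref{lem:neutral_Jellium} combined with this upper bound forces $|n_\ell-\rho_b\ell^d|\leq C\ell^{(d+s)/2}$, so writing $n_\ell=\rho_b\ell^d+q_\ell\ell^{(d+s)/2}$ with $q_\ell$ bounded and applying the net-charge expansion~\eqref{eq:net_charge} yields
$$E_s(n_\ell,\ell\omega,\rho_b)=e(s)\rho_b^{1+s/d}\ell^d+O(q_\ell^2\ell^s)+o(\ell^d)=e(s)\rho_b^{1+s/d}\ell^d+o(\ell^d),$$
since $s<d$. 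Subtracting $\mu n_\ell=\mu\rho_b\ell^d+o(\ell^d)$ delivers the matching lower bound. At $T>0$ the same strategy applies: the upper bound restricts the sum~\eqref{def:Z_GC} to the single term $n=\lfloor\rho_b\ell^d\rfloor$, while for the lower bound Lemma~\ref{lem:neutral_Jellium} combined with a positive-temperature net-charge expansion for $F_s(\beta,n,\ell\omega,\rho_b)$ makes the sum effectively Gaussian around $n\approx \rho_b\ell^d$; it can therefore be bounded by $\mathrm{poly}(\ell)\cdot\max_n e^{\beta\mu n-\beta F_s(\beta,n,\ell\omega,\rho_b)}$, and taking $-\beta^{-1}\log$ costs only an $O(\log\ell)=o(\ell^d)$ error.

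The main obstacle is the regime $0<s<d-2$ (only relevant for $d\geq3$), where Theorem~\ref{thm:limit_C_Jellium} is unavailable. Here I would prove the grand-canonical limit directly by subadditivity: split the cube $[0,2L]^d$ into $2^d$ sub-cubes $\Omega_j$ of side $L$, take products of grand-canonical minimizers (or Gibbs measures) in each $\Omega_j$ as a trial state, and write the cross terms as $\iint V_s(x-y)\,\rd\nu_j(x)\,\rd\nu_k(y)$ with $\nu_j=\sum_{x\in X_j}\delta_x-\rho_b\1_{\Omega_j}$. Because Jellium backgrounds tile space, one cannot insert empty security corridors, so the sub-cubes touch and the cross terms are not obviously small; the required estimate must combine the positive-type Fourier representation~\eqref{eq:energy_Fourier_Jellium} with the near-neutrality $\nu_j(\R^d)=O(L^{(d+s)/2})$ from Lemma~\ref{lem:neutral_Jellium}, so that each pair contributes only a controlled multipole correction and the total cross contribution is $o(L^d)$. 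Once subadditivity of $E^{\rm GC}_s(\mu,\cdot,\rho_b)/|\cdot|$ and its $F^{\rm GC}_s$ analogue is established, the classical Fisher--Ruelle framework delivers the matching lower bound and the independence from $\omega$.
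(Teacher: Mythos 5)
Your reduction for $d-2<s<d$ (from the canonical Theorem~\ref{thm:limit_C_Jellium} plus the net-charge expansion to the grand-canonical limit) is sound in spirit and is in fact the mirror image of what the paper does in Corollary~\ref{cor:equivalence_ensembles}, where the implication is run in the opposite direction (grand-canonical $\Rightarrow$ canonical) after the existence of the grand-canonical limit is imported from the Graf--Schenker/Fefferman literature. Two small but real corrections to your sketch: the net-charge correction in~\eqref{eq:net_charge} is $\frac{q_\ell^2}{2\,{\rm Cap}_s(\omega)}\rho_b^{1+s/d}\ell^d$, i.e.\ $O(q_\ell^2\ell^d)$ and not $O(q_\ell^2\ell^s)$, so it is \emph{not} $o(\ell^d)$ when $q_\ell\not\to0$; the argument survives only because this term is nonnegative, so it can be dropped in the one direction you need. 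Also, Lemmas~\ref{lem:net_charge1}--\ref{lem:net_charge2} are stated for $s>d-2$ and for smooth convex $\omega$, so the endpoint $s=d-2$ and the general class of domains satisfying only~\eqref{eq:hyp_domain} are not reached by this route and would still have to be handled by the Lieb--Narnhofer/Sari--Merlini argument, as the paper notes.

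The genuine gap is the regime $0<s<d-2$, and the difficulty is more severe than your sketch suggests. Subadditivity with sharp, touching subcubes cannot work for Jellium: by positive definiteness of $V_s$ and Cauchy--Schwarz, the cross term $\iint V_s(x-y)\,\rd\nu_j(x)\,\rd\nu_k(y)$ between two adjacent subcubes is bounded only by the geometric mean of the two local self-energies, i.e.\ $O(L^d)$ per pair, not $o(L^d)$. Near-neutrality ($\nu_j(\R^d)=O(L^{(d+s)/2})$) and higher multipole information control the far-field of $\nu_j$, but the dominant contribution comes from charges within a bounded distance of the common face, where no multipole expansion applies. Proving that the \emph{net} error is $o(L^d)$ after summing over all pairs of adjacent cells would require a genuinely new cancellation mechanism. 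The paper does not attempt this deterministic subadditivity at all; it instead invokes the Graf--Schenker and Fefferman inequalities, which establish the \emph{opposite} inequality — a lower bound on the energy in a large domain in terms of a sum over smaller pieces — and which work by averaging over random translations, rotations and scalings of the tiling so that the long-range interactions cancel \emph{on average}. This is a different tool, not a refinement of the corridor argument, and it is the missing ingredient in your proposal.
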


The existence of the limit as well as the equality with the canonical problem is proved in the Coulomb case in Ref.~\onlinecite{LieNar-75,SarMer-76}. A different proof of the existence of the limit can be provided in dimension $d=3$ by following the method introduced in Ref.~\onlinecite{HaiLewSol_1-09,*HaiLewSol_2-09} based on the Graf-Schenker inequality~\cite{GraSch-95}. For other values of $s\in(0,d)$, one can use a similar inequality due to Fefferman~\cite{Fefferman-85,Hughes-85,Gregg-89}, as recently shown in Ref.~\onlinecite{CotPet-19,CotPet-19b} at $T=0$. On the contrary to the usual method in the short range case (outlined after Theorem~\ref{thm:limit_C}), the Graf-Schenker and Fefferman approaches are based on establishing a \emph{lower bound} on the free energy in a large domain in terms of the one in smaller domains (Figure~\ref{fig:GS}). Since the local number of points typically fluctuates, this approach is well suited to the grand-canonical ensemble. 

For $d-2<s<d$, we have not found the equality with the canonical problem stated in the literature and we provide a proof in Section~\ref{sec:net_charge} below. In fact, we can even prove the existence of the thermodynamic limit in the canonical case (Theorem~\ref{thm:limit_C_Jellium}) using Theorem~\ref{thm:limit_GC_Jellium} (see Corollary~\ref{cor:equivalence_ensembles} in Section~\ref{sec:net_charge} below).

Long range potentials can sometimes lead to non-equivalent ensembles in statistical mechanics~\cite{DauRufAriWil-02,CamDauRuf-09,CamDauFanRuf-14}. This is already the case for $s\leq0$, where the grand canonical problem is unbounded whereas the canonical problem is finite. The non-equivalence for $s\leq0$ is also manifest in the non-convexity of the (free) energy as a function of $\rho$ (when it exists). We do not quite know what to expect for all $s<d-2$ in dimension $d\geq3$ but at least conjecture the equivalence of the canonical and grand-canonical ensembles for $s\geq d-4$. In the physical dimensions $d\leq 3$ this would already cover all the possible exponents $s>0$.

Due to the necessary neutrality of the system, the free energy ends up being exactly linear in the chemical potential $\mu$. Recall that it is strictly concave in the short range case (Theorem~\ref{thm:limit_GC}). From Remark~\ref{rmk:Ginibre} this suggests that the variance of the number of points should be a $o(|\Omega|)$, which is related to the hyperuniformity of the point process discussed later in Section~\ref{sec:prop}.

\begin{figure}[t]
\includegraphics[width=7cm]{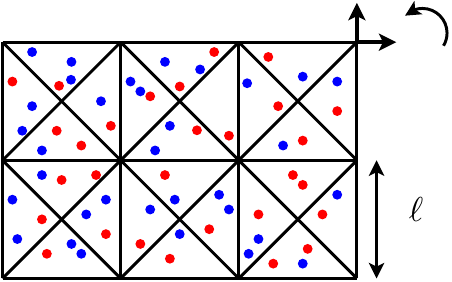}
\caption{Consider positive and negative charges $e_j\in\{\pm1\}$ located at some $x_j\in\R^3$ and represented here by the colored dots. The total Coulomb interaction is $E_{\rm tot}=\sum_{1\leq j<k\leq N}e_je_k/|x_j-x_k|$. Take then a tiling $(T_\alpha)$ of $\R^3$, made exclusively of congruent tetrahedra of side length $\ell$. The Graf-Schenker inequality\cite{GraSch-95} states that there exists an appropriate translation and rotation of the tiling (depending on the $x_j$) for which $E_{\rm tot}\geq \sum_\alpha E_\alpha-CN/\ell$ where $E_\alpha$ is the Coulomb interaction of the points inside the tetrahedron $T_\alpha$ and $N$ is the total number of points. In other words, we can neglect the interaction between the tetrahedra in a lower bound, up to a small error. The interpretation is that there is always some kind of screening as far as lower bounds are concerned. In our case one of the two charge densities is uniform and not composed of points, but the bound holds the same. The Graf-Schenker inequality was used in Ref.~\onlinecite{GraSch-95,HaiLewSol_1-09,*HaiLewSol_2-09,LewLieSei-18,LewLieSei-19} to prove that the thermodynamic limit exists for 3D Coulomb systems. A similar but more complicated bound was introduced before by Conlon, Lieb and Yau~\cite{ConLieYau-89}, involving cubes and a replacement of Coulomb by Yukawa. For other values of $s>0$, Fefferman had developed earlier in Ref.~\onlinecite{Fefferman-85} a similar inequality but each tiling domain had to be written as the union of balls of many different sizes. The advantage of the latter approach is that it can be generalized to all $s>0$ in all dimensions $d\geq1$\cite{Hughes-85,Gregg-89,CotPet-19}. It is an interesting question to find a generalization of the simpler Graf-Schenker inequality to other dimensions and potentials. \label{fig:GS}}
\end{figure}

\subsection{Local bounds and definition of the point process}
We turn to local bounds and the definition of the Riesz point process. This is a very active subject at the moment and few results have been established so far. Even if local bounds are important, they will in general not be enough to guarantee that the potential
$$\Phi_\Omega(x)=\sum_{x_j\in\Omega}V_s(x-x_j)-\rho_b\int_\Omega V_s(x-y)\,\dy$$
admits a well defined limit $\Phi(x)$ almost-surely when $\Omega\nearrow\R^d$. This is needed to pass to the limit in
the equilibrium equations such as DLR. The convergence of $\Phi_\Omega$ should hold because the points are sufficiently well positioned so as to screen the background, but this is difficult to prove in general.

\bigskip

\paragraph{Local bounds.}
At zero temperature, local bounds are known. This is an unpublished result of Lieb in the Coulomb case, which has been used and generalized to all $d-2 \leq s<d$ in Ref.~\onlinecite{PetSer-17,RouSer-16,RotSer-15,LebSer-17,LieRouYng-19,Rougerie-22_ppt}. For points on a manifold, related results can be found in Ref.~\onlinecite{Dahlberg-78,Dragnev-02,DraSaf-07,BraDraSaf-14,HarRezSafVol-19}. The low temperature regime is also studied in the 2D Coulomb case in Ref.~\onlinecite{AmeOrt-12,Ameur-18,Ameur-21,AmeRom-22_ppt}.

\begin{lemma}[Separation]\label{lem:separation}
Let $d\geq1$ and $\max(0,d-2)\leq s<d$. Let $\rho_b>0$, $N\geq2$ and $\Omega\subset\R^d$ be any bounded open set. Let $x_1,...,x_N$ be a minimizer for $E_s(N,\Omega,\rho_b)$. There exists a universal constant $\delta>0$ (depending only on $s,d$) such that for every $x_{j_0}\in\Omega$ with ${\rm d}(x_{j_0},\partial\Omega)\geq \delta\rho_b^{-1/d}$, we have
$\min_{j\neq j_0}|x_{j_0}-x_j|\geq \delta\rho_b^{-\frac{1}d}$.
\end{lemma}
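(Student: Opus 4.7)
The target statement is a local version of Lieb's separation bound for the Coulomb Jellium, which has been extended to all $d-2\le s<d$ via the positive-type decomposition~\eqref{eq:FeffDeLaLlave}. By the scaling relation~\eqref{eq:scaling_N_T0_Jellium} I may reduce to $\rho_b=1$; the goal is then to show $r:=\min_{j\ne j_0}|x_{j_0}-x_j|\ge\delta$ for some universal $\delta>0$. Freezing every $x_j$ with $j\ne j_0$, the minimality of $(x_1,\ldots,x_N)$ forces $x_{j_0}$ to be a global minimizer on $\overline\Omega$ of the one-point potential
$$
\Phi(z):=\sum_{j\ne j_0}V_s(z-x_j)-\int_\Omega V_s(z-y)\,\rd y,
$$
and, by the distance-to-boundary hypothesis with $\delta$ small, the ball $B_{r/2}(x_{j_0})$ sits inside $\Omega$ and contains no other $x_j$.

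\textbf{Trial by smearing.} Set $\eta=r/4$ and use the Fefferman-de la Llave formula~\eqref{eq:FeffDeLaLlave} to split $V_s=V_{s,\eta}+V_s^{\rm res}$, with $V_{s,\eta}$ positive-type and bounded, $V_{s,\eta}(0)\sim \eta^{-s}$, and $V_s^{\rm res}\ge 0$ essentially supported in $B_{2\eta}$ with $\int V_s^{\rm res}\lesssim\eta^{d-s}$. Replace the Dirac at $x_{j_0}$ in the total energy by the smeared density $\chi_\eta(\cdot-x_{j_0})$ (with $\chi_\eta$ the bump used in~\eqref{eq:FeffDeLaLlave}); the smeared trial is a convex combination of admissible one-particle positions in $B_\eta(x_{j_0})\subset\Omega$, so minimality of $(x_1,\ldots,x_N)$ together with the minimization property of $x_{j_0}$ yields
$$
\mathcal E_s\le \mathcal E_s^{\rm smeared}=\mathcal E_s-V_s(x_{j_0}-x_{j_1})+(V_s\ast\chi_\eta)(x_{j_0}-x_{j_1})+(\text{errors from the other }x_j\text{'s and background}).
$$
The other $x_j$'s contribute errors of size $O(\eta^{d-s})$ each (bounded uniformly by Lemma~\ref{lem:simple_estim_sum} combined with the a priori density bound implicit in Lemmas~\ref{lem:lower_bound_Jellium}--\ref{lem:upper_bound_Jellium}), and the background contributes at most $\int V_s^{\rm res}\lesssim \eta^{d-s}$ by Lemma~\ref{lem:neutral_Jellium}. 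For $\eta\le r/4$, positive-type-ness combined with the fact that $V_s\ast\chi_\eta<V_s$ strictly near the origin gives $V_s(r)-(V_s\ast\chi_\eta)(r)\ge c\,r^{-s}$, so the inequality collapses to
$$
c\,r^{-s}\le C\,\eta^{d-s}+C(\text{error from background}).
$$

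\textbf{Main obstacle.} The core technical difficulty is controlling the background contribution $\int_\Omega V_s(x_{j_0}-y)\,\rd y$ and its smeared version by a universal constant \emph{independent of} $|\Omega|$. Each of these integrals typically grows like $|\Omega|^{(d-s)/d}$, but their \emph{difference} after smearing is bounded thanks to the charge-neutrality estimate of Lemma~\ref{lem:neutral_Jellium} applied at the scale $\eta$: the residual part $V_s^{\rm res}\ast\mathbf 1_\Omega$ is bounded pointwise by $\int V_s^{\rm res}\lesssim\eta^{d-s}$, while the positive-type bounded part $V_{s,\eta}\ast\mathbf 1_\Omega$ is Lipschitz on the scale $\eta$. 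For Coulomb ($s=d-2$) this is just Newton's theorem applied to the background; for $d-2<s<d$ it is the Petrache-Serfaty refinement of Newton's argument. Granted these controls, rearranging yields $r^{-s}\le C'$ universally, hence the separation $r\ge (C')^{-1/s}=:\delta$, contradicting the assumption $r<\delta$.
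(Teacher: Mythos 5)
Your \textbf{plan is different from the paper's proof, and in the regime $d-2<s<d$ it has a genuine gap.} The paper freezes all points except the alleged close pair, splits the potential felt by $x_{j_1}$ into the part $W_1$ generated by $x_{j_0}$ and its surrounding ball of background (which is $>0$ inside and $=0$ on the boundary by Newton's theorem), and the part $W_2$ generated by everything else, which is superharmonic on the ball (for Coulomb) or, for $d-2<s<d$, whose Caffarelli--Silvestre extension obeys the degenerate maximum principle of Fabes--Kenig--Serapioni (this is Petrache--Serfaty's extension of Lieb's argument). The minimum principle for $W_2$ plus strict positivity of $W_1$ forces $x_{j_1}$ to the boundary of the ball, giving the contradiction. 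Crucially, this argument is \emph{purely local}: it requires no information whatsoever about how the remaining $N-2$ points are distributed.

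Your smearing argument, by contrast, needs global control. Averaging $x_{j_0}$ over $\chi_\eta$ yields
\begin{equation*}
\sum_{j\ne j_0}\bigl(V_s-V_s\ast\chi_\eta\bigr)(x_{j_0}-x_j)\;\le\;\rho_b\int_\Omega\bigl(V_s-V_s\ast\chi_\eta\bigr)(x_{j_0}-y)\,\rd y .
\end{equation*}
For $s=d-2$ this works cleanly: Newton's theorem gives $V_s\ast\chi_\eta\le V_s$ with equality outside $B_\eta$, so every term on the left is $\ge 0$, the left side dominates the nearest-neighbour contribution $\sim r^{-s}$, and the right side is $O(\rho_b\eta^{2})$; choosing $\eta\sim\rho_b^{-1/d}$ recovers the separation. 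But for $d-2<s<d$ the function $|x|^{-s}$ is \emph{subharmonic} away from the origin (recall $-\Delta |x|^{-s}=s(d-2-s)|x|^{-s-2}<0$ for $s>d-2$), so $V_s\ast\chi_\eta>V_s$ for $|x|\gtrsim\eta$, and the terms $(V_s-V_s\ast\chi_\eta)(x_{j_0}-x_j)$ for distant $x_j$ are \emph{negative}. Their total magnitude is controlled by $\eta^2\sum_{|x_{j_0}-x_j|>\eta}|x_{j_0}-x_j|^{-s-2}$, and you invoke Lemma~\ref{lem:simple_estim_sum} plus ``the a priori density bound implicit in Lemmas~\ref{lem:lower_bound_Jellium}--\ref{lem:upper_bound_Jellium}'' to bound this. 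Neither reference applies: Lemma~\ref{lem:simple_estim_sum} requires $s>d$ \emph{and} presupposes that the points are already separated, which is exactly what you are trying to prove; and Lemmas~\ref{lem:lower_bound_Jellium}--\ref{lem:upper_bound_Jellium} bound the total energy, not the local number of points near a given $x_{j_0}$. Without a local density estimate the distant-point sum can in principle swamp the $r^{-s}$ gain, and no such estimate is available at this stage of the paper (the paper's Lemma~\ref{lem:nb_points_boundary} gives only averaged bounds). This is precisely the difficulty the maximum-principle argument sidesteps, and it is why the paper reaches for the Caffarelli--Silvestre extension rather than a Euclidean mean-value trick.

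Your treatment of the background term is, in outline, correct: for $s>d-2$ one has $\int_{\R^d}(V_s-V_s\ast\chi_\eta)=0$, so the background integral over $\Omega$ equals minus the integral over $\Omega^c$ and is controlled by the distance-to-boundary hypothesis. And your scaling reduction to $\rho_b=1$ is fine. But the uncontrolled distant-point contribution is a fatal gap for $d-2<s<d$. If you want to pursue a smearing-type proof, you would need to smear in the extended variable of the Caffarelli--Silvestre picture (so that the relevant mean-value inequality goes the right way), which in effect reproduces the paper's degenerate maximum-principle argument.
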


\begin{proof}
We start with Lieb's proof in the easier Coulomb case $s=d-2$. One can take $\delta=|B_1|^{-1/d}$, which means that each point in the interior of $\Omega$ has around it a ball of volume $\rho_b^{-1}$ containing no other point. Let us take $j_0$ as in the statement and assume by contradiction that there exists $j_1$ such that $x_{j_1}$ is in the ball $B_R(x_{j_0})\subset\Omega$ centered at $x_{j_0}$ and of radius $R:=\delta\rho_b^{-1/d}$. After relabeling the points we may assume that $j_0=2$ and $j_1=1$. Next we consider the energy as a function of $x_1$ and remark that
\begin{align}
\cE_{s,\rho_b,\Omega}(x_1,...,x_N)=&\cE_{s,\rho_b,\Omega}(x_2,...,x_N)+\left(V_s(x_1-x_2)-\rho_b\int_{B_R(x_2)}V_s(x_1-y)\,dy\right)\nn\\
&\qquad+\left(\sum_{j=3}^NV_s(x_1-x_j) -\rho_b\int_{\Omega\setminus B_R(x_2)}V_s(x_1-y)\,dy\right)\nn\\
=:&\cE_{s,\rho_b,\Omega}(x_2,...,x_N)+W_1(x_1)+W_2(x_1).
\label{eq:decomp_min_distance}
\end{align}
In the Coulomb case $s=d-2$ we have
\begin{equation*}
-\Delta_{x_1} W_2=c_d\sum_{x_j\in B_R(x_2)}\delta_{x_\ell}\geq0,\qquad \text{in $B_R(x_2)$}
\end{equation*}
with $c_d>0$ the constant in~\eqref{eq:fund_solution_Coulomb}.
Therefore $W_2$ is subharmonic and attains its minimum at the boundary of $B_R(x_2)$. By Newton's theorem, $W_1(x_1)=(V_s-|B_R|^{-1} V_s\ast\1_{B_R})(x_1-x_2)$ is strictly positive inside $B_R(x_2)$ and vanishes at the boundary. Hence $W_1+W_2$ must attain its minimum at the boundary of $B_R(x_2)$, which contradicts the minimality of the energy with respect to $x_1$.

For other values of $s$, this argument has been generalized by Petrache and Serfaty in Ref.~\onlinecite[Thm.~5]{PetSer-17}. The main point is that the potential $\mu\ast|x|^{-s}$ is the restriction to $\R^d\times\{0\}$ of the solution $\widetilde W$ to the degenerate elliptic equation $-{\rm div}|x_{d+1}|^{s+1-d}\nabla \widetilde W=\tilde{c}_{d,s}\mu(x)\delta_0(x_{d+1})$ in $\R^{d+1}$~\cite{CafSil-07}. Applying the (degenerate) maximum principle~\cite{FabKenSer-82} in $(\R^d\setminus\Omega\cup B_R(x_2))\times\R$, we see that $\widetilde W$ attains its minimum on $\partial\Omega\cup \partial B_R(x_2))\times\{0\}$. Hence the function $W_2$ in~\eqref{eq:decomp_min_distance} must again attain its minimum on $\partial\Omega\cup \partial B_R(x_2)$. Choosing $R$ small enough such that
$W_1(x)>\max_{\partial B_R(x_2)}W_1$ for all $x\in B_R(x_2)$ allows to conclude.
\end{proof}

\begin{remark}
Lieb's separation result was generalized by~\textcite{LieRouYng-19} as follows. To any configuration $x_1,...,x_K\in\R^d$ of $K$ points one can associate a unique set $B(x_1,...,x_K)\subset\R^d$ of volume $K/\rho_b$ containing the points so that the Coulomb potential generated by $\sum_{j=1}^K\delta_{x_j}-\rho_b\1_{B(x_1,...,x_K)}$ vanishes completely outside of  $B(x_1,...,x_K)$. In general this set will not be fully included into our background $\Omega$ but it is if the points are sufficiently inside $\Omega$. In this case, one can prove that for a minimizer the other $N-K$ points must all lie outside of $B(x_1,...,x_K)$. Furthermore, we have the inclusion $B(x_1,...,x_{K-1})\subset B(x_1,...,x_K)$. For one point $B(x)$ is just the ball of radius $(|B_1|\rho_b)^{-1/d}$ and we recover the previous result. By induction we can thus let $\Omega_j:=B(x_1,...,x_j)\setminus B(x_1,...,x_{j-1})$, after ordering the points properly. This allows to partition the interior of $\Omega$ into subsets $\Omega_j\ni x_j$ with the property that the charge $\delta_{x_j}-\rho_b\1_{\Omega_j}$ does not interact with any of the other compounds. This is one possible splitting of the background among the points, which we have already mentioned several times. The procedure stops for the charges close to the boundary, which should not matter in the thermodynamic limit. The ``screening regions'' of Ref.~\onlinecite{LieRouYng-19} are also known as ``subharmonic quadrature domains'' in the potential theory literature\cite{Sakai-82,GusSha-05,GusPut-07,Gustafsson-04}, where they are obtained by some kind of partial balayage. See Ref.~\onlinecite{Rougerie-22_ppt} for more details.
\end{remark}

Lemma~\ref{lem:separation} applies to all $N$, but for $N\gg \rho_b|\Omega|$ most of the points will accumulate at the boundary $\partial\Omega$ where the estimate does not hold. When the energy is of order $N$ we can show that there are only $o(N)$ points close to the boundary, so that Lemma~\ref{lem:separation} covers most of the points.

\begin{lemma}[Number of points close to the boundary] \label{lem:nb_points_boundary}
Let $d\geq1$ and $0<s<d$. Let $\rho_b>0$, $N\geq2$ and $\Omega\subset\R^d$ be any bounded open set.
For $\delta$ small enough (depending only on $\omega=|\Omega|^{-1/d}\Omega$), the number of points at a distance $R\leq \delta |\Omega|^{1/d}$ to the boundary satisfies
\begin{multline}
\#\{x_j\in\Omega\ : \rd(x_j,\partial\Omega)<R\}\leq N\frac{|\{x\in\Omega\ :\ \rd(x,\partial\Omega)\leq3R\}|}{|\Omega|}\\
+\frac{C|\Omega|^{\frac12}}{R^{\frac{d-s}{2}}}\left(\cE_s(x_1,...,x_N,\Omega,\rho_b)+c_1N\rho_b^{\frac{s}d}\right)^{\frac12}.
\end{multline}
\end{lemma}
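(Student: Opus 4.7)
The plan is to bound the number of boundary points by testing the empirical measure $\mu=\sum_j\delta_{x_j}$ against a smooth cutoff concentrated in the $3R$-neighborhood of $\partial\Omega$, splitting $\int\chi\,d\mu$ into a mean-field and a fluctuation contribution, and controlling the fluctuation by the Fourier-type representation of the Jellium energy already used in the proof of Lemma~\ref{lem:lower_bound_Jellium}. By the scaling \eqref{eq:scaling_N_T0_Jellium} I reduce to $\rho_b=1$, writing $\nu:=\mu-\1_\Omega$. Fix a non-negative radial mollifier $\eta$ supported in the unit ball with $\int\eta=1$ (playing the role of $\chi$ in the proof of Lemma~\ref{lem:lower_bound_Jellium}); the estimate \eqref{eq:estim_chi} then reads
$$\cE_s+c_1 N\geq\frac{c(s,\eta)}{2}\int_{0}^\infty\|\eta_r\ast\nu\|_{L^2(\R^d)}^2\,r^{d-1-s}\,\dr.$$

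Averaging over $r\in[R/4,R]$, the mean-value theorem produces a scale $r_\ast\in[R/4,R]$ with
$$\|\eta_{r_\ast}\ast\nu\|_{L^2}^{2}\leq\frac{C}{R^{d-s}}(\cE_s+c_1 N),$$
because $r_\ast^{d-1-s}\sim R^{d-1-s}$ on this interval regardless of the sign of $d-1-s$. I then define the cutoff $\chi:=\1_W\ast\eta_{r_\ast}$ with $W:=\{x\in\R^d:\rd(x,\partial\Omega)\leq R+r_\ast\}$. A triangle-inequality check yields $\chi\equiv1$ on $A_R=\{x\in\Omega:\rd(x,\partial\Omega)<R\}$, $\chi\leq 1$, and $\chi$ supported in $\{\rd(\cdot,\partial\Omega)\leq R+2r_\ast\}\subset\{\rd(\cdot,\partial\Omega)\leq 3R\}$ since $r_\ast\leq R$.

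With this cutoff, $N_R\leq\int\chi\,d\mu=\int\chi\,d\nu+\int_\Omega\chi$. The restriction of $\chi$ to $\Omega$ sits inside $U_R$, so $\int_\Omega\chi\leq|U_R|$. To pass from $|U_R|$ to $N|U_R|/|\Omega|$ as in the statement, I decompose $|U_R|=\tfrac{N}{|\Omega|}|U_R|+\tfrac{|\Omega|-N}{|\Omega|}|U_R|$ and invoke the configuration-level estimate $||\Omega|-N|\leq C|\Omega|^{s/(2d)}(\cE_s+c_1N)^{1/2}$, which is exactly the first step in the proof of Lemma~\ref{lem:neutral_Jellium} applied to the given $x_1,\dots,x_N$. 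For the fluctuation term, the evenness of $\eta$ and Cauchy--Schwarz give
$$\Big|\int\chi\,d\nu\Big|=\Big|\int_W(\eta_{r_\ast}\ast\nu)(y)\,\dy\Big|\leq|W|^{1/2}\|\eta_{r_\ast}\ast\nu\|_{L^2}\leq\frac{C|\Omega|^{1/2}}{R^{(d-s)/2}}(\cE_s+c_1 N)^{1/2},$$
where $|W|\leq C|\Omega|$ follows from $R\leq\delta|\Omega|^{1/d}$ and the regularity of $\omega=|\Omega|^{-1/d}\Omega$. The hypothesis $R\leq\delta|\Omega|^{1/d}$ also rearranges to $|\Omega|^{s/(2d)}\leq\delta^{(d-s)/2}|\Omega|^{1/2}R^{-(d-s)/2}$, so the non-neutral excess from the decomposition of $|U_R|$ is absorbed into the Cauchy--Schwarz bound provided $\delta$ is small enough. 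Unscaling via \eqref{eq:scaling_N_T0_Jellium} restores the factor $\rho_b^{s/d}$ in the final estimate.

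The main obstacle is the bookkeeping for the non-neutral case: the background integral naturally yields $\rho_b|U_R|$ rather than $N|U_R|/|\Omega|$, and one has to use the hypothesis $R\leq\delta|\Omega|^{1/d}$ (not merely $R$ bounded) to absorb the discrepancy $(\rho_b|\Omega|-N)|U_R|/|\Omega|$ into the fluctuation term. Equally crucial is that the test function be of the convolution form $\chi=\1_W\ast\eta_{r_\ast}$: this is what allows the Cauchy--Schwarz step to expose only the $L^2$-norm of the mollified charge $\eta_{r_\ast}\ast\nu$, which is the quantity that the Jellium stability inequality \eqref{eq:estim_chi} controls.
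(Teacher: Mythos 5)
Your argument is correct, and it is a genuinely different (dual) decomposition from the paper's. The paper keeps the same test function $\chi_r=|B_r|^{-1}\1_{B_r}$ from the stability inequality, integrates $q_{r,\tau}^2$ over $\tau\in\Omega^-_{3R}$ and over a dyadic range $r\in[R,2R]$, applies Cauchy--Schwarz in $(r,\tau)$, and thereby bounds the number of points in the \emph{interior} $\Omega^-_R$ from \emph{below}, i.e.\ $\#X\cap\Omega^-_R\geq|\Omega^-_{3R}|-E$, which is subtracted from $N$ to conclude. You instead pick a single scale $r_\ast$ by pigeonhole, build a smoothed indicator $\chi=\1_W\ast\eta_{r_\ast}$ concentrated in a boundary shell, and bound the number of points in the \emph{boundary layer} from \emph{above} by pairing $\chi$ with $\mu=\nu+\1_\Omega$. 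The two reductions to Cauchy--Schwarz against $\|\eta_{r_\ast}\ast\nu\|_{L^2}$ are essentially equivalent, and both use $|W|\leq C_\omega|\Omega|$ (resp.\ $|\Omega^-_{3R}|\leq|\Omega|$) for the normalization. Your single-scale pigeonhole trades the paper's $\frac1R\int_R^{2R}$ averaging for a configuration-dependent $r_\ast$, which is fine for a pointwise estimate. One genuine improvement: you explicitly account for the mismatch between $|U_{3R}|$ and $N|U_{3R}|/|\Omega|$ when $N\neq\rho_b|\Omega|$, using Lemma~\ref{lem:neutral_Jellium} and the hypothesis $R\leq\delta|\Omega|^{1/d}$; the paper's proof stops at $N-|\Omega^-_{3R}|$ and leaves that last reconciliation implicit (it is trivially $\leq N|U_{3R}|/|\Omega|$ only when $N\leq\rho_b|\Omega|$). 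A small slip: in the decomposition you write $|U_R|$ where you mean $|U_{3R}|$, since $\int_\Omega\chi\leq|U_{R+2r_\ast}|\leq|U_{3R}|$; and when invoking \eqref{eq:estim_chi} the $r$-integral starts at $\eps\sim\rho_b^{-1/d}$, so your interval $[R/4,R]$ must sit to the right of $\eps$ — harmless for the claimed estimate, but worth flagging.
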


When $N$ and the energy are both of order $|\Omega|$, we find that there are $o(N)$ points located at any distance $1\ll R\ll |\Omega|^{1/d}$ to the boundary, hence in particular also at a finite distance. When $\omega$ has a regular boundary in the sense of~\eqref{eq:hyp_domain} we can make this more quantitative and obtain after optimizing over $R$ that there are at most $O(N^{1-\frac{d-s}{d(2+d-s)}})$ points located at a distance $R\sim N^{\frac{2}{d(2+d-s)}}$ to the boundary $\partial\Omega$.

\begin{proof}
We assume again $\rho=1$. We take $\chi=|B_1|^{-1}\1_{B_1}$ in~\eqref{eq:estim_chi}. Letting $q_{r,\tau}(X):=\chi_r\ast\nu(\tau)=|B_r|^{-1}(\#X\cap B_r(\tau)-\rho_b |B_r\cap \Omega|)$ be the charge per unit volume in the ball $B_r(\tau)$ centered at $\tau$, the bound~\eqref{eq:estim_chi} gives for $R\geq\eps$,
\begin{equation}
\frac1R\int_R^{2R}\int_{\R^d}q_{r,\tau}(X)^2\rd\tau\,\dr\leq \frac{C}{R^{d-s}}\left(\cE_s(X,N^{1/d}\omega,\rho_b)+c_1N\rho_b^{\frac{s}d}\right).
\label{eq:estim_local_charge}
\end{equation}
When $RN^{-1/d}\leq\delta$ is small enough, the set $\Omega^-_{3R}=\{\tau\in\Omega\ :\ \rd(\tau,\partial \Omega)\geq 3R\}$ is non empty and we can restrict the $\tau$ integral to this set. Using the Cauchy-Schwarz inequality and $|\Omega_{3R}^-|\leq|\Omega|$, we find
\begin{align*}
\frac1R\int_R^{2R}\int_{\Omega_{3R}^-}q_{r,\tau}(X)^2\rd\tau\,\dr&\geq |\Omega|^{-1}\left(\frac1R\int_R^{2R}\int_{\Omega_{3R}^-}q_{r,\tau}(X)\,\rd\tau\right)^2\\
&= |\Omega|^{-1}\left(\frac1{R}\int_R^{2R}\int_{\Omega_{3R}^-}\frac{\#X\cap B_r(\tau)}{|B_r|}\,\rd\tau-|\Omega_{3R}^-|\right)^2.
\end{align*}
Since $\#X\cap\Omega_{R}^-\geq |B_r|^{-1}\int_{\Omega^-_{3R}}\#X\cap B_r(\tau)\rd\tau$ for $r\leq 2R$, this gives
\begin{equation*}
\#X\cap\Omega_{R}^-\geq|\Omega_{3R}^-| - \frac{C|\Omega|^{\frac12}}{R^{\frac{d-s}{2}}}\left(\cE_s(X,N^{1/d}\omega,\rho_b)+c_1N\rho_b^{\frac{s}d}\right)^{\frac12}
\end{equation*}
and concludes the proof.
\end{proof}

The local charge $q_{r,\tau}(X)$ used in the previous proof is also often called the \emph{discrepancy} (per unit volume). At $T=0$, Lemma~\ref{lem:separation} says that it is uniformly bounded for minimizers, away from the boundary of $\Omega$. The estimate~\eqref{eq:estim_local_charge} says that it is small in average for $r\gg1$, which is going to be useful later.

The next natural step is to prove that there are no big hole in the system, similarly as in Lemmas~\ref{lem:local_bound_GC_T0} and~\ref{lem:local_bound_C_T0}. To our knowledge, this is not understood for all values of $s$. In Ref.~\onlinecite{RotSer-15,PetRot-18,ArmSer-21} it is proved that $q_{r,\tau}(X)=O(1/r)$ for a minimizer $X$ of the canonical Coulomb problem, uniformly in $\tau$ far enough from the boundary. This implies that any ball of radius $r$ must contain of the order of $r^d$ points. Instead of looking at the holes, it is equivalent to ask what is the smallest radius $r$ so that $\cup_jB_r(x_j)$ covers the whole of $\Omega$ (perhaps with a neighborhood of the boundary removed). This is called the \emph{covering radius}~\cite{BorHarSaf-19}. Weaker average bounds on $q_{r,\tau}$ are proved later in Section~\ref{sec:proof_thm_Jellium} for all $s>0$.

At positive temperature, average bounds on $q_{r,\tau}$ are more difficult to obtain. It is shown by Lebl\'e and Serfaty in Ref.~\onlinecite[Lem.~3.2]{LebSer-17} that $\bE[q_{R,0}^2]\leq {C}/{R^{d-s}}$, for any infinite \emph{translation-invariant (stationary)} point process with finite Jellium energy per unit volume, in the case $\max(0,d-2)\leq s<d$. One can in fact get the same bound for all $0<s<d$ (averaged over $[R,2R]$) by integrating~\eqref{eq:estim_local_charge} against the point process. It is more complicated to deal with non translation-invariant systems (without performing an average over translations, that is, look at the ``empirical field''). An estimate on the local charge was provided in Ref.~\onlinecite{Leble-17,BauBouNikYau-17,BauBouNikYau-19} in the 2D Coulomb case, but only for sets of diameter $R\sim N^\eps$ for any $\eps>0$. The desired local bounds were finally proved very recently by~\textcite{ArmSer-21} for $s=d-2$ in all dimensions $d\geq2$, in the canonical case. Their result is formulated with an external confining potential but also applies to a uniform background, by Remark~\ref{rmk:confined_applies} in Section~\ref{sec:confined}. After passing to the limit, this provides a limiting point process in the Coulomb case, for all values of $\beta$.

Boursier has recently obtained\cite{Boursier-21_ppt} rigidity results about the fluctuations of the individual points in the case $0<s<1$ in dimension $d=1$, which imply very precise (average) local bounds. In the case of the 1D log gas $s=0$, much more is known due to the link with random matrices explained in Section~\ref{sec:confined}, see for instance Ref.~\onlinecite{BouErdYau-12,BouErdYau-14}.

For stationary point processes, it is possible to get around explicit local bounds and obtain some local tightness using the finiteness of the entropy per unit volume. Such an argument goes back to Georgii and Zessin~\cite{GeoZes-93,Georgii-11} and was crucially used for Riesz gases in Ref.~\onlinecite{DerVas-20,DerHarLebMai-21,DerVas-21_ppt}. More precisely, the entropy controls the expectation of $n_D\log n_D$ in any domain $D$ (see for instance Ref.~\onlinecite[Lem.~6.2]{MarLewNen-22_ppt}).

\bigskip

\paragraph{Infinite Riesz point processes.}
With local bounds at hand, the next step is to pass to the limit and get either infinite optimal configurations at $T=0$, or a point process at $T>0$, satisfying a DLR-type condition. The main difficulty here is to give a meaning to the potential, which is the formal limit
\begin{equation}
\Phi(x):=\lim_{\Omega\nearrow\R^d}\left(\sum_{x_j\in\Omega} V_s(x-x_j)-\rho_b\int_{\Omega}V_s(x-y)\,\dy\right).
\label{eq:def_Phi}
\end{equation}
This potential should appears in the DLR equations and it is interpreted as a renormalization of the infinite potential $\sum_{j} V_s(x-x_j)$. Local bounds are in general not enough to properly define the potential~\eqref{eq:def_Phi}.  One has to use more carefully the fact that we work with a minimizer for $T=0$ and a Gibbs measure for $T>0$.

A special situation is $d-1<s<d$, which has recently been considered by Dereudre and Vasseur~\cite{DerVas-21_ppt}. In this case, a local bound on the average number of points implies that $\nabla\Phi(x)$ is finite almost surely and it remains to show that $\Phi(x)$ is almost surely bounded for one $x$. This was used in Ref.~\onlinecite{DerVas-21_ppt} to prove the convergence of the Gibbs state at $T>0$ to a solution of the (properly renormalized) canonical and grand-canonical DLR equations. To be more precise, the authors started with the periodic model discussed later in Section~\ref{sec:periodic_BC} to ensure translation-invariance (hence a uniform density $\rho^{(1)}$), but their result applies the same to our situation, after performing an average over translations.

In a previous work~\cite{DerHarLebMai-21}, Dereudre, Hardy, Lebl\'e and Ma\"ida had managed to treat the case $s=0$ in dimension $d=1$ (1D log gas), using some \emph{a priori} local bounds from Ref.~\onlinecite{LebSer-17}. This case is better understood due to the link with random matrix theory. The corresponding point process had in fact already been constructed in Ref.~\onlinecite{ValVir-09,KilSto-09,Nakano-14,ValVir-17} but the (renormalized) DLR equations were first justified in Ref.~\onlinecite{DerHarLebMai-21}. Apart from the special 1D Coulomb case $s=-1$ which was already completely understood at the end of the 70s\cite{Kunz-74,AizMar-80} (see Section~\ref{sec:Kunz} below), the work of Dereudre, Hardy, Lebl\'e and Ma\"ida gave the first rigorous justification of DLR for long range systems.

The DLR characterization might not be the easiest path for Coulomb and Riesz gases. It was suggested by Gruber, Lugrin and Martin~\cite{GruLugMar-78,GruLugMar-80} back in the 80s that the BBGKY equations might be more adapted since they involve the field $\nabla V_s$ which decays better and is integrable at infinity for $s>d-1$. We discuss these equations in Section~\ref{sec:prop} below and it would be interesting to make the connection with Ref.~\onlinecite{DerVas-21_ppt}.

Since we always think of $\Phi(x)$ as a renormalization of the divergent potential $\sum_{j=1}^\ii V_s(x-x_j)=+\ii$, one fundamental question is to identify the infinite configurations $X=\{x_j\}_{j\in\N}$ for which the infinite series can be renormalized in a natural and unambiguous way. Our Riesz point process should concentrate on such configurations. Our train of thought in this article is that the uniform background (also sometimes called the \emph{integral compensator}, see Ref.~\onlinecite[Rmk.~1.15]{DerVas-21_ppt}) is the right approach, at least for not too low values of $s$. In Section~\ref{sec:periodic} we will compare it with another method based on analytic continuation in $s$.

\subsection{Equilibrium configurations for $d-2\leq s<d$}
We state here a result in the case $T=0$ for $d-2\leq s<d$, which has not been treated in Ref.~\onlinecite{DerHarLebMai-21,DerVas-21_ppt} and is the equivalent of Theorem~\ref{thm:CV_equilibrium} in the long range case. We are able to renormalize the potential for our equilibrium configuration, that is, show the existence of the function $\Phi$ in~\eqref{eq:def_Phi}, without performing any average over translations. The following result seems to be new and its detailed proof is provided later in Section~\ref{sec:proof_thm_Jellium}.

\begin{theorem}[Equilibrium configurations for $d-2\leq s<d$]\label{thm:infinite_conf_Jellium}
We assume $0<s<d$ in dimensions $d\in\{1,2\}$ and $d-2\leq s<d$ in dimensions $d\geq3$. Let $\omega$ be any domain of volume $|\omega|=1$ so that $|\partial\omega|=0$. Let $\rho_b>0$ and $\mu\in\R$. Consider any minimizer $X_\ell=\{x_{1,\ell},...,x_{N_\ell,\ell}\}\subset \Omega:=\ell\omega$ for the grand-canonical problem $E^{\rm GC}_s(\mu,\ell\omega,\rho_b)$. Up to extraction of a subsequence, translation of $\omega$, and relabelling the $x_{j,\ell}$, we have the following properties:

\smallskip

\noindent$\bullet$ $x_{j,\ell}\to x_j$ as $\ell\to\ii$ for any fixed $j\geq1$. The infinite configuration of points $X=\{x_j,\ j\geq1\}$ satisfies $|x_j-x_k|\geq\delta\mu^{-1/s}$ for $j\neq k$, with $\delta>0$ the same constant as in Lemma~\ref{lem:separation}.

\smallskip

\noindent$\bullet$ The potential
$$\Phi_\ell(x):=\sum_{j=1}^{N_\ell}\frac1{|x-x_{j,\ell}|^s}-\rho_b\int_{\Omega}\frac{\dy}{|x-y|^s}$$
is bounded below on $\Omega$ and locally bounded from above on $\Omega\setminus X_\ell$, independently of $\ell$. The sequence $\Phi_\ell$ converges as $\ell\to\ii$ to a function $\Phi\in C^0(\R^d\setminus X)\cap L^1_{\rm loc}(\R^d)$, in $L^1_{\rm loc}(\R^d)$ and locally uniformly in the sense that for any $R>0$
\begin{equation}
\Phi_\ell(x)-\sum_{x_{j,\ell}\in \overline{B_R}}\frac1{|x-x_{j,\ell}|^s}\underset{\ell\to\ii}\longrightarrow \Phi(x)-\sum_{x_{j}\in \overline{B_R}}\frac1{|x-x_{j}|^s}\quad\text{uniformly on $\overline{B_R}$.}
 \label{eq:limit_Phi_local}
\end{equation}
Denoting by
\begin{equation}
\Phi^{(j_0)}=\lim_{\ell\to\ii}\sum_{\substack{j=1\\ j\neq j_0}}^{N_\ell}\frac1{|x_{j_0}-x_{j,\ell}|^s}-\rho_b\int_{\Omega}\frac{\dy}{|x_{j_0}-y|^s}=\lim_{x\to x_{j_0}}\left(\Phi(x)-\frac1{|x-x_{j_0}|^s}\right)
\label{eq:limit_interaction_j0}
\end{equation}
the limit of the interaction of any $x_{j_0,\ell}$ with the rest of the system, we can express for $s>d-2$ and all $x\in\R^d\setminus\{x_{j_0}\}$
 \begin{equation}
  \Phi(x)=\Phi^{(j_0)}+\frac1{|x-x_{j_0}|^s}+\sum_{j\neq j_0}\left(\frac1{|x-x_j|^s}-\frac1{|x_{j_0}-x_j|^s}+s\frac{(x-x_{j_0})\cdot (x_{j_0}-x_j)}{|x_{j_0}-x_j|^{s+2}}\right).
% \end{cases}
\label{eq:limit_Phi}
\end{equation}
If $s=d-2$, $\Phi$ solves the equation
\begin{equation}
-\Delta \Phi=(d-2)|\bS^{d-1}|\left(\sum_j \delta_{x_j}-\rho_b\right)
\label{eq:Poisson_Phi_Jellium}
\end{equation}
in the sense of distributions on $\R^d$. In all cases, $\Phi$ is uniquely determined from the infinite configuration $X=\{x_j\}$, up to a constant.

\smallskip

\noindent$\bullet$ The limiting infinite configuration $X$ satisfies the equilibrium equations
\begin{equation}
\sum_{j=1}^n\Phi_{D^c}(y_j)+\cE_s(Y,D,\rho_b)\geq \sum_{j=1}^N\Phi_{D^c}(x_j)+\cE_s(X\cap  \overline{D},D,\rho_b)+\mu(n-N),
\label{eq:DLR_T0_Jellium}
\end{equation}
for any domain $D\subset \R^d$, any $Y=\{y_1,...,y_n\}\subset \overline D$ and after relabelling the $x_j$ so that $\{x_1,...,x_N\}=X\cap \overline D$, where
$$\Phi_{D^c}(x):=\Phi(x)-\sum_{x_j\in\overline{D}}\frac{1}{|x-x_j|^s}+\rho_b\int_D\frac{\dy}{|x-y|^s}$$
denotes the potential induced by the system outside of $D$.
\end{theorem}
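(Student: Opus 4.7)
The plan is to combine the zero-temperature separation of Lemma~\ref{lem:separation} with optimality bounds on the Jellium potential to extract a local limit, then identify the renormalized potential via a dipole-subtracted Taylor expansion whose convergence characterizes the restriction $s\geq d-2$.

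First, I establish local compactness. By Lemma~\ref{lem:neutral_Jellium}, applied to the grand-canonical minimum together with the scaling~\eqref{eq:scaling_N_T0_Jellium}, the optimal $N_\ell$ satisfies $|N_\ell-\rho_b\ell^d|=o(\ell^d)$, so Lemma~\ref{lem:separation} applied to $X_\ell$ viewed as a canonical minimizer gives $|x_{j,\ell}-x_{k,\ell}|\geq\delta\rho_b^{-1/d}$ for points at a fixed distance from $\partial(\ell\omega)$. After translating $\omega$ so that at least one $x_{j,\ell}$ stays in a bounded set, a diagonal Bolzano--Weierstrass extraction produces $x_{j,\ell}\to x_j$ locally, with the limit inheriting the separation. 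Comparing $E_s^{\rm GC}(\mu,\ell\omega,\rho_b)$ against the trial configurations obtained by adding a test point at $x\in\Omega$ or removing $x_{j_0,\ell}$, as in Lemma~\ref{lem:local_bound_GC_T0}, yields
\begin{equation*}
\Phi_\ell(x)\geq\mu \quad\text{on }\Omega,\qquad \Phi^{(j_0)}_\ell\leq\mu.
\end{equation*}
In the Coulomb case $s=d-2$, the function $\Phi_\ell-V_s(\,\cdot\,-x_{j_0,\ell})$ is smooth and subharmonic in the hole $B_{\delta\rho_b^{-1/d}/2}(x_{j_0,\ell})$ with value $\Phi^{(j_0)}_\ell\leq\mu$ at its centre; the maximum principle combined with the lower bound at the hole boundary produces a uniform upper bound. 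For $d-2<s<d$, the same conclusion follows by the Caffarelli--Silvestre extension argument already used in the proof of Lemma~\ref{lem:separation}. An Arzel\`a--Ascoli step then yields the locally uniform convergence~\eqref{eq:limit_Phi_local} after a further extraction.

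The representation~\eqref{eq:limit_Phi} is obtained from Taylor's theorem: for $|x_{j_0}-x_j|$ large and $x$ in a compact set,
\begin{equation*}
\left|\frac{1}{|x-x_j|^s}-\frac{1}{|x_{j_0}-x_j|^s}+s\frac{(x-x_{j_0})\cdot(x_{j_0}-x_j)}{|x_{j_0}-x_j|^{s+2}}\right|\leq \frac{C\,|x-x_{j_0}|^2}{|x_{j_0}-x_j|^{s+2}}.
\end{equation*}
The separation $\delta\rho_b^{-1/d}$ controls the local density, so the comparison with $\int|y|^{-s-2}\dy$ shows that $\sum_{j\neq j_0}|x_{j_0}-x_j|^{-s-2}$ converges precisely because $s+2>d$, which is where the restriction $s\geq d-2$ enters. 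The equality in~\eqref{eq:limit_Phi} is obtained by writing the analogous dipole-compensated identity at finite $\ell$ (the dipole correction $-s(x-x_{j_0})\cdot\sum_{j\neq j_0}(x_{j_0}-x_j)|x_{j_0}-x_j|^{-s-2}$ being cancelled by the first-order Taylor expansion of the uniform background integral in $x$ around $x_{j_0}$) and passing $\ell\to\infty$ term by term. For $s=d-2$ the Taylor series diverges logarithmically and the formula collapses, but $-\Delta\Phi_\ell=(d-2)|\bS^{d-1}|(\sum_j\delta_{x_{j,\ell}}-\rho_b\1_\Omega)$ passes to the distributional limit thanks to the $L^1_{\rm loc}$ convergence of $\Phi_\ell$, giving~\eqref{eq:Poisson_Phi_Jellium}. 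In either regime, uniqueness of $\Phi$ modulo constants follows from a Liouville argument applied to the difference of two candidates, which is harmonic on $\R^d\setminus X$ with subpolynomial growth.

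Finally, for any bounded $D\subset\R^d$ and any $Y=\{y_1,\dots,y_n\}\subset\overline D$, replacing in $X_\ell$ the points lying in $\overline D$ by $Y$ yields an admissible grand-canonical configuration, so optimality at finite $\ell$ reads
\begin{equation*}
\cE_s(Y,D,\rho_b)+\sum_{j=1}^n\Phi_\ell^{D^c}(y_j)-\mu n\geq \cE_s(X_\ell\cap\overline D,D,\rho_b)+\sum_{j=1}^{N_\ell(D)}\Phi_\ell^{D^c}(x_{j,\ell})-\mu N_\ell(D).
\end{equation*}
Passing $\ell\to\infty$ using the locally uniform convergence of $\Phi_\ell$ on compacts disjoint from $X$ delivers~\eqref{eq:DLR_T0_Jellium}. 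The main technical obstacle is the passage to the limit in $\Phi_\ell^{D^c}(y_j)$ when $y_j$ is close to $\partial D$ and an $x_{k,\ell}$ just outside $D$ accumulates near $y_j$: one must invoke the dipole-compensated representation (or the distributional Poisson equation when $s=d-2$) to obtain convergence uniform on a full neighbourhood of $\overline D\setminus(X\cap\partial D)$. This is exactly where the hypothesis $s\geq d-2$ is used, and it is the step blocking the argument for the truly long-range regime $s<d-2$.
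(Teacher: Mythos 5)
Your outline gets the high-level architecture right (separation, first-order bounds on the potential, dipole renormalization, Poisson equation for Coulomb, Liouville uniqueness, and local DLR by direct comparison), but there are two genuine gaps and one garbled step.

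\textbf{The limiting configuration must be infinite, and your extraction does not show this.} You translate $\omega$ ``so that at least one $x_{j,\ell}$ stays in a bounded set'' and then extract. Separation (Lemma~\ref{lem:separation}) gives the needed \emph{upper} bound on the number of points in a fixed ball, but no \emph{lower} bound: a priori all other points could stay at distance $\gg 1$ and the limit could be a single point or a low-density configuration, contradicting the claim $X=\{x_j,\ j\geq 1\}$ together with the bound~\eqref{eq:estim_nb_points} used later. The paper devotes an entire first step to ruling this out: integrating the Jellium lower bound of Lemma~\ref{lem:lower_bound_Jellium} (via~\eqref{eq:estim_chi}) gives the discrepancy estimate~\eqref{eq:estim_local_charge}, from which one finds a point $\tau_\ell$ around which the average discrepancy over dyadic annuli is small, hence $\# X_\ell\cap B_R(\tau_\ell)\sim |B_R|$ for all intermediate $R$. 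This equidistribution is the input that makes the diagonal extraction produce an infinite configuration of positive density. Without it the rest of the argument is vacuous.

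\textbf{The dipole cancellation is not a computational identity, it is the Euler--Lagrange condition.} You write that the dipole correction $-s(x-x_{j_0})\cdot\sum_{j\neq j_0}(x_{j_0}-x_j)|x_{j_0}-x_j|^{-s-2}$ ``is cancelled by the first-order Taylor expansion of the uniform background integral'', as if this were automatic. It is not: the two quantities coincide precisely because $x_{j_0,\ell}$ is a free interior critical point of the energy, i.e.
\begin{equation*}
\nabla_x\left(\sum_{j\neq j_0}\frac1{|x-x_{j,\ell}|^s}-\rho_b\int_\Omega\frac{\dy}{|x-y|^s}\right)\Big|_{x=x_{j_0,\ell}}=0,
\end{equation*}
which is~\eqref{eq:nabla_Phi}. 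The paper derives this stationarity explicitly before writing the finite-$\ell$ version of~\eqref{eq:limit_Phi}. For a general well-separated configuration the formula would simply be false. You should state this.

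\textbf{The maximum-principle step for the upper bound on $\Phi_\ell$ does not work as stated.} You argue that $\Phi_\ell - V_s(\cdot - x_{j_0,\ell})$ is subharmonic in the hole and equals $\Phi^{(j_0)}_\ell\leq \mu$ at the centre, so the maximum principle plus a lower bound gives a uniform upper bound. But subharmonicity only gives an upper bound \emph{at the centre} in terms of boundary values, not the reverse; a non-negative subharmonic function can be small at the centre and blow up elsewhere. What the paper actually does is first obtain a uniform bound on $\int_{B_R}\Phi_\ell$ (the lower bound $\Phi_\ell\geq\mu$ plus the mollified estimate~\eqref{eq:lower_bd_Phi_j_0}, which also bounds $\Phi^{(j_0)}_\ell$ from below), and only then applies the sub-mean-value inequality $\Phi_\ell(x)\leq |B_r|^{-1}\int_{B_r(x)}\Phi_\ell$ to upgrade the $L^1_{\rm loc}$ bound to a local $L^\infty$ bound on $\R^d\setminus X_\ell$. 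The $L^1_{\rm loc}$ control is the missing intermediate input in your argument, and it is also what justifies the ``Arzel\`a--Ascoli'' step (in the Coulomb case through elliptic regularity for $f_\ell=\Phi_\ell-\sum_{x_{j,\ell}\in B_R}|x-x_{j,\ell}|^{2-d}$, which solves a Poisson equation with bounded right side).

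The rest of your sketch (Taylor remainder, convergence of $\sum|x_{j_0}-x_j|^{-s-2}$ precisely for $s>d-2$, passage to~\eqref{eq:Poisson_Phi_Jellium} by $L^1_{\rm loc}$ convergence, Liouville uniqueness via Lemma~\ref{lem:Liouville}, and the DLR inequality by direct insertion of trial configurations) matches the paper's proof.
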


The series on the right side of~\eqref{eq:limit_Phi} is convergent for $d-2<s<d$ since the summand behaves as $|x_j|^{-(s+2)}$ for large $j$ and the points $x_j$ are well separated. When $d-1<s<d$, we will prove that the third term vanishes after summing, and thus obtain the simpler formula
\begin{equation}
 \Phi(x)=\Phi^{(j_0)}+\frac1{|x-x_{j_0}|^s}+\sum_{j\neq j_0}\left(\frac1{|x-x_j|^s}-\frac1{|x_{j_0}-x_j|^s}\right)\qquad\text{for $d-1<s<d$}.
 \label{def:Phi_simpler}
\end{equation}
The last sum was called the `move function' in Ref.~\onlinecite{DerHarLebMai-21,DerVas-21_ppt}. The formula~\eqref{def:Phi_simpler} formally amounts to shifting the divergent series by an infinite ($x$-independent) constant, which seems a natural procedure since $\nabla \Phi$ is finite. For $d-2<s<d$, the less intuitive $x$-dependent third term in~\eqref{eq:limit_Phi} naturally appears in our proof because we look at an equilibrium configuration and use that the energy is stationary at a minimizer. We see no easy way of bringing more derivatives to renormalize the series in a similar manner for $s< d-2$. At $s=d-2$ we are not able to provide an explicit form for $\Phi$ but know that it is the \emph{unique} solution (up to constants) to Poisson's equation~\eqref{eq:Poisson_Phi_Jellium}, which is bounded-below and grows at most quadratically at infinity (see Lemma~\ref{lem:Liouville}).

It may worry the reader that given the infinite configuration $X=\{x_j\}$, the corresponding potential $\Phi$ seems to be only explicitly known up to a constant. This is probably unavoidable. If we fix a specific representation for $\Phi$ (e.g. the right side of~\eqref{eq:limit_Phi} without the first constant), then we can include the missing unknown constant into the chemical potential $\mu$ and thereby obtain equilibrium DLR equations with a \emph{renormalized chemical potential $\mu_{\rm ren}$}. It seems reasonable to expect that for such a fixed representation of $\Phi$, there exists a solution for only one value of $\mu_{\rm ren}$. When performing the thermodynamic limit with a different $\mu$, some points would then have to escape to a neighborhood of the boundary of $\ell\omega$ so as to generate a constant potential inside and thus change $\mu$ to $\mu_{\rm ren}$. Such a phenomenon is predicted to happen in the Coulomb case at positive temperature in Ref.~\onlinecite{Imbrie-82,FedKen-85} as well as in Ref.~\onlinecite[Sec.~II.F]{BryMar-99}. It has never been proved, to our knowledge.

In physics, renormalization often coincides with an analytic continuation. We are unable to show this for minimizers but believe it is the case for $s>d-2$, since those are conjectured to be periodic. In the periodic case, we prove the equivalence in the next section.

%%%%%%%%%%%%%%%%%%%%%%%%%%%%%%%%%%%%%%%%%%%%%%%%%%%%%
%%%%%%%%%%%%%%%%%%%%%%%%%%%%%%%%%%%%%%%%%%%%%%%%%%%%%
\section{Analyticity and periodicity}\label{sec:periodic}
%%%%%%%%%%%%%%%%%%%%%%%%%%%%%%%%%%%%%%%%%%%%%%%%%%%%%
%%%%%%%%%%%%%%%%%%%%%%%%%%%%%%%%%%%%%%%%%%%%%%%%%%%%%

Periodic systems play a central role in the analysis of Riesz gases. They are believed to be optimal in some situations (e.g. at zero temperature). They also arise when the points are placed on the torus in the thermodynamic limit, instead of a container $\Omega$ with hard walls. The energy per unit volume and the potential of a periodic configuration of points can be expressed in terms of the \emph{Epstein Zeta function}~\cite{Epstein-06}, which is a $d$--dimensional generalization of the Riemann Zeta function. This naturally leads to interesting problems in complex analysis and analytic number theory.

In the Coulomb case, the definition of the potential $\Phi(x)$ of an infinite periodic configuration poses some problems which have raised a lot of confusion in the literature. This will also be discussed in this section.

\subsection{Background as an analytic continuation}\label{sec:periodic_continuation}

In this first section we exhibit a connection between the long and short range cases, by proving that the potential of an infinite system with background is for $s<d$ the \emph{analytic continuation in $s$} of the same system with a short range potential $s>d$. Results of this type go back to Ref.~\onlinecite{BorBorTay-85,BorBorShaZuc-88,BorBorSha-89,BorBorStr-14} and they shed a new light on the role of the uniform background, at least for not too low values of $s$. We start with general point configurations before we turn to the special case of periodic lattices.

\subsubsection{General case}

Let $X=\{x_j\}_{j\in\N}$ be any infinite configuration of points in $\R^d$. We assume that the points are well separated:
$\inf_{j\neq k}|x_j-x_k|>0$.
This is sufficient to define the potential
$$\Phi(s,x)=\sum_{j=1}^\ii\frac{1}{|x-x_j|^s},\qquad \text{for $s>d$ and any $x\in \R^d\setminus X$.}$$
In fact this defines an analytic function in $s$ on the half plane $\{\Re(s)>d\}$ for every fixed $x\in\R^d\setminus X$. The question we are asking here is whether this potential admits a meromorphic continuation in $s$ and if this extension coincides with the limit obtained by inserting a background
$$\Phi(s,x)=\lim_{\Omega\nearrow \R^d}\left(\sum_{x_j\in\Omega}\frac{1}{|x-x_j|^s}-\rho_b\int_\Omega\frac{\dy}{|x-y|^s}\right),\qquad \text{for $s<d$.}$$
For configurations leaving no big hole, we expect that $\Phi(s,x)$ will diverge when $s\to d^+$, so that there will always be a pole at $s=d$. In fact, we expect the residue to be related to the density of points, which is the constant $\rho_b$ which we have to choose for the background.

The points must be sufficiently well placed so as to screen a uniform background and the allowed values of $s<d$ will depend on the quality of the screening. The following is a simple result inspired of Ref.~\onlinecite{BorBorShaZuc-88,BorBorSha-89,BorBorStr-14,BlaBriLio-02,GeSan-21}, where we only require that the number of points in large balls is sufficiently close to the corresponding volume.

\begin{lemma}[Background as an analytic continuation in $s$ with discrepancy bounds]\label{lem:counting}
Consider an infinite configuration of points $X=\{x_j\}_{j\in\N}\subset\R^d$ such that $\inf_{j\neq k}|x_j-x_k|>0$. Let $x\in\R^d\setminus X$ and assume that
\begin{equation}
\left|\#X\cap B_R(x)-\rho_b\frac{|\bS^{d-1}|R^d}d\right|\leq CR^{d-\alpha},\qquad \forall R\geq C
\label{eq:counting_points}
\end{equation}
for some $\rho_b,C>0$ and $0<\alpha\leq d$. Then the potential $\Phi(s,x):=\sum_{j\geq1}|x-x_j|^{-s}$, initially defined on the half plane $\{\Re(s)>d\}$, admits a meromorphic extension to $\{\Re(s)>d-\alpha\}$ with a unique simple pole at $s=d$, of residue $\rho_b|\bS^{d-1}|$. This analytic extension is given by the limit
\begin{equation}
\Phi(s,x)=\lim_{R\to\ii}\left(\sum_{x_j\in B_{R}(x)}\frac{1}{|x-x_j|^s}-\rho_b\int_{B_{R}(x)}\frac{\dy}{|x-y|^s}\right)
 \label{eq:Phi_s_x}
\end{equation}
for all $d-\alpha<\Re(s)<d$.
\end{lemma}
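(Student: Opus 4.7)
The plan is to reduce everything to the radial counting function $N(r):=\#\{j:|x-x_j|\leq r\}$, express the partial sum as a Stieltjes integral against $\rd N(r)$, and integrate by parts. Since $x\notin X$ and the points are uniformly separated, $N(r)=0$ for $r$ below the minimum distance from $x$ to $X$, so the boundary term at $0$ disappears and one obtains
\begin{equation*}
\sum_{x_j\in B_R(x)}\frac{1}{|x-x_j|^s}=R^{-s}N(R)+s\int_0^R r^{-s-1}N(r)\,\dr.
\end{equation*}
Combined with the elementary identity $\rho_b\int_{B_R(x)}|x-y|^{-s}\,\dy=\rho_b|\bS^{d-1}|R^{d-s}/(d-s)$ valid for $s\neq d$, and the decomposition $N(r)=\rho_b|\bS^{d-1}|r^d/d+E(r)$, all polynomial-in-$r$ contributions cancel exactly against the background (using $\int_0^R r^{d-s-1}\,\dr=R^{d-s}/(d-s)$), leaving the clean finite-$R$ identity
\begin{equation*}
\sum_{x_j\in B_R(x)}|x-x_j|^{-s}-\rho_b\int_{B_R(x)}|x-y|^{-s}\,\dy = R^{-s}E(R)+s\int_0^R r^{-s-1}E(r)\,\dr,
\end{equation*}
which is the engine of the whole argument.

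Next I would observe that the hypothesis~\eqref{eq:counting_points} gives $|E(R)|\leq CR^{d-\alpha}$ at infinity, so $R^{-s}E(R)\to 0$ whenever $\Re s>d-\alpha$; while near the origin $E(r)=-\rho_b|\bS^{d-1}|r^d/d$ (once $r$ is below the nearest-point distance), so $r^{-s-1}E(r)$ is integrable at $0$ as long as $\Re s<d$. Consequently, in the open strip $d-\alpha<\Re s<d$ the right-hand side above converges as $R\to\infty$ to $s\int_0^\infty r^{-s-1}E(r)\,\dr$, which already proves that the limit in~\eqref{eq:Phi_s_x} exists as a complex number independent of how $R\to\infty$.

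To construct the meromorphic extension and locate the pole, I would fix some $R_0$ past the threshold in~\eqref{eq:counting_points} and split $\int_0^\infty=\int_0^{R_0}+\int_{R_0}^\infty$. For $\Re s>d$ the equality $\Phi(s,x)=s\int_0^\infty r^{-s-1}N(r)\,\dr$ (valid there because $R^{-s}N(R)=O(R^{d-\Re s})\to 0$) rewrites as
\begin{equation*}
\Phi(s,x)=s\int_0^{R_0}r^{-s-1}N(r)\,\dr\,-\,\frac{\rho_b|\bS^{d-1}|\,s\,R_0^{d-s}}{d(d-s)}\,+\,s\int_{R_0}^\infty r^{-s-1}E(r)\,\dr.
\end{equation*}
The first term is entire in $s$ since $N\equiv 0$ near $r=0$; the third is analytic on $\{\Re s>d-\alpha\}$ by the decay bound on $E$; and the middle term is meromorphic on $\C$ with a unique simple pole at $s=d$ of residue $\rho_b|\bS^{d-1}|$ (evaluate $sR_0^{d-s}/d$ at $s=d$). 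Glueing the three pieces yields the meromorphic extension to $\{\Re s>d-\alpha\}$ with the stated unique pole and residue, and agreement with $s\int_0^\infty r^{-s-1}E(r)\,\dr$ throughout $\{d-\alpha<\Re s<d\}$ is a one-line check using $\int_0^{R_0}r^{d-s-1}\,\dr=R_0^{d-s}/(d-s)$, closing the loop with the engine identity above.

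The main subtlety — more bookkeeping than a genuine obstacle — is controlling the boundary term $R^{-s}E(R)$ that survives the integration by parts, together with the exact cancellation eliminating the $R^{d-s}$ pieces between sum and background. That cancellation relies on the geometric constant $|\bS^{d-1}|/d$ linking spherical surface and ball volume, and it is what singles out $\rho_b$ equal to the density as the only admissible background rate. The separation assumption $\inf_{j\neq k}|x_j-x_k|>0$ together with $x\notin X$ are used solely to guarantee $N(r)\equiv 0$ near $r=0$; the exponent $\alpha>0$ enters only through the decay required at infinity.
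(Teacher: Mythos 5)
Your proof is correct and follows essentially the same route as the paper's: a one-dimensional integration by parts against the radial counting data, with the discrepancy hypothesis controlling both the boundary term $R^{-s}E(R)$ and the tail integral. The only cosmetic difference is that the paper works with the signed measure $\nu$ truncated outside $B_1(x)$ while you keep the full counting function $N(r)$ and its error $E(r)$, exploiting $N\equiv 0$ near $r=0$ to make the origin harmless; the mechanism and the resulting pole at $s=d$ are identical.
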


Our background is here a ball centered at the point $x$ for simplicity, but more general situations can be considered. All periodic lattices satisfy~\eqref{eq:counting_points} with a constant $C$ independent of $x$, and $\alpha=2$ in dimensions $d\geq5$~\cite{Gotze-04}, $\alpha<2$ in dimension $d=4$, $\alpha<2-2/(d+1)$ in dimensions $d\in\{2,3\}$~\cite{Landau-15,Landau-24} and $\alpha=1$ in dimension $d=1$. Thus for periodic configurations we can reach $s>d-2$ for $d\geq4$ and $s>d-2+2/(d+1)$ for $d\in\{1,2,3\}$.
There exist better estimates for the cubic lattice~\cite{IviKraKuhNow-06}. The analyticity can be wrong at $s=d-2$ for periodic configurations, see Lemma~\ref{lem:periodic_V} and Remark~\ref{rmk:shift} below.

The range of the analytic extension provided in Lemma~\ref{lem:counting} is optimal under condition~\eqref{eq:counting_points}. Indeed, consider the infinite configuration $Y=\{|k|^{\frac\alpha{d-\alpha}}k,\ k\in\Z^d\}$ which has $O(R^{d-\alpha})$ points in a ball $B_R$ and yields a pole at $s=d-\alpha$. Adding it to a nice (e.g.~periodic) configuration $X$ with only one pole at $s=d$, we obtain a configuration with two poles at $d$ and $d-\alpha$.

\begin{proof}
Let us define
\begin{equation}
 f_R(s)=\sum_{x_j\in B_R(x)}\frac{1}{|x-x_j|^s}-\rho_b\int_{B_R(x)\setminus B_1(x)}\frac{\dy}{|x-y|^s}
 \label{eq:def_f_R}
\end{equation}
and prove that it converges to an analytic function on the half plane $\{\Re(s)>d-\alpha\}$ in the limit $R\to\ii$. We introduce the locally finite measure $\nu:=\sum_{x_j\in\R^d\setminus B_1(x)}\delta_{x_j-x}-\rho_b\1_{\R^d\setminus B_1(0)}$ (we have re-centered the system at 0 for simplicity). Integrating by parts in radial coordinates gives
\begin{equation*}
f_R(s)=\sum_{x_j\in B_1(x)}\frac{1}{|x-x_j|^s}+\int_{|y|\leq R}\frac{\rd\nu(y)}{|y|^s}=\sum_{x_j\in B_1(x)}\frac{1}{|x-x_j|^s}+\frac{\nu(B_R)}{R^s}+s\int_1^R\frac{\nu(B_r)}{r^{s+1}}\dr.
\end{equation*}
The first sum on the right side contains finitely many terms due to the positive distance between the points and it is thus analytic. From the assumption~\eqref{eq:counting_points}, we have $|\nu(B_r)|\leq C(1+r^{d-\alpha})$ and therefore the second term $\nu(B_R)/R^s$ goes to zero for $\Re(s)>d-\alpha$, when $R\to\ii$. The last integral is convergent in the limit $R\to\ii$ and the limiting function
$$f(s)=\sum_{x_j\in B_1(x)}\frac{1}{|x-x_j|^s}+s\int_1^\ii\frac{\nu(B_r)}{r^{s+1}}\dr$$
is analytic on $\{\Re(s)>d-\alpha\}$.

When $\Re(s)>d$, the two terms in~\eqref{eq:def_f_R} converge separately and we find
\begin{equation}
f(s)=\Phi(s,x)-\rho_b\int_{\R^d\setminus B_1(x)}\frac{\dy}{|x-y|^s}=\Phi(s,x)-\frac{\rho_b|\bS^{d-1}|}{s-d}
\label{eq:comput_Phi_f}
\end{equation}
with $\Phi(s,x)$ the potential in the statement. This proves that $s\mapsto \Phi(s,x)$ admits a meromorphic extension to $\{\Re(s)>d-\alpha\}$ with a simple pole at $s=d$, of residue $\rho_b|\bS^{d-1}|$. On the other hand, for $d-\alpha<\Re(s)<d$, we can write
\begin{multline*}
\lim_{R\to\ii}\left(\sum_{x_j\in B_{R}(x)}\frac{1}{|x-x_j|^s}-\rho_b\int_{B_{R}(x)}\frac{\dy}{|x-y|^s}\right)\\
=\lim_{R\to\ii}\left(f_R(s)-\rho_b\int_{B_1(x)}\frac{\dy}{|x-y|^s}\right)=f(s)+\frac{\rho_b|\bS^{d-1}|}{s-d}.
\end{multline*}
This coincides with the meromorphic extension of $\Phi(s,x)$, by~\eqref{eq:comput_Phi_f}.
\end{proof}

The previous result is based on the rather global information~\eqref{eq:counting_points} on the number of points in large balls. The range of validity in $s$ can be improved under some more local assumptions on the positions of the points. The following is a slight generalization of a result in Ref.~\onlinecite{BorBorShaZuc-88,BorBorSha-89,BorBorStr-14}, which only dealt with periodic systems (see also Ref.~\onlinecite{Lauritsen-21}). The spirit is again that each point owns a small piece of the background, of constant volume.

\begin{lemma}[Background as an analytic continuation in $s$ with local bounds]\label{lem:analytic_potential}
Consider an infinite configuration of points $X=\{x_j\}_{j\in\N}\subset\R^d$. Assume that $\R^d=\cup_j\overline{\Omega_j}$ for some disjoints measurable sets $\Omega_j$ satisfying
$|\Omega_j|=\rho_b^{-1}$ and $B_r(x_j)\subset \Omega_j\subset B_{1/r}(x_j)$ for some $0<r<1$ and all $j\geq1$. Let $x\in\R^d\setminus X$. Then the potential $\Phi(s,x):=\sum_{j\geq1}|x-x_j|^{-s}$ admits a meromorphic extension to $\{\Re(s)>d-1\}$ with a unique simple pole at $s=d$. If furthermore
\begin{equation}
\rho_b\int_{\Omega_j}x\,\dx=x_j,\qquad\forall j\geq1,
\label{eq:no_dipole}
\end{equation}
then the same holds on $\{\Re(s)>d-2\}$. For every $d-1<\Re(s)<d$, we have
\begin{equation}
 \Phi(s,x)=\lim_{\ell\to\ii}\left(\sum_{x_j\in \ell\omega}\frac{1}{|x-x_j|^s}-\rho_b\int_{\ell\omega}\frac{\dy}{|x-y|^s}\right)
 \label{eq:Phi_s_x2}
\end{equation}
for all $\omega\subset\R^d$ containing the origin, with $|\omega|=1$ and a boundary as in~\eqref{eq:hyp_domain}. Under the condition~\eqref{eq:no_dipole}, we have for $d-2<\Re(s)<d$
\begin{align}
\Phi(s,x)&=\lim_{\ell\to\ii}\left(\sum_{j}\frac{\chi(x_j/\ell)}{|x-x_j|^s}-\rho_b\int_{\R^d}\frac{\chi(y/\ell)\,\dy}{|x-y|^s}\right)\nn\\
&=\lim_{R\to\ii}\left(\sum_{x_j\in B_R}\frac{1}{|x-x_j|^s}-\rho_b\int_{\!\!\underset{x_j\in B_R}\cup\Omega_j}\frac{\dy}{|x-y|^s}\right)
 \label{eq:regularized_limit}
\end{align}
where $\chi\in C^\ii_c(\R^d)$ is any function such that $\chi(0)=1$.
\end{lemma}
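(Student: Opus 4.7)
The core of the argument is to pair each singular term $|x-x_j|^{-s}$ with the integral of the uniform background over the associated cell $\Omega_j$, and to rely on the Taylor expansion of $y\mapsto|x-y|^{-s}$ about $y=x_j$. Since $B_r(x_j)\subset\Omega_j\subset B_{1/r}(x_j)$ and $\rho_b|\Omega_j|=1$, for $|x-x_j|\ge 2/r$ one obtains
$$\Bigl||x-x_j|^{-s}-\rho_b\int_{\Omega_j}\frac{\rd y}{|x-y|^s}\Bigr|\le C|x-x_j|^{-\Re(s)-1},$$
with the exponent improved to $-\Re(s)-2$ under the no-dipole condition~(\ref{eq:no_dipole}), because the linear term in the Taylor expansion integrates to $\rho_b\int_{\Omega_j}(y-x_j)\rd y=0$. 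Since the cell geometry forces at most $CR^d$ points inside $B_R(x)$, tails of the form $\sum_{|x-x_j|>R_0}|x-x_j|^{-\alpha}$ are finite for every $\alpha>d$; the Taylor-corrected series thus converges absolutely and locally uniformly in $s$ on $\{\Re(s)>d-1\}$, respectively on $\{\Re(s)>d-2\}$ under~(\ref{eq:no_dipole}).

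I would then fix a finite set $J$ of indices large enough that $D_J:=\bigcup_{j\in J}\Omega_j$ is a neighborhood of $x$, set $E_J:=\R^d\setminus D_J$, and define the holomorphic function
$$\tilde\Phi(s,x):=\sum_{j\in J}\frac{1}{|x-x_j|^s}+\sum_{j\notin J}\Bigl(\frac{1}{|x-x_j|^s}-\rho_b\int_{\Omega_j}\frac{\rd y}{|x-y|^s}\Bigr).$$
For $\Re(s)>d$ both sums converge separately, so that $\tilde\Phi(s,x)=\Phi(s,x)-\rho_b\int_{E_J}|x-y|^{-s}\rd y$. Choosing $M$ with $D_J\subset B_M(x)$ and splitting
$$\rho_b\int_{E_J}\frac{\rd y}{|x-y|^s}=\rho_b\int_{E_J\cap B_M(x)}\frac{\rd y}{|x-y|^s}+\frac{\rho_b|\bS^{d-1}|M^{d-s}}{s-d},$$
the first piece is entire in $s$ while the second is meromorphic on $\C$ with a unique simple pole at $s=d$ of residue $\rho_b|\bS^{d-1}|$. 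Rearranging furnishes the claimed meromorphic extension of $\Phi(\cdot,x)$.

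For the identity~(\ref{eq:Phi_s_x2}), let $I:=\{j:x_j\in\ell\omega\}$ and write
\begin{align*}
\sum_{x_j\in\ell\omega}\frac{1}{|x-x_j|^s}-\rho_b\int_{\ell\omega}\frac{\rd y}{|x-y|^s}
&=\sum_{j\in I}\Bigl(\frac{1}{|x-x_j|^s}-\rho_b\int_{\Omega_j}\frac{\rd y}{|x-y|^s}\Bigr)\\
&\quad+\rho_b\int_{D_I\setminus\ell\omega}\frac{\rd y}{|x-y|^s}-\rho_b\int_{\ell\omega\setminus D_I}\frac{\rd y}{|x-y|^s}.
\end{align*}
The two symmetric-difference regions lie in the $(1/r)$-neighborhood of $\partial(\ell\omega)$, of measure $O(\ell^{d-1})$ by~(\ref{eq:hyp_domain}), and on them $|x-y|\gtrsim\ell$; the boundary terms are therefore $O(\ell^{d-1-\Re(s)})\to 0$ for $\Re(s)>d-1$. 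The absolute convergence established in the first step then gives that the first sum converges to $S(s,x):=\sum_j(|x-x_j|^{-s}-\rho_b\int_{\Omega_j}|x-y|^{-s}\rd y)$. A direct computation gives $S(s,x)=\tilde\Phi(s,x)-\sum_{j\in J}\rho_b\int_{\Omega_j}|x-y|^{-s}\rd y$, so the identification $S=\Phi$ reduces to the vanishing (in the meromorphic sense) of $\rho_b\int_{\R^d}|x-y|^{-s}\rd y$, which follows from the splitting at radius $M$ since $\tfrac{\rho_b|\bS^{d-1}|M^{d-s}}{d-s}+\tfrac{\rho_b|\bS^{d-1}|M^{d-s}}{s-d}=0$ for $s\ne d$. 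The second formula in~(\ref{eq:regularized_limit}) is the same argument with no boundary correction, valid down to $d-2<\Re(s)<d$ under~(\ref{eq:no_dipole}); the $\chi$-variant is handled by writing the cutoff error cell by cell and Taylor-expanding $\chi(\cdot/\ell)$ about $x_j$, which gives a cell error of order $\ell^{-1}$ in general and $\ell^{-2}$ under~(\ref{eq:no_dipole}), summing to an error of the same order as the previous boundary terms.

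The main obstacle is that the identity $S(s,x)=\Phi(s,x)$ underpinning the limit formulas couples two series whose individual terms converge on disjoint half-planes: the point sum needs $\Re(s)>d$ while the integral over a full cell containing $x$ needs $\Re(s)<d$. The argument must therefore pass through the holomorphic bridge $\tilde\Phi$ together with the elementary but decisive meromorphic cancellation $\rho_b\int_{\R^d}|x-y|^{-s}\rd y=0$ for $s\ne d$, obtained by balancing the two simple poles produced when splitting the integral at radius $M$ around $x$.
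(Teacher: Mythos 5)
Your proof is correct and rests on the same two pillars as the paper's: the Taylor-expansion bound that each far ``point minus cell'' pair contributes $O(|x_j|^{-\Re(s)-1})$ (improving to $O(|x_j|^{-\Re(s)-2})$ under the no-dipole condition~\eqref{eq:no_dipole}), and a holomorphic bridge series that extends $\Phi$ past $s=d$ and meets the background limits from below. You organize the bridge differently, however. The paper's $f(s)=\sum_j\bigl(|x-x_j|^{-s}-\rho_b\int_{\Omega_j\setminus B_1(x)}|x-y|^{-s}\,\dy\bigr)$ removes the ball $B_1(x)$ uniformly from every cell integral so that all terms are finite for all $s$; your $\tilde\Phi$ instead drops the background entirely on a finite core $J$ of cells covering a neighborhood of $x$ and keeps full cell integrals only for $j\notin J$. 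Both lead to the same analytic function up to explicitly meromorphic corrections, and your treatment of the boundary regions via~\eqref{eq:hyp_domain} and of the cutoff $\chi$ via Taylor expansion about $x_j$ (with the linear term killed by~\eqref{eq:no_dipole}) matches the paper's. What your version adds is the particularly transparent final step: the discrepancy $\Phi-S$ reduces to $\rho_b\int_{B_M(x)}|x-y|^{-s}\dy+\rho_b\int_{\R^d\setminus B_M(x)}|x-y|^{-s}\dy$, where the first piece is the honest integral for $\Re(s)<d$ and the second is the analytic continuation from $\Re(s)>d$; they carry opposite simple poles and sum identically to zero, which makes explicit why the uniform background realizes the analytic continuation in $s$.
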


For balls the first part is a consequence of Lemma~\ref{lem:counting} since the assumptions on the $\Omega_j$'s imply that~\eqref{eq:counting_points} holds with $\alpha=1$. The second part is an improvement for $d-2<\Re(s)\leq d-1$, under the no dipole assumption~\eqref{eq:no_dipole}. The limit~\eqref{eq:regularized_limit} might not exist for general rescaled sets $\Omega=\ell\omega$ as in~\eqref{eq:Phi_s_x}. It is well known that in this context smooth cut-offs (such as on the first line of~\eqref{eq:regularized_limit}) are better behaved (Ref.~\onlinecite[Sec.~3.7]{Tao-13}).
In dimension $d=1$ a similar result holds for $-1<s<0$ but the potential is the opposite of the analytic continuation, due to our choice of sign for $V_s$. At $s=0$ it is the derivative of the analytic continuation.

\begin{proof}
We argue similarly as in Lemma~\ref{lem:counting}. We consider the series
$$f(s)=\sum_{j\geq1}\left(\frac1{|x-x_j|^s}-\rho_b\int_{\Omega_j\setminus B_1(x)}\frac{\dy}{|x-y|^s}\right)$$
which defines an analytic function over the whole half plane $\{\Re(s)>d-1\}$. Indeed, our assumptions on $X$ and the $\Omega_j$'s imply that only finitely many $\Omega_j$ can intersect $B_1(x)$, whereas for the other ones we have by~\eqref{eq:dipole}
\begin{equation}
 \left|\frac1{|x-x_j|^s}-\rho_b\int_{\Omega_j}\frac{\dy}{|x-y|^s}\right|\leq \frac{C}{|x_j|^{\Re(s)+1}}
 \label{eq:dipole_expansion_Omega_j}
\end{equation}
for $j$ large enough. The truncated series
\begin{align*}
f_{\ell}(s)&:=\sum_{x_j\in\ell\omega}\left(\frac1{|x-x_j|^s}-\rho_b\int_{\Omega_j\setminus B_1(x)}\frac{\dy}{|x-y|^s}\right)\\
&=\sum_{x_j\in\ell\omega}\frac1{|x-x_j|^s}-\rho_b\int_{\!\!\underset{x_j\in\ell\omega}\cup \Omega_j\setminus B_1(x)}\frac{\dy}{|x-y|^s},
\end{align*}
converges locally uniformly to $f(s)$ on $\{\Re(s)>d-1\}$. For $\Re(s)>d$ we get $f(s)=\Phi(s,x)-\rho_b|\bS^{d-1}|/(s-d)$ hence $\Phi(s,x)$ admits the mentioned meromorphic extension to $\{\Re(s)>d-1\}$. On the other hand, for $d-1<\Re(s)<d$ and $\ell$ large enough, we have
\begin{equation*}
f_{\ell}(s)=\Phi_\ell(s,x)+\rho_b\int_{B_1}\frac{\dy}{|y|^s}
+\rho_b\left(\int_{\ell\omega}\frac{\dy}{|x-y|^s}-\int_{\!\!\underset{x_j\in\ell\omega}\cup \Omega_j}\frac{\dy}{|x-y|^s}\right),
\end{equation*}
with $\Phi_\ell(s,x)$ the function inside the limit on the right side of~\eqref{eq:Phi_s_x}. The last term involves only the $\Omega_j$ intersecting the boundary of $\ell\omega$ and those are at most at a distance $2r$ from this boundary. It can thus be estimated by
$$\left|\int_{\!\!\underset{x_j\in\ell\omega}\cup \Omega_j}\frac{\dy}{|x-y|^s}-\int_{\ell\omega}\frac{\dy}{|x-y|^s}\right|\leq \frac{C\ell^{d-1}r}{\ell^{\Re(s)}}\to0$$
due to the regularity assumption on $\omega$. We deduce that $\Phi_\ell(s,x)$ admits the limit
$$\lim_{\ell\to\ii}\Phi_\ell(s,x)=f(s)-\rho_b\int_{B_\eps}\frac{\dy}{|y|^s}=f(s)+\frac{\rho_b|\bS^{d-1}|}{s-d}$$
which concludes the proof for $d-1<\Re(s)<d$.

If the dipole vanishes as in~\eqref{eq:no_dipole}, the exponent on the right side of~\eqref{eq:dipole_expansion_Omega_j} is replaced by $\Re(s)+2$ and $f(s)$ is analytic on $\{\Re(s)>d-2\}$. The proof for the second limit~\eqref{eq:regularized_limit} works exactly the same as above. For the first limit we only have to prove that for $d-2<\Re(s)<d$
\begin{equation}
\lim_{\ell\to\ii}\sum_j\int_{\Omega_j\setminus B_1(x)}\frac{\chi_\ell(y)-\chi_\ell(x_j)}{|x-y|^s}\dy=0.
\label{eq:chi_ell}
\end{equation}
This follows from the fact that, for $j$ large enough and $y\in\Omega_j$,
$$\frac{\chi_\ell(y)-\chi_\ell(x_j)}{|x-y|^s}=\frac{\nabla \chi(\ell^{-1}x_j)}{\ell|x_j|^s}\cdot (y-x_j)+O\left(\frac{1}{\ell^2|x_j|^{\Re(s)}}+\frac{1}{\ell|x_j|^{\Re(s)+1}}\right).$$
After integration over $\Omega_j$, the first term vanishes. Summing over $j$ the second and third terms give a $O(\ell^{d-\Re(s)-2})$ for $d-2<\Re(s)<d-1$, a $O(\log\ell/\ell)$ for $\Re(s)=d-1$ and a $O(\ell^{-1})$ for $d-1<\Re(s)<d$.
\end{proof}

\subsubsection{Periodic systems}\label{sec:periodic_potential}
We now discuss the periodic case. Let
$$\cL=v_1\Z+\cdots +v_d\Z$$
be any lattice, with $(v_1,...,v_d)$ a (not necessarily orthonormal) basis of $\R^d$. We call $Q=\{x\in\R^d\ :\ |x|<|x-z|,\ \forall z\in\cL\}$ the Wigner-Seitz cell of $\cL$, which contains all the points closer to the origin than to any other point of the lattice. This is a polyhedron. Note that since $\cL=-\cL$, then we have as well $Q=-Q$. This implies that $Q$ has no dipole: $\int_Q y\,\dy=0$. In the short range case, we define the $\cL$--periodic interaction potential
\begin{equation}
\boxed{V_s^{\rm \cL}(x)=\sum_{z\in\cL}\frac{1}{|x-z|^s},\qquad\text{for $x\notin\cL$ and $s>d$}.}
\label{eq:V_s_periodic_short_range}
\end{equation}
This corresponds to the potential $\Phi(s,x)$ studied in the previous subsection. We also introduce the interaction between any point of the lattice with all the other ones
\begin{equation}
M_{\cL}(s):=\lim_{x\to0}\left(V_s^\cL(x)-\frac1{|x|^s}\right)=\sum_{z\in\cL\setminus\{0\}}\frac1{|z|^s},\qquad\text{for $s>d$,}
\label{eq:Madelung}
\end{equation}
which is called the \emph{Madelung constant of the lattice $\cL$}.\cite{Madelung-18,LieSim-77b,CatBriLio-98} We have $M_\cL(s)=2\zeta_\cL(s)$ where
\begin{equation}
\boxed{\zeta_\cL(s):=\frac12\sum_{z\in\cL\setminus\{0\}}\frac1{|z|^s}}
\label{eq:zeta_L}
\end{equation}
is the \emph{Epstein Zeta function}~\cite{Epstein-06} of the lattice $\cL$. When $d=1$ and $\cL=\Z$ we recover the usual Riemann Zeta function.

Since $\int_Qy\,\dy=0$, Lemma~\ref{lem:analytic_potential} applies with $X=\cL$ and $\Omega_z=z+Q$. It gives that $V^{\cL}_s(x)$ and $M_{\cL}(s)$ possess a meromorphic extension to at least $\{\Re(s)>d-2\}$ and that the potential for $d-2<\Re(s)<d$ can be computed as a limit with a uniform background. It turns out that the extension exists on the whole plane $\C$, with a unique pole at $s=d$.
However, the simple background fails to reproduce the analytic extension for $s\leq d-2$, even when the limit exists. This has caused a lot of confusion. Our goal in this section is to explain where this issue is coming from and how the background can be modified in order to get the analytic extension over the whole of $\C\setminus\{d\}$. Essentially, we will have to introduce some oscillations at the boundary of $\Omega$.

The analytic extension will be given in terms of the $\cL$--periodic potential $V_s^\cL$ which has the same Fourier coefficients as $V_s$ and satisfies $\int_Q V_s^\cL=0$. Its Fourier transform reads
\begin{equation}
\widehat{V_s^\cL}=\begin{cases}
\dps {\rm sgn}(s)\,\frac{2^{\frac{3d}{2}-s}\pi^{d}\,\Gamma\left(\frac{d-s}{2}\right)}{\Gamma\left(\frac{s}{2}\right)|Q|}\sum_{k\in\cL^*\setminus\{0\}}\frac{\delta_k}{|k|^{d-s}}&\text{for $s\in(-2,0)\cup(0,d)$,}\\
\dps \frac{2^{\frac{3d}2+1-s}\pi^{d}\Gamma\left(\frac{d}{2}\right)}{|Q|}\sum_{k\in\cL^*\setminus\{0\}}\frac{\delta_k}{|k|^{d}}&\text{for $s=0$,}
\end{cases}
 \label{eq:def_V_s_L}
\end{equation}
where $\cL^*$ is the lattice dual to $\cL$, that is, the one generated by the dual basis $v_1^*,...,v_d^*$ times $2\pi$.

The following is in the same spirit as Lemmas~\ref{lem:counting} and~\ref{lem:analytic_potential}, except that it is valid for a much larger range of $s$, and that we have to be careful with the signs.

\begin{lemma}\label{lem:periodic_V}
The potential $V_s^\cL(x)$ defined for $\{\Re(s)>d\}$ admits a meromorphic extension $\widetilde{V}_s^\cL(x)$ to $\C$ with a unique and simple pole at $s=d$, and
\begin{equation}
\widetilde{V}_s^\cL(x)= \begin{cases}
 V_s^\cL(x)&\text{for $s\in(0,\ii)\setminus\{d\}$,}\\
% \left.\frac{\partial}{\partial s}V_s^\cL(x)\right|_{s=0}&\text{for $s=0$,}\\
 -V_s^\cL(x)&\text{for $s<0$,}
 \end{cases}
 \label{eq:V_s_extension_analytic}
\end{equation}
where $V_s^\cL$ is the periodic function defined in~\eqref{eq:def_V_s_L}. At $s=0$ we have $V_0^\cL(x)=\frac{\partial}{\partial s}\widetilde V_s^\cL(x)\big|_{s=0}$.
Similarly, we have
$$M_{\cL}(s):=\lim_{x\to0}\left(V_s^\cL(x)-\frac1{|x|^s}\right)=2\begin{cases}
\zeta_\cL(s)&\text{for $s\in(0,\ii)\setminus\{d\}$,}\\
% \zeta_\cL'(0)&\text{for $s=0$,}\\
-\zeta_\cL(s)&\text{for $s<0$.}
\end{cases}
$$
and
$$M_{\cL}(0):=\lim_{x\to0}\left(V_0^\cL(x)+\log|x|\right)=2\zeta_\cL'(0)\qquad \text{for $s=0$.}$$
Assume furthermore that $d-2<s<d$ or that $s=d-2$ and $d\int_Qy_iy_j\,\dy=\int_Q|y|^ 2\,\dy \delta_{ij}$ (no quadrupole). Then we have
\begin{multline}
\lim_{\ell\to\ii}\left(\sum_{z\in \cL\cap \in \ell\omega}V_s(x-z)-\rho_b\int_{\cup_{z\in\ell\omega}(Q+z)}\frac{\dy}{|x-y|^s}\right)\\
=V_s^\cL(x)+ \delta_{d-2}(s)\frac{|\bS^{d-1}|}{2d}\int_Q|y|^2\,\dy+\frac{\log|Q|}{d|Q|}\delta_0(s)
 \label{eq:shift}
\end{multline}
for all $x\in\R^d\setminus \cL$ and $\omega$ containing the origin so that $|\omega|=1$ and $|\partial\omega|=0$.
\end{lemma}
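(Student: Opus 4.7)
The approach rests on the Mellin--theta representation. For $\Re s > d$, I write
\[
V_s^\cL(x) = \frac{1}{\Gamma(s/2)}\int_0^\infty t^{s/2-1}\,\Theta_\cL(t,x)\,{\rm d}t,\qquad \Theta_\cL(t,x) := \sum_{z\in\cL} e^{-t|x-z|^2},
\]
and split the integral at $t=1$. The tail $\int_1^\infty$ is entire in $s$ for $x\notin\cL$ by exponential decay of $\Theta_\cL$ as $t\to\infty$. For the head $\int_0^1$, Poisson summation gives
\[
\Theta_\cL(t,x) = \frac{\pi^{d/2}}{|Q|\,t^{d/2}}\sum_{k\in\cL^*} e^{-|k|^2/(4t)+ik\cdot x}.
\]
Separating the $k=0$ term produces the elementary integral $2\pi^{d/2}/[|Q|(s-d)]$, which carries the simple pole at $s=d$, while the $k\neq 0$ terms become entire in $s$ after the change of variable $u=1/t$. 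Dividing by $\Gamma(s/2)$ yields the meromorphic extension $\widetilde V_s^\cL$ on $\C$ with unique simple pole at $s=d$ of residue $|\bS^{d-1}|/|Q|$; the simple zeros of $1/\Gamma(s/2)$ at $s=0,-2,-4,\dots$ absorb all remaining would-be singularities.

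To identify this extension with the Fourier series~\eqref{eq:def_V_s_L}, I observe that the $k\neq 0$ integrals extend to $\int_0^\infty$ via the substitution $u=|k|^2/(4t)$:
\[
\int_0^\infty t^{(s-d)/2-1}e^{-|k|^2/(4t)}\,{\rm d}t = 2^{d-s}\,\Gamma\!\left(\tfrac{d-s}{2}\right)|k|^{s-d},\qquad \Re s<d,
\]
which reproduces exactly~\eqref{eq:def_V_s_L} for $s\in(0,d)$, so $\widetilde V_s^\cL = V_s^\cL$ there. The sign flip in~\eqref{eq:V_s_extension_analytic} for $s<0$ reflects the convention $V_s(x)=-|x|^{-s}$ from~\eqref{eq:def_V_s_intro}. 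At $s=0$, the simple zero of $1/\Gamma(s/2)$ forces $\widetilde V_0^\cL \equiv 0$, and the definition $V_0^\cL = \partial_s \widetilde V_s^\cL|_{s=0}$ is consistent with $V_0(x) = -\log|x| = -\partial_s |x|^{-s}|_{s=0}$. The Madelung formulas $M_\cL(s) = \pm 2\zeta_\cL(s)$, as well as $M_\cL(0) = 2\zeta_\cL'(0)$, follow by subtracting the $|x|^{-s}$ (respectively $-\log|x|$) divergence as $x\to 0$ and analytically continuing the identity $\sum_{z\neq 0}|z|^{-s} = 2\zeta_\cL(s)$ valid on $\Re s > d$.

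For the background limit~\eqref{eq:shift} in the open range $d-2 < s < d$, I apply Lemma~\ref{lem:analytic_potential} with $\Omega_z = z + Q$. The hypotheses are readily verified: $|z+Q|=1/\rho_b$, the Wigner--Seitz cell satisfies $B_r(z)\subset z+Q \subset B_{1/r}(z)$ for some $r>0$ depending only on $\cL$, and the no-dipole condition $\rho_b\int_{z+Q} y\,{\rm d}y = z$ holds because $Q = -Q$ is a consequence of $\cL=-\cL$. Lemma~\ref{lem:analytic_potential} then identifies the limit with $\widetilde V_s^\cL(x) = V_s^\cL(x)$, which yields~\eqref{eq:shift} in this range where both $\delta$-terms vanish.

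The critical cases $s=d-2$ (assuming the isotropic second-moment hypothesis) and $s=0$ (only relevant in dimensions $d\leq 2$) require refining the expansion by one order. Taylor-expanding around $y=z$ and using $Q=-Q$ together with isotropy yields
\[
\rho_b\int_{z+Q}\!\frac{{\rm d}y}{|x-y|^s}-\frac{1}{|x-z|^s} = \frac{s(s+2-d)}{2d|Q|\,|x-z|^{s+2}}\int_Q|y|^2\,{\rm d}y + O\!\left(|x-z|^{-s-4}\right),
\]
so the leading quadrupole term vanishes identically at $s=d-2$; the $s=0$ case is obtained by differentiating the same expansion in $s$. In either case the residual series converges absolutely and the thermodynamic limit $\Psi_s(x)$ exists. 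Since $V_s^\cL$ has zero mean on $Q$ by construction, the shift equals $|Q|^{-1}\int_Q\Psi_s(x)\,{\rm d}x$; exchanging the $\ell\to\infty$ limit with the $Q$-integration (justified by uniform local bounds) and using Fubini together with the tiling $\cup_{z\in\cL}(z+Q)=\R^d$ reduces the computation to explicit finite integrals producing the stated constants $\frac{|\bS^{d-1}|}{2d}\int_Q|y|^2\,{\rm d}y$ (at $s=d-2$) and $\frac{\log|Q|}{d|Q|}$ (at $s=0$). The main obstacle is precisely this mean-value computation: these shifts originate from boundary contributions that cancel in the open range $d-2<s<d$ but become non-trivial exactly when the quadrupole term in the expansion above vanishes, so tracking them requires careful control of the $\ell\to\infty$ limit uniformly in $s$ near the critical values.
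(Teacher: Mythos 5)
Your Mellin--theta (Epstein/Hecke) route to the meromorphic continuation is a genuine alternative to the paper's argument: the paper instead works with the iterated-screening representation
$V_s^\cL(x)=\sum_{z\in\cL}\bigl(V_s+\sum_{m=1}^{M}(-1)^m\binom{M}{m}V_s\ast(\1_Q/|Q|)^{\ast m}\bigr)(x-z)$,
valid for $\Re s>d-2M$, and pushes $M$ up step by step, essentially adapting Lemma~\ref{lem:analytic_potential}. Both routes are correct; yours is the classical one and gives the residue $|\bS^{d-1}|/|Q|$ at $s=d$ quickly, while the paper's is tuned to make the background--screening interpretation transparent. For $d-2<s<d$ your invocation of Lemma~\ref{lem:analytic_potential} with $\Omega_z=z+Q$ is exactly what the paper does.

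The gap is in the last paragraph, and it is a real one: you have reduced the shift to the single number
$c=\tfrac{1}{|Q|}\int_Q\Psi_s=\rho_b\int_{\R^d}\bigl(V_s-\rho_b\,\1_Q\ast V_s\bigr)$,
which is the correct reduction (the paper arrives at the same integral by comparing the simply- and doubly-screened representations), but you never evaluate it. You assert it ``reduces to explicit finite integrals producing the stated constants'' and then in the very next sentence call this ``the main obstacle,'' which is self-contradictory: either the integral has been computed or it has not. The evaluation is a one-line Fourier computation, not a delicate $\ell\to\infty$ uniformity issue as you suggest. Writing $g=V_s-\rho_b\,\1_Q\ast V_s$, one has $\widehat g(k)=\widehat{V_s}(k)\bigl(1-\rho_b\widehat{\1_Q}(k)\bigr)$; near $k=0$, $1-\rho_b\widehat{\1_Q}(k)=\tfrac{\rho_b}{2}\int_Q(k\cdot y)^2\,\dy+O(|k|^4)$, and with the isotropy hypothesis this is $\tfrac{\rho_b|k|^2}{2d}\int_Q|y|^2\,\dy+O(|k|^4)$. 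Multiplying by $\widehat{V_s}(k)\sim C_{s,d}|k|^{s-d}$ and letting $k\to0$ at $s=d-2$ gives the constant; there are no boundary contributions to track in $\ell$. You should actually carry this out (and similarly the $s=0$ derivative), and while doing so you will want to double-check the precise normalization against the statement, since the residue of $\widehat{V_{d-2}}$ carries a factor $(d-2)|\bS^{d-1}|$ from~\eqref{eq:fund_solution_Coulomb}.

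A second, smaller omission: before speaking of ``the shift'' you must justify that the thermodynamic limit $\Psi_{d-2}(x)$ differs from $V_{d-2}^\cL(x)$ by a constant. Your Taylor expansion shows the defining series converges absolutely, but convergence alone does not identify the limit. The clean argument is that both functions are $\cL$-periodic, locally integrable, and solve the same Poisson equation $-\Delta\Phi=(d-2)|\bS^{d-1}|\bigl(\sum_{z\in\cL}\delta_z-\rho_b\bigr)$ in $\cD'(\R^d)$, so their difference is a periodic harmonic function, hence constant. Insert this before the mean-value step.
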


We discuss the proof informally. The analytic extension is well known and only the formula~\eqref{eq:V_s_extension_analytic} requires an argument. Let us start with the case $\max(0,d-2)<\Re(s)<d$. From the proof of Lemma~\ref{lem:analytic_potential}, we see that it suffices to show the formula
\begin{equation}
V_s^\cL(x)=\sum_{z\in\cL}\left(V_s-|Q|^{-1}\1_Q\ast V_s\right)(x-z)\qquad\text{for $d-2<\Re(s)<d$}.
\label{eq:V_s_periodic_Jellium}
\end{equation}
This is well known and the proof goes as follows. The function $f=V_s-|Q|^{-1}\1_Q\ast V_s$ is integrable. Its Fourier transform is proportional to $|k|^{s-d}(1-(2\pi)^{d/2}|Q|^{-1}\widehat{\1_Q}(k))$. This is a $O(|k|^{2-d+\Re(s)})$ at the origin, hence $\int_{\R^d} f=0$.
Since $\widehat{\1_Q}(k)=0$ for all $0\neq k\in\cL^*$, we have $\widehat{f}(k)=\widehat{V^\cL_s}(k)$ for all $k\in\cL^*$ and thus obtain~\eqref{eq:V_s_periodic_Jellium} from Poisson's summation formula.

At $s=d-2$ and if the quadrupole vanishes, the series on the right of~\eqref{eq:V_s_periodic_Jellium} converges but it does not coincide with $V_s^\cL$. As stated in~\eqref{eq:shift}, there appears a constant shift proportional to $\int_Q|y|^2\dy$.\footnote{The shift $\int_Q|y|^2\,\dy$ is also called the \emph{lattice quantizer}~\cite{ConSlo-99} and it is minimal for the BCC lattice~\cite{BarSlo-83} in 3D.} This was proved in Ref.~\onlinecite{BorBorShaZuc-88,BorBorSha-89,BorBorStr-14,LewLie-15} and is further discussed in Remark~\ref{rmk:shift} below. More screening is required for $s\leq d-2$ and the right formula to be used is rather cumbersome at first sight. It reads
\begin{equation}
V_s^\cL(x)=\sum_{z\in\cL}\left(V_s+\sum_{m=1}^{M}(-1)^m{M\choose m}\,V_s\ast \left(\frac{\1_Q}{|Q|}\right)^{\ast m}\right)\!\!(x-z)\qquad \text{for $s>d-2M$,}
\label{eq:V_s_periodic_general}
\end{equation}
where $\phi^{\ast m}:=\phi\ast\cdots\ast \phi$ denotes the iterated convolution. The new function $f$ in the sum now has a Fourier transform proportional to $|k|^{s-d}(1-(2\pi)^{d/2}|Q|^{-1}\widehat{\1_Q}(k))^{M}$ and the arguments are the same as before. The proof of Lemma~\ref{lem:analytic_potential} can be adapted to handle the new terms and this is how one can get the stated analytic continuation to the whole of $\C\setminus\{d\}$, increasing $M$ step by step.

For instance, for $M=2$ and $d-4<s<d$ we get the doubly-screened potential\cite{LewLie-15,BlaLew-15,Lauritsen-21}
\begin{equation}
 V_s^\cL(x)=\sum_{z\in\cL}\left(V_s-2|Q|^{-1}V_s\ast\1_Q+ |Q|^{-2}V_s\ast\1_Q\ast\1_{Q}\right)(x-z).
 \label{eq:V_s_periodic_double_screening}
\end{equation}
When $d-2<s<d$ this coincides with~\eqref{eq:V_s_periodic_Jellium} due to the fact that
\begin{multline*}
\sum_{z\in\cL}\left(-|Q|^{-1}V_s\ast\1_Q+ |Q|^{-2}V_s\ast\1_Q\ast\1_{Q}\right)(x-z)\\=\frac1{|Q|}\int_{\R^d}\left(V_s-|Q|^{-1}V_s\ast\1_Q\right)(x-y)\,\dy=0,
\end{multline*}
as we have seen above. At $s=d-2$, the last integral does not vanish and provides the shift in~\eqref{eq:shift}. The doubly-screened potential~\eqref{eq:V_s_periodic_double_screening} corresponds to the modified background $\rho_b(2\1_\Omega-|Q|^{-1}\1_\Omega\ast\1_Q)$ instead of the simple sharp background $\rho_b\1_\Omega$. This new background has a slightly different behavior in the neighborhood of $\partial\Omega$ but still converges to the constant $\rho_b$ over $\R^d$. This is how the background has to be modified for periodic systems, in order to recover the expected analytic continuation for low values of $s$. Instead of modifying the background it might be possible to instead deform the lattice close to the boundary.

Since periodic systems have some ergodicity, we can also investigate the \emph{Jellium energy per unit volume}, in addition to the potential. For later purposes, we slightly complicate the situation and assume that we have $N$ points in the unit cell $Q$, which are repeated infinitely many times in space. This corresponds to taking $N$ translated copies of the lattice $\cL$. 

\begin{lemma}[Periodic energy]\label{lem:periodic_energy}
Let $\cL$ be an arbitrary lattice of Wigner-Seitz cell $Q$. Consider $N$ distinct points $x_1,...,x_N\in Q$ and the infinite periodic system $X=\{x_i+z,\ i=1,...,N,\ z\in\cL\}$. Denote by $X_R:=\{x_i+z,\ i=1,...,N,\ z\in\cL\cap B_R\}$ and $\Omega_R:=\cup_{z\in \cL\cap B_R}(z+Q)$ the points and cells intersecting the ball $B_R$. Then, for $d-2<s<d$, the Jellium energy per unit volume converges to
\begin{equation}
\lim_{R\to\ii}\frac{\cE_s\big(X_R,\Omega_R,N|Q|^{-1}\big)}{|B_R|}
=\frac1{|Q|}\bigg(\sum_{1\leq j<k\leq N}V_s^\cL(x_j-x_k)+\frac{NM_\cL(s)}{2}+\frac{\log|Q|}{2d}\delta_0(s)\bigg).
\label{eq:limit_periodic}
\end{equation}
If $|Q|^{-1}\int_Qy\,\dy=N^{-1}\sum_{j=1}^Nx_j$, then the same result holds for $\max(-2,d-4)<s<d$.
\end{lemma}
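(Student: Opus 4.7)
The idea is to split the Jellium energy into contributions from individual unit cells and interactions between pairs of cells, to control the latter via the charge neutrality of each cell, and finally to identify the resulting lattice sums with $V_s^\cL$ and $M_\cL(s)$ via the analytic continuation developed in Lemma~\ref{lem:periodic_V}.

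\emph{Step 1 (cell decomposition).} For each $z\in\cL$, introduce the cell charge $\rho_z:=\sum_{i=1}^N\delta_{x_i+z}-\rho_b\1_{z+Q}$, which satisfies $\rho_z(\R^d)=0$ since $\rho_b|Q|=N$. Setting $N_R:=\#(\cL\cap B_R)$, a direct expansion of~\eqref{eq:def_cE_s_Jellium} yields
$$\cE_s(X_R,\Omega_R,\rho_b)=N_R\,E_{\rm self}+\tfrac12\!\!\sum_{\substack{z,z'\in\cL\cap B_R\\ z\neq z'}}\!\!U(z-z'),$$
where $E_{\rm self}$ is the Jellium energy of the single cell at $z=0$ and
$$U(w):=\iint V_s(u-u'+w)\,d\rho_0(u)\,d\rho_0(u'),\qquad w\in\cL\setminus\{0\},$$
is the (finite) electrostatic interaction between two translates of the fundamental cell.

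\emph{Step 2 (decay of $U$).} Because $\rho_0(\R^d)=0$, both the constant and the linear terms of the Taylor expansion of $V_s(u-u'+w)$ about $(u,u')=(0,0)$ vanish after integration against $\rho_0\otimes\rho_0$. The surviving quadratic term yields
$$U(w)=-P^{\mathrm T}\,\nabla^2 V_s(-w)\,P+O\bigl(|w|^{-s-3}\bigr),\qquad P:=\sum_{j=1}^N x_j-\rho_b\!\int_Q y\,dy.$$
Since $Q=-Q$ implies $\int_Q y\,dy=0$, we have $P=\sum_j x_j$. Thus $|U(w)|\leq C|w|^{-s-2}$ in general, and $|U(w)|\leq C|w|^{-s-4}$ under the centre-of-mass hypothesis $P=0$. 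Either estimate makes the series $\sum_{w\in\cL\setminus\{0\}}U(w)$ absolutely convergent in the corresponding range of $s$.

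\emph{Step 3 (thermodynamic limit).} Using $N_R/|B_R|\to 1/|Q|$ and the tail bounds from Step~2, a boundary-layer estimate gives
$$\lim_{R\to\ii}\frac{\cE_s(X_R,\Omega_R,\rho_b)}{|B_R|}=\frac{1}{|Q|}\left(E_{\rm self}+\tfrac12\!\!\sum_{w\in\cL\setminus\{0\}}\!\!U(w)\right).$$
Concretely, splitting the outer sum over $z\in\cL\cap B_R$ into a bulk part (lattice points at distance at least $R^\theta$ from $\partial B_R$, for some $\theta\in(0,1)$) and a boundary shell, bulk points see the full lattice sum up to a tail $\sum_{|w|\geq R^\theta}|U(w)|\to 0$, while the shell contains only $O(R^{d-1+\theta})$ lattice points, contributing $o(R^d)$ in absolute value.

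\emph{Step 4 (identification).} It remains to recognise
$$E_{\rm self}+\tfrac12\!\!\sum_{w\in\cL\setminus\{0\}}\!\!U(w)=\sum_{j<k}V_s^\cL(x_j-x_k)+\frac{N\,M_\cL(s)}{2}+\frac{\log|Q|}{2d}\delta_0(s).$$
Expanding $U(w)$ into point-point, point-background and background-background pieces, the diagonal $j=k$ contribution to the point-point part yields $\tfrac{N}{2}\sum_{w\neq 0}V_s(w)=NM_\cL(s)/2$. For $j\neq k$, the point-point lattice sums combine with the matching point-background and background-background integrals into
$$\sum_{j<k}\sum_{w\in\cL}\Bigl[V_s(x_j-x_k-w)-|Q|^{-1}\!\!\int_Q V_s(x_j-x_k-w-y)\,dy\Bigr],$$
which by the Jellium representation of $V_s^\cL$ established in the proof of Lemma~\ref{lem:periodic_V} equals $\sum_{j<k}V_s^\cL(x_j-x_k)$ when $d-2<s<d$ (and, under the vanishing-dipole hypothesis, also for $\max(-2,d-4)<s<d$ once the doubly-screened representation~\eqref{eq:V_s_periodic_double_screening} is used). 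At $s=0$ the explicit shift from~\eqref{eq:shift} propagates into the stated correction $\tfrac{\log|Q|}{2d}\delta_0(s)$, which is only nonzero in dimension $d=1$, the unique case where $s=0$ lies strictly above the threshold $d-2$.

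\emph{Main obstacle.} The delicate step is Step~4, where formally divergent rearrangements of a triple sum must be recast into the absolutely convergent lattice series defining $V_s^\cL$. This rearrangement is legitimised precisely by the neutrality (resp.\ the vanishing dipole) of each cell, which dictates the exact range of $s$ in which the lemma holds and makes Lemma~\ref{lem:periodic_V} the natural algebraic identity plugging into the tail-controlled limit of Step~3.
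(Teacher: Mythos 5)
Your Steps 1--3 (cell decomposition, charge-neutrality decay of $U(w)$, and passage to the thermodynamic limit) are exactly the mechanism the paper alludes to, and the decay rates $|w|^{-s-2}$ resp.\ $|w|^{-s-4}$ are correct, giving the stated ranges of $s$. Those steps are sound.

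The genuine gap is in Step~4, and it is not just "delicate" in the sense you acknowledge; the specific rearrangement you assert does not work term-by-term. Two concrete problems. First, the identity $\tfrac{N}{2}\sum_{w\neq 0}V_s(w)=N M_\cL(s)/2$ is false as written for $d-2<s<d$: the series on the left diverges, while $M_\cL(s)$ is defined in Lemma~\ref{lem:periodic_V} as the \emph{analytic continuation}, namely $M_\cL(s)=\sum_{z\neq0}f(z)-|Q|^{-1}\int_Q V_s$ with $f=V_s-|Q|^{-1}V_s\ast\1_Q$, which is a genuinely different finite quantity. Second, the claimed combination of the off-diagonal point-point sum with the point-background and background-background pieces into $\sum_{j<k}\sum_w f(x_j-x_k-w)$ cannot be done summand-by-summand: the pb piece produces $-\rho_b\sum_j\sum_w V_s\!\ast\!\1_Q(x_j+w)$, which depends only on the individual $x_j$, not on differences $x_j-x_k$, so it cannot literally supply the subtraction $|Q|^{-1}V_s\!\ast\!\1_Q(x_j-x_k+w)$ in your bracket. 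What actually saves the rearrangement is a separate algebraic fact, not invoked in your proposal: the $\cL$-periodization $\sum_{w\in\cL}V_s\!\ast\!\1_Q(x+w)$ is \emph{independent of $x$} (its Fourier coefficients on $\cL^*\setminus\{0\}$ vanish because $\hat\1_Q(k)=0$ there), so pb and bb contribute an $x$-independent (formally infinite) constant $C$, and the coefficient of $C$ in $\text{pb}+\text{bb}$ equals the coefficient of $C$ in $\sum_{j<k}V_s^\cL(x_j-x_k)+\tfrac{N}{2}M_\cL(s)$ after the same screening. Without this observation the regrouping of three separately divergent lattice sums is unjustified, and the argument as you present it is incomplete. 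A cleaner and safer way to close Step~4 is to start directly from the representation $V_s^\cL=\sum_{z\in\cL}f(\cdot-z)$ of Lemma~\ref{lem:periodic_V}, note $\int_Q V_s^\cL=0$, and evaluate $\iint V_s^\cL(u-u')\,d\rho_0(u)\,d\rho_0(u')$ with the diagonal removed, comparing both sides as absolutely convergent sums over $\cL$ of the single screened kernel $f$ rather than attempting to pair three unscreened lattice sums.
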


The convergence of the energy holds in a larger range of $s$ than for the potential. This is because the energy can be expressed in terms of the doubly-screened potential~\eqref{eq:V_s_periodic_double_screening}~\cite{LewLie-15,BlaLew-15,Lauritsen-21}. Physically, each cell $z+Q$ contains $N$ points and a uniform background of density $\rho_b=N/|Q|$, hence is neutral. The energy involves the interactions between these cells, which is summable for all $s>d-2$. If each cell has no dipole, this is summable for all $s>d-4$. Recall that $NM_\cL(s)/2$ in~\eqref{eq:limit_periodic} is the interaction of each of the $N$ points with its periodic copies, which is not contained in the first sum. For $N=1$, only the Madelung term remains in Lemma~\ref{lem:periodic_energy} and we conclude that the Jellium energy per unit volume of the infinite lattice $X=\cL$ with uniform background $\rho_b=|Q|^{-1}$ is just $\rho_b\,M_\cL(s)/2$, for the mentioned values of $s$~\cite{Lauritsen-21}.

From Lemmas~\ref{lem:periodic_V} and~\ref{lem:periodic_energy}, we see that the Jellium energy per unit volume of an infinite periodic system with background (the right side of~\eqref{eq:limit_periodic}) equals the analytic continuation of the corresponding energy in the short range case $s>d$ (modulo a sign for $s<0$ and a derivative for $s=0$). Thus the background does the analytic continuation at the level of the energy per unit volume, as it did for the potential.

There are similar results for periodic point processes, which naturally occur at positive temperature. One simple lemma in this direction is as follows.

\begin{lemma}[Energy of periodic point processes]\label{lem:periodic_energy_pt_process}
Let $\mathscr{P}$ be an $\cL$--periodic point process on $\R^d$ with finite local moments, which is repulsive and clustering in the sense that
\begin{equation}
 \left|\rho_\mathscr{P}^{(2)}(x,y)\right|\leq Ce^{-\frac{1}{C|x-y|^a}},\qquad \left|\rho_\mathscr{P}^{(2)}(x,y)-\rho_\mathscr{P}^{(1)}(x)\rho_\mathscr{P}^{(1)}(y)\right|\leq \frac{C}{1+|x-y|^b},
 \label{eq:clustering}
\end{equation}
for some $a,b,C>0$. Denote by $\mathscr{P}_R$ the localization of $\mathscr{P}$ to the set $\Omega_R=\cup_{z\in B_R}\overline{(z+Q)}$. Then for all $s>d$, the average energy per unit volume converges to
\begin{equation}
\lim_{R\to\ii}\frac{\bE_{\mathscr{P}_R}\left[\cE_s(\cdot) \right]}{|\Omega_R|}=\frac1{2|Q|}\iint_{Q\times\R^d}\frac{\rho_\mathscr{P}^{(2)}(x,y)}{|x-y|^s}\,\dx\,\dy.
\label{eq:limit_periodic_clustering_short_range}
\end{equation}
Let $\rho_b=|Q|^{-1}\int_Q\rho_\mathscr{P}^{(1)}$. For $\max(d-2,d-b)<s<d$, the Jellium energy per unit volume converges to
\begin{multline}
\lim_{R\to\ii}\frac{\bE_{\mathscr{P}_R}\left[\cE_s(\cdot,\Omega_R,\rho_b) \right]}{|\Omega_R|}
=\frac1{2|Q|}\iint_{Q\times\R^d}V_s(x-y)\left(\rho^{(2)}_\mathscr{P}(x,y)-\rho_\mathscr{P}^{(1)}(x)\rho_\mathscr{P}^{(1)}(y)\right)\dx\,\dy\\
+\frac1{2|Q|}\iint_{Q\times Q}V^\cL_s(x-y)\rho_\mathscr{P}^{(1)}(x)\rho_\mathscr{P}^{(1)}(y)\, \dx\,\dy
\label{eq:limit_periodic_clustering_Jellium}
\end{multline}
and this is the analytic extension in $s$ of~\eqref{eq:limit_periodic_clustering_short_range}. If $\int_Q y\,\rho^{(1)}_\mathscr{P}(y)\,\dy=0$, the result is valid with $d-4$ instead of $d-2$.
\end{lemma}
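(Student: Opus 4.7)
The plan is to express $\bE_{\mathscr{P}_R}[\cE_s(\cdot,\Omega_R,\rho_b)]$ as an integral against correlation functions, exploit $\cL$-periodicity to average cell-by-cell, and then control the slowly-decaying ``macroscopic'' part by a Taylor expansion. Using the symmetry of $V_s$ and the identity $\rho_b|Q|=\int_Q\rho^{(1)}_\mathscr{P}$, the definition~\eqref{eq:def_cE_s_Jellium} of the Jellium energy produces the algebraic splitting
\begin{equation*}
\bE_{\mathscr{P}_R}[\cE_s(\cdot,\Omega_R,\rho_b)]=\frac12\iint_{\Omega_R\times\Omega_R}V_s(x-y)\big(\rho^{(2)}_\mathscr{P}-\rho^{(1)}_\mathscr{P}\rho^{(1)}_\mathscr{P}\big)\dx\,\dy+\frac12\iint_{\Omega_R\times\Omega_R}V_s(x-y)\,g(x)g(y)\,\dx\,\dy,
\end{equation*}
where $g:=\rho^{(1)}_\mathscr{P}-\rho_b$ is $\cL$-periodic with $\int_Q g=0$.

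In the short-range regime $s>d$ no background is needed, and the expectation reduces to the first integral with $\rho^{(2)}_\mathscr{P}$ in place of $\rho^{(2)}_\mathscr{P}-\rho^{(1)}_\mathscr{P}\rho^{(1)}_\mathscr{P}$. Tiling $\Omega_R$ by the cells $z+Q$ and using the $\cL$-invariance of $\rho^{(2)}_\mathscr{P}$, every cell contributes the same quantity $\int_Q\du\int_{\R^d}V_s(u-w)\rho^{(2)}_\mathscr{P}(u,w)\,\rd w$ up to a boundary remainder of smaller order; the repulsion bound kills the near-diagonal singularity while $s>d$ handles the tail, which yields~\eqref{eq:limit_periodic_clustering_short_range}. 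For $s<d$ the same cell-by-cell argument handles the first (``connected'') integral in our splitting: the clustering bound $|\rho^{(2)}_\mathscr{P}-\rho^{(1)}_\mathscr{P}\rho^{(1)}_\mathscr{P}|\leq C(1+|x-y|)^{-b}$ makes the integrand $\lesssim |x-y|^{-s}(1+|x-y|)^{-b}$, integrable at infinity precisely when $s>d-b$.

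The delicate step is the second (``macroscopic'') integral. I would write it as
\begin{equation*}
\frac12\!\iint_{\Omega_R\times\Omega_R}\!V_s(x-y)g(x)g(y)\,\dx\,\dy=\frac12\!\!\sum_{z,z'\in\cL\cap B_R}\!h(z-z'),\quad h(w):=\!\!\iint_{Q\times Q}\!V_s(u-v+w)g(u)g(v)\,\du\,\dv,
\end{equation*}
and Taylor-expand $V_s(u-v+w)$ for $|w|$ large. Since $\int_Q g=0$ the zeroth and first order contributions vanish; the quadratic term equals $-\tr(D^2V_s(w)\,DD^T)$ with $D:=\int_Q u\,g(u)\,\du$ and is $O(|w|^{-s-2})$; the cubic term vanishes again by $\int g=0$; and under the dipole-free hypothesis $D=0$ the quadratic term also vanishes, leaving a leading contribution of order $O(|w|^{-s-4})$. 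These decays are $\cL$-summable exactly when $s>d-2$ in the generic case and when $s>d-4$ in the dipole-free case, which justifies replacing $\#\{(z,z')\in(\cL\cap B_R)^2:z'-z=w\}$ by $|\Omega_R|/|Q|$ up to a boundary error and passing to the limit term by term.

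It remains to identify $\sum_{w\in\cL}h(w)$. Interchanging the sum with the $(u,v)$-integral and invoking Lemma~\ref{lem:periodic_V} to recognize $\sum_{w\in\cL}V_s(u-v+w)$ as the analytic continuation $V_s^\cL(u-v)$ gives $\sum_w h(w)=\iint_{Q\times Q}V_s^\cL(u-v)g(u)g(v)\,\du\,\dv$. The Fourier identity $\int_Q V_s^\cL=0$ built into~\eqref{eq:def_V_s_L} then eliminates every cross term involving $\rho_b$, producing precisely the second integral on the right-hand side of~\eqref{eq:limit_periodic_clustering_Jellium}. For $s>d$ one has the naive pointwise identity $V_s^\cL=\sum_{z\in\cL}V_s(\cdot-z)$, and a direct periodic splitting of $\iint_{Q\times\R^d}V_s\rho^{(2)}_\mathscr{P}$ shows that formulas~\eqref{eq:limit_periodic_clustering_short_range} and~\eqref{eq:limit_periodic_clustering_Jellium} agree in this range; the separate analyticity in $s$ of the clustering integral on $\{\Re s>d-b\}$ and of $V_s^\cL$ on $\C\setminus\{d\}$ then propagates the equality to the whole common domain, which is the analytic-continuation assertion. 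The main obstacle I expect is the uniform control of $h(w)$ for $|w|$ of moderate size, where the Taylor expansion is not yet valid and one must instead exploit the local integrability of $V_s$ for $s<d$ together with the boundedness of $g$, while cleanly tracking the $o(|\Omega_R|)$ boundary contribution hidden in the counting of pairs.
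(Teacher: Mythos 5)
Your approach is essentially the paper's: you reduce $\bE_{\mathscr P_R}[\cE_s(\cdot,\Omega_R,\rho_b)]$ to the same two pieces (a "connected" integral against $\rho^{(2)}-\rho^{(1)}\rho^{(1)}$ handled by clustering and cell-averaging, plus a "macroscopic" quadratic form in $g=\rho^{(1)}-\rho_b$), and your Taylor expansion of $V_s(u-v+w)$ with the moment conditions $\int_Q g=0$ and $\int_Q u\,g=0$ is an equivalent repackaging of the paper's four-term centered expression. One small slip: the $n=3$ Taylor term does \emph{not} vanish from $\int_Q g=0$ alone; it is proportional to the mixed moment $(\int u_i u_j g)(\int v_k g)$, hence generically of order $|w|^{-s-3}$ and only killed by the dipole-free hypothesis. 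This is harmless for the summability threshold (the quadratic term already controls it), but the statement as written is inaccurate.

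The genuine gap is the step ``interchanging the sum with the $(u,v)$-integral and invoking Lemma~\ref{lem:periodic_V} to recognize $\sum_{w\in\cL}V_s(u-v+w)$ as the analytic continuation $V_s^\cL(u-v)$.'' For $s<d$ the pointwise series $\sum_{w}V_s(u-v+w)$ is divergent, so this interchange is not legitimate; the convergence of $\sum_w h(w)$ comes from the cancellation in the $(u,v)$-integral, not from the summand being absolutely summable after swapping. This is precisely the subtlety behind the constant shift at $s=d-2$ documented in Lemma~\ref{lem:periodic_V} and Remark~\ref{rmk:shift}: na\"ively summing a background over a growing region does \emph{not} reproduce $V_s^\cL$. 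The paper resolves it by keeping the fully centered summand intact, identifying the resulting series as $V_s^\cL(x-y)$ plus a residual integral $\frac1{|Q|}\int_{\R^d}(V_s(u+x-y)-V_s(u-y)-V_s(u+x)+|Q|^{-1}V_s*\1_Q(u))\,\du$, and then showing in Fourier that this residual vanishes for $s>d-2$. Alternatively, you can get the identity $\sum_w h(w)=\iint_{Q\times Q}V_s^\cL\,g\,g$ by analytic continuation from $\{\Re s>d\}$ (both sides are analytic on $\{\Re s>d-2\}$, the $\iint V_s^\cL g\,g$ side because the pole of the continuation at $s=d$ has an $x$-independent residue killed by $(\int_Q g)^2=0$) — this is the spirit of your closing paragraph, but you should apply it here rather than in a separate "agreement" step. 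Finally, the obstacle you flag at the end — controlling $h(w)$ for moderate $|w|$ — is not the real difficulty: it is immediate from the boundedness of $g$ and the local integrability of $V_s$ for $s<d$.
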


This is the periodic extension of similar results in Ref.~\onlinecite{BorSer-13,Leble-16} for translation-invariant point processes  (which, of course, are also periodic). The second term in~\eqref{eq:limit_periodic_clustering_Jellium} is then absent since $\rho^{(1)}_\mathscr{P}$ is constant. The  pointwise upper bound on $\rho^{(2)}_\mathscr{P}$ in~\eqref{eq:clustering} is not at all optimal and has been chosen for simplicity. What we really need is that the point process is sufficiently repulsive so as to make the right side of~\eqref{eq:limit_periodic_clustering_short_range} finite on the diagonal $x=y$. Recall, however, that our short range Riesz Gibbs point processes at inverse temperature $\beta$ all satisfy the exponential repulsion by Lemma~\ref{lem:local_bound_GC}. The clustering property~\eqref{eq:clustering} is also known to be valid, but only for sufficiently small values of $\beta$, as we will explain later in Theorem~\ref{thm:unique}.

\begin{proof}
We remove the index $\mathscr{P}$ on the correlation functions for shortness. The periodicity of the point process implies that $\rho^{(2)}(x+z,y+z)=\rho^{(2)}(x,y)$ and $\rho^{(1)}(x+z)=\rho^{(1)}(x)$ for all $z\in \cL$. In particular, $\rho^{(1)}$ is $\cL$--periodic. Integrating over $y$ in the unit cell $Q$, we obtain from~\eqref{eq:clustering} that $\rho^{(1)}$ is in fact bounded. The localization $\mathscr{P}_R$ has by definition the correlation functions $\rho^{(k)}_{\mathscr{P}_R}=\1_{\Omega_R}^{\otimes k}\rho^{(k)}$ and therefore
$$\bE_{\mathscr{P}_R}\left[\cE_s(\cdot) \right]=\frac12\iint_{\Omega_R\times\Omega_R}V_s(x-y)\rho^{(2)}(x,y)\,\dx\,\dy,$$
by~\eqref{eq:nb_correlation}. Using the periodicity of $\rho^{(2)}$, this converges to the limit mentioned in~\eqref{eq:limit_periodic_clustering_short_range} (see Ref.~\onlinecite[Prop.~2.1]{CatBriLio-01} for a similar argument).

In the Jellium case we have similarly
\begin{align*}
\bE_{\mathscr{P}_R}\left[\cE_s(\cdot,\Omega_R,\rho_b) \right]
&=\frac12\iint_{\Omega_R\times\Omega_R}V_s(x-y)\Big(\rho^{(2)}(x,y)-2\rho^{(1)}(x)\rho_b+\rho_b^2\Big)\dx\,\dy\\
&=\frac12\iint_{\Omega_R\times\Omega_R}V_s(x-y)\Big(\rho^{(2)}(x,y)-\rho^{(1)}(x)\rho^{(1)}(y)\Big)\dx\,\dy\\
&\qquad +\frac12\iint_{\Omega_R\times\Omega_R}V_s(x-y)\big(\rho^{(1)}(x)-\rho_b\big)\big(\rho^{(1)}(y)-\rho_b\big)\dx\,\dy.
\end{align*}
Under our assumption~\eqref{eq:clustering}, when we divide by $|B_R|$ the first term converges to
$$\frac1{2|Q|}\iint_{Q\times\R^d}V_s(x-y)\left(\rho^{(2)}(x,y)-\rho^{(1)}(x)\rho^{(1)}(y)\right)\dx\,\dy,$$
similarly as in the short range case. Since $\rho^{(1)}$ is periodic and $\int\rho^{(1)}=\rho_b|Q|$, the second term can be written as
\begin{align}
&\frac12\iint_{\Omega_R\times\Omega_R}V_s(x-y)\big(\rho^{(1)}(x)-\rho_b\big)\big(\rho^{(1)}(y)-\rho_b\big)\, \dx\,\dy\nn\\
&=\frac12\iint_{Q\times Q}\rho^{(1)}(x)\rho^{(1)}(y)\sum_{z,z'\in B_R}\bigg(V_s(x-y+z-z')-|Q|^{-1}\int_QV_s(u-y+z-z')\,\du\nn\\
&\quad -\frac1{|Q|}\int_QV_s(x-v+z-z')\,\dv+\frac1{|Q|^2}\iint_{Q\times Q}V_s(u-v+z-z')\du\,\dv \bigg) \dx\,\dy.\label{eq:rho_1_P_split}
\end{align}
The function in the parenthesis is summable in $z-z'$ when $d-2<s<d$. Passing to the limit, we obtain
$$\frac1{2|Q|}\iint_{Q\times Q}f(x,y)\rho^{(1)}(x)\rho^{(1)}(y) \dx\,\dy$$
with the $\cL$--periodic kernel (recall~\eqref{eq:V_s_periodic_Jellium})
\begin{align*}
f(x,y)&=\sum_{z\in\cL}\bigg(V_s(x-y+z)-\frac1{|Q|}\int_QV_s(u-y+z)\,\du\\
&\qquad-\frac1{|Q|}\int_QV_s(x-v+z)\,\dv +\frac1{|Q|^2}\iint_{Q\times Q}V_s(u-v+z)\du\,\dv\bigg)\\
&=V_s^\cL(x-y)+\frac1{|Q|}\int_{\R^d}\!\!\bigg(V_s(u+x-y)-V_s(u-y)-V_s(u+x)+\frac1{|Q|}V_s\ast\1_Q(u)\bigg)\du.
\end{align*}
From the behavior in Fourier we see that the last integral vanishes for $s>d-2$.
The proof of the analytic extension is similar to that of Lemma~\ref{lem:analytic_potential}. In the case $s>d-4$ we use that $\rho^{(1)}_\bP$ has no dipole to introduce more cancellation in~\eqref{eq:rho_1_P_split}.
\end{proof}

\begin{remark}[Potential of infinite periodic Coulomb systems]\label{rmk:shift}
Consider a lattice without quadrupole as in Lemma~\ref{lem:periodic_V}. Let
$$\Phi^{(0)}:=\lim_{\ell\to\ii}\left(\sum_{0\neq z\in \cL\cap \in \ell\omega}V_s(z)-\rho_b\int_{\cup_{z\in\ell\omega}(Q+z)}\frac{\dy}{|y|^s}\right)$$
be the interation energy of any point (e.g. the origin) with the rest of the system, where the background is a union of cells.
For $d-2<s<d$ this is equal to the Madelung constant $M_\cL(s)=2\zeta_\cL(s)$ by Lemma~\ref{lem:periodic_V}, which coincides with twice the energy per unit volume by Lemma~\ref{lem:periodic_energy}. However, the two do not coincide in the Coulomb case by~\eqref{eq:shift}. This is very confusing and has generated some controversy in the literature. We learnt this from Ref.~\onlinecite{BorBorShaZuc-88,BorBorSha-89,BorBorStr-14} and this has led to important technical difficulties for the uniform electron gas~\cite{LewLie-15,CotPet-19b,LewLieSei-19b}.
It turns out that the controversy started much earlier in 1979, after a paper of Hall~\cite{Hall-79} based on an unpublished remark by Plaskett in 1959. The conendrum raised by Hall was discussed in several papers in the 80s.\cite{DeWette-80,IhmCoh-80,HalRic-80,Hall-81,AlaJan-81,NijRui-88} The conclusion is that although the energy per unit volume is unambiguous, the point interaction $\Phi^{(0)}$ and the potential $\Phi(x)$ are only defined up to a constant for a periodic Coulomb system. One can get different shifts for the same configuration depending on how the background grows\cite{NijRui-88}. The electric field is itself defined unambiguously. This complicates a lot the study of infinite Coulomb systems. Similar problems occurr for random systems~\cite{CanLahLew-13,Lahbabi-PhD} at positive temperature~\cite{ChoFavGru-80}.
\end{remark}

\subsection{The crystallization conjecture}\label{sec:crystal_conjecture}
In physics, it is often believed that interacting systems are periodic (crystallized) at small (hence in particular at zero) temperature. This is called the \emph{crystallization conjecture}~\cite{Uhlenbeck-68,Radin-87,BlaLew-15} and its physical aspects will be discussed more thoroughly in Section~\ref{sec:transitions}. Although this is certainly not valid for all possible interactions, it is believed that the conjecture holds for Riesz gases.

At zero temperature, the precise statement is that optimal configurations are all Bravais lattices, for all admissible $s$ (that is, are periodic with exactly one point per unit cell). Due to Lemmas~\ref{lem:periodic_V} and~\ref{lem:periodic_energy}, the formulation of this conjecture in terms of the Epstein Zeta function reads as follows.

\begin{conjecture}[Crystallization -- energy]\label{conj:crystal}
For any $s\in (-2,\ii)\setminus\{d\}$, the minimal energy per unit volume of the Riesz gas is
\begin{equation}
e(s)=\min_{\substack{\cL\\ |Q|=1}}\begin{cases}
\zeta_\cL(s)&s>0,\\
\zeta_\cL'(0)&s=0,\\
-\zeta_\cL(s)&s<0.
\end{cases}
\label{eq:conj_crystal}
\end{equation}
\end{conjecture}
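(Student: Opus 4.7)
The plan is to establish~\eqref{eq:conj_crystal} as a matching upper and lower bound. The upper bound is an immediate consequence of the periodic machinery of Section~\ref{sec:periodic}: for any Bravais lattice $\cL$ with Wigner-Seitz cell of unit volume, the energy per unit volume of the periodic configuration $X=\cL$ at background density $\rho_b=1$ equals the right-hand side of~\eqref{eq:conj_crystal} evaluated at that lattice. This is elementary for $s>d$ (Madelung summation, giving $M_\cL(s)/2=\zeta_\cL(s)$), follows from Lemma~\ref{lem:periodic_energy} for $d-2<s<d$ (and actually for $d-4<s<d$ thanks to the vanishing quadrupole of Bravais lattices), and for the remaining $s\in(-2,\infty)\setminus\{d\}$ follows from the meromorphic continuation of Lemma~\ref{lem:periodic_V} together with the explicit identification of $M_\cL$ with $\pm 2\zeta_\cL$ (and with $2\zeta_\cL'(0)$ at $s=0$). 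Feeding $X=\cL$ as a trial configuration into the variational principle~\eqref{eq:var_principle_energy_per_unit_vol} of Remark~\ref{rmk:energy_per_unit_vol} (short range) or into its Jellium counterpart derived from Theorem~\ref{thm:limit_C_Jellium} (long range), and then minimizing over $\cL$, gives the $\leq$ direction of~\eqref{eq:conj_crystal}.

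For the matching lower bound I would proceed in two stages. First, reduce the problem to stationary (translation-invariant) point processes of intensity $1$: by the variational formulation of $e(s)$ as a minimum of the mean (Jellium) energy per unit volume over stationary processes developed in~\cite{BorSer-13,Leble-16}, combined with an ergodic decomposition, it is enough to prove the lower bound on ergodic stationary processes. Second, show that the infimum over such processes is attained by (a random translate of) some Bravais lattice $\cL_\star$, and then optimize $\zeta_{\cL_\star}(s)$ over unimodular lattices using the classical theory of the Epstein zeta function (Rankin, Cassels, Diananda and Ennola in $d=2$; the conjectured BCC/FCC alternation in $d=3$; universal optimality of $E_8$ and the Leech lattice in $d\in\{8,24\}$).

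The hard part is the crystallization step in stage two: proving that ergodic stationary minimizers are lattice-supported. For $d=1$ and $s>0$ this can be carried out by Ventevogel's convexity argument~\cite{Ventevogel-78}, since $r\mapsto r^{-s}$ is strictly convex and decreasing on $(0,\infty)$, and a symmetrization over translates pushes any stationary process of intensity $1$ toward the equispaced lattice $\Z$. For $d\in\{8,24\}$ and a suitable range of $s$, it would follow from the modular-forms based linear programming of Cohn--Kumar--Miller--Radchenko--Viazovska~\cite{CohKumMilRadVia-19_ppt}, extended through the Jellium range via the analytic continuation of Lemmas~\ref{lem:periodic_V} and~\ref{lem:periodic_energy}. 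In every other dimension the natural route is a Cohn--Kumar-type LP bound: produce a radial Schwartz function $f$ with $f\geq V_s$ pointwise, $\widehat f\leq 0$ outside a small neighborhood of the origin, and $\widehat f(0)$ matching the conjectured minimum; Poisson summation on any unimodular lattice would then deliver the required lower bound. Constructing such an auxiliary $f$ in dimensions $d\in\{2,\ldots,7\}\cup(\{9,10,\ldots\}\setminus\{24\})$ is widely believed to be out of reach of pure linear programming in the absence of the extra modular symmetry available only at $d\in\{8,24\}$, so a genuinely new idea would be required there. This is precisely the step where~\eqref{eq:conj_crystal} remains open, and I see no realistic route around it.
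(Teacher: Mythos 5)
The statement you were asked to prove is labelled a \emph{Conjecture} in the paper (Conjecture~\ref{conj:crystal}), not a theorem, and the paper offers no proof of it in the stated generality: Theorem~\ref{thm:crystallization} records exactly which cases are settled ($d=1$ with $s\in\{-1\}\cup[0,1)\cup(1,\infty)$, and $d\in\{8,24\}$ with $s\in[d-2,\infty)\setminus\{d\}$), and the surrounding text explicitly flags the rest as open. You correctly detected this: your final paragraph identifies the crystallization lower bound as the missing ingredient and states that you see no route around it. That honest assessment is the right answer here, and it matches the paper's own position.

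On the details: your upper bound argument (feed a Bravais lattice $\cL$ as a trial configuration, invoke Lemmas~\ref{lem:periodic_V} and~\ref{lem:periodic_energy} to identify the Jellium energy per unit volume with $\pm\zeta_\cL(s)$ or $\zeta_\cL'(0)$, then optimize over unimodular $\cL$) is sound and is exactly how one obtains the $\leq$ direction in the regimes covered by those lemmas. Your account of the lower bound is also consistent with the paper's discussion of where it is proven, though a few attributions are slightly off. In $d=1$, Ventevogel's convexity argument covers $s\in(1,\infty)$; the case $s=0$ is due to Sandier--Serfaty, $s\in(0,1)$ to Lebl\'e, and $s=-1$ to Kunz; for other $s\in(-2,0)$ in $d=1$ the link between the periodic optimality of $\Z$ and the Jellium energy $e(s)$ has not been rigorously established. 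For $d\in\{8,24\}$ the passage from the Cohn--Kumar universal optimality at $s>d$ down to the Jellium range $d-2\le s<d$ is a theorem of Petrache--Serfaty (cited after Conjecture~\ref{conj:crystal}), not merely the analytic continuation of Lemmas~\ref{lem:periodic_V}--\ref{lem:periodic_energy}. And your ``widely believed to be out of reach of pure LP'' remark is exactly the point the paper makes: the missing linear-programming bound outside $d\in\{1,8,24\}$ is the genuine open problem, and your proposal, appropriately, does not claim to close it.
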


We have in fact only defined $e(s)$ for $s\geq\max(0,d-2)$ in Theorem~\ref{thm:limit_C_Jellium} and $s=-1$ in $d=1$, but we mentioned that $e(s)$ probably exists for all $s>-2$. The existence of the limit is part of the conjecture for $s<d-2$.

In dimension $d=1$ no minimum over $\cL$ is necessary and the conjecture involves the Riemann Zeta function. In higher dimensions, it is still needed to determine the optimal lattice $\cL$, which could depend on $s$. From the conjecture it would follow that in the regions where the optimal lattice $\cL$ is independent of $s$, $e(s)$ is analytic (up to the sign change at $s=0$ and the pole at $s=d$).

In dimension $d=1$ the conjecture has been proved in many cases. The easiest is  the Coulomb case $s=-1$ for which the Jellium energy of $N$ points in the interval $I_N=[-N/2,N/2]$ can be expressed as
\begin{align}
\cE_{-1}(X,I_N,1)&=-\sum_{1\leq j<k\leq N}|x_j-x_k|+\sum_{j=1}^N\int_{-\frac{N}{2}}^{\frac{N}{2}}|x_j-y|\,\dy-\frac12\iint_{[-\frac{N}{2},\frac{N}{2}]^2}|x-y|\dx\,\dy\nn\\
&=\sum_{j=0}^{N-1}\left(x_j-j+\frac{N+1}{2}\right)^2+\frac{N}{12}\label{eq:1D_Jellium_energy}
\end{align}
if we order the points so that $x_1\leq\cdots\leq x_N$, see Ref.~\onlinecite[Eq.~(5)]{Kunz-74}. From this we see immediately that the unique minimizer is when the points are in the center of their unit cell, $x_j=j-(N+1)/2$. Passing to the limit $N\to\ii$ provides $e(-1)=1/12=-\zeta(-1)$, as claimed. In the short range case $s\in(1,\ii)$, the result $e(s)=\zeta(s)$ is due to Ventevogel~\cite{Ventevogel-78} (see also Ref.~\onlinecite{MarMayRakSaf-04,CohKum-07,HarLebSafSer-18,BorHarSaf-19}). For $s=0$, the equality is due to Sandier and Serfaty~\cite{SanSer-14a} (see also Ref.~\onlinecite{Leble-15}). The case $s\in(0,1)$ is handled by~\textcite{Leble-16}. For points on the circle $\bS^1$, crystallization holds at any finite $N$ for all $s>-2$ (see Theorem~\ref{thm:limit_C_per} below as well as Ref.~\onlinecite{Toth-59,Ventevogel-78} and Ref.~\onlinecite[Thm.~2.3.3]{BraHarSaf-09}), but the link to the Jellium model does not seem to have been established rigorously for $s<0$, $s\neq-1$. All these results rely deeply on the convexity of $V_s$ as a function of $|x|$ and on the fact that one can express everything in terms of the distance between consecutive points on the line or on a curve.

The \emph{Cohn-Kumar conjecture}~\cite{CohKum-07} states that some special lattices $\cL$ are \emph{universally optimal} in particular dimensions. Universal optimality means that the lattice is the global minimizer for every completely monotone short-range interaction potential of the square distance (equivalently, every Gaussian $e^{-\alpha|x|^2}$). This covers our Riesz potential $V_s$ for all $s>d$. \textcite{PetSer-20} have proved that the Cohn-Kumar conjecture would in fact imply Conjecture~\ref{conj:crystal} not only for $s>d$ but also for all $\max(0,d-2)\leq s<d$. In this case, the optimal lattice $\cL$ in~\eqref{eq:conj_crystal} is therefore independent of $s$ on $[\max(0,d-2),\ii)\setminus\{d\}$, so that $e(s)$ is analytic and equals the corresponding Epstein Zeta function $\zeta_\cL(s)$.

With the above definition, the universal optimality of $\cL=\Z$ in dimension $d=1$ was in fact proved by Ventevogel and Nijboer in Ref.~\onlinecite{VenNij-79-2}. More was later shown in Ref.~\onlinecite{CohKum-07} (see also Ref.~\onlinecite{,CohKumMilRadVia-19_ppt}). The Cohn-Kumar conjecture in dimension $d=2$ states that the triangular lattice is universally optimal and it is still an open problem. It would imply that $e(s)$ is also analytic in dimension $d=2$ (again up to the pole at $s=2$ and the derivative at $s=0$).

The Cohn-Kumar conjecture was recently proved in the groundbreaking work Ref.~\onlinecite{CohKumMilRadVia-19_ppt} by Cohn \emph{et al.} in dimensions $d\in\{8,24\}$ with, respectively, the $E_8$ and Leech lattices. The current situation is, therefore, the following.

\begin{theorem}[Crystallization]\label{thm:crystallization}
Conjecture~\ref{conj:crystal} is true for all $s\in\{-1\}\cup[0,1)\cup(1,\ii)$ in dimension $d=1$ and all $s\in [d-2,\ii)\setminus\{d\}$ in dimensions $d\in\{8,24\}$.
\end{theorem}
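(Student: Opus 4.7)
The plan is to assemble the several known results that together establish Conjecture~\ref{conj:crystal} on the stated ranges. In each regime the strategy is to reduce the infinite minimization problem to a lattice problem and then invoke the universally optimal lattice for that lattice problem.

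For dimension $d=1$ and $s=-1$, I exploit the exact identity~\eqref{eq:1D_Jellium_energy} derived above: after reordering $x_1\leq\cdots\leq x_N$ inside $I_N=[-N/2,N/2]$, the 1D Coulomb Jellium energy equals $\sum_{j=0}^{N-1}(x_j-j+(N+1)/2)^2+N/12$, which is uniquely minimized by the integer lattice translated to the center, yielding $e(-1)=1/12=-\zeta_\Z(-1)$. For $s\in(1,\ii)$ I follow Ventevogel: ordering the points on the line and rewriting the pairwise interactions in terms of the consecutive spacings $d_i=x_{i+1}-x_i$, convexity of $t\mapsto t^{-s}$ on $\R_+$ together with the total-length constraint forces equal spacing at the minimum, giving $e(s)=\zeta_\Z(s)$. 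For the 1D log gas $s=0$ and the 1D long-range range $s\in(0,1)$, I invoke the Sandier--Serfaty and Lebl\'e renormalized-energy framework: the subleading order of $E_s(N,I_N,1)$ is encoded in a renormalized electric-field energy on stationary point processes of density one, whose unique minimizer in dimension one is shown to be $\Z$ via a Stieltjes-transform/convexity analysis; passing to the thermodynamic limit through Theorem~\ref{thm:limit_C_Jellium} identifies $e(s)$ with $\zeta_\Z(s)$, and with $\zeta'_\Z(0)$ at $s=0$.

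For dimensions $d\in\{8,24\}$ I split at $s=d$. In the short-range regime $s>d$, applying the universal optimality of $E_8$ and the Leech lattice established in~\cite{CohKumMilRadVia-19_ppt} to the completely monotone function $f(r)=r^{-s/2}$ gives $\zeta_{\cL_0}(s)\leq \zeta_{\cL'}(s)$ for every lattice $\cL'$ of unit covolume, with $\cL_0\in\{E_8,\text{Leech}\}$; the matching crystallization lower bound also proved in that reference shows that the thermodynamic-limit minimizer of $E_s(N,\Omega)/|\Omega|$ is itself a lattice, hence $e(s)=\zeta_{\cL_0}(s)$. In the long-range Jellium regime $s\in[d-2,d)$, I apply the Petrache--Serfaty transfer~\cite{PetSer-20}: by Lemma~\ref{lem:periodic_energy} the Jellium energy per unit volume of any lattice still equals $\zeta_\cL(s)$, and the Laplace representation $|x|^{-s}=\Gamma(s/2)^{-1}\int_0^\ii e^{-t|x|^2}t^{s/2-1}\,\dt$ allows one to integrate the Gaussian universal optimality over $t\in(0,\ii)$ and deduce $\zeta_{\cL_0}(s)\leq \zeta_{\cL'}(s)$ for every competing lattice; the matching lower bound $e(s)\geq \zeta_{\cL_0}(s)$ then comes from combining the variational principle~\eqref{eq:var_principle_energy_per_unit_vol} with a periodic trial configuration and a screening argument ruling out non-periodic minimizers.

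The principal obstacle in this chain is the Cohn--Kumar--Miller--Radchenko--Viazovska universal-optimality theorem itself, whose proof relies on a sophisticated Fourier-interpolation construction built on Viazovska's magic functions from modular forms and is far outside the scope of the present review. Secondary but substantive difficulties are the Petrache--Serfaty transfer from the Gaussian to the Riesz-Jellium setting in dimensions $8$ and $24$, which requires both a periodic upper bound (Lemma~\ref{lem:periodic_energy}) and a lower bound ruling out symmetry-breaking defects via local-energy and screening controls such as Lemma~\ref{lem:separation}, and in the 1D long-range case $s\in[0,1)$ the delicate renormalized-energy lower bound of Sandier--Serfaty and Lebl\'e that passes from the discrete Jellium to an ergodic average over stationary processes.
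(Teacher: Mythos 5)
Your proposal assembles exactly the same chain of references on which the paper's (purely citational) proof rests: the explicit formula~\eqref{eq:1D_Jellium_energy} for $s=-1$, Ventevogel's convexity argument for $s>1$, the Sandier--Serfaty/Lebl\'e renormalized-energy results for $s\in[0,1)$, and Cohn--Kumar--Miller--Radchenko--Viazovska combined with the Petrache--Serfaty transfer for $d\in\{8,24\}$, so the structure matches. The one imprecision is in the extra detail you supply for the long-range regime: the superposition $|x|^{-s}=\Gamma(s/2)^{-1}\int_0^\infty e^{-t|x|^2}t^{s/2-1}\,\dt$ summed over a lattice diverges at $t=0$ when $s<d$, so the comparison $\zeta_{\cL_0}(s)\le\zeta_{\cL'}(s)$ there actually requires the analytic continuation via the Epstein functional equation (using the theta-function inequality at both the primal and dual lattices), and the real content of~\cite{PetSer-20} is the comparison against \emph{non-lattice} configurations; but since you ultimately invoke the correct references, this slip does not undermine the argument.
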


There is no universally optimal lattice in the physical dimension $d=3$. Numerical computations~\cite{Fuchs-35,PlaNijHov-51,ColMar-60,Nijboer-75,BonMar-77,BlaLew-15,BetSamTra-21_ppt} suggest that the optimal lattice is Face-Centered Cubic (FCC) for $s\geq3/2$ and Body-Centered Cubic (BCC) for $0<s\leq3/2$. Both should be optimal at $s=3/2$. This conjecture can be traced back to Nijboer in 1975 for the specific values $s\in\{1,3/2,2\}$ (see Ref.~\onlinecite[p.~83]{Nijboer-75}), and is now usualy called the \emph{Sarnak-Str\"ombergsson conjecture}\cite{SarStr-06}. If the conjecture is correct, $e(s)$ is therefore an analytic function on $(0,\ii)\setminus\{3/2,3\}$, with a jump in its derivative at $s=3/2$ (first order phase transition) and a pole at $s=3$. The analyticity over $(0,\ii)\setminus\{3\}$ is wrongly conjectured in Ref.~\onlinecite{BraHarSaf-12,BorHarSaf-19}.

Conjecture~\ref{conj:crystal} has a long history in physics and mathematical physics. It is a general principle that zero-temperature systems are ``very often'' crystallized. This fundamental problem was formalized for instance in Ref.~\onlinecite{Uhlenbeck-68,Radin-87,BlaLew-15}. The case of the Riesz potential was considered in many physical works, for both $s>d$ and $s<d$, as we will outline in Section~\ref{sec:transitions}. In the 3D Coulomb case (or more generally $s=1$ and $d\in\{1,2,3\}$), the conjecture is due to Wigner~\cite{Wigner-34,Wigner-38} in 1934. When discussing electrons, he wrote in Ref.~\onlinecite[p.~1010]{Wigner-34} that \emph{``if the electrons had no kinetic energy, they would settle in configurations which correspond to the absolute minima of the potential energy. These are closed-packed lattice configurations, with energies very near to that of the body-centered lattice.''} A year later, Fuchs~\cite{Fuchs-35} made the careful computation of the Jellium energy of the BCC lattice $\zeta_{\rm BCC}(1)$ and confirmed it is better than other lattices (see also Ref.~\onlinecite{SarMerCal-76}). That this is related to the Epstein Zeta function was known since works of Madelung~\cite{Madelung-18}, Ewald~\cite{Ewald-21} and Emersleben~\cite{Emersleben-23}. The analytic extension is implicit in the formulas involving Theta functions used in the literature\cite{PlaNijHov-51,ColMar-60,Sholl-67,BonMar-77,BorGlaMPh-13,GiuVig-05} but to our knowledge it was formalized first in Ref.~\onlinecite{BorBorTay-85,BorBorShaZuc-88,BorBorSha-89,BorBorStr-14}.

A \emph{Wigner crystal} of (quantum) electrons is very hard to observe in the laboratory, due to the necessity of working at very low temperature and density (to suppress quantum fluctuations). This is a very active field of research in experimental physics. Crystallization has only been observed very recently, for one-dimensional systems in Ref.~\onlinecite{Shapir-etal-19}, in 2D in Ref.~\onlinecite{Li_etal-21} and in 3D in Ref.~\onlinecite{Smolenski_etal-21,Zhou_etal-21}, thus largely confirming Wigner's prediction.

The 2D Coulomb case ($s=0$ in $d=2$) also has a long history in physics, since it describes the behavior of vortices in superfluids~\cite{Aftalion-07} and superconductors~\cite{BetBreHel-94,SanSer-12,Serfaty-14b}. In this setting, the expected minimizer (the triangular lattice) is often called the \emph{Abrikosov lattice}~\cite{Abrikosov-57}. This was observed very quickly in experiments for superconductors~\cite{CriJacMadFar-64,Hess_etal-89} and much later for rotating Bose-Einstein condensates~\cite{Dalibard-00,Ketterle-01}.

There are many computations in the physics literature supporting the conjecture. We defer the discussion of these results to Section~\ref{sec:transitions}, where we add the temperature. Conjecture~\ref{conj:crystal} also appears in many forms in the mathematical literature. Part of it can be found  in Ref.~\onlinecite{BetBreHel-94,AftBla-06,SarStr-06,CohKum-07,SanSer-12,BraHarSaf-12,SanSer-15,BetSan-18,BorHarSaf-19} but there are many other examples. We refer to Ref.~\onlinecite{BlaLew-15} for more details.

\begin{remark}[Optimal lattices]
Finding the optimal lattice $\cL$ for the minimum on the right side of~\eqref{eq:conj_crystal} can be particularly challenging. The works in this direction are reviewed in detail in Ref.~\onlinecite{BlaLew-15}. In the two-dimensional case, it is known that the triangular lattice is the best among lattices for all $s\geq0$~\cite{Rankin-53,Cassels-59,Ennola-64,Diananda-64,Montgomery-88,NonVor-98,CheOsh-07,SanSer-12}. In $d=3$, it is only known that the FCC lattice is the best for large values of $s$\cite{Ennola-64b,Ryshkov-73}. Stability properties of 3D lattices under small deformations are studied in Ref.~\onlinecite{Fields-80,Gruber-12,Betermin-19}.
\end{remark}

Most authors only consider the minimal energy and do not attempt to prove that minimizers are really crystallized in the thermodynamic limit\cite{KatDun-84,BlaLew-15}.
We would like to suggest a more general problem and instead ask whether all the possible \emph{infinite Riesz equilibrium configurations} are crystallized. This is more general since, in principle, some of these optimal configurations could only occur in a thermodynamic limit with exotic boundary conditions. Nothing like this has been shown to our knowledge, even in dimension $d=1$ where we nevertheless expect the proof to be quite similar to the existing results.

Recall that we have defined Riesz equilibrium configurations at any chemical potential $\mu>0$ in Definition~\ref{def:equilibrium} when $s>d$. When $s<d$ we can introduce a similar definition using Theorem~\ref{thm:infinite_conf_Jellium} for $\max(d-2,0)\leq s<d$, with $\mu\in\R$ and $\rho_b>0$. We can choose any representation for the potential $\Phi(x)$, at the expense of shifting $\mu$. The corresponding conjecture is the following.

\begin{conjecture}[Crystallization -- configurations]\label{conj:crystal_state}
The Riesz equilibrium infinite configurations are exactly the $r(\cL+\tau)$ with $\tau\in\R^d$ and $\cL$ the minimizers for the right side of~\eqref{eq:conj_crystal}, where $r=\mu^{-1/s}$ for $s>d$ and $r=\rho_b^{-1/d}$ for $s<d$.
\end{conjecture}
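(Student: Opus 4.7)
My strategy is to split Conjecture~\ref{conj:crystal_state} into two directions (minimizing lattices give equilibria, and equilibria must be minimizing lattices) and to reduce each to the variational characterization of $e(s)$ in the spirit of Remark~\ref{rmk:energy_per_unit_vol}.

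For the first direction, fix $\cL$ realizing the minimum in~\eqref{eq:conj_crystal}, rescale so that $X=r(\cL+\tau)$ has density $\rho_b$ (or chemical potential $\mu$ in the short range case), and check~\eqref{eq:DLR_T0_Jellium}. By Lemma~\ref{lem:periodic_V} the renormalized potential $\Phi$ exists up to an additive constant that we absorb into $\mu$, and by Lemma~\ref{lem:periodic_energy} the energy per unit volume of $X$ equals $e(s)\rho_b^{1+s/d}$. To verify~\eqref{eq:DLR_T0_Jellium} on a bounded $D$ against a competitor $Y$, I would argue by contradiction: if the defect $\eta$ between the two sides is strictly positive, tile $\R^d$ by a coarse sublattice $M\cL$ and repeat the same swap inside each translate of $D$. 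The cross interactions between the swapped regions are controlled by the clustering of $\Phi$ via the dipole/quadrupole cancellations of~\eqref{eq:dipole}, exactly as in the proof of Lemma~\ref{lem:analytic_potential}. In the limit $M\to\ii$ one produces an infinite configuration of the same density but with energy per unit volume strictly smaller than $e(s)\rho_b^{1+s/d}$, contradicting~\eqref{eq:var_principle_energy_per_unit_vol} together with~\eqref{eq:conj_crystal}.

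For the converse direction, let $X$ be any equilibrium configuration. The first step is to show that its density equals $\rho(\mu)$ and its energy per unit volume equals $e(s)\rho(\mu)^{1+s/d}$: this follows by combining the separation estimate of Lemma~\ref{lem:separation} with a Legendre argument obtained by integrating~\eqref{eq:DLR_T0_Jellium} against growing test domains, so that~\eqref{eq:thermo_limit_free_energy_Jellium} and~\eqref{eq:mu_rho_T0} pin down the macroscopic parameters. The decisive second step is then to show that any such $X$ saturating the minimum of $\underline{e}_s$ at its density must be a Bravais lattice from the optimal family. In dimension $d=1$ this reduces to the strict convexity of $V_s$ as a function of the gaps between consecutive points, crystallized in the explicit sum-of-squares identity~\eqref{eq:1D_Jellium_energy} at $s=-1$, which gives uniqueness of the finite minimizer up to translation and propagates to the limit through the local bounds. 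In dimensions $d\in\{8,24\}$ and $s\in[d-2,\ii)\setminus\{d\}$, one would combine the universal optimality of Ref.~\onlinecite{CohKumMilRadVia-19_ppt} with the Petrache--Serfaty reduction in Ref.~\onlinecite{PetSer-20} and a tightness argument on the empirical field to obtain the analogous configurational rigidity.

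The decisive obstacle is the second step in the remaining dimensions, most notably the physical cases $d=2,3$. Turning the local DLR information together with the optimal energy per unit volume into the global statement ``$X$ is periodic'' appears to demand a fundamentally new rigidity principle. In the Coulomb case $s=d-2$, the Poisson equation~\eqref{eq:Poisson_Phi_Jellium} alone is insufficient, since it cannot distinguish between periodic and aperiodic source distributions of the same mean density. The difficulty is further compounded by the fact that the optimal lattice is itself believed to depend on $s$ in $d=3$ (with the conjectured BCC/FCC transition at $s=3/2$), so that the ``optimal family'' on the right of~\eqref{eq:conj_crystal} genuinely changes with the parameter, and by the complete absence of any rigidity statement in dimensions outside $\{1,8,24\}$, even among Bravais lattices. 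Any proof in general dimension would therefore require a fundamentally new tool, presumably a Fourier-theoretic or sum-of-squares identity in the spirit of Viazovska's modular-form constructions, capable of isolating the minimizer within the full space of infinite point configurations.
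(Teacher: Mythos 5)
This statement is a \emph{conjecture} in the paper, not a theorem: the paper provides no proof of it, and you are therefore not being measured against an existing argument but against the paper's assessment that the problem is open. Your proposal is, in effect, an honest accounting of partial reductions rather than a proof, and your identification of the converse direction (equilibrium $\Rightarrow$ lattice) as the decisive obstacle in general dimension is exactly the point of view the paper takes. So your global assessment is sound.

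Two points, however, overclaim or leave a gap. First, you assert that the converse is settled in $d=1$ (via the gap-convexity/sum-of-squares identity~\eqref{eq:1D_Jellium_energy}) and in $d\in\{8,24\}$ (via Ref.~\onlinecite{CohKumMilRadVia-19_ppt} combined with Ref.~\onlinecite{PetSer-20} and an empirical-field tightness argument). But those references establish Conjecture~\ref{conj:crystal}, i.e.\ the value of $e(s)$ as an Epstein Zeta minimum (Theorem~\ref{thm:crystallization}); they do not show that every DLR equilibrium configuration in the sense of Definition~\ref{def:equilibrium} or Theorem~\ref{thm:infinite_conf_Jellium} is a lattice. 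The paper explicitly warns that Conjecture~\ref{conj:crystal_state} is strictly stronger, because local DLR stationarity is a priori weaker than global energy-per-unit-volume minimality, and states verbatim that ``Nothing like this has been shown to our knowledge, even in dimension $d=1$.'' The missing step is precisely the one you elide with ``propagates to the limit through the local bounds'': a nonlattice equilibrium could fail to be obtainable as a thermodynamic limit of finite minimizers while still satisfying~\eqref{eq:DLR_T0}, and nothing in the local bounds of Sections~\ref{sec:local_bd_short} and~\ref{sec:thermo_limit_long_range} rules this out. Second, your forward-direction argument (optimal lattice $\Rightarrow$ equilibrium) does not yet close either: replicating a local defect over a coarse sublattice $M\cL$ changes the boundary data seen by each copy, so the competitors you construct at scale $M$ are not in fact obtained by independent swaps, and the ``clustering of $\Phi$'' estimate would need to be made quantitative and uniform in $M$ to produce a genuine contradiction with~\eqref{eq:var_principle_energy_per_unit_vol}. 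Neither direction is therefore proved, consistent with the conjecture's status, but you should flag these gaps rather than present the $d\in\{1,8,24\}$ cases as done.
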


\subsection{Limit with periodic boundary conditions}\label{sec:periodic_BC}

We have discussed infinite periodic systems and their optimality in some cases. We now turn to another important use of periodic potentials, even in situations where crystallization is not expected. The idea is to replace the thermodynamic limit studied in Sections~\ref{sec:thermo_limit_short_range}--\ref{sec:thermo_limit_long_range} by that of a periodic system with an increasing period $\ell$. This corresponds to working on the `torus' $\R^d/\ell\cL$ instead of the container $\Omega=\ell\omega$ with hard walls. Since the torus has no boundary, it is expected that the thermodynamic limit for this model should be somewhat better behaved in some delicate cases (e.g. very low $s$). This will be confirmed by the main result below. Even if the periodic model seems somewhat less physical than having a finite system in a container, many practical simulations on Jellium are made in the periodic setting~\cite{Mazars-11}. The expectation is that the final result must be the same as for the usual thermodynamic limit. One advantage of the periodic problem is that the system is translation-invariant (in the sense of torus translations, that is, modulo $\ell\cL$). Working with translation-invariant point processes has several well-known advantages in statistical mechanics although this restriction could in principle prevent from revealing the detailed properties of all the equilibrium states.

In order to describe the problem, we fix again a lattice $\cL$ of Wigner-Seitz unit cell $Q$ which, for simplicity, has volume $|Q|=1$. We consider the scaled lattice $\ell\cL$ by an amount $\ell>0$ and denote by
\begin{equation}
\cE_s^{\ell\cL}(x_1,...,x_N):=\sum_{1\leq j<k\leq N}V_s^{\ell\cL}(x_j-x_k)+\frac{NM_{\ell\cL}(s)}2
\label{eq:cE_s_per}
\end{equation}
the periodic energy which we have found in Lemma~\ref{lem:periodic_energy}. Up to a negligible logarithmic correction when $s=0$, this is the energy per unit cell of an infinite configuration with the $x_j$ repeated $(\ell\cL)$--periodically. It contains the somewhat unphysical interactions with the mirror images, which are however further and further away when $\ell$ increases. When $s<d$, the infinite system is immersed in a uniform background (properly chosen for $s\leq d-2$). The latter does not appear explicitly in the expression of the energy~\eqref{eq:cE_s_per} since $\int_{\ell Q}V^\cL_s=0$ by definition for $s<d$. In some sense, the background is only used to justify the use of the periodic potential $V^{\ell\cL}_s$. Neutrality is thus automatically imposed in the periodic model.

Note the scaling relation
\begin{equation}
 V_s^{\ell\cL}(\ell x)=\ell^{-s}V_s^{\cL}( x),\qquad \forall s\in(-2,\ii)\setminus\{d\},\quad \forall \ell>0.
 \label{eq:scaling_V_per}
\end{equation}
Similarly, the Madelung constant satisfies
\begin{equation}
M_{\ell\cL}(s)=\begin{cases}
\dps\frac{M_{\cL}(s)}{\ell^s}&\text{for $s\neq0$,}\\
M_{\cL}(0)+\log\ell&\text{for $s=0$.}
\end{cases}
\label{eq:scaling_Madelung}
\end{equation}
For $s>0$ we have therefore $M_{\ell\cL}(s)=o(1)$ and the last constant in~\eqref{eq:cE_s_per} can be safely removed. But for $s\leq0$ it is divergent. In fact, for $s<0$ we have $M_{\ell\cL}(s)=V^{\ell \cL}_s(0)$ and therefore the energy~\eqref{eq:cE_s_per} can be rewritten as
$$\cE_s^{\ell\cL}(x_1,...,x_N):=\frac12\sum_{1\leq j,k\leq N}V_s^{\ell\cL}(x_j-x_k)\qquad\text{for $-2<s<0$.}$$
Finally, we remark that
$$\lim_{\ell\to\ii}\left(V_s^{\ell\cL}(x)-M_{\ell\cL}(s)\right)=V_s(x)\qquad\text{for $s\in(-2,d)\setminus\{d\}$ and any fixed $x$.}$$
In the limit the constant $M_{\ell\cL}(s)$ can be removed for $s>0$. This shows that for two points $x_j,x_k\in\ell Q$ located at a finite distance to each other, the periodic interaction is essentially the same as that of the whole space (up to a large constant for $s\leq0$).

We can now look at the thermodynamic limit for this new problem as we discussed in Sections~\ref{sec:thermo_limit_short_range} and \ref{sec:thermo_limit_long_range}, with $V_s^{\ell\cL}$ in place of $V_s$, assuming the points are in the rescaled unit cell $\ell Q$. The canonical energy and free energy are defined by
$$\boxed{E_s^\cL(N,\ell Q):=\min_{x_1,...x_N\in\overline{\ell Q}}\cE_s^{\ell\cL}(x_1,...,x_N)}$$
and
$$\boxed{F_s^\cL(\beta,N,\ell Q)=-\frac1\beta\log\left(\frac1{N!}\int_{(\ell Q)^N}e^{-\beta\cE_s^{\ell \cL}(x_1,...,x_N)}\,\dx_1\cdots\dx_N\right).}$$
Since an $\ell$--periodic system is also $k\ell$--periodic, the variational principle provides
\begin{align}
  E^\cL_s(k^dN,k\ell Q)&\leq k^d\,E^\cL_s(N,\ell Q)+\frac{\log\ell}{2}(k^d-1)\delta_0(s),\nn\\
  F^\cL_s(\beta,k^dN,k\ell Q)&\leq k^d\,F^\cL_s(\beta,N,\ell Q)+\frac{\log\ell}{2}(k^d-1)\delta_0(s),
 \label{eq:cL_monotonicity}
\end{align}
for all $k\in\N$ and $\ell>0$. There is a similar estimate for any sublattice $\cL'\subset\ell\cL$, not just $\cL'=k\ell\cL$. 
The energy and free energy per unit volume are thus decreasing along sequences of the form $\ell_j=k^j\ell_0$ and $N_j=k^{jd}N_0$ (up to an unimportant logarithmic correction at $s=0$). The following is the periodic equivalent to Lemmas~\ref{lem:simple_estim_E_s} and~\ref{lem:lower_bound_Jellium}.

\begin{lemma}[Stability bounds with periodic boundary conditions]\label{lem:bounds_per}
Let $\cL$ be a lattice with normalized unit cell $Q$. Let $s\in(-2,\ii)\setminus\{d\}$. There exists a constant $c_\cL(s)$ (depending only on $\cL$ and $s$) such that
$$c_\cL(s) \left(\frac{N}{\ell^d}\right)^{\frac{s}{d}}N\leq E_s^\cL(N,\ell Q)\leq \frac{M_{\cL}(s)}2 \left(\frac{N}{\ell^d}\right)^{\frac{s}{d}}N,$$
$$c_\cL(s) \left(\frac{N}{\ell^d}\right)^{\frac{s}{d}}N+\beta^{-1}N\left(\log\frac{N}{\ell^d}-1\right)\leq F_s^\cL(\beta,N,\ell Q)\leq \frac{M_{\cL}(s)}2 \left(\frac{N}{\ell^d}\right)^{\frac{s}{d}}N+\beta^{-1}N\log\frac{N}{\ell^d}.$$
For $s<0$ and $s>d$ we can take $c_\cL=0$.
\end{lemma}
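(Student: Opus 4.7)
The plan is to treat the $T=0$ bounds first and then obtain the free energy bounds by combining the energy estimates with a standard Stirling-type manipulation of the entropy.

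For the upper bound on $E_s^\cL(N,\ell Q)$, the key observation is that any configuration of $N$ points in the torus $\ell Q$ automatically produces an $\ell\cL$--periodic infinite point pattern, so that placing the $N$ points on a finer sublattice yields an infinite Bravais lattice whose energy is computable via the Madelung constant. Concretely, assume first $N=k^d$ and take $x_1,\ldots,x_N$ to be the sites of the rescaled lattice $(\ell/k)\cL\cap\overline{\ell Q}$. The resulting infinite configuration is exactly $(\ell/k)\cL$, and using Lemma~\ref{lem:periodic_energy} together with the scaling relation $M_{\lambda\cL}(s)=\lambda^{-s}M_\cL(s)$ from~\eqref{eq:scaling_Madelung}, one finds that the periodic energy per fundamental domain equals
\begin{equation*}
\frac{N}{2}\,M_{(\ell/k)\cL}(s)=\frac{M_\cL(s)}{2}\Big(\frac{N}{\ell^d}\Big)^{s/d}N,
\end{equation*}
which gives the claimed upper bound on $E_s^\cL$. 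For $N$ not a perfect $d$th power, one replaces a small set of the lattice points by an appropriate perturbation; the error is lower order in $N$, and the logarithmic corrections at $s=0$ are handled via the extra $\log\ell$ term in~\eqref{eq:scaling_Madelung}.

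For the lower bound, the cases $s>d$ and $-2<s<0$ are immediate: when $s>d$ both $V_s^{\ell\cL}$ and $M_{\ell\cL}(s)$ are nonnegative, so $\cE_s^{\ell\cL}\ge 0$; when $-2<s<0$, $V_s(0)=0$ allows us to absorb the Madelung term and rewrite $\cE_s^{\ell\cL}=\frac12\sum_{j,k}V_s^{\ell\cL}(x_j-x_k)$, which is nonnegative by the Fourier-positivity formula~\eqref{eq:def_V_s_L}. Hence $c_\cL(s)=0$ suffices in both regimes. For $0<s<d$ we adapt the Onsager-type truncation argument already used in Lemma~\ref{lem:lower_bound_Jellium}: via the representation~\eqref{eq:FeffDeLaLlave} build a truncated potential $V_{s,\epsilon}\le V_s$ whose Fourier transform is still nonnegative and satisfies $V_{s,\epsilon}(0)=C\epsilon^{-s}$. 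Its $\ell\cL$--periodization $V^{\ell\cL}_{s,\epsilon}$ (well-defined since $V_{s,\epsilon}$ is bounded and sufficiently decaying) has positive Fourier coefficients at every nonzero mode of $(\ell\cL)^*$, so $\frac12\sum_{j,k}V^{\ell\cL}_{s,\epsilon}(x_j-x_k)\ge 0$, giving $\sum_{j\neq k}V^{\ell\cL}_{s,\epsilon}(x_j-x_k)\ge -NV^{\ell\cL}_{s,\epsilon}(0)$. Combined with $V_{s,\epsilon}^{\ell\cL}\le V_s^{\ell\cL}$ and the choice $\epsilon\sim(\ell^d/N)^{1/d}$ (so that $V_{s,\epsilon}(0)\sim(N/\ell^d)^{s/d}$ dominates the Madelung tail), this yields the lower bound $\cE_s^{\ell\cL}\ge -c_\cL(s)(N/\ell^d)^{s/d}N$. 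The case $s=0$ is recovered by taking $s\to 0^+$ as in Remark on the log gas.

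The free energy bounds are then straightforward. The lower bound follows from $F_s^\cL\ge E_s^\cL+\beta^{-1}N(\log(N/\ell^d)-1)$ via Jensen and the crude estimate $N!\ge(N/e)^N$, exactly as in~\eqref{eq:estim_lower_simple_temp}. The upper bound uses a smeared trial measure: take the symmetric probability $\bP=\mathrm{Sym}\otimes_j \chi(x_j-z_j)$ where $z_j$ are the lattice trial points above and $\chi$ is a smooth bump with $\int\chi=1$ and finite entropy, small enough not to cause interpenetration; the Gibbs variational principle then gives the sum of an energy contribution of the order of the previous $E_s^\cL$ bound plus an entropic contribution $\beta^{-1}N\log(N/\ell^d)+O(N)$. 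The main technical point in the whole argument is the justification of the truncation/Fourier step at $0<s<d$, namely making sure that the periodization of $V_{s,\epsilon}$ is well-defined, pointwise dominated by $V_s^{\ell\cL}$, and has positive nonzero Fourier coefficients; everything else is routine adaptation of the arguments given earlier for $s>d$ or for Jellium in $\R^d$.
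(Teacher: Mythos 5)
Your argument captures the right trial states and the right general shape of the estimates, but it diverges from the paper at several points and contains real gaps, most seriously in the lower bound for $0<s<d$ and in the $T>0$ upper bound.

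\textbf{Upper bound.} Rather than building a trial configuration for every $N$, the paper scales so that $N=\ell^d$, notes that the one--point free energy on the unit torus is \emph{exactly} $F_s^\cL(\beta,1,Q)=M_\cL(s)/2$ (there is no pair interaction and, since $|Q|=1$, the uniform one--point distribution has zero entropy), and then applies the sublattice monotonicity~\eqref{eq:cL_monotonicity} followed by the scaling relation. This produces the exact constants in the lemma, at zero and positive temperature alike, with no error term. Your $T=0$ construction (points of $(\ell/k)\cL$ for $N=k^d$, yielding $\cE_s^{\ell\cL}=\tfrac N2 M_{(\ell/k)\cL}(s)$) is correct and equivalent in spirit, but your $T>0$ trial with a smeared bump $\chi$ costs an $O(N)$ penalty from the per--point entropy $\int\chi\log\chi$ and from the energy shift due to the smearing. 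That only gives the lemma's right--hand side up to an additive $O(N)$, which is weaker than the statement, and it is unnecessary given the monotonicity trick.

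\textbf{Lower bound, $0<s<d$.} Here you take a genuinely different route. The paper translates the configuration so the dipole vanishes, identifies $\cE_s^{\ell\cL}/\ell^d$ with the Jellium energy per unit volume via Lemma~\ref{lem:periodic_energy}, and then applies Lemma~\ref{lem:lower_bound_Jellium}; for the values of $s$ not reached by Lemma~\ref{lem:periodic_energy} it defers to the literature. You redo the Onsager--type truncation directly on the torus, which the paper remarks is possible but does not carry out. Your sketch, however, contains two genuine errors. First, the justification that ``the periodization of $V_{s,\epsilon}$ is well-defined since $V_{s,\epsilon}$ is bounded and sufficiently decaying'' is false: for $0<s<d$, $V_{s,\epsilon}$ still decays like $|x|^{-s}$ at infinity, which is not summable over $\ell\cL$, so $\sum_{z\in\ell\cL}V_{s,\epsilon}(x-z)$ diverges. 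The object must instead be defined by subtraction, $V_{s,\epsilon}^{\ell\cL}:=V_s^{\ell\cL}-\sum_{z\in\ell\cL}(V_s-V_{s,\epsilon})(\cdot-z)$, the second sum being a true periodization because $V_s-V_{s,\epsilon}$ is compactly supported. Second, the inference ``positive Fourier coefficients at every nonzero mode, so $\tfrac12\sum_{j,k}V_{s,\epsilon}^{\ell\cL}(x_j-x_k)\ge 0$'' is a non sequitur: the \emph{zero} Fourier coefficient of $V_{s,\epsilon}^{\ell\cL}$ equals $\int_{\ell Q}V_{s,\epsilon}^{\ell\cL}=-\int_{\R^d}(V_s-V_{s,\epsilon})<0$, so the double sum is only bounded below by $-\tfrac{N^2}{\ell^d}\int(V_s-V_{s,\epsilon})$. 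That term cannot be dropped; with $\epsilon\sim(\ell^d/N)^{1/d}$ it is of the same order $N(N/\ell^d)^{s/d}$ as $NV_{s,\epsilon}(0)$, so it only changes the constant $c_\cL(s)$, but it must be tracked for the argument to be correct. Once these two points are fixed you will also notice that the Madelung terms cancel identically, so there is no ``Madelung tail'' to dominate. The cases $s>d$ and $s<0$, and the Stirling estimate for the free--energy lower bound, coincide with the paper's.
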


\begin{proof}
From the scaling relations~\eqref{eq:scaling_V_per} and~\eqref{eq:scaling_Madelung}, we can always assume that $N=\ell^d$. For the upper bound we use~\eqref{eq:cL_monotonicity} and obtain
$$F_s(\beta,\ell^d,\ell Q)\leq N\,F_s(\beta,1,Q)=\frac{M_{\cL}(s)}2N.$$
We have used here that the energy of one point is $\cE_s^{\cL}(x_1)=M_{\cL}(s)/2$ by~\eqref{eq:cE_s_per}.
When $s>d$ it is clear that $\cE_s^{\ell\cL}>0$. When $s<0$ we have $V_s^{\ell\cL}(0)=M_{\ell\cL}(s)$ so that
$$\cE_s^{\ell\cL}(x_1,...,x_N)=\frac12\sum_{1\leq j,k\leq N}V_s^{\ell\cL}(x_j-x_k)=\frac1{2(2\pi)^{d/2}}\sum_{k\in\cL^*/\ell}\widehat{V_s^{\ell\cL}}(k)\left|\sum_{j=1}^Ne^{ix_j\cdot k}\right|^2\geq0$$
as well. In these two cases we can thus take $c_\cL(s)=0$.
From Lemmas~\ref{lem:periodic_energy} and~\ref{lem:lower_bound_Jellium} we obtain
$E_s^\cL(\ell^d,\ell Q)\geq -c_1N$ for all $s> d-4$, with $c_1$ the constant in Lemma~\ref{lem:lower_bound_Jellium}. This implies that $F_s^\cL(\ell^d,\ell Q)\geq -(c_1+\beta^{-1})N$. In fact, from the existence of the thermodynamic limit in Theorems~\ref{thm:limit_C} and~\ref{thm:limit_C_Jellium} and Lemma~\ref{lem:periodic_energy}, we also have
$$\frac{E_s^\cL(\ell^d,\ell Q)}{\ell^d}\geq e(s),\qquad \frac{F_s^\cL(\beta,\ell^d,\ell Q)}{\ell^d}\geq f(s,\beta,1),\qquad \text{for all $s\geq\max(0, d-2)$.}$$
For lower values of $s$ one can get a universal lower bound by arguing as in Lemma~\ref{lem:lower_bound_Jellium} and modifying the background as outlined in Section~\ref{sec:periodic_potential}, but the result can also be read in Ref.~\onlinecite[Lem.~12]{HarSafSimSu-17} and Ref.~\onlinecite[Lem.~10.8.3]{BorHarSaf-19}.
\end{proof}

The following says that the thermodynamic limit exists for the canonical periodic problem for all  $s>-1$ and that it is independent of the lattice shape. Furthermore, it coincides with the limit with a sharp container for $s\geq d-2$.

\begin{theorem}[Thermodynamic limit with periodic boundary conditions]\label{thm:limit_C_per}
Let $\cL$ be a lattice with normalized unit cell $Q$. Let $\rho>0$, $d\geq1$ and $s\in(-1,\ii)\setminus\{d\}$. If $d=1$ we also allow $s=-1$. We have the limits
\begin{equation}
\lim_{\substack{N\to\ii\\ \frac{N}{\ell^d}\to\rho}}\frac{E_s^\cL(N,\ell Q)}{\ell^d}=e^{\rm per}(s)\rho^{1+\frac{s}d}-\delta_0(s)\frac{\rho\log\rho}{2d},\qquad
\lim_{\substack{N\to\ii\\ \frac{N}{\ell^d}\to\rho}}\frac{F^\cL_s(\beta,N,\ell Q)}{\ell^d}=f^{\rm per}(s,\beta,\rho),
\label{eq:thermo_limit_free_energy_per}
\end{equation}
where $e^{\rm per}(s)$ and $f^{\rm per}(s,\beta,\rho)$ are independent of the lattice $\cL$. In dimension $d=1$, the first limit is in fact valid for $s\in(-2,\ii)\setminus\{1\}$ and equals
\begin{equation}
 e^{\rm per}(s)=\begin{cases}
{\rm sgn}(s)\zeta(s)&\text{for $s\neq0$ in $d=1$,}\\
\zeta'(0)&\text{for $s=0$ in $d=1$.}
\end{cases}
 \label{eq:e_per_zeta_1D}
\end{equation}
In arbitrary dimension $d\geq1$, we have
\begin{equation}
 e^{\rm per}(s)=e(s),\qquad f^{\rm per}(s,\beta,\rho)=f(s,\beta,\rho),\qquad \text{for all $s\geq \max(0,d-2)$}
 \label{eq:equality_periodic}
\end{equation}
with the functions from Theorems~\ref{thm:limit_C} and~\ref{thm:limit_C_Jellium}.
\end{theorem}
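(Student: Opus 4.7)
The plan is to adapt the Fisher--Ruelle thermodynamic-limit machinery to the periodic setting, with the sublattice monotonicity~\eqref{eq:cL_monotonicity} replacing domain subadditivity. For fixed $\cL$, $\rho$, $\beta$ and any nested sequence $(\ell_j, N_j) = (k^j \ell_0, k^{jd} N_0)$ with density $\rho$, the bound~\eqref{eq:cL_monotonicity} makes $\ell_j^{-d} F_s^\cL(\beta, N_j, \ell_j Q)$ non-increasing (up to the explicit logarithmic correction at $s=0$), while Lemma~\ref{lem:bounds_per} bounds it below, so a limit exists along such sequences, both for the energy and the free energy. To pass from this special sequence to an arbitrary sequence with $N/\ell^d \to \rho$, I wrap an $\ell_j\cL$-periodic almost-minimizer on a slightly smaller cell inside the torus $\ell Q$ with a thin corridor filled by pure background; a reverse construction provides the matching bound. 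The corridor contribution is controlled by the decay of $V_s^{\ell\cL}$ together with the neutrality of each inserted copy. Applying the same corridor trick with two distinct lattices $\cL_1,\cL_2$ on a common large scale proves independence from $\cL$, yielding functions $e^{\rm per}(s)$ and $f^{\rm per}(s,\beta,\rho)$ and the scaling relation inherited from~\eqref{eq:relation_f_Jellium}.

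For $d=1$ and $\cL=\Z$, the closed formula~\eqref{eq:e_per_zeta_1D} is obtained by explicit crystallization at every finite $N$. Ordering $0 \leq x_1 \leq \cdots \leq x_N < N$ and setting $d_j = x_{j+1}-x_j$ cyclically (so $\sum d_j = N$), the energy~\eqref{eq:cE_s_per} is a symmetric and strictly convex function of the gaps for every $s \in (-2,\infty)\setminus\{1\}$, because $t \mapsto V_s^{N\Z}(t)$ is strictly convex on $(0,N)$: term-by-term convexity of the defining series handles $s > 1$, and the Hurwitz $\zeta$ representation extends the convexity to the whole range by analytic continuation, following Ventevogel~\cite{Ventevogel-78}. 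The unique minimizer up to translation is the equispaced configuration $x_j=j$, whose energy equals $\tfrac{N}{2}M_{N\Z}(s)$ by~\eqref{eq:cE_s_per}. Dividing by $N$ and invoking Lemma~\ref{lem:periodic_V} together with the scaling~\eqref{eq:scaling_Madelung} produces~\eqref{eq:e_per_zeta_1D}.

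For $s \geq \max(0,d-2)$ I match the periodic limit with the hard-wall one. To obtain $e \leq e^{\rm per}$ (and similarly for $f$), take a minimizer of the $\ell\cL$-periodic problem, extend it $\cL$-periodically to an infinite configuration $X$, and truncate to the union of translates of $\ell Q$ contained in a large ball $B_R$; Lemma~\ref{lem:periodic_energy} identifies the Jellium energy per unit volume of this truncation in the limit $R\to\infty$ with $\ell^{-d}E_s^\cL(N_\ell, \ell Q)$, after which $\ell\to\infty$ closes the bound. For the reverse direction $e^{\rm per} \leq e$, tile a large periodic cell $k\ell Q$ with copies of a hard-wall Jellium minimizer on a fundamental domain $\ell\omega$ of $\cL$, separated by thin corridors of width $o(\ell)$ filled with background only; the corridor contributions are bounded as in Lemma~\ref{lem:upper_bound_Jellium}, and positivity of the Fourier transform of $V_s$ ensures the correct sign of the cross interactions between copies. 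The free-energy analogue uses the Gibbs variational principle~\eqref{eq:Gibbs_var} applied to trial measures given by tensor products on the tiles.

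The genuinely delicate step is the last one in the long-range subcase $d-2 \leq s < d$: each inserted copy of a configuration must remain \emph{exactly neutral}, otherwise the cross interactions between copies grow as a positive power of the separation and the boundary correction ceases to be $o(\ell^d)$. This forces an arithmetic choice of the cell sizes adapted to $\rho$, and it is also where the restriction $s \geq \max(0,d-2)$ enters the argument, since only in this range does Lemma~\ref{lem:periodic_energy} identify the periodic energy with the Jellium energy per unit volume that governs the hard-wall limit in Theorems~\ref{thm:limit_C} and~\ref{thm:limit_C_Jellium}.
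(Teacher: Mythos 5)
Your skeleton — use the sublattice monotonicity \eqref{eq:cL_monotonicity} to get convergence along $\ell_j = k^j\ell_0$, then pass to general $\ell$, then prove lattice independence — is the same as the paper's, and the $d=1$ crystallization argument via convexity of the gap energy is essentially correct (it is the universal optimality of $\Z$, cf.\ \eqref{eq:optimality_per_Z}). However, the way you propose to carry out the two non-monotone steps has genuine problems in the periodic setting.

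For the passage from the special sequence $k^j\ell_0$ to a general sequence with $N/\ell^d\to\rho$, the ``thin corridor of pure background inside the torus $\ell Q$'' idea does not survive the torus topology. There is no free boundary to insert a corridor against: the $\ell\cL$-periodic potential $V_s^{\ell\cL}$ is global, so a corridor does not decouple the inserted configuration from its wrap-around images, and the potential seen by the inserted $\ell_j\cL$-periodic points is $V_s^{\ell\cL}$, not the $V_s^{\ell_j\cL}$ for which they were (almost) optimal — a difference that is not small, especially for $s<d$ where the two periodizations involve different backgrounds. Moreover, if you place $N_j$ points on a torus of volume $\ell^d$ the effective density is $N_j/\ell^d$, not $N_j/\ell_j^d$; the implicit background of the torus problem shifts, and the energy shift is $O(\ell^d)$, not an $o(\ell^d)$ corridor correction. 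The paper avoids all of this with a quite different device: for a general $\ell$ it picks the nearest multiple $\ell'=k\ell_0\geq\ell+\ell_0$, rescales both problems to the fixed macroscopic cell $Q$, and uses that the interaction energy only sees the two-point marginal of the $N'$-point Gibbs measure together with subadditivity of entropy. This produces the bound $F_s^\cL(\beta,N',\ell' Q)\geq F_s^\cL(\beta,N,\ell Q)+o(N)$ with the error terms $O(N^{1-1/d}+N^{1-(s+1)/d}+\delta_0(s)N^{1-1/d}\log N)$, which is exactly where the restriction $s>-1$ appears. Nothing in your proposal explains where $s>-1$ enters, which is a sign that the corridor estimate would not reproduce the needed $o(N)$ error.

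The lattice-independence step has the same defect: two generic lattices $\cL_1$, $\cL_2$ of the same covolume have no common sublattice unless they are rationally related, so there is no ``common large scale'' on which to tile. The paper's resolution is to approximate the transformation matrix $M$ with $\cL'=M\cL$, $\det M=1$, by rational $M_k\in\Z^{d^2}/q_k$, use that $q_k k\cL'$ is a sublattice of $M_k^{-1}M\cL$, apply \eqref{eq:cL_monotonicity}, and send $k\to\infty$ at fixed $\ell_0$ using continuity of the finite-volume free energy under deformations of the lattice. That continuity step and the rational approximation are the actual content of the independence argument, and your proposal as written does not supply either. Finally, note that the paper does not reprove \eqref{eq:equality_periodic} — it cites the works of Serfaty \emph{et al.} and of Lewin--Lieb--Seiringer/Lauritsen for the comparison between periodic and hard-wall limits; your tiling sketch for that step points at real arguments in those references, but it is not carried out here and in particular the screening needed to make the inter-tile interaction $o(\ell^d)$ for $s<d$ is the hard part.
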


In the short range case $s>d$, the result is due to~\textcite{FisLeb-70} (see also Ref. \onlinecite{AngNen-73} and \onlinecite{HarSafSim-14}). In fact, $V_s^\cL$ is positive for $s>d$ and all the local bounds of Section~\ref{sec:local_bd_short} hold in the periodic case. This can be used to prove that $E_s^\cL(N,\ell Q)=E_s(N,\ell Q)+o(\ell^d)$. The 1D Coulomb case $s=-1$ is treated by Kunz~\cite{Kunz-74}. The existence of the limit~\eqref{eq:thermo_limit_free_energy_per} for $-1<s<d$ and $\beta\leq\ii$ is an improvement on a result of Hardin \emph{et al} in Ref.~\onlinecite{HarSafSimSu-17}, who only dealt with $0\leq s<d$ and $\beta=+\ii$. The proof is provided below for the convenience of the reader. In dimension $d=1$, the universal optimality of $\cL=\Z$~\cite{CohKum-07,BraHarSaf-09,BorHarSaf-19,PetSer-20} implies the formula~\eqref{eq:e_per_zeta_1D} for $e^{\rm per}(s)$ in terms of the Riemann Zeta function. In fact, there is even equality for every finite $N$~\cite{CohKum-07,BorHarSaf-19}:
\begin{equation}
E^{\Z}_s(N,\ell Q)=\frac{N^{1+s}}{2\ell^s}M_\Z(s)-\frac{N}2\log\frac{N}{\ell}\delta_0(s),\qquad\forall N\in\N,\quad\forall \ell>0,\quad\forall s>-2.
\label{eq:optimality_per_Z}
\end{equation}
Recall that $M_\Z(s)=2{\rm sgn}(s)\zeta(s)$ for $s\neq0$ and $M_\Z(0)=2\zeta'(0)$.
For $s\geq \max(0,d-2)$, the limit~\eqref{eq:thermo_limit_free_energy_per} and the equality~\eqref{eq:equality_periodic} are in a series of works by Serfaty \emph{et al}  Ref.~\onlinecite{SanSer-15,PetSer-17,RouSer-16,LebSer-17,ArmSer-21} (see also Ref.~\onlinecite{CotPet-19b}). A simple proof in the Coulomb case was recently provided in Ref.~\onlinecite{LewLieSei-19b,Lauritsen-21} for $T=0$.

\begin{proof}[Proof of the limit~\eqref{eq:thermo_limit_free_energy_per}]
We closely follow Ref.~\onlinecite{HarSafSimSu-17} which only considered $0\leq s<d$ at $\beta^{-1}=0$. By scaling we can assume that $N=\ell^d$. We only consider $F^\cL_s(\beta,N,\ell Q)$. For $E_s^\cL(N,\ell Q)$ one should simply replace $\beta^{-1}$ by 0 in all our estimates.

Let $\ell_0\geq 3$ be any fixed integer. To any $\ell$ large enough we associate $k$ such that $(k-2)\ell_0\leq \ell <(k-1)\ell_0$, that is, $k=1+\lfloor\ell/\ell_0\rfloor$. We then let $\ell':=k\ell_0\geq \ell+\ell_0\geq \ell+3$ and $N'=(\ell')^d\geq N+3$. Let $\bP'$ be the Gibbs measure for $F^\cL_s(\beta,N',\ell' Q)$ and $\bQ':=\bP'/N'!$ be the associated probability over $(\ell' Q)^N$. In order to compare different values of the number of points, it is easier to rescale everything and go back to the macroscopic scale as follows
\begin{align*}
&F^\cL_s(\beta,N',\ell' Q)-\frac{N'M_{\ell'\cL}(s)}2\\
&\qquad=\frac{N'(N'-1)}{2}\int_{(\ell'Q)^{N'}} V_s^{\ell'\cL}(x_1-x_2)\,\rd\bQ' +\beta^{-1}\int_{(\ell'Q)^{N'}}\bQ'\log(N'!\bQ')\\
&\qquad=\frac{N'(N'-1)(N')^{-\frac{s}d}}{2}\int_{Q^{N'}} V_s^{\cL}(x_1-x_2)\,\rd\widetilde\bQ'+\beta^{-1}\int_{Q^{N'}}\widetilde\bQ'\log(\widetilde\bQ') +T\log\left(\frac{N'!}{(N')^{N'}}\right)
\end{align*}
where $\widetilde\bQ'$ is the rescaled probability on $Q^N$. Since $N'\geq N+3$, we can write
$$\int_{Q^{N'}} V_s^{\cL}(x_1-x_2)\,\rd\widetilde\bQ'=\int_{Q^N} V_s^{\cL}(x_1-x_2)\,\rd\widetilde\bQ$$
with the $N$th marginal $\widetilde\bQ:=\int_{Q^{N'-N}}\widetilde \bQ'$. Next we use the subadditivity of the entropy\cite{Wehrl-78}
$$\int_{Q^{N'}}\widetilde\bQ'\log(\widetilde\bQ')\geq \int_{Q^N}\widetilde\bQ\log(\widetilde\bQ)+\int_{Q^{N'-N}}\widetilde{\mathds R}\log(\widetilde{\mathds R})\geq \int_{Q^N}\widetilde\bQ\log(\widetilde\bQ),$$
with the complementary $(N'-N)$th marginal $\widetilde{\mathds R}:=\int_{Q^{N}}\widetilde \bQ'$. The terms are here all non-negative by Jensen, since $|Q|=1$. Scaling back everything to $\ell Q$, we arrive at
\begin{align}
&F^\cL_s(\beta,N',\ell' Q)\nn\\
&\geq\min_{\bQ}\Bigg\{\frac{N'(N'-1)(N')^{-\frac{s}d}}{N(N-1)N^{-\frac{s}d}}\int_{(\ell Q)^N}\cE_s^{\ell\cL}(x_1,...,x_N)\,\rd \bQ+\beta^{-1}\int_{(\ell Q)^N}\bQ\log(N!\bQ)\Bigg\}\nn\\
&\quad-(N')^{1-\frac{s}{d}}\frac{N'-N}{N-1}\frac{M_{\cL}(s)}2+T\log\left(\frac{N'!N^N}{(N')^{N'}N!}\right)+\delta_0(s)N'\left(\log\ell'-\frac{N'-1}{N-1}\log\ell\right).
\label{eq:lower_bd_periodic_F}
\end{align}
The minimum on the right side is very close to $F^\cL_s(\beta,N,\ell Q)$. In fact, we have $N'=N+K$ with $K\leq \ell_0^d(k^d-(k-2)^d)\leq CN^{1-\frac1d}$, hence
$\frac{N'(N'-1)(N')^{-s/d}}{N(N-1)N^{-s/d}}=1+O(N^{-\frac1d}).$
The first term on the second line is the average energy and it is of order $N$ by Lemma~\ref{lem:bounds_per}. We obtain
\begin{equation*}
F^\cL_s(\beta,N',\ell' Q)\geq F^\cL_s(\beta,N,\ell Q)+O\left(N^{1-\frac1d}+N^{1-\frac{s+1}d}+\delta_0(s)N^{1-\frac1d}\log N\right).
\end{equation*}
The right side is a $o(N)$ whenever $s>-1$. After passing to the limit $N\to\ii$, we get
\begin{align*}
\limsup_{\ell\to\ii}\frac{F^\cL_s(\beta,\ell^d,\ell Q)}{\ell^d}&\leq \limsup_{k\to\ii}\frac{F^\cL_s(\beta,(k\ell_0)^d,k\ell_0 Q)}{k^d\ell_0^d}+\frac{\log(\ell_0)}{2\ell_0^d}\delta_0(s)\\
&\leq \frac{F^\cL_s(\beta,\ell_0^d,\ell_0 Q)}{\ell_0^d}+\frac{\log(\ell_0)}{2\ell_0^d}\delta_0(s)
\end{align*}
where we used~\eqref{eq:cL_monotonicity} in the last inequality. Taking now $\ell_0\to\ii$ we conclude that the liminf equals the limsup, hence that the limit~\eqref{eq:thermo_limit_free_energy_per} exists for the lattice $\cL$.

It remains to show that the limit is independent of the lattice. The argument is exactly as in Ref.~\onlinecite{HarSafSimSu-17}. Let us write $\cL'=M\cL$ where $\det(M)=1$ and approximate $M$ by a sequence of matrices $M_k\in \Z^{d^2}/q_k$ with rational coefficients and $\det(M_k)=1$. Then $q_k k\cL'$ is a sub-lattice of $\cL_k:=M_k^{-1}M\cL$ so that, by~\eqref{eq:cL_monotonicity},
$$\frac{F^{\cL'}_s(\beta,(q_kk\ell_0)^d,q_kk\ell_0 Q')}{(q_kk\ell_0)^d}\leq \frac{F^{M_k^{-1}M\cL}_s(\beta,\ell_0^d,\ell_0 Q_k)}{\ell_0^d}+\frac{\log(\ell_0)}{2\ell_0^d}\delta_0(s).$$
Passing to the limit $k\to\ii$ at fixed $\ell_0$, we find
$$\lim_{\ell\to\ii}\frac{F^{\cL'}_s(\beta,\ell^d,\ell Q)}{\ell^d}\leq \frac{F^{\cL}_s(\beta,\ell_0^d,\ell_0 Q)}{\ell_0^d}.$$
On the right side we used the continuity of the free energy with respect to deformations of the lattice, at finite $\ell_0$. The result follows after passing to the limit $\ell_0\to\ii$ and then exchanging the role of $\cL$ and $\cL'$.
\end{proof}

We have mentioned the works of Serfaty \emph{et al} in Ref.~\onlinecite{SanSer-15,PetSer-17,RouSer-16,LebSer-17,ArmSer-21} where the equality~\eqref{eq:equality_periodic} with $e(s)$ and $f(s,\beta,\rho)$ is shown. In fact these authors can handle more general boundary conditions. Consider the potential
$$\Phi(x)=\sum_{j=1}^N V_s(x-x_j)-\rho_b\int_{\Omega}V_s(x-y)\,\dy,$$
generated by $N$ points $x_j$ in a background $\rho_b\1_\Omega$. In the Coulomb case we have Poisson's equation
$$-\Delta\Phi = |\bS^{d-1}|\bigg(\sum_{j=1}^N\delta_{x_j}-\rho_b\1_\Omega\bigg).$$
The Jellium energy $\cE_s(X,\Omega,\rho_b)$ can be expressed in terms of $\Phi$ only as follows
\begin{equation}
 \cE_s(X,\Omega,\rho_b)=\lim_{\eps\to0^+}\frac1{2|\bS^{d-1}|}\int_{\R^d\setminus\cup_jB_\eps(x_j)}|\nabla\Phi|^2-\frac{NV_s(\eps)}{2}
 \label{eq:energy_electric_field}
\end{equation}
The limit and the second term are necessary since point charges have an infinity self-energy. Instead of removing the balls one can also regularize the points~\cite{RouSer-16}. The conclusion is that it suffices to study the electric field $E=-\nabla \Phi$, seen as a function of the $x_j$. Apart from the technical difficulties arising from the divergence at the points, working with the electric field has many advantages (already noticed way back in Ref.~\onlinecite{Lenard-63,PenSmi-72,PenSmi-75,AizMar-80}).
In the periodic setting, we have similarly
$$-\Delta_{\ell\cL}\Phi_{\ell\cL}(x) = c_{d,s}\left(\sum_{j=1}^N\delta_{x_j}-\rho_b\1_{\ell Q}\right)$$
where $-\Delta_{\ell\cL}$ is the Laplacian on the domain $\ell Q\subset\R^d$ with periodic boundary conditions (essentially the Laplace-Beltrami operator on the torus) and $\Phi_{\ell\cL}(x)=\sum_{j=1}^NV_s^{\ell\cL}(x-x_j)$. We then have a formula similar to~\eqref{eq:energy_electric_field} for our periodic energy $\cE_s^{\ell\cL}(x_1,...,x_N)$ with an integral over $\ell Q$. These remarks suggest to consider other boundary conditions for the Laplacian on $\ell Q$. The Dirichlet and Neumann Laplacians are natural choices, well suited to comparison principles through the Dirichlet-Neumann bracketing method (Ref.~\onlinecite[Sec.~XIII.15]{ReeSim4}). In Ref.~\onlinecite[Prop.~5.6]{RouSer-16} and Ref.~\onlinecite{LebSer-17} it is shown that the Dirichlet and Neumann thermodynamic limit are the same, which then implies that all the considered limits coincide. One important fact is that configurations satisfying Neumann's boundary condition can easily be pasted together\cite{PenSmi-72}.

The proof in the Coulomb case is based on Poisson's equation. This approach was generalized to all $d-2<s<d$ in Ref.~\onlinecite{PetSer-17,LebSer-17} using instead the Caffarelli-Silvestre equation~\cite{CafSil-07}
$$-{\rm div}\left(|x_{d+1}|^{s+1-d}\nabla\tilde \Phi\right)=\tilde c_{d,s}\left(\sum_{j=1}^N\delta_{x_j}-\rho_b\1_{\ell Q}\right)\otimes\delta_0(x_{d+1})$$
for the extension $\tilde\Phi$ of $\Phi$ to $d+1$ dimensions, which we have already mentioned in the proof of Lemma~\ref{lem:separation}. This appears to be the main technical reason for the constraint $s\geq d-2$ in Ref.~\onlinecite{SanSer-15,PetSer-17,RouSer-16,LebSer-17}.

The periodic problem is not the only way to ensure nice properties such as translation-invariance. For the \emph{Uniform Riesz Gas}, studied in Ref.~\onlinecite{LewLieSei-18,LewLieSei-19,LewLieSei-19b,CotPet-19,CotPet-19b} and defined in Remark~\ref{rmk:UEG} below, one minimizes the Riesz energy or free energy (with background for $s<d$) under the additional constraint that the density is \emph{exactly equal to the background density $\rho_b\1_\Omega$}, hence the name ``uniform''. This does not make the point process translation-invariant but, at least, the one-point correlation function $\rho^{(1)}_\mathscr{P}$ is constant by definition. The interaction with the background then simplifies with the background self-interaction, and the energy becomes
$$\bE\left[\cE_s(\cdot,\Omega,\rho_b)\right]=\bE\left[\sum_{1\leq j<k\leq N}V_s(x_j-x_k)\right]-\frac{\rho_b^2}{2}\iint_{\Omega\times\Omega}V_s(x-y)\,\dx\,\dy.$$
In Ref.~\onlinecite{LewLieSei-18} it is shown that the thermodynamic limit exists for all $s>0$ and that the canonical and grand-canonical energies coincide. The equality with $e(s)$ and $f(s,\beta,\rho)$ goes through a comparison with the periodic problem~\cite{CotPet-19b,LewLieSei-19b} and it is so far only known for $\max(0,d-2)\leq s<d$. Let us notice that for the 1D uniform Riesz gas, crystallization was proved by Colombo, Di Marino and Stra~\cite{ColPasMar-15}. Due to the equality with $e(s)$, this provides a different proof of Theorem~\ref{thm:crystallization} in dimension $d=1$.

\begin{remark}[Free energy of the log gas]\label{rmk:log_gas}
As will be explained later, the log gas $s=0$ in dimension $d=1$ is an integrable system, like for $s=2$ (Remark~\ref{rmk:CSM}). The periodic free energy $F^\Z_0(\beta,N,[0,N])$ can be computed exactly, using Selberg's integral formula
\begin{equation}
\int_0^{2\pi}\cdots \int_0^{2\pi}\prod_{1\leq j<k\leq N}|e^{ i\theta_j}-e^{ i\theta_k}|^\beta\,\rd\theta_1\cdots\rd\theta_N=\frac{(2\pi)^N\Gamma(1+\frac12\beta N)}{\Gamma(1+\frac12\beta)^N}.
\label{eq:Selberg}
\end{equation}
This formula was conjectured by Dyson in~\onlinecite[Eq.~(133)]{Dyson-62a} and later proved in Ref.~\onlinecite{Wilson-62,Gunson-62,Good-70}. Using Theorem~\ref{thm:limit_C_per}, one can then obtain in the limit $N\to\ii$
\begin{equation}
f(0,\beta,\rho)=\rho\left(\beta^{-1}\log\Gamma\left(1+\frac\beta2\right)-\frac{\log(\pi\beta)}{2}\right)+\frac{2-\beta}{2\beta}\rho(\log\rho-1).
\label{eq:f_log_gas}
\end{equation}
This is a real-analytic function of $(\beta,\rho)$ on $(0,\ii)^2$. Note the particular value $\beta=2$ (the sine--2 point process at which there is a BKT transition, see Sections~\ref{sec:random_matrices} and~\ref{sec:phase_diag_long} below). The free energy is a convex function of $\rho$ for $\beta<2$ but it is concave for $\beta>2$ and linear at $\beta=2$. In the limit $\beta\to\ii$, we recover $e(0)=\zeta'(0)=-\log(2\pi)/2$.
\end{remark}

%%%%%%%%%%%%%%%%%%%%%%%%%%%%%%%%%%%%%%%%%%%%%%%%%%%%%
%%%%%%%%%%%%%%%%%%%%%%%%%%%%%%%%%%%%%%%%%%%%%%%%%%%%%
\section{Confined systems}\label{sec:confined}
%%%%%%%%%%%%%%%%%%%%%%%%%%%%%%%%%%%%%%%%%%%%%%%%%%%%%
%%%%%%%%%%%%%%%%%%%%%%%%%%%%%%%%%%%%%%%%%%%%%%%%%%%%%

In most practical applications, Riesz gases do not occur in a  `homogeneous' thermodynamic limit with a container $\Omega$ and hard walls (together with a uniform background for $s<d$), as we have studied in Sections~\ref{sec:thermo_limit_short_range} and~\ref{sec:thermo_limit_long_range}. In fact, in real experiments the system is usually confined by means of an external potential which typically varies over space~\cite{Derrick-69}. This is for instance how Coulomb crystals are produced in the laboratory since the 90s~\cite{ChuLin-94,HayTac-94,Thoetal-94,Thompson-15}.

The general idea is that if there are many points in regions where the external potential varies slowly, then we expect to obtain locally the infinite gas at equilibrium with the corresponding uniform chemical potential $\mu$. In this situation we think of the external potential as being macroscopic whereas the infinite gas occurs at the microscopic scale. One can even allow other thermodynamic parameters (e.g.~the temperature) to vary within the system.

The local convergence to the infinite gas in an external potential has been studied in the early days of statistical mechanics\cite{LebPer-61,LebPer-63,Millard-72,GarSim-72,SimGar-73,MarPre-72}. At the microscopic scale, a macroscopic external potential often ends up having the form $W(x/\ell)$ with $\ell\to\ii$. For a smooth function $W$, this is constant over boxes of side length $\ll \ell$. The proof that the system locally ressemble the infinite Riesz gas relies on proving that the interaction between these boxes is negligible. Local bounds are then very useful.

In order to state our main theorems, we now introduce the energy and free energy of a Riesz gas in an (unscaled) external potential $W$. For simplicity, we assume throughout the whole section that $W\in C^0(\overline{\omega})$ where $\omega$ is a bounded or unbounded domain in $\R^d$ with a smooth boundary. If $\omega$ is unbounded we further assume that
\begin{itemize}
\item $\lim_{|x|\to\ii}W(x)=+\ii$ if $\beta=+\ii$,
\item $\int_\omega e^{-\beta W(x)}\,\dx<\ii$ if $\beta<\ii$.
\end{itemize}
By convention we let $W\equiv+\ii$ on $\R^d\setminus\overline\omega$.
The canonical energy and free energy read
\begin{equation}
E_s(N,W):=\min_{x_1,...,x_N\in\R^d}\left\{\sum_{1\leq j<k}V_s(x_j-x_k)+\sum_{j=1}^NW(x_j)\right\},
\label{eq:E_s_N_V}
\end{equation}
\begin{equation}
F_s(N,\beta,W):=-\frac{1}{\beta}\log\left(\frac1{N!}\int_{\R^{dN}}e^{-\beta \left(\sum_{1\leq j<k}V_s(x_j-x_k)+\sum_{j=1}^NW(x_j)\right) }\dx_1\cdots \dx_N\right).
\label{eq:F_s_N_beta_W}
\end{equation}
The grand canonical energy and free energy are given by
\begin{align*}
E^{\rm GC}_s(W)&:=\min_{n\geq0}\min_{x_1,...,x_n\in\R^d}\left\{\sum_{1\leq j<k\leq n}V_s(x_j-x_k)+\sum_{j=1}^nW(x_j)\right\},\\
F^{\rm GC}_s(\beta,W)&:=-\beta^{-1}\log\left(\sum_{n\geq0}\frac{1}{n!}\int_{\R^{dn}} e^{-\beta\left(\sum_{1\leq j<k}V_s(x_j-x_k)+\sum_{j=1}^nW(x_j)\right)}\right).
\end{align*}
The chemical potential $\mu$ is typically included in the potential $W$. For $\Omega=N^{\frac1d}\omega$ with $|\omega|=1$ (resp.~$\Omega=\ell\omega$), the thermodynamic limit studied in Sections~\ref{sec:thermo_limit_short_range} and~\ref{sec:thermo_limit_long_range} corresponds to taking the potential
\begin{align}
W_N(x)&=N^{\frac{\max(0,d-s)}d}W(N^{-\frac1d}x)-\delta_0(s)\frac{N\log N}{d},\nn\\
W_\ell(x)&=\ell^{\max(0,d-s)} W(\ell^{-1}x)-\mu-\delta_0(s)\ell^d\log \ell,
\label{eq:rescaled}
\end{align}
for, respectively, the canonical and grand-canonical cases, where
\begin{equation}
W(x)=\begin{cases}
0 &\text{for $x\in\overline\omega$ and $s>d$,}\\
-\dps \int_\omega V_s(x-y)\,\dy &\text{for $x\in\overline\omega$ and $s<d$,}\\
+\ii&\text{for $x\notin\overline\omega$. }
\end{cases}
\label{eq:potential_omega}
\end{equation}
This potential is $C^\ii$ on $\omega$ and continuous on $\overline\omega$.

Our goal is to study the situation where $W$ is now a general function. We start by describing the results in the short range case before turning to the long range case.

\begin{remark}\label{rmk:confined_applies}
Many authors have studied Coulomb and Riesz gases in a general confining potential $W$. The thermodynamic limits we have studied in Sections~\ref{sec:thermo_limit_short_range} and~\ref{sec:thermo_limit_long_range} are covered in those works whenever the potential~\eqref{eq:potential_omega} is allowed. This is the case of most of the literature, possibly under regularity assumptions on $\omega$.

We emphasize that in many works the problem is expressed at the \emph{macroscopic scale} $\ell_{\rm macro}\sim N^{1/d}$ using the change of variable $x'=x/N^{1/d}$. Since $V_s$ is homogeneous of degree $-s$, this leads to rather simple expressions with a power of $N$ in front of the terms of the energy. This would not be the case for a more general interaction. Many papers concern only the macroscopic scale, or the intermediate \emph{mesoscopic scales}  $1\ll\ell_{\rm meso}\ll N^{1/d}$. The latter intermediate scales are not discussed at all in this review, which is solely about the microscopic system (the Coulomb or Riesz gas).
\end{remark}

\subsection{Short range case $s>d$}

In the short range case $s>d$, we have seen in Section~\ref{sec:thermo_limit_short_range} that the energy of points located at a finite distance to each other is typically of order $N$. So will be the potential energy if we choose as in~\eqref{eq:rescaled}
\begin{equation}
W_N(x)=W(N^{-\frac1d}x),\qquad W_\ell(x)= W(\ell^{-1}x),\qquad\text{for $s>d$}
\label{eq:rescaled_short}
\end{equation}
in, respectively, the canonical and grand-canonical cases. At the macroscopic scale this amounts to multiplying the interaction energy by $N^{-s/d}$ or $\ell^{-s}$. The slowly varying limit takes the following form for $s>d$.

\begin{theorem}[Slowly varying external fields, short range]\label{thm:confined_short_range}
Let $s>d$ and $W$ satisfying the previous conditions. Then the canonical energy and free energy have the limit
\begin{align}
\lim_{N\to\ii}\frac{E_s\big(N,W(N^{-\frac1d}\cdot)\big)}{N}&=\min_{\int_\omega\nu=1}\left\{\int_{\R^d} W(x)\nu(x)\,\dx+e(s)\int_{\R^d} \nu(x)^{1+\frac{s}d}\,\dx \right\}\nn\\
&=-\frac{s e(s)^{-\frac{d}{s}}}{d\left(1+\frac{s}{d}\right)^{1+\frac{d}{s}}}\int_\omega(\mu_W-W(x))_+^{1+\frac{d}{s}}\dx+\mu_W,\label{eq:limit_W_C_T0}\\
\lim_{N\to\ii}\frac{F_s\big(N,\beta,W(N^{-\frac1d}\cdot)\big)}{N}&=\min_{\int_\omega\nu=1}\left\{\int_{\R^d} W(x)\nu(x)\,\dx+\int_{\R^d}f\big(s,\beta,\nu(x)\big)\,\dx \right\}\nn\\
&=\int_\omega g\big(s,\beta,\mu_W-W(x)\big)\,\dx+\mu_W\label{eq:limit_W_C_T}
\end{align}
for a unique Lagrange multiplier $\mu_W$. For the first limit~\eqref{eq:limit_W_C_T0}, the latter is the unique solution to the equation
\begin{equation}
\left(\frac{d}{e(s)(d+s)}\right)^{\frac{d}s}\int_\omega(\mu_W-W)_+^{\frac{d}{s}}=1.
 \label{eq:mu_W}
\end{equation}
For~\eqref{eq:limit_W_C_T}, $\mu_W$ is the unique maximizer of the function $\mu\mapsto \int_\omega g(s,\beta,\mu-W(x))\,\dx+\mu$. 
In the grand-canonical case we have similarly
\begin{align*}
\lim_{\ell\to\ii}\frac{E^{\rm GC}_s\big(W(\ell^{-1}\cdot)\big)}{\ell^d}&=\min_\nu\left\{\int_{\R^d} W(x)\nu(x)\,\dx+e(s)\int_{\R^d} \nu(x)^{1+\frac{s}d}\,\dx \right\}\nn\\
&=-\frac{s e(s)^{-\frac{d}{s}}}{d\left(1+\frac{s}{d}\right)^{1+\frac{d}{s}}}\int_\omega W(x)_-^{1+\frac{d}{s}}\dx,\\
\lim_{\ell\to\ii}\frac{F^{\rm GC}_s\big(\beta,W(\ell^{-1}\cdot)\big)}{\ell^d}&=\min_{\nu}\left\{\int_{\R^d} W(x)\nu(x)\,\dx+\int_{\R^d} f\big(s,\beta,\nu(x)\big)\,\dx \right\}\nn\\
&=\int_\omega g(s,\beta,-W(x)\big)\,\dx.
\end{align*}
\end{theorem}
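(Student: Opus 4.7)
The heuristic is the \emph{local density approximation}: on mesoscopic boxes where $W(N^{-1/d}\cdot)$ is nearly constant, the system should locally look like the homogeneous Riesz gas at the corresponding density, so the (free) energy per $N$ should converge to $\int_\omega W\nu+\int_\omega f(s,\beta,\nu)$ under the constraint $\int\nu=1$, with analogous statements in the grand canonical and $T=0$ versions. I would prove matching upper and lower bounds by partitioning $\omega$ into cubes of macroscopic side $h$ (hence microscopic side $hN^{1/d}$) and applying the thermodynamic limits of Theorems~\ref{thm:limit_C} and~\ref{thm:limit_GC} on each cube, letting first $N\to\infty$ and then $h\to 0$. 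The equivalence between the two displayed expressions in each of the four limits then comes from the Legendre duality~\eqref{eq:Legendre_GC_C}, whose regularity is guaranteed by the strict concavity of $g$ in $\mu$ from Theorem~\ref{thm:limit_GC}.

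\textbf{Upper bound.} Given a smooth trial density $\nu\geq 0$ with $\int_\omega\nu=1$, I would tile $\omega$ into cubes $Q_i$ of side $h$ with centers $y_i$ and place $N_i:=\lfloor N|Q_i|\nu(y_i)\rfloor$ points in each microscopic cube $Q_i^{\rm mic}$ of side $hN^{1/d}$, using an almost-minimizer for $E_s(N_i,Q_i^{\rm mic})$ (or a near-optimal Gibbs measure at positive $T$) inside a sub-cube separated from the boundary by a security corridor of width $\delta hN^{1/d}$. By Theorem~\ref{thm:limit_C} the intra-cube contribution is $N e(s)\int_\omega\nu^{1+s/d}\,\dx+o(N)$, and the potential energy is $N\int_\omega W\nu+o(N)$ by uniform continuity of $W$. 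Since $s>d$, the inter-cube interactions are bounded by $CN^2(h\delta N^{1/d})^{-s}\sum_{z\in\Z^d\setminus\{0\}}|z|^{-s}=O(N^{2-s/d}h^{-s})=o(N)$, using the integrability of $V_s$ at infinity. Taking the infimum over $\nu$ and then letting $h\to 0$ yields the upper bound.

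\textbf{Lower bound.} Here the positivity of $V_s$ is essential. For any configuration $x_1,\dots,x_N\in\R^d$, tiling space by microscopic cubes $Q_i^{\rm mic}$ of side $hN^{1/d}$ and dropping all inter-cube pair interactions yields
\begin{equation*}
\sum_{1\leq j<k\leq N}V_s(x_j-x_k)+\sum_{j=1}^N W(N^{-1/d}x_j)\geq\sum_i E_s(n_i,Q_i^{\rm mic})+\sum_i W(y_i)n_i-N\,\omega_W(h),
\end{equation*}
where $n_i$ is the number of points in $Q_i^{\rm mic}$ and $\omega_W(h)$ is the modulus of continuity of $W$. Setting $\rho_i:=n_i/(h^dN)$ and using Theorem~\ref{thm:limit_C} uniformly in the admissible range, one bounds $E_s(n_i,Q_i^{\rm mic})\geq h^dN(e(s)-\eta_N)\rho_i^{1+s/d}$ in the bulk regime and controls the remaining boxes with Lemma~\ref{lem:simple_estim_E_s}. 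Lower semicontinuity of the convex functional $\nu\mapsto\int_\omega W\nu+e(s)\int_\omega\nu^{1+s/d}$ under weak-$\ast$ convergence of the empirical measure $\nu_N:=N^{-1}\sum_j\delta_{N^{-1/d}x_j}$ then produces the matching lower bound. At positive temperature one argues similarly by conditioning on the $n_i$ and invoking the subadditivity of the entropy with respect to the product decomposition, together with Lemma~\ref{lem:simple_estim_F_s}.

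\textbf{Legendre identification and main obstacle.} The Euler--Lagrange equation of the limiting functional is $W(x)+\partial_\rho f(s,\beta,\nu(x))=\mu_W$, and the strict concavity of $g$ in $\mu$ (Theorem~\ref{thm:limit_GC}) yields a unique solution $\nu(x)=-\partial_\mu g(s,\beta,\mu_W-W(x))$; substituting into the LDA functional and using the Legendre identity $f(s,\beta,\rho(\mu))+g(s,\beta,\mu)=\mu\rho(\mu)$ converts the variational expression into $\int_\omega g(s,\beta,\mu_W-W)\,\dx+\mu_W$, giving~\eqref{eq:limit_W_C_T}; the $T=0$ formula~\eqref{eq:limit_W_C_T0} follows identically from~\eqref{eq:thermo_limit_energy_GC}. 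In the grand-canonical case the constraint $\int\nu=1$ is absent, so the Lagrange multiplier vanishes and the same manipulation yields the stated formulas. The main obstacle lies in the region where $\nu(x)\to 0$, in particular near $\{W=\mu_W\}$ and, when $\omega$ is unbounded, near infinity: the thermodynamic limit in Theorem~\ref{thm:limit_C} is not uniform as $\rho\to 0$ and the entropy term $\beta^{-1}\rho\log\rho$ is singular there. Handling this requires truncating the density from below in the lower bound and, for unbounded $\omega$, combining the assumption $\int_\omega e^{-\beta W}<\infty$ with Lemma~\ref{lem:simple_estim_F_s} to show that the contribution of points located where $W$ is very large is negligible.
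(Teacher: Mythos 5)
Your approach is the standard local-density-approximation / tiling argument that the paper does not spell out but delegates to its references (Marchioro--Presutti \cite{MarPre-72} for the grand-canonical case, Garrod--Simmons \cite{GarSim-72} for both ensembles, and \cite{HarSafVla-17,HarLebSafSer-18} for the canonical case with a large deviation principle). Your upper bound with security corridors and the $O(N^{2-s/d})=o(N)$ estimate for inter-cube interactions, your lower bound by dropping inter-cube terms using $V_s\geq 0$ and passing to a weak limit of the empirical measure, and your entropy bookkeeping via conditioning on the local particle numbers, are all the same ingredients used there, and the convexity/Legendre reduction at the end is exactly how one passes between the two displayed forms in each of the four limits. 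The genuine technical difficulties you flag (non-uniformity of the thermodynamic limit as $\rho\to0$, behaviour near $\{W=\mu_W\}$, unbounded $\omega$) are real and your proposed remedies are sensible.

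One correction: the Legendre identity you quote, $f(s,\beta,\rho(\mu))+g(s,\beta,\mu)=\mu\,\rho(\mu)$, has the wrong sign. From the definition $g(s,\beta,\mu)=\min_{\rho}\{f(s,\beta,\rho)-\mu\rho\}$ one gets $f(s,\beta,\rho(\mu))=g(s,\beta,\mu)+\mu\,\rho(\mu)$ at the optimum (equivalently $f(s,\beta,\rho)=\max_\mu\{g(s,\beta,\mu)+\mu\rho\}$ as in~\eqref{eq:Legendre_GC_C}), so the correct relation is $f-g=\mu\rho$, not $f+g=\mu\rho$. Substituting this corrected identity,
\[
\int_\omega W\nu+\int_\omega f\bigl(s,\beta,\nu\bigr)
=\int_\omega W\nu+\int_\omega\Bigl(g\bigl(s,\beta,\mu_W-W\bigr)+(\mu_W-W)\,\nu\Bigr)
=\int_\omega g\bigl(s,\beta,\mu_W-W\bigr)\,\dx+\mu_W,
\]
does give the claimed formula, so this is likely a slip of the pen rather than a computational error, but as written the identity is incorrect.
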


\begin{figure}[t]
\centering
\includegraphics[width=10cm]{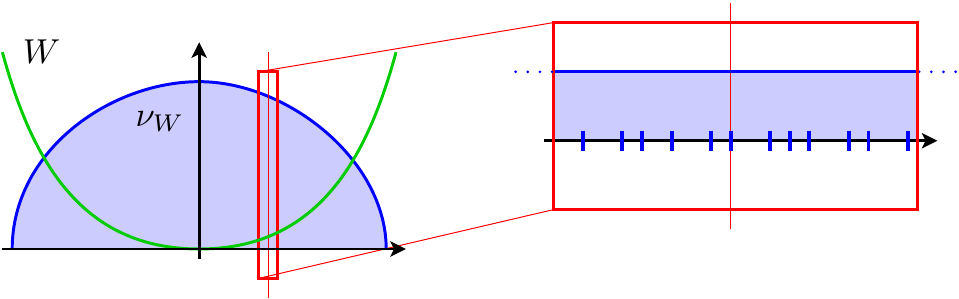}
\caption{At the macroscopic scale we have a fixed external potential $W$ and an associated equilibrium measure $\nu_W$ describing the shape of our system. When we zoom at the scale $\ell^{-1}\sim N^{-1/d}$ about a point $x$ in the support of $\nu_W$, we obtain our translation-invariant Coulomb or Riesz gas at the corresponding average density $\rho=\nu_W(x)$ or, equivalently, the chemical potential $\mu=\mu_W-W(x)$. Here $\mu_W$ is a macroscopic chemical potential used in the canonical case to guarantee that $\int\nu_W=1$. In the grand-canonical case we have simply $\mu_W=0$.
\label{fig:confined}}
\end{figure}

The statement involves the thermodynamic functions $f$ and $g$ from Theorems~\ref{thm:limit_C} and~\ref{thm:limit_GC}. The spirit of the result is that locally, in a neighborhood of any point $x$, we obtain the infinite gas with $W$ replaced by a constant. This brings $f$ and $g$ times the volume $\dx$ and thus an integral at the macroscopic scale. In the grand-canonical case, the local chemical potential is simply $\mu=-W(x)$. In the canonical case the constraint that the total number of points is fixed brings an additional macroscopic multiplier $\mu_W$ associated with a constraint $\int_\omega\nu=1$.

Theorem~\ref{thm:confined_short_range} can be found in the grand-canonical case in Ref.~\onlinecite{MarPre-72} (under weaker assumptions on $V$ and $\omega$) and in both the canonical and grand-canonical cases in Ref.~\onlinecite{GarSim-72} under the additional assumption that $\omega$ is bounded. As usual, the canonical and grand-canonical limits are dual to each other and one follows once the other one is proved. The zero-temperature case also follows from the case $\beta<\ii$ due to the continuity of the thermodynamic functions in the limit $\beta\to\ii$. The limit in the canonical case is also contained in the more recent works in Ref.~\onlinecite{HarSafVla-17,HarLebSafSer-18} which additionally prove a large deviation principle at $\beta<\ii$ and deal with $s=d$ at $\beta=+\ii$.

At $\beta=+\ii$, the minimizer of~\eqref{eq:limit_W_C_T0} is unique, due to the relation~\eqref{eq:mu_rho_T0}, given by
\begin{equation}
\boxed{ \nu_W(x)=\frac{\big(\mu_W-W(x)\big)_+^{\frac{d}s}}{\left(1+\frac{s}{d}\right)^{\frac{d}s}e(s)^{\frac{d}s}}.}
 \label{eq:nu_W_short}
\end{equation}
One can prove that the macroscopic empirical measure converges to $\nu_W$,\cite{HarSafVla-17}
\begin{equation}
\frac1N\sum_{j=1}^N\delta_{N^{-1/d}x_j}\wto \nu_W.
\label{eq:CV_nu_W}
\end{equation}
This means that $\nu_W$ is the shape of our gas at the macroscopic scale (Figure~\ref{fig:confined}). Note that for a harmonic potential $W(x)=|x|^2$, $\nu_W$ in~\eqref{eq:nu_W_short} is a kind of \emph{semi-circle law} \`a la Wigner\cite{Wigner-55}, which depends on the microscopic quantity $e(s)$. In fact, in $d=1$ we exactly obtain a half circle for the Calogero-Sutherland-Moser model~\cite{Calogero-71,Sutherland-71,Sutherland-71b,Moser-75,Sutherland-04} at $s=2$.
A similar result holds in the grand-canonical case with $\mu_W$ replaced by $0$ in~\eqref{eq:nu_W_short}.
For our container with hard walls~\eqref{eq:potential_omega}, we have $W=0$ (resp. $W=-\mu$ in the grand-canonical case) on $\omega$ and therefore the limiting macroscopic density $\nu_W$ is constant over $\omega$. For the thermodynamic limit studied in Section~\ref{sec:thermo_limit_short_range}, the system is perfectly uniform at the macroscopic scale.

At $\beta<\ii$, minimizers are not necessarily unique since there can be phase transitions. More precisely, the equation for a minimizer $\nu_W$ of~\eqref{eq:limit_W_C_T} reads
\begin{equation}
\boxed{ \frac{\partial f}{\partial\rho}\big(s,\beta,\nu_W(x)\big)=\mu_W-W(x).}
 \label{eq:nu_W_positive_temp}
\end{equation}
As we have discussed after Theorem~\ref{thm:limit_GC}, the derivative $\partial_\rho f(s,\beta,\rho)$ is a continuous non-decreasing function. It may be constant on some intervals, however, hence not invertible there. The value of $\nu_W(x)$ is not uniquely defined when $W(x)-\mu$ equals such values. However, there are at most countably many such bad points and we deduce that $\nu_W$ is well defined almost everywhere if the level sets $\{W=c\}$ have zero Lebesgue measure for all $c\in\R$. The inverse of $\partial_\rho f$ being exactly $-\partial_\mu g$, we can then rewrite~\eqref{eq:nu_W_positive_temp} as
\begin{equation}
\boxed{ \nu_W(x)=-\frac{\partial g}{\partial\mu}\big(s,\beta,\mu_W-W(x)\big)\qquad\text{for a.e. $x$}.}
 \label{eq:nu_W_positive_temp2}
\end{equation}
The Lagrange multiplier $\mu_W$ is then the unique solution to
\begin{equation}
\int_\omega\partial_\mu g(s,\beta,\mu_W-W(x)\big)\,\dx=-1. 
 \label{eq:mu_W2}
\end{equation}
There is a convergence similar to~\eqref{eq:CV_nu_W} for the rescaled density of the system~\cite{SimGar-73}.

It is possible to be more precise with regard to the behavior of the Gibbs measure at the microscopic scale. Well inside the system where $\nu_W$ is strictly positive (in the ``bulk''), it will typically converge to an equilibrium Riesz point process of the corresponding local uniform density (Figure~\ref{fig:confined}). This is proved using local bounds as in Section~\ref{sec:local_bd_short}, which stay valid in the presence of a bounded-below external potential (see, e.g., Ref.~\onlinecite[Lem.~2]{MarPre-72}). The situation is different close to the boundary $\partial\{\nu_W=0\}$ of the macroscopic system. At a point where the density $\nu_W$ jumps abruptly to 0, we expect to get an infinite gas living in a half space. The problem is more involved at a point where the density $\nu_W$ vanishes smoothly.

Instead of placing the points in an external potential, one can as well force them to live on a manifold. This formally corresponds to taking $\omega$ this manifold and $W$ infinite outside. The final result is the same, except that we of course get locally the thermodynamic functions and state of the space dimension of the manifold. This problem has been studied a lot, for instance on the sphere. We refer to Section~\ref{sec:random_matrices} below and to Ref.~\onlinecite{BorHarSaf-19} for an overview of the results in this direction at $T=0$.

\subsection{Long range case $s<d$}
As usual, the long range case is much more delicate. The leading energy of the system will typically not be of order $N$ due to the long range of the potential $V_s$, but rather $N^{2-s/d}$. The external potential must adapt to this energy scale and one should now take
\begin{equation}
W_N(x)=N^{1-\frac{s}{d}}W(N^{-\frac1d}x),\qquad W_\ell(x)= \ell^{d-s}W(\ell^{-1}x),\qquad\text{for $s<d$}
\label{eq:rescaled_long}
\end{equation}
in the canonical and grand-canonical cases, respectively, as was announced in~\eqref{eq:rescaled}.
To better understand the leading term, it is useful to rewrite the problem in macroscopic coordinates by letting $x'=x/N^{1/d}$. For instance, for the canonical energy~\eqref{eq:E_s_N_V} the scaling of $V_s$ gives
\begin{multline}
E_s\Big(N,N^{1-\frac{s}d}W(N^{-\frac1d}\cdot)\Big)+\frac{N(N-1)}{2d}\log N\delta_0(s)\\=N^{1-\frac{s}{d}}\min_{x'_1,...,x'_N\in\R^d}\left\{\frac1N\sum_{1\leq j<k}V_s(x'_j-x'_k)+\sum_{j=1}^NW(x'_j)\right\}.
\label{eq:E_s_N_V_rescaled}
\end{multline}
The right side is a mean-field limit~\cite{Choquet-58,Rougerie-LMU} and it is well known that
\begin{multline}
\lim_{N\to\ii}\frac{E_s\Big(N,N^{1-\frac{s}d}W(N^{-\frac1d}\cdot)\Big)+\frac{N^2}{2d}\log N\delta_0(s)}{N^{2-\frac{s}{d}}}\\=\min_{\int\nu=1}\left\{\frac12\iint_{\R^d\times\R^d}V_s(x-y)\rd\nu(x)\,\rd\nu(y)+\int_{\R^d}W(x)\,\rd\nu(x)\right\}=:I(W).
\label{eq:mean-field_limit}
\end{multline}
Since $V_s$ has a positive Fourier transform, the right side admits a unique minimizer $\nu_W$. It now solves the
implicit Euler-Lagrange equation
\begin{equation}
\boxed{\nu_W\ast V_s=\mu_W-W\qquad\nu_W\text{--a.e.,}}
 \label{eq:nu_W_MF}
\end{equation}
with the multiplier $\mu_W$ adjusted so that $\nu_W(\R^d)=1$. The rescaled empirical measure $N^{-1}\sum_{j=1}^N\delta_{N^{-1/d}x_j}$ converges weakly to $\nu_W$. This is therefore the shape of our gas at the macroscopic scale. Unlike the short range case $s>d$, $\nu_W$ only depends on $W$ and the interaction $V_s$. No microscopic information is seen at this scale. In the grand-canonical case ($s>0$) we have the similar limit
\begin{multline}
\lim_{\ell\to\ii}\frac{E_s^{\rm GC}\Big(N,\ell^{d-s}W(\ell^{-1}\cdot)\Big)}{\ell^{2d-s}}\\=\min_{\nu}\left\{\frac12\iint_{\R^d\times\R^d}V_s(x-y)\rd\nu(x)\,\rd\nu(y)+\int_{\R^d}W(x)\,\rd\nu(x)\right\}=:I^{\rm GC}(W)
\label{eq:mean-field_limit_GC}
\end{multline}
and the equation~\eqref{eq:nu_W_MF} now has $\mu_W=0$.

In some particular cases, it is possible to determine the unique solution $\nu_W$ of~\eqref{eq:nu_W_MF}, that is, the shape of the macroscopic system due to the potential $W$. In a harmonic well $W(x)=a|x|^2$ and for the Coulomb case $s=d-2$, $\nu_W$ is the characteristic function of a ball in all dimensions $d\geq1$. In $d=1$ at $s=0$ we obtain the famous \emph{Wigner semi-circle law}\cite{Wigner-55,Wigner-58}
\begin{equation}
\nu_W(x)=\frac{2}{\pi R^2}(R^2-x^2)^{\frac12}_+
\label{eq:semi-circle}
\end{equation}
for some $R>0$, which plays a central role in the theory of random matrices as we will see in Section~\ref{sec:random_matrices}. In the recent work Ref.~\onlinecite{AgaDhaKulKunMajMukSch-19}, the formula for $\nu_W$ was determined in a harmonic potential $W(x)=ax^2$ for all $-2<s<1$ in dimension $d=1$:
\begin{equation}
\nu_W(x)=\frac{\Gamma\left(\frac{4+s}{2}\right)}{R^{2+s}\sqrt\pi\,\Gamma\left(\frac{3+s}{2}\right)}(R^2-x^2)^{\frac{s+1}{2}}_+.
 \label{eq:formula_1D_nu_W_harmonic}
\end{equation}
The general shape is thus similar to the short range case~\eqref{eq:nu_W_short} but with a different power law. Also, for $s<-1$ the function is divergent at the boundary. This was later extended to the same system with a hard wall in Ref.~\onlinecite{KetKulKunMajMukSch-21}.

Another situation for which one can find $\nu_W$ explicitly, this time for all $-2<s<d$, is our flat container with hard walls studied in Section~\ref{sec:thermo_limit_long_range}. For the background potential~\eqref{eq:potential_omega} at $\rho_b=1$, we see that
$$\nu_W=\1_\omega,\qquad I(W)=-\frac12\iint_{\omega\times\omega}V_s(x-y)\,\dx\,\dy$$
so that the macroscopic density of the points is exactly equal to the uniform background and $I(W)$ corresponds to the background self-energy which we have added to our Jellium energy in~\eqref{eq:def_cE_s_Jellium}. At the macroscopic scale, the background is thus screened perfectly.

At positive temperature, the limit is exactly the same as in~\eqref{eq:mean-field_limit} for the free energy $F_s\big(N,N^{1-\frac{s}d}W(N^{-\frac1d}\cdot)\big)$. The reason is that, after scaling, the energy gets multiplied by $N^{1-s/d}$ which brings an effective $\beta_N=\beta N^{1-s/d}\to\ii$. It is not like in the short range potential case where the temperature can affect the form of the macroscopic system.

After having identified the macroscopic shape of our system, it is reasonable to expect that locally at the microscopic scale we should get again the Riesz gas, due to the slow variations of the external potential. Proving this result is much more delicate than in the short range case. The main difficulty is to show that different regions do not interact too much thanks to screening. This has been achieved for $d-2\leq s<d$ in the series of works by Serfaty \emph{et al} in Ref.~\onlinecite{SanSer-12,SanSer-14a,SanSer-15,BorSer-13,PetSer-17,RouSer-16,RotSer-15,Leble-15,LebSer-17,Serfaty-19}. The measure $\nu_W$ gives the average local density of the system and by writing the energy relative to $I(W)$ an effective uniform background at this density appears locally. For the result to hold, more stringent assumptions are needed than in the case $s>d$. The macroscopic density $\nu_W$ must be continuous and positive, whereas in Theorem~\ref{thm:confined_short_range} the regularity of the potential was sufficient.

\begin{theorem}[Slowly varying external fields, long range]\label{thm:confined_long_range}
Let $\max(0,d-2)\leq s<d$ and $W$ satisfying the previous conditions. If $\omega$ is unbounded and $s=0$, we assume in addition that $W(x)-\log|x|\to+\ii$ at infinity for $\beta=+\ii$ and $\int_\omega e^{-\beta (V(x)-\log|x|)}\dx<\ii$ for $\beta<\ii$. Concerning the macroscopic equilibrium measure $\nu_W$, we suppose that
\begin{itemize}
 \item $\nu_W$ has compact support $\Sigma$,
 \item $\partial\Sigma$ is $C^1$,
 \item $\nu_W$ is $C^1$ and strictly positive on $\Sigma$ and continuous on $\overline\Sigma$,
 \item either $\nu_W$ is strictly positive everywhere on the boundary, or vanishes everywhere like $\rd(x,\partial\Sigma)^\alpha$, for some $0<\alpha<1$.
\end{itemize}
Then the canonical energy and free energy admit the limit
\begin{multline*}
\lim_{N\to\ii}\frac{E_s\Big(N,N^{1-\frac{s}d}W(N^{-\frac1d}\cdot)\Big)+\frac{N^2}{2d}\log N\delta_0(s)-N^{2-\frac{s}{d}}I(W)}{N}\\
=e(s)\int_{\R^d} \nu_W(x)^{1+\frac{s}d}\,\dx-\frac{\delta_0(s)}{2d}\int_{\R^d}\nu_W(x)\log\nu_W(x)\,\dx, \label{eq:limit_W_C_T0_Jellium}
\end{multline*}
\begin{equation*}
\lim_{N\to\ii}\frac{F_s\big(N,\beta,N^{1-\frac{s}d}W(N^{-\frac1d}\cdot)\big)+\frac{N^2}{2d}\log N\delta_0(s)-N^{2-\frac{s}{d}}I(W)}{N}
=\int_{\R^d}f\big(s,\beta,\nu_W(x)\big)\,\dx.\label{eq:limit_W_C_T_Jellium}
\end{equation*}
\end{theorem}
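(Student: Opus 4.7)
The plan is to reduce the confined problem at the mean-field scale to the homogeneous Jellium thermodynamic limit of Theorem~\ref{thm:limit_C_Jellium} by a two-scale splitting, in the spirit of Sandier–Serfaty. First I would reformulate the energy relative to the macroscopic equilibrium measure $\nu_W$ identified in~\eqref{eq:mean-field_limit}. For a configuration $x_1,\dots,x_N$, introduce the macroscopic empirical measure $\mu_N'=N^{-1}\sum_j\delta_{N^{-1/d}x_j}$ and the fluctuation $\eta_N=\mu_N'-\nu_W$. Using the Euler–Lagrange equation $\nu_W\ast V_s+W=\mu_W$ on $\mathrm{supp}\,\nu_W$ and $\eta_N(\R^d)=0$, a direct expansion gives
\begin{equation*}
\cE_s(x_1,\dots,x_N)+\sum_{j=1}^N W_N(x_j)-N^{2-\frac{s}{d}}I(W)=\frac{N^{2-\frac{s}{d}}}{2}\iint V_s(x-y)\,\rd\eta_N(x)\,\rd\eta_N(y)-\text{self-energy},
\end{equation*}
up to confinement terms which vanish because $W\gg\nu_W\ast V_s$ outside $\Sigma$. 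Rescaling back to the microscopic scale, the quadratic form in $\eta_N$ is precisely the Jellium energy of the points $x_j$ against the slowly-varying background $\nu_W(N^{-1/d}\cdot)$, so proving Theorem~\ref{thm:confined_long_range} reduces to obtaining the thermodynamic limit for a generalized Jellium with non-uniform (but smooth) background.

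The second step is a mesoscopic localization. Using that $\nu_W\in C^1(\Sigma)$ is strictly positive, for each $\ell=\ell(N)$ with $1\ll\ell\ll N^{1/d}$ one tiles $N^{1/d}\Sigma$ by cells $\Omega_\alpha$ of side length $\ell$ on which $\nu_W$ is essentially constant equal to some $\rho_\alpha$. One conditions the canonical Gibbs measure on a number $N_\alpha\approx \rho_\alpha|\Omega_\alpha|$ of points falling in each $\Omega_\alpha$; inside each cell the generalized Jellium reduces, after replacing $\nu_W$ by $\rho_\alpha$, to the homogeneous neutral Jellium at density $\rho_\alpha$ studied in Theorem~\ref{thm:limit_C_Jellium}. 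The per-cell contribution is then $f(s,\beta,\rho_\alpha)|\Omega_\alpha|+o(|\Omega_\alpha|)$ (respectively $e(s)\rho_\alpha^{1+s/d}|\Omega_\alpha|$ at $T=0$), and summing over $\alpha$ gives a Riemann sum converging to the integral $\int_\Sigma f(s,\beta,\nu_W(x))\,\dx$. The log correction at $s=0$ comes from the explicit $\rho\log\rho$ term in the scaling relation~\eqref{eq:relation_f_Jellium}.

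The hard part, and the reason for the constraint $s\geq d-2$, is controlling the interaction between different cells, as well as the replacement of the non-constant background $\nu_W(N^{-1/d}y)$ by its local constant value $\rho_\alpha$ inside each $\Omega_\alpha$. Both issues are attacked by the \emph{screening procedure} of~\cite{SanSer-15,PetSer-17,RouSer-16,LebSer-17}: working with the electric field $E=-\nabla\Phi$ and its Caffarelli–Silvestre extension to $\R^{d+1}$ (needed for $d-2<s<d$), one modifies $E$ in a thin collar near each $\partial\Omega_\alpha$ so that the normal flux vanishes there, at a small $L^2$-cost. This produces a compactly supported field per cell, making cross-cell interactions vanish in a lower bound, while allowing one to paste together near-optimal configurations in an upper bound. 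The modification of the background from $\nu_W(N^{-1/d}y)$ to $\rho_\alpha$ introduces an error controlled by $\|\nabla\nu_W\|_\infty\ell$, which is $o(1)$ if $\ell\ll N^{1/d}$. Discrepancy/separation estimates (Lemmas~\ref{lem:separation} and~\ref{lem:nb_points_boundary}), as well as the Fefferman- and Graf–Schenker-type subadditivity discussed after Theorem~\ref{thm:limit_GC_Jellium}, feed into the procedure. Near $\partial\Sigma$, the assumed $C^1$ regularity of $\partial\Sigma$ and H\"older vanishing of $\nu_W$ are used to bound the number of points close to the boundary and show that the boundary layer contributes only $o(N)$.

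Finally, at positive temperature one combines the energy two-sided bounds with matching entropy bounds. The lower bound for the free energy follows from the energy lower bound and the trivial entropy estimate $-\beta^{-1}\log(|\Sigma|^N/N!)$; the upper bound is produced by a product trial state obtained by placing $N_\alpha$ points according to the canonical Gibbs measure of a homogeneous Jellium at density $\rho_\alpha$ inside each $\Omega_\alpha$, whose free energy is given by Theorem~\ref{thm:limit_C_Jellium}. The canonical Lagrange multiplier $\mu_W$ disappears from the limit because we have already absorbed it at the macroscopic level when subtracting $N^{2-s/d}I(W)$. Matching the two bounds delivers the announced formulas.
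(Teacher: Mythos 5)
The paper does not give a self-contained proof of this theorem; after the statement it simply refers to the series of works by Sandier--Serfaty, Petrache--Serfaty, Rougerie--Serfaty, Rota~Nodari--Serfaty for $\beta=+\infty$ and Lebl\'e--Serfaty, Serfaty for $\beta<\infty$. Your sketch of the $T=0$ side is a faithful summary of that programme: the exact splitting of the energy into a leading mean-field term and a next-order term quadratic in the fluctuation $\eta_N$ (using the Euler--Lagrange equation $\nu_W\ast V_s+W=\mu_W$ and $\eta_N(\R^d)=0$), the mesoscopic tiling of the blown-up support of $\nu_W$, the local replacement of the slowly varying background $\nu_W(N^{-1/d}\cdot)$ by a constant $\rho_\alpha$ at cost $O(\|\nabla\nu_W\|_\infty\ell)$, and the control of inter-cell interactions by screening the electric field via the Caffarelli--Silvestre extension for $d-2<s<d$, together with the discrepancy/separation estimates near $\partial\Sigma$. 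Those are indeed the right ingredients.

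The positive-temperature paragraph, however, contains a genuine gap. You assert that the free-energy lower bound ``follows from the energy lower bound and the trivial entropy estimate $-\beta^{-1}\log(|\Sigma|^N/N!)$''. This cannot produce the claimed limit $\int f(s,\beta,\nu_W)\,\dx$. The trivial global entropy estimate does not see the spatial variation of $\nu_W$: by Jensen's inequality $\int_\Sigma\nu_W\log\nu_W\,\dx\geq-\log|\Sigma|$, with strict inequality unless $\nu_W$ is constant, so combining the sharp energy lower bound $e(s)\int\nu_W^{1+s/d}$ with the trivial entropy yields a lower bound strictly below the target whenever $\nu_W$ is non-constant. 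Worse, even in the constant case the trivial entropy only reproduces the crude two-sided bound of Lemma~\ref{lem:simple_estim_F_s}, not the exact value of $f$; the Selberg formula~\eqref{eq:f_log_gas} already shows $f$ is not given by the trivial bound. The sharp lower bound at $\beta<\infty$ requires a mechanism that captures both energy and entropy \emph{simultaneously}: either the Lebl\'e--Serfaty Large Deviation Principle for the empirical field, whose rate function is built from the Riesz energy and the specific relative entropy, or an approximate superadditivity of $-\log Z$ across the cells (the free-energy analogue of the screening/subadditivity estimate you invoke for the energy), followed by an application of Theorem~\ref{thm:limit_C_Jellium} in each cell. Your third paragraph already gestures at the correct route, but your closing paragraph reverts to the insufficient trivial bound, so as written the argument would not close.
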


As mentioned above, the theorem is proved in Ref.~\onlinecite{SanSer-12,SanSer-14a,SanSer-15,BorSer-13,PetSer-17,RouSer-16,RotSer-15,BetSan-18} at $\beta=+\ii$ and \onlinecite{Leble-15,LebSer-17,Serfaty-19} at $\beta<\ii$. A similar result is shown in Ref.~\onlinecite{BauBouNikYau-19} for $s=0$ in $d=2$ with a quantitative error term. A full  expansion in powers of $1/N$ was proved in Ref.~\onlinecite{AlbPasShc-01,BorGui-13,BorGuiKoz-15} at $s=0$ in $d=1$ using tools from random matrix theory (under more restrictive assumptions on the potential).

To our knowledge the grand-canonical problem has not been considered in the literature. We think that the approach of Ref.~\onlinecite{HaiLewSol_1-09,*HaiLewSol_2-09,LewLieSei-18,LewLieSei-19} based on the Graf-Schenker inequality~\cite{GraSch-95} for $s=d-2$ or that of Ref.~\onlinecite{CotPet-19,CotPet-19b} based on the Fefferman decomposition~\cite{Fefferman-85,Hughes-85,Gregg-89}  should provide similar limits for all $s>0$, with the functions $e^{\rm GC}(s)$ and $f^{\rm GC}(s,\beta,\rho)$ in Theorem~\ref{thm:limit_GC_Jellium}. From the equivalence of ensembles this would provide a different proof of Theorem~\ref{thm:confined_long_range} when $s\geq d-2$.

\begin{remark}[Fixing the density]\label{rmk:UEG}
The fixed external potential problem possesses a dual which plays an important role in classical and quantum density functional theory~\cite{LewLieSei-19_ppt}. This amounts to fixing the density $\rho(x)$ of the system. At zero temperature, the formulation in the canonical case is
$$C_s[\rho]=\min_{\rho_\bP=\rho}\int_{(\R^d)^N}\bigg(\sum_{1\leq j<k\leq N}V_s(x_j-x_k)\bigg)\rd\bP(x_1,\cdots,x_N)$$
where the minimum is over all symmetric probabilities $\bP$ on $(\R^d)^N$ of density $\rho$. This is a \emph{multi-marginal optimal transport} problem~\cite{Pass-15,LewLieSei-19_ppt}. We have the Legendre transforms
$$C_s[\rho]=\sup_{W}\left\{E_s(N,W)-\int\rho W\right\},\qquad E_s(N,W)=\inf_{\rho}\left\{C_s[\rho]+\int\rho W\right\},$$
where the last minimum is over densities of integral $N$. The limit similar to Theorems~\ref{thm:confined_short_range} and~\ref{thm:confined_long_range} corresponds to very spread out densities in the form $\nu(x/N^{1/d})$ with $\int\nu=1$. This limit was studied for $s=1$ in dimension $d=3$ in Ref.~\onlinecite{CotFriKlu-13,CotFriPas-15,LewLieSei-18,LewLieSei-19} and all $0<s<d$ in Ref.~\onlinecite{CotPet-19}. Quantitative bounds for a general slowly varying density are provided in Ref.~\onlinecite[Appendix]{LewLieSei-19}.
\end{remark}

\subsection{Examples: Feckete points, random matrices, Laughlin wavefunctions}\label{sec:random_matrices}

In this section we give some important practical examples of confined Riesz gases. We will not state any theorem and refer to the literature for more information.

\bigskip

\paragraph{Fekete points.}\label{sec:Feckete}
As we have mentioned above, we can confine our points to a manifold instead of using a smooth external potential. The zero-temperature minimizers are called \emph{Fekete points}~\cite{Fekete-23}. There are many works in this direction and we refer to Ref.~\onlinecite{SafKui-97,BraHarSaf-12,BorHarSaf-19} for an overview. The equivalent of the long range convergence in Theorem~\ref{thm:confined_long_range} is proved for smooth closed curves in dimension $d=1$ for all $s\geq -2$ in Ref.~\onlinecite{BorHarSaf-19} and seems to be otherwise only shown for $s=0$ and $d=2$ for the sphere in Ref.~\onlinecite{BetSan-18}. Fekete points on complex manifolds have also generated some interest in complex geometry~\cite{BerBouWitt-11,Berman-17,Berman-19,Berman-20}.

The case of the sphere has concentrated much of the attention, due to its link with the 7th open problem of Smale in 1998~\cite{Smale-98,Beltran-13}. Finding the optimal position of points on spheres has a lot of practical applications. In biology, this problem can explain the form of some viruses, and the arrangement  of pores on pollen grains; it is also studied in link with colloidosomes~\cite{Dinetal-02}. In physics, computations of Jellium on the sphere are used to avoid the use of the background and Ewald summation techniques~\cite{HanLevWei-79,CaiLevWeiHan-82,BerBoaPal-94}. This is often called the ``spherical boundary condition''. In quantum chemistry, it was proposed in Ref.~\onlinecite{LooGil-11,LooGil-13,AgbKnoGilLoo-15} that simulating electrons on a circle or a sphere could provide a better approximation to the energy of finite systems in density functional theory. Exact formulas for the 2D Coulomb gas settled on the sphere $\bS^2$ are derived for even $\Upsilon=\beta$ in Ref.~\onlinecite{Samaj-04,TelFor-12,SalTel-16}. The same system was also considered on a ``pseudo-sphere'' (a non-compact Riemann surface of constant negative curvature) in Ref.~\onlinecite{JanTel-98,FanJanTel-03}.

\bigskip

\paragraph{Random matrices.}
Many of the recent works on Riesz gases are motivated by the theory of random matrices~\cite{Mehta-10,AndGuiZei-10,Forrester-10,Forrester-22_ppt}. Introduced by Wigner~\cite{Wigner-55,Wigner-67} in 1955 and further developed by Dyson and Mehta~\cite{Dyson-62a,*Dyson-62b,*Dyson-62c,DysMeh-63a,DysMeh-63b} in 1962, this fascinating subject has always been tight to the theory of the \emph{log gas} at $s=0$. In short, if the entries of a matrix are independent Gaussian random variables, the statistical distribution of the eigenvalues
$\lambda_1,...,\lambda_N$ is (after a proper scaling) given by the log gas $s=0$ in the harmonic external potential $W(x)=|x|^2$. For hermitian matrices (Gaussian Unitary Ensemble \emph{GUE}) the problem is posed in $d=1$ since the eigenvalues are real, with the effective $\Upsilon=\beta=2$ (at density $\rho=1$). If one imposes that the matrices have real coefficients (Gaussian Orthogonal Ensemble \emph{GOE}), then $\beta=1$. When considering complex matrices without any
symmetry assumption (Ginibre ensemble\cite{Ginibre-65}), the same holds in dimension $d=2$ with $\beta=2$.  Dumitriu and Edelman have constructed a probability measure for tridiagonal matrices which provides all possible values of $\beta$ in a harmonic potential~\cite{DumEde-02}.

It is also possible to consider unitary or orthogonal matrices (Circular Unitary Ensemble \emph{CUE}, and
Circular Orthogonal Ensemble \emph{COE}, respectively), using the uniform law on this compact subset of matrices. Then, the eigenvalues are confined to the circle $\bS^1$, still with $\beta=2$ and $\beta=1$. Killip and Nenciu\cite{KilNen-04} have generalized to circular ensembles the result mentioned above by Dumitriu and Edelman\cite{DumEde-02}. They constructed classes of unitary matrices which provide the 2D Coulomb gas restricted to a circle (that is, the periodic 1D log gas) for all values of~$\beta$.

For the special values of $\beta$ mentioned above one can obtain explicit formulas. After passing to the limit $N\to\ii$, this provides a point process over the line (resp. the complex plane for the Ginibre ensemble) which is our Jellium infinite system. The two processes at $\beta=2$ (GUE for $d=1$ and Ginibre for $d=2$) are \emph{determinantal point processes}\cite{Soshnikov-00}, that is, the classical probability of a quantum quasi-free Fermi state~\cite{BacLieSol-94}. For $d=1$ we obtain nothing else but the \emph{non-interacting 1D Fermi gas}, which is no big surprise in light of a result of Calogero and Sutherland\cite{Calogero-71,Sutherland-71,Sutherland-71b,Moser-75,Sutherland-04} explained below. For $d=2$ and $\beta=2$ we rather obtain the \emph{noninteracting Fermi gas in a constant magnetic field perpendicular to the plane~\cite{Forrester-98} at filling factor 1}, that is, filling the entire lowest Landau level.\footnote{The particles have no spin and one should use the magnetic Laplacian. This is also equivalent to a Fermi system in a harmonic trap rotating at the largest possible speed~\cite{LacMajSch-19}.}

Using tools from the theory of random matrices, the limiting point process could be constructed for all values of $0<\beta<\ii$ in $d=1$. This was achieved for the circular ensemble (periodic case) in Ref.~\onlinecite{KilSto-09} and for the Gaussian case (harmonic external potential) in Ref.~\onlinecite{ValVir-09}. The two were later shown to coincide in Ref.~\onlinecite{Nakano-14,ValVir-17}. The corresponding point process (our Riesz gas at $s=0$, $d=1$ and finite $\beta$) is usually called the \emph{sine-$\beta$ point process}.

We have to mention a famous analogy to a quantum system. Calogero and Sutherland discovered in 1971~\cite{Calogero-71,Sutherland-71,Sutherland-71b,Moser-75,Sutherland-04,Forrester-84} that the square root of the $N$-point Gibbs measure of the $\beta$--log gas is the exact minimizer of a quantum problem, for all $N$ and all $\beta>0$. This is valid either on the circle or in a harmonic trap. In the Gaussian case the quantum problem consist of $N$ quantum particles interacting with the potential $1/x^2$ in the harmonic potential
$$\sum_{j=1}^N-\frac{\rd^2}{\rd x_j^2}+N^{-1}|x_j|^2+\frac{\beta(\beta-2)}{2}\sum_{1\leq j<k\leq N}\frac{1}{(x_j-x_k)^2}.$$
For circular ensembles, the same holds on $\bS^1$ with
$$\sum_{j=1}^N-\frac{\rd^2}{\rd x_j^2}+\frac{\pi^2\beta(\beta-2)}{2}\sum_{1\leq j<k\leq N}\frac1{\sin^2\pi(x_j-x_k)}.$$
After taking the limit $N\to\ii$, this says that the \emph{1D log gas at inverse temperature $\beta$ (sine-$\beta$)} coincides with a \emph{zero-temperature quantum gas with inverse-quadratic interaction}. The latter is the thermodynamic limit of the scaling-invariant quantum model with Hamiltonian
\begin{equation}
\sum_{j=1}^N-\frac{\rd^2}{\rd x_j^2}+\frac{\beta(\beta-2)}{2}\sum_{1\leq j<k\leq N}\frac1{(x_j-x_k)^2}.
\label{eq:Calogero-Sutherland}
\end{equation}
For the quantum system the interaction is repulsive for $\beta>2$, leading to a strong repulsion on the classical points, but attractive (though partly compensated by the kinetic energy) for $\beta<2$ and leading to a weaker repulsion between the points. The $N$-particle wavefunction is assumed to vanish on the diagonal due to the strong divergence of the interaction. At $\beta=2$ this condition must be retained and this leads to the non-interacting Fermi gas. The case $\beta=2$ is a transition point, as will be discussed in Section~\ref{sec:transitions}.

It is surprising that the Calogero-Sutherland-Moser (CSM) model occurs both at $s=0$ ($T=0$ quantum CSM) and $s=2$ ($T\geq0$ classical CSM). This link is further exploited in Ref.~\onlinecite{BogGirSch-09,AgaKulDha-19}.
As we have mentioned in Remark~\ref{rmk:CSM}, the quantum Calogero-Sutherland-Moser model can also be mapped to a non-interacting gas obeying the Haldane-Wu fractional statistics~\cite{Haldane-91,Wu-94,BerWu-94,Ha-94,Isakov-94,MurSha-94}. This is yet another interpretation of the one-dimensional log gas.

Many works in random matrix theory are devoted to proving some kind of ``universality''. This can be with respect to the shape of the external potential. Many results more precise than Theorem~\ref{thm:confined_long_range} are available in this setting, see, e.g., Ref.~\onlinecite{BouErdYau-12,BouErdYau-14} and the references therein. But it is also important to vary the law of the entries of the matrices. The universality in this context is called the \emph{Wigner-Dyson-Mehta conjecture} and was solved in Ref.~\onlinecite{ErdSchYau-11,TaoVu-11,ErdRamSchTaoVuYau-10,ErdYau-15}. Random matrices have been the object of many works which cannot be summarized in only a few lines. We refer to Ref.~\onlinecite{Mehta-10,ErdYau-12,Erdos-13,Deift-17,ErdYau-17,Chafai-21_ppt} for reviews and more recent results.

The original motivation of Wigner was to describe the energy levels of nuclei using a simplified model. Random matrices turned out to be extremely efficient in this respect~\cite{Mehta-10,BroFloFreMelPanWon-81,HaqPanBoh-82}. But the 1D log-gas is also believed to appear in many other settings~\cite{GuhMulWei-98,BouKea-13}. For instance, a famous conjecture states that it should also describe the local law of the eigenvalues of random Schr\"odinger operators in the delocalized phase (\onlinecite[Chap.~17]{AizWar-15}). A similar universality appears in quantum chaos where it is known as the \emph{Bohigas-Giannoni-Schmidt conjecture}~\cite{BohGiaSch-84,Berry-85,WinFrie-86,SimSzaAlt-93,SoAnlOttOer-95}. Zeros of random polynomials~\cite{ZeiZel-10} have also been considered.

In fact, the 1D log gas appears everywhere and is kind of universal in the class of repulsive point processes.
It has been found to properly describe the zeroes of the Riemann Zeta  function on the critical line~\cite{Montgomery-73,RudSar-96,BouKea-13,Odlyzko-87}, the waiting time between trains in the New York subway~\cite{JagTro-17} or buses in Mexico~\cite{KrbSeb-00,BaiBorDeiSui-06}, as well as the space between cars parked on the street or birds perched on a power line~\cite{Abul-Magd-06,Seba-07,Seba-09}. The rigorous understanding of this universality is an important question.

\bigskip

\paragraph{Laughlin wavefunction and Quantum Hall Effect.}
The \emph{fractional quantum Hall effect} is observed in two-dimensional electron systems subjected to low temperatures and strong magnetic fields. The conductance displays some plateaux at specific (quantized) fractional values of the filling factor~\cite{StoTsuGos-99,Callaway-91}. In 1983, Laughlin~\cite{Laughlin-83} proposed that for the special fractions $1/m$ this could be understood through the simple quantum wavefunction
$$\Psi_{\rm L}(z_1,...,z_N)=\prod_{1\leq j<k\leq N}(z_j-z_k)^me^{-\sum_{j=1}^N|z_j|^2},$$
with $m$ odd for electrons. Such functions also naturally arise for bosonic systems with $m$ even, when those are submitted to a very fast rotation~\cite{Cooper-08,LewSei-09,SeiYng-20}. Here we make the identification $\R^2\simeq \C$ and $z_j\in\C$. The modulus square of $\Psi_{\rm L}$ is the classical probability distribution of our Riesz Gibbs measure at $\beta=2m$ in a harmonic external potential. In particular, we obtain the Ginibre ensemble at $m=1$. This ``plasma analogy'' was used by Laughlin to argue that the excitations have the fractional charge $1/m$, which explained the fractional quantum Hall effect. Halperin\cite{Halperin-83} and Arovas-Schrieffer-Wilczek\cite{AroSchWil-84} proposed that fractionally charged quasiparticle excitations of the Laughlin states are \emph{anyons}~\cite{LeiMyr-77,Myrheim-99}, a kind of topological 2D quantum particle interpolating between bosons and fermions. This was confirmed later in experiments~\cite{PicRezHeiUmaBunMah-97,SamGlaJinEti-97,Martin_etal-04}.

The Laughlin wave function has been the source of many works relying on the plasma analogy, among which we only cite Ref.~\onlinecite{RouSerYng-13b,LieRouYng-18,LieRouYng-19,Rougerie-22_ppt}. A different line of research is to study the properties of Laughlin states on Riemann surfaces~\cite{Haldane-83,Klevtsov-16,KleXiaGeoWie-17,NemKle-21} or to interpret the different terms in the large-$N$ expansion of the Coulomb gas free energy in an external potential as geometric quantities~\cite{WieZab-00}. The 2D Coulomb gas appears in a variety of other situations with a geometric content, for instance in the study of the determinant of the Laplace-Beltrami operator on surfaces~\cite{OsgPhiSar-88}.

%%%%%%%%%%%%%%%%%%%%%%%%%%%%%%%%%%%%%%%%%%%%%%%%%%%%%
%%%%%%%%%%%%%%%%%%%%%%%%%%%%%%%%%%%%%%%%%%%%%%%%%%%%%
\section{Properties of Riesz gases, phase transitions}\label{sec:transitions}
%%%%%%%%%%%%%%%%%%%%%%%%%%%%%%%%%%%%%%%%%%%%%%%%%%%%%
%%%%%%%%%%%%%%%%%%%%%%%%%%%%%%%%%%%%%%%%%%%%%%%%%%%%%

The convex set $\cR_{s,\beta,\mu}$ of Riesz point processes was defined in Definitions~\ref{def:equilibrium} and~\ref{def:Gibbs_DLR} for $s>d$ and $\beta\leq\ii$. It is in Theorem~\ref{thm:infinite_conf_Jellium} for $\beta=+\ii$ and $d-2\leq s<d$. Translation-invariant Riesz point processes were defined in Ref.~\onlinecite{DerHarLebMai-21} for $s=0$ in dimension $d=1$ and in Ref.~\onlinecite{DerVas-21_ppt} for $d-1<s<d$ in all dimensions. We discuss in this section the known and expected properties of Riesz gases, with a focus on uniqueness.

When $\beta<\ii$, the simplest definition of a \emph{phase transition} is that $\cR_{s,\beta,\mu}$ is not reduced to one point, that is, Riesz point processes are not unique. As we will see, this concept is a little bit too restrictive, in particular in dimensions $d\in\{1,2\}$ where uniqueness is expected in most cases, although some kind of phase transition happens. At $\beta=+\ii$ we have explained that $\cR_{s,\beta,\mu}$ is never reduced to one point and discussed the crystallization conjecture in Section~\ref{sec:crystal_conjecture}.

To be more precise, a phase transition is often associated with the breaking of a symmetry of the system. In the thermodynamic limit the model is invariant under both translations and rotations. The conjecture is that when $\cR_{s,\beta,\mu}$ is not reduced to one point, it should be the convex hull of the $g\cdot\mathscr{P}_j$ for a few simple $\mathscr{P}_j$, where $g$ varies among all possible isometries of $\R^d$. In most cases we expect only one $\mathscr{P}_1$, hence uniqueness up to isometries. At first order phase transitions points we can have several $\mathscr{P}_j$. This is what corresponds to Conjecture~\ref{conj:crystal_state} at $\beta<+\ii$. Note that it is possible to break only part of the symmetries. Riesz point processes could be all translation-invariant but not rotation-invariant. Nevertheless, the simplest situation is that of a \emph{solid}, where we would have only one $\mathscr{P}_1$ which is periodic with respect to some lattice (as in Lemma~\ref{lem:periodic_energy_pt_process}) so that $\cR_{s,\beta,\mu}$ corresponds to the uniform probabilities over translations and rotations of this periodic system.

To our knowledge, the only available proof of a phase transition for Riesz gases is at $s=-1$ in $d=1$ (Section~\ref{sec:Kunz} below). There are more results for lattice gases\cite{FroSpe-82,FroSimSpe-76}.

An important tool to investigate symmetry breaking is the decay of the truncated correlation functions. We recall that the \emph{truncated $k$--point correlation functions} are defined recursively by
$$\rho^{(k)}_T(x_1,...,x_k):=\rho^{(k)}(x_1,...,x_k)-\sum_{\{1,...,k\}=I_1\cup\cdots\cup I_j}\rho^{(|I_1|)}_T(X_{I_1})\cdots\rho^{(|I_j|)}_T(X_{I_j}),$$
where the sum runs over all the (non trivial) partitions of $\{1,...,k\}$ and $X_{J}=(x_{i_1},...,x_{i_j})$ when $J=\{i_1,...,i_j\}$. For instance, the truncated two-point correlation function is
$$\rho^{(2)}_T(x,y):=\rho^{(2)}(x,y)-\rho^{(1)}(x)\rho^{(1)}(y).$$
Unlike the correlation functions which depend linearly on the point process, the truncated correlation functions are \emph{nonlinear} functions of the point process.

One important property of the convex set $\cR_{s,\beta,\mu}$, proved in Ref.~\onlinecite[App.~B]{LanRue-69} in the short range case $s>d$, is that its extreme points are characterized by the fact that their truncated correlations all go to 0 at infinity. In particular, if we can find one Riesz point process in $\cR_{s,\beta,\mu}$ which has some long range order (one of its truncated correlations does not converge to 0), we can conclude that the set is not reduced to one point, hence there is some phase transition. A typical situation encountered in practice is that of a translation-invariant process (e.g. obtained from the limit with periodic boundary conditions as in Section~\ref{sec:periodic_BC}) such that $\rho^{(2)}_T(x-y)\nrightarrow0$ when $|x-y|\to\ii$.

\subsection{Uniqueness and decay of correlation}

In this section we mention existing uniqueness results, either at small values of $\Upsilon=\beta\rho^{\frac{s}d}$, or in dimension $d\in\{1,2\}$ (Mermin-Wagner theorem).

\subsubsection{Short range case $s>d$}
When $\Upsilon=\beta\rho^{\frac{s}d}$ is small enough, it is known that $\cR_{s,\beta,\mu}$ is reduced to one point, that is, the Riesz point process is unique and no phase transition occurs. In particular the truncated correlation functions all tend to 0 at infinity. Their decay is universal, given by that of the potential.

\begin{theorem}[Uniqueness for $\Upsilon\ll1$~\cite{Ruelle-63,Penrose-63,Ruelle,BenGruMar-84}]\label{thm:unique}
Let $s>d$. For
\begin{equation}
\beta^{\frac{d}s} e^{\beta\mu}<\frac1{\dps e\int_{\R^d}(1-e^{-|x|^{-s}})\,\dx},
\label{eq:Mayer-radius}
\end{equation}
the set $\cR_{s,\beta,\mu}$ is reduced to one point $\mathscr{P}_{s,\beta,\mu}$, which is thus translation-- and rotation--invariant. The maps $(\beta,\mu)\mapsto \rho^{(n)}_{\mathscr{P}_{s,\beta,\mu}}$ and $(\beta,\mu)\mapsto f(s,\beta,\mu)$ are real-analytic in the region characterized by the condition~\eqref{eq:Mayer-radius} for all $n\geq1$. We have
\begin{equation}
 \frac{e^{\beta\mu}}{1+\beta^{\frac{d}s}e^{\beta\mu}\int_{\R^d}(1-e^{-|x|^{-s}})\,\dx}\leq\rho:=\rho^{(1)}_{\mathscr{P}_{s,\beta,\mu}}\leq e^{\beta\mu}.
 \label{eq:estim_rho_mu}
\end{equation}
The solution can be parametrized in terms of $\rho$ instead of $\mu$ and is again unique and real-analytic in the region
\begin{equation}
\Upsilon^{\frac{d}s}=\beta^{\frac{d}s}\rho<\frac1{\dps (1+e)\int_{\R^d}(1-e^{-|x|^{-s}})\,\dx}.
\label{eq:Mayer-radius2}
\end{equation}
The truncated two-point correlation function of $\mathscr{P}_{s,\beta,\mu}$ satisfies
\begin{equation}
\lim_{|x-y|\to\ii}|x-y|^s\left(\rho^{(2)}(|x-y|)-\rho^2\right)=-\beta\left(\rho+\int_{\R^d}\big(\rho^{(2)}(|z|)-\rho^2\big)\,\dz\right)^2.
\label{eq:decay_correlation_short}
\end{equation}
\end{theorem}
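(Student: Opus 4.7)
The plan is to convert the problem into a fixed-point equation via the Kirkwood-Salsburg (KS) hierarchy~\eqref{eq:KS} and exploit Ruelle's contraction argument. First I would introduce the Ruelle Banach space $E_\xi$ of sequences $\rho = (\rho^{(k)})_{k\geq 1}$ of symmetric measurable functions endowed with the norm $\|\rho\|_\xi := \sup_{k\geq 1}\xi^{-k}\|\rho^{(k)}\|_{L^\infty}$, and rewrite~\eqref{eq:KS} as a fixed-point equation $\rho = z\,\Pi_\beta(\rho)$ with activity $z = e^{\beta\mu}$. Bounding $|e^{-\beta|y|^{-s}}-1| = 1-e^{-\beta|y|^{-s}}$ inside~\eqref{eq:KS} and using $e^{-\beta\sum_{j\geq 2}|x_j-x_1|^{-s}} \leq 1$, a direct expansion of the series yields the Lipschitz estimate
$$\|\Pi_\beta(\rho)-\Pi_\beta(\rho')\|_\xi \leq \xi^{-1}e^{\xi C_\beta}\|\rho-\rho'\|_\xi,\qquad C_\beta := \int_{\R^d}(1-e^{-\beta|y|^{-s}})\,\dy = \beta^{\frac{d}s}\!\int_{\R^d}(1-e^{-|y|^{-s}})\,\dy,$$
where the last equality comes from the scaling $y \mapsto \beta^{1/s}y$. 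Optimizing the prefactor at $\xi = 1/C_\beta$ turns the Lipschitz constant into $z\,e\,C_\beta$, which is strictly less than $1$ precisely under~\eqref{eq:Mayer-radius}. The Banach fixed-point theorem then yields a unique solution $\rho_{s,\beta,\mu}\in E_\xi$, and real-analyticity in $(\beta,\mu)$ follows by applying the analytic implicit function theorem to the holomorphic contraction (polynomial in $z$, entire in $\beta$ through $C_\beta$ and the integrands $e^{-\beta|y|^{-s}}$).

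Next I would invoke the equivalence between KS, BBGKY and DLR recalled at the end of Section~\ref{sec:local_bd_short}: uniqueness of the KS solution forces $\cR_{s,\beta,\mu} = \{\mathscr{P}_{s,\beta,\mu}\}$, and since the KS equation is invariant under isometries of $\R^d$, the unique solution is automatically translation and rotation invariant. The upper bound $\rho\leq e^{\beta\mu}$ in~\eqref{eq:estim_rho_mu} is just~\eqref{eq:estim_correlations} with $k=1$. For the lower bound I would insert into~\eqref{eq:KS} at $k=1$ the pointwise inequality $\rho^{(n+1)}(y_1,\dots,y_n,x)\leq z\,\rho^{(n)}(y_1,\dots,y_n)$, which for positive-type repulsive potentials is the same Palm-measure domination that underlies Ginibre's variance bound~\eqref{eq:variance_Ginibre}; this turns the $k=1$ KS identity into a geometric series in $zC_\beta$ whose summation gives $\rho \geq z/(1+zC_\beta)$. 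The reparametrization by $\rho$ is then clear: $\rho(z) = z - C_\beta z^2 + O(z^3)$ has nonvanishing derivative at $z=0$, so the analytic implicit function theorem inverts $z\mapsto\rho(z)$ near $0$. Combining $z \leq \rho(1+zC_\beta)$ with~\eqref{eq:Mayer-radius} yields $e\,C_\beta\rho(1+zC_\beta)<1$, and using once more $zC_\beta \leq 1/e$ produces $(1+e)C_\beta\rho < 1$, i.e.\ exactly~\eqref{eq:Mayer-radius2}.

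For the asymptotic~\eqref{eq:decay_correlation_short}, the idea is to derive a closed integral equation for the truncated pair correlation $h(x-y) := \rho^{(2)}(x-y) - \rho^2$. Subtracting $\rho$ times~\eqref{eq:KS} at $k=1$ from~\eqref{eq:KS} at $k=2$ produces such a linear integral equation whose inhomogeneous term contains the factor $e^{-\beta|x-y|^{-s}}-1 = -\beta|x-y|^{-s}+O(|x-y|^{-2s})$ as $|x-y|\to\infty$. Running the contraction argument on $h$ in a weighted $L^\infty$-space with weight $(1+|x|)^{s}$, one extracts the leading skeleton: the tail is proportional to $|x-y|^{-s}$ with prefactor $-\beta\bigl(\rho+\int h\bigr)^2$, where the compressibility-like factor $\chi := \rho+\int h$ arises from the two legs of the skeleton each contributing one full resummation over connected clusters. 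The main obstacle is this last step: the weighted estimate requires a secondary fixed-point argument because the naive $L^\infty$ contraction does not see the polynomial decay, and one must check that all subleading diagrams are absolutely integrable (which is guaranteed by $s>d$ combined with the already established $L^\infty$ bounds on the $\rho^{(k)}$'s and the smallness of $zC_\beta$).
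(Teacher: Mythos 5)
The paper does not write out a proof of this theorem; it points the reader to Ruelle's book (Chap.~4) for the fixed-point argument in Kirkwood--Salsburg form, to Ruelle--64 and Groeneveld--Benfatto--Gruber--Martin for the decay of the truncated pair function, and to Ginibre--67 for the compressibility bound. Your overall architecture (Ruelle Banach space $E_\xi$, KS contraction with Lipschitz constant optimised at $\xi=1/C_\beta$ to give exactly \eqref{eq:Mayer-radius}, the KS/DLR equivalence, invariance of the unique fixed point under isometries, and a weighted-$L^\infty$ argument identifying $\chi=\rho+\int\rho^{(2)}_T$ as the coefficient of the $|x-y|^{-s}$ tail) is faithful to these references and is the intended route.

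There is, however, a genuine gap in your derivation of the lower bound in \eqref{eq:estim_rho_mu}. Inserting $\rho^{(n)}\leq z^{n-1}\rho$ into the $k=1$ Kirkwood--Salsburg identity and bounding the alternating sum by absolute value does \emph{not} give a geometric series: the $1/n!$ factors make the expansion exponential, and the estimate you would actually obtain is $\rho\geq z\,e^{-zC_\beta}$, which is strictly weaker than $z/(1+zC_\beta)$ since $e^{zC_\beta}\geq 1+zC_\beta$. The linear bound requires one of two repairs. Either use the Gibbs-state representation
\begin{equation*}
\rho^{(1)}(x)=z\,\bE_{s,\beta,\mu}\bigg[\prod_j e^{-\beta V_s(x-x_j)}\bigg]
\end{equation*}
together with the Weierstrass product inequality $\prod_j(1-g_j)\geq 1-\sum_j g_j$ for $g_j\in[0,1]$, which gives $\rho\geq z\big(1-\rho\int(1-e^{-\beta V_s})\big)=z(1-C_\beta\rho)$ directly; or, staying within the KS hierarchy, exploit the sign alternation by pairing the $n$ and $n+1$ terms: using $\rho^{(n+1)}\leq z\rho^{(n)}$ one has
\begin{equation*}
\frac{1}{n!}\int\prod_{j=1}^n \tilde K\,\rho^{(n)}\;-\;\frac{1}{(n+1)!}\int\prod_{j=1}^{n+1}\tilde K\,\rho^{(n+1)}\;\geq\;\Big(1-\tfrac{zC_\beta}{n+1}\Big)\frac{1}{n!}\int\prod_{j=1}^n \tilde K\,\rho^{(n)}\;\geq\;0
\end{equation*}
for $zC_\beta\leq 2$, so that the contribution of all terms $n\geq 2$ is nonnegative and only $-C_\beta\rho$ survives to first order. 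Relatedly, your algebra passing from \eqref{eq:Mayer-radius} to \eqref{eq:Mayer-radius2} is written in the wrong direction: the point is that $\rho C_\beta<1/(1+e)$ together with $\rho\geq z/(1+zC_\beta)$ forces $z\leq\rho/(1-\rho C_\beta)<1/(eC_\beta)$, so the $\mu$-parametrised theory covers that range of $\rho$; your displayed chain of inequalities does not establish this implication. Finally, you correctly flag that the weighted fixed-point argument for \eqref{eq:decay_correlation_short} is only a sketch; note that a prerequisite --- the integrability of $\rho^{(2)}-\rho^2$, needed for the right-hand side of \eqref{eq:decay_correlation_short} to make sense --- is itself nontrivial and is the content of Ruelle's 1964 paper, which should be invoked before the weighted estimate is run.
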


The proof of uniqueness and analyticity goes by applying the Banach fixed point to the Kirkwood-Salsburg equations~\eqref{eq:KS}~\cite{Ruelle-63,Penrose-63,Ruelle}. This is all explained in Ref.~\onlinecite[Chap.~4]{Ruelle}. The estimate~\eqref{eq:estim_rho_mu} confirms what we have mentioned before concerning the fact that $e^{\beta\mu}$ plays the role of a density. At low density we have $\rho(\mu)\sim e^{\beta\mu}$. That the function $x\mapsto \rho^{(2)}(|x|)-\rho^2$ is integrable is shown in Ref.~\onlinecite{Ruelle-64} and the precise decay~\eqref{eq:estim_rho_mu} is stated in Ref.~\onlinecite{Groeneveld-67} and proved in Ref.~\onlinecite{BenGruMar-84} (higher $\rho_T^{(k)}$ are handled there as well). Note that, by~\eqref{eq:nb_correlation}, the number inside the square on the right side of~\eqref{eq:decay_correlation_short} is nothing else but the variance of the number of points, per unit volume (see also Section~\ref{sec:prop} and Ref.~\onlinecite{GhoLeb-17}). The Ginibre inequality~\eqref{eq:variance_Ginibre} implies
$$\rho+\int_{\R^d}\big(\rho^{(2)}(|z|)-\rho^2\big)\,\dz\geq \frac{\rho}{1+\beta^{\frac{d}s}e^{\beta\mu}\int_{\R^d}(1-e^{-|x|^{-s}})\,\dx}>0.$$
In particular, the truncated two-point function decays exactly like $1/|x-y|^s$. In fact, it is argued in Ref.~\onlinecite{DunSou-76,GruLugMar-80,BenGruMar-84} that even for larger values of $\Upsilon$, it can never decay faster than the potential $1/|x-y|^s$. This is different in the long range case, as we will see below.

Of interest is also the behavior of $\rho^{(2)}(|x-y|)$ when $|x-y|\to0$, which describes the amount of repulsion between the points. From Lemma~\ref{lem:local_bound_GC} we already know that
$$\rho^{(2)}(|x-y|)\leq e^{2\beta\mu}e^{-\frac\beta{|x-y|^s}},$$
and the exact behavior is indeed expected to be $\rho^{(2)}(|x-y|)= e^{-\beta|x-y|^{-s}(1+o(1))}$. The fast divergence of $V_s$ at the origin results in an exponentially strong repulsion between the points.

As usual, dimensions $d=1$ and $d=2$ are special and continuous symmetries can not easily be broken, by the Van Hove~\cite{vanHove-50} and Mermin-Wagner~\cite{MerWag-66,Mermin-67,Mermin-68} theorems (except of course at $T=0$). The best result known so far seems to be the following.

\begin{theorem}[Mermin-Wagner]\label{thm:Mermin-Wagner}
For $d\in\{1,2\}$ and $s>d$, the equilibrium states in $\cR_{s,\beta,\mu}$ are all \emph{translation-invariant} for all $\mu\in\R$ and $\beta<\ii$.

For $d=1$ and $s>2$, the set $\cR_{s,\beta,\mu}$ is reduced to one point and the functions $(\beta,\mu)\mapsto\rho(\beta,\mu)$ and $(\beta,\mu)\mapsto g(s,\beta,\mu)$ are real-analytic on $\{(\mu,\beta)\in \R\times(0,\ii)\}$.
\end{theorem}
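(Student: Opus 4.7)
The first claim, translation invariance of every state in $\cR_{s,\beta,\mu}$ when $d\in\{1,2\}$ and $s>d$, is an instance of Mermin-Wagner for continuous point processes. I would adapt the ``wavy displacement'' method (Fr\"ohlich-Pfister-Simon for spins, Richthammer for hard spheres). The plan is to show that for every $a\in\R^d$ and every bounded local observable $F$ supported on configurations in a ball $B_R$, one has $\bE_\mathscr{P}[F]=\bE_{\tau_a\mathscr{P}}[F]$. To this end I build a smooth radial displacement $\phi=\phi_{R,\ell}:\R^d\to\R^d$ equal to $a$ on $B_R$ and to $0$ outside $B_{R+\ell}$, optimizing the profile so that
\begin{equation*}
\int_{\R^d}|\nabla\phi|^{2}\;\lesssim\;\begin{cases}|a|^{2}/\ell & d=1,\\ |a|^{2}/\log(\ell/R) & d=2,\end{cases}
\end{equation*}
the $d=2$ estimate coming from the logarithmic profile $\phi(r)=a\log(\ell/r)/\log(\ell/R)$ on the annulus. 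For $d\geq3$ the Dirichlet integral of any such family is bounded below by $(d-2)\omega_{d-1}|a|^{2}R^{d-2}\to\infty$, which is precisely why the argument is intrinsically low-dimensional.

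The comparison of $\mathscr{P}|_{B_R}$ with $\tau_a\mathscr{P}|_{B_R}$ would proceed by writing the DLR equation~\eqref{eq:specification} inside a large box $\Lambda\supset B_{R+\ell}$ and performing the change of variable $x_j\mapsto x_j+\phi(x_j)$, an orientation-preserving diffeomorphism. The Radon-Nikodym density between the two conditional laws is $\det(I+D\phi)\,\exp(-\beta\,\Delta_\phi\cE_s)$ with
\begin{equation*}
\Delta_\phi\cE_s=\sum_{j<k}\Big[V_s\big(x_j-x_k+\phi(x_j)-\phi(x_k)\big)-V_s(x_j-x_k)\Big].
\end{equation*}
Only pairs with at least one endpoint in the annulus $A=B_{R+\ell}\setminus B_R$ contribute. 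A second-order Taylor expansion, the integrability of $|\nabla V_s|$ and $|\nabla^2 V_s|$ away from the origin (both guaranteed by $s>d$), and the uniform correlation bounds of Lemma~\ref{lem:local_bound_GC} (which handle the short-distance singularity via exponential repulsion) together give $\bE_\mathscr{P}|\Delta_\phi\cE_s|+\bE_\mathscr{P}|\log\det(I+D\phi)|\leq C(s,\beta,\mu)\int_A|\nabla\phi|^{2}$. The relative entropy $H(\mathscr{P}|_{B_R}\,\|\,\tau_a\mathscr{P}|_{B_R})$ is thus bounded by the same quantity and tends to zero as $\ell/R\to\infty$; hence the two marginals coincide and, since $R$ is arbitrary, $\mathscr{P}$ is translation invariant.

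For the second claim (uniqueness and analyticity at $d=1$, $s>2$) I would use the one-dimensional ordering $\cdots<x_0<x_1<\cdots$ to recode the Gibbs measure in the gap variables $r_i=x_{i+1}-x_i\in(0,\infty)$. The Hamiltonian becomes a two-sided chain with pair interaction of order $(r_i+\cdots+r_{j-1})^{-s}$ between gaps indexed by $i<j$. The Dobrushin-Ruelle summability criterion $\sum_{k\geq1}k\,\sup_{|x|=k}|V_s(x)|=\sum_{k\geq1}k^{1-s}<\infty$ holds precisely when $s>2$. I would then truncate $V_s$ at range $M$, apply the classical Ruelle transfer-operator theorem on a suitable weighted sup-norm space of functions of a gap, obtain a simple leading eigenvalue separated from the rest of the spectrum, and pass to the limit $M\to\infty$ using the summability to preserve the spectral gap. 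This yields uniqueness of $\mathscr{P}\in\cR_{s,\beta,\mu}$. Real-analyticity of $\rho(\beta,\mu)$ and $g(s,\beta,\mu)$ on the whole of $\R\times(0,\ii)$ then follows from the analytic dependence of the transfer operator on $(\beta,\mu)$ and Kato's perturbation theory applied to the isolated leading eigenvalue.

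The main obstacle is the first part. The wavy-displacement scheme is conceptually clean but the bookkeeping is delicate: the estimate on $\bE_\mathscr{P}|\Delta_\phi\cE_s|$ is not a pointwise bound but an expectation that must be controlled by balancing the short-distance repulsion against the algebraic decay of $|\nabla V_s|$, and one must verify that the boundary conditioning at $\partial\Lambda$ does not reintroduce an uncontrolled term as $\ell\to\infty$. The summability of $|\nabla V_s|$ at infinity (automatic for $s>d$) and the low-dimensional decay of $\int|\nabla\phi|^{2}$ are the two ingredients that interact, and it is their simultaneous quantitative control that determines the range of dimensions for which the theorem holds.
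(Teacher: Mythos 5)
The paper does not actually prove this theorem: it attributes the first statement to Fr\"ohlich--Pfister (with Georgii's later exposition) and the second to Campanino--Capocaccia--Olivieri (analyticity) and Papangelou (uniqueness). Your sketch of the first part is essentially the Fr\"ohlich--Pfister relative-entropy argument with a smooth displacement, i.e.\ the same route the paper points to, and the dimension-dependent Dirichlet estimates you wrote down are the correct ones. Two technical points you glossed over: after the change of variable, pairs with one endpoint in $B_R$ and the other outside $B_{R+\ell}$ \emph{do} contribute to $\Delta_\phi\cE_s$ (since $\phi(x_j)-\phi(x_k)=a$ for such a pair); this is controlled by the decay of $\nabla V_s$ because those pairs are at distance $\geq\ell$, but your claim that only annulus pairs contribute is not literally true. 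And the first-order term of the Taylor expansion must be killed by symmetrizing over $\pm\phi$ before the quadratic Dirichlet bound applies; without this averaging the relative-entropy estimate as you stated it does not close. For the second part you propose a transfer operator in gap variables with a cutoff $M\to\infty$; this is a genuinely different strategy from the cited references (cluster expansion for analyticity, direct one-dimensional conditioning/monotonicity for uniqueness), and it contains a real gap: with only a polynomial tail $k^{-s}$, the truncated operators $\cL_M$ do not converge in operator norm on any of the standard Banach spaces where the Ruelle--Perron--Frobenius theorem yields a spectral gap, so the gap does not persist automatically as $M\to\infty$ merely because $\sum_k k^{1-s}<\infty$. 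One would have to build the transfer operator for the full infinite-range interaction in one step, with a weighted norm adapted to the slow decay, or else fall back on the cluster-expansion route of the reference the paper cites.
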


The first part of the theorem (absence of breaking of translations) is due to Fr\"ohlich and Pfister in Ref.~\onlinecite{FroPfi-81,FroPfi-86} (see also Ref.~\onlinecite{Georgii-99}). The second part of the theorem can be found in Ref.~\onlinecite{CamCapOl-83} (analyticity) and Ref.~\onlinecite{Papangelou-87} (uniqueness). It is not clear if the constraint $s>2$ is optimal, since discrete systems do exhibit phase transitions for $1<s \leq2$~\cite{Dyson-69,FroSpe-82}. However, it is known that the free energy is real-analytic at $s=2$ in $d=1$ by Remark~\ref{rmk:CSM}, and uniqueness is certainly expected as well.

Theorem~\ref{thm:Mermin-Wagner} does not handle rotational symmetry in dimension $d=2$. It is argued in Ref.~\onlinecite{FroPfi-81} that the result is also valid for rotations, but under the additional assumption that the truncated correlation functions decay like $1/|x-y|^{2+\eps}$ for some $\eps>0$. This was generalized in Ref.~\onlinecite{GruMar-81,GruMarOgu-82} in all dimensions $d\geq1$, using the BBGKY hierarchy~\eqref{eq:BBGKY_short}, leading to the conclusion that when the truncated correlations all decay faster than $1/|x-y|^{d+\eps}$, the state has to be invariant under both rotations and translations. As a consequence, rotation-invariance can be broken in dimension $d=2$, but then correlations have to decay slowly. The corresponding state is often called a solid.

Although translation-invariance is not broken, Berezinski, Kosterlitz and Thouless~\cite{KosTho-72,KosTho-73} have shown that topological defects can lead to a kind of ``quasi-breaking'' of translational symmetry in $d=2$, called the \emph{BKT phase transition}. We will not explain the theory here and will only retain the consequence that the truncated correlation functions move from the expected universal decay $1/|x-y|^{s}$ in~\eqref{eq:decay_correlation_short} to a much slower $\beta$--dependent polynomial decay with periodic modulations. Similar effects can happen in dimension $d=1$\cite{Schulz-81}, but only for small enough $s$. A typical example in dimension $d=1$  would be
\begin{equation}
 \rho^{(2)}_T(x-y)\underset{|x-y|\to\ii}\sim C\frac{\cos\big(2\pi\rho (x-y)\big)}{|x-y|^{p(\beta)}}
 \label{eq:typical_BKT}
\end{equation}
for a function $p(\beta)\to0$ when $\beta\to\ii$. In this situation one speaks of ``quasi long-range order''.

\subsubsection{Long range case $s<d$}\label{sec:phase_diag_long}

We review here what has been rigorously shown in the long range case $s<d$.

\bigskip

\paragraph{3D Coulomb ($s=1$).}
The Coulomb case $s=1$ in dimension $d=3$ has been the object of many works.
Mapping the model to a quantum field theory using the Sine-Gordon transformation~\cite{EdwLen-62,AlbHog-75,Frohlich-76,Park-77,FroPar-78,GruLugMar-80,FonMar-84} and based on previous results of Brydges and Federbush~\cite{Brydges-78,BryFed-80}, Imbrie~\cite{Imbrie-82} was able to construct Jellium equilibrium states for $\Upsilon=\beta\rho^{\frac13}\ll1$. He also proved that this is a gas phase with exponential decay of correlations. More precisely, at large distances the interaction potential is replaced by a Yukawa interaction $e^{-|x|/\ell_{\rm D}}/|x|$ where $\ell_{\rm D}$ is called the \emph{Debye length}. This is called \emph{Debye-Hückel screening} and is very specific to the Coulomb case. This is very different from what is happening in the short range case (Theorem~\ref{thm:unique}), where the decay is never better than that of the potential.

Following earlier works\cite{Brydges-78,BryFed-80} Imbrie had to perform the thermodynamic limit in a special way in order to better control the convergence of the cluster expansion. More precisely, his proof is in the grand-canonical case, adjusting $\mu$ to ensure exact neutrality. He works with two domains
 $\Omega\subset\Omega'$. On the bigger set $\Omega'$ the points interact with a well-adjusted short range potential.
The Coulomb interaction is turned on only in $\Omega$, with the Dirichlet boundary conditions mentioned in Section~\ref{sec:periodic_BC}. The limit $\Omega'\nearrow\R^3$ is taken first, which thus immerses the finite Coulomb system in an infinite fluid. Although the results of Imbrie in Ref.~\onlinecite{Imbrie-82} do not seem to directly apply to the Jellium problem in a container without the surrounding fluid, the independence with respect to boundary conditions proved in Ref.~\onlinecite{LebSer-17,ArmSer-21} suggests that the free energy should at least be the same. Imbrie's theorem must thus imply that $f(1,\Upsilon)$ is real analytic for $\Upsilon\ll1$ in $d=3$.

Many works have been devoted to Debye screening in the 80s, among which we cite Ref.~\onlinecite{Brydges-78,FroSpe-81b,FroSpe-81c,FedKen-85}. We refer to Ref.~\onlinecite{Rebenko-88,BryMar-99} for reviews.

\bigskip

\paragraph{The 2D Coulomb/Log-gas ($s=0$).}
Imbrie's work is stated in dimension $d=3$ for obvious physical reasons, but the proof probably works the exact same in all dimensions $d\geq3$ at $s=d-2$. The 2D case is always a bit different due to the divergence of the logarithm at infinity but it is nevertheless expected that the result should again be the same. Several works have dealt with 2D Coulomb systems, in particular the two-component plasma~\cite{FroPar-78,Yang-87,Falco-12} which has two kinds of particles with opposite charges and no background. The latter has more symmetries, which makes the proof easier from the point of view of quantum field theory.

For the 2D Coulomb gas, the only explicitly solvable point seems to be $\beta=2$ (Ginibre ensemble) at which one obtains for $\rho_b=1$\cite{Ginibre-65,Jancovici-81}
$$\rho^{(2)}_{T}(x-y)=-e^{-|x-y|^2}.$$
We find here the expected exponential decay for the Coulomb potential, due to Debye screening. Numerical simulations suggest that $\rho^{(2)}_{T}$ becomes non-monotonous for $\beta>2$\cite{Jancovici-81}, but the transition to a quasi-solid happens at a much bigger value of $\beta$, as we will describe in the next section. As already mentioned in~\eqref{eq:formula_f_2D_beta2}, the free energy $f(0,2,\rho)$ at $\beta=2$ was computed in Ref.~\onlinecite{DeuDewFur-79,AlaJan-81,Forrester-98}. The possibiblity of having an expansion in powers of $(\beta-2)$ near $\beta=2$ is discussed in Ref.~\onlinecite{Jancovici-81}. Some exact properties are discussed in Ref.~\onlinecite{TelFor-99,Samaj-04,TelFor-12,SalTel-16} when $\beta$ an even integer.

The occurrence of Berezinski-Kosterlitz-Thouless (BKT) phase transitions in 2D has been proved by Fr\"ohlich and Spencer in Ref.~\onlinecite{FroSpe-81,FroSpe-81b,FroSpe-81c}, but it is our understanding that the argument only applies to Coulomb lattice systems.

\bigskip

\paragraph{Exact properties for other values of $s$.}
For other values of $s$, the properties of long range Riesz gases were studied \emph{assuming} they exist and that their correlation functions solve a properly renormalized BBGKY equation. This program started with the important work of Gruber, Lugrin and Martin in Ref.~\onlinecite{GruLugMar-78,GruLugMar-80} and was then pursued in many subsequent articles\cite{MarYal-80,ChoFavGru-80,GruLebMar-81,GruMarOgu-82,BluGruLebMar-82,FonMar-84,AlaJan-84,LebMar-84,AlaMar-85,Martin-88,JanLebMag-93}. In the long range case, it is convenient to rewrite the BBGKY equation~\eqref{eq:BBGKY_short} in the form\cite{GruLugMar-78,GruLugMar-80}
\begin{align}
\nabla_{x_1}\rho^{(n)}(x_1,...,x_n)=&-\beta\bigg\{\sum_{j=2}^n\nabla V_s(x_1-x_j)\nn\\
&\qquad \qquad +\int_{\R^d}\nabla V_s(x_1-y)\big(\rho^{(1)}(y)-\rho_b\big)\,\dy\bigg\}\rho^{(n)}(x_1,...,x_n)\nn\\
& -\beta\int_{\R^d}\nabla V_s(x_1-y)\Big(\rho^{(n+1)}(x_1,...,x_n,y)-\rho^{(n)}(x_1,...,x_n)\rho^{(1)}(y)\Big)\dy
\label{eq:BBGKY}
\end{align}
for all $n\geq1$. Note that the term on the second line of~\eqref{eq:BBGKY}
$$\int_{\R^d}\nabla V_s(x-y)\big(\rho^{(1)}(y)-\rho_b\big)$$
is the average gradient of the potential $\Phi$ generated by the infinite system. It vanishes for a stationary point process. The last integral in~\eqref{eq:BBGKY} involves a kind of truncated function which we can hope to be integrable if correlations decay sufficiently fast.

Gruber, Lugrin and Martin~\cite{GruLugMar-78} proved that there are solutions to the BBGKY equations~\eqref{eq:BBGKY} for the 1D Coulomb gas $s=-1$ (discussed in detail in Section~\ref{sec:Kunz} below). Fontaine and Martin~\cite{FonMar-84} proved that the correlation functions constructed by Imbrie~\cite{Imbrie-82} for $s=1$ in $d=3$ also solve~\eqref{eq:BBGKY}. To our knowledge, it was not rigorously proved that the 1D log-gas (sine-$\beta$) is also a solution of the BBGKY hierarchy. For the two determinantal processes at $\beta=2$ in $d\in\{1,2\}$ (GUE and Ginibre) this must follow from the arguments in Ref.~\onlinecite{FonMar-84} or a calculation in the same spirit as in Ref.~\onlinecite{Tao-12}.

It is proved in Ref.~\onlinecite{GruMarOgu-82,AlaMar-85,Martin-88} that (whenever they exist) solutions
to~\eqref{eq:BBGKY} have truncated correlation functions $\rho^{(n)}_T$ which can essentially never decay faster than $|x-y|^{-(2d-s)}$ for $s\leq d-1$ and $|x-y|^{-(s+2)}$ for $d-1<s<d$. More precisely, for a fluid the Fourier transform of $\rho^{(2)}_T$ is expected to behaves like $|k|^{d-s}$ at the origin. This can be analytic only when $s\in d-2\N$ (like for Coulomb), in which case one expects Debye screening to happen and thus exponential decay of correlations. For $s\notin d-2\N$ the decay $|x-y|^{-(2d-s)}$ is conjectured to be optimal. This should be compared with the decay $|x-y|^{-s}$ found in the short range case~\eqref{eq:decay_correlation_short}.

On the other hand, at the origin $\rho^{(2)}(|x-y|)$ is believed to behave like $e^{-\beta/|x-y|^s}$ for $s>0$, and $|x-y|^\beta$ for $s=0$. The repulsion stays rather strong for $s>0$ but it is much weaker for $s=0$.

The question of whether the Mermin-Wagner argument applies in 1D and 2D for Jellium systems was studied in Ref.~\onlinecite{Baus-80,ChaDas-80,AlaJan-81b,MarMer-84,ReqWag-90}. The consensus seems to be that translation invariance can never be broken in 1D for all $0\leq s<1$, and in 2D for $1\leq s<2$. However, a rigorous proof is still lacking.

An important question is to determine the amount of screening in the system, due to the long range of the interaction. This can be measured in many different ways and we defer the discussion to Section~\ref{sec:prop} below.

\bigskip

\paragraph{The 1D log-gas / sine-$\beta$ ($s=0$).}\label{sec:sine-beta}
As usual, more is known about the log gas $s=0$ in $d=1$. The formula of the free energy $f(0,\beta,\rho)$ was given in~\eqref{eq:f_log_gas} in Remark~\ref{rmk:log_gas} and it is an analytic function of $\beta$ and $\rho$. It is proved in Ref.~\onlinecite{ErbHueLeb-21} that sine-$\beta$ is the unique minimizer (among stationary point processes) of the free energy per unit volume. Unfortunately, this does not show uniqueness for the DLR equation obtained in Ref.~\onlinecite{DerHarLebMai-21}. According to Ref.~\onlinecite{DerHarLebMai-21}, it however follows from Ref.~\onlinecite{KuiMin-19} that sine-2 is the unique solution to DLR.

It is instructive to take a look at the truncated two-point correlation function of the log gas at infinity, at the special values $\beta\in\{1,2,4\}$:
\begin{equation}
\rho^{(2)}_T(x-y)=
\begin{cases}
\dps -\frac{1}{\pi^2|x-y|^2}+\frac{3+\cos(2\pi|x-y|)}{2\pi^4|x-y|^4}+O\big(|x-y|^{-6}\big)&\text{for $\beta=1$,}\\[0.4cm]
\dps -\frac{\sin^2\pi|x-y|}{\pi^2|x-y|^2}&\text{for $\beta=2$,} \\[0.4cm]
\dps \frac{\cos(2\pi|x-y|)}{4|x-y|}-\frac{1+\frac\pi2\sin(2\pi|x-y|)}{4\pi^2|x-y|^2}+O\big(|x-y|^{-4}\big)&\text{for $\beta=4$.}
\end{cases}
\label{eq:correlation_RMT}
\end{equation}
We work here at $\rho_b=1$ for simplicity and refer to Ref.~\onlinecite[Chap.~7]{Forrester-10} for explicit expressions at $\beta\in\{1,4\}$. These formulas confirm that the decay is at most $1/|x-y|^2$, as discussed in the previous subsection. However, there are oscillatory terms and for $\beta=4$ oscillations occur to leading order.

Using the Calogero-Sutherland Hamiltonian~\eqref{eq:Calogero-Sutherland} and a theory of Haldane~\cite{Haldane-81} concerning the universal hydrodynamic theory for compressible quantum fluids, Forrester predicted in Ref.~\onlinecite{Forrester-84} (see also Ref.~\onlinecite{GanKam-01,AstGanLozSor-06}) that the two-point correlation function of the (unique) equilibrium state of the $\beta$--log gas should behave like
\begin{multline}
\rho^{(2)}_{T}(|x-y|)=-\frac{1}{\pi^2\beta|x-y|^2}+\sum_{m\geq1}\bigg(a_m\frac{\cos(2\pi m|x-y|)}{|x-y|^{4\beta^{-1}m^2}}+b_m\frac{\sin(2\pi m|x-y|)}{|x-y|^{4\beta^{-1}m^2+1}}\bigg)\\+O\left(\frac1{|x-y|^{4}}\right)
\label{eq:long-range-1D}
\end{multline}
for some coefficients $a_m,b_m$. In other words, there should always be oscillatory terms with a polynomial decay depending on $\beta$. The leading term in the expansion would thus be
\begin{equation}
\rho^{(2)}_{T}(|x-y|)\underset{|x-y|\to\ii}\sim
\begin{cases}
\dps-\frac{1}{\pi^2\beta|x-y|^2}&\text{for $\beta<2$,}\\[0.3cm]
\dps-\frac{1}{2\pi^2|x-y|^2}+\frac{\cos2\pi|x-y|}{2\pi^2|x-y|^2}&\text{for $\beta=2$,}\\[0.3cm]
\dps a_1\frac{\cos(2\pi |x-y|)}{|x-y|^{4\beta^{-1}}}&\text{for $\beta>2$.}
\end{cases}
\label{eq:decay_rho_2_conjectured}
\end{equation}
This is definitely true for $\beta\in\{1,2,4\}$ in~\eqref{eq:correlation_RMT}. We see that the decay at infinity displays a transition at $\beta=2$ from the universal monotonous decay $|x-y|^{-2}$ to an oscillating behavior with a non-universal decay depending of the temperature. This is the BKT transition that we have mentioned before. When $\beta\to\ii$, more and more oscillatory terms appear and  $\rho^{(2)}_{T}(|x-y|)$ converges to a periodic function. The corresponding point process is nothing but the uniform average of the crystal $\Z$, called in physics the \emph{floating crystal} and in random matrix theory the \emph{picked fence}. It is of course not clear when we can start calling the 1D log gas a ``solid'' or a ``quasi-solid''. Choosing the BKT transition $\beta=2$ seems the easiest choice.

Using the eigenfunctions and eigenvalues of the (integrable) Calogero-Sutherland Hamiltonian, Forrester could provide exact expressions for the correlation functions in terms of hypergeometric functions\cite{Forrester-10}. This enabled him to show the validity of~\eqref{eq:long-range-1D} for several possible values of $\beta$.
Even $\beta$'s are handled in Ref.~\onlinecite{Forrester-93} and rational $\beta$'s in Ref.~\onlinecite[Chap.~13]{Forrester-10}. It does not seem that~\eqref{eq:long-range-1D} has been shown in all cases. In Ref.~\onlinecite{GanKam-01,AstGanLozSor-06}, an approximate analysis based on the method of replicas provided the coefficients of the leading cosine terms.

\bigskip

\paragraph{The 1D Coulomb gas ($s=-1$).}\label{sec:Kunz}
The 1D Coulomb gas has been the object of many works~\cite{Prager-62,Baxter-63,Choquard-75,GruLugMar-78,AizMar-80,ChoKunMarNav-81,LugMar-82,JanLieSei-09,AizJanJun-10}. This is a beautiful model in statistical mechanics, where most of the interesting properties can be rigorously proved.

We have already mentioned the very simple formula~\eqref{eq:1D_Jellium_energy} for the energy, which immediately provides crystallization for every finite $N$. At positive temperature, the $N$-particle probability density in the interval $I_N=[-N/2,N/2]$ at unit density is
$$\bP_{N}(x_1,...,x_N)=Z_N^{-1}e^{-\beta\frac{N}{12}}\prod_{j=1}^Ne^{-\beta(x_j-j+1/2+N/2)^2}\1(x_j- x_{j-1}\geq0).$$
In his famous work Ref.~\onlinecite{Kunz-74}, Kunz rewrote this in terms of the fluctuations around the zero-temperature crystal. He introduced $y_j=x_j-j+1/2+N/2$ and obtained
$$\bP_{N}(x_1,...,x_N)=Z_N^{-1}e^{-\beta\frac{N}{12}}\prod_{j=1}^Ne^{-\beta y_j^2}\1(y_j- y_{j-1}\geq-1),$$
with the convention $y_0=0$.
This naturally brings the linear operator $K$ acting on $L^2(\R_+)$, with integral kernel $K(y,z)=e^{-\beta (y^2+z^2)/2}\1(y- z\geq-1)$. For instance we can express the partition function as
$$Z_N=e^{-\beta\frac{N}{12}}\pscal{e^{-\beta \frac{x^2}2}\1(x\leq N),K^Ne^{-\beta \frac{x^2}2}}_{L^2(\R_+)}.$$
Since $K$ has a positive kernel and is compact, the Perron-Frobenius theorem implies that it has a simple largest eigenvalue $\lambda_1(\beta)$, with a unique positive normalized eigenfunction $\psi_\beta\in L^2(\R_+)$. One concludes that the canonical free energy satisfies
$$F_{-1}(\beta,I_N)=-\frac{\log Z_N}{\beta}=N\left(\frac{1}{12}-\frac{\log\lambda_1(\beta)}{\beta}\right)-\frac2\beta \log\left(\int_0^\ii e^{-\beta \frac{y^2}{2}}\psi_\beta(y)\,\dy\right)+o(1).$$
In particular, $f(-1,\beta)=1/12-\beta^{-1}\log\lambda_1(\beta)$ is real-analytic, by analytic perturbation theory for the largest eigenvalue~\cite{Kato}. The correlation functions can also all be expressed in terms of $\lambda_1(\beta)$ and $\psi_\beta$ in a similar manner. Kunz finds for instance that in the thermodynamic limit the one-point correlation function converges to the $\Z$--periodic function
\begin{equation}
\rho^{(1)}(x)=\sum_{k\in\Z}\psi_\beta(-x-k-\tau)\psi_\beta(x+k+\tau),
 \label{eq:rho_1_Jellium_1D}
\end{equation}
for some $\tau\in[0,1)$. Since $\psi_\beta$ is exponentially decreasing, one can show that the truncated two-point correlation $\rho^{(2)}_T$ decreases exponentially, which is Debye-Hückel screening in 1D. In order to make sure that crystallization really happens, it is still needed to show that the density in~\eqref{eq:rho_1_Jellium_1D} is a genuine (non-constant) $\Z$--periodic function. Using analyticity, Kunz managed to prove this only for $\beta$ outside of a bounded countable set. A different argument was later given by Brascamp and Lieb\cite{BraLie-75} using inequalities for Gaussians. The non-trivial periodicity was finally shown by Aizenman and Martin\cite{AizMar-80} for all $\beta>0$. This was later generalized to the 1D quantum Jellium~\cite{BraLie-75,JanJun-14} and quasi-one dimensional systems~\cite{AizJanJun-10}.

Kunz also considered the periodic Jellium problem described in Section~\ref{sec:periodic_BC} and proved that the limit is the uniform average over $\tau$ of the previous periodic state. The breaking of symmetry can then not be detected in the density anymore and one should look into the truncated two-point correlation function, which does not decay at infinity (long range order). Finally, Kunz also considered an excess finite charge and proved that, although it affects the free energy to leading order in dimension $d=1$, it escapes to the boundary without affecting the limit of the correlation functions in the bulk.

A few years later, Aizenman and Martin\cite{AizMar-80} reformulated the problem in terms of the electric field, which is piecewise linear of slope 1 with a jump $-1$ at each $x_j$. Using the ergodic theorem they proved periodicity for all $\beta$ and showed that the associated point process solves a properly renormalized version of the canonical DLR, BBGKY and KMS equations. In order to make sense of the potential $\Phi(x)$ which we have largely studied in Section~\ref{sec:periodic}, they first proved that the point process is \emph{number rigid} (see Section~\ref{sec:prop}), that is, the number of points in an interval is deterministic once the outside configurations are fixed.

The conclusion is that for the 1D Coulomb case the set $\cR_{-1,\beta,\rho_b}$ is a non-trivial convex set of the form $\{\rho_b^{-1}(\tau+\mathscr{P}_\beta),\tau\in[0,1)\}$ where $\mathscr{P}_\beta$ is a $\Z$--periodic point process depending analytically of $\beta$, and $\rho_b^{-1}(\tau+\mathscr{P}_\beta)$ denotes the translation by $\tau$ and dilation by $\rho_b^{-1}$.

Note that the Perron-Frobenius (or Krein-Rutman) theorem is very often used for one-dimensional systems. This goes back at least to Van Hove~\cite{vanHove-49,vanHove-50}. Recently, Ducatez\cite{Ducatez-18_ppt} has suggested to use instead the Birkhoff-Hopf theorem and managed to construct the infinite Jellium Gibbs measure for any kind of background density (not necessarily a constant), bounded and strictly positive everywhere.

\subsection{Phase diagram: numerics}

Existing results on the shape of the phase diagram are essentially numerical. There is a huge number of works in the physical literature. In the long range case, most authors concentrate on $s=1$ in dimensions $d\in\{2,3\}$ (3D Coulomb) or $s=0$ in dimensions $d\in\{1,2\}$ (2D Coulomb). The case $s=3$ is also interesting when considering dipoles, but many more values of $s$ are typically considered in the short range case. Several works naturally focus on the quantum gas which we have not described in this article. The classical gas then occurs at low density for $s<2$ and high density for $s>2$ (depending on when the interaction $\sim\rho^{1+s/d}$ overcome the kinetic energy $\sim\rho^{1+2/d}$). Recall that $s=2$ is the Calogero-Sutherland model in $d=1$. 

For reviews about the physical properties and numerical simulations on Jellium we mention Ref.~\onlinecite{BauHan-80,Choquard-78,Mazars-11,LooGil-16}. The French-speaking reader can also read the excellent Ref.~\onlinecite{Alastuey-86}.

\begin{figure}[t]
\includegraphics[width=12.5cm]{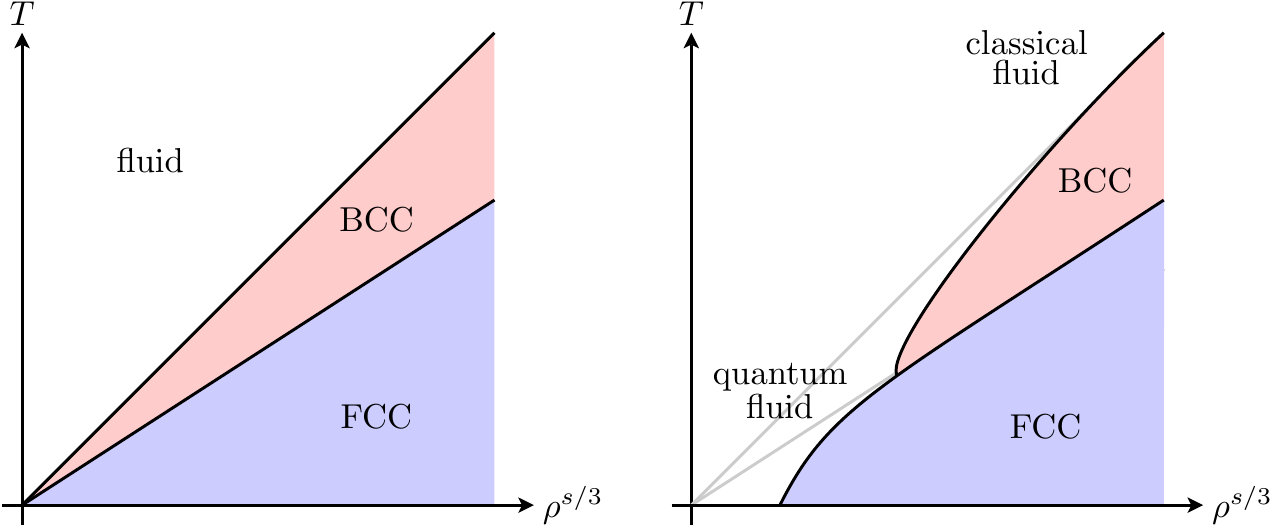}
\caption{Typical phase diagram in dimension $d=3$ in the short range case $s>3$, with $T=1/\beta$. The BCC phase has only been observed for $3<s\lesssim 7$. In the classical case (left) the system is parametrized by $\Upsilon =\rho^{s/3}/T$, whereas in the quantum case (right) the two variables are necessary. The classical case is recovered at high density where $\rho^{1+s/3}$ overcomes the kinetic energy which is of order $\rho^{5/3}$.\label{fig:phase_diagram_short}}

\bigskip

\includegraphics[width=12.5cm]{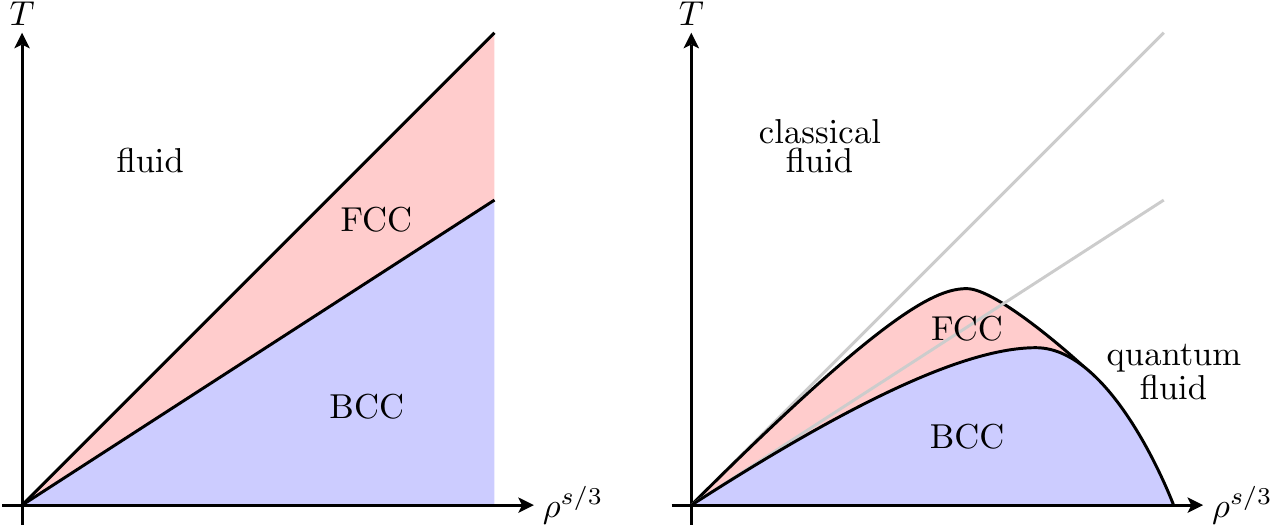}
\caption{Typical phase diagram in dimension $d=3$ in the long range case, here for $s<3/2$. The FCC phase is not present anymore in the Coulomb case $s=1$. The classical limit is now at low density where $\rho^{1+s/3}$ overcomes the kinetic energy.\label{fig:phase_diagram_long}}
\end{figure}

In dimension $d=3$, the phase diagram was found to have the form outlined in Figure~\ref{fig:phase_diagram_short}, for $s>3$. At small values of $\Upsilon=\beta\rho^{\frac{s}d}$ or $\Upsilon'=\beta e^{\beta\mu\frac{s}d}$, the system is a fluid by Theorem~\ref{thm:unique}. This is expected to happen until a critical value of $\Upsilon$. For larger values of $\Upsilon$, it is believed to form a solid. Numerical simulations~\cite{HooRosJohHenBarBro-70,HooGraJoh-71,LaiHay-92,AgrKof-95,AgrKof-95b,Traversset-14,PreSaiGia-05} suggest that there are two critical values $\Upsilon_{\rm FCC}(s)$ and $\Upsilon_{\rm BCC}(s)$ such that the system is a
$$\begin{cases}
\text{fluid} & \text{for $0< \Upsilon<\Upsilon_{\rm BCC}(s)$,}\\
\text{Body-Centered Cubic (BCC) crystal} & \text{for $\Upsilon_{\rm BCC}(s)<\Upsilon<\Upsilon_{\rm FCC}(s)$,}\\
\text{Face-Centered Cubic (FCC) crystal} & \text{for $\Upsilon>\Upsilon_{\rm FCC}(s)$,}
\end{cases}$$
where $\Upsilon_{\rm BCC}(s)<\Upsilon_{\rm FCC}(s)$ only for about $s\lesssim7$. Only the FCC crystal is present for $s\gtrsim 7$. A numerical approximation of the two functions $\Upsilon_{\rm FCC}(s)$ and $\Upsilon_{\rm BCC}(s)$ in terms of the parameter $s$ is computed in Ref.~\onlinecite[Fig.~8]{PreSaiGia-05}. The phase diagram in the quantum case resembles that of the classical case at large density, where the kinetic energy is negligible. But at low densities the kinetic energy becomes dominant, and the system is a quantum fluid (see Figure~\ref{fig:phase_diagram_short}).

In the long range case $s<d=3$, the phase diagram is believed to take a very similar form, except that the two lattices FCC and BCC must be interchanged for $s<3/2$\cite{Nijboer-75,BlaLew-15}, see Figure~\ref{fig:phase_diagram_long}. Most computations focus on the Coulomb case $s=1$, where only the BCC phase is expected to be present at all temperatures~\cite{BruSahTel-66,PolHan-73,SarMerCal-76,JonCep-96}. The phase transition happens at about $\Upsilon\simeq 175$. Note that when spin is taken into account, the system could undergo more phase transitions for the spin variable~\cite{AldCep-80,ZonLinCep-02,DruRadTraTowNee-04,CanBerCep-04,HolMor-20,AzaDru-22_ppt}.

In dimension $2$, translational symmetry will not be broken, by Theorem~\ref{thm:Mermin-Wagner}.
Kosterlitz, Thouless, Halperin, Nelson and Young~\cite{KosTho-72,KosTho-73,HalNel-78,NelHal-79,Young-79} have predicted that the melting of 2D crystals happens in a two-step procedure. Namely, the phase diagram resembles that of the left of Figure~\ref{fig:phase_diagram_short}, where `BCC' is replaced by an intermediate \emph{``hexatic'' phase} with fast decay of translational correlations and slow decay of (six-fold) orientational correlations, whereas `FCC' is replaced by a \emph{``solid'' or ``quasi-solid'' phase} which has slow decay of translational correlations and long range rotational order. The KTHNY scenario has been confirmed by many simulations and even experiments~\cite{MurWin-87,ZahLenMar-99,GasEisMarKei-10,KniWuHuaSerXiaPfeWes-18}. It is very hard to catch numerically. Numerical simulations and discussions may be found for several values of $s>d=2$ in Ref.~\onlinecite{BagAndSwo-96,LinZheTri-06,GasEisMarKei-10,KapKra-15,MazSal-19}

For many years, numerical simulations\cite{GanChaChe-79,GriAda-79,LeePerrSmi-82,CaiLevWeiHan-82} and even experiments~\cite{GriAda-79} indicated a first order phase transition in 2D Jellium systems, both for $s=0$ and $s=1$. See Ref.~\onlinecite{Strandburg-88} for a review. Recent numerical calculations however seem to confirm for both $s=0$ and $s=1$ the presence of the same two-step transition as in the short range case, with an intermediate hexatic phase\cite{CaiLev-86,Strandburg-88,MooPer-99,MutAok-99,HeCuiMaLiuZou-03,ClaCasCep-09,Mazars-15,Salazar-PhD,MazSal-19}. This happens around $\Upsilon\simeq 120$ and $140$ for $s=1$\cite{ClaCasCep-09} and  $\Upsilon\simeq134$ and $162$ for $s=0$~\cite{Stephenson-PhD}, which is way above the Ginibre point $\Upsilon=2$.

We end this section with comments on the one-dimensional Riesz gas, for which we found much less numerical results. Figure~\ref{fig:1D} summarizes the situation. The Riesz point process is periodic at $s=-1$ for all values of $\Upsilon$ and (known or believed to be) unique and translation-invariant for all $s\geq0$. At $s=2$, we recover the \emph{classical} Calogero-Sutherland-Moser model. At $s=0$ it is the sine-$\beta$ process of random matrices, which is equivalent to the \emph{quantum} Calogero-Sutherland model at zero temperature. There is a BKT transition at $\Upsilon=2$ (GUE). A natural question is to investigate the nature of the phase diagram in the plane $(\Upsilon,s)$. For instance, is the transition to a solid given by a smooth curve starting somewhere below $s=0$ and diverging with the temperature at $s=-1$? A similar question can be asked for the BKT transition. In a harmonic trap, one-dimensional Riesz gases were studied in Ref.~\onlinecite{AgaDhaKulKunMajMukSch-19,KetKulKunMajMukSch-21,SanKetAgaDhaKulKun-21,KetKulKunMajMukSch-22,FlaMajSch-22_ppt}.

\begin{figure}[t]
\includegraphics[width=9cm]{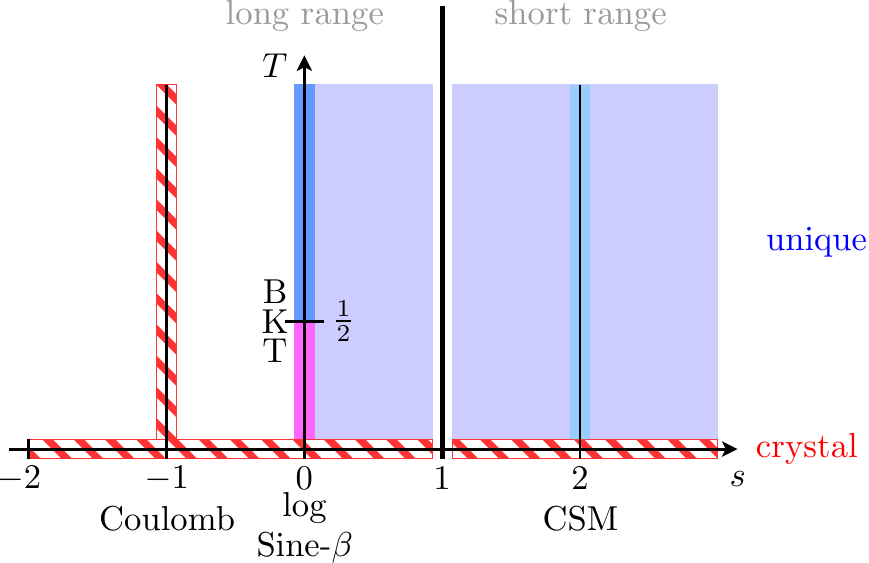}
\caption{Shape of the phase diagram of the 1D Riesz gas in the plane $(s,\Upsilon^{-1}= T)$ at unit density. The system is a solid at zero temperature for all $s$, which has only been proved so far for $s\geq0$ and $s=-1$ (Theorem~\ref{thm:crystallization}). The Coulomb gas ($s=-1$) is crystallized at all temperatures~\cite{Kunz-74}, whereas translation-invariance is known for $s>1$\cite{FroPfi-81} and expected for all $s\geq0$\cite{Baus-80,ReqWag-90}. Uniqueness is known for $s>2$\cite{Papangelou-87} and probably expected for all $s\geq0$. At $s=0$ we obtain the log-gas, also called sine-$\beta$ or Dyson gas, which appears in the theory of random matrices as well as many other situations. It is as well the zero-temperature quantum Calogero-Sutherland model describing particles interacting with a potential proportional to $1/(x_j-x_k)^2$. The point $T=1/2$ (GUE) is a Berezinski-Kosterlitz-Thouless transition where the leading term~\eqref{eq:decay_rho_2_conjectured} in the decay at infinity of the truncated two-point correlation function suddenly starts to oscillate. At $s=2$ we obtain the classical Calogero-Sutherland-Moser (CSM) model, which is completely integrable. An interesting question is to fill the gaps in the picture and understand, in particular, if there exists a smooth transition curve to a periodic crystal in the region $-1<s<0$. A similar question concerns the BKT transition.\label{fig:1D}}
\end{figure}

\subsection{Screening, hyperuniformity, rigidity}\label{sec:prop}
Many questions are open for Riesz gases. We focus in this section on some properties which are specific to long range systems and do not usually occur in the short range case, namely \emph{sum rules}, \emph{hyperuniformity} and \emph{rigidity}. These concepts have attracted a lot of attention and are all different (but related) ways of measuring the amount of \emph{screening} in the system. A recent summary of what is known in this direction can be read in Ref.~\onlinecite{GhoLeb-17}.

Usual point processes such as Poisson have a variance in the number of points in a domain which is extensive, that is, behaves like the volume of the considered set. In some cases, it can even grow faster. By~\eqref{eq:nb_correlation}, the variance in a domain $D$ is
$$\bE[n_D^2]-\bE[n_D]^2=\iint_{D\times D}\rho_T^{(2)}(x,y)\,\dx\,\dy+\int_ D\rho^{(1)}(x)\,\dx$$
with $\rho_T^{(2)}(x,y)=\rho^{(2)}(x,y)-\rho^{(1)}(x)\rho^{(1)}(y)$ the truncated two-point function. For a translation-invariant (stationary) point process with a sufficiently fast decay of correlation, this converges to
\begin{equation}
\lim_{D\nearrow\R^d}\frac{\bE[n_D^2]-\bE[n_D]^2}{|D|}=\int_{\R^d}\rho_T^{(2)}(z)\,\dz+\rho,
\label{eq:variance_per_unit_vol}
\end{equation}
where $\rho=\rho^{(1)}$ and, as usual, we identified $\rho^{(2)}_T(x,y)$ with $\rho^{(2)}_T(x-y)$. The right side of~\eqref{eq:variance_per_unit_vol} is called the \emph{bulk compressibility} and it appeared already in the decay of correlation~\eqref{eq:decay_correlation_short} in the short range case $s>d$ at low $\Upsilon$. The Ginibre inequality~\eqref{eq:variance_Ginibre} proves that it is always strictly positive for short range Riesz point processes. It could be infinite if the decay of correlation is not fast enough (integrable).

A point process is called \emph{hyperuniform} (or super homogeneous) when the variance is a $o(|D|)$~\cite{TorSti-03,Torquato-18}. Should the truncated two-point correlation be integrable, hyperuniformity is thus equivalent to the relation~\cite{GhoLeb-17}
\begin{equation}
\int_{\R^d}\rho_T^{(2)}(z)\,\dz+\rho=0.
\label{eq:sum_rule1}
\end{equation}
The short range Riesz point processes ($s>d$) are therefore never hyperuniform. In the long range case $s<d$ we have explained that, for $\Upsilon$ small enough, it is expected that there is a unique Gibbs point process, with correlations decaying like $|x-y|^{-(2d-s)}$ or even exponentially for $s\in d-2\N$. This corresponds to a Fourier transform behaving like $|k|^{d-s}$. Since $2d-s>d$, the integral in~\eqref{eq:sum_rule1} makes sense and they will thus be hyperuniform whenever~\eqref{eq:sum_rule1} holds.

The formula~\eqref{eq:sum_rule1} is called a \emph{canonical sum rule}. It is a non-trivial relation between the correlation functions, which is expected to be automatically satisfied due to the necessity of having a sufficiently strong screening in the system. Sum rules have been the object of many works in the 80s\cite{GruLugMar-80,MarYal-80,GruLebMar-81,BluGruLebMar-82,FonMar-84,LebMar-84,AlaJan-84,Martin-88} and their relation to hyperuniformity was also well known back then (see, e.g., Ref.~\onlinecite[Prop.~10]{GruLugMar-80}). One important result proved in Ref.~\onlinecite{GruLugMar-80,GruLebMar-81} is that the sum rule~\eqref{eq:sum_rule1} must always be satisfied for a solution of the BBGKY equations~\eqref{eq:BBGKY}, whenever correlations decay faster than $1/|x-y|^{d+\eps}$ for $s\leq d-1$ and faster than $1/|x-y|^{s+1+\eps}$ for $d-1<s<d$. From the expected decay mentioned earlier, we see that \emph{the Riesz point processes should be hyperuniform for all $s<d$ (at least when~$\Upsilon$ is small enough)}. Hyperuniformity is a typical feature of long range forces. It is easily verified to hold on all the exactly known solutions such as $\beta=2$ and $s=0$ in dimensions $d\in\{1,2\}$. This was in fact already mentioned by Dyson in his discussion of the 1D log gas in 1962\cite{Dyson-62c}.

For hyperuniform processes, the next natural question is to determine the precise behavior of the variance.
For Coulomb, the variance is expected to be exactly a surface term $|\partial D|=o(|D|)$. This is so far known for the 1D Coulomb gas, the 2D Coulomb gas at $\beta=2$ and the 3D Coulomb gas at small $\Upsilon$, see Ref.~\onlinecite{MarYal-80,Lebowitz-83,JanLebMag-93,JanLeb-01,AizGolLeb-01,GhoLeb-17}. The exact behavior of the variance for general point processes is studied in the recent work Ref.~\onlinecite{AdhGhoLeb-21}, in terms of the Fourier transform at the origin. For Riesz gases the conclusion is that the variance should go like $|D|^{s/d}$ for $d-1<s<d$, $|\partial D|\log|D|$ for $s=d-1$ and $|\partial D|$ for all $s<d-1$ (for small $\Upsilon$).

The sum rule~\eqref{eq:sum_rule1} might look mysterious at first sight. It is however nothing else but the usual relation between the one and two-point correlation functions in a \emph{canonical finite system}, for which
$$\iint_{\Omega\times\Omega}\rho^{(2)}=N(N-1)=\left(\int_\Omega\rho^{(1)}\right)^2-\int_\Omega\rho^{(1)}$$
by~\eqref{eq:nb_correlation}. In other words, the equation~\eqref{eq:sum_rule1} says that our infinite system is very close to being canonical locally. We have only discussed the first sum rule but there are similar relations between $\rho^{(n)}$ and $\rho^{(1)},...,\rho^{(n-1)}$ and they have been shown to all hold as well. When $s$ is further decreased, new sum rules appear, involving higher multipoles of $\rho^{(2)}_T$\cite{GruLebMar-81,BluGruLebMar-82,Martin-88}. This is a manifestation of the ability of the long range system to appropriately screen the background and thus effectively reduce the long range of the interaction.

Another important concept is that of \emph{number rigidity}\cite{GhoPer-17,GhoLeb-17}. This happens when the number of points in a domain $D$ is a deterministic function of the points outside of $D$. That is, the system is exactly canonical once the outside is fixed. This property was shown for the 1D Coulomb gas in Ref.~\onlinecite{AizMar-80} and crucially used there to define the potential in the DLR equation. For the 1D log gas, number rigidity was proved for all values of $\beta$ simultaneously in Ref.~\onlinecite{RedNaj-18,DerHarLebMai-21}. At $\beta=2$ this was shown before in Ref.~\onlinecite{Ghosh-15}. In Ref.~\onlinecite{DerVas-21_ppt} Dereudre and Vasseur show that the Riesz point process is \emph{not} number rigid for $d-1<s<d$ and conjecture it would start to be so at $s=d-1$. In Ref.~\onlinecite{GhoLeb-17b}, Ghosh and Lebowitz show that a hyperuniform point process for which
$$|\rho^{(2)}_T(x,y)|\leq \begin{cases}
\dps \frac{C}{1+|x-y|^2}&\text{for $d=1$,}\\
\dps \frac{C}{1+|x-y|^{4+\eps}}&\text{for $d=2$,}
\end{cases}
$$
is automatically number-rigid. In dimension $d=2$ where the maximal decay of correlation is $|x-y|^{-(2d-s)}=|x-y|^{-(4-s)}$, this suggests that number rigidity should rather start at $s=0$. In higher dimensions, things are less clear. Peres and Sly have found that even an exponential decay of correlations is not enough to guarantee that a hyperuniform point process is number rigid~\cite{PerSly-14}.

%%%%%%%%%%%%%%%%%%%%%%%%%%%%%%%%%%%%%%%%%%%%%%%%%%%%%%%%%%%%
%%%%%%%%%%%%%%%%%%%%%%%%%%%%%%%%%%%%%%%%%%%%%%%%%%%%%%%%%%%%
% \appendix
\section{Additional proofs}\label{sec:appendix}

\subsection{Equivalence of ensembles for $d-2<s<d$ and limit for net charges}\label{sec:net_charge}

In this section we study the Jellium energy for systems with a net charge and prove the limit~\eqref{eq:net_charge}. We will also explain how the existence of the thermodynamic limit in the canonical case (Theorem~\ref{thm:limit_C_Jellium}) follows from the grand-canonical limit in Theorem~\ref{thm:limit_GC_Jellium}, and why the two coincide: $f(s,\beta,\rho)=f^{\rm GC}(s,\beta,\rho)$.

To simplify our discussion, we assume everywhere that $\rho_b=1$ and will only state results for $F_s(\beta,N,\Omega):=F_s(\beta,N,\Omega,1)$, with the convention that this coincides with $E_s(N,\Omega,1)$ whenever $\beta=+\ii$.

\subsubsection{Proof of~\eqref{eq:net_charge} for non-neutral canonical systems}
First, we provide upper and lower bounds on the free energy for non-neutral systems, that is, $F_s(\beta,N\pm Q,N^{1/d}\omega)$ where $Q$ is called a `net charge'. This was already done in the Coulomb case in Ref.~\onlinecite{Kunz-74,LieLeb-72,LieNar-75,GraSch-95b}. We thus assume $s>\max(0,d-2)$ throughout. We work with $s>0$ to simplify the statement and since we are mainly interested in the link with the grand-canonical problem. The following arguments can be adapted to $s\in(-1,0]$ in dimension $d=1$. However, our proofs do not easily generalize to $s<d-2$ since we use some properties of the $s$-Riesz equilibrium measure of the domain $\omega$ which are only valid for $s\geq d-2$.

We recall that the $s$--Riesz capacity is
\begin{equation}
{\rm Cap}_s(\omega)^{-1}=\min_{\nu}\iint_{\overline\omega\times\overline\omega}\frac{\rd\nu(x)\,\rd\nu(y)}{|x-y|^s},
 \label{eq:capacity}
\end{equation}
as defined earlier in~\eqref{eq:mean-field}. For a bounded domain $\omega$ we have ${\rm Cap}_s(\omega)<\ii$ and there is a unique minimizer $\nu$. It concentrates on $\partial\omega$ for $s\leq d-2$ but lives over the whole of $\omega$ for $d-2<s<d$\cite{Landkof-72,GolNovRuf-15}. For $d-2\leq s<d$ the associated Riesz potential $\nu\ast|x|^{-s}$ is constant over $\omega$, equal to ${\rm Cap}_s(\omega)$. This plays a role in our proof.

The first lemma contains estimates that we can easily get after placing the excess or missing points in the equilibrium measure $\nu$.

\begin{lemma}[Net charge I]\label{lem:net_charge1}
Let $\max(0,d-2)<s<d$ and $\rho_b=1$. Let $\omega$ be a bounded open set with a $C^{1,\alpha}$ boundary, $\alpha>0$, and $|\omega|=1$. Define $\Omega_N:=N^{1/d}\omega$ with $N\in\N$. Then, for any $0\leq Q\leq N$ and $\beta\in(0,+\ii]$, we have for a constant $C$ depending only on $\omega$
\begin{align}
 F_s(\beta,N-Q,\Omega_N)&\geq F_s(\beta,N,\Omega_N)+\frac{Q^2}{2N^{\frac{s}d}{\rm Cap}_s(\omega)}-C Q\beta^{-1},\label{eq:negative_charge_lower}\\
  F_s(\beta,N+Q,\Omega_N)&\leq F_s(\beta,N,\Omega_N)+\frac{Q^2}{2N^{\frac{s}d}{\rm Cap}_s(\omega)}+C Q\beta^{-1}.
%  E_s(N-Q,\Omega,1)&\geq E_s(N,\Omega,1)+\frac{Q^2}{2N^{\frac{s}d}{\rm Cap}_s(\omega)},
 \label{eq:positive_charge_upper}
\end{align}
\end{lemma}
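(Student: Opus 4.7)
The plan is to use the $s$-Riesz equilibrium measure $\nu_\omega$ of $\overline{\omega}$, i.e.\ the unique minimizer in~\eqref{eq:capacity}. For $\max(0,d-2)<s<d$ and a $C^{1,\alpha}$ boundary, classical potential theory~\cite{Landkof-72} yields $\supp(\nu_\omega)=\overline{\omega}$, an integrable-log density for $\nu_\omega$, and $\nu_\omega\ast V_s\equiv \mathrm{Cap}_s(\omega)^{-1}$ on $\overline{\omega}$. The rescaled pushforward $\nu_N$ on $\overline{\Omega_N}$ is then a probability measure satisfying, by homogeneity of $V_s$,
\begin{equation*}
\nu_N\ast V_s(x)=\frac{N^{-s/d}}{\mathrm{Cap}_s(\omega)}\quad \forall x\in\overline{\Omega_N},\qquad \iint V_s(x-y)\,\rd\nu_N(x)\,\rd\nu_N(y)=\frac{N^{-s/d}}{\mathrm{Cap}_s(\omega)}.
\end{equation*}

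For~\eqref{eq:positive_charge_upper}, I would use the Gibbs measure $\bP_N$ achieving $F_s(\beta,N,\Omega_N)$ and build a trial state for the $(N+Q)$-point problem by symmetrizing $\bP_N\otimes \nu_N^{\otimes Q}$. Expanding $\cE_s(X\cup Y,\Omega_N,1)-\cE_s(X,\Omega_N,1)$ for $X\sim \bP_N$ and $Y\sim \nu_N^{\otimes Q}$ produces three new pieces: the self-interaction of $Y$, the cross-interaction $\sum_{i,j}V_s(x_i-y_j)$, and the subtracted background $-\sum_{j}\int_{\Omega_N}V_s(y_j-u)\,\du$. Each reduces, upon taking the $Y$-expectation, to a multiple of $N^{-s/d}/\mathrm{Cap}_s(\omega)$ thanks to the constant-potential property above; the cross and background contributions (both proportional to $NQ$) cancel exactly, leaving only the self-interaction piece $\binom{Q}{2} N^{-s/d}/\mathrm{Cap}_s(\omega)=Q(Q-1)/(2N^{s/d}\mathrm{Cap}_s(\omega))$.

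The dual construction handles~\eqref{eq:negative_charge_lower}: starting from $\bP_{N-Q}$ and adjoining $Q$ i.i.d.\ copies of $\nu_N$ produces a trial state for the $N$-point problem. The same computation, with $|X|=N-Q$ rather than $N$, breaks the cancellation between cross-interaction and background and produces instead an overall change of $-Q(Q+1)/(2N^{s/d}\mathrm{Cap}_s(\omega))$. Rearranging, and using $Q(Q+1)/2\geq Q^2/2$, delivers~\eqref{eq:negative_charge_lower} up to the entropy error.

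The main technical step is the entropy bookkeeping at $\beta<\infty$. Writing the trial state as $\tilde{\bP}=Q!\sum_{|A|=N}g_A$ with $g_A(z)=\bP_N(z_A)\nu_N^{\otimes Q}(z_{A^c})$, the convexity inequality $(\sum a_i)\log(\sum a_i)\leq \sum a_i\log a_i+(\log K)\sum a_i$ applied with $K=\binom{N+Q}{N}$ (derived from Jensen for $x\log x$) yields
\begin{equation*}
\tfrac{1}{(N+Q)!}\int \tilde{\bP}\log \tilde{\bP}\leq \log\tfrac{(N+Q)!}{N!}+Q\int \nu_N\log \nu_N+\tfrac{1}{N!}\int \bP_N\log \bP_N.
\end{equation*}
The scaling identity $\int \nu_N\log \nu_N=-\log N+\int \nu_\omega\log \nu_\omega$, combined with the Stirling estimate $\log((N+Q)!/N!)\leq Q\log N+Q\log 2$ valid for $Q\leq N$, forces the $Q\log N$ pieces to cancel and leaves a residual bounded by $CQ$, with $C$ depending only on $\omega$ through $\int \nu_\omega\log \nu_\omega$. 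This delivers the $\pm CQ\beta^{-1}$ error in the free energy; at $\beta=+\infty$ no entropy correction arises and both bounds follow directly from the expected-energy computation above. The hardest part of the argument is precisely this entropy control, since the mixture structure of $\tilde{\bP}$ prevents a naive additive entropy bound and forces the use of a convex-combination argument together with the exact logarithmic cancellation just described.
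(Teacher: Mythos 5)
Your proposal is correct and essentially identical to the paper's proof: both adjoin $Q$ i.i.d.\ points sampled from the rescaled $s$-Riesz equilibrium measure $\nu_N$ of $\Omega_N$, exploit the constant-potential property $\nu_N\ast V_s\equiv N^{-s/d}\,\mathrm{Cap}_s(\omega)^{-1}$ on $\overline{\Omega_N}$ to compute the expected energy shift (yielding $Q(Q-1)/2$ for the upper bound, $-Q(Q+1)/2$ for the lower, times $N^{-s/d}/\mathrm{Cap}_s(\omega)$), and then control the entropy cost using $\int_\omega\nu\log\nu<\infty$. Your bookkeeping of the symmetrization factor $\log\binom{N+Q}{N}$ is, if anything, slightly more careful than the paper's, which records $\log Q!$ instead but lands on the same bound since both are at most $Q\log N$, which cancels against the $-Q\log N$ from $\int\nu_N\log\nu_N$.
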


\begin{proof}
Let $\nu$ be the $s$--Riesz equilibrium measure of $\omega$ and let $\nu_N=N^{-1}\nu(\cdot/N^{1/d})$ be that of $\Omega_N$. To obtain the first bound~\eqref{eq:negative_charge_lower}, we place $N-Q$ points in the Gibbs measure for $F_s(\beta,N,\Omega_N)$ (resp. minimizing positions for $\beta=+\ii$) together with $Q$ independent points distributed according to the measure $\nu_N$. We then we use this trial state for $F_s(\beta,N,\Omega_N)$. The interaction energy of the $Q$ points is
$$\frac{Q(Q-1)}2\iint_{\Omega\times\Omega}\frac{\rd\nu_N(x)\,\rd\nu_N(y)}{|x-y|^s}=\frac{Q(Q-1)}{2N^{\frac{s}d}{\rm Cap}_s(\omega)}.$$
Since $d-2<s<d$, the $Q$ points generate a constant potential over the whole domain $\Omega_N$, given by $V_Q(x):=Q\,\nu_N\ast|x|^{-s}=QN^{-\frac{s}d}{\rm Cap}_s(\omega)^{-1}$ . The interaction with the background and the other $N-Q$ points is thus equal to
$$(N-Q-N)V_Q=-\frac{Q^2}{N^{\frac{s}d}{\rm Cap}_s(\omega)}.$$
The entropy of the $Q$ points is
\begin{align*}
\frac1{Q!}\int_{\Omega^Q}Q!(\nu_N)^{\otimes Q}\log\big(Q!(\nu_N)^{\otimes Q}\big)&= \log(Q!)+Q\int_\Omega \nu_N\log\nu_N\\
&=\log(Q!)-Q\log N+Q\int_\omega \nu\log\nu\leq Q\int_\omega \nu\log\nu,
\end{align*}
since $Q!\leq Q^Q\leq N^Q$. Under our assumption on $\partial\omega$, it is known that $\nu$ is continuous in $\omega$ and diverges like $\rd(x,\partial\omega)^{(s-d)/2}$ close to the boundary (this follows from results in Ref.~\onlinecite{RosSer-16,RosSer-17}, as proved in Ref.~\onlinecite[Lem.~4.1]{GolNovRuf-22_ppt}). In particular
$\nu \in L^{p}(\omega)$ for $1\leq p<2/(d-s)$ and $\int_\omega \nu\log\nu$ is finite. Discarding the negative terms, we have thus proved that
$$F_s(\beta,N,\Omega_N)\leq F_s(\beta,N-Q,\Omega_N)-\frac{Q^2}{2N^{\frac{s}d}{\rm Cap}_s(\omega)}+Q\beta^{-1}\int_\omega \nu\log\nu.$$
This concludes the proof of~\eqref{eq:negative_charge_lower}.
The proof of~\eqref{eq:positive_charge_upper} goes along the same lines, using $N$ points in the Gibbs state for $F_s(\beta,N,\Omega_N)$ (resp. a minimizer for $\beta=+\ii$) and $Q$ points in the measure $\nu_N$.
\end{proof}

The reverse bounds are more complicated because we need to compensate the excess or missing charge using the background. We have much less freedom since the latter ought to be exactly constant in our definition of the Jellium energy. To simplify our discussion, we will only consider \emph{smooth convex domains} $\omega$ (star-shaped would suffice). Those satisfy the useful property that $\omega\subset \lambda\omega$ for all $\lambda\geq1$ (after a proper choice of origin). Note that in the first work Ref.~\onlinecite{LieLeb-72} on the thermodynamic limit for Coulomb systems, Lieb and Lebowitz only considered ellipsoids. General sets were handled later in Ref.~\onlinecite{GraSch-95b}. Our approach also works for general smooth domains, but then we have to allow small deformations of $\partial\omega$ and this goes beyond the scope of this paper.

The following is in the same spirit as Lemma~\ref{lem:net_charge1}, except that it involves the neutral problem $F_s(\beta,M,\Omega_M)$ for an $M=N+o(N)$ not necessarily equal to $N$.

\begin{lemma}[Net charge II]\label{lem:net_charge2}
Let $\max(0,d-2)<s<d$ and $\rho_b=1$. Let $\omega$ be a bounded convex open set with a $C^{1,\alpha}$ boundary, $\alpha>0$, and $|\omega|=1$. Define $\Omega_N:=N^{1/d}\omega$ with $N\in\N$. For any $Q\leq N$, there exists an integer $M\in\N$ with
\begin{equation}
 N+Q\leq M\leq N+CN^{1-\theta}Q^{\theta}+1,\qquad \theta:=\frac2{2+d-s},
 \label{eq:M_Q_excess}
\end{equation}
such that
\begin{multline}
 F_s(\beta,N+Q,\Omega_N)\geq F_s\big(\beta,M,\Omega_M\big)+\frac{Q^2}{2\big(1+(Q/N)^\theta\big)^sN^{\frac{s}d}{\rm Cap}_s(\omega)}\\-C\left(1+\beta^{-1}\right)(1+N^{1-\theta}Q^{\theta}).
  \label{eq:positive_charge_lower}
\end{multline}
Similarly, for any $Q\leq N/2$ there exists an $ N-Q\geq M\geq N-CN^{1-\theta}Q^{\theta}-1$ such that
\begin{multline}
 F_s(\beta,N-Q,\Omega)\leq F_s\big(\beta,M,\Omega_M\big)+\frac{Q^2}{2\big(1-(Q/N)^\theta\big)^sN^{\frac{s}d}{\rm Cap}_s(\omega)}\\+C\left(1+\beta^{-1}\right)(1+N^{1-\theta}Q^{\theta}).
  \label{eq:negative_charge_upper}
\end{multline}
The constant $C$ only depends on $\omega$.
\end{lemma}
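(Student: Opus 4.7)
The plan is to prove both bounds by the same device: construct a trial state for one problem from the Gibbs measure (or, for $\beta=+\ii$, a minimizer) of the other, and apply the variational principle~\eqref{eq:Gibbs_var}. I describe~\eqref{eq:positive_charge_lower} in detail; the proof of~\eqref{eq:negative_charge_upper} is entirely symmetric, starting from the Gibbs state of $F_s(\beta,M,\Omega_M)$ with $\Omega_M\subset\Omega_N$ (valid for $M\leq N-Q$ by convexity) and building a trial state for $F_s(\beta,N-Q,\Omega_N)$.

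Set $\lambda:=1+(Q/N)^{\theta}$ with $\theta=2/(2+d-s)$, and let $M\in\N$ be the smallest integer with $M\geq\lambda^dN$. Then $M\in[N+Q,\,N+CN^{1-\theta}Q^{\theta}+1]$ and, after translating $\omega$ so that $\omega\subset\lambda\omega$, convexity yields $\Omega_N\subset\Omega_M=\lambda\Omega_N$, with shell $S:=\Omega_M\setminus\Omega_N$ of volume $|S|\asymp N^{1-\theta}Q^{\theta}$. Let $\bP$ be the Gibbs state for $F_s(\beta,N+Q,\Omega_N)$, $K:=M-N-Q\geq 0$, and $\mu$ the probability Riesz $s$-equilibrium measure of $\Omega_M$, whose Riesz potential is the constant $M^{-s/d}{\rm Cap}_s(\omega)^{-1}$ on $\overline{\Omega_M}$ (here we crucially use that for $d-2<s<d$ the equilibrium measure is supported on \emph{all} of $\overline{\Omega_M}$, not merely on $\partial\Omega_M$). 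Moreover $\mu\in L^p$ for some $p>1$ and $\int\mu|\log\mu|<\infty$ by the assumed regularity of $\partial\omega$, as already used in Lemma~\ref{lem:net_charge1}. Form the trial state $\tilde\bP$ on $\Omega_M^M$ as the symmetrization of $\bP\otimes\mu^{\otimes K}$, giving $F_s(\beta,M,\Omega_M)\leq \cF_s(\beta,M,\Omega_M,\tilde\bP)$.

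To estimate the right-hand side, decompose each configuration's signed charge as $\nu=\nu_0+\nu_1$, with $\nu_0:=\sum_{j=1}^{N+Q}\delta_{x_j}-\1_{\Omega_N}$ (mass $+Q$) the original part and $\nu_1:=\sum_{j=N+Q+1}^{M}\delta_{x_j}-\1_{S}$ (mass $-Q$) the added part. Regularizing $V_s$ by $V_{s,\eps}$ as in the proof of Lemma~\ref{lem:lower_bound_Jellium} to handle the diagonal self-energy, the Jellium energy splits as $\cE_s(\nu_0,\Omega_N,1)+\cE_s(\nu_1,S,1)+\iint V_s\,d\nu_0\,d\nu_1$. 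Taking $\tilde\bP$-expectations, the first summand reproduces the average energy underlying $F_s(\beta,N+Q,\Omega_N)$, and by independence the cross term becomes $\int (\bE_\bP[\nu_0])\ast V_s\,d\bar\nu_1$ with the deterministic $\bar\nu_1:=K\mu-\1_S$ of mass $-Q$. Completing the square via $\cE_s(\bar\nu_0+\bar\nu_1)=\cE_s(\bar\nu_0)+\cE_s(\bar\nu_1)+\iint V_s\,d\bar\nu_0\,d\bar\nu_1$ (where $\bar\nu_0:=\bE_\bP[\nu_0]$) and exploiting positive-typeness of $V_s$ (see~\eqref{eq:energy_Fourier_Jellium}) together with the identity that makes $(K\mu)\ast V_s$ constant on $\Omega_N$, the combined expected cross and shell self-energies are estimated from above by $-Q^2/(2M^{s/d}{\rm Cap}_s(\omega))$ plus a boundary-layer remainder. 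Collecting with the added-point entropy bound $\beta^{-1}K|\int\mu\log\mu|=O(\beta^{-1}N^{1-\theta}Q^{\theta})$, and using $M^{s/d}\leq (1+(Q/N)^{\theta})^s N^{s/d}$, rearranging yields~\eqref{eq:positive_charge_lower}.

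The main technical obstacle is to produce the \emph{sharp} capacity constant ${\rm Cap}_s(\omega)^{-1}$ rather than some weaker surrogate: since $V_s$ is not Coulombian for $s\neq d-2$, one cannot invoke Newton's theorem to localize the $+Q$ excess as a point charge, and the convolution $\bar\nu_1\ast V_s$ is only \emph{approximately} constant on $\Omega_N$, with boundary-layer fluctuations concentrated in a neighborhood of $\partial\Omega_N$ of thickness $\sim (\lambda-1)N^{1/d}$ (coming both from the fact that $\1_S\ast V_s$ is not constant on $\Omega_N$ and from the divergence $\rd(\cdot,\partial\Omega_M)^{(s-d)/2}$ of $\mu$ near $\partial\Omega_M$). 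The exponent $\theta=2/(2+d-s)$ is precisely chosen so that this boundary-layer remainder in the cross term integrates to an $O(N^{1-\theta}Q^{\theta})$ error, matching the shell-volume and entropy contributions and remaining negligible compared to the capacity gain $Q^2/N^{s/d}$ in the regime $Q\ll N$.
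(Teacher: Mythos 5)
Your trial state is genuinely different from the paper's, and the gap you flag as the ``main technical obstacle'' is real and unresolved; I do not believe the argument closes as written.

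Your construction places the $K$ extra points i.i.d.~according to the \emph{equilibrium measure $\mu$ of $\Omega_M$ itself} and uses the shell $S=\Omega_M\setminus\Omega_N$ as the added background. Expanding $\bar\nu_1=K\mu-\1_S$, the potential $\bar\nu_1\ast V_s=K/{\rm Cap}_s(\Omega_M)-\1_S\ast V_s$ is not constant on $\Omega_N$ for $s\neq d-2$ (there is no Newton's theorem), and more importantly the shell self-energy $\frac12\iint_{S^2}V_s$ exceeds the ``ideal'' $|S|^2/(2{\rm Cap}_s(\Omega_M))$ by an amount which for a thin shell of thickness $\delta\sim(Q/N)^\theta N^{1/d}$ is of order $|S|\delta^{d-s}$ when $s>d-1$, and this excess need \emph{not} be $O(N^{1-\theta}Q^\theta)$: a direct computation shows it can be as large as $N^{(d-s)/d}\cdot N^{1-\theta}Q^{\theta}$, i.e.~larger than the error allowed by \eqref{eq:positive_charge_lower} and, in the relevant regime $Q\sim N^{(s+d)/(2d)}$, even comparable to or larger than the capacity gain $Q^2/N^{s/d}$. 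You also rely on $\int_S(\bar\nu_0\ast V_s)\,dy\approx Q|S|/{\rm Cap}_s$ for the cancellations to produce $-Q^2/(2{\rm Cap}_s)$ rather than the much larger $-KQ/(2{\rm Cap}_s)$, but $\bar\nu_0$ is the one-point density of the Gibbs state minus $\1_{\Omega_N}$, which has no a priori structure; nothing in the argument controls $\bar\nu_0\ast V_s$ on $S$.

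The paper's proof sidesteps all of this with a specific device you did not find. It does \emph{not} use the equilibrium measure of $\Omega_M$ directly, but the \emph{dilation-averaged} measure $\nu'(x):=\eta^{-1}\int_1^{1+\eta}t^{-d}\nu(x/t)\,dt$ (where $\nu$ is the equilibrium measure of $\omega$ and $\eta=(Q/N)^\theta$). Because each $t^{-d}\nu(\cdot/t)$ generates a constant Riesz potential on $t\omega\supset\omega$, the averaged potential $\nu'\ast V_s$ is \emph{exactly} constant on $\omega$; and because the average is over a window of width $\eta$, $\nu'$ is \emph{bounded} by $\kappa\eta^{(s-d)/2}$ (unlike $\nu$, which is singular at $\partial\omega$). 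This boundedness lets the paper set a uniform background density $\rho_b'=1+\kappa Q N^{-1}\eta^{(s-d)/2}+O(N^{-1})$ slightly above $\1_{\Omega_N}+Q\nu_N'$, and place $K$ i.i.d.~points according to $m_N/K$, where $m_N:=\rho_b'\1_{\Omega_N'}-\1_{\Omega_N}-Q\nu_N'\geq 0$. Crucially, these $K$ points \emph{exactly} screen $m_N$ in expectation, so there is no residual cross term to estimate; the only energy contribution is the negative $-\cI(m_N)/(2K)$, discarded, plus a clean entropy bound $\leq C\beta^{-1}(N\eta+1)$. The capacity gain then comes entirely from the $Q\nu_N'$ part: $\frac{Q^2}{2N^{s/d}}\cI(\nu')-\frac{Q^2}{N^{s/d}}\cdot(\nu'\ast V_s|_\omega)\leq-\frac{Q^2}{2N^{s/d}(1+\eta)^s{\rm Cap}_s(\omega)}$, where Jensen on the quadratic form $\cI$ gives $\cI(\nu')\leq\eta^{-1}\int_1^{1+\eta}t^{-s}\,dt\,/{\rm Cap}_s(\omega)$. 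If you want to repair your argument, you would effectively need to replace $\mu$ and $\1_S$ by something engineered so that the added charge screens the added background pointwise in expectation; that engineering is precisely the dilation-averaging and the $m_N$ construction.
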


\begin{proof}
Upon changing the origin of space, we can assume that $0\in\omega$. Our construction relies on an approximation of the $s$-Riesz equilibrium measure, which is bounded, supported in a neighborhood of $\omega$ and, more importantly, still produces an exactly constant potential in $\omega$. For instance, let us fix $0<\eta\leq1$ to be chosen later and consider the average over dilations
$$\nu'(x):=\eta^{-1}\int_1^{1+\eta}t^{-d}\nu(x/t)\,\dt$$
where $\nu$ is the $s$--Riesz equilibrium measure of $\omega$. The measure $\nu'$ is supported on $(1+\eta)\omega$ and satisfies $\nu'(x)\leq \kappa\eta^{\frac{s-d}{2}}$. The latter estimate follows from the fact that $\nu$ behaves like $\rd(x,\partial\omega)^{\frac{s-d}{2}}$ close to the boundary\cite{RosSer-16,RosSer-17,GolNovRuf-22_ppt}, and that $\partial\omega$ always intersects the line $x\R$ transversally (since $\omega$ is smooth and convex with $0\in\omega$). Instead of averaging over dilations we could similarly average over translations, using a convolution.

The rescaled measure $\nu_N':=N^{-1}\nu'(x/N^{1/d})$ is supported on $\Omega_N':=(1+\eta)\Omega_N$ and satisfies $\nu_N'\leq \kappa\eta^{\frac{s-d}{2}}N^{-1}$. By scaling it generates the constant Riesz potential
$$\nu_N'\ast|\cdot|^{-s}(x)=\frac1{\eta N^{\frac{s}d}{\rm Cap}_s(\omega)}\int_1^{1+\eta}\frac{1}{t^s}\,\dt,\qquad \forall x\in\Omega_N.$$
In $\Omega'_N\setminus \Omega_N$ the potential is not constant anymore. The idea is to replace the background $\1_{\Omega_N}$ by $\1_{\Omega_N}+Q\nu_N'$ in order to obtain a neutral system. Our problem is however settled with a uniform background and we have to compensate for the change due to $Q\nu_N'$. To this end, we use a constant background in $\Omega_N'$ slightly above $\1_{\Omega_N}+Q\nu_N'$, for instance
$$\rho_b':=1+\kappa QN^{-1}\eta^{\frac{s-d}{2}}+\frac{\tau}{(1+\eta)^d} N^{-1}$$
with  $\kappa$ the same constant as in the upper bound for $\nu'$ and $\tau\in[0,1)$ chosen to ensure that
$$M:=\rho'_b|\Omega_N'|=\left(1+\kappa QN^{-1}\eta^{\frac{s-d}{2}}\right)(1+\eta)^dN+\tau$$
is an integer. See Figure~\ref{fig:net_charge}. In the proof we will need that $\eta$ and $QN^{-1}\eta^{\frac{s-d}{2}}$ are both bounded. In applications they will in fact be small. Since
$M=N+O( Q\eta^{\frac{s-d}{2}}+\eta N+1),$
after optimizing over $\eta$, we are led to choosing
\begin{equation}
 \eta:=\left(\frac{Q}{N}\right)^{\frac{2}{2+d-s}}\leq1.
 \label{eq:eta_choice}
\end{equation}
With this choice we have $Q\eta^{\frac{s-d}{2}}= \eta N$ and thus $N+Q\leq M\leq N(1+C\eta)+\tau$.

\begin{figure}[t]
\includegraphics[width=7cm]{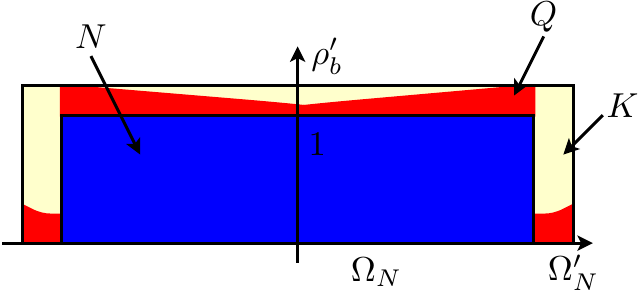}
\caption{Decomposition of the background $\rho'_b\1_{\Omega_N'}$ used in the proof of Lemma~\ref{lem:net_charge2}. In the smaller domain $\Omega_N=N^{1/d}\omega$ we have $N+Q$ points. To compensate the excess $Q$ we use a background $Q\nu_N'$ (in red in the picture), where $\nu_N'$ is a smooth approximation of the $s$-Riesz equilibrium measure of $\Omega_N$. It lives in the slightly larger set $\Omega'_N$ and is chosen so that the associated Riesz potential $Q\nu_N'\ast|x|^{-s}$ is still exactly constant in $\Omega_N$. In order to have a uniform  background on $\Omega_N'$ we also introduce the missing part $m_N$ (in yellow on the picture) with $\int m_N=K$ but completely screen it using $K$ points.\label{fig:net_charge}}
\end{figure}

Next we construct a trial state. Let
$$m_N:=\rho_b'\1_{\Omega_N'}-\1_{\Omega_N}-Q\nu_N'\geq0$$
denote the missing background charge and $K:=\int_{\Omega_N'}m_N=M-N-Q$. We have
$$m_N\leq (\kappa \eta +N^{-1})\1_{\Omega_N}+\left(1+\kappa\eta +N^{-1}\right)\1_{\Omega_N'\setminus\Omega_N}\leq (\kappa \eta +N^{-1})\1_{\Omega_N}+(2+\kappa)\1_{\Omega_N'\setminus\Omega_N}.$$
We pick as trial state the Gibbs state for $F_s(\beta,N+Q,\Omega_N,1)$, together with $K$ independent points chosen according to the measure $m_N/K$. These $K$ points are compensating $m_N$ exactly. No interaction is left with the rest of the system.  The free energy of the $K$ points together with the part $m_N$ from the background equals
\begin{multline*}
-\frac1{2K}\iint_{\Omega_N'\times\Omega_N'}\frac{m_N(x)\,m_N(y)\dx\,\dy}{|x-y|^s}+\beta^{-1}\log(K!)+\beta^{-1}\int_{\Omega_N'}m_N\log \frac{m_N}{K}\\
\leq \beta^{-1}\int_{\Omega_N'}m_N\log m_N\leq C\beta^{-1}(N\eta+1).
\end{multline*}
We are now left with our $N+Q$ points and the background $\1_{\Omega_N}+Q\nu_N'$. The background self-energy of $Q\nu_N'$ together with its interaction with the $N+Q$ points and the uniform background $\1_{\Omega_N}$ give
\begin{multline*}
\frac{Q^2}{2N^{\frac{s}{d}}}\iint\frac{\rd\nu'(x)\,\rd\nu'(y)}{|x-y|^s}-\frac{Q^2}{\eta N^{\frac{s}d}{\rm Cap}_s(\omega)}\int_1^{1+\eta}\frac{1}{t^s}\,\dt\\
\leq -\frac{Q^2}{2\eta N^{\frac{s}d}{\rm Cap}_s(\omega)}\int_1^{1+\eta}\frac{1}{t^s}\,\dt\leq -\frac{Q^2}{2 N^{\frac{s}d}(1+\eta)^s{\rm Cap}_s(\omega)},
\end{multline*}
where we used Jensen in the first bound. In total we arrive at the estimate
$$F_s(\beta,M,\Omega_N',\rho_b')\leq F_s(\beta,N+Q,\Omega_N,1)-\frac{Q^2}{2(1+\eta)^s N^{\frac{s}d}{\rm Cap}_s(\omega)}+C\beta^{-1}(N\eta+1).$$
From the scaling property~\eqref{eq:scaling_N_T0_Jellium} we have
$$F_s(\beta,M,\Omega_N',\rho_b')=(\rho'_b)^{\frac{s}{d}}F_s\big((\rho'_b)^{\frac{s}{d}}\beta,M,\Omega_M,1\big)\geq F_s(\beta,M,\Omega_M,1)-C(1+\beta^{-1})(N\eta+1).$$
The last bound is because the energy and entropy are both of order $N$, by Lemma~\ref{eq:lower_bound_Jellium}, hence changing $\beta$ to $(\rho'_b)^{\frac{s}{d}}\beta$ and multiplying by $\rho_b'$ generates an error of order $N\eta+1$. We obtain~\eqref{eq:positive_charge_lower}.

The proof of~\eqref{eq:negative_charge_upper} is the same, but reversed. We start with the domain $\Omega_N$ and consider the measure $\tilde\nu_N(x)=N^{-1}\tilde\nu(x/N^{1/d})$ with, this time,
$$\tilde\nu(x)=\frac1\eta\int_{1-\eta}^1t^{-d}\nu(x/t)\,\dt$$
where $\eta$ is the same as in~\eqref{eq:eta_choice}. The measure $\tilde\nu_N$ induces a constant potential over the smaller set $\tilde\Omega_N:=(1-\eta)\Omega_N$. We then subtract to the background $\1_{\Omega_N}$ the part $Q\tilde\nu_N$ and split the remainder into a constant term $\tilde \rho_b\1_{\tilde\Omega_N}$ with
$\tilde\rho_b=1-\kappa QN^{-1}\eta^{\frac{s-d}{2}}-\tau(1-\eta)^{-d} N^{-1}$
and a remainder $\tilde m_N=\1_\Omega-Q\tilde\nu_N-\tilde\rho_b\1_{\tilde\Omega_N}$. Then the integer appearing in the statement is $M=\tilde\rho_b(1-\eta)^dN$ and the proof of~\eqref{eq:negative_charge_upper} goes along the same lines as for~\eqref{eq:positive_charge_lower}.
\end{proof}

\begin{corollary}[Limit with net charge]\label{cor:net_charge}
Let $\max(0,d-2)<s<d$ and $\rho_b=1$. Let $\omega$ be a bounded convex open set with a $C^{1,\alpha}$ boundary, $\alpha>0$, and $|\omega|=1$. Then we have for every $q\in\R\cup\{\pm\ii\}$ and $\beta\in(0,+\ii]$
$$ \lim_{\substack{N\to\ii\\ N^{-\frac{d+s}{2d}}Q\to q}}\frac{F_s(\beta,N+Q,N^{1/d}\omega)}{N}=f(s,\beta)+\frac{q^2}{2\,{\rm Cap}_s(\omega)}
$$
where we recall our convention that $F_s(\beta,N,\Omega)=E_s(N,\Omega)$ for $\beta=+\ii$.
\end{corollary}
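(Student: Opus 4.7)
My plan is to sandwich $F_s(\beta,N+Q,\Omega_N)/N$ between the estimates provided by Lemmas~\ref{lem:net_charge1} and~\ref{lem:net_charge2}, and then identify both sides with the claimed limit via Theorem~\ref{thm:limit_C_Jellium}. Throughout I parametrize $Q=qN^{(d+s)/(2d)}(1+o(1))$, which under the hypothesis $s<d$ makes $Q/N\to 0$ (so the relative non-neutrality is macroscopically small), while simultaneously $Q^2/(N\cdot N^{s/d})\to q^2$ (so the capacitive correction is of order~$N$, as desired). By symmetry I will first treat $Q\geq 0$ and then indicate the adjustments for $Q\leq 0$; the case $|q|=\infty$ will follow from the lower bound alone, together with the observation that $|Q|\leq N/2$ eventually (since $|Q|/N\to 0$).

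For the upper bound I would invoke inequality~\eqref{eq:positive_charge_upper} directly. Dividing by $N$, the first term on the right converges to $f(s,\beta)$ by Theorem~\ref{thm:limit_C_Jellium}, the quadratic term $Q^2/\bigl(2N^{1+s/d}{\rm Cap}_s(\omega)\bigr)$ converges to $q^2/\bigl(2\,{\rm Cap}_s(\omega)\bigr)$, and the entropy correction $CQ/(N\beta)=O(N^{-(d-s)/(2d)})$ is $o(1)$. This yields the desired $\limsup$ bound.

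For the lower bound I would use~\eqref{eq:positive_charge_lower}: there exists an integer $M$ with $N+Q\leq M\leq N+CN^{1-\theta}Q^\theta+1$ (where $\theta=2/(2+d-s)$) such that $F_s(\beta,N+Q,\Omega_N)$ exceeds $F_s(\beta,M,\Omega_M)$ plus the same capacitive term, up to two errors. Since $(Q/N)^\theta\to 0$, the denominator correction $\bigl(1+(Q/N)^\theta\bigr)^s$ tends to $1$ and the additive error $C(1+\beta^{-1})(1+N^{1-\theta}Q^\theta)/N=C(1+\beta^{-1})\bigl((Q/N)^\theta+o(1)\bigr)$ tends to zero. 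The crucial point is that $M/N=1+O((Q/N)^\theta)\to 1$, so that the auxiliary sequence $(M,\Omega_M)$ stays in the exactly neutral regime covered by Theorem~\ref{thm:limit_C_Jellium}, yielding $F_s(\beta,M,\Omega_M)/N=(M/N)\cdot F_s(\beta,M,\Omega_M)/M\to f(s,\beta)$. The case $Q\leq 0$ is identical after swapping the roles of the two inequalities in each lemma, and for $q=\pm\infty$ the lower bound alone forces $\liminf=+\infty$.

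The main, and essentially only, nontrivial step is this last application of Theorem~\ref{thm:limit_C_Jellium} on the auxiliary sequence $(M,\Omega_M)$: I need $M$ to be chosen so that $\Omega_M=M^{1/d}\omega$ is the \emph{exactly} neutral container corresponding to $M$ points at density $1$, which is precisely what Lemma~\ref{lem:net_charge2} delivers through its careful selection of $M$ from the construction of the trial background. Everything else is a matter of routine bookkeeping of powers of $N$ under the scaling $Q\sim qN^{(d+s)/(2d)}$.
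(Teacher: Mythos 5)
Your argument for finite $q$ is correct and follows the paper's intended route: sandwich via Lemmas~\ref{lem:net_charge1} and \ref{lem:net_charge2}, pass to the limit using Theorem~\ref{thm:limit_C_Jellium} on the exactly neutral auxiliary sequence $(M,\Omega_M)$, and note that all error terms are $o(N)$ under the scaling $Q\sim qN^{(d+s)/(2d)}$.

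The gap is in the $q=\pm\infty$ case. You justify the reduction to the lower bounds of Lemmas~\ref{lem:net_charge1}--\ref{lem:net_charge2} by the claim that $|Q|/N\to 0$, but this is \emph{false} when $|q|=\infty$: the hypothesis $N^{-(d+s)/(2d)}Q\to q$ only forces $|Q|/N^{(d+s)/(2d)}\to\infty$ and leaves $Q/N$ unconstrained (take $Q=N^2$, say). In particular, inequality~\eqref{eq:positive_charge_lower} of Lemma~\ref{lem:net_charge2} requires $Q\leq N$, so your lower-bound argument can fail when $q=+\infty$ and $Q>N$. (For $q=-\infty$ you are in fact fine, because the constraint $0\leq Q'\leq N$ for $Q'=-Q$ in Lemma~\ref{lem:net_charge1} is automatic from $N+Q\geq 0$; but the reason you give for it is still incorrect.) The paper handles $q=\pm\infty$ by Lemma~\ref{lem:neutral_Jellium} instead, whose positive-temperature version~\eqref{eq:lower_bd_F} reads $F_s(\beta,n,\Omega_N)\geq c_1'(n-N)^2/N^{s/d}-C(1+\beta^{-1})n$ for \emph{every} $n\geq 0$, with no upper restriction on the net charge; dividing by $N$, the quadratic term $Q^2/N^{1+s/d}\to+\infty$ and dominates the linear error precisely because $(d+s)/(2d)>s/d$ when $s<d$, giving $\liminf F_s(\beta,N+Q,\Omega_N)/N=+\infty$. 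You should replace your $|Q|/N\to 0$ argument with this bound for the infinite-$q$ case.
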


\begin{proof}
The result follows immediately from Lemma~\ref{lem:neutral_Jellium} when $q=\pm\ii$ and from Lemmas~\ref{lem:net_charge1} and~\ref{lem:net_charge2} for finite $q$. We are using here that the limit $F_s(\beta,M,\Omega_M)/M$ exists for neutral systems (Theorem~\ref{thm:limit_C_Jellium}) but in the next section we show how the existence of this limit follows from Lemmas~\ref{lem:net_charge1} and~\ref{lem:net_charge2}, together with the grand-canonical thermodynamic limit in Theorem~\ref{thm:limit_GC_Jellium}.
\end{proof}

\subsubsection{Existence of the canonical thermodynamic limit and coincidence with $f^{\rm GC}(s,\beta)$}
Here we explain how to show the thermodynamic limit in the canonical case using the grand-canonical limit in Theorem~\ref{thm:limit_GC_Jellium}, and the previous net charge estimates. We need two preliminary lemmas.
Recall from~\eqref{eq:def_E_s_GC_Jellium} that the grand-canonical energy at $T=0$ is by definition the minimum of the canonical energy over the number of points. The following states that the same holds at positive temperature, up to a logarithmic error in the volume.

\begin{lemma}[Minimizing over net charges]\label{lem:GC_min_charge}
Let $0<s<d$ and $\rho_b=1$. Let $\omega$ be any bounded open set with $|\omega|=1$. For every $\beta\in(0,+\ii)$ and $\mu\in\R$, we have
\begin{multline}
 F^{\rm GC}(\beta,\mu,N^{1/d}\omega)\leq \min_{n\geq0}\big\{F(\beta,n,N^{1/d}\omega)-\mu\, n\big\}\\
 \leq F^{\rm GC}(\beta,\mu,N^{1/d}\omega)+\beta^{-1}C\log(1+N)
 \label{eq:GC_min_charge}
\end{multline}
when $\N\ni N\geq N_0=C(1+\beta^{-1}+|\mu|)^{\frac{d}{d-s}}$. Here $C$ only depend on $\omega$.
\end{lemma}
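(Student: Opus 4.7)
The first inequality $F^{\rm GC}_s(\beta,\mu,\Omega) \leq \min_{n \geq 0}\{F_s(\beta,n,\Omega) - \mu n\}$ is immediate: keeping only one term in the series~\eqref{def:Z_GC} gives $Z^{\rm GC}_s(\beta,\mu,\Omega) \geq e^{\beta\mu n} Z_s(\beta,n,\Omega)$ for every $n \geq 0$, and the claim follows after taking $-\beta^{-1}\log$ and minimizing over $n$. The substance of the lemma is therefore the second inequality.

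Setting $g(n) := F_s(\beta,n,\Omega_N) - \mu n$ and $g_* := \inf_{n \geq 0} g(n)$, my plan is to establish the equivalent bound
\begin{equation*}
Z^{\rm GC}_s(\beta,\mu,\Omega_N) = \sum_{n \geq 0} e^{-\beta g(n)} \leq (1+N)^{C} e^{-\beta g_*},
\end{equation*}
which after taking $-\beta^{-1}\log$ yields the $\beta^{-1} C \log(1+N)$ correction. The strategy is the standard one for this kind of sum: show that $g$ grows quadratically away from its minimizer at the scale $|n - N| \sim N^{(d+s)/(2d)}$ dictated by the net-charge estimates of Corollary~\ref{cor:net_charge}, and dominate the sum by a Gaussian.

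The key input is a global quadratic lower bound on $g$, obtained by combining Lemma~\ref{lem:neutral_Jellium} (stability plus the net-charge control for $s>0$, giving $E_s(n,\Omega_N) \geq c_1'(n-N)^2/N^{s/d} - c_1 n$) with the Stirling-type estimate $Z_s(\beta,n,\Omega_N) \leq (N^n/n!) e^{-\beta E_s(n,\Omega_N)}$, exactly as in~\eqref{eq:estim_lower_simple_temp}:
\begin{equation*}
g(n) \geq \frac{c_1'(n-N)^2}{N^{s/d}} - (c_1 + \mu + \beta^{-1}) n + n\beta^{-1}\log(n/N).
\end{equation*}
This first ensures that $g(n) \to +\infty$ as $n \to \infty$, so $g_*$ is attained at some $n_* \in \mathbb{N}$. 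Combined with the trivial upper bound $g_* \leq g(N) \leq C(1+|\mu|+\beta^{-1})N$ coming from Lemma~\ref{lem:upper_bound_Jellium}, this yields, on the moderate range $N/2 \leq n \leq 2N$, an estimate of the form
\begin{equation*}
g(n) - g_* \geq \frac{c_1'(n-N)^2}{N^{s/d}} - C(1+|\mu|+\beta^{-1})\,|n-N| - C(1+|\mu|+\beta^{-1}) N.
\end{equation*}
For $|n - N| > N$ the Stirling entropy term $n\beta^{-1}\log(n/N)$ takes over and makes the tail of the sum uniformly bounded.

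The last step is to compute the Gaussian sum. The condition $N \geq C(1+\beta^{-1}+|\mu|)^{d/(d-s)}$ is exactly what is needed to ensure $(1+|\mu|+\beta^{-1}) N^{s/d} \leq N$, so that the linear correction $C(1+|\mu|+\beta^{-1})|n-N|$ can only shift the effective minimizer within a window of size $O(N)$ around $N$; outside this window, the quadratic term dominates and the sum is controlled by a Gaussian integral of variance $O(N^{s/d}/\beta)$. Counting the two contributions one obtains $\sum_{n \geq 0} e^{-\beta(g(n)-g_*)} \leq C'(1+N)^{C}$, which is exactly what is needed. The main obstacle is making the last step quantitative while keeping the constant $C$ absolute (independent of $\beta,\mu$); this is what forces the largeness assumption $N \geq N_0$, and is precisely the scale at which the net-charge correction~\eqref{eq:net_charge} becomes of the same order as the extensive term $N f(s,\beta)$.
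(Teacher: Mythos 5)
Your overall strategy is the same as the paper's: reduce to the upper bound $\sum_{n\geq 0}e^{-\beta g(n)}\leq (1+N)^C e^{-\beta g_*}$ with $g(n)=F(\beta,n,\Omega_N)-\mu n$, obtain a quadratic lower bound on $g$ from Lemma~\ref{lem:neutral_Jellium} combined with Stirling, and split the sum into a window of $O(N)$ terms near $n=N$ and a Gaussian tail. The first inequality and the lower bound on $g(n)$ are handled correctly.

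There is, however, a real imprecision in the moderate-range step. The estimate you write,
\begin{equation*}
g(n) - g_* \geq \frac{c_1'(n-N)^2}{N^{s/d}} - C(1+|\mu|+\beta^{-1})\,|n-N| - C(1+|\mu|+\beta^{-1}) N,
\end{equation*}
is a true lower bound, but on the range $N/2\leq n\leq 2N$ its right-hand side is of order $-C(1+|\mu|+\beta^{-1})N$ near $n=N$, so it cannot be used to bound $e^{-\beta(g(n)-g_*)}$ from above by anything polynomial in $N$; exponentiating it would give a factor $e^{c\beta(1+|\mu|+\beta^{-1})N}$, not $(1+N)^C$. What you actually need on the whole range $n\leq 2N$ is simply the tautology $g(n)\geq g_*$, which makes each such term $\leq 1$ and gives the factor $O(N)$ directly. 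This is exactly what the paper does: it never asserts any quantitative lower bound on $g(n)-g_*$ for $n\leq 2N$, it just counts the $2N+1$ terms. You seem to have this in mind when you say ``counting the two contributions'', but as written the quantitative moderate-range bound is a distraction and does not support the conclusion.

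A second, smaller issue: for the tail you attribute the decay to ``the Stirling entropy term $n\beta^{-1}\log(n/N)$ taking over''. That is not the mechanism, and on its own it would fail (for $\mu>0$ large the entropy only dominates $\mu n$ once $n\gtrsim e^{\beta\mu}N$). What kills the tail is the quadratic stability term from Lemma~\ref{lem:neutral_Jellium}: for $n\geq 2N$ one has $n-N\geq N$, hence $(n-N)^2/N^{s/d}\geq (n-N)\,N^{(d-s)/d}$, and the hypothesis $N\geq N_0=C(1+\beta^{-1}+|\mu|)^{d/(d-s)}$ is precisely what makes $N^{(d-s)/d}\gtrsim (1+\beta^{-1}+|\mu|)$, so that the quadratic absorbs both the linear term and $-g_*$ and one is left with $g(n)-g_*\geq c(n-N)^2/N^{s/d}$ for $n\geq 2N$. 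This is the inequality you need for the Gaussian tail, and it should be stated explicitly rather than attributed to Stirling.

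With these two points fixed your argument coincides with the paper's proof; the structure, the use of Lemma~\ref{lem:neutral_Jellium}, the role of $N_0$, and the final count $1+2N+O(\beta^{-1/2}N^{s/(2d)})\leq CN$ are all the same.
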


We work here at integer volume $\ell^d=N$ for simplicity, since we are mainly interested in the comparison with the canonical problem.

\begin{proof}
Let $\Omega_N=N^{1/d}\omega$. From Lemma~\ref{lem:neutral_Jellium} and the same inequality as in~\eqref{eq:estim_lower_simple_temp} we have
\begin{equation}
  F(\beta,n,\Omega_N)\geq c'_1\frac{(n-N)^2}{N^{\frac{s}d}}-C(1+\beta^{-1})n, \qquad\forall n\in\N,
 \label{eq:lower_bd_F}
\end{equation}
for a universal constant $C$. We have estimated here the entropy by $\beta^{-1}n\log(n/N)\geq -\beta^{-1}e^{-1}N$.
In the neutral case $n=N$ we obtain after placing all the particles uniformly in $\Omega_N$
$$F(\beta,N,\Omega_N)\leq -\frac1{2N}\iint_{\Omega_N^2}\frac{\dx\,\dy}{|x-y|^s}+\beta^{-1}\log\frac{N!}{N^N}\leq0.$$
Let us denote by $G(\beta,\mu,\Omega_N)$ the minimum in~\eqref{eq:GC_min_charge}. Then the previous bounds show
$$-C(1+\beta^{-1}+|\mu|)N\leq G(\beta,\mu,\Omega_N)\leq F_s(\beta,N,\Omega_N)-\mu N\leq -\mu N$$
with the minimum attained for an $n$ satisfying
\begin{equation}
 |n-N|\leq C(1+\beta^{-1}+|\mu|)N^{\frac{s+d}{2d}}.
 \label{eq:estim_mu_beta_min_G}
\end{equation}
The new constant $C$ only depends on $\omega$ through $c_1'$ in~\eqref{eq:lower_bd_F}. For $N\geq N_0=C(1+\beta^{-1}+|\mu|)^{\frac{d}{d-s}}$, we have
\begin{equation}
F(\beta,n,\Omega_N)-\mu n\geq c\frac{(n-N)^2}{N^{\frac{s}d}}+G(\beta,\mu,\Omega_N),\qquad \forall n\geq 2N,
 \label{eq:estim_G_F_s}
\end{equation}
with $c=c'_1/2$. We are now ready to prove~\eqref{eq:GC_min_charge}. The first inequality is obtained by using the canonical Gibbs state for $F(\beta,n,\Omega_N)$ as a trial state in the grand-canonical problem. For the second inequality, we split the partition function in~\eqref{def:Z_GC} into two sums and use $F_s(\beta,n,\Omega_N)-\mu n\geq G(\beta,\mu,\Omega_N)$ for $n\leq 2N$ and~\eqref{eq:estim_G_F_s} for $n>2N$:
\begin{align*}
\sum_{n\geq0}e^{-\beta (F(\beta,n,\Omega_N)-\mu n)}&= \sum_{0\leq n\leq 2N}e^{-\beta (F(\beta,n,\Omega_N)-\mu n)}+\sum_{n>2N}e^{-\beta (F(\beta,n,\Omega_N)-\mu n)}\\
&\leq e^{-\beta G(\beta,\mu,\Omega_N)}\left(1+2N+\sum_{m\in\N}e^{-c\beta\frac{m^2}{N^{\frac{s}{d}}}}\right)\\
&\leq e^{-\beta G(\beta,\mu,\Omega_N)}\left(1+2N+C\beta^{-\frac12}N^{\frac{s}{2d}}\right)
\end{align*}
where we have recognized a Riemann sum. For $N\geq N_0$ we have $\beta^{-\frac12}N^{\frac{s}{2d}}\leq CN$ and therefore we obtain~\eqref{eq:GC_min_charge} after taking the logarithm and multiplying by $\beta^{-1}$.
\end{proof}

Lemma~\ref{lem:GC_min_charge} implies that the minimum $G(\beta,\mu,\Omega_N)$ has the same thermodynamic limit as the grand-canonical problem:
\begin{equation}
\lim_{N\to\ii}\frac{G(\beta,\mu,N^{1/d}\omega)}{N}=f^{\rm GC}(s,\beta,\mu)=f^{\rm GC}(s,\beta)-\mu.
\label{eq:GC_min_charge_limit}
\end{equation}

Our second lemma states that for convex sets, the neutral free energy is monotone.

\begin{lemma}[Monotonicity of the canonical (free) energy]\label{lem:monotonicity}
Let $\omega$ be a bounded convex open set with $|\omega|=1$. Let $\rho_b=1$. Define $\Omega_N:=N^{1/d}\omega$ with $N\in\N$.
Then we have
\begin{equation}
 F_s(\beta,N+1,\Omega_{N+1})\leq F_s(\beta,N,\Omega_N),
 \label{eq:F_s_monotone}
\end{equation}
for all $N\geq1$, all $\beta\in(0,+\ii]$ and all $0<s<d$.
\end{lemma}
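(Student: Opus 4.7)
The plan is to build, from the Gibbs state on $\Omega_N^N$, an explicit trial state for the $(N+1)$-particle problem on $\Omega_{N+1}^{N+1}$ whose energy and entropy contributions factorise cleanly. First I would use the convexity of $\omega$ to translate the origin so that $0\in\omega$. Then $t\omega\subset\omega$ for every $t\in[0,1)$ and hence $\overline{\Omega_N}=N^{\frac1d}\overline\omega\subset(N+1)^{\frac1d}\omega=\Omega_{N+1}$, with the complement $A:=\Omega_{N+1}\setminus\overline{\Omega_N}$ satisfying $|A|=|\Omega_{N+1}|-|\Omega_N|=1$.

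With $\bP_N$ the Gibbs density for $F_s(\beta,N,\Omega_N)$, supported in $\overline{\Omega_N}^N$, the trial state I propose is the symmetrisation of $\bP_N(x_1,\ldots,x_N)\1_A(x_{N+1})$, or equivalently: sample $(x_1,\ldots,x_N)$ from the Gibbs probability and independently $x_{N+1}$ uniform on $A$ (possible since $|A|=1$). Writing
\[\cE_s(x_1,\ldots,x_{N+1},\Omega_{N+1},1)=\cE_s(x_1,\ldots,x_N,\Omega_N,1)+\Delta E,\]
the correction $\Delta E$ contains the new pair interactions $\sum_kV_s(x_{N+1}-x_k)$, the additional point--background terms $-\sum_j\int_A V_s(x_j-y)\,\dy$, the new point--background $-\int_{\Omega_{N+1}}V_s(x_{N+1}-y)\,\dy$, and the extra background self-energy $\frac12\bigl(\iint_{\Omega_{N+1}^2}-\iint_{\Omega_N^2}\bigr)V_s$. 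The crucial cancellation is that averaging $\sum_k V_s(x_{N+1}-x_k)$ over $x_{N+1}$ uniform on $A$ yields exactly $\sum_k\int_A V_s(x_k-y)\,\dy$, which annihilates the second piece pointwise in $x_1,\ldots,x_N$. The remaining purely geometric terms collapse to
\[\bE[\Delta E]=-\tfrac12\iint_{A\times A}V_s(x-y)\,\dx\,\dy\leq 0,\]
with no residual dependence on $\bP_N$ whatsoever, and strictly negative for $0<s<d$.

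For the entropy one uses the support disjointness $\overline{\Omega_N}\cap A=\emptyset$: the $N+1$ summands of the symmetrised trial density are pairwise disjointly supported, since at most one index $k$ can have $x_k\in A$ while all others lie in $\overline{\Omega_N}$. A short calculation then shows that $\int \bP_{\rm trial}\log\bP_{\rm trial}=(N+1)\int \bP_N\log\bP_N$, which together with $|A|=1$ exactly balances the factor $(N+1)!/N!$ arising from the normalisation $\int\bP/(N+1)!=1$ in the variational principle. Plugging everything into Gibbs' variational bound yields
\[F_s(\beta,N+1,\Omega_{N+1})\leq F_s(\beta,N,\Omega_N)-\tfrac12\iint_{A\times A}V_s\leq F_s(\beta,N,\Omega_N).\]
The case $\beta=+\ii$ follows either by letting $\beta\to\ii$ in the above bound, or directly by picking a minimiser on $\Omega_N$ together with a point $x_{N+1}\in A$ for which $\Delta E\leq -\tfrac12\iint_{A\times A}V_s$ (such a point exists by the same averaging argument).

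The main obstacle I expect is the combinatorial entropy bookkeeping just described: verifying that the symmetrised trial density, built as a sum of $N+1$ cyclically permuted pieces, nevertheless contributes precisely the entropy of $\bP_N$, thereby absorbing the extra $\log(N+1)$ that would otherwise spoil the bound. This depends crucially on $A$ being disjoint from $\overline{\Omega_N}$, which is where the convexity of $\omega$ (together with the choice $0\in\omega$) enters essentially; for a non-convex shape with $\Omega_N\not\subset\Omega_{N+1}$ the very first step of the argument breaks down.
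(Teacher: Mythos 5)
Your proposal is correct and follows essentially the same route as the paper's proof: place the $N$-point Gibbs state in $\Omega_N\subset\Omega_{N+1}$ (using convexity and $0\in\omega$), add one point distributed uniformly on the unit-volume annulus $A=\Omega_{N+1}\setminus\overline{\Omega_N}$, observe that this exactly screens the new background so that the averaged energy shift is $-\tfrac12\iint_{A\times A}V_s\leq 0$, and note the added point carries no entropy. You supply more detail than the paper on the symmetrization and the combinatorial cancellation of the $(N+1)$ factor against $(N+1)!/N!$, which the paper compresses into ``it has no entropy since its probability density is a characteristic function''; that bookkeeping, relying on the disjoint supports of $\overline{\Omega_N}$ and $A$, is indeed the only point requiring care.
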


\begin{proof}
Assuming $0\in\omega$, we have $\Omega_N\subset\Omega_{N+1}$. We construct a trial state for $\Omega_{N+1}$ by placing $N$ points in $\Omega_N$ with the Gibbs probability for $\Omega_N$ (resp. minimizing positions for $\beta=+\ii$), and one independent point in $\Omega_{N+1}\setminus\Omega_N$ in the uniform distribution $\1_{\Omega_{N+1}\setminus\Omega_N}$. This completely screens the background in $\Omega_{N+1}\setminus\Omega_N$ so that no interaction with the rest of the system is left. The free energy of the additional point is $-\frac{1}{2}\iint_{(\Omega_{N+1}\setminus\Omega_N)^2}V_s(x-y)\dx\,\dy\leq0$. It has no entropy since its probability density is a characteristic function.
\end{proof}

With this and the net charge estimates from the previous section, we can show that the canonical free energy converges to $f^{\rm GC}(s,\beta)$.

\begin{corollary}[Equivalence of ensemble]\label{cor:equivalence_ensembles}
Let $\max(0,d-2)<s<d$ and $\rho_b=1$. Let $\omega$ be a bounded convex open set with a $C^{1,\alpha}$ boundary, $\alpha>0$, and $|\omega|=1$. Then we have
$$\lim_{N\to\ii}\frac{F_s(\beta,N,N^{1/d}\omega)}{N} =f^{\rm GC}(s,\beta),$$
for all $\max(0,d-2)<s<d$ and all $\beta\in(0,+\ii]$.
\end{corollary}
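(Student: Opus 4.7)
Write $a_N := F_s(\beta, N, N^{1/d}\omega)$ with $\Omega_N := N^{1/d}\omega$. The plan is to prove matching bounds $\liminf_N a_N/N \geq f^{\rm GC}(s,\beta) \geq \limsup_N a_N/N$, which simultaneously establishes existence of the canonical thermodynamic limit and its identification with $f^{\rm GC}(s,\beta)$. The lower bound is essentially a tautology given Lemma~\ref{lem:GC_min_charge}; the upper bound is the substantive part, where the net-charge estimates of Lemmas~\ref{lem:net_charge1}–\ref{lem:net_charge2} and the monotonicity of Lemma~\ref{lem:monotonicity} come into play.

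\textbf{Lower bound.} By definition, $G(\beta,\mu,\Omega_N) = \min_n\{F_s(\beta,n,\Omega_N)-\mu n\}$ is at most its value at $n = N$, giving $a_N - \mu N \geq G(\beta,\mu,\Omega_N)$ for every $\mu \in \R$. Dividing by $N$ and using the limit~\eqref{eq:GC_min_charge_limit}, which follows from Lemma~\ref{lem:GC_min_charge} combined with Theorem~\ref{thm:limit_GC_Jellium}, we obtain $\liminf_N a_N/N \geq \mu + (f^{\rm GC}(s,\beta)-\mu) = f^{\rm GC}(s,\beta)$. The $\mu$-dependence cancels.

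\textbf{Upper bound.} Fix $\mu \in \R$ and let $n^* = n^*(\mu,N)$ realize the minimum in $G(\beta,\mu,\Omega_N)$. From Lemma~\ref{lem:GC_min_charge} and its proof we have
\[
F_s(\beta, n^*, \Omega_N) = G(\beta,\mu,\Omega_N) + \mu n^* \leq F^{\rm GC}(\beta,\mu,\Omega_N) + \mu n^* + \beta^{-1}C\log(1+N),
\]
together with the discrepancy estimate $|n^*-N| \leq C(1+\beta^{-1}+|\mu|)N^{(s+d)/(2d)} = o(N)$ from~\eqref{eq:estim_mu_beta_min_G}. When $n^* \leq N$, Lemma~\ref{lem:net_charge1}~\eqref{eq:negative_charge_lower} with $Q = N-n^*\geq 0$ gives, after dropping the non-negative quadratic term,
\[
a_N \;\leq\; F_s(\beta, n^*, \Omega_N) + C Q \beta^{-1}
     \;\leq\; \mu n^* + F^{\rm GC}(\beta,\mu,\Omega_N) + C\bigl(N^{(s+d)/(2d)}\beta^{-1}+\log N\bigr).
\]
Dividing by $N$, using $n^*/N \to 1$, the limit $F^{\rm GC}(\beta,\mu,\Omega_N)/N \to f^{\rm GC}(s,\beta)-\mu$ from Theorem~\ref{thm:limit_GC_Jellium}, and that $(s+d)/(2d) < 1$ since $s<d$, we conclude $\limsup_N a_N/N \leq \mu + (f^{\rm GC}(s,\beta)-\mu) = f^{\rm GC}(s,\beta)$, as required.

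\textbf{Main obstacle.} The delicate point is that for arbitrary fixed $\mu$, estimate~\eqref{eq:estim_mu_beta_min_G} controls only the magnitude of $n^*-N$, not its sign, so the condition $n^* \leq N$ is not automatic. The cleanest way out is to choose $\mu$ sufficiently negative. Applying Lemma~\ref{lem:lower_bound_Jellium} yields $F_s(\beta,n,\Omega_N) \geq -c_1 n - Cn\beta^{-1}$ uniformly in $n$, while Lemma~\ref{lem:upper_bound_Jellium} gives $a_N \leq CN$; together these imply that the threshold $\mu_N^+ := \inf_{n>N}(F_s(\beta,n,\Omega_N)-a_N)/(n-N)$ below which $n^*(\mu,N) \leq N$ is bounded below uniformly in $N$, so any fixed $\mu$ more negative than this uniform lower bound forces Case A for all large $N$. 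An alternative route, avoiding this case distinction, uses Lemma~\ref{lem:net_charge2}~\eqref{eq:positive_charge_lower} with $Q = n^*-N>0$ to compare $F_s(\beta,n^*,\Omega_N)$ with a neutral $a_M$ for $M \in [n^*, N+o(N)]$; combining with Lemma~\ref{lem:monotonicity} and the already-proven lower bound $a_M \geq M f^{\rm GC}(s,\beta) + o(M)$ gives a matching lower estimate on $F_s(\beta,n^*,\Omega_N)-a_N$ and closes the loop. The technical heart in either approach is that all correction terms generated by the non-neutral bridge are quantitatively $o(N)$, thanks to the sublinear size of $|n^*-N|$ and the sublinear size of the volume shift $|M-N| \leq CN^{1-\theta}Q^\theta$ appearing in Lemma~\ref{lem:net_charge2}.
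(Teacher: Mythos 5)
Your lower bound and your ``Case A'' calculation (where the grand-canonical optimizer $n^*$ satisfies $n^*\leq N$) are both correct and close in spirit to the paper. The problem is exactly where you flag it: the sign of $n^*-N$. Neither of your two options actually resolves this.

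For option~1, the quantities you use are too crude. Lemma~\ref{lem:lower_bound_Jellium} gives $F_s(\beta,n,\Omega_N)\geq -c_1 n - Cn\beta^{-1}$, while the neutral upper bound reads $a_N\leq -c_2N$ with $c_2\leq c_1$ (they bracket the same quantity at $n=N$). Taking $n=N+1$ you then only get
\[
\mu_N^+ \;\leq\; F_s(\beta,N+1,\Omega_N)-a_N,
\]
and the available bounds allow this difference to be as small as $-(c_1+C\beta^{-1}-c_2)N$, which tends to $-\infty$. No fixed $\mu$ can sit below all the $\mu_N^+$. Knowing that $F_s(\beta,N+1,\Omega_N)-F_s(\beta,N,\Omega_N)=O(1)$ is essentially a consequence of the theorem you are trying to prove, so you cannot use it.

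For option~2, Lemma~\ref{lem:net_charge2}~\eqref{eq:positive_charge_lower} with $Q=n^*-N>0$ compares $F_s(\beta,n^*,\Omega_N)$ with a neutral $F_s(\beta,M,\Omega_M)$, but this $M$ lies \emph{above} $N$. Lemma~\ref{lem:monotonicity} then says $F_s(\beta,M,\Omega_M)\leq a_N$, which is the wrong direction for an upper bound on $a_N$. When you try to close the estimate with the already-known lower bound $a_M\geq Mf^{\rm GC}+o(M)$, the chain of inequalities only reproduces the lower bound $a_N\geq Nf^{\rm GC}+o(N)$; you never obtain $a_N\leq Nf^{\rm GC}+o(N)$.

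The missing idea in the paper's proof is to \emph{shrink} the reference volume before running your argument. One takes a sequence $N_k$ realizing the limsup, sets $N_k'=N_k-N_k^p$ for a carefully calibrated $p<1$ with $p>\tfrac{d+s}{2d}\theta+1-\theta$ (so the shrinkage dominates the sublinear drift $N_k^{1-\theta}Q_k^\theta$ of the volume produced by Lemma~\ref{lem:net_charge2}), and runs the grand-canonical problem at $\mu=0$ on $\Omega_{N_k'}$. Applying~\eqref{eq:negative_charge_lower} or~\eqref{eq:positive_charge_lower} as appropriate then lands on a neutral $F_s(\beta,M_k,\Omega_{M_k})$ with $M_k\leq N_k$, so Lemma~\ref{lem:monotonicity} now gives $F_s(\beta,M_k,\Omega_{M_k})\geq a_{N_k}$, which is the inequality you need. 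Since $N_k'/N_k\to1$, the grand-canonical limit on $\Omega_{N_k'}$ still delivers $f^{\rm GC}(s,\beta)$, and the upper bound follows. This shrink-then-slide trick is exactly what makes the sign of $Q_k$ irrelevant; without it, your proof does not close.
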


We recall that the same result is shown for $s=d-2$ in $d\geq3$ in Ref.~\onlinecite{LieNar-75,SarMer-76}. Together with Theorem~\ref{thm:limit_GC_Jellium}, this provides a new proof of Theorem~\ref{thm:limit_C_Jellium} for smooth convex domains and $s>0$.

\begin{proof}
Denote again $\Omega_N=N^{1/d}\omega$. We have $F_s(\beta,N,\Omega_N)\geq F^{\rm GC}_s(\beta,0,\Omega_N)$ (we use $\mu=0$ throughout) and therefore
$$\liminf_{N\to\ii}\frac{F_s(\beta,N,\Omega_N)}{N} \geq f^{\rm GC}(s,\beta).$$
We only have to prove the reverse inequality with a limsup.
Let $N_k\to\ii$ a sequence realizing the limsup for $F_s$ and $N'_k:=N_k-N_k^p$ with $\frac{d+s}{2d}\theta+1-\theta<p<1$, where we recall that $\theta=\frac{2}{2+d-s}$. Let then $Q_k=O(N_k^{\frac{s+d}{2d}})$ (recall~\eqref{eq:estim_mu_beta_min_G}) be a positive or negative charge so that
$$G(\beta,0,N'_k)=\min_{n}F_s(\beta,n,\Omega_{N'_k})=F_s(\beta,N_k'+Q_k,\Omega_{N'_k}).$$
We apply~\eqref{eq:negative_charge_lower} and~\eqref{eq:positive_charge_lower} to $N=N_k'$ and $Q=Q_k$. This provides some $M_k$ with $N_k'+Q_k\leq M_k\leq N_k'+CN_k^{1-\theta}Q_k^\theta$ such that
$$F_s(\beta,N_k'+Q_k,\Omega_{N'_k})\geq F_s(\beta,M_k,\Omega_{M_k})-C(1+\beta^{-1})N_k^{1-\theta}Q_k^\theta.$$
However, due to our choice of $p$, we have $M_k\leq N_k-N_k^p+CN_k^{1-\theta}Q_k^\theta\leq N_k$ for $k$ large enough. By Lemma~\ref{lem:monotonicity}, we thus have $F_s(\beta,M_k,\Omega_{M_k})\geq F_s(\beta,N_k,\Omega_{N_k})$ and obtain due to Lemma~\ref{lem:GC_min_charge}
$$C\beta^{-1}\log(N_k)+F_s^{\rm GC}(\beta,0,\Omega_{N'_k})\geq F_s(\beta,N_k,\Omega_{N_k})-C(1+\beta^{-1})N_k^{1-\theta}Q_k^\theta.$$
From the grand-canonical thermodynamic limit in Theorem~\ref{thm:limit_GC_Jellium} we deduce that
$$\limsup_{N\to\ii}\frac{F_s(\beta,N,\Omega_N)}{N}\leq f^{\rm GC}(s,\beta).$$
Hence the canonical problem converges to the grand-canonical (free) energy.
\end{proof}

%%%%%%%%%%%%%%%%%%%%%%%%%%%%%%%%%%%%%%%%%%%%%%%%%%%%%%%%%%%%
%%%%%%%%%%%%%%%%%%%%%%%%%%%%%%%%%%%%%%%%%%%%%%%%%%%%%%%%%%%%
\subsection{Proof of Theorem~\ref{thm:infinite_conf_Jellium} on Jellium equilibrium configurations}\label{sec:proof_thm_Jellium}
%%%%%%%%%%%%%%%%%%%%%%%%%%%%%%%%%%%%%%%%%%%%%%%%%%%%%%%%%%%%
%%%%%%%%%%%%%%%%%%%%%%%%%%%%%%%%%%%%%%%%%%%%%%%%%%%%%%%%%%%%
In this section we provide the proof of Theorem~\ref{thm:infinite_conf_Jellium}. As usual, we can assume after scaling that $\rho_b=1$. We split our proof into several steps.

\bigskip

\paragraph*{Step 1: Finding an infinite cluster.}
We first need to find a point $\tau\in\Omega$ around which there will be infinitely many $x_{j,\ell}$ in the limit $\ell\to\ii$. Unfortunately, we have not proved that there cannot be large holes in our system (as we did in the short range case in Lemma~\ref{lem:local_bound_GC_T0}), which would imply that any $ \tau$ will do. In any case we expect much more. The points must be ``equidistributed'', that is, the number of points in sufficiently large balls $B_R(\tau)$ centered at any point $\tau\in\Omega$ should always be of order $\rho_b|B_R|$. In the Coulomb case equidistribution was proved in the canonical case in Ref.~\onlinecite{RotSer-15,PetRot-18,ArmSer-21}. Here we give a short proof of the weaker statement that \emph{almost all} $\tau$ have a local density close to $\rho_b$, in some average sense. Our argument is valid for all $0<s<d$ and applies similarly at positive temperature.

We first fix a small $\delta>0$ so that $|\{z\in\omega,\ \rd(z,\partial\omega)\geq\delta\}|\geq1/2$. We will not look at what is happening close to the boundary of $\Omega=\ell\omega$ at distance $\delta\ell$ and thus define
$$\Omega_-:=\left\{\tau\in\Omega\ :\ \rd(\tau,\partial\Omega)\geq \delta \ell\right\}.$$
Let then $\eps>0$ and consider the set of the bad $\tau$'s in $\Omega_-$ which have an average discrepancy per unit volume larger than $\eps$, for balls with a radius of order $2^k$:
$$A_{\ell,\eps,k}:=\left\{\tau\in\Omega_-\ :\ \left|\frac1{2^k}\int_{2^k}^{2^{k+1}}q_{r,\tau}(X_\ell)\,\rd r\right|\geq\eps\right\}.$$
We recall that
$$q_{r,\tau}(X)=\frac{\# X\cap B_r(\tau)-\rho_b|B_r(\tau)\cap \Omega|}{|B_r|}.$$
From Jensen's inequality and~\eqref{eq:estim_local_charge}, we have
\begin{equation*}
|A_{\ell,\eps,k}|\leq \eps^{-2}\int_{\R^d}\left|\frac1{2^k}\int_{2^k}^{2^{k+1}}q_{r,\tau}(X_\ell)\,\dr\right|^2\rd\tau\leq \frac1{\eps^2 2^k}\int_{2^k}^{2^{k+1}}\int_{\R^d}q_{r,\tau}(X_\ell)^2\rd\tau\,\dr\leq \frac{C|\Omega_-|}{\eps^22^{k(d-s)}}.
\end{equation*}
We are using here that $N_\ell\sim|\Omega|$ by Lemma~\ref{lem:neutral_Jellium} and that $|\Omega|\leq 2|\Omega_-|$.
After summing over $k$ we find
$$\left|\bigcup_{2^k\geq M}A_{\ell,\eps,k}\right|\leq \frac{C|\Omega_-|}{\eps^2M^{d-s}}.$$
Picking for instance $M=C\eps^{-\frac3{d-s}}$ we obtain
\begin{equation*}
\left|\left\{\tau\in\Omega_-\ :\ \left|\frac1{2^k}\int_{2^k}^{2^{k+1}}q_{r,\tau}(X_\ell)\,\rd r\right|\leq\eps,\ \forall 2^k\geq C\eps^{-\frac3{d-s}}\right\}\right|\geq (1-\eps)|\Omega_-|.
\end{equation*}
In other words, most of the points at distance $\delta\ell$ from the boundary have a small average discrepancy. Let then $\tau_\ell\in\Omega_-$ be any such point. It satisfies
$$1-\eps\leq \frac1{2^k}\int_{2^k}^{2^{k+1}}\frac{\#X_\ell\cap B_r(\tau_\ell)}{|B_r|}\dr\leq 1+ \eps,\qquad \text{for all}\quad C\eps^{-\frac3{d-s}}\leq 2^k\leq \frac{\delta \ell}2$$
where we have added the upper bound on $2^k$ to make sure that $B_{2^{k+1}}(\tau)\subset \Omega$. Recall that $\rho_b=1$. For the following it is sufficient to know that there are infinitely many points and thus we remark that the previous bound implies
$$\frac{1-\eps}{2^d}|B_{R}|\leq \#X_\ell\cap B_{R}(\tau_\ell)\leq 2^d(1+ \eps)|B_{R}|,\qquad \text{for all}\quad C\eps^{-\frac3{d-s}}\leq R\leq \frac{\delta \ell}2.$$
After translating $\omega$ we can assume $\tau_\ell\to0$. After passing to the limit $\ell\to\ii$, we obtain, as claimed, that there are infinitely many points located at a finite distance of the origin. Choosing the labels so that $j\mapsto |x_{j,\ell}|$ is non-decreasing, we find $x_{j,\ell}\to x_j$ for all $j$ after extraction of a further subsequence. From the previous construction, the limiting configuration satisfies
\begin{equation}
\frac{1-\eps}{2^d}|B_{R}|\leq \#X\cap B_{R}\leq 2^d(1+ \eps)|B_{R}|,\qquad \text{for all $R\geq C\eps^{-\frac3{d-s}}$.}
\label{eq:estim_nb_points}
\end{equation}
In fact, the lower and upper densities defined in Remark~\ref{rmk:energy_per_unit_vol} satisfy
$$\frac{\underline\rho(X)}{1+\eps}\leq 1\leq\frac{\overline\rho(X)}{1-\eps}.$$
This step is valid for all $0<s<d$ and a similar argument works at positive temperature.

\bigskip

\paragraph*{Step 2: Bounds on the potential.}
Due to the optimality of $N_\ell$, we have like in Lemma~\ref{lem:local_bound_GC_T0}
\begin{equation}
\Phi^{(j_0)}_\ell:=\sum_{j\neq j_0}\frac1{|x_{j_0,\ell}-x_{j,\ell}|^s}-\int_{\Omega}\frac{\dy}{|x_{j_0,\ell}-y|^s}\leq \mu,\qquad \forall 1\leq j_0\leq N_\ell,
\label{eq:Phi_j_0}
\end{equation}
and
\begin{equation}
 \Phi_\ell(x):=\sum_{j}\frac1{|x-x_{j,\ell}|^s}-\int_{\Omega}\frac{\dy}{|x-y|^s}\geq \mu,\qquad \forall x\in\overline \Omega\setminus X_\ell.
 \label{eq:Phi_x}
\end{equation}
In addition, moving one $x_{j_0,\ell}$ to a point $x$ should increase the energy. If $x_{j_0,\ell}$ is in the interior of $\Omega$, we thus obtain
$$\nabla_x\left(\sum_{j\neq j_0}\frac1{|x-x_{j,\ell}|^s}-\int_{\Omega}\frac{\dy}{|x-y|^s}\right)_{|x=x_{j_0,\ell}}=0.$$
Depending on the value of $s$, we can express this condition as
\begin{equation}
\sum_{j\neq j_0}\frac{x_{j_0,\ell}-x_{j,\ell}}{|x_{j_0,\ell}-x_{j,\ell}|^{s+2}}=\begin{cases}
\dps \int_{\Omega}\frac{x_{j_0,\ell}-y}{|x_{j_0,\ell}-y|^{s+2}}\dy &\text{for $0<s<d-1$,}\\[0.4cm]
\dps -\frac{\ell^{d-1-s}}{s}\int_{\partial\omega}\frac{n_\omega(y)}{|\ell^{-1}x_{j_0,\ell}-y|^{s}}\dy &\text{for $d-1\leq s<d$,}
\end{cases}
\label{eq:nabla_Phi}
\end{equation}
where $n_\omega$ is the outward unit at the boundary. At fixed $j_0$, the right side converges to 0 when $d-1<s<d$ and to
$-s^{-1}\int_{\partial\omega}|y|^{-s}\,n_\omega(y)\dy$ when $s=d-1$.
When $0<s<d-1$, the right side can diverge.

Next we prove a lower bound on $\Phi^{(j_0)}_\ell$. Integrating~\eqref{eq:Phi_x} against the function $\chi_\eta=|B_\eta|^{-1}\1_{B_\eta}$ centered at $x_{j_0,\ell}$, we obtain for $\ell$ large enough so that $B_\eta(x_{j_0,\ell})\subset\Omega$
\begin{multline}
\Phi^{(j_0)}_{\ell}\geq \mu-\frac1{|B_\eta|}\int_{B_\eta}\frac{\dy}{|y|^s}-\sum_{j\neq j_0}\left(\frac1{|x|^s}-\chi_\eta\ast\frac1{|x|^s}\right)(x_{j_0,\ell}-x_{j,\ell})\\
+\int_{\Omega}\left(\frac1{|x|^s}-\chi_\eta\ast\frac1{|x|^s}\right)(x_{j_0,\ell}-y)\,\dy.
\label{eq:lower_bd_Phi_j_0}
\end{multline}
The function $|x|^{-s}-\chi_\eta\ast|x|^{-s}$ decays like $|x|^{-s-2}$ at infinity and has compact support for $s=d-2$, by Newton's theorem. It is thus integrable over $\R^d$ and the last term of ~\eqref{eq:lower_bd_Phi_j_0} is finite. To see that the series on the right of~\eqref{eq:lower_bd_Phi_j_0} is also bounded, we treat separately the points $x_{j,\ell}$ located at a finite distance to the boundary. There are $o(\ell^d)$ such points by Lemma~\ref{lem:nb_points_boundary} and they are all located at a distance $\sim\ell$ of the origin. Thus the corresponding sum is a $o(\ell^{d-s-2})=o(1)$ for $d-2<s<d$. It simply vanishes for $s=d-2$. For the other points well inside $\Omega$, we use the separation given by Lemma~\ref{lem:separation} and conclude from Lemma~\ref{lem:simple_estim_sum} that $\Phi^{(j_0)}_{\ell}$ is bounded from below. Due to~\eqref{eq:Phi_j_0}, we have thus shown that the interaction of any point $x_{j_0}$ to the rest of the system is bounded for $d-2\leq s<d$. After extracting a further subsequence, we can assume that $\Phi^{(j_0)}_{\ell}$ converges to a constant $\Phi^{(j_0)}$ for every $j_0$.

\bigskip

\paragraph*{Step 3: Passing to the limit for $\Phi_\ell$ when $s>d-2$.}
We first assume $d-2<s<d-1$. We write, for any fixed $j_0$ and using the relation~\eqref{eq:nabla_Phi}
\begin{align*}
\Phi_\ell(x)=&\Phi^{(j_0)}_{\ell}+\frac1{|x-x_{j_0,\ell}|^s}+\sum_{j\neq j_0}\!\!\left(\frac1{|x-x_{j,\ell}|^s}-\frac1{|x_{j_0,\ell}-x_{j,\ell}|^s}+s\frac{(x-x_{j_0,\ell})\cdot (x_{j_0,\ell}-x_{j,\ell})}{|x_{j_0,\ell}-x_{j,\ell}|^{s+2}}\right)\\
&-\int_{\Omega}\left(\frac1{|x-y|^s}-\frac1{|x_{j_0,\ell}-y|^s}+s\frac{(x-x_{j_0,\ell})\cdot (x_{j_0,\ell}-y)}{|x_{j_0,\ell}-y|^{s+2}}\right)\!\dy.
\end{align*}
By the same argument as before (using the finite distance between the points and the number of points close to the boundary), the series is convergent as soon as $x$ stays away from the $x_{j,\ell}$. The last integral converges to
$$\int_{\R^d}\left(\frac1{|x-y|^s}-\frac1{|x_{j_0}-y|^s}+s\frac{(x-x_{j_0})\cdot (x_{j_0}-y)}{|x_{j_0}-y|^{s+2}}\right)\dy=0.$$
That this integral vanishes can be verified by going to Fourier coordinates. The function in the parenthesis is integrable for $d-2<s<d-1$ and its Fourier transform is proportional to
$$\frac{e^{ik\cdot x_{j_0}}\big(e^{ik\cdot (x-x_{j_0})}-1-ik\cdot (x-x_{j_0}) \big)}{|k|^{d-s}}=O(|k|^{2+s-d})\underset{k\to0}\longrightarrow0.$$
We have thus proved that $\Phi_\ell(x)\to \Phi(x)$, the function defined in~\eqref{eq:def_Phi}. The limit holds in $L^1_{\rm loc}(\R^d)$ as well as uniformly locally like in~\eqref{eq:limit_Phi_local}.

When $d-1<s<d$, the argument is exactly the same except that we do not need to subtract the gradient term in~\eqref{eq:nabla_Phi} and instead obtain~\eqref{def:Phi_simpler}. We need to use that
$$\int_{\R^d}\left(\frac1{|x-y|^s}-\frac1{|x_{j_0}-y|^s}\right)\dy=0\qquad\text{for $d-1<s<d,$}$$
which can also be proved by passing to Fourier coordinates. After passing to the limit in~\eqref{eq:nabla_Phi} we also obtain
$$\sum_{j\neq j_0}\frac{x_{j_0}-x_{j}}{|x_{j_0}-x_{j}|^{s+2}}=0\qquad\text{for $d-1<s<d$}$$
so that the formula~\eqref{eq:Phi_x} remains valid.

When $s=d-1$, we subtract the gradient term as we did for $s<d-1$ and use this time that
\begin{multline*}
\lim_{\ell\to\ii}\int_{\Omega}\left(\frac1{|x-y|^{d-1}}-\frac1{|x_{j_0,\ell}-y|^{d-1}}\right)\!\dy=\lim_{\ell\to\ii}\ell\int_{\omega}\left(\frac1{|\frac{x}\ell-y|^{d-1}}-\frac1{|\frac{x_{j_0,\ell}}\ell-y|^{d-1}}\right)\!\dy\\
=(x-x_{j_0})\cdot\nabla_x \left.\int_{\omega}\frac{\dy}{|x-y|^{d-1}}\right|_{x=0}=(x-x_{j_0})\cdot\int_{\partial\omega}n_\omega(y)\frac{\dy}{|y|^{d-1}}.
\end{multline*}

Once we know that $\Phi_\ell$ converges, we obtain the DLR equilibrium equation~\eqref{eq:DLR_T0_Jellium} by passing to the limit in the corresponding condition at finite $\ell$.

\bigskip

\paragraph*{Step 4: The Coulomb case.}
Next we deal with the special case $s=d-2$ in dimensions $d\geq3$. We know that $\Phi_\ell\geq\mu$ and arguing as for~\eqref{eq:lower_bd_Phi_j_0}, we also know that $\int_{B_R(x_{j_0,\ell})}\Phi_\ell$ is bounded for any fixed $R$. Thus $\Phi_\ell$ is bounded in $L^1_{\rm loc}$. In addition, we have
\begin{equation}
 -\Delta\Phi_\ell=(d-2)|\bS^{d-1}|\left(\sum_{j}\delta_{x_{j,\ell}}-\rho_b\1_{\Omega}\right).
 \label{eq:Poisson_Phi_Jellium_ell}
\end{equation}
This proves that $\Phi_\ell$ is subharmonic on $\R^d\setminus X_\ell$. By the mean-value inequality (Ref.~\onlinecite[Chap.~9]{LieLos-01}), we conclude that $\Phi_\ell$ is locally bounded in $\R^d\setminus X_\ell$. In fact, the function $f_\ell=\Phi_\ell-\sum_{x_{j,\ell}\in \overline{B_R}}|x-x_{j,\ell}|^{2-d}$ is $C^\ii$ and satisfies $\Delta f_\ell=(d-2)|\bS^{d-1}|\rho_b\1_{B_R}$ on any fixed ball $B_R$. This is sufficient to ensure that $f_\ell$ converges uniformly after extraction of a subsequence. After passing to the limit in~\eqref{eq:Poisson_Phi_Jellium_ell} we find that $\Phi$ solves~\eqref{eq:Poisson_Phi_Jellium}, in the sense of distributions.

We still have to show that the limiting function $\Phi$ is uniquely determined from the infinite configuration $X=\{x_j,\ j\geq1\}$, up to a constant. The precise statement is the following

\begin{lemma}[Uniqueness of $\Phi$]\label{lem:Liouville}
Let $\rho_b>0$ and $X=\{x_j\}_{j\in\N}$ be an infinite configuration of points in $\R^d$, with
$$|x_j-x_k|\geq\delta>0,\quad\text{for all $j\neq k$;}\qquad \overline{\rho}(X)=\limsup_{R\to\ii}\frac{\#X\cap B_{R}}{R^d}>0.$$
We consider the solutions $\Phi$ to Poisson's equation
$$-\Delta \Phi=|\bS^{d-1}|(d-2)\left(\sum_j \delta_{x_j}-\rho_b\right)\qquad\text{in $\cD'(\R^d)$,}$$
satisfying
\begin{equation}
 \left|\Phi(x)- \sum_j\frac{\1_{B_{\delta/2}(x_j)}}{|x-x_j|^{d-2}}\right|\leq \begin{cases}
C(1+|x|^2)&\text{on $\R^d$,}\\
C&\text{on $\bigcup_{j}B_{\delta/2}(x_j)$.}
\end{cases}
 \label{eq:assump_growth_Liouville}
\end{equation}
When they exist, such solutions are unique up to a constant.
\end{lemma}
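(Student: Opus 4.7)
The plan is to reduce uniqueness to a Liouville-type rigidity statement: the difference $\Psi:=\Phi_1-\Phi_2$ of two admissible solutions must be a constant. First, $\Psi$ is distributionally harmonic on $\R^d\setminus X$ because $\Phi_1$ and $\Phi_2$ satisfy the same Poisson equation. Near each $x_j$, the second line of~\eqref{eq:assump_growth_Liouville} applied to both $\Phi_1$ and $\Phi_2$ makes the singular term $|x-x_j|^{-(d-2)}$ cancel in the difference and gives $|\Psi|\leq 2C$ on $B_{\delta/2}(x_j)$. The classical removable singularity theorem for bounded harmonic functions then extends $\Psi$ harmonically across every $x_j$, producing a harmonic function on all of $\R^d$.

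Next, outside $\bigcup_j B_{\delta/2}(x_j)$ the singular sum in~\eqref{eq:assump_growth_Liouville} vanishes, so the first global bound yields $|\Psi(x)|\leq 2C(1+|x|^2)$ on $\R^d$. The polynomial-growth version of Liouville's theorem (harmonic functions with growth $O(|x|^k)$ are polynomials of degree at most $k$) then forces $\Psi$ to be a polynomial of degree at most $2$. Writing $\Psi(x)=a+b\cdot x+x^TMx$ with $M=M^T$, harmonicity is equivalent to $\tr M=0$, and the problem reduces to showing $b=0$ and $M=0$.

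This last step is the main obstacle and is where the density assumption $\overline\rho(X)>0$ enters. The separation $|x_j-x_k|\geq\delta$ makes the balls $B_{\delta/2}(x_j)$ pairwise disjoint, and from $\overline\rho(X)>0$ one extracts a sequence $R_n\to\ii$ and a constant $c>0$ with $|E\cap B_{R_n+\delta/2}|\geq c\,|B_{R_n+\delta/2}|$, where $E:=\bigcup_j B_{\delta/2}(x_j)$. Since $|\Psi|\leq 2C$ on $E$, the sublevel set $A:=\{|\Psi|\leq 2C\}$ inherits this positive lower density in $B_R$ along $R_n$. On the other hand, the scaling $x=Rz$, $z\in B_1$, gives
\begin{equation*}
\frac{|A\cap B_R|}{|B_R|}=\frac{1}{|B_1|}\Big|\big\{z\in B_1:\,|z^TMz+R^{-1}b\cdot z+R^{-2}a|\leq 2CR^{-2}\big\}\Big|.
\end{equation*}
If $M\neq 0$, the right-hand side tends as $R\to\ii$ to $|B_1|^{-1}\big|\{z\in B_1:z^TMz=0\}\big|=0$, since $z\mapsto z^TMz$ is a non-trivial quadratic polynomial (the trace-free condition forces it to take both signs on $\bS^{d-1}$, as its spherical integral vanishes) and real-analytic, so its zero set has Lebesgue measure zero. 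If $M=0$ but $b\neq 0$, the set is a slab of fixed thickness whose density in $B_R$ is $O(R^{-1})\to 0$. Both alternatives contradict the lower bound $c>0$, so $M=0$ and $b=0$, and $\Psi\equiv a$ is a constant.
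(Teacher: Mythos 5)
Your proposal is correct, and it takes a genuinely different route from the paper in the decisive final step.

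Both proofs open identically: the difference $\Psi=\Phi_1-\Phi_2$ is distributionally harmonic on $\R^d$ (the $\delta_{x_j}$'s and the background cancel), hence smooth and classically harmonic by Weyl's lemma, and the global growth bound in \eqref{eq:assump_growth_Liouville} makes $\Psi$ a harmonic polynomial $a+b\cdot x+x^TMx$ with $\tr M=0$. (Your detour through the removable-singularity theorem is sound but unnecessary: the Poisson equations are posed in $\cD'(\R^d)$, so harmonicity is global from the start. Likewise the observation that $z\mapsto z^TMz$ changes sign because it is trace-free is correct but superfluous — any nonzero polynomial already has a zero set of Lebesgue measure zero.) Where you diverge is in forcing $M=0$ and $b=0$. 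The paper argues algebraically: expanding $\Psi(x_j+y)$ and using the uniform bound $|\Psi|\le 2C$ on the disjoint balls $B_{\delta/2}(x_j)$ gives $|Mx_j|\le C'$ for all $j$, so every $x_j$ lies within a fixed distance of $\ker M$; if $\dim\ker M<d$ a $\delta$-separated family confined to such a tube has only $O(R^{\dim\ker M})=o(R^d)$ points in $B_R$, contradicting $\overline\rho(X)>0$, and a scalar version of the same count kills $b$. You instead run a measure-theoretic argument: the balls $B_{\delta/2}(x_j)$ give the sublevel set $A=\{|\Psi|\le 2C\}$ positive density in $B_{R_n}$ along a subsequence, whereas the scaling $x=Rz$ shows that if $M\neq0$ (or $M=0$, $b\neq 0$) this density must tend to $|\{z\in B_1: z^TMz=0\}|/|B_1|=0$ (resp.\ $O(R^{-1})$). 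Both approaches prove the lemma; the paper's yields the slightly sharper structural statement that the point set must concentrate in a tube around $\ker M$, which is a useful geometric picture, while yours is arguably more robust (it only uses that $A$ has positive upper density and that a non-constant polynomial has vanishing sublevel-set density at scale $R$) and would adapt with no change to higher-degree polynomial growth.
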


\begin{proof}[Proof of Lemma~\ref{lem:Liouville}]
Consider two solutions $\Phi_1$ and $\Phi_2$. Then $F:=\Phi_1-\Phi_2$ is harmonic and satisfies $|F(x)|\leq 2C(1+|x|^2)$ on $\R^d$ due to our assumption~\eqref{eq:assump_growth_Liouville}. By Liouville's theorem, $F$ is a harmonic polynomial of degree 2 or less, that is, $F(x)=x^TMx+v\cdot x+\alpha$, where $v\in\R^d$ and $M$ is a $d\times d$ symmetric matrix with $M_{ii}=0$. The boundedness close to each $x_j$ implies that $F(x_j)$ is bounded. Expanding $F(x_j+y)$, we then find $|y^TMx_j|\leq C$ for all $y\in B_{\delta/2}$. This shows that $Mx_j$ is bounded and thus the distance of $x_j$ to the space $\ker(M)$ is bounded. If $\ker(M)$ had dimension $k<d$, this would imply that there are $O(R^k)=o(R^d)$ points in any large ball of radius $R$, which contradicts our assumption on the $x_j$. Thus $M\equiv0$. The boundedness of $F(x_j)=v\cdot x_j+\alpha$ implies by a similar argument that $v=0$. We have thus proved that $\Phi_1=\Phi_2+\text{cnst}$.
\end{proof}

Our function $\Phi(x)$ was obtained in the limit from $\Phi_\ell(x)$ and we have already explained that it is bounded below by $\mu$ and bounded uniformly around each $x_j$ when we subtract $|x-x_j|^{2-d}$. It only remains to prove the polynomial bound in~\eqref{eq:assump_growth_Liouville}. To this end, we discard the background and regularize the charges to obtain
$$\Phi_\ell(x)\leq \sum_j \frac1{|x|^{d-2}}\ast\chi(x-x_{j,\ell})+\sum_j\left(\frac{1}{|x|^{d-2}}-\frac1{|x|^{d-2}}\ast\chi\right)(x-x_{j_\ell})$$
where $\chi\in C^\ii_c(B_1,\R_+)$ is radial with $\int\chi=1$. In the second sum the terms are all compactly supported by Newton's theorem. The difference with the sum on the left of~\eqref{eq:assump_growth_Liouville} is bounded. Thus we have proved that
$$ \left|\Phi_\ell(x)- \sum_j\frac{\1_{B_{\delta/2}(x_{j,\ell})}}{|x-x_j|^{d-2}}\right|\leq\sum_j \frac1{|x|^{d-2}}\ast\chi(x-x_{j,\ell})+C.$$
Next we pick a large radius $r>0$ such that $\omega\subset B_{r-1}$ and use that $\chi$ is bounded to obtain for $\ell$ large enough
$$\sum_j \frac1{|x|^{d-2}}\ast\chi(x-x_{j,\ell})\leq C\1_{B_{\ell r}}\ast\frac1{|x|^{d-2}}=C|x|^2\quad\text{on $\Omega$}.$$
This concludes our proof that $\Phi$ is uniquely determined, hence that of Theorem~\ref{thm:infinite_conf_Jellium}.
\qed

\begin{acknowledgments}
I had the opportunity to interact with many people when writing this paper. I would like to thank Laurent B\'etermin, Djalil Chafa\"i, David Dereudre, Peter Forrester, Subhro Ghosh, Michael Goldman, Jonas Lampart, Thomas Lebl\'e, Joel Lebowitz, Elliott Lieb, Satya Majumdar, Simona Rota Nodari, Nicolas Rougerie, Gregory Schehr, Robert Seiringer, Sylvia Serfaty and Daniel Ueltschi for useful discussions.

This project has received funding from the European Research Council (ERC) under the European Union's Horizon 2020 research and innovation programme (grant agreement MDFT No 725528).
\end{acknowledgments}

%%%%%%%%%%%%%%%%%%%%%%%%%%%%%%%%%%%%%%%%%%
%%%%%%%%%%%%%%%%%%%%%%%%%%%%%%%%%%%%%%%%%%
% Create the reference section using BibTeX:
\section*{References}
% \bibliography{biblio}

\begin{thebibliography}{518}%
\makeatletter
\providecommand \@ifxundefined [1]{%
 \@ifx{#1\undefined}
}%
\providecommand \@ifnum [1]{%
 \ifnum #1\expandafter \@firstoftwo
 \else \expandafter \@secondoftwo
 \fi
}%
\providecommand \@ifx [1]{%
 \ifx #1\expandafter \@firstoftwo
 \else \expandafter \@secondoftwo
 \fi
}%
\providecommand \natexlab [1]{#1}%
\providecommand \enquote  [1]{``#1''}%
\providecommand \bibnamefont  [1]{#1}%
\providecommand \bibfnamefont [1]{#1}%
\providecommand \citenamefont [1]{#1}%
\providecommand \href@noop [0]{\@secondoftwo}%
\providecommand \href [0]{\begingroup \@sanitize@url \@href}%
\providecommand \@href[1]{\@@startlink{#1}\@@href}%
\providecommand \@@href[1]{\endgroup#1\@@endlink}%
\providecommand \@sanitize@url [0]{\catcode `\\12\catcode `\$12\catcode
  `\&12\catcode `\#12\catcode `\^12\catcode `\_12\catcode `\%12\relax}%
\providecommand \@@startlink[1]{}%
\providecommand \@@endlink[0]{}%
\providecommand \url  [0]{\begingroup\@sanitize@url \@url }%
\providecommand \@url [1]{\endgroup\@href {#1}{\urlprefix }}%
\providecommand \urlprefix  [0]{URL }%
\providecommand \Eprint [0]{\href }%
\providecommand \doibase [0]{http://dx.doi.org/}%
\providecommand \selectlanguage [0]{\@gobble}%
\providecommand \bibinfo  [0]{\@secondoftwo}%
\providecommand \bibfield  [0]{\@secondoftwo}%
\providecommand \translation [1]{[#1]}%
\providecommand \BibitemOpen [0]{}%
\providecommand \bibitemStop [0]{}%
\providecommand \bibitemNoStop [0]{.\EOS\space}%
\providecommand \EOS [0]{\spacefactor3000\relax}%
\providecommand \BibitemShut  [1]{\csname bibitem#1\endcsname}%
\let\auto@bib@innerbib\@empty
%</preamble>
\bibitem [{\citenamefont {Abo-Shaeer}\ \emph {et~al.}(2001)\citenamefont
  {Abo-Shaeer}, \citenamefont {Raman}, \citenamefont {Vogels},\ and\
  \citenamefont {Ketterle}}]{Ketterle-01}%
  \BibitemOpen
  \bibfield  {author} {\bibinfo {author} {\bibnamefont {Abo-Shaeer},
  \bibfnamefont {J.~R.}}, \bibinfo {author} {\bibnamefont {Raman},
  \bibfnamefont {C.}}, \bibinfo {author} {\bibnamefont {Vogels}, \bibfnamefont
  {J.~M.}}, \ and\ \bibinfo {author} {\bibnamefont {Ketterle}, \bibfnamefont
  {W.}},\ }\bibfield  {title} {\enquote {\bibinfo {title} {Observation of
  vortex lattices in {B}ose-{E}instein condensates},}\ }\href {\doibase
  10.1126/science.1060182} {\bibfield  {journal} {\bibinfo  {journal}
  {Science}\ }\textbf {\bibinfo {volume} {292}},\ \bibinfo {pages} {476--479}
  (\bibinfo {year} {2001})}\BibitemShut {NoStop}%
\bibitem [{\citenamefont {Abrikosov}(1957)}]{Abrikosov-57}%
  \BibitemOpen
  \bibfield  {author} {\bibinfo {author} {\bibnamefont {Abrikosov},
  \bibfnamefont {A.}},\ }\bibfield  {title} {\enquote {\bibinfo {title} {The
  magnetic properties of superconducting alloys},}\ }\href {\doibase
  10.1016/0022-3697(57)90083-5} {\bibfield  {journal} {\bibinfo  {journal} {J.
  Phys. Chem. Solids}\ }\textbf {\bibinfo {volume} {2}},\ \bibinfo {pages}
  {199--208} (\bibinfo {year} {1957})}\BibitemShut {NoStop}%
\bibitem [{\citenamefont {Abul-Magd}(2006)}]{Abul-Magd-06}%
  \BibitemOpen
  \bibfield  {author} {\bibinfo {author} {\bibnamefont {Abul-Magd},
  \bibfnamefont {A.}},\ }\bibfield  {title} {\enquote {\bibinfo {title}
  {Modelling gap-size distribution of parked cars using random-matrix
  theory},}\ }\href {\doibase https://doi.org/10.1016/j.physa.2005.10.059}
  {\bibfield  {journal} {\bibinfo  {journal} {Physica A}\ }\textbf {\bibinfo
  {volume} {368}},\ \bibinfo {pages} {536 -- 540} (\bibinfo {year}
  {2006})}\BibitemShut {NoStop}%
\bibitem [{\citenamefont {Adhikari}, \citenamefont {Ghosh},\ and\ \citenamefont
  {Lebowitz}(2021)}]{AdhGhoLeb-21}%
  \BibitemOpen
  \bibfield  {author} {\bibinfo {author} {\bibnamefont {Adhikari},
  \bibfnamefont {K.}}, \bibinfo {author} {\bibnamefont {Ghosh}, \bibfnamefont
  {S.}}, \ and\ \bibinfo {author} {\bibnamefont {Lebowitz}, \bibfnamefont
  {J.~L.}},\ }\bibfield  {title} {\enquote {\bibinfo {title} {Fluctuation and
  entropy in spectrally constrained random fields},}\ }\href {\doibase
  10.1007/s00220-021-04150-7} {\bibfield  {journal} {\bibinfo  {journal} {Comm.
  Math. Phys.}\ }\textbf {\bibinfo {volume} {386}},\ \bibinfo {pages}
  {749--780} (\bibinfo {year} {2021})}\BibitemShut {NoStop}%
\bibitem [{\citenamefont {Aftalion}(2007)}]{Aftalion-07}%
  \BibitemOpen
  \bibfield  {author} {\bibinfo {author} {\bibnamefont {Aftalion},
  \bibfnamefont {A.}},\ }\bibfield  {title} {\enquote {\bibinfo {title} {Vortex
  patterns in {B}ose {E}instein condensates},}\ }in\ \href {\doibase
  10.1090/conm/446/08623} {\emph {\bibinfo {booktitle} {Perspectives in
  nonlinear partial differential equations}}},\ \bibinfo {series} {Contemp.
  Math.}, Vol.\ \bibinfo {volume} {446}\ (\bibinfo  {publisher} {Amer. Math.
  Soc.},\ \bibinfo {address} {Providence, RI},\ \bibinfo {year} {2007})\ pp.\
  \bibinfo {pages} {1--18}\BibitemShut {NoStop}%
\bibitem [{\citenamefont {Aftalion}\ and\ \citenamefont
  {Blanc}(2006)}]{AftBla-06}%
  \BibitemOpen
  \bibfield  {author} {\bibinfo {author} {\bibnamefont {Aftalion},
  \bibfnamefont {A.}}\ and\ \bibinfo {author} {\bibnamefont {Blanc},
  \bibfnamefont {X.}},\ }\bibfield  {title} {\enquote {\bibinfo {title} {Vortex
  lattices in rotating {B}ose--{E}instein condensates},}\ }\href {\doibase
  10.1137/050632889} {\bibfield  {journal} {\bibinfo  {journal} {SIAM J. Math.
  Anal.}\ }\textbf {\bibinfo {volume} {38}},\ \bibinfo {pages} {874--893}
  (\bibinfo {year} {2006})}\BibitemShut {NoStop}%
\bibitem [{\citenamefont {Agarwal}\ \emph {et~al.}(2019)\citenamefont
  {Agarwal}, \citenamefont {Dhar}, \citenamefont {Kulkarni}, \citenamefont
  {Kundu}, \citenamefont {Majumdar}, \citenamefont {Mukamel},\ and\
  \citenamefont {Schehr}}]{AgaDhaKulKunMajMukSch-19}%
  \BibitemOpen
  \bibfield  {author} {\bibinfo {author} {\bibnamefont {Agarwal}, \bibfnamefont
  {S.}}, \bibinfo {author} {\bibnamefont {Dhar}, \bibfnamefont {A.}}, \bibinfo
  {author} {\bibnamefont {Kulkarni}, \bibfnamefont {M.}}, \bibinfo {author}
  {\bibnamefont {Kundu}, \bibfnamefont {A.}}, \bibinfo {author} {\bibnamefont
  {Majumdar}, \bibfnamefont {S.~N.}}, \bibinfo {author} {\bibnamefont
  {Mukamel}, \bibfnamefont {D.}}, \ and\ \bibinfo {author} {\bibnamefont
  {Schehr}, \bibfnamefont {G.}},\ }\bibfield  {title} {\enquote {\bibinfo
  {title} {Harmonically confined particles with long-range repulsive
  interactions},}\ }\href {\doibase 10.1103/PhysRevLett.123.100603} {\bibfield
  {journal} {\bibinfo  {journal} {Phys. Rev. Lett.}\ }\textbf {\bibinfo
  {volume} {123}},\ \bibinfo {pages} {100603} (\bibinfo {year}
  {2019})}\BibitemShut {NoStop}%
\bibitem [{\citenamefont {Agarwal}, \citenamefont {Kulkarni},\ and\
  \citenamefont {Dhar}(2019)}]{AgaKulDha-19}%
  \BibitemOpen
  \bibfield  {author} {\bibinfo {author} {\bibnamefont {Agarwal}, \bibfnamefont
  {S.}}, \bibinfo {author} {\bibnamefont {Kulkarni}, \bibfnamefont {M.}}, \
  and\ \bibinfo {author} {\bibnamefont {Dhar}, \bibfnamefont {A.}},\ }\bibfield
   {title} {\enquote {\bibinfo {title} {Some connections between the classical
  {C}alogero-{M}oser model and the log-gas},}\ }\href {\doibase
  10.1007/s10955-019-02349-6} {\bibfield  {journal} {\bibinfo  {journal} {J.
  Stat. Phys.}\ }\textbf {\bibinfo {volume} {176}},\ \bibinfo {pages}
  {1463--1479} (\bibinfo {year} {2019})}\BibitemShut {NoStop}%
\bibitem [{\citenamefont {Agboola}\ \emph {et~al.}(2015)\citenamefont
  {Agboola}, \citenamefont {Knol}, \citenamefont {Gill},\ and\ \citenamefont
  {Loos}}]{AgbKnoGilLoo-15}%
  \BibitemOpen
  \bibfield  {author} {\bibinfo {author} {\bibnamefont {Agboola}, \bibfnamefont
  {D.}}, \bibinfo {author} {\bibnamefont {Knol}, \bibfnamefont {A.~L.}},
  \bibinfo {author} {\bibnamefont {Gill}, \bibfnamefont {P.~M.~W.}}, \ and\
  \bibinfo {author} {\bibnamefont {Loos}, \bibfnamefont {P.-F.}},\ }\bibfield
  {title} {\enquote {\bibinfo {title} {{Uniform electron gases. III.
  Low-density gases on three-dimensional spheres}},}\ }\href {\doibase
  10.1063/1.4929353} {\bibfield  {journal} {\bibinfo  {journal} {J. Chem.
  Phys.}\ }\textbf {\bibinfo {volume} {143}},\ \bibinfo {pages} {084114}
  (\bibinfo {year} {2015})}\BibitemShut {NoStop}%
\bibitem [{\citenamefont {Agrawal}\ and\ \citenamefont
  {Kofke}(1995{\natexlab{a}})}]{AgrKof-95}%
  \BibitemOpen
  \bibfield  {author} {\bibinfo {author} {\bibnamefont {Agrawal}, \bibfnamefont
  {R.}}\ and\ \bibinfo {author} {\bibnamefont {Kofke}, \bibfnamefont {D.~A.}},\
  }\bibfield  {title} {\enquote {\bibinfo {title} {Solid-fluid coexistence for
  inverse-power potentials},}\ }\href {\doibase 10.1103/PhysRevLett.74.122}
  {\bibfield  {journal} {\bibinfo  {journal} {Phys. Rev. Lett.}\ }\textbf
  {\bibinfo {volume} {74}},\ \bibinfo {pages} {122--125} (\bibinfo {year}
  {1995}{\natexlab{a}})}\BibitemShut {NoStop}%
\bibitem [{\citenamefont {Agrawal}\ and\ \citenamefont
  {Kofke}(1995{\natexlab{b}})}]{AgrKof-95b}%
  \BibitemOpen
  \bibfield  {author} {\bibinfo {author} {\bibnamefont {Agrawal}, \bibfnamefont
  {R.}}\ and\ \bibinfo {author} {\bibnamefont {Kofke}, \bibfnamefont {D.~A.}},\
  }\bibfield  {title} {\enquote {\bibinfo {title} {Thermodynamic and structural
  properties of model systems at solid-fluid coexistence},}\ }\href {\doibase
  10.1080/00268979500100911} {\bibfield  {journal} {\bibinfo  {journal} {Mol.
  Phys.}\ }\textbf {\bibinfo {volume} {85}},\ \bibinfo {pages} {23--42}
  (\bibinfo {year} {1995}{\natexlab{b}})}\BibitemShut {NoStop}%
\bibitem [{\citenamefont {Aizenman}, \citenamefont {Goldstein},\ and\
  \citenamefont {Lebowitz}(2001)}]{AizGolLeb-01}%
  \BibitemOpen
  \bibfield  {author} {\bibinfo {author} {\bibnamefont {Aizenman},
  \bibfnamefont {M.}}, \bibinfo {author} {\bibnamefont {Goldstein},
  \bibfnamefont {S.}}, \ and\ \bibinfo {author} {\bibnamefont {Lebowitz},
  \bibfnamefont {J.~L.}},\ }\bibfield  {title} {\enquote {\bibinfo {title}
  {Bounded fluctuations and translation symmetry breaking in one-dimensional
  particle systems},}\ }\href {\doibase 10.1023/A:1010397401128} {\bibfield
  {journal} {\bibinfo  {journal} {J. Statist. Phys.}\ }\textbf {\bibinfo
  {volume} {103}},\ \bibinfo {pages} {601--618} (\bibinfo {year} {2001})},\
  \bibinfo {note} {{S}pecial issue dedicated to the memory of {J}oaquin {M}.
  {L}uttinger}\BibitemShut {NoStop}%
\bibitem [{\citenamefont {Aizenman}, \citenamefont {Jansen},\ and\
  \citenamefont {Jung}(2010)}]{AizJanJun-10}%
  \BibitemOpen
  \bibfield  {author} {\bibinfo {author} {\bibnamefont {Aizenman},
  \bibfnamefont {M.}}, \bibinfo {author} {\bibnamefont {Jansen}, \bibfnamefont
  {S.}}, \ and\ \bibinfo {author} {\bibnamefont {Jung}, \bibfnamefont {P.}},\
  }\bibfield  {title} {\enquote {\bibinfo {title} {{Symmetry Breaking in
  Quasi-1D Coulomb Systems}},}\ }\href {\doibase 10.1007/s00023-010-0067-y}
  {\bibfield  {journal} {\bibinfo  {journal} {Ann. Henri Poincar{\'e}}\
  }\textbf {\bibinfo {volume} {11}},\ \bibinfo {pages} {1453--1485} (\bibinfo
  {year} {2010})}\BibitemShut {NoStop}%
\bibitem [{\citenamefont {Aizenman}\ and\ \citenamefont
  {Martin}(1980)}]{AizMar-80}%
  \BibitemOpen
  \bibfield  {author} {\bibinfo {author} {\bibnamefont {Aizenman},
  \bibfnamefont {M.}}\ and\ \bibinfo {author} {\bibnamefont {Martin},
  \bibfnamefont {P.~A.}},\ }\bibfield  {title} {\enquote {\bibinfo {title}
  {Structure of {G}ibbs states of one dimensional {C}oulomb systems},}\ }\href
  {\doibase 10.1007/BF01941972} {\bibfield  {journal} {\bibinfo  {journal}
  {Comm. Math. Phys.}\ }\textbf {\bibinfo {volume} {78}},\ \bibinfo {pages}
  {99--116} (\bibinfo {year} {1980})}\BibitemShut {NoStop}%
\bibitem [{\citenamefont {Aizenman}\ and\ \citenamefont
  {Warzel}(2015)}]{AizWar-15}%
  \BibitemOpen
  \bibfield  {author} {\bibinfo {author} {\bibnamefont {Aizenman},
  \bibfnamefont {M.}}\ and\ \bibinfo {author} {\bibnamefont {Warzel},
  \bibfnamefont {S.}},\ }\href {\doibase 10.1090/gsm/168} {\emph {\bibinfo
  {title} {Random operators}}},\ \bibinfo {series} {Graduate Studies in
  Mathematics}, Vol.\ \bibinfo {volume} {168}\ (\bibinfo  {publisher} {American
  Mathematical Society, Providence, RI},\ \bibinfo {year} {2015})\ pp.\
  \bibinfo {pages} {xiv+326}\BibitemShut {NoStop}%
\bibitem [{\citenamefont {Alastuey}(1986)}]{Alastuey-86}%
  \BibitemOpen
  \bibfield  {author} {\bibinfo {author} {\bibnamefont {Alastuey},
  \bibfnamefont {A.}},\ }\bibfield  {title} {\enquote {\bibinfo {title}
  {Propri{\'e}t{\'e}s d'{\'e}quilibre du plasma classique {\`a} une composante
  en trois et deux dimensions},}\ }\href {\doibase
  10.1051/anphys:01986001106065300} {\bibfield  {journal} {\bibinfo  {journal}
  {Ann. Phys. Fr.}\ }\textbf {\bibinfo {volume} {11}},\ \bibinfo {pages}
  {653--738} (\bibinfo {year} {1986})}\BibitemShut {NoStop}%
\bibitem [{\citenamefont {Alastuey}\ and\ \citenamefont
  {Jancovici}(1981{\natexlab{a}})}]{AlaJan-81b}%
  \BibitemOpen
  \bibfield  {author} {\bibinfo {author} {\bibnamefont {Alastuey},
  \bibfnamefont {A.}}\ and\ \bibinfo {author} {\bibnamefont {Jancovici},
  \bibfnamefont {B.}},\ }\bibfield  {title} {\enquote {\bibinfo {title}
  {Absence of strict crystalline order in a two-dimensional electron system},}\
  }\href {\doibase 10.1007/BF01012815} {\bibfield  {journal} {\bibinfo
  {journal} {J. Stat. Phys.}\ }\textbf {\bibinfo {volume} {24}},\ \bibinfo
  {pages} {443--449} (\bibinfo {year} {1981}{\natexlab{a}})}\BibitemShut
  {NoStop}%
\bibitem [{\citenamefont {Alastuey}\ and\ \citenamefont
  {Jancovici}(1981{\natexlab{b}})}]{AlaJan-81}%
  \BibitemOpen
  \bibfield  {author} {\bibinfo {author} {\bibnamefont {Alastuey},
  \bibfnamefont {A.}}\ and\ \bibinfo {author} {\bibnamefont {Jancovici},
  \bibfnamefont {B.}},\ }\bibfield  {title} {\enquote {\bibinfo {title} {On the
  classical two-dimensional one-component {C}oulomb plasma},}\ }\href {\doibase
  10.1051/jphys:019810042010100} {\bibfield  {journal} {\bibinfo  {journal} {J.
  Phys. France}\ }\textbf {\bibinfo {volume} {42}},\ \bibinfo {pages} {1--12}
  (\bibinfo {year} {1981}{\natexlab{b}})}\BibitemShut {NoStop}%
\bibitem [{\citenamefont {Alastuey}\ and\ \citenamefont
  {Jancovici}(1984)}]{AlaJan-84}%
  \BibitemOpen
  \bibfield  {author} {\bibinfo {author} {\bibnamefont {Alastuey},
  \bibfnamefont {A.}}\ and\ \bibinfo {author} {\bibnamefont {Jancovici},
  \bibfnamefont {B.}},\ }\bibfield  {title} {\enquote {\bibinfo {title} {On
  potential and field fluctuations in two-dimensional classical charged
  systems},}\ }\href {https://doi.org/10.1007/BF01018558} {\bibfield  {journal}
  {\bibinfo  {journal} {J. Statist. Phys.}\ }\textbf {\bibinfo {volume} {34}},\
  \bibinfo {pages} {557--569} (\bibinfo {year} {1984})}\BibitemShut {NoStop}%
\bibitem [{\citenamefont {Alastuey}\ and\ \citenamefont
  {Martin}(1985)}]{AlaMar-85}%
  \BibitemOpen
  \bibfield  {author} {\bibinfo {author} {\bibnamefont {Alastuey},
  \bibfnamefont {A.}}\ and\ \bibinfo {author} {\bibnamefont {Martin},
  \bibfnamefont {P.~A.}},\ }\bibfield  {title} {\enquote {\bibinfo {title}
  {Decay of correlations in classical fluids with long-range forces},}\ }\href
  {https://doi.org/10.1007/BF01018670} {\bibfield  {journal} {\bibinfo
  {journal} {J. Statist. Phys.}\ }\textbf {\bibinfo {volume} {39}},\ \bibinfo
  {pages} {405--426} (\bibinfo {year} {1985})}\BibitemShut {NoStop}%
\bibitem [{\citenamefont {Albeverio}\ and\ \citenamefont
  {H{\o}egh-Krohn}(1975)}]{AlbHog-75}%
  \BibitemOpen
  \bibfield  {author} {\bibinfo {author} {\bibnamefont {Albeverio},
  \bibfnamefont {S.}}\ and\ \bibinfo {author} {\bibnamefont {H{\o}egh-Krohn},
  \bibfnamefont {R.}},\ }\bibfield  {title} {\enquote {\bibinfo {title}
  {Homogeneous random fields and statistical mechanics},}\ }\href
  {https://doi.org/10.1016/0022-1236(75)90058-0} {\bibfield  {journal}
  {\bibinfo  {journal} {J. Funct. Anal.}\ }\textbf {\bibinfo {volume} {19}},\
  \bibinfo {pages} {242--272} (\bibinfo {year} {1975})}\BibitemShut {NoStop}%
\bibitem [{\citenamefont {Albeverio}, \citenamefont {Pastur},\ and\
  \citenamefont {Shcherbina}(2001)}]{AlbPasShc-01}%
  \BibitemOpen
  \bibfield  {author} {\bibinfo {author} {\bibnamefont {Albeverio},
  \bibfnamefont {S.}}, \bibinfo {author} {\bibnamefont {Pastur}, \bibfnamefont
  {L.}}, \ and\ \bibinfo {author} {\bibnamefont {Shcherbina}, \bibfnamefont
  {M.}},\ }\bibfield  {title} {\enquote {\bibinfo {title} {On the {$1/n$}
  expansion for some unitary invariant ensembles of random matrices},}\ }\href
  {\doibase 10.1007/s002200100531} {\bibfield  {journal} {\bibinfo  {journal}
  {Comm. Math. Phys.}\ }\textbf {\bibinfo {volume} {224}},\ \bibinfo {pages}
  {271--305} (\bibinfo {year} {2001})},\ \bibinfo {note} {dedicated to Joel L.
  Lebowitz}\BibitemShut {NoStop}%
\bibitem [{\citenamefont {Ameur}(2018)}]{Ameur-18}%
  \BibitemOpen
  \bibfield  {author} {\bibinfo {author} {\bibnamefont {Ameur}, \bibfnamefont
  {Y.}},\ }\bibfield  {title} {\enquote {\bibinfo {title} {Repulsion in low
  temperature {$\beta$}-ensembles},}\ }\href {\doibase
  10.1007/s00220-017-3027-2} {\bibfield  {journal} {\bibinfo  {journal} {Comm.
  Math. Phys.}\ }\textbf {\bibinfo {volume} {359}},\ \bibinfo {pages}
  {1079--1089} (\bibinfo {year} {2018})}\BibitemShut {NoStop}%
\bibitem [{\citenamefont {Ameur}(2021)}]{Ameur-21}%
  \BibitemOpen
  \bibfield  {author} {\bibinfo {author} {\bibnamefont {Ameur}, \bibfnamefont
  {Y.}},\ }\bibfield  {title} {\enquote {\bibinfo {title} {A localization
  theorem for the planar {C}oulomb gas in an external field},}\ }\href
  {\doibase 10.1214/21-ejp613} {\bibfield  {journal} {\bibinfo  {journal}
  {Electron. J. Probab.}\ }\textbf {\bibinfo {volume} {26}},\ \bibinfo {pages}
  {Paper No. 46, 21} (\bibinfo {year} {2021})}\BibitemShut {NoStop}%
\bibitem [{\citenamefont {Ameur}\ and\ \citenamefont
  {Ortega-Cerd\`a}(2012)}]{AmeOrt-12}%
  \BibitemOpen
  \bibfield  {author} {\bibinfo {author} {\bibnamefont {Ameur}, \bibfnamefont
  {Y.}}\ and\ \bibinfo {author} {\bibnamefont {Ortega-Cerd\`a}, \bibfnamefont
  {J.}},\ }\bibfield  {title} {\enquote {\bibinfo {title} {Beurling-{L}andau
  densities of weighted {F}ekete sets and correlation kernel estimates},}\
  }\href {\doibase 10.1016/j.jfa.2012.06.011} {\bibfield  {journal} {\bibinfo
  {journal} {J. Funct. Anal.}\ }\textbf {\bibinfo {volume} {263}},\ \bibinfo
  {pages} {1825--1861} (\bibinfo {year} {2012})}\BibitemShut {NoStop}%
\bibitem [{\citenamefont {Ameur}\ and\ \citenamefont
  {Romero}(2022)}]{AmeRom-22_ppt}%
  \BibitemOpen
  \bibfield  {author} {\bibinfo {author} {\bibnamefont {Ameur}, \bibfnamefont
  {Y.}}\ and\ \bibinfo {author} {\bibnamefont {Romero}, \bibfnamefont
  {J.~L.}},\ }\href@noop {} {\enquote {\bibinfo {title} {The planar low
  temperature {C}oulomb gas: separation and equidistribution},}\ } (\bibinfo
  {year} {2022}),\ \Eprint {http://arxiv.org/abs/2010.10179} {arXiv:2010.10179
  [math.PR]} \BibitemShut {NoStop}%
\bibitem [{\citenamefont {Anderson}, \citenamefont {Guionnet},\ and\
  \citenamefont {Zeitouni}(2010)}]{AndGuiZei-10}%
  \BibitemOpen
  \bibfield  {author} {\bibinfo {author} {\bibnamefont {Anderson},
  \bibfnamefont {G.~W.}}, \bibinfo {author} {\bibnamefont {Guionnet},
  \bibfnamefont {A.}}, \ and\ \bibinfo {author} {\bibnamefont {Zeitouni},
  \bibfnamefont {O.}},\ }\href@noop {} {\emph {\bibinfo {title} {An
  introduction to random matrices}}},\ \bibinfo {series} {Cambridge Studies in
  Advanced Mathematics}, Vol.\ \bibinfo {volume} {118}\ (\bibinfo  {publisher}
  {Cambridge University Press, Cambridge},\ \bibinfo {year} {2010})\ pp.\
  \bibinfo {pages} {xiv+492}\BibitemShut {NoStop}%
\bibitem [{\citenamefont {Angelescu}\ and\ \citenamefont
  {Nenciu}(1973)}]{AngNen-73}%
  \BibitemOpen
  \bibfield  {author} {\bibinfo {author} {\bibnamefont {Angelescu},
  \bibfnamefont {N.}}\ and\ \bibinfo {author} {\bibnamefont {Nenciu},
  \bibfnamefont {G.}},\ }\bibfield  {title} {\enquote {\bibinfo {title} {On the
  independence of the thermodynamic limit on the boundary conditions in quantum
  statistical mechanics},}\ }\href
  {http://projecteuclid.org/euclid.cmp/1103858475} {\bibfield  {journal}
  {\bibinfo  {journal} {Comm. Math. Phys.}\ }\textbf {\bibinfo {volume} {29}},\
  \bibinfo {pages} {15--30} (\bibinfo {year} {1973})}\BibitemShut {NoStop}%
\bibitem [{\citenamefont {Armstrong}\ and\ \citenamefont
  {Serfaty}(2021)}]{ArmSer-21}%
  \BibitemOpen
  \bibfield  {author} {\bibinfo {author} {\bibnamefont {Armstrong},
  \bibfnamefont {S.}}\ and\ \bibinfo {author} {\bibnamefont {Serfaty},
  \bibfnamefont {S.}},\ }\bibfield  {title} {\enquote {\bibinfo {title} {Local
  laws and rigidity for {C}oulomb gases at any temperature},}\ }\href {\doibase
  10.1214/20-AOP1445} {\bibfield  {journal} {\bibinfo  {journal} {Ann.
  Probab.}\ }\textbf {\bibinfo {volume} {49}},\ \bibinfo {pages} {46--121}
  (\bibinfo {year} {2021})}\BibitemShut {NoStop}%
\bibitem [{\citenamefont {Arovas}, \citenamefont {Schrieffer},\ and\
  \citenamefont {Wilczek}(1984)}]{AroSchWil-84}%
  \BibitemOpen
  \bibfield  {author} {\bibinfo {author} {\bibnamefont {Arovas}, \bibfnamefont
  {D.}}, \bibinfo {author} {\bibnamefont {Schrieffer}, \bibfnamefont {J.~R.}},
  \ and\ \bibinfo {author} {\bibnamefont {Wilczek}, \bibfnamefont {F.}},\
  }\bibfield  {title} {\enquote {\bibinfo {title} {{Fractional Statistics and
  the Quantum Hall Effect}},}\ }\href {\doibase 10.1103/PhysRevLett.53.722}
  {\bibfield  {journal} {\bibinfo  {journal} {Phys. Rev. Lett.}\ }\textbf
  {\bibinfo {volume} {53}},\ \bibinfo {pages} {722--723} (\bibinfo {year}
  {1984})}\BibitemShut {NoStop}%
\bibitem [{\citenamefont {Astrakharchik}\ \emph {et~al.}(2006)\citenamefont
  {Astrakharchik}, \citenamefont {Gangardt}, \citenamefont {Lozovik},\ and\
  \citenamefont {Sorokin}}]{AstGanLozSor-06}%
  \BibitemOpen
  \bibfield  {author} {\bibinfo {author} {\bibnamefont {Astrakharchik},
  \bibfnamefont {G.~E.}}, \bibinfo {author} {\bibnamefont {Gangardt},
  \bibfnamefont {D.~M.}}, \bibinfo {author} {\bibnamefont {Lozovik},
  \bibfnamefont {Y.~E.}}, \ and\ \bibinfo {author} {\bibnamefont {Sorokin},
  \bibfnamefont {I.~A.}},\ }\bibfield  {title} {\enquote {\bibinfo {title}
  {Off-diagonal correlations of the {C}alogero-{S}utherland model},}\ }\href
  {\doibase 10.1103/PhysRevE.74.021105} {\bibfield  {journal} {\bibinfo
  {journal} {Phys. Rev. E}\ }\textbf {\bibinfo {volume} {74}},\ \bibinfo
  {pages} {021105} (\bibinfo {year} {2006})}\BibitemShut {NoStop}%
\bibitem [{\citenamefont {Azadi}\ and\ \citenamefont
  {Drummond}(2022)}]{AzaDru-22_ppt}%
  \BibitemOpen
  \bibfield  {author} {\bibinfo {author} {\bibnamefont {Azadi}, \bibfnamefont
  {S.}}\ and\ \bibinfo {author} {\bibnamefont {Drummond}, \bibfnamefont
  {N.~D.}},\ }\href@noop {} {\enquote {\bibinfo {title} {Low-density phase
  diagram of the three-dimensional electron gas},}\ } (\bibinfo {year}
  {2022}),\ \Eprint {http://arxiv.org/abs/2201.08743} {arXiv:2201.08743
  [cond-mat.str-el]} \BibitemShut {NoStop}%
\bibitem [{\citenamefont {Bach}, \citenamefont {Lieb},\ and\ \citenamefont
  {Solovej}(1994)}]{BacLieSol-94}%
  \BibitemOpen
  \bibfield  {author} {\bibinfo {author} {\bibnamefont {Bach}, \bibfnamefont
  {V.}}, \bibinfo {author} {\bibnamefont {Lieb}, \bibfnamefont {E.~H.}}, \ and\
  \bibinfo {author} {\bibnamefont {Solovej}, \bibfnamefont {J.~P.}},\
  }\bibfield  {title} {\enquote {\bibinfo {title} {Generalized {H}artree-{F}ock
  theory and the {H}ubbard model},}\ }\href {\doibase 10.1007/BF02188656}
  {\bibfield  {journal} {\bibinfo  {journal} {J. Statist. Phys.}\ }\textbf
  {\bibinfo {volume} {76}},\ \bibinfo {pages} {3--89} (\bibinfo {year}
  {1994})}\BibitemShut {NoStop}%
\bibitem [{\citenamefont {Bagchi}, \citenamefont {Andersen},\ and\
  \citenamefont {Swope}(1996)}]{BagAndSwo-96}%
  \BibitemOpen
  \bibfield  {author} {\bibinfo {author} {\bibnamefont {Bagchi}, \bibfnamefont
  {K.}}, \bibinfo {author} {\bibnamefont {Andersen}, \bibfnamefont {H.~C.}}, \
  and\ \bibinfo {author} {\bibnamefont {Swope}, \bibfnamefont {W.}},\
  }\bibfield  {title} {\enquote {\bibinfo {title} {Computer simulation study of
  the melting transition in two dimensions},}\ }\href {\doibase
  10.1103/PhysRevLett.76.255} {\bibfield  {journal} {\bibinfo  {journal} {Phys.
  Rev. Lett.}\ }\textbf {\bibinfo {volume} {76}},\ \bibinfo {pages} {255--258}
  (\bibinfo {year} {1996})}\BibitemShut {NoStop}%
\bibitem [{\citenamefont {Baik}\ \emph {et~al.}(2006)\citenamefont {Baik},
  \citenamefont {Borodin}, \citenamefont {Deift},\ and\ \citenamefont
  {Suidan}}]{BaiBorDeiSui-06}%
  \BibitemOpen
  \bibfield  {author} {\bibinfo {author} {\bibnamefont {Baik}, \bibfnamefont
  {J.}}, \bibinfo {author} {\bibnamefont {Borodin}, \bibfnamefont {A.}},
  \bibinfo {author} {\bibnamefont {Deift}, \bibfnamefont {P.}}, \ and\ \bibinfo
  {author} {\bibnamefont {Suidan}, \bibfnamefont {T.}},\ }\bibfield  {title}
  {\enquote {\bibinfo {title} {A model for the bus system in {C}uernavaca
  ({M}exico)},}\ }\href {\doibase 10.1088/0305-4470/39/28/S11} {\bibfield
  {journal} {\bibinfo  {journal} {J. Phys. A}\ }\textbf {\bibinfo {volume}
  {39}},\ \bibinfo {pages} {8965--8975} (\bibinfo {year} {2006})}\BibitemShut
  {NoStop}%
\bibitem [{\citenamefont {Barnes}\ and\ \citenamefont
  {Sloane}(1983)}]{BarSlo-83}%
  \BibitemOpen
  \bibfield  {author} {\bibinfo {author} {\bibnamefont {Barnes}, \bibfnamefont
  {E.~S.}}\ and\ \bibinfo {author} {\bibnamefont {Sloane}, \bibfnamefont
  {N.~J.~A.}},\ }\bibfield  {title} {\enquote {\bibinfo {title} {The optimal
  lattice quantizer in three dimensions},}\ }\href {\doibase 10.1137/0604005}
  {\bibfield  {journal} {\bibinfo  {journal} {SIAM J. Algebraic Discrete
  Methods}\ }\textbf {\bibinfo {volume} {4}},\ \bibinfo {pages} {30--41}
  (\bibinfo {year} {1983})}\BibitemShut {NoStop}%
\bibitem [{\citenamefont {Bauerschmidt}\ \emph {et~al.}(2017)\citenamefont
  {Bauerschmidt}, \citenamefont {Bourgade}, \citenamefont {Nikula},\ and\
  \citenamefont {Yau}}]{BauBouNikYau-17}%
  \BibitemOpen
  \bibfield  {author} {\bibinfo {author} {\bibnamefont {Bauerschmidt},
  \bibfnamefont {R.}}, \bibinfo {author} {\bibnamefont {Bourgade},
  \bibfnamefont {P.}}, \bibinfo {author} {\bibnamefont {Nikula}, \bibfnamefont
  {M.}}, \ and\ \bibinfo {author} {\bibnamefont {Yau}, \bibfnamefont {H.-T.}},\
  }\bibfield  {title} {\enquote {\bibinfo {title} {Local density for
  two-dimensional one-component plasma},}\ }\href {\doibase
  10.1007/s00220-017-2932-8} {\bibfield  {journal} {\bibinfo  {journal} {Comm.
  Math. Phys.}\ }\textbf {\bibinfo {volume} {356}},\ \bibinfo {pages}
  {189--230} (\bibinfo {year} {2017})}\BibitemShut {NoStop}%
\bibitem [{\citenamefont {Bauerschmidt}\ \emph {et~al.}(2019)\citenamefont
  {Bauerschmidt}, \citenamefont {Bourgade}, \citenamefont {Nikula},\ and\
  \citenamefont {Yau}}]{BauBouNikYau-19}%
  \BibitemOpen
  \bibfield  {author} {\bibinfo {author} {\bibnamefont {Bauerschmidt},
  \bibfnamefont {R.}}, \bibinfo {author} {\bibnamefont {Bourgade},
  \bibfnamefont {P.}}, \bibinfo {author} {\bibnamefont {Nikula}, \bibfnamefont
  {M.}}, \ and\ \bibinfo {author} {\bibnamefont {Yau}, \bibfnamefont {H.-T.}},\
  }\bibfield  {title} {\enquote {\bibinfo {title} {The two-dimensional
  {C}oulomb plasma: quasi-free approximation and central limit theorem},}\
  }\href {\doibase 10.4310/ATMP.2019.v23.n4.a1} {\bibfield  {journal} {\bibinfo
   {journal} {Adv. Theor. Math. Phys.}\ }\textbf {\bibinfo {volume} {23}},\
  \bibinfo {pages} {841--1002} (\bibinfo {year} {2019})}\BibitemShut {NoStop}%
\bibitem [{\citenamefont {Baus}(1980)}]{Baus-80}%
  \BibitemOpen
  \bibfield  {author} {\bibinfo {author} {\bibnamefont {Baus}, \bibfnamefont
  {M.}},\ }\bibfield  {title} {\enquote {\bibinfo {title} {Absence of
  long-range order with long-range potentials},}\ }\href {\doibase
  10.1007/BF01007993} {\bibfield  {journal} {\bibinfo  {journal} {J. Stat.
  Phys.}\ }\textbf {\bibinfo {volume} {22}},\ \bibinfo {pages} {111--119}
  (\bibinfo {year} {1980})}\BibitemShut {NoStop}%
\bibitem [{\citenamefont {Baus}\ and\ \citenamefont
  {Hansen}(1980)}]{BauHan-80}%
  \BibitemOpen
  \bibfield  {author} {\bibinfo {author} {\bibnamefont {Baus}, \bibfnamefont
  {M.}}\ and\ \bibinfo {author} {\bibnamefont {Hansen}, \bibfnamefont
  {J.-P.}},\ }\bibfield  {title} {\enquote {\bibinfo {title} {Statistical
  mechanics of simple {C}oulomb systems},}\ }\href {\doibase
  10.1016/0370-1573(80)90022-8} {\bibfield  {journal} {\bibinfo  {journal}
  {Phys. Rep.}\ }\textbf {\bibinfo {volume} {59}},\ \bibinfo {pages} {1--94}
  (\bibinfo {year} {1980})}\BibitemShut {NoStop}%
\bibitem [{\citenamefont {Baxter}(1963)}]{Baxter-63}%
  \BibitemOpen
  \bibfield  {author} {\bibinfo {author} {\bibnamefont {Baxter}, \bibfnamefont
  {R.~J.}},\ }\bibfield  {title} {\enquote {\bibinfo {title} {Statistical
  mechanics of a one-dimensional {C}oulomb system with a uniform charge
  background},}\ }\href {\doibase 10.1017/S0305004100003790} {\bibfield
  {journal} {\bibinfo  {journal} {Proc. Cambridge Philos. Soc.}\ }\textbf
  {\bibinfo {volume} {59}},\ \bibinfo {pages} {779--787} (\bibinfo {year}
  {1963})}\BibitemShut {NoStop}%
\bibitem [{\citenamefont {Becke}(1993)}]{Becke-93}%
  \BibitemOpen
  \bibfield  {author} {\bibinfo {author} {\bibnamefont {Becke}, \bibfnamefont
  {A.~D.}},\ }\bibfield  {title} {\enquote {\bibinfo {title}
  {Density-functional thermochemistry. {III}. {T}he role of exact exchange},}\
  }\href {\doibase 10.1063/1.464913} {\bibfield  {journal} {\bibinfo  {journal}
  {J. Chem. Phys.}\ }\textbf {\bibinfo {volume} {98}},\ \bibinfo {pages}
  {5648--5652} (\bibinfo {year} {1993})}\BibitemShut {NoStop}%
\bibitem [{\citenamefont {Beltr{\'a}n}(2013)}]{Beltran-13}%
  \BibitemOpen
  \bibfield  {author} {\bibinfo {author} {\bibnamefont {Beltr{\'a}n},
  \bibfnamefont {C.}},\ }\bibfield  {title} {\enquote {\bibinfo {title} {The
  state of the art in {S}male's 7th problem},}\ }in\ \href@noop {} {\emph
  {\bibinfo {booktitle} {Foundations of computational mathematics, {B}udapest
  2011}}},\ \bibinfo {series} {London Math. Soc. Lecture Note Ser.}, Vol.\
  \bibinfo {volume} {403}\ (\bibinfo  {publisher} {Cambridge Univ. Press,
  Cambridge},\ \bibinfo {year} {2013})\ pp.\ \bibinfo {pages}
  {1--15}\BibitemShut {NoStop}%
\bibitem [{\citenamefont {Benfatto}, \citenamefont {Gruber},\ and\
  \citenamefont {Martin}(1984)}]{BenGruMar-84}%
  \BibitemOpen
  \bibfield  {author} {\bibinfo {author} {\bibnamefont {Benfatto},
  \bibfnamefont {G.}}, \bibinfo {author} {\bibnamefont {Gruber}, \bibfnamefont
  {C.}}, \ and\ \bibinfo {author} {\bibnamefont {Martin}, \bibfnamefont
  {P.~A.}},\ }\bibfield  {title} {\enquote {\bibinfo {title} {Exact decay of
  correlations for infinite range continuous systems},}\ }\href@noop {}
  {\bibfield  {journal} {\bibinfo  {journal} {Helv. Phys. Acta}\ }\textbf
  {\bibinfo {volume} {57}},\ \bibinfo {pages} {63--85} (\bibinfo {year}
  {1984})}\BibitemShut {NoStop}%
\bibitem [{\citenamefont {Bergersen}, \citenamefont {Boal},\ and\ \citenamefont
  {Palffy-Muhoray}(1994)}]{BerBoaPal-94}%
  \BibitemOpen
  \bibfield  {author} {\bibinfo {author} {\bibnamefont {Bergersen},
  \bibfnamefont {B.}}, \bibinfo {author} {\bibnamefont {Boal}, \bibfnamefont
  {D.}}, \ and\ \bibinfo {author} {\bibnamefont {Palffy-Muhoray}, \bibfnamefont
  {P.}},\ }\bibfield  {title} {\enquote {\bibinfo {title} {Equilibrium
  configurations of particles on a sphere: the case of logarithmic
  interactions},}\ }\href {\doibase 10.1088/0305-4470/27/7/032} {\bibfield
  {journal} {\bibinfo  {journal} {J. Phys. A - Math. Gen}\ }\textbf {\bibinfo
  {volume} {27}},\ \bibinfo {pages} {2579--2586} (\bibinfo {year}
  {1994})}\BibitemShut {NoStop}%
\bibitem [{\citenamefont {Berman}, \citenamefont {Boucksom},\ and\
  \citenamefont {Nystr{\"o}m}(2011)}]{BerBouWitt-11}%
  \BibitemOpen
  \bibfield  {author} {\bibinfo {author} {\bibnamefont {Berman}, \bibfnamefont
  {R.}}, \bibinfo {author} {\bibnamefont {Boucksom}, \bibfnamefont {S.}}, \
  and\ \bibinfo {author} {\bibnamefont {Nystr{\"o}m}, \bibfnamefont {D.~W.}},\
  }\bibfield  {title} {\enquote {\bibinfo {title} {{Fekete points and
  convergence towards equilibrium measures on complex manifolds}},}\ }\href
  {\doibase 10.1007/s11511-011-0067-x} {\bibfield  {journal} {\bibinfo
  {journal} {Acta Math.}\ }\textbf {\bibinfo {volume} {207}},\ \bibinfo {pages}
  {1 -- 27} (\bibinfo {year} {2011})}\BibitemShut {NoStop}%
\bibitem [{\citenamefont {Berman}(2017)}]{Berman-17}%
  \BibitemOpen
  \bibfield  {author} {\bibinfo {author} {\bibnamefont {Berman}, \bibfnamefont
  {R.~J.}},\ }\bibfield  {title} {\enquote {\bibinfo {title} {Large deviations
  for {G}ibbs measures with singular {H}amiltonians and emergence of
  {K}\"{a}hler-{E}instein metrics},}\ }\href {\doibase
  10.1007/s00220-017-2926-6} {\bibfield  {journal} {\bibinfo  {journal} {Comm.
  Math. Phys.}\ }\textbf {\bibinfo {volume} {354}},\ \bibinfo {pages}
  {1133--1172} (\bibinfo {year} {2017})}\BibitemShut {NoStop}%
\bibitem [{\citenamefont {Berman}(2019)}]{Berman-19}%
  \BibitemOpen
  \bibfield  {author} {\bibinfo {author} {\bibnamefont {Berman}, \bibfnamefont
  {R.~J.}},\ }\bibfield  {title} {\enquote {\bibinfo {title} {Statistical
  mechanics of interpolation nodes, pluripotential theory and complex
  geometry},}\ }\href {\doibase 10.4064/ap180925-4-7} {\bibfield  {journal}
  {\bibinfo  {journal} {Ann. Polon. Math.}\ }\textbf {\bibinfo {volume}
  {123}},\ \bibinfo {pages} {71--153} (\bibinfo {year} {2019})}\BibitemShut
  {NoStop}%
\bibitem [{\citenamefont {{Berman}}(2020)}]{Berman-20}%
  \BibitemOpen
  \bibfield  {author} {\bibinfo {author} {\bibnamefont {{Berman}},
  \bibfnamefont {R.~J.}},\ }\bibfield  {title} {\enquote {\bibinfo {title} {{An
  invitation to K\"ahler-Einstein metrics and random point processes}},}\ }in\
  \href {\doibase 10.4310/SDG.2018.v23.n1.a2} {\emph {\bibinfo {booktitle}
  {Differential geometry, Calabi-Yau theory, and general relativity. Lectures
  given at conferences celebrating the 70th birthday of Shing-Tung Yau at
  Harvard University, Cambridge, MA, USA, May 2019}}}\ (\bibinfo  {publisher}
  {Somerville, MA: International Press},\ \bibinfo {year} {2020})\ pp.\
  \bibinfo {pages} {35--87}\BibitemShut {NoStop}%
\bibitem [{\citenamefont {Bernard}\ and\ \citenamefont {Wu}(1994)}]{BerWu-94}%
  \BibitemOpen
  \bibfield  {author} {\bibinfo {author} {\bibnamefont {Bernard}, \bibfnamefont
  {D.}}\ and\ \bibinfo {author} {\bibnamefont {Wu}, \bibfnamefont {Y.-S.}},\
  }\bibfield  {title} {\enquote {\bibinfo {title} {A note on statistical
  interactions and the thermodynamic {B}ethe ansatz},}\ }in\ \href@noop {}
  {\emph {\bibinfo {booktitle} {New developments in integrable systems and
  long-range interaction models}}},\ \bibinfo {series and number} {Nankai
  Lecture Notes on Mathematical Physics}\ (\bibinfo  {publisher} {World
  Scientific},\ \bibinfo {year} {1994})\BibitemShut {NoStop}%
\bibitem [{\citenamefont {Berry}(1985)}]{Berry-85}%
  \BibitemOpen
  \bibfield  {author} {\bibinfo {author} {\bibnamefont {Berry}, \bibfnamefont
  {M.~V.}},\ }\bibfield  {title} {\enquote {\bibinfo {title} {Semiclassical
  theory of spectral rigidity},}\ }\href {\doibase 10.1098/rspa.1985.0078}
  {\bibfield  {journal} {\bibinfo  {journal} {Proc. Roy. Soc. London Ser. A}\
  }\textbf {\bibinfo {volume} {400}},\ \bibinfo {pages} {229--251} (\bibinfo
  {year} {1985})}\BibitemShut {NoStop}%
\bibitem [{\citenamefont {B\'{e}termin}(2019)}]{Betermin-19}%
  \BibitemOpen
  \bibfield  {author} {\bibinfo {author} {\bibnamefont {B\'{e}termin},
  \bibfnamefont {L.}},\ }\bibfield  {title} {\enquote {\bibinfo {title} {Local
  optimality of cubic lattices for interaction energies},}\ }\href {\doibase
  10.1007/s13324-017-0205-5} {\bibfield  {journal} {\bibinfo  {journal} {Anal.
  Math. Phys.}\ }\textbf {\bibinfo {volume} {9}},\ \bibinfo {pages} {403--426}
  (\bibinfo {year} {2019})}\BibitemShut {NoStop}%
\bibitem [{\citenamefont {B\'{e}termin}\ and\ \citenamefont
  {Sandier}(2018)}]{BetSan-18}%
  \BibitemOpen
  \bibfield  {author} {\bibinfo {author} {\bibnamefont {B\'{e}termin},
  \bibfnamefont {L.}}\ and\ \bibinfo {author} {\bibnamefont {Sandier},
  \bibfnamefont {E.}},\ }\bibfield  {title} {\enquote {\bibinfo {title}
  {Renormalized energy and asymptotic expansion of optimal logarithmic energy
  on the sphere},}\ }\href {\doibase 10.1007/s00365-016-9357-z} {\bibfield
  {journal} {\bibinfo  {journal} {Constr. Approx.}\ }\textbf {\bibinfo {volume}
  {47}},\ \bibinfo {pages} {39--74} (\bibinfo {year} {2018})}\BibitemShut
  {NoStop}%
\bibitem [{\citenamefont {B{\'e}termin}, \citenamefont {\v{S}amaj},\ and\
  \citenamefont {Trav\v{e}nec}(2021)}]{BetSamTra-21_ppt}%
  \BibitemOpen
  \bibfield  {author} {\bibinfo {author} {\bibnamefont {B{\'e}termin},
  \bibfnamefont {L.}}, \bibinfo {author} {\bibnamefont {\v{S}amaj},
  \bibfnamefont {L.}}, \ and\ \bibinfo {author} {\bibnamefont {Trav\v{e}nec},
  \bibfnamefont {I.}},\ }\bibfield  {title} {\enquote {\bibinfo {title}
  {Three-dimensional lattice ground states for {R}iesz and {L}ennard--{J}ones
  type energies},}\ }\href {https://arxiv.org/abs/2107.14020} {\bibfield
  {journal} {\bibinfo  {journal} {ArXiV e-prints}\ } (\bibinfo {year}
  {2021})}\BibitemShut {NoStop}%
\bibitem [{\citenamefont {Bethuel}, \citenamefont {Brezis},\ and\ \citenamefont
  {H{\'e}lein}(1994)}]{BetBreHel-94}%
  \BibitemOpen
  \bibfield  {author} {\bibinfo {author} {\bibnamefont {Bethuel}, \bibfnamefont
  {F.}}, \bibinfo {author} {\bibnamefont {Brezis}, \bibfnamefont {H.}}, \ and\
  \bibinfo {author} {\bibnamefont {H{\'e}lein}, \bibfnamefont {F.}},\ }\href
  {\doibase 10.1007/978-1-4612-0287-5} {\emph {\bibinfo {title}
  {Ginzburg-{L}andau vortices}}},\ Progress in Nonlinear Differential Equations
  and their Applications, 13\ (\bibinfo  {publisher} {Birkh{\"a}user Boston,
  Inc., Boston, MA},\ \bibinfo {year} {1994})\ pp.\ \bibinfo {pages}
  {xxviii+159}\BibitemShut {NoStop}%
\bibitem [{\citenamefont {Bhaduri}, \citenamefont {Murthy},\ and\ \citenamefont
  {Sen}(2010)}]{BhaMurSen-10}%
  \BibitemOpen
  \bibfield  {author} {\bibinfo {author} {\bibnamefont {Bhaduri}, \bibfnamefont
  {R.~K.}}, \bibinfo {author} {\bibnamefont {Murthy}, \bibfnamefont
  {M.~V.~N.}}, \ and\ \bibinfo {author} {\bibnamefont {Sen}, \bibfnamefont
  {D.}},\ }\bibfield  {title} {\enquote {\bibinfo {title} {The virial expansion
  of a classical interacting system},}\ }\href {\doibase
  10.1088/1751-8113/43/4/045002} {\bibfield  {journal} {\bibinfo  {journal} {J.
  Phys. A}\ }\textbf {\bibinfo {volume} {43}},\ \bibinfo {pages} {045002, 8}
  (\bibinfo {year} {2010})}\BibitemShut {NoStop}%
\bibitem [{\citenamefont {Blanc}, \citenamefont {Bris},\ and\ \citenamefont
  {Lions}(2002)}]{BlaBriLio-02}%
  \BibitemOpen
  \bibfield  {author} {\bibinfo {author} {\bibnamefont {Blanc}, \bibfnamefont
  {X.}}, \bibinfo {author} {\bibnamefont {Bris}, \bibfnamefont {C.~L.}}, \ and\
  \bibinfo {author} {\bibnamefont {Lions}, \bibfnamefont {P.-L.}},\ }\bibfield
  {title} {\enquote {\bibinfo {title} {Caract{\'e}risation des fonctions de
  $\mathbb{R}^3$ {\`a} potentiel newtonien born{\'e}},}\ }\href {\doibase
  10.1016/S1631-073X(02)02203-3} {\bibfield  {journal} {\bibinfo  {journal} {C.
  R. Math. Acad. Sci. Paris}\ }\textbf {\bibinfo {volume} {334}},\ \bibinfo
  {pages} {15--21} (\bibinfo {year} {2002})}\BibitemShut {NoStop}%
\bibitem [{\citenamefont {Blanc}\ and\ \citenamefont
  {Lewin}(2015)}]{BlaLew-15}%
  \BibitemOpen
  \bibfield  {author} {\bibinfo {author} {\bibnamefont {Blanc}, \bibfnamefont
  {X.}}\ and\ \bibinfo {author} {\bibnamefont {Lewin}, \bibfnamefont {M.}},\
  }\bibfield  {title} {\enquote {\bibinfo {title} {The crystallization
  conjecture: A review},}\ }\href {\doibase 10.4171/EMSS/13} {\bibfield
  {journal} {\bibinfo  {journal} {EMS Surv. Math. Sci.}\ }\textbf {\bibinfo
  {volume} {2}},\ \bibinfo {pages} {255--306} (\bibinfo {year} {2015})},\
  \Eprint {http://arxiv.org/abs/1504.01153} {1504.01153} \BibitemShut {NoStop}%
\bibitem [{\citenamefont {Blum}\ \emph {et~al.}(1982)\citenamefont {Blum},
  \citenamefont {Gruber}, \citenamefont {Lebowitz},\ and\ \citenamefont
  {Martin}}]{BluGruLebMar-82}%
  \BibitemOpen
  \bibfield  {author} {\bibinfo {author} {\bibnamefont {Blum}, \bibfnamefont
  {L.}}, \bibinfo {author} {\bibnamefont {Gruber}, \bibfnamefont {C.}},
  \bibinfo {author} {\bibnamefont {Lebowitz}, \bibfnamefont {J.~L.}}, \ and\
  \bibinfo {author} {\bibnamefont {Martin}, \bibfnamefont {P.}},\ }\bibfield
  {title} {\enquote {\bibinfo {title} {Perfect screening for charged
  systems},}\ }\href {\doibase 10.1103/PhysRevLett.48.1769} {\bibfield
  {journal} {\bibinfo  {journal} {Phys. Rev. Lett.}\ }\textbf {\bibinfo
  {volume} {48}},\ \bibinfo {pages} {1769--1772} (\bibinfo {year}
  {1982})}\BibitemShut {NoStop}%
\bibitem [{\citenamefont {Bogomolny}, \citenamefont {Giraud},\ and\
  \citenamefont {Schmit}(2009)}]{BogGirSch-09}%
  \BibitemOpen
  \bibfield  {author} {\bibinfo {author} {\bibnamefont {Bogomolny},
  \bibfnamefont {E.}}, \bibinfo {author} {\bibnamefont {Giraud}, \bibfnamefont
  {O.}}, \ and\ \bibinfo {author} {\bibnamefont {Schmit}, \bibfnamefont {C.}},\
  }\bibfield  {title} {\enquote {\bibinfo {title} {{Random Matrix Ensembles
  Associated with Lax Matrices}},}\ }\href {\doibase
  10.1103/PhysRevLett.103.054103} {\bibfield  {journal} {\bibinfo  {journal}
  {Phys. Rev. Lett.}\ }\textbf {\bibinfo {volume} {103}},\ \bibinfo {pages}
  {054103} (\bibinfo {year} {2009})}\BibitemShut {NoStop}%
\bibitem [{\citenamefont {Bohigas}, \citenamefont {Giannoni},\ and\
  \citenamefont {Schmit}(1984)}]{BohGiaSch-84}%
  \BibitemOpen
  \bibfield  {author} {\bibinfo {author} {\bibnamefont {Bohigas}, \bibfnamefont
  {O.}}, \bibinfo {author} {\bibnamefont {Giannoni}, \bibfnamefont {M.~J.}}, \
  and\ \bibinfo {author} {\bibnamefont {Schmit}, \bibfnamefont {C.}},\
  }\bibfield  {title} {\enquote {\bibinfo {title} {Characterization of chaotic
  quantum spectra and universality of level fluctuation laws},}\ }\href
  {\doibase 10.1103/PhysRevLett.52.1} {\bibfield  {journal} {\bibinfo
  {journal} {Phys. Rev. Lett.}\ }\textbf {\bibinfo {volume} {52}},\ \bibinfo
  {pages} {1--4} (\bibinfo {year} {1984})}\BibitemShut {NoStop}%
\bibitem [{\citenamefont {Bonsall}\ and\ \citenamefont
  {Maradudin}(1977)}]{BonMar-77}%
  \BibitemOpen
  \bibfield  {author} {\bibinfo {author} {\bibnamefont {Bonsall}, \bibfnamefont
  {L.}}\ and\ \bibinfo {author} {\bibnamefont {Maradudin}, \bibfnamefont
  {A.~A.}},\ }\bibfield  {title} {\enquote {\bibinfo {title} {Some static and
  dynamical properties of a two-dimensional {W}igner crystal},}\ }\href
  {\doibase 10.1103/PhysRevB.15.1959} {\bibfield  {journal} {\bibinfo
  {journal} {Phys. Rev. B}\ }\textbf {\bibinfo {volume} {15}},\ \bibinfo
  {pages} {1959--1973} (\bibinfo {year} {1977})}\BibitemShut {NoStop}%
\bibitem [{\citenamefont {Borodachov}, \citenamefont {Hardin},\ and\
  \citenamefont {Saff}(2008)}]{BorHarSaf-08}%
  \BibitemOpen
  \bibfield  {author} {\bibinfo {author} {\bibnamefont {Borodachov},
  \bibfnamefont {S.~V.}}, \bibinfo {author} {\bibnamefont {Hardin},
  \bibfnamefont {D.~P.}}, \ and\ \bibinfo {author} {\bibnamefont {Saff},
  \bibfnamefont {E.~B.}},\ }\bibfield  {title} {\enquote {\bibinfo {title}
  {Asymptotics for discrete weighted minimal {R}iesz energy problems on
  rectifiable sets},}\ }\href {\doibase 10.1090/S0002-9947-07-04416-9}
  {\bibfield  {journal} {\bibinfo  {journal} {Trans. Amer. Math. Soc.}\
  }\textbf {\bibinfo {volume} {360}},\ \bibinfo {pages} {1559--1580} (\bibinfo
  {year} {2008})}\BibitemShut {NoStop}%
\bibitem [{\citenamefont {Borodachov}, \citenamefont {Hardin},\ and\
  \citenamefont {Saff}(2019)}]{BorHarSaf-19}%
  \BibitemOpen
  \bibfield  {author} {\bibinfo {author} {\bibnamefont {Borodachov},
  \bibfnamefont {S.~V.}}, \bibinfo {author} {\bibnamefont {Hardin},
  \bibfnamefont {D.~P.}}, \ and\ \bibinfo {author} {\bibnamefont {Saff},
  \bibfnamefont {E.~B.}},\ }\href {\doibase 10.1007/978-0-387-84808-2} {\emph
  {\bibinfo {title} {Discrete energy on rectifiable sets}}},\ Springer
  Monographs in Mathematics\ (\bibinfo  {publisher} {Springer, New York},\
  \bibinfo {year} {2019})\ pp.\ \bibinfo {pages} {xviii+666}\BibitemShut
  {NoStop}%
\bibitem [{\citenamefont {Borodin}\ and\ \citenamefont
  {Serfaty}(2013)}]{BorSer-13}%
  \BibitemOpen
  \bibfield  {author} {\bibinfo {author} {\bibnamefont {Borodin}, \bibfnamefont
  {A.}}\ and\ \bibinfo {author} {\bibnamefont {Serfaty}, \bibfnamefont {S.}},\
  }\bibfield  {title} {\enquote {\bibinfo {title} {Renormalized energy
  concentration in random matrices},}\ }\href
  {https://doi.org/10.1007/s00220-013-1716-z} {\bibfield  {journal} {\bibinfo
  {journal} {Comm. Math. Phys.}\ }\textbf {\bibinfo {volume} {320}},\ \bibinfo
  {pages} {199--244} (\bibinfo {year} {2013})}\BibitemShut {NoStop}%
\bibitem [{\citenamefont {Borot}\ and\ \citenamefont
  {Guionnet}(2013)}]{BorGui-13}%
  \BibitemOpen
  \bibfield  {author} {\bibinfo {author} {\bibnamefont {Borot}, \bibfnamefont
  {G.}}\ and\ \bibinfo {author} {\bibnamefont {Guionnet}, \bibfnamefont {A.}},\
  }\bibfield  {title} {\enquote {\bibinfo {title} {Asymptotic expansion of
  {$\beta$} matrix models in the one-cut regime},}\ }\href {\doibase
  10.1007/s00220-012-1619-4} {\bibfield  {journal} {\bibinfo  {journal} {Comm.
  Math. Phys.}\ }\textbf {\bibinfo {volume} {317}},\ \bibinfo {pages}
  {447--483} (\bibinfo {year} {2013})}\BibitemShut {NoStop}%
\bibitem [{\citenamefont {Borot}, \citenamefont {Guionnet},\ and\ \citenamefont
  {Kozlowski}(2015)}]{BorGuiKoz-15}%
  \BibitemOpen
  \bibfield  {author} {\bibinfo {author} {\bibnamefont {Borot}, \bibfnamefont
  {G.}}, \bibinfo {author} {\bibnamefont {Guionnet}, \bibfnamefont {A.}}, \
  and\ \bibinfo {author} {\bibnamefont {Kozlowski}, \bibfnamefont {K.~K.}},\
  }\bibfield  {title} {\enquote {\bibinfo {title} {Large-{$N$} asymptotic
  expansion for mean field models with {C}oulomb gas interaction},}\ }\href
  {\doibase 10.1093/imrn/rnu260} {\bibfield  {journal} {\bibinfo  {journal}
  {Int. Math. Res. Not. IMRN}\ ,\ \bibinfo {pages} {10451--10524}} (\bibinfo
  {year} {2015})}\BibitemShut {NoStop}%
\bibitem [{\citenamefont {Borwein}, \citenamefont {Borwein},\ and\
  \citenamefont {Shail}(1989)}]{BorBorSha-89}%
  \BibitemOpen
  \bibfield  {author} {\bibinfo {author} {\bibnamefont {Borwein}, \bibfnamefont
  {D.}}, \bibinfo {author} {\bibnamefont {Borwein}, \bibfnamefont {J.~M.}}, \
  and\ \bibinfo {author} {\bibnamefont {Shail}, \bibfnamefont {R.}},\
  }\bibfield  {title} {\enquote {\bibinfo {title} {Analysis of certain lattice
  sums},}\ }\href {\doibase 10.1016/0022-247X(89)90032-2} {\bibfield  {journal}
  {\bibinfo  {journal} {J. Math. Anal. Appl.}\ }\textbf {\bibinfo {volume}
  {143}},\ \bibinfo {pages} {126--137} (\bibinfo {year} {1989})}\BibitemShut
  {NoStop}%
\bibitem [{\citenamefont {Borwein}\ \emph {et~al.}(1988)\citenamefont
  {Borwein}, \citenamefont {Borwein}, \citenamefont {Shail},\ and\
  \citenamefont {Zucker}}]{BorBorShaZuc-88}%
  \BibitemOpen
  \bibfield  {author} {\bibinfo {author} {\bibnamefont {Borwein}, \bibfnamefont
  {D.}}, \bibinfo {author} {\bibnamefont {Borwein}, \bibfnamefont {J.~M.}},
  \bibinfo {author} {\bibnamefont {Shail}, \bibfnamefont {R.}}, \ and\ \bibinfo
  {author} {\bibnamefont {Zucker}, \bibfnamefont {I.~J.}},\ }\bibfield  {title}
  {\enquote {\bibinfo {title} {Energy of static electron lattices},}\ }\href
  {http://stacks.iop.org/0305-4470/21/1519} {\bibfield  {journal} {\bibinfo
  {journal} {J. Phys. A}\ }\textbf {\bibinfo {volume} {21}},\ \bibinfo {pages}
  {1519--1531} (\bibinfo {year} {1988})}\BibitemShut {NoStop}%
\bibitem [{\citenamefont {Borwein}, \citenamefont {Borwein},\ and\
  \citenamefont {Straub}(2014)}]{BorBorStr-14}%
  \BibitemOpen
  \bibfield  {author} {\bibinfo {author} {\bibnamefont {Borwein}, \bibfnamefont
  {D.}}, \bibinfo {author} {\bibnamefont {Borwein}, \bibfnamefont {J.~M.}}, \
  and\ \bibinfo {author} {\bibnamefont {Straub}, \bibfnamefont {A.}},\
  }\bibfield  {title} {\enquote {\bibinfo {title} {On lattice sums and {W}igner
  limits},}\ }\href {\doibase 10.1016/j.jmaa.2014.01.008} {\bibfield  {journal}
  {\bibinfo  {journal} {J. Math. Anal. Appl.}\ }\textbf {\bibinfo {volume}
  {414}},\ \bibinfo {pages} {489--513} (\bibinfo {year} {2014})}\BibitemShut
  {NoStop}%
\bibitem [{\citenamefont {Borwein}, \citenamefont {Borwein},\ and\
  \citenamefont {Taylor}(1985)}]{BorBorTay-85}%
  \BibitemOpen
  \bibfield  {author} {\bibinfo {author} {\bibnamefont {Borwein}, \bibfnamefont
  {D.}}, \bibinfo {author} {\bibnamefont {Borwein}, \bibfnamefont {J.~M.}}, \
  and\ \bibinfo {author} {\bibnamefont {Taylor}, \bibfnamefont {K.~F.}},\
  }\bibfield  {title} {\enquote {\bibinfo {title} {Convergence of lattice sums
  and {M}adelung's constant},}\ }\href {\doibase 10.1063/1.526675} {\bibfield
  {journal} {\bibinfo  {journal} {J. Math. Phys.}\ }\textbf {\bibinfo {volume}
  {26}},\ \bibinfo {pages} {2999--3009} (\bibinfo {year} {1985})}\BibitemShut
  {NoStop}%
\bibitem [{\citenamefont {Borwein}\ \emph {et~al.}(2013)\citenamefont
  {Borwein}, \citenamefont {Glasser}, \citenamefont {McPhedran}, \citenamefont
  {Wan},\ and\ \citenamefont {Zucker}}]{BorGlaMPh-13}%
  \BibitemOpen
  \bibfield  {author} {\bibinfo {author} {\bibnamefont {Borwein}, \bibfnamefont
  {J.~M.}}, \bibinfo {author} {\bibnamefont {Glasser}, \bibfnamefont {M.~L.}},
  \bibinfo {author} {\bibnamefont {McPhedran}, \bibfnamefont {R.~C.}}, \bibinfo
  {author} {\bibnamefont {Wan}, \bibfnamefont {J.~G.}}, \ and\ \bibinfo
  {author} {\bibnamefont {Zucker}, \bibfnamefont {I.~J.}},\ }\href {\doibase
  10.1017/CBO9781139626804} {\emph {\bibinfo {title} {Lattice sums then and
  now}}},\ \bibinfo {series} {Encyclopedia of Mathematics and its
  Applications}, Vol.\ \bibinfo {volume} {150}\ (\bibinfo  {publisher}
  {Cambridge University Press, Cambridge},\ \bibinfo {year} {2013})\ pp.\
  \bibinfo {pages} {xx+368},\ \bibinfo {note} {with a foreword by Helaman
  Ferguson and Claire Ferguson}\BibitemShut {NoStop}%
\bibitem [{\citenamefont {Bourgade}, \citenamefont {Erd\H{o}s},\ and\
  \citenamefont {Yau}(2012)}]{BouErdYau-12}%
  \BibitemOpen
  \bibfield  {author} {\bibinfo {author} {\bibnamefont {Bourgade},
  \bibfnamefont {P.}}, \bibinfo {author} {\bibnamefont {Erd\H{o}s},
  \bibfnamefont {L.}}, \ and\ \bibinfo {author} {\bibnamefont {Yau},
  \bibfnamefont {H.-T.}},\ }\bibfield  {title} {\enquote {\bibinfo {title}
  {Bulk universality of general {$\beta$}-ensembles with non-convex
  potential},}\ }\href {\doibase 10.1063/1.4751478} {\bibfield  {journal}
  {\bibinfo  {journal} {J. Math. Phys.}\ }\textbf {\bibinfo {volume} {53}},\
  \bibinfo {pages} {095221, 19} (\bibinfo {year} {2012})}\BibitemShut {NoStop}%
\bibitem [{\citenamefont {Bourgade}, \citenamefont {Erd\H{o}s},\ and\
  \citenamefont {Yau}(2014)}]{BouErdYau-14}%
  \BibitemOpen
  \bibfield  {author} {\bibinfo {author} {\bibnamefont {Bourgade},
  \bibfnamefont {P.}}, \bibinfo {author} {\bibnamefont {Erd\H{o}s},
  \bibfnamefont {L.}}, \ and\ \bibinfo {author} {\bibnamefont {Yau},
  \bibfnamefont {H.-T.}},\ }\bibfield  {title} {\enquote {\bibinfo {title}
  {Universality of general {$\beta$}-ensembles},}\ }\href {\doibase
  10.1215/00127094-2649752} {\bibfield  {journal} {\bibinfo  {journal} {Duke
  Math. J.}\ }\textbf {\bibinfo {volume} {163}},\ \bibinfo {pages} {1127--1190}
  (\bibinfo {year} {2014})}\BibitemShut {NoStop}%
\bibitem [{\citenamefont {Bourgade}\ and\ \citenamefont
  {Keating}(2013)}]{BouKea-13}%
  \BibitemOpen
  \bibfield  {author} {\bibinfo {author} {\bibnamefont {Bourgade},
  \bibfnamefont {P.}}\ and\ \bibinfo {author} {\bibnamefont {Keating},
  \bibfnamefont {J.~P.}},\ }\bibfield  {title} {\enquote {\bibinfo {title}
  {Quantum chaos, random matrix theory, and the {R}iemann
  {$\zeta$}-function},}\ }in\ \href {\doibase 10.1007/978-3-0348-0697-8\_4}
  {\emph {\bibinfo {booktitle} {Chaos}}},\ \bibinfo {series} {Prog. Math.
  Phys.}, Vol.~\bibinfo {volume} {66}\ (\bibinfo  {publisher}
  {Birkh\"{a}user/Springer, Basel},\ \bibinfo {year} {2013})\ pp.\ \bibinfo
  {pages} {125--168},\ \bibinfo {note} {proceedings of the 14th Poincar\'e
  Seminar held in Paris, June 5, 2010}\BibitemShut {NoStop}%
\bibitem [{\citenamefont {Boursier}(2021)}]{Boursier-21_ppt}%
  \BibitemOpen
  \bibfield  {author} {\bibinfo {author} {\bibnamefont {Boursier},
  \bibfnamefont {J.}},\ }\href@noop {} {\enquote {\bibinfo {title} {Optimal
  local laws and {CLT} for {1D} long-range {R}iesz gases},}\ } (\bibinfo {year}
  {2021}),\ \Eprint {http://arxiv.org/abs/2112.05881} {arXiv:2112.05881
  [math.PR]} \BibitemShut {NoStop}%
\bibitem [{\citenamefont {Brascamp}\ and\ \citenamefont
  {Lieb}(1975)}]{BraLie-75}%
  \BibitemOpen
  \bibfield  {author} {\bibinfo {author} {\bibnamefont {Brascamp},
  \bibfnamefont {H.~J.}}\ and\ \bibinfo {author} {\bibnamefont {Lieb},
  \bibfnamefont {E.~H.}},\ }\bibfield  {title} {\enquote {\bibinfo {title}
  {Some inequalities for {G}aussian measures and the long-range order of the
  one-dimensional plasma},}\ }in\ \href@noop {} {\emph {\bibinfo {booktitle}
  {Functional Integration and Its Applications}}},\ \bibinfo {editor} {edited
  by\ \bibinfo {editor} {\bibfnamefont {A.}~\bibnamefont {Arthurs}}}\ (\bibinfo
   {publisher} {Clarendon Press},\ \bibinfo {address} {Oxford},\ \bibinfo
  {year} {1975})\BibitemShut {NoStop}%
\bibitem [{\citenamefont {Brauchart}, \citenamefont {Dragnev},\ and\
  \citenamefont {Saff}(2014)}]{BraDraSaf-14}%
  \BibitemOpen
  \bibfield  {author} {\bibinfo {author} {\bibnamefont {Brauchart},
  \bibfnamefont {J.~S.}}, \bibinfo {author} {\bibnamefont {Dragnev},
  \bibfnamefont {P.~D.}}, \ and\ \bibinfo {author} {\bibnamefont {Saff},
  \bibfnamefont {E.~B.}},\ }\bibfield  {title} {\enquote {\bibinfo {title}
  {Riesz external field problems on the hypersphere and optimal point
  separation},}\ }\href {\doibase 10.1007/s11118-014-9387-8} {\bibfield
  {journal} {\bibinfo  {journal} {Potential Anal.}\ }\textbf {\bibinfo {volume}
  {41}},\ \bibinfo {pages} {647--678} (\bibinfo {year} {2014})}\BibitemShut
  {NoStop}%
\bibitem [{\citenamefont {Brauchart}, \citenamefont {Hardin},\ and\
  \citenamefont {Saff}(2009)}]{BraHarSaf-09}%
  \BibitemOpen
  \bibfield  {author} {\bibinfo {author} {\bibnamefont {Brauchart},
  \bibfnamefont {J.~S.}}, \bibinfo {author} {\bibnamefont {Hardin},
  \bibfnamefont {D.~P.}}, \ and\ \bibinfo {author} {\bibnamefont {Saff},
  \bibfnamefont {E.~B.}},\ }\bibfield  {title} {\enquote {\bibinfo {title} {The
  {R}iesz energy of the {$N$}th roots of unity: an asymptotic expansion for
  large {$N$}},}\ }\href {\doibase 10.1112/blms/bdp034} {\bibfield  {journal}
  {\bibinfo  {journal} {Bull. Lond. Math. Soc.}\ }\textbf {\bibinfo {volume}
  {41}},\ \bibinfo {pages} {621--633} (\bibinfo {year} {2009})}\BibitemShut
  {NoStop}%
\bibitem [{\citenamefont {Brauchart}, \citenamefont {Hardin},\ and\
  \citenamefont {Saff}(2012)}]{BraHarSaf-12}%
  \BibitemOpen
  \bibfield  {author} {\bibinfo {author} {\bibnamefont {Brauchart},
  \bibfnamefont {J.~S.}}, \bibinfo {author} {\bibnamefont {Hardin},
  \bibfnamefont {D.~P.}}, \ and\ \bibinfo {author} {\bibnamefont {Saff},
  \bibfnamefont {E.~B.}},\ }\bibfield  {title} {\enquote {\bibinfo {title} {The
  next-order term for optimal {R}iesz and logarithmic energy asymptotics on the
  sphere},}\ }in\ \href {\doibase 10.1090/conm/578/11483} {\emph {\bibinfo
  {booktitle} {Recent advances in orthogonal polynomials, special functions,
  and their applications}}},\ \bibinfo {series} {Contemp. Math.}, Vol.\
  \bibinfo {volume} {578}\ (\bibinfo  {publisher} {Amer. Math. Soc.,
  Providence, RI},\ \bibinfo {year} {2012})\ pp.\ \bibinfo {pages}
  {31--61}\BibitemShut {NoStop}%
\bibitem [{\citenamefont {Brody}\ \emph {et~al.}(1981)\citenamefont {Brody},
  \citenamefont {Flores}, \citenamefont {French}, \citenamefont {Mello},
  \citenamefont {Pandey},\ and\ \citenamefont {Wong}}]{BroFloFreMelPanWon-81}%
  \BibitemOpen
  \bibfield  {author} {\bibinfo {author} {\bibnamefont {Brody}, \bibfnamefont
  {T.~A.}}, \bibinfo {author} {\bibnamefont {Flores}, \bibfnamefont {J.}},
  \bibinfo {author} {\bibnamefont {French}, \bibfnamefont {J.~B.}}, \bibinfo
  {author} {\bibnamefont {Mello}, \bibfnamefont {P.~A.}}, \bibinfo {author}
  {\bibnamefont {Pandey}, \bibfnamefont {A.}}, \ and\ \bibinfo {author}
  {\bibnamefont {Wong}, \bibfnamefont {S.~S.~M.}},\ }\bibfield  {title}
  {\enquote {\bibinfo {title} {Random-matrix physics: spectrum and strength
  fluctuations},}\ }\href {\doibase 10.1103/RevModPhys.53.385} {\bibfield
  {journal} {\bibinfo  {journal} {Rev. Mod. Phys.}\ }\textbf {\bibinfo {volume}
  {53}},\ \bibinfo {pages} {385--479} (\bibinfo {year} {1981})}\BibitemShut
  {NoStop}%
\bibitem [{\citenamefont {Brush}, \citenamefont {Sahlin},\ and\ \citenamefont
  {Teller}(1966)}]{BruSahTel-66}%
  \BibitemOpen
  \bibfield  {author} {\bibinfo {author} {\bibnamefont {Brush}, \bibfnamefont
  {S.~G.}}, \bibinfo {author} {\bibnamefont {Sahlin}, \bibfnamefont {H.~L.}}, \
  and\ \bibinfo {author} {\bibnamefont {Teller}, \bibfnamefont {E.}},\
  }\bibfield  {title} {\enquote {\bibinfo {title} {Monte {C}arlo study of a
  one-component plasma. i},}\ }\href {\doibase 10.1063/1.1727895} {\bibfield
  {journal} {\bibinfo  {journal} {J. Chem. Phys.}\ }\textbf {\bibinfo {volume}
  {45}},\ \bibinfo {pages} {2102--2118} (\bibinfo {year} {1966})}\BibitemShut
  {NoStop}%
\bibitem [{\citenamefont {Brydges}(1978)}]{Brydges-78}%
  \BibitemOpen
  \bibfield  {author} {\bibinfo {author} {\bibnamefont {Brydges}, \bibfnamefont
  {D.~C.}},\ }\bibfield  {title} {\enquote {\bibinfo {title} {A rigorous
  approach to {D}ebye screening in dilute classical {C}oulomb systems},}\
  }\href {http://projecteuclid.org/euclid.cmp/1103901494} {\bibfield  {journal}
  {\bibinfo  {journal} {Comm. Math. Phys.}\ }\textbf {\bibinfo {volume} {58}},\
  \bibinfo {pages} {313--350} (\bibinfo {year} {1978})}\BibitemShut {NoStop}%
\bibitem [{\citenamefont {Brydges}\ and\ \citenamefont
  {Federbush}(1980)}]{BryFed-80}%
  \BibitemOpen
  \bibfield  {author} {\bibinfo {author} {\bibnamefont {Brydges}, \bibfnamefont
  {D.~C.}}\ and\ \bibinfo {author} {\bibnamefont {Federbush}, \bibfnamefont
  {P.}},\ }\bibfield  {title} {\enquote {\bibinfo {title} {Debye screening},}\
  }\href {http://projecteuclid.org/euclid.cmp/1103907873} {\bibfield  {journal}
  {\bibinfo  {journal} {Comm. Math. Phys.}\ }\textbf {\bibinfo {volume} {73}},\
  \bibinfo {pages} {197--246} (\bibinfo {year} {1980})}\BibitemShut {NoStop}%
\bibitem [{\citenamefont {Brydges}\ and\ \citenamefont
  {Martin}(1999)}]{BryMar-99}%
  \BibitemOpen
  \bibfield  {author} {\bibinfo {author} {\bibnamefont {Brydges}, \bibfnamefont
  {D.~C.}}\ and\ \bibinfo {author} {\bibnamefont {Martin}, \bibfnamefont
  {P.~A.}},\ }\bibfield  {title} {\enquote {\bibinfo {title} {Coulomb systems
  at low density: A review},}\ }\href {\doibase 10.1023/A:1004600603161}
  {\bibfield  {journal} {\bibinfo  {journal} {Journal of Statistical Physics}\
  }\textbf {\bibinfo {volume} {96}},\ \bibinfo {pages} {1163--1330} (\bibinfo
  {year} {1999})}\BibitemShut {NoStop}%
\bibitem [{\citenamefont {Caffarelli}\ and\ \citenamefont
  {Silvestre}(2007)}]{CafSil-07}%
  \BibitemOpen
  \bibfield  {author} {\bibinfo {author} {\bibnamefont {Caffarelli},
  \bibfnamefont {L.}}\ and\ \bibinfo {author} {\bibnamefont {Silvestre},
  \bibfnamefont {L.}},\ }\bibfield  {title} {\enquote {\bibinfo {title} {An
  extension problem related to the fractional {L}aplacian},}\ }\href {\doibase
  10.1080/03605300600987306} {\bibfield  {journal} {\bibinfo  {journal} {Comm.
  Partial Differential Equations}\ }\textbf {\bibinfo {volume} {32}},\ \bibinfo
  {pages} {1245--1260} (\bibinfo {year} {2007})}\BibitemShut {NoStop}%
\bibitem [{\citenamefont {Caglioti}\ \emph {et~al.}(1992)\citenamefont
  {Caglioti}, \citenamefont {Lions}, \citenamefont {Marchioro},\ and\
  \citenamefont {Pulvirenti}}]{CagLioMarPul-92}%
  \BibitemOpen
  \bibfield  {author} {\bibinfo {author} {\bibnamefont {Caglioti},
  \bibfnamefont {E.}}, \bibinfo {author} {\bibnamefont {Lions}, \bibfnamefont
  {P.-L.}}, \bibinfo {author} {\bibnamefont {Marchioro}, \bibfnamefont {C.}}, \
  and\ \bibinfo {author} {\bibnamefont {Pulvirenti}, \bibfnamefont {M.}},\
  }\bibfield  {title} {\enquote {\bibinfo {title} {A special class of
  stationary flows for two-dimensional {E}uler equations: a statistical
  mechanics description},}\ }\href
  {http://projecteuclid.org/getRecord?id=euclid.cmp/1104249078} {\bibfield
  {journal} {\bibinfo  {journal} {Comm. Math. Phys.}\ }\textbf {\bibinfo
  {volume} {143}},\ \bibinfo {pages} {501--525} (\bibinfo {year}
  {1992})}\BibitemShut {NoStop}%
\bibitem [{\citenamefont {Caglioti}\ \emph {et~al.}(1995)\citenamefont
  {Caglioti}, \citenamefont {Lions}, \citenamefont {Marchioro},\ and\
  \citenamefont {Pulvirenti}}]{CagLioMarPul-95}%
  \BibitemOpen
  \bibfield  {author} {\bibinfo {author} {\bibnamefont {Caglioti},
  \bibfnamefont {E.}}, \bibinfo {author} {\bibnamefont {Lions}, \bibfnamefont
  {P.-L.}}, \bibinfo {author} {\bibnamefont {Marchioro}, \bibfnamefont {C.}}, \
  and\ \bibinfo {author} {\bibnamefont {Pulvirenti}, \bibfnamefont {M.}},\
  }\bibfield  {title} {\enquote {\bibinfo {title} {A special class of
  stationary flows for two-dimensional {E}uler equations: a statistical
  mechanics description. {II}},}\ }\href
  {http://projecteuclid.org/getRecord?id=euclid.cmp/1104275293} {\bibfield
  {journal} {\bibinfo  {journal} {Comm. Math. Phys.}\ }\textbf {\bibinfo
  {volume} {174}},\ \bibinfo {pages} {229--260} (\bibinfo {year}
  {1995})}\BibitemShut {NoStop}%
\bibitem [{\citenamefont {Caillol}\ and\ \citenamefont
  {Levesque}(1986)}]{CaiLev-86}%
  \BibitemOpen
  \bibfield  {author} {\bibinfo {author} {\bibnamefont {Caillol}, \bibfnamefont
  {J.~M.}}\ and\ \bibinfo {author} {\bibnamefont {Levesque}, \bibfnamefont
  {D.}},\ }\bibfield  {title} {\enquote {\bibinfo {title} {Low-density phase
  diagram of the two-dimensional {C}oulomb gas},}\ }\href {\doibase
  10.1103/PhysRevB.33.499} {\bibfield  {journal} {\bibinfo  {journal} {Phys.
  Rev. B}\ }\textbf {\bibinfo {volume} {33}},\ \bibinfo {pages} {499--509}
  (\bibinfo {year} {1986})}\BibitemShut {NoStop}%
\bibitem [{\citenamefont {Caillol}\ \emph {et~al.}(1982)\citenamefont
  {Caillol}, \citenamefont {Levesque}, \citenamefont {Weis},\ and\
  \citenamefont {Hansen}}]{CaiLevWeiHan-82}%
  \BibitemOpen
  \bibfield  {author} {\bibinfo {author} {\bibnamefont {Caillol}, \bibfnamefont
  {J.~M.}}, \bibinfo {author} {\bibnamefont {Levesque}, \bibfnamefont {D.}},
  \bibinfo {author} {\bibnamefont {Weis}, \bibfnamefont {J.~J.}}, \ and\
  \bibinfo {author} {\bibnamefont {Hansen}, \bibfnamefont {J.~P.}},\ }\bibfield
   {title} {\enquote {\bibinfo {title} {{A Monte Carlo study of the classical
  two-dimensional one-component plasma}},}\ }\href {\doibase
  10.1007/BF01012609} {\bibfield  {journal} {\bibinfo  {journal} {J. Stat.
  Phys.}\ }\textbf {\bibinfo {volume} {28}},\ \bibinfo {pages} {325--349}
  (\bibinfo {year} {1982})}\BibitemShut {NoStop}%
\bibitem [{\citenamefont {Callaway}(1991)}]{Callaway-91}%
  \BibitemOpen
  \bibfield  {author} {\bibinfo {author} {\bibnamefont {Callaway},
  \bibfnamefont {D.~J.~E.}},\ }\bibfield  {title} {\enquote {\bibinfo {title}
  {Random matrices, fractional statistics, and the quantum {H}all effect},}\
  }\href {\doibase 10.1103/PhysRevB.43.8641} {\bibfield  {journal} {\bibinfo
  {journal} {Phys. Rev. B}\ }\textbf {\bibinfo {volume} {43}},\ \bibinfo
  {pages} {8641--8643} (\bibinfo {year} {1991})}\BibitemShut {NoStop}%
\bibitem [{\citenamefont {Calogero}(1971)}]{Calogero-71}%
  \BibitemOpen
  \bibfield  {author} {\bibinfo {author} {\bibnamefont {Calogero},
  \bibfnamefont {F.}},\ }\bibfield  {title} {\enquote {\bibinfo {title}
  {Solution of the one-dimensional {$N$}-body problems with quadratic and/or
  inversely quadratic pair potentials},}\ }\href {\doibase 10.1063/1.1665604}
  {\bibfield  {journal} {\bibinfo  {journal} {J. Mathematical Phys.}\ }\textbf
  {\bibinfo {volume} {12}},\ \bibinfo {pages} {419--436} (\bibinfo {year}
  {1971})}\BibitemShut {NoStop}%
\bibitem [{\citenamefont {Campa}\ \emph {et~al.}(2014)\citenamefont {Campa},
  \citenamefont {Dauxois}, \citenamefont {Fanelli},\ and\ \citenamefont
  {Ruffo}}]{CamDauFanRuf-14}%
  \BibitemOpen
  \bibfield  {author} {\bibinfo {author} {\bibnamefont {Campa}, \bibfnamefont
  {A.}}, \bibinfo {author} {\bibnamefont {Dauxois}, \bibfnamefont {T.}},
  \bibinfo {author} {\bibnamefont {Fanelli}, \bibfnamefont {D.}}, \ and\
  \bibinfo {author} {\bibnamefont {Ruffo}, \bibfnamefont {S.}},\ }\href
  {\doibase 10.1093/acprof:oso/9780199581931.001.0001} {\emph {\bibinfo {title}
  {{Physics of Long-Range Interacting Systems}}}}\ (\bibinfo  {publisher}
  {Oxford University Press},\ \bibinfo {year} {2014})\BibitemShut {NoStop}%
\bibitem [{\citenamefont {Campa}, \citenamefont {Dauxois},\ and\ \citenamefont
  {Ruffo}(2009)}]{CamDauRuf-09}%
  \BibitemOpen
  \bibfield  {author} {\bibinfo {author} {\bibnamefont {Campa}, \bibfnamefont
  {A.}}, \bibinfo {author} {\bibnamefont {Dauxois}, \bibfnamefont {T.}}, \ and\
  \bibinfo {author} {\bibnamefont {Ruffo}, \bibfnamefont {S.}},\ }\bibfield
  {title} {\enquote {\bibinfo {title} {Statistical mechanics and dynamics of
  solvable models with long-range interactions},}\ }\href {\doibase
  https://doi.org/10.1016/j.physrep.2009.07.001} {\bibfield  {journal}
  {\bibinfo  {journal} {Physics Reports}\ }\textbf {\bibinfo {volume} {480}},\
  \bibinfo {pages} {57--159} (\bibinfo {year} {2009})}\BibitemShut {NoStop}%
\bibitem [{\citenamefont {{Campanino}}, \citenamefont {{Capocaccia}},\ and\
  \citenamefont {{Olivieri}}(1983)}]{CamCapOl-83}%
  \BibitemOpen
  \bibfield  {author} {\bibinfo {author} {\bibnamefont {{Campanino}},
  \bibfnamefont {M.}}, \bibinfo {author} {\bibnamefont {{Capocaccia}},
  \bibfnamefont {D.}}, \ and\ \bibinfo {author} {\bibnamefont {{Olivieri}},
  \bibfnamefont {E.}},\ }\bibfield  {title} {\enquote {\bibinfo {title}
  {{Analyticity for one-dimensional systems with long-range superstable
  interactions}},}\ }\href {\doibase 10.1007/BF01009805} {\bibfield  {journal}
  {\bibinfo  {journal} {{J. Stat. Phys.}}\ }\textbf {\bibinfo {volume} {33}},\
  \bibinfo {pages} {437--476} (\bibinfo {year} {1983})}\BibitemShut {NoStop}%
\bibitem [{\citenamefont {Canc{\`e}s}, \citenamefont {Lahbabi},\ and\
  \citenamefont {Lewin}(2013)}]{CanLahLew-13}%
  \BibitemOpen
  \bibfield  {author} {\bibinfo {author} {\bibnamefont {Canc{\`e}s},
  \bibfnamefont {{\'E}.}}, \bibinfo {author} {\bibnamefont {Lahbabi},
  \bibfnamefont {S.}}, \ and\ \bibinfo {author} {\bibnamefont {Lewin},
  \bibfnamefont {M.}},\ }\bibfield  {title} {\enquote {\bibinfo {title}
  {Mean-field models for disordered crystals},}\ }\href {\doibase
  10.1016/j.matpur.2012.12.003} {\bibfield  {journal} {\bibinfo  {journal} {J.
  Math. Pures Appl.}\ }\textbf {\bibinfo {volume} {100}},\ \bibinfo {pages}
  {241--274} (\bibinfo {year} {2013})},\ \Eprint
  {http://arxiv.org/abs/1203.0402} {1203.0402} \BibitemShut {NoStop}%
\bibitem [{\citenamefont {C{\^a}ndido}, \citenamefont {Bernu},\ and\
  \citenamefont {Ceperley}(2004)}]{CanBerCep-04}%
  \BibitemOpen
  \bibfield  {author} {\bibinfo {author} {\bibnamefont {C{\^a}ndido},
  \bibfnamefont {L.}}, \bibinfo {author} {\bibnamefont {Bernu}, \bibfnamefont
  {B.}}, \ and\ \bibinfo {author} {\bibnamefont {Ceperley}, \bibfnamefont
  {D.~M.}},\ }\bibfield  {title} {\enquote {\bibinfo {title} {Magnetic ordering
  of the three-dimensional wigner crystal},}\ }\href {\doibase
  10.1103/PhysRevB.70.094413} {\bibfield  {journal} {\bibinfo  {journal} {Phys.
  Rev. B}\ }\textbf {\bibinfo {volume} {70}},\ \bibinfo {pages} {094413}
  (\bibinfo {year} {2004})}\BibitemShut {NoStop}%
\bibitem [{\citenamefont {Cassels}(1959)}]{Cassels-59}%
  \BibitemOpen
  \bibfield  {author} {\bibinfo {author} {\bibnamefont {Cassels}, \bibfnamefont
  {J.~W.~S.}},\ }\bibfield  {title} {\enquote {\bibinfo {title} {On a problem
  of {R}ankin about the {E}pstein zeta-function},}\ }\href {\doibase
  10.1017/S2040618500033906} {\bibfield  {journal} {\bibinfo  {journal} {Proc.
  Glasgow Math. Assoc.}\ }\textbf {\bibinfo {volume} {4}},\ \bibinfo {pages}
  {73--80} (\bibinfo {year} {1959})}\BibitemShut {NoStop}%
\bibitem [{\citenamefont {Catto}, \citenamefont {{Le Bris}},\ and\
  \citenamefont {Lions}(1998)}]{CatBriLio-98}%
  \BibitemOpen
  \bibfield  {author} {\bibinfo {author} {\bibnamefont {Catto}, \bibfnamefont
  {I.}}, \bibinfo {author} {\bibnamefont {{Le Bris}}, \bibfnamefont {C.}}, \
  and\ \bibinfo {author} {\bibnamefont {Lions}, \bibfnamefont {P.-L.}},\
  }\href@noop {} {\emph {\bibinfo {title} {The mathematical theory of
  thermodynamic limits: {T}homas-{F}ermi type models}}},\ Oxford Mathematical
  Monographs\ (\bibinfo  {publisher} {The Clarendon Press Oxford University
  Press},\ \bibinfo {address} {New York},\ \bibinfo {year} {1998})\ pp.\
  \bibinfo {pages} {xiv+277}\BibitemShut {NoStop}%
\bibitem [{\citenamefont {Catto}, \citenamefont {{Le Bris}},\ and\
  \citenamefont {Lions}(2001)}]{CatBriLio-01}%
  \BibitemOpen
  \bibfield  {author} {\bibinfo {author} {\bibnamefont {Catto}, \bibfnamefont
  {I.}}, \bibinfo {author} {\bibnamefont {{Le Bris}}, \bibfnamefont {C.}}, \
  and\ \bibinfo {author} {\bibnamefont {Lions}, \bibfnamefont {P.-L.}},\
  }\bibfield  {title} {\enquote {\bibinfo {title} {On the thermodynamic limit
  for {H}artree-{F}ock type models},}\ }\href {\doibase
  10.1016/S0294-1449(00)00059-7} {\bibfield  {journal} {\bibinfo  {journal}
  {Ann. Inst. H. Poincar{\'e} Anal. Non Lin{\'e}aire}\ }\textbf {\bibinfo
  {volume} {18}},\ \bibinfo {pages} {687--760} (\bibinfo {year}
  {2001})}\BibitemShut {NoStop}%
\bibitem [{\citenamefont {Ceperley}\ and\ \citenamefont
  {Alder}(1980)}]{AldCep-80}%
  \BibitemOpen
  \bibfield  {author} {\bibinfo {author} {\bibnamefont {Ceperley},
  \bibfnamefont {D.~M.}}\ and\ \bibinfo {author} {\bibnamefont {Alder},
  \bibfnamefont {B.~J.}},\ }\bibfield  {title} {\enquote {\bibinfo {title}
  {{Ground State of the Electron Gas by a Stochastic Method}},}\ }\href
  {\doibase 10.1103/PhysRevLett.45.566} {\bibfield  {journal} {\bibinfo
  {journal} {Phys. Rev. Lett.}\ }\textbf {\bibinfo {volume} {45}},\ \bibinfo
  {pages} {566--569} (\bibinfo {year} {1980})}\BibitemShut {NoStop}%
\bibitem [{\citenamefont {Chafa{\"i}}(2021)}]{Chafai-21_ppt}%
  \BibitemOpen
  \bibfield  {author} {\bibinfo {author} {\bibnamefont {Chafa{\"i}},
  \bibfnamefont {D.}},\ }\href@noop {} {\enquote {\bibinfo {title} {Aspects of
  coulomb gases},}\ } (\bibinfo {year} {2021}),\ \Eprint
  {http://arxiv.org/abs/2108.10653} {arXiv:2108.10653 [math.PR]} \BibitemShut
  {NoStop}%
\bibitem [{\citenamefont {Chakravarty}\ and\ \citenamefont
  {Dasgupta}(1980)}]{ChaDas-80}%
  \BibitemOpen
  \bibfield  {author} {\bibinfo {author} {\bibnamefont {Chakravarty},
  \bibfnamefont {S.}}\ and\ \bibinfo {author} {\bibnamefont {Dasgupta},
  \bibfnamefont {C.}},\ }\bibfield  {title} {\enquote {\bibinfo {title}
  {Absence of crystalline order in two dimensions},}\ }\href {\doibase
  10.1103/PhysRevB.22.369} {\bibfield  {journal} {\bibinfo  {journal} {Phys.
  Rev. B}\ }\textbf {\bibinfo {volume} {22}},\ \bibinfo {pages} {369--372}
  (\bibinfo {year} {1980})}\BibitemShut {NoStop}%
\bibitem [{\citenamefont {Chen}\ and\ \citenamefont
  {Oshita}(2007)}]{CheOsh-07}%
  \BibitemOpen
  \bibfield  {author} {\bibinfo {author} {\bibnamefont {Chen}, \bibfnamefont
  {X.}}\ and\ \bibinfo {author} {\bibnamefont {Oshita}, \bibfnamefont {Y.}},\
  }\bibfield  {title} {\enquote {\bibinfo {title} {An application of the
  modular function in nonlocal variational problems},}\ }\href {\doibase
  10.1007/s00205-007-0050-z} {\bibfield  {journal} {\bibinfo  {journal} {Arch.
  Ration. Mech. Anal.}\ }\textbf {\bibinfo {volume} {186}},\ \bibinfo {pages}
  {109--132} (\bibinfo {year} {2007})}\BibitemShut {NoStop}%
\bibitem [{\citenamefont {Choquard}(1975)}]{Choquard-75}%
  \BibitemOpen
  \bibfield  {author} {\bibinfo {author} {\bibnamefont {Choquard},
  \bibfnamefont {P.}},\ }\bibfield  {title} {\enquote {\bibinfo {title} {On the
  statistical mechanics of one-dimensional {C}oulomb systems},}\ }\href@noop {}
  {\bibfield  {journal} {\bibinfo  {journal} {Helv. Phys. Acta}\ }\textbf
  {\bibinfo {volume} {48}},\ \bibinfo {pages} {585--598} (\bibinfo {year}
  {1975})}\BibitemShut {NoStop}%
\bibitem [{\citenamefont {Choquard}(1978)}]{Choquard-78}%
  \BibitemOpen
  \bibfield  {author} {\bibinfo {author} {\bibnamefont {Choquard},
  \bibfnamefont {P.}},\ }\enquote {\bibinfo {title} {Selected topics on the
  equilibrium statistical mechanics of {C}oulomb systems},}\ in\ \href
  {\doibase 10.1007/978-1-4613-2868-1_11} {\emph {\bibinfo {booktitle}
  {Strongly Coupled Plasmas}}},\ \bibinfo {editor} {edited by\ \bibinfo
  {editor} {\bibfnamefont {G.}~\bibnamefont {Kalman}}\ and\ \bibinfo {editor}
  {\bibfnamefont {P.}~\bibnamefont {Carini}}}\ (\bibinfo  {publisher} {Springer
  US},\ \bibinfo {address} {Boston, MA},\ \bibinfo {year} {1978})\ pp.\
  \bibinfo {pages} {347--406}\BibitemShut {NoStop}%
\bibitem [{\citenamefont {Choquard}(2000)}]{Choquard-00}%
  \BibitemOpen
  \bibfield  {author} {\bibinfo {author} {\bibnamefont {Choquard},
  \bibfnamefont {P.}},\ }\bibfield  {title} {\enquote {\bibinfo {title}
  {Classical and quantum partition functions of the
  {C}alogero-{M}oser-{S}utherland model},}\ }in\ \href {\doibase
  10.1007/978-1-4612-1206-5\_8} {\emph {\bibinfo {booktitle}
  {Calogero-{M}oser-{S}utherland models ({M}ontr\'{e}al, {QC}, 1997)}}},\
  \bibinfo {series and number} {CRM Ser. Math. Phys.}\ (\bibinfo  {publisher}
  {Springer, New York},\ \bibinfo {year} {2000})\ pp.\ \bibinfo {pages}
  {117--125}\BibitemShut {NoStop}%
\bibitem [{\citenamefont {{Choquard}}, \citenamefont {{Favre}},\ and\
  \citenamefont {{Gruber}}(1980)}]{ChoFavGru-80}%
  \BibitemOpen
  \bibfield  {author} {\bibinfo {author} {\bibnamefont {{Choquard}},
  \bibfnamefont {P.}}, \bibinfo {author} {\bibnamefont {{Favre}}, \bibfnamefont
  {P.}}, \ and\ \bibinfo {author} {\bibnamefont {{Gruber}}, \bibfnamefont
  {C.}},\ }\bibfield  {title} {\enquote {\bibinfo {title} {On the equation of
  state of classical one-component systems with long-range forces},}\ }\href
  {\doibase 10.1007/BF01011574} {\bibfield  {journal} {\bibinfo  {journal} {J.
  Stat. Phys.}\ }\textbf {\bibinfo {volume} {23}},\ \bibinfo {pages} {405--442}
  (\bibinfo {year} {1980})}\BibitemShut {NoStop}%
\bibitem [{\citenamefont {Choquard}\ \emph {et~al.}(1981)\citenamefont
  {Choquard}, \citenamefont {Kunz}, \citenamefont {Martin},\ and\ \citenamefont
  {Navet}}]{ChoKunMarNav-81}%
  \BibitemOpen
  \bibfield  {author} {\bibinfo {author} {\bibnamefont {Choquard},
  \bibfnamefont {P.}}, \bibinfo {author} {\bibnamefont {Kunz}, \bibfnamefont
  {H.}}, \bibinfo {author} {\bibnamefont {Martin}, \bibfnamefont {P.~A.}}, \
  and\ \bibinfo {author} {\bibnamefont {Navet}, \bibfnamefont {M.}},\
  }\bibfield  {title} {\enquote {\bibinfo {title} {One-dimensional {C}oulomb
  systems},}\ }in\ \href@noop {} {\emph {\bibinfo {booktitle} {Physics in One
  Dimension}}},\ \bibinfo {editor} {edited by\ \bibinfo {editor} {\bibfnamefont
  {J.}~\bibnamefont {Bernasconi}}\ and\ \bibinfo {editor} {\bibfnamefont
  {T.}~\bibnamefont {Schneider}}}\ (\bibinfo  {publisher} {Springer Berlin
  Heidelberg},\ \bibinfo {address} {Berlin, Heidelberg},\ \bibinfo {year}
  {1981})\ pp.\ \bibinfo {pages} {335--350}\BibitemShut {NoStop}%
\bibitem [{\citenamefont {Choquet}(1959)}]{Choquet-58}%
  \BibitemOpen
  \bibfield  {author} {\bibinfo {author} {\bibnamefont {Choquet}, \bibfnamefont
  {G.}},\ }\bibfield  {title} {\enquote {\bibinfo {title} {Diam{\`e}tre
  transfini et comparaison de diverses capacit{\'e}s},}\ }in\ \href@noop {}
  {\emph {\bibinfo {booktitle} {S{\'e}minaire Brelot-Choquet-Deny. Th{\'e}orie
  du potentiel}}}\ (\bibinfo {year} {1958-1959})\ \bibinfo {note} {exp.
  4}\BibitemShut {NoStop}%
\bibitem [{\citenamefont {Chu}\ and\ \citenamefont {Lin}(1994)}]{ChuLin-94}%
  \BibitemOpen
  \bibfield  {author} {\bibinfo {author} {\bibnamefont {Chu}, \bibfnamefont
  {J.}}\ and\ \bibinfo {author} {\bibnamefont {Lin}, \bibfnamefont {I.}},\
  }\bibfield  {title} {\enquote {\bibinfo {title} {Coulomb lattice in a weakly
  ionized colloidal plasma},}\ }\href {\doibase 10.1016/0378-4371(94)90498-7}
  {\bibfield  {journal} {\bibinfo  {journal} {Physica A: Statistical Mechanics
  and its Applications}\ }\textbf {\bibinfo {volume} {205}},\ \bibinfo {pages}
  {183--190} (\bibinfo {year} {1994})}\BibitemShut {NoStop}%
\bibitem [{\citenamefont {Clark}, \citenamefont {Casula},\ and\ \citenamefont
  {Ceperley}(2009)}]{ClaCasCep-09}%
  \BibitemOpen
  \bibfield  {author} {\bibinfo {author} {\bibnamefont {Clark}, \bibfnamefont
  {B.~K.}}, \bibinfo {author} {\bibnamefont {Casula}, \bibfnamefont {M.}}, \
  and\ \bibinfo {author} {\bibnamefont {Ceperley}, \bibfnamefont {D.~M.}},\
  }\bibfield  {title} {\enquote {\bibinfo {title} {{Hexatic and Mesoscopic
  Phases in a 2D Quantum Coulomb System}},}\ }\href {\doibase
  10.1103/PhysRevLett.103.055701} {\bibfield  {journal} {\bibinfo  {journal}
  {Phys. Rev. Lett.}\ }\textbf {\bibinfo {volume} {103}},\ \bibinfo {pages}
  {055701} (\bibinfo {year} {2009})}\BibitemShut {NoStop}%
\bibitem [{\citenamefont {Cohn}\ and\ \citenamefont {Kumar}(2007)}]{CohKum-07}%
  \BibitemOpen
  \bibfield  {author} {\bibinfo {author} {\bibnamefont {Cohn}, \bibfnamefont
  {H.}}\ and\ \bibinfo {author} {\bibnamefont {Kumar}, \bibfnamefont {A.}},\
  }\bibfield  {title} {\enquote {\bibinfo {title} {Universally optimal
  distribution of points on spheres},}\ }\href {\doibase
  10.1090/S0894-0347-06-00546-7} {\bibfield  {journal} {\bibinfo  {journal} {J.
  Amer. Math. Soc.}\ }\textbf {\bibinfo {volume} {20}},\ \bibinfo {pages}
  {99--148} (\bibinfo {year} {2007})}\BibitemShut {NoStop}%
\bibitem [{\citenamefont {Cohn}\ \emph {et~al.}(2019)\citenamefont {Cohn},
  \citenamefont {Kumar}, \citenamefont {Miller}, \citenamefont {Radchenko},\
  and\ \citenamefont {Viazovska}}]{CohKumMilRadVia-19_ppt}%
  \BibitemOpen
  \bibfield  {author} {\bibinfo {author} {\bibnamefont {Cohn}, \bibfnamefont
  {H.}}, \bibinfo {author} {\bibnamefont {Kumar}, \bibfnamefont {A.}}, \bibinfo
  {author} {\bibnamefont {Miller}, \bibfnamefont {S.~D.}}, \bibinfo {author}
  {\bibnamefont {Radchenko}, \bibfnamefont {D.}}, \ and\ \bibinfo {author}
  {\bibnamefont {Viazovska}, \bibfnamefont {M.}},\ }\bibfield  {title}
  {\enquote {\bibinfo {title} {Universal optimality of the ${E}_8$ and {L}eech
  lattices and interpolation formulas},}\ }\href@noop {} {\bibfield  {journal}
  {\bibinfo  {journal} {Ann. of Math.}\ } (\bibinfo {year} {2019})},\ \bibinfo
  {note} {to appear},\ \Eprint {http://arxiv.org/abs/1902.05438}
  {arXiv:1902.05438 [math.MG]} \BibitemShut {NoStop}%
\bibitem [{\citenamefont {Coldwell-Horsfall}\ and\ \citenamefont
  {Maradudin}(1960)}]{ColMar-60}%
  \BibitemOpen
  \bibfield  {author} {\bibinfo {author} {\bibnamefont {Coldwell-Horsfall},
  \bibfnamefont {R.~A.}}\ and\ \bibinfo {author} {\bibnamefont {Maradudin},
  \bibfnamefont {A.~A.}},\ }\bibfield  {title} {\enquote {\bibinfo {title}
  {Zero-point energy of an electron lattice},}\ }\href {\doibase
  10.1063/1.1703670} {\bibfield  {journal} {\bibinfo  {journal} {J. Math.
  Phys.}\ }\textbf {\bibinfo {volume} {1}},\ \bibinfo {pages} {395--404}
  (\bibinfo {year} {1960})}\BibitemShut {NoStop}%
\bibitem [{\citenamefont {Colombo}, \citenamefont {{De Pascale}},\ and\
  \citenamefont {{Di Marino}}(2015)}]{ColPasMar-15}%
  \BibitemOpen
  \bibfield  {author} {\bibinfo {author} {\bibnamefont {Colombo}, \bibfnamefont
  {M.}}, \bibinfo {author} {\bibnamefont {{De Pascale}}, \bibfnamefont {L.}}, \
  and\ \bibinfo {author} {\bibnamefont {{Di Marino}}, \bibfnamefont {S.}},\
  }\bibfield  {title} {\enquote {\bibinfo {title} {Multimarginal optimal
  transport maps for one-dimensional repulsive costs},}\ }\href {\doibase
  10.4153/CJM-2014-011-x} {\bibfield  {journal} {\bibinfo  {journal} {Canad. J.
  Math.}\ }\textbf {\bibinfo {volume} {67}},\ \bibinfo {pages} {350--368}
  (\bibinfo {year} {2015})}\BibitemShut {NoStop}%
\bibitem [{\citenamefont {Conlon}, \citenamefont {Lieb},\ and\ \citenamefont
  {Yau}(1989)}]{ConLieYau-89}%
  \BibitemOpen
  \bibfield  {author} {\bibinfo {author} {\bibnamefont {Conlon}, \bibfnamefont
  {J.~G.}}, \bibinfo {author} {\bibnamefont {Lieb}, \bibfnamefont {E.~H.}}, \
  and\ \bibinfo {author} {\bibnamefont {Yau}, \bibfnamefont {H.-T.}},\
  }\bibfield  {title} {\enquote {\bibinfo {title} {The {C}oulomb gas at low
  temperature and low density},}\ }\href {\doibase 10.1007/BF01217775}
  {\bibfield  {journal} {\bibinfo  {journal} {Commun. Math. Phys.}\ }\textbf
  {\bibinfo {volume} {125}},\ \bibinfo {pages} {153--180} (\bibinfo {year}
  {1989})}\BibitemShut {NoStop}%
\bibitem [{\citenamefont {Conway}\ and\ \citenamefont
  {Sloane}(1999)}]{ConSlo-99}%
  \BibitemOpen
  \bibfield  {author} {\bibinfo {author} {\bibnamefont {Conway}, \bibfnamefont
  {J.~H.}}\ and\ \bibinfo {author} {\bibnamefont {Sloane}, \bibfnamefont
  {N.~J.~A.}},\ }\href {\doibase 10.1007/978-1-4757-6568-7} {\emph {\bibinfo
  {title} {Sphere packings, lattices and groups}}},\ \bibinfo {edition} {3rd}\
  ed.,\ \bibinfo {series} {Grundlehren der Mathematischen Wissenschaften
  [Fundamental Principles of Mathematical Sciences]}, Vol.\ \bibinfo {volume}
  {290}\ (\bibinfo  {publisher} {Springer-Verlag, New York},\ \bibinfo {year}
  {1999})\ pp.\ \bibinfo {pages} {lxxiv+703},\ \bibinfo {note} {with additional
  contributions by E. Bannai, R. E. Borcherds, J. Leech, S. P. Norton, A. M.
  Odlyzko, R. A. Parker, L. Queen and B. B. Venkov}\BibitemShut {NoStop}%
\bibitem [{\citenamefont {Cooper}(2008)}]{Cooper-08}%
  \BibitemOpen
  \bibfield  {author} {\bibinfo {author} {\bibnamefont {Cooper}, \bibfnamefont
  {N.}},\ }\bibfield  {title} {\enquote {\bibinfo {title} {Rapidly rotating
  atomic gases},}\ }\href {\doibase 10.1080/00018730802564122} {\bibfield
  {journal} {\bibinfo  {journal} {Advances in Physics}\ }\textbf {\bibinfo
  {volume} {57}},\ \bibinfo {pages} {539--616} (\bibinfo {year}
  {2008})}\BibitemShut {NoStop}%
\bibitem [{\citenamefont {Cotar}, \citenamefont {Friesecke},\ and\
  \citenamefont {Kl{\"u}ppelberg}(2013)}]{CotFriKlu-13}%
  \BibitemOpen
  \bibfield  {author} {\bibinfo {author} {\bibnamefont {Cotar}, \bibfnamefont
  {C.}}, \bibinfo {author} {\bibnamefont {Friesecke}, \bibfnamefont {G.}}, \
  and\ \bibinfo {author} {\bibnamefont {Kl{\"u}ppelberg}, \bibfnamefont {C.}},\
  }\bibfield  {title} {\enquote {\bibinfo {title} {Density functional theory
  and optimal transportation with {C}oulomb cost},}\ }\href {\doibase
  10.1002/cpa.21437} {\bibfield  {journal} {\bibinfo  {journal} {Comm. Pure
  Appl. Math.}\ }\textbf {\bibinfo {volume} {66}},\ \bibinfo {pages} {548--599}
  (\bibinfo {year} {2013})}\BibitemShut {NoStop}%
\bibitem [{\citenamefont {Cotar}, \citenamefont {Friesecke},\ and\
  \citenamefont {Pass}(2015)}]{CotFriPas-15}%
  \BibitemOpen
  \bibfield  {author} {\bibinfo {author} {\bibnamefont {Cotar}, \bibfnamefont
  {C.}}, \bibinfo {author} {\bibnamefont {Friesecke}, \bibfnamefont {G.}}, \
  and\ \bibinfo {author} {\bibnamefont {Pass}, \bibfnamefont {B.}},\ }\bibfield
   {title} {\enquote {\bibinfo {title} {Infinite-body optimal transport with
  {C}oulomb cost},}\ }\href {\doibase 10.1007/s00526-014-0803-0} {\bibfield
  {journal} {\bibinfo  {journal} {Calc. Var. Partial Differ. Equ.}\ }\textbf
  {\bibinfo {volume} {54}},\ \bibinfo {pages} {717--742} (\bibinfo {year}
  {2015})}\BibitemShut {NoStop}%
\bibitem [{\citenamefont {{Cotar}}\ and\ \citenamefont
  {{Petrache}}(2019)}]{CotPet-19b}%
  \BibitemOpen
  \bibfield  {author} {\bibinfo {author} {\bibnamefont {{Cotar}}, \bibfnamefont
  {C.}}\ and\ \bibinfo {author} {\bibnamefont {{Petrache}}, \bibfnamefont
  {M.}},\ }\bibfield  {title} {\enquote {\bibinfo {title} {Equality of the
  jellium and uniform electron gas next-order asymptotic terms for {C}oulomb
  and {R}iesz potentials},}\ }\href@noop {} {\bibfield  {journal} {\bibinfo
  {journal} {ArXiv e-prints 1707.07664 (version 5)}\ } (\bibinfo {year}
  {2019})},\ \Eprint {http://arxiv.org/abs/1707.07664v5} {arXiv:1707.07664v5
  [math-ph]} \BibitemShut {NoStop}%
\bibitem [{\citenamefont {Cotar}\ and\ \citenamefont
  {Petrache}(2019)}]{CotPet-19}%
  \BibitemOpen
  \bibfield  {author} {\bibinfo {author} {\bibnamefont {Cotar}, \bibfnamefont
  {C.}}\ and\ \bibinfo {author} {\bibnamefont {Petrache}, \bibfnamefont {M.}},\
  }\bibfield  {title} {\enquote {\bibinfo {title} {Next-order asymptotic
  expansion for {$N$}-marginal optimal transport with {C}oulomb and {R}iesz
  costs},}\ }\href {\doibase 10.1016/j.aim.2018.12.008} {\bibfield  {journal}
  {\bibinfo  {journal} {Adv. Math.}\ }\textbf {\bibinfo {volume} {344}},\
  \bibinfo {pages} {137--233} (\bibinfo {year} {2019})}\BibitemShut {NoStop}%
\bibitem [{\citenamefont {Cribier}\ \emph {et~al.}(1964)\citenamefont
  {Cribier}, \citenamefont {Jacrot}, \citenamefont {{Madhav Rao}},\ and\
  \citenamefont {Farnoux}}]{CriJacMadFar-64}%
  \BibitemOpen
  \bibfield  {author} {\bibinfo {author} {\bibnamefont {Cribier}, \bibfnamefont
  {D.}}, \bibinfo {author} {\bibnamefont {Jacrot}, \bibfnamefont {B.}},
  \bibinfo {author} {\bibnamefont {{Madhav Rao}}, \bibfnamefont {L.}}, \ and\
  \bibinfo {author} {\bibnamefont {Farnoux}, \bibfnamefont {B.}},\ }\bibfield
  {title} {\enquote {\bibinfo {title} {Mise en \'evidence par diffraction de
  neutrons d'une structure p\'eriodique du champ magn\'etique dans le {N}iobium
  supraconducteur},}\ }\href {\doibase
  https://doi.org/10.1016/0031-9163(64)90096-4} {\bibfield  {journal} {\bibinfo
   {journal} {Physics Letters}\ }\textbf {\bibinfo {volume} {9}},\ \bibinfo
  {pages} {106--107} (\bibinfo {year} {1964})}\BibitemShut {NoStop}%
\bibitem [{\citenamefont {Dahlberg}(1978)}]{Dahlberg-78}%
  \BibitemOpen
  \bibfield  {author} {\bibinfo {author} {\bibnamefont {Dahlberg},
  \bibfnamefont {B.~E.~J.}},\ }\bibfield  {title} {\enquote {\bibinfo {title}
  {On the distribution of {F}ekete points},}\ }\href
  {http://projecteuclid.org/euclid.dmj/1077312950} {\bibfield  {journal}
  {\bibinfo  {journal} {Duke Math. J.}\ }\textbf {\bibinfo {volume} {45}},\
  \bibinfo {pages} {537--542} (\bibinfo {year} {1978})}\BibitemShut {NoStop}%
\bibitem [{\citenamefont {Dauxois}\ \emph {et~al.}(2002)\citenamefont
  {Dauxois}, \citenamefont {Ruffo}, \citenamefont {Arimondo},\ and\
  \citenamefont {Wilkens}}]{DauRufAriWil-02}%
  \BibitemOpen
  \bibinfo {editor} {\bibnamefont {Dauxois}, \bibfnamefont {T.}}, \bibinfo
  {editor} {\bibnamefont {Ruffo}, \bibfnamefont {S.}}, \bibinfo {editor}
  {\bibnamefont {Arimondo}, \bibfnamefont {E.}}, \ and\ \bibinfo {editor}
  {\bibnamefont {Wilkens}, \bibfnamefont {M.}},\ eds.,\ \href {\doibase
  10.1007/3-540-45835-2} {\emph {\bibinfo {title} {{Dynamics and Thermodynamics
  of Systems with Long-Range Interactions}}}},\ \bibinfo {series} {Lecture
  Notes in Physics}, Vol.\ \bibinfo {volume} {602}\ (\bibinfo  {publisher}
  {Springer, Berlin, Heidelberg},\ \bibinfo {year} {2002})\BibitemShut
  {NoStop}%
\bibitem [{\citenamefont {{de-Picciotto}}\ \emph {et~al.}(1997)\citenamefont
  {{de-Picciotto}}, \citenamefont {{Reznikov}}, \citenamefont {{Heiblum}},
  \citenamefont {{Umansky}}, \citenamefont {{Bunin}},\ and\ \citenamefont
  {{Mahalu}}}]{PicRezHeiUmaBunMah-97}%
  \BibitemOpen
  \bibfield  {author} {\bibinfo {author} {\bibnamefont {{de-Picciotto}},
  \bibfnamefont {R.}}, \bibinfo {author} {\bibnamefont {{Reznikov}},
  \bibfnamefont {M.}}, \bibinfo {author} {\bibnamefont {{Heiblum}},
  \bibfnamefont {M.}}, \bibinfo {author} {\bibnamefont {{Umansky}},
  \bibfnamefont {V.}}, \bibinfo {author} {\bibnamefont {{Bunin}}, \bibfnamefont
  {G.}}, \ and\ \bibinfo {author} {\bibnamefont {{Mahalu}}, \bibfnamefont
  {D.}},\ }\bibfield  {title} {\enquote {\bibinfo {title} {{Direct observation
  of a fractional charge}},}\ }\href {\doibase 10.1038/38241} {\bibfield
  {journal} {\bibinfo  {journal} {Nature}\ }\textbf {\bibinfo {volume} {389}},\
  \bibinfo {pages} {162--164} (\bibinfo {year} {1997})}\BibitemShut {NoStop}%
\bibitem [{\citenamefont {Deift}(2017)}]{Deift-17}%
  \BibitemOpen
  \bibfield  {author} {\bibinfo {author} {\bibnamefont {Deift}, \bibfnamefont
  {P.}},\ }\bibfield  {title} {\enquote {\bibinfo {title} {Some open problems
  in random matrix theory and the theory of integrable systems. {II}},}\ }\href
  {\doibase 10.3842/SIGMA.2017.016} {\bibfield  {journal} {\bibinfo  {journal}
  {SIGMA Symmetry Integrability Geom. Methods Appl.}\ }\textbf {\bibinfo
  {volume} {13}},\ \bibinfo {pages} {Paper No. 016, 23} (\bibinfo {year}
  {2017})}\BibitemShut {NoStop}%
\bibitem [{\citenamefont {Deike}\ \emph {et~al.}(2001)\citenamefont {Deike},
  \citenamefont {Ballauff}, \citenamefont {Willenbacher},\ and\ \citenamefont
  {Weiss}}]{DeiBalWilWei-01}%
  \BibitemOpen
  \bibfield  {author} {\bibinfo {author} {\bibnamefont {Deike}, \bibfnamefont
  {I.}}, \bibinfo {author} {\bibnamefont {Ballauff}, \bibfnamefont {M.}},
  \bibinfo {author} {\bibnamefont {Willenbacher}, \bibfnamefont {N.}}, \ and\
  \bibinfo {author} {\bibnamefont {Weiss}, \bibfnamefont {A.}},\ }\bibfield
  {title} {\enquote {\bibinfo {title} {Rheology of thermosensitive latex
  particles including the high-frequency limit},}\ }\href {\doibase
  10.1122/1.1357820} {\bibfield  {journal} {\bibinfo  {journal} {J. Rheol.}\
  }\textbf {\bibinfo {volume} {45}},\ \bibinfo {pages} {709--720} (\bibinfo
  {year} {2001})}\BibitemShut {NoStop}%
\bibitem [{\citenamefont {Dereudre}\ \emph {et~al.}(2021)\citenamefont
  {Dereudre}, \citenamefont {Hardy}, \citenamefont {Lebl\'{e}},\ and\
  \citenamefont {Ma\"{\i}da}}]{DerHarLebMai-21}%
  \BibitemOpen
  \bibfield  {author} {\bibinfo {author} {\bibnamefont {Dereudre},
  \bibfnamefont {D.}}, \bibinfo {author} {\bibnamefont {Hardy}, \bibfnamefont
  {A.}}, \bibinfo {author} {\bibnamefont {Lebl\'{e}}, \bibfnamefont {T.}}, \
  and\ \bibinfo {author} {\bibnamefont {Ma\"{\i}da}, \bibfnamefont {M.}},\
  }\bibfield  {title} {\enquote {\bibinfo {title} {D{LR} equations and rigidity
  for the sine-beta process},}\ }\href {\doibase 10.1002/cpa.21963} {\bibfield
  {journal} {\bibinfo  {journal} {Comm. Pure Appl. Math.}\ }\textbf {\bibinfo
  {volume} {74}},\ \bibinfo {pages} {172--222} (\bibinfo {year}
  {2021})}\BibitemShut {NoStop}%
\bibitem [{\citenamefont {Dereudre}\ and\ \citenamefont
  {Vasseur}(2020)}]{DerVas-20}%
  \BibitemOpen
  \bibfield  {author} {\bibinfo {author} {\bibnamefont {Dereudre},
  \bibfnamefont {D.}}\ and\ \bibinfo {author} {\bibnamefont {Vasseur},
  \bibfnamefont {T.}},\ }\bibfield  {title} {\enquote {\bibinfo {title}
  {Existence of {G}ibbs point processes with stable infinite range
  interaction},}\ }\href {\doibase 10.1017/jpr.2020.39} {\bibfield  {journal}
  {\bibinfo  {journal} {J. Appl. Probab.}\ }\textbf {\bibinfo {volume} {57}},\
  \bibinfo {pages} {775--791} (\bibinfo {year} {2020})}\BibitemShut {NoStop}%
\bibitem [{\citenamefont {Dereudre}\ and\ \citenamefont
  {Vasseur}(2021)}]{DerVas-21_ppt}%
  \BibitemOpen
  \bibfield  {author} {\bibinfo {author} {\bibnamefont {Dereudre},
  \bibfnamefont {D.}}\ and\ \bibinfo {author} {\bibnamefont {Vasseur},
  \bibfnamefont {T.}},\ }\href@noop {} {\enquote {\bibinfo {title}
  {Number-rigidity and $\beta$-circular {R}iesz gas},}\ } (\bibinfo {year}
  {2021}),\ \Eprint {http://arxiv.org/abs/2104.09408} {arXiv:2104.09408
  [math.PR]} \BibitemShut {NoStop}%
\bibitem [{\citenamefont {Derrick}(1969)}]{Derrick-69}%
  \BibitemOpen
  \bibfield  {author} {\bibinfo {author} {\bibnamefont {Derrick}, \bibfnamefont
  {G.~H.}},\ }\bibfield  {title} {\enquote {\bibinfo {title} {Statistical
  mechanics in external force fields},}\ }\href {\doibase
  10.1103/PhysRev.181.457} {\bibfield  {journal} {\bibinfo  {journal} {Phys.
  Rev.}\ }\textbf {\bibinfo {volume} {181}},\ \bibinfo {pages} {457--462}
  (\bibinfo {year} {1969})}\BibitemShut {NoStop}%
\bibitem [{\citenamefont {Deutsch}, \citenamefont {Dewitt},\ and\ \citenamefont
  {Furutani}(1979)}]{DeuDewFur-79}%
  \BibitemOpen
  \bibfield  {author} {\bibinfo {author} {\bibnamefont {Deutsch}, \bibfnamefont
  {C.}}, \bibinfo {author} {\bibnamefont {Dewitt}, \bibfnamefont {H.~E.}}, \
  and\ \bibinfo {author} {\bibnamefont {Furutani}, \bibfnamefont {Y.}},\
  }\bibfield  {title} {\enquote {\bibinfo {title} {Debye thermodynamics for the
  two-dimensional one-component plasma},}\ }\href {\doibase
  10.1103/PhysRevA.20.2631} {\bibfield  {journal} {\bibinfo  {journal} {Phys.
  Rev. A}\ }\textbf {\bibinfo {volume} {20}},\ \bibinfo {pages} {2631--2633}
  (\bibinfo {year} {1979})}\BibitemShut {NoStop}%
\bibitem [{\citenamefont {{Di Marino}}, \citenamefont {Lewin},\ and\
  \citenamefont {Nenna}(2022)}]{MarLewNen-22_ppt}%
  \BibitemOpen
  \bibfield  {author} {\bibinfo {author} {\bibnamefont {{Di Marino}},
  \bibfnamefont {S.}}, \bibinfo {author} {\bibnamefont {Lewin}, \bibfnamefont
  {M.}}, \ and\ \bibinfo {author} {\bibnamefont {Nenna}, \bibfnamefont {L.}},\
  }\bibfield  {title} {\enquote {\bibinfo {title} {Grand-canonical optimal
  transport},}\ }\href@noop {} {\bibfield  {journal} {\bibinfo  {journal}
  {ArXiv e-prints}\ } (\bibinfo {year} {2022})},\ \Eprint
  {http://arxiv.org/abs/2201.06859} {arXiv:2201.06859 [math.OC]} \BibitemShut
  {NoStop}%
\bibitem [{\citenamefont {Diananda}(1964)}]{Diananda-64}%
  \BibitemOpen
  \bibfield  {author} {\bibinfo {author} {\bibnamefont {Diananda},
  \bibfnamefont {P.~H.}},\ }\bibfield  {title} {\enquote {\bibinfo {title}
  {Notes on two lemmas concerning the {E}pstein zeta-function},}\ }\href
  {\doibase 10.1017/S2040618500035036} {\bibfield  {journal} {\bibinfo
  {journal} {Proc. Glasgow Math. Assoc.}\ }\textbf {\bibinfo {volume} {6}},\
  \bibinfo {pages} {202--204} (\bibinfo {year} {1964})}\BibitemShut {NoStop}%
\bibitem [{\citenamefont {Dinsmore}\ \emph {et~al.}(2002)\citenamefont
  {Dinsmore}, \citenamefont {Hsu}, \citenamefont {Nikolaides}, \citenamefont
  {Marquez}, \citenamefont {Bausch},\ and\ \citenamefont {Weitz}}]{Dinetal-02}%
  \BibitemOpen
  \bibfield  {author} {\bibinfo {author} {\bibnamefont {Dinsmore},
  \bibfnamefont {A.~D.}}, \bibinfo {author} {\bibnamefont {Hsu}, \bibfnamefont
  {M.~F.}}, \bibinfo {author} {\bibnamefont {Nikolaides}, \bibfnamefont
  {M.~G.}}, \bibinfo {author} {\bibnamefont {Marquez}, \bibfnamefont {M.}},
  \bibinfo {author} {\bibnamefont {Bausch}, \bibfnamefont {A.~R.}}, \ and\
  \bibinfo {author} {\bibnamefont {Weitz}, \bibfnamefont {D.~A.}},\ }\bibfield
  {title} {\enquote {\bibinfo {title} {Colloidosomes: Selectively permeable
  capsules composed of colloidal particles},}\ }\href {\doibase
  10.1126/science.1074868} {\bibfield  {journal} {\bibinfo  {journal}
  {Science}\ }\textbf {\bibinfo {volume} {298}},\ \bibinfo {pages} {1006--1009}
  (\bibinfo {year} {2002})}\BibitemShut {NoStop}%
\bibitem [{\citenamefont {Dobru\v{s}in}(1968{\natexlab{a}})}]{Dobrushin-68b}%
  \BibitemOpen
  \bibfield  {author} {\bibinfo {author} {\bibnamefont {Dobru\v{s}in},
  \bibfnamefont {R.}},\ }\bibfield  {title} {\enquote {\bibinfo {title} {The
  problem of uniqueness of a gibbsian random field and the problem of phase
  transitions},}\ }\href {\doibase 10.1007/BF01075682} {\bibfield  {journal}
  {\bibinfo  {journal} {Functional Analysis and Its Applications}\ }\textbf
  {\bibinfo {volume} {2}},\ \bibinfo {pages} {302--312} (\bibinfo {year}
  {1968}{\natexlab{a}})}\BibitemShut {NoStop}%
\bibitem [{\citenamefont {Dobru\v{s}in}(1968{\natexlab{b}})}]{Dobrushin-68c}%
  \BibitemOpen
  \bibfield  {author} {\bibinfo {author} {\bibnamefont {Dobru\v{s}in},
  \bibfnamefont {R.~L.}},\ }\bibfield  {title} {\enquote {\bibinfo {title}
  {Description of a random field by means of conditional probabilities and
  conditions for its regularity},}\ }\href@noop {} {\bibfield  {journal}
  {\bibinfo  {journal} {Teor. Verojatnost. i Primenen}\ }\textbf {\bibinfo
  {volume} {13}},\ \bibinfo {pages} {201--229} (\bibinfo {year}
  {1968}{\natexlab{b}})}\BibitemShut {NoStop}%
\bibitem [{\citenamefont {Dobru\v{s}in}(1968{\natexlab{c}})}]{Dobrushin-68a}%
  \BibitemOpen
  \bibfield  {author} {\bibinfo {author} {\bibnamefont {Dobru\v{s}in},
  \bibfnamefont {R.~L.}},\ }\bibfield  {title} {\enquote {\bibinfo {title}
  {Gibbsian random fields for lattice systems with pairwise interactions},}\
  }\href@noop {} {\bibfield  {journal} {\bibinfo  {journal} {Funkcional. Anal.
  i Prilo\v{z}en.}\ }\textbf {\bibinfo {volume} {2}},\ \bibinfo {pages}
  {31--43} (\bibinfo {year} {1968}{\natexlab{c}})}\BibitemShut {NoStop}%
\bibitem [{\citenamefont {Dobru\v{s}in}(1969)}]{Dobrushin-69}%
  \BibitemOpen
  \bibfield  {author} {\bibinfo {author} {\bibnamefont {Dobru\v{s}in},
  \bibfnamefont {R.~L.}},\ }\bibfield  {title} {\enquote {\bibinfo {title}
  {Gibbsian random fields. {G}eneral case},}\ }\href@noop {} {\bibfield
  {journal} {\bibinfo  {journal} {Funkcional. Anal. i Prilo\v{z}en}\ }\textbf
  {\bibinfo {volume} {3}},\ \bibinfo {pages} {27--35} (\bibinfo {year}
  {1969})}\BibitemShut {NoStop}%
\bibitem [{\citenamefont {Dobru\v{s}in}\ and\ \citenamefont
  {Minlos}(1967)}]{DobMin-67}%
  \BibitemOpen
  \bibfield  {author} {\bibinfo {author} {\bibnamefont {Dobru\v{s}in},
  \bibfnamefont {R.~L.}}\ and\ \bibinfo {author} {\bibnamefont {Minlos},
  \bibfnamefont {R.~A.}},\ }\bibfield  {title} {\enquote {\bibinfo {title}
  {Existence and continuity of pressure in classical statistical physics},}\
  }\href@noop {} {\bibfield  {journal} {\bibinfo  {journal} {Teor. Verojatnost.
  i Primenen.}\ }\textbf {\bibinfo {volume} {12}},\ \bibinfo {pages} {595--618}
  (\bibinfo {year} {1967})}\BibitemShut {NoStop}%
\bibitem [{\citenamefont {Dragnev}(2002)}]{Dragnev-02}%
  \BibitemOpen
  \bibfield  {author} {\bibinfo {author} {\bibnamefont {Dragnev}, \bibfnamefont
  {P.~D.}},\ }\bibfield  {title} {\enquote {\bibinfo {title} {On the separation
  of logarithmic points on the sphere},}\ }in\ \href@noop {} {\emph {\bibinfo
  {booktitle} {Approximation theory, {X} ({S}t. {L}ouis, {MO}, 2001)}}},\
  \bibinfo {series and number} {Innov. Appl. Math.}\ (\bibinfo  {publisher}
  {Vanderbilt Univ. Press, Nashville, TN},\ \bibinfo {year} {2002})\ pp.\
  \bibinfo {pages} {137--144}\BibitemShut {NoStop}%
\bibitem [{\citenamefont {Dragnev}\ and\ \citenamefont
  {Saff}(2007)}]{DraSaf-07}%
  \BibitemOpen
  \bibfield  {author} {\bibinfo {author} {\bibnamefont {Dragnev}, \bibfnamefont
  {P.~D.}}\ and\ \bibinfo {author} {\bibnamefont {Saff}, \bibfnamefont
  {E.~B.}},\ }\bibfield  {title} {\enquote {\bibinfo {title} {Riesz spherical
  potentials with external fields and minimal energy points separation},}\
  }\href {\doibase 10.1007/s11118-006-9032-2} {\bibfield  {journal} {\bibinfo
  {journal} {Potential Anal.}\ }\textbf {\bibinfo {volume} {26}},\ \bibinfo
  {pages} {139--162} (\bibinfo {year} {2007})}\BibitemShut {NoStop}%
\bibitem [{\citenamefont {Drummond}\ \emph {et~al.}(2004)\citenamefont
  {Drummond}, \citenamefont {Radnai}, \citenamefont {Trail}, \citenamefont
  {Towler},\ and\ \citenamefont {Needs}}]{DruRadTraTowNee-04}%
  \BibitemOpen
  \bibfield  {author} {\bibinfo {author} {\bibnamefont {Drummond},
  \bibfnamefont {N.~D.}}, \bibinfo {author} {\bibnamefont {Radnai},
  \bibfnamefont {Z.}}, \bibinfo {author} {\bibnamefont {Trail}, \bibfnamefont
  {J.~R.}}, \bibinfo {author} {\bibnamefont {Towler}, \bibfnamefont {M.~D.}}, \
  and\ \bibinfo {author} {\bibnamefont {Needs}, \bibfnamefont {R.~J.}},\
  }\bibfield  {title} {\enquote {\bibinfo {title} {Diffusion quantum {M}onte
  {C}arlo study of three-dimensional {W}igner crystals},}\ }\href {\doibase
  10.1103/PhysRevB.69.085116} {\bibfield  {journal} {\bibinfo  {journal} {Phys.
  Rev. B}\ }\textbf {\bibinfo {volume} {69}},\ \bibinfo {pages} {085116}
  (\bibinfo {year} {2004})}\BibitemShut {NoStop}%
\bibitem [{\citenamefont {Ducatez}(2018)}]{Ducatez-18_ppt}%
  \BibitemOpen
  \bibfield  {author} {\bibinfo {author} {\bibnamefont {Ducatez}, \bibfnamefont
  {R.}},\ }\href@noop {} {\enquote {\bibinfo {title} {Analysis of the one
  dimensional inhomogeneous {J}ellium model with the {B}irkhoff-{H}opf
  theorem},}\ } (\bibinfo {year} {2018}),\ \Eprint
  {http://arxiv.org/abs/1806.07681} {arXiv:1806.07681 [math.SP]} \BibitemShut
  {NoStop}%
\bibitem [{\citenamefont {Dumitriu}\ and\ \citenamefont
  {Edelman}(2002)}]{DumEde-02}%
  \BibitemOpen
  \bibfield  {author} {\bibinfo {author} {\bibnamefont {Dumitriu},
  \bibfnamefont {I.}}\ and\ \bibinfo {author} {\bibnamefont {Edelman},
  \bibfnamefont {A.}},\ }\bibfield  {title} {\enquote {\bibinfo {title} {Matrix
  models for beta ensembles},}\ }\href {\doibase 10.1063/1.1507823} {\bibfield
  {journal} {\bibinfo  {journal} {J. Math. Phys.}\ }\textbf {\bibinfo {volume}
  {43}},\ \bibinfo {pages} {5830--5847} (\bibinfo {year} {2002})}\BibitemShut
  {NoStop}%
\bibitem [{\citenamefont {Duneau}\ and\ \citenamefont
  {Soillard}(1976)}]{DunSou-76}%
  \BibitemOpen
  \bibfield  {author} {\bibinfo {author} {\bibnamefont {Duneau}, \bibfnamefont
  {M.}}\ and\ \bibinfo {author} {\bibnamefont {Soillard}, \bibfnamefont {B.}},\
  }\bibfield  {title} {\enquote {\bibinfo {title} {Cluster properties of
  lattice and continuous systems},}\ }\href
  {http://projecteuclid.org/euclid.cmp/1103899725} {\bibfield  {journal}
  {\bibinfo  {journal} {Comm. Math. Phys.}\ }\textbf {\bibinfo {volume} {47}},\
  \bibinfo {pages} {155--166} (\bibinfo {year} {1976})}\BibitemShut {NoStop}%
\bibitem [{\citenamefont {Dyson}(1962{\natexlab{a}})}]{Dyson-62a}%
  \BibitemOpen
  \bibfield  {author} {\bibinfo {author} {\bibnamefont {Dyson}, \bibfnamefont
  {F.~J.}},\ }\bibfield  {title} {\enquote {\bibinfo {title} {Statistical
  theory of the energy levels of complex systems. {I}},}\ }\href {\doibase
  10.1063/1.1703773} {\bibfield  {journal} {\bibinfo  {journal} {J. Math.
  Phys.}\ }\textbf {\bibinfo {volume} {3}},\ \bibinfo {pages} {140--156}
  (\bibinfo {year} {1962}{\natexlab{a}})}\BibitemShut {NoStop}%
\bibitem [{\citenamefont {Dyson}(1962{\natexlab{b}})}]{Dyson-62b}%
  \BibitemOpen
  \bibfield  {author} {\bibinfo {author} {\bibnamefont {Dyson}, \bibfnamefont
  {F.~J.}},\ }\bibfield  {title} {\enquote {\bibinfo {title} {Statistical
  theory of the energy levels of complex systems. {II}},}\ }\href {\doibase
  10.1063/1.1703774} {\bibfield  {journal} {\bibinfo  {journal} {J. Math.
  Phys.}\ }\textbf {\bibinfo {volume} {3}},\ \bibinfo {pages} {157--165}
  (\bibinfo {year} {1962}{\natexlab{b}})}\BibitemShut {NoStop}%
\bibitem [{\citenamefont {Dyson}(1962{\natexlab{c}})}]{Dyson-62c}%
  \BibitemOpen
  \bibfield  {author} {\bibinfo {author} {\bibnamefont {Dyson}, \bibfnamefont
  {F.~J.}},\ }\bibfield  {title} {\enquote {\bibinfo {title} {Statistical
  theory of the energy levels of complex systems. {III}},}\ }\href {\doibase
  10.1063/1.1703775} {\bibfield  {journal} {\bibinfo  {journal} {J. Math.
  Phys.}\ }\textbf {\bibinfo {volume} {3}},\ \bibinfo {pages} {166--175}
  (\bibinfo {year} {1962}{\natexlab{c}})}\BibitemShut {NoStop}%
\bibitem [{\citenamefont {Dyson}(1967)}]{Dyson-67}%
  \BibitemOpen
  \bibfield  {author} {\bibinfo {author} {\bibnamefont {Dyson}, \bibfnamefont
  {F.~J.}},\ }\bibfield  {title} {\enquote {\bibinfo {title} {Ground-state
  energy of a finite system of charged particles},}\ }\href {\doibase
  10.1063/1.1705389} {\bibfield  {journal} {\bibinfo  {journal} {J. Math.
  Phys.}\ }\textbf {\bibinfo {volume} {8}},\ \bibinfo {pages} {1538--1545}
  (\bibinfo {year} {1967})}\BibitemShut {NoStop}%
\bibitem [{\citenamefont {Dyson}(1969)}]{Dyson-69}%
  \BibitemOpen
  \bibfield  {author} {\bibinfo {author} {\bibnamefont {Dyson}, \bibfnamefont
  {F.~J.}},\ }\bibfield  {title} {\enquote {\bibinfo {title} {Existence of a
  phase-transition in a one-dimensional {I}sing ferromagnet},}\ }\href
  {\doibase 10.1007/BF01645907} {\bibfield  {journal} {\bibinfo  {journal}
  {Comm. Math. Phys.}\ }\textbf {\bibinfo {volume} {12}},\ \bibinfo {pages}
  {91--107} (\bibinfo {year} {1969})}\BibitemShut {NoStop}%
\bibitem [{\citenamefont {Dyson}(1971)}]{Dyson-71}%
  \BibitemOpen
  \bibfield  {author} {\bibinfo {author} {\bibnamefont {Dyson}, \bibfnamefont
  {F.~J.}},\ }\bibfield  {title} {\enquote {\bibinfo {title} {Chemical binding
  in classical {C}oulomb lattices},}\ }\href {\doibase
  https://doi.org/10.1016/0003-4916(71)90294-6} {\bibfield  {journal} {\bibinfo
   {journal} {Annals of Physics}\ }\textbf {\bibinfo {volume} {63}},\ \bibinfo
  {pages} {1--11} (\bibinfo {year} {1971})}\BibitemShut {NoStop}%
\bibitem [{\citenamefont {Dyson}\ and\ \citenamefont
  {Lenard}(1967)}]{DysLen-67}%
  \BibitemOpen
  \bibfield  {author} {\bibinfo {author} {\bibnamefont {Dyson}, \bibfnamefont
  {F.~J.}}\ and\ \bibinfo {author} {\bibnamefont {Lenard}, \bibfnamefont
  {A.}},\ }\bibfield  {title} {\enquote {\bibinfo {title} {Stability of matter.
  {I}},}\ }\href {\doibase 10.1063/1.1705209} {\bibfield  {journal} {\bibinfo
  {journal} {J. Math. Phys.}\ }\textbf {\bibinfo {volume} {8}},\ \bibinfo
  {pages} {423--434} (\bibinfo {year} {1967})}\BibitemShut {NoStop}%
\bibitem [{\citenamefont {Dyson}\ and\ \citenamefont
  {Mehta}(1963)}]{DysMeh-63a}%
  \BibitemOpen
  \bibfield  {author} {\bibinfo {author} {\bibnamefont {Dyson}, \bibfnamefont
  {F.~J.}}\ and\ \bibinfo {author} {\bibnamefont {Mehta}, \bibfnamefont
  {M.~L.}},\ }\bibfield  {title} {\enquote {\bibinfo {title} {Statistical
  theory of the energy levels of complex systems. {IV}},}\ }\href {\doibase
  10.1063/1.1704008} {\bibfield  {journal} {\bibinfo  {journal} {J. Math.
  Phys.}\ }\textbf {\bibinfo {volume} {4}},\ \bibinfo {pages} {701--712}
  (\bibinfo {year} {1963})}\BibitemShut {NoStop}%
\bibitem [{\citenamefont {Edwards}\ and\ \citenamefont
  {Lenard}(1962)}]{EdwLen-62}%
  \BibitemOpen
  \bibfield  {author} {\bibinfo {author} {\bibnamefont {Edwards}, \bibfnamefont
  {S.~F.}}\ and\ \bibinfo {author} {\bibnamefont {Lenard}, \bibfnamefont
  {A.}},\ }\bibfield  {title} {\enquote {\bibinfo {title} {Exact statistical
  mechanics of a one-dimensional system with {C}oulomb forces. {II}. {T}he
  method of functional integration},}\ }\href
  {https://doi.org/10.1063/1.1724281} {\bibfield  {journal} {\bibinfo
  {journal} {J. Mathematical Phys.}\ }\textbf {\bibinfo {volume} {3}},\
  \bibinfo {pages} {778--792} (\bibinfo {year} {1962})}\BibitemShut {NoStop}%
\bibitem [{\citenamefont {Ellis}(1985)}]{Ellis-85}%
  \BibitemOpen
  \bibfield  {author} {\bibinfo {author} {\bibnamefont {Ellis}, \bibfnamefont
  {R.~S.}},\ }\href {\doibase 10.1007/978-1-4613-8533-2} {\emph {\bibinfo
  {title} {Entropy, large deviations, and statistical mechanics}}},\ \bibinfo
  {series} {Grundlehren der mathematischen Wissenschaften [Fundamental
  Principles of Mathematical Sciences]}, Vol.\ \bibinfo {volume} {271}\
  (\bibinfo  {publisher} {Springer-Verlag, New York},\ \bibinfo {year} {1985})\
  pp.\ \bibinfo {pages} {xiv+364}\BibitemShut {NoStop}%
\bibitem [{\citenamefont {Emersleben}(1923)}]{Emersleben-23}%
  \BibitemOpen
  \bibfield  {author} {\bibinfo {author} {\bibnamefont {Emersleben},},\
  }\bibfield  {title} {\enquote {\bibinfo {title} {{Zetafunktionen und
  elektrostatische Gitterpotentiale}},}\ }\href@noop {} {\bibfield  {journal}
  {\bibinfo  {journal} {Phys. Zeit.}\ }\textbf {\bibinfo {volume} {24}},\
  \bibinfo {pages} {73--97} (\bibinfo {year} {1923})}\BibitemShut {NoStop}%
\bibitem [{\citenamefont {Ennola}(1964{\natexlab{a}})}]{Ennola-64}%
  \BibitemOpen
  \bibfield  {author} {\bibinfo {author} {\bibnamefont {Ennola}, \bibfnamefont
  {V.}},\ }\bibfield  {title} {\enquote {\bibinfo {title} {A lemma about the
  {E}pstein zeta-function},}\ }\href {\doibase 10.1017/S2040618500035024}
  {\bibfield  {journal} {\bibinfo  {journal} {Proc. Glasgow Math. Assoc.}\
  }\textbf {\bibinfo {volume} {6}},\ \bibinfo {pages} {198--201} (\bibinfo
  {year} {1964}{\natexlab{a}})}\BibitemShut {NoStop}%
\bibitem [{\citenamefont {Ennola}(1964{\natexlab{b}})}]{Ennola-64b}%
  \BibitemOpen
  \bibfield  {author} {\bibinfo {author} {\bibnamefont {Ennola}, \bibfnamefont
  {V.}},\ }\bibfield  {title} {\enquote {\bibinfo {title} {On a problem about
  the {E}pstein zeta-function},}\ }\href@noop {} {\bibfield  {journal}
  {\bibinfo  {journal} {Proc. Cambridge Philos. Soc.}\ }\textbf {\bibinfo
  {volume} {60}},\ \bibinfo {pages} {855--875} (\bibinfo {year}
  {1964}{\natexlab{b}})}\BibitemShut {NoStop}%
\bibitem [{\citenamefont {Epstein}(1906)}]{Epstein-06}%
  \BibitemOpen
  \bibfield  {author} {\bibinfo {author} {\bibnamefont {Epstein}, \bibfnamefont
  {P.}},\ }\bibfield  {title} {\enquote {\bibinfo {title} {Zur {T}heorie
  allgemeiner {Z}etafunktionen. {II}},}\ }\href {\doibase 10.1007/BF01449900}
  {\bibfield  {journal} {\bibinfo  {journal} {Math. Ann.}\ }\textbf {\bibinfo
  {volume} {63}},\ \bibinfo {pages} {205--216} (\bibinfo {year}
  {1906})}\BibitemShut {NoStop}%
\bibitem [{\citenamefont {Erbar}, \citenamefont {Huesmann},\ and\ \citenamefont
  {Lebl\'{e}}(2021)}]{ErbHueLeb-21}%
  \BibitemOpen
  \bibfield  {author} {\bibinfo {author} {\bibnamefont {Erbar}, \bibfnamefont
  {M.}}, \bibinfo {author} {\bibnamefont {Huesmann}, \bibfnamefont {M.}}, \
  and\ \bibinfo {author} {\bibnamefont {Lebl\'{e}}, \bibfnamefont {T.}},\
  }\bibfield  {title} {\enquote {\bibinfo {title} {The one-dimensional log-gas
  free energy has a unique minimizer},}\ }\href {\doibase 10.1002/cpa.21977}
  {\bibfield  {journal} {\bibinfo  {journal} {Comm. Pure Appl. Math.}\ }\textbf
  {\bibinfo {volume} {74}},\ \bibinfo {pages} {615--675} (\bibinfo {year}
  {2021})}\BibitemShut {NoStop}%
\bibitem [{\citenamefont {Erd\H{o}s}(2013)}]{Erdos-13}%
  \BibitemOpen
  \bibfield  {author} {\bibinfo {author} {\bibnamefont {Erd\H{o}s},
  \bibfnamefont {L.}},\ }\bibfield  {title} {\enquote {\bibinfo {title}
  {Universality for random matrices and log-gases},}\ }in\ \href {\doibase
  10.4310/CDM.2012.v2012.n1.a2} {\emph {\bibinfo {booktitle} {Current
  developments in mathematics 2012}}}\ (\bibinfo  {publisher} {Int. Press,
  Somerville, MA},\ \bibinfo {year} {2013})\ pp.\ \bibinfo {pages}
  {59--132}\BibitemShut {NoStop}%
\bibitem [{\citenamefont {Erd\H{o}s}\ \emph {et~al.}(2010)\citenamefont
  {Erd\H{o}s}, \citenamefont {Ram\'{\i}rez}, \citenamefont {Schlein},
  \citenamefont {Tao}, \citenamefont {Vu},\ and\ \citenamefont
  {Yau}}]{ErdRamSchTaoVuYau-10}%
  \BibitemOpen
  \bibfield  {author} {\bibinfo {author} {\bibnamefont {Erd\H{o}s},
  \bibfnamefont {L.}}, \bibinfo {author} {\bibnamefont {Ram\'{\i}rez},
  \bibfnamefont {J.}}, \bibinfo {author} {\bibnamefont {Schlein}, \bibfnamefont
  {B.}}, \bibinfo {author} {\bibnamefont {Tao}, \bibfnamefont {T.}}, \bibinfo
  {author} {\bibnamefont {Vu}, \bibfnamefont {V.}}, \ and\ \bibinfo {author}
  {\bibnamefont {Yau}, \bibfnamefont {H.-T.}},\ }\bibfield  {title} {\enquote
  {\bibinfo {title} {Bulk universality for {W}igner {H}ermitian matrices with
  subexponential decay},}\ }\href {\doibase 10.4310/MRL.2010.v17.n4.a7}
  {\bibfield  {journal} {\bibinfo  {journal} {Math. Res. Lett.}\ }\textbf
  {\bibinfo {volume} {17}},\ \bibinfo {pages} {667--674} (\bibinfo {year}
  {2010})}\BibitemShut {NoStop}%
\bibitem [{\citenamefont {Erd\H{o}s}\ and\ \citenamefont
  {Yau}(2012)}]{ErdYau-12}%
  \BibitemOpen
  \bibfield  {author} {\bibinfo {author} {\bibnamefont {Erd\H{o}s},
  \bibfnamefont {L.}}\ and\ \bibinfo {author} {\bibnamefont {Yau},
  \bibfnamefont {H.-T.}},\ }\bibfield  {title} {\enquote {\bibinfo {title}
  {Universality of local spectral statistics of random matrices},}\ }\href
  {\doibase 10.1090/S0273-0979-2012-01372-1} {\bibfield  {journal} {\bibinfo
  {journal} {Bull. Amer. Math. Soc. (N.S.)}\ }\textbf {\bibinfo {volume}
  {49}},\ \bibinfo {pages} {377--414} (\bibinfo {year} {2012})}\BibitemShut
  {NoStop}%
\bibitem [{\citenamefont {Erd\H{o}s}\ and\ \citenamefont
  {Yau}(2015)}]{ErdYau-15}%
  \BibitemOpen
  \bibfield  {author} {\bibinfo {author} {\bibnamefont {Erd\H{o}s},
  \bibfnamefont {L.}}\ and\ \bibinfo {author} {\bibnamefont {Yau},
  \bibfnamefont {H.-T.}},\ }\bibfield  {title} {\enquote {\bibinfo {title} {Gap
  universality of generalized {W}igner and {$\beta$}-ensembles},}\ }\href
  {\doibase 10.4171/JEMS/548} {\bibfield  {journal} {\bibinfo  {journal} {J.
  Eur. Math. Soc. (JEMS)}\ }\textbf {\bibinfo {volume} {17}},\ \bibinfo {pages}
  {1927--2036} (\bibinfo {year} {2015})}\BibitemShut {NoStop}%
\bibitem [{\citenamefont {Erd\H{o}s}\ and\ \citenamefont
  {Yau}(2017)}]{ErdYau-17}%
  \BibitemOpen
  \bibfield  {author} {\bibinfo {author} {\bibnamefont {Erd\H{o}s},
  \bibfnamefont {L.}}\ and\ \bibinfo {author} {\bibnamefont {Yau},
  \bibfnamefont {H.-T.}},\ }\href@noop {} {\emph {\bibinfo {title} {A dynamical
  approach to random matrix theory}}},\ \bibinfo {series} {Courant Lecture
  Notes in Mathematics}, Vol.~\bibinfo {volume} {28}\ (\bibinfo  {publisher}
  {Courant Institute of Mathematical Sciences, New York; American Mathematical
  Society, Providence, RI},\ \bibinfo {year} {2017})\ pp.\ \bibinfo {pages}
  {ix+226}\BibitemShut {NoStop}%
\bibitem [{\citenamefont {Erd{\"o}s}, \citenamefont {Schlein},\ and\
  \citenamefont {Yau}(2011)}]{ErdSchYau-11}%
  \BibitemOpen
  \bibfield  {author} {\bibinfo {author} {\bibnamefont {Erd{\"o}s},
  \bibfnamefont {L.}}, \bibinfo {author} {\bibnamefont {Schlein}, \bibfnamefont
  {B.}}, \ and\ \bibinfo {author} {\bibnamefont {Yau}, \bibfnamefont {H.-T.}},\
  }\bibfield  {title} {\enquote {\bibinfo {title} {Universality of random
  matrices and local relaxation flow},}\ }\href {\doibase
  10.1007/s00222-010-0302-7} {\bibfield  {journal} {\bibinfo  {journal}
  {Invent. Math.}\ }\textbf {\bibinfo {volume} {185}},\ \bibinfo {pages}
  {75--119} (\bibinfo {year} {2011})}\BibitemShut {NoStop}%
\bibitem [{\citenamefont {Ewald}(1921)}]{Ewald-21}%
  \BibitemOpen
  \bibfield  {author} {\bibinfo {author} {\bibnamefont {Ewald}, \bibfnamefont
  {P.~P.}},\ }\bibfield  {title} {\enquote {\bibinfo {title} {Die berechnung
  optischer und elektrostatischer gitterpotentiale},}\ }\href {\doibase
  10.1002/andp.19213690304} {\bibfield  {journal} {\bibinfo  {journal} {Ann.
  Phys.}\ }\textbf {\bibinfo {volume} {369}},\ \bibinfo {pages} {253--287}
  (\bibinfo {year} {1921})}\BibitemShut {NoStop}%
\bibitem [{\citenamefont {Fabes}, \citenamefont {Kenig},\ and\ \citenamefont
  {Serapioni}(1982)}]{FabKenSer-82}%
  \BibitemOpen
  \bibfield  {author} {\bibinfo {author} {\bibnamefont {Fabes}, \bibfnamefont
  {E.~B.}}, \bibinfo {author} {\bibnamefont {Kenig}, \bibfnamefont {C.~E.}}, \
  and\ \bibinfo {author} {\bibnamefont {Serapioni}, \bibfnamefont {R.~P.}},\
  }\bibfield  {title} {\enquote {\bibinfo {title} {The local regularity of
  solutions of degenerate elliptic equations},}\ }\href {\doibase
  10.1080/03605308208820218} {\bibfield  {journal} {\bibinfo  {journal} {Comm.
  Partial Differential Equations}\ }\textbf {\bibinfo {volume} {7}},\ \bibinfo
  {pages} {77--116} (\bibinfo {year} {1982})}\BibitemShut {NoStop}%
\bibitem [{\citenamefont {Falco}(2012)}]{Falco-12}%
  \BibitemOpen
  \bibfield  {author} {\bibinfo {author} {\bibnamefont {Falco}, \bibfnamefont
  {P.}},\ }\bibfield  {title} {\enquote {\bibinfo {title}
  {Kosterlitz-{T}houless transition line for the two dimensional {C}oulomb
  gas},}\ }\href {\doibase 10.1007/s00220-012-1454-7} {\bibfield  {journal}
  {\bibinfo  {journal} {Comm. Math. Phys.}\ }\textbf {\bibinfo {volume}
  {312}},\ \bibinfo {pages} {559--609} (\bibinfo {year} {2012})}\BibitemShut
  {NoStop}%
\bibitem [{\citenamefont {Fantoni}, \citenamefont {Jancovici},\ and\
  \citenamefont {T\'{e}llez}(2003)}]{FanJanTel-03}%
  \BibitemOpen
  \bibfield  {author} {\bibinfo {author} {\bibnamefont {Fantoni}, \bibfnamefont
  {R.}}, \bibinfo {author} {\bibnamefont {Jancovici}, \bibfnamefont {B.}}, \
  and\ \bibinfo {author} {\bibnamefont {T\'{e}llez}, \bibfnamefont {G.}},\
  }\bibfield  {title} {\enquote {\bibinfo {title} {Pressures for a
  one-component plasma on a pseudosphere},}\ }\href {\doibase
  10.1023/A:1023671419021} {\bibfield  {journal} {\bibinfo  {journal} {J.
  Statist. Phys.}\ }\textbf {\bibinfo {volume} {112}},\ \bibinfo {pages}
  {27--57} (\bibinfo {year} {2003})}\BibitemShut {NoStop}%
\bibitem [{\citenamefont {Federbush}\ and\ \citenamefont
  {Kennedy}(1985)}]{FedKen-85}%
  \BibitemOpen
  \bibfield  {author} {\bibinfo {author} {\bibnamefont {Federbush},
  \bibfnamefont {P.}}\ and\ \bibinfo {author} {\bibnamefont {Kennedy},
  \bibfnamefont {T.}},\ }\bibfield  {title} {\enquote {\bibinfo {title}
  {Surface effects in {D}ebye screening},}\ }\href
  {http://projecteuclid.org/euclid.cmp/1104114462} {\bibfield  {journal}
  {\bibinfo  {journal} {Comm. Math. Phys.}\ }\textbf {\bibinfo {volume}
  {102}},\ \bibinfo {pages} {361--423} (\bibinfo {year} {1985})}\BibitemShut
  {NoStop}%
\bibitem [{\citenamefont {Fefferman}(1985)}]{Fefferman-85}%
  \BibitemOpen
  \bibfield  {author} {\bibinfo {author} {\bibnamefont {Fefferman},
  \bibfnamefont {C.}},\ }\bibfield  {title} {\enquote {\bibinfo {title} {The
  thermodynamic limit for a crystal},}\ }\href {\doibase 10.1007/BF01205785}
  {\bibfield  {journal} {\bibinfo  {journal} {Commun. Math. Phys.}\ }\textbf
  {\bibinfo {volume} {98}},\ \bibinfo {pages} {289--311} (\bibinfo {year}
  {1985})}\BibitemShut {NoStop}%
\bibitem [{\citenamefont {Fejes~T\'{o}th}(1959)}]{Toth-59}%
  \BibitemOpen
  \bibfield  {author} {\bibinfo {author} {\bibnamefont {Fejes~T\'{o}th},
  \bibfnamefont {L.}},\ }\bibfield  {title} {\enquote {\bibinfo {title}
  {\"{U}ber eine {P}unktverteilung auf der {K}ugel},}\ }\href {\doibase
  10.1007/BF02063286} {\bibfield  {journal} {\bibinfo  {journal} {Acta Math.
  Acad. Sci. Hungar.}\ }\textbf {\bibinfo {volume} {10}},\ \bibinfo {pages}
  {13--19 (unbound insert)} (\bibinfo {year} {1959})}\BibitemShut {NoStop}%
\bibitem [{\citenamefont {Fekete}(1923)}]{Fekete-23}%
  \BibitemOpen
  \bibfield  {author} {\bibinfo {author} {\bibnamefont {Fekete}, \bibfnamefont
  {M.}},\ }\bibfield  {title} {\enquote {\bibinfo {title} {{\"U}ber die
  {V}erteilung der {W}urzeln bei gewissen algebraischen {G}leichungen mit
  ganzzahligen {K}oeffizienten},}\ }\href {\doibase 10.1007/BF01504345}
  {\bibfield  {journal} {\bibinfo  {journal} {Math. Z.}\ }\textbf {\bibinfo
  {volume} {17}},\ \bibinfo {pages} {228--249} (\bibinfo {year}
  {1923})}\BibitemShut {NoStop}%
\bibitem [{\citenamefont {Fields}(1980)}]{Fields-80}%
  \BibitemOpen
  \bibfield  {author} {\bibinfo {author} {\bibnamefont {Fields}, \bibfnamefont
  {K.}},\ }\bibfield  {title} {\enquote {\bibinfo {title} {Locally minimal
  {E}pstein zeta functions},}\ }\href {\doibase 10.1112/S002557930000989X}
  {\bibfield  {journal} {\bibinfo  {journal} {Mathematika}\ }\textbf {\bibinfo
  {volume} {27}},\ \bibinfo {pages} {17--24} (\bibinfo {year}
  {1980})}\BibitemShut {NoStop}%
\bibitem [{\citenamefont {Fisher}(1964)}]{Fisher-64}%
  \BibitemOpen
  \bibfield  {author} {\bibinfo {author} {\bibnamefont {Fisher}, \bibfnamefont
  {M.~E.}},\ }\bibfield  {title} {\enquote {\bibinfo {title} {The free energy
  of a macroscopic system},}\ }\href {\doibase 10.1007/BF00250473} {\bibfield
  {journal} {\bibinfo  {journal} {Arch. Ration. Mech. Anal.}\ }\textbf
  {\bibinfo {volume} {17}},\ \bibinfo {pages} {377--410} (\bibinfo {year}
  {1964})}\BibitemShut {NoStop}%
\bibitem [{\citenamefont {Fisher}\ and\ \citenamefont
  {Lebowitz}(1970)}]{FisLeb-70}%
  \BibitemOpen
  \bibfield  {author} {\bibinfo {author} {\bibnamefont {Fisher}, \bibfnamefont
  {M.~E.}}\ and\ \bibinfo {author} {\bibnamefont {Lebowitz}, \bibfnamefont
  {J.~L.}},\ }\bibfield  {title} {\enquote {\bibinfo {title} {Asymptotic free
  energy of a system with periodic boundary conditions},}\ }\href
  {http://projecteuclid.org/euclid.cmp/1103842739} {\bibfield  {journal}
  {\bibinfo  {journal} {Comm. Math. Phys.}\ }\textbf {\bibinfo {volume} {19}},\
  \bibinfo {pages} {251--272} (\bibinfo {year} {1970})}\BibitemShut {NoStop}%
\bibitem [{\citenamefont {Fisher}\ and\ \citenamefont
  {Ruelle}(1966)}]{FisRue-66}%
  \BibitemOpen
  \bibfield  {author} {\bibinfo {author} {\bibnamefont {Fisher}, \bibfnamefont
  {M.~E.}}\ and\ \bibinfo {author} {\bibnamefont {Ruelle}, \bibfnamefont
  {D.}},\ }\bibfield  {title} {\enquote {\bibinfo {title} {The stability of
  many-particle systems},}\ }\href {\doibase 10.1063/1.1704928} {\bibfield
  {journal} {\bibinfo  {journal} {J. Math. Phys.}\ }\textbf {\bibinfo {volume}
  {7}},\ \bibinfo {pages} {260--270} (\bibinfo {year} {1966})}\BibitemShut
  {NoStop}%
\bibitem [{\citenamefont {Flack}, \citenamefont {Majumdar},\ and\ \citenamefont
  {Schehr}(2022)}]{FlaMajSch-22_ppt}%
  \BibitemOpen
  \bibfield  {author} {\bibinfo {author} {\bibnamefont {Flack}, \bibfnamefont
  {A.}}, \bibinfo {author} {\bibnamefont {Majumdar}, \bibfnamefont {S.~N.}}, \
  and\ \bibinfo {author} {\bibnamefont {Schehr}, \bibfnamefont {G.}},\
  }\bibfield  {title} {\enquote {\bibinfo {title} {Gap probability and full
  counting statistics in the one dimensional one-component plasma},}\ }\href
  {https://arxiv.org/abs/2202.12118} {\bibfield  {journal} {\bibinfo  {journal}
  {ArXiV e-prints}\ } (\bibinfo {year} {2022})}\BibitemShut {NoStop}%
\bibitem [{\citenamefont {Fontaine}\ and\ \citenamefont
  {Martin}(1984)}]{FonMar-84}%
  \BibitemOpen
  \bibfield  {author} {\bibinfo {author} {\bibnamefont {Fontaine},
  \bibfnamefont {J.-R.}}\ and\ \bibinfo {author} {\bibnamefont {Martin},
  \bibfnamefont {P.~A.}},\ }\bibfield  {title} {\enquote {\bibinfo {title}
  {Equilibrium equations and symmetries of classical {C}oulomb systems},}\
  }\href {https://doi.org/10.1007/BF01015731} {\bibfield  {journal} {\bibinfo
  {journal} {J. Statist. Phys.}\ }\textbf {\bibinfo {volume} {36}},\ \bibinfo
  {pages} {163--179} (\bibinfo {year} {1984})}\BibitemShut {NoStop}%
\bibitem [{\citenamefont {F{\"o}ppl}(1912)}]{Foppl-12}%
  \BibitemOpen
  \bibfield  {author} {\bibinfo {author} {\bibnamefont {F{\"o}ppl},
  \bibfnamefont {L.}},\ }\bibfield  {title} {\enquote {\bibinfo {title}
  {Stabile {A}nordnungen von {E}lektronen im {A}tom},}\ }\href {\doibase
  10.1515/crll.1912.141.251} {\bibfield  {journal} {\bibinfo  {journal} {J.
  Reine Angew. Math.}\ }\textbf {\bibinfo {volume} {141}},\ \bibinfo {pages}
  {251--302} (\bibinfo {year} {1912})}\BibitemShut {NoStop}%
\bibitem [{\citenamefont {Forrester}(1993)}]{Forrester-93}%
  \BibitemOpen
  \bibfield  {author} {\bibinfo {author} {\bibnamefont {Forrester},
  \bibfnamefont {P.}},\ }\bibfield  {title} {\enquote {\bibinfo {title} {Exact
  integral formulas and asymptotics for the correlations in the $1/r^2$ quantum
  many body system},}\ }\href {\doibase 10.1016/0375-9601(93)90661-I}
  {\bibfield  {journal} {\bibinfo  {journal} {Phys. Lett. A}\ }\textbf
  {\bibinfo {volume} {179}},\ \bibinfo {pages} {127--130} (\bibinfo {year}
  {1993})}\BibitemShut {NoStop}%
\bibitem [{\citenamefont {Forrester}(1984)}]{Forrester-84}%
  \BibitemOpen
  \bibfield  {author} {\bibinfo {author} {\bibnamefont {Forrester},
  \bibfnamefont {P.~J.}},\ }\bibfield  {title} {\enquote {\bibinfo {title}
  {Analogues between a quantum many body problem and the log-gas},}\ }\href
  {http://stacks.iop.org/0305-4470/17/i=10/a=018} {\bibfield  {journal}
  {\bibinfo  {journal} {J. Phys. A}\ }\textbf {\bibinfo {volume} {17}},\
  \bibinfo {pages} {2059} (\bibinfo {year} {1984})}\BibitemShut {NoStop}%
\bibitem [{\citenamefont {Forrester}(1998)}]{Forrester-98}%
  \BibitemOpen
  \bibfield  {author} {\bibinfo {author} {\bibnamefont {Forrester},
  \bibfnamefont {P.~J.}},\ }\bibfield  {title} {\enquote {\bibinfo {title}
  {Exact results for two-dimensional {C}oulomb systems},}\ \ }(\bibinfo {year}
  {1998})\ pp.\ \bibinfo {pages} {235--270},\ \bibinfo {note} {fundamental
  problems in statistical mechanics (Altenberg, 1997)}\BibitemShut {NoStop}%
\bibitem [{\citenamefont {Forrester}(2010)}]{Forrester-10}%
  \BibitemOpen
  \bibfield  {author} {\bibinfo {author} {\bibnamefont {Forrester},
  \bibfnamefont {P.~J.}},\ }\href@noop {} {\emph {\bibinfo {title} {Log-gases
  and random matrices}}},\ \bibinfo {series} {London Mathematical Society
  Monographs Series}, Vol.~\bibinfo {volume} {34}\ (\bibinfo  {publisher}
  {Princeton University Press, Princeton, NJ},\ \bibinfo {year} {2010})\ pp.\
  \bibinfo {pages} {xiv+791}\BibitemShut {NoStop}%
\bibitem [{\citenamefont {Forrester}(2022)}]{Forrester-22_ppt}%
  \BibitemOpen
  \bibfield  {author} {\bibinfo {author} {\bibnamefont {Forrester},
  \bibfnamefont {P.~J.}},\ }\bibfield  {title} {\enquote {\bibinfo {title} {A
  review of exact results for fluctuation formulas in random matrix theory},}\
  }\href {https://arxiv.org/abs/2204.03303} {\bibfield  {journal} {\bibinfo
  {journal} {ArXiV e-prints}\ } (\bibinfo {year} {2022})}\BibitemShut {NoStop}%
\bibitem [{\citenamefont {Fr{\"o}hlich}(1976)}]{Frohlich-76}%
  \BibitemOpen
  \bibfield  {author} {\bibinfo {author} {\bibnamefont {Fr{\"o}hlich},
  \bibfnamefont {J.}},\ }\bibfield  {title} {\enquote {\bibinfo {title}
  {Classical and quantum statistical mechanics in one and two dimensions:
  two-component {Y}ukawa- and {C}oulomb systems},}\ }\href
  {http://projecteuclid.org/euclid.cmp/1103899760} {\bibfield  {journal}
  {\bibinfo  {journal} {Comm. Math. Phys.}\ }\textbf {\bibinfo {volume} {47}},\
  \bibinfo {pages} {233--268} (\bibinfo {year} {1976})}\BibitemShut {NoStop}%
\bibitem [{\citenamefont {Fr{\"o}hlich}\ and\ \citenamefont
  {Park}(1978)}]{FroPar-78}%
  \BibitemOpen
  \bibfield  {author} {\bibinfo {author} {\bibnamefont {Fr{\"o}hlich},
  \bibfnamefont {J.}}\ and\ \bibinfo {author} {\bibnamefont {Park},
  \bibfnamefont {Y.~M.}},\ }\bibfield  {title} {\enquote {\bibinfo {title}
  {{Correlation inequalities and the thermodynamic limit for classical and
  quantum continuous systems}},}\ }\href
  {http://projecteuclid.org/euclid.cmp/1103901661} {\bibfield  {journal}
  {\bibinfo  {journal} {Comm. Math. Phys.}\ }\textbf {\bibinfo {volume} {59}},\
  \bibinfo {pages} {235--266} (\bibinfo {year} {1978})}\BibitemShut {NoStop}%
\bibitem [{\citenamefont {Fr{\"o}hlich}\ and\ \citenamefont
  {Pfister}(1981)}]{FroPfi-81}%
  \BibitemOpen
  \bibfield  {author} {\bibinfo {author} {\bibnamefont {Fr{\"o}hlich},
  \bibfnamefont {J.}}\ and\ \bibinfo {author} {\bibnamefont {Pfister},
  \bibfnamefont {C.}},\ }\bibfield  {title} {\enquote {\bibinfo {title} {On the
  absence of spontaneous symmetry breaking and of crystalline ordering in
  two-dimensional systems},}\ }\href
  {http://projecteuclid.org/euclid.cmp/1103920246} {\bibfield  {journal}
  {\bibinfo  {journal} {Comm. Math. Phys.}\ }\textbf {\bibinfo {volume} {81}},\
  \bibinfo {pages} {277--298} (\bibinfo {year} {1981})}\BibitemShut {NoStop}%
\bibitem [{\citenamefont {Fr{\"o}hlich}\ and\ \citenamefont
  {Pfister}(1986)}]{FroPfi-86}%
  \BibitemOpen
  \bibfield  {author} {\bibinfo {author} {\bibnamefont {Fr{\"o}hlich},
  \bibfnamefont {J.}}\ and\ \bibinfo {author} {\bibnamefont {Pfister},
  \bibfnamefont {C.-E.}},\ }\bibfield  {title} {\enquote {\bibinfo {title}
  {Absence of crystalline ordering in two dimensions},}\ }\href
  {http://projecteuclid.org/euclid.cmp/1104115175} {\bibfield  {journal}
  {\bibinfo  {journal} {Comm. Math. Phys.}\ }\textbf {\bibinfo {volume}
  {104}},\ \bibinfo {pages} {697--700} (\bibinfo {year} {1986})}\BibitemShut
  {NoStop}%
\bibitem [{\citenamefont {Fr{\"o}hlich}, \citenamefont {Simon},\ and\
  \citenamefont {Spencer}(1976)}]{FroSimSpe-76}%
  \BibitemOpen
  \bibfield  {author} {\bibinfo {author} {\bibnamefont {Fr{\"o}hlich},
  \bibfnamefont {J.}}, \bibinfo {author} {\bibnamefont {Simon}, \bibfnamefont
  {B.}}, \ and\ \bibinfo {author} {\bibnamefont {Spencer}, \bibfnamefont
  {T.}},\ }\bibfield  {title} {\enquote {\bibinfo {title} {Infrared bounds,
  phase transitions and continuous symmetry breaking},}\ }\href {\doibase
  10.1007/BF01608557} {\bibfield  {journal} {\bibinfo  {journal} {Comm. Math.
  Phys.}\ }\textbf {\bibinfo {volume} {50}},\ \bibinfo {pages} {79--95}
  (\bibinfo {year} {1976})}\BibitemShut {NoStop}%
\bibitem [{\citenamefont {Fr{\"o}hlich}\ and\ \citenamefont
  {Spencer}(1981{\natexlab{a}})}]{FroSpe-81}%
  \BibitemOpen
  \bibfield  {author} {\bibinfo {author} {\bibnamefont {Fr{\"o}hlich},
  \bibfnamefont {J.}}\ and\ \bibinfo {author} {\bibnamefont {Spencer},
  \bibfnamefont {T.}},\ }\bibfield  {title} {\enquote {\bibinfo {title}
  {Kosterlitz-{T}houless transition in the two-dimensional plane rotator and
  {C}oulomb gas},}\ }\href {https://doi.org/10.1103/PhysRevLett.46.1006}
  {\bibfield  {journal} {\bibinfo  {journal} {Phys. Rev. Lett.}\ }\textbf
  {\bibinfo {volume} {46}},\ \bibinfo {pages} {1006--1009} (\bibinfo {year}
  {1981}{\natexlab{a}})}\BibitemShut {NoStop}%
\bibitem [{\citenamefont {Fr{\"o}hlich}\ and\ \citenamefont
  {Spencer}(1981{\natexlab{b}})}]{FroSpe-81b}%
  \BibitemOpen
  \bibfield  {author} {\bibinfo {author} {\bibnamefont {Fr{\"o}hlich},
  \bibfnamefont {J.}}\ and\ \bibinfo {author} {\bibnamefont {Spencer},
  \bibfnamefont {T.}},\ }\bibfield  {title} {\enquote {\bibinfo {title} {The
  {K}osterlitz-{T}houless transition in two-dimensional abelian spin systems
  and the {C}oulomb gas},}\ }\href
  {http://projecteuclid.org/euclid.cmp/1103920388} {\bibfield  {journal}
  {\bibinfo  {journal} {Comm. Math. Phys.}\ }\textbf {\bibinfo {volume} {81}},\
  \bibinfo {pages} {527--602} (\bibinfo {year}
  {1981}{\natexlab{b}})}\BibitemShut {NoStop}%
\bibitem [{\citenamefont {Fr\"{o}hlich}\ and\ \citenamefont
  {Spencer}(1981)}]{FroSpe-81c}%
  \BibitemOpen
  \bibfield  {author} {\bibinfo {author} {\bibnamefont {Fr\"{o}hlich},
  \bibfnamefont {J.}}\ and\ \bibinfo {author} {\bibnamefont {Spencer},
  \bibfnamefont {T.}},\ }\bibfield  {title} {\enquote {\bibinfo {title} {On the
  statistical mechanics of classical {C}oulomb and dipole gases},}\ }\href
  {\doibase 10.1007/BF01011379} {\bibfield  {journal} {\bibinfo  {journal} {J.
  Statist. Phys.}\ }\textbf {\bibinfo {volume} {24}},\ \bibinfo {pages}
  {617--701} (\bibinfo {year} {1981})}\BibitemShut {NoStop}%
\bibitem [{\citenamefont {Fr{\"o}hlich}\ and\ \citenamefont
  {Spencer}(1982)}]{FroSpe-82}%
  \BibitemOpen
  \bibfield  {author} {\bibinfo {author} {\bibnamefont {Fr{\"o}hlich},
  \bibfnamefont {J.}}\ and\ \bibinfo {author} {\bibnamefont {Spencer},
  \bibfnamefont {T.}},\ }\bibfield  {title} {\enquote {\bibinfo {title} {The
  phase transition in the one-dimensional {I}sing model with {$1/r^{2}$}
  interaction energy},}\ }\href
  {http://projecteuclid.org/euclid.cmp/1103921047} {\bibfield  {journal}
  {\bibinfo  {journal} {Comm. Math. Phys.}\ }\textbf {\bibinfo {volume} {84}},\
  \bibinfo {pages} {87--101} (\bibinfo {year} {1982})}\BibitemShut {NoStop}%
\bibitem [{\citenamefont {Fuchs}(1935)}]{Fuchs-35}%
  \BibitemOpen
  \bibfield  {author} {\bibinfo {author} {\bibnamefont {Fuchs}, \bibfnamefont
  {K.}},\ }\bibfield  {title} {\enquote {\bibinfo {title} {{A Quantum
  Mechanical Investigation of the Cohesive Forces of Metallic Copper}},}\
  }\href {\doibase 10.1098/rspa.1935.0167} {\bibfield  {journal} {\bibinfo
  {journal} {Proc. Roy. Soc. London Ser. A.}\ }\textbf {\bibinfo {volume}
  {151}},\ \bibinfo {pages} {585--602} (\bibinfo {year} {1935})}\BibitemShut
  {NoStop}%
\bibitem [{\citenamefont {Gallavotti}(1999)}]{Gallavotti-99}%
  \BibitemOpen
  \bibfield  {author} {\bibinfo {author} {\bibnamefont {Gallavotti},
  \bibfnamefont {G.}},\ }\href {\doibase 10.1007/978-3-662-03952-6} {\emph
  {\bibinfo {title} {Statistical mechanics}}},\ Texts and Monographs in
  Physics\ (\bibinfo  {publisher} {Springer-Verlag, Berlin},\ \bibinfo {year}
  {1999})\ pp.\ \bibinfo {pages} {xiv+339},\ \bibinfo {note} {a short
  treatise}\BibitemShut {NoStop}%
\bibitem [{\citenamefont {Gallavotti}\ and\ \citenamefont
  {Marchioro}(1973)}]{GalMar-73}%
  \BibitemOpen
  \bibfield  {author} {\bibinfo {author} {\bibnamefont {Gallavotti},
  \bibfnamefont {G.}}\ and\ \bibinfo {author} {\bibnamefont {Marchioro},
  \bibfnamefont {C.}},\ }\bibfield  {title} {\enquote {\bibinfo {title} {On the
  calculation of an integral},}\ }\href {\doibase 10.1016/0022-247X(73)90008-5}
  {\bibfield  {journal} {\bibinfo  {journal} {J. Math. Anal. Appl.}\ }\textbf
  {\bibinfo {volume} {44}},\ \bibinfo {pages} {661--675} (\bibinfo {year}
  {1973})}\BibitemShut {NoStop}%
\bibitem [{\citenamefont {Gangardt}\ and\ \citenamefont
  {Kamenev}(2001)}]{GanKam-01}%
  \BibitemOpen
  \bibfield  {author} {\bibinfo {author} {\bibnamefont {Gangardt},
  \bibfnamefont {D.~M.}}\ and\ \bibinfo {author} {\bibnamefont {Kamenev},
  \bibfnamefont {A.}},\ }\bibfield  {title} {\enquote {\bibinfo {title}
  {{Replica treatment of the Calogero--Sutherland model}},}\ }\href {\doibase
  https://doi.org/10.1016/S0550-3213(01)00326-1} {\bibfield  {journal}
  {\bibinfo  {journal} {Nuclear Physics B}\ }\textbf {\bibinfo {volume}
  {610}},\ \bibinfo {pages} {578--594} (\bibinfo {year} {2001})}\BibitemShut
  {NoStop}%
\bibitem [{\citenamefont {Gann}, \citenamefont {Chakravarty},\ and\
  \citenamefont {Chester}(1979)}]{GanChaChe-79}%
  \BibitemOpen
  \bibfield  {author} {\bibinfo {author} {\bibnamefont {Gann}, \bibfnamefont
  {R.}}, \bibinfo {author} {\bibnamefont {Chakravarty}, \bibfnamefont {S.}}, \
  and\ \bibinfo {author} {\bibnamefont {Chester}, \bibfnamefont {G.}},\
  }\bibfield  {title} {\enquote {\bibinfo {title} {Monte {C}arlo simulation of
  the classical two-dimensional one-component plasma},}\ }\href {\doibase
  10.1103/PhysRevB.20.326} {\bibfield  {journal} {\bibinfo  {journal} {Phys.
  Rev. B}\ }\textbf {\bibinfo {volume} {20}},\ \bibinfo {pages} {326--344}
  (\bibinfo {year} {1979})}\BibitemShut {NoStop}%
\bibitem [{\citenamefont {Garrod}\ and\ \citenamefont
  {Simmons}(1972)}]{GarSim-72}%
  \BibitemOpen
  \bibfield  {author} {\bibinfo {author} {\bibnamefont {Garrod}, \bibfnamefont
  {C.}}\ and\ \bibinfo {author} {\bibnamefont {Simmons}, \bibfnamefont {C.}},\
  }\bibfield  {title} {\enquote {\bibinfo {title} {Rigorous statistical
  mechanics for nonuniform systems},}\ }\href {\doibase 10.1063/1.1666118}
  {\bibfield  {journal} {\bibinfo  {journal} {J. Mathematical Phys.}\ }\textbf
  {\bibinfo {volume} {13}},\ \bibinfo {pages} {1168--1176} (\bibinfo {year}
  {1972})}\BibitemShut {NoStop}%
\bibitem [{\citenamefont {Gasser}\ \emph {et~al.}(2010)\citenamefont {Gasser},
  \citenamefont {Eisenmann}, \citenamefont {Maret},\ and\ \citenamefont
  {Keim}}]{GasEisMarKei-10}%
  \BibitemOpen
  \bibfield  {author} {\bibinfo {author} {\bibnamefont {Gasser}, \bibfnamefont
  {U.}}, \bibinfo {author} {\bibnamefont {Eisenmann}, \bibfnamefont {C.}},
  \bibinfo {author} {\bibnamefont {Maret}, \bibfnamefont {G.}}, \ and\ \bibinfo
  {author} {\bibnamefont {Keim}, \bibfnamefont {P.}},\ }\bibfield  {title}
  {\enquote {\bibinfo {title} {Melting of crystals in two dimensions},}\ }\href
  {\doibase 10.1002/cphc.200900755} {\bibfield  {journal} {\bibinfo  {journal}
  {Chem. Phys. Chem.}\ }\textbf {\bibinfo {volume} {11}},\ \bibinfo {pages}
  {963--970} (\bibinfo {year} {2010})}\BibitemShut {NoStop}%
\bibitem [{\citenamefont {Ge}\ and\ \citenamefont {Sandier}(2021)}]{GeSan-21}%
  \BibitemOpen
  \bibfield  {author} {\bibinfo {author} {\bibnamefont {Ge}, \bibfnamefont
  {Y.}}\ and\ \bibinfo {author} {\bibnamefont {Sandier}, \bibfnamefont {E.}},\
  }\bibfield  {title} {\enquote {\bibinfo {title} {Lattices with finite
  renormalized coulombian interaction energy in the plane},}\ }\href {\doibase
  10.2140/tunis.2021.3.93} {\bibfield  {journal} {\bibinfo  {journal} {Tunis.
  J. Math.}\ }\textbf {\bibinfo {volume} {3}},\ \bibinfo {pages} {93--120}
  (\bibinfo {year} {2021})}\BibitemShut {NoStop}%
\bibitem [{\citenamefont {Genovese}\ and\ \citenamefont
  {Simonella}(2012)}]{GenSim-12}%
  \BibitemOpen
  \bibfield  {author} {\bibinfo {author} {\bibnamefont {Genovese},
  \bibfnamefont {G.}}\ and\ \bibinfo {author} {\bibnamefont {Simonella},
  \bibfnamefont {S.}},\ }\bibfield  {title} {\enquote {\bibinfo {title} {On the
  stationary {BBGKY} hierarchy for equilibrium states},}\ }\href {\doibase
  10.1007/s10955-012-0525-7} {\bibfield  {journal} {\bibinfo  {journal} {J.
  Stat. Phys.}\ }\textbf {\bibinfo {volume} {148}},\ \bibinfo {pages} {89--112}
  (\bibinfo {year} {2012})}\BibitemShut {NoStop}%
\bibitem [{\citenamefont {Georgii}(1976)}]{Georgii-76}%
  \BibitemOpen
  \bibfield  {author} {\bibinfo {author} {\bibnamefont {Georgii}, \bibfnamefont
  {H.-O.}},\ }\bibfield  {title} {\enquote {\bibinfo {title} {Canonical and
  grand canonical {G}ibbs states for continuum systems},}\ }\href
  {http://projecteuclid.org/euclid.cmp/1103899810} {\bibfield  {journal}
  {\bibinfo  {journal} {Comm. Math. Phys.}\ }\textbf {\bibinfo {volume} {48}},\
  \bibinfo {pages} {31--51} (\bibinfo {year} {1976})}\BibitemShut {NoStop}%
\bibitem [{\citenamefont {Georgii}(1994)}]{Georgii-94}%
  \BibitemOpen
  \bibfield  {author} {\bibinfo {author} {\bibnamefont {Georgii}, \bibfnamefont
  {H.-O.}},\ }\bibfield  {title} {\enquote {\bibinfo {title} {Large deviations
  and the equivalence of ensembles for gibbsian particle systems with
  superstable interaction},}\ }\href {\doibase 10.1007/BF01199021} {\bibfield
  {journal} {\bibinfo  {journal} {Probab. Theory Related Fields}\ }\textbf
  {\bibinfo {volume} {99}},\ \bibinfo {pages} {171--195} (\bibinfo {year}
  {1994})}\BibitemShut {NoStop}%
\bibitem [{\citenamefont {Georgii}(1995)}]{Georgii-95}%
  \BibitemOpen
  \bibfield  {author} {\bibinfo {author} {\bibnamefont {Georgii}, \bibfnamefont
  {H.-O.}},\ }\bibfield  {title} {\enquote {\bibinfo {title} {The equivalence
  of ensembles for classical systems of particles},}\ }\href {\doibase
  10.1007/BF02179874} {\bibfield  {journal} {\bibinfo  {journal} {J. Statist.
  Phys.}\ }\textbf {\bibinfo {volume} {80}},\ \bibinfo {pages} {1341--1378}
  (\bibinfo {year} {1995})}\BibitemShut {NoStop}%
\bibitem [{\citenamefont {Georgii}(1999)}]{Georgii-99}%
  \BibitemOpen
  \bibfield  {author} {\bibinfo {author} {\bibnamefont {Georgii}, \bibfnamefont
  {H.-O.}},\ }\bibfield  {title} {\enquote {\bibinfo {title} {Translation
  invariance and continuous symmetries in two-dimensional continuum systems},}\
  }in\ \href@noop {} {\emph {\bibinfo {booktitle} {Mathematical results in
  statistical mechanics ({M}arseilles, 1998)}}}\ (\bibinfo  {publisher} {World
  Sci. Publ., River Edge, NJ},\ \bibinfo {year} {1999})\ pp.\ \bibinfo {pages}
  {53--69}\BibitemShut {NoStop}%
\bibitem [{\citenamefont {Georgii}(2011)}]{Georgii-11}%
  \BibitemOpen
  \bibfield  {author} {\bibinfo {author} {\bibnamefont {Georgii}, \bibfnamefont
  {H.-O.}},\ }\href {\doibase 10.1515/9783110250329} {\emph {\bibinfo {title}
  {Gibbs measures and phase transitions}}},\ \bibinfo {edition} {2nd}\ ed.,\
  \bibinfo {series} {De Gruyter Studies in Mathematics}, Vol.~\bibinfo {volume}
  {9}\ (\bibinfo  {publisher} {Walter de Gruyter \& Co., Berlin},\ \bibinfo
  {year} {2011})\ pp.\ \bibinfo {pages} {xiv+545}\BibitemShut {NoStop}%
\bibitem [{\citenamefont {Georgii}\ and\ \citenamefont
  {Zessin}(1993)}]{GeoZes-93}%
  \BibitemOpen
  \bibfield  {author} {\bibinfo {author} {\bibnamefont {Georgii}, \bibfnamefont
  {H.-O.}}\ and\ \bibinfo {author} {\bibnamefont {Zessin}, \bibfnamefont
  {H.}},\ }\bibfield  {title} {\enquote {\bibinfo {title} {Large deviations and
  the maximum entropy principle for marked point random fields},}\ }\href
  {\doibase 10.1007/BF01192132} {\bibfield  {journal} {\bibinfo  {journal}
  {Probab. Theory Related Fields}\ }\textbf {\bibinfo {volume} {96}},\ \bibinfo
  {pages} {177--204} (\bibinfo {year} {1993})}\BibitemShut {NoStop}%
\bibitem [{\citenamefont {Ghosh}(2015)}]{Ghosh-15}%
  \BibitemOpen
  \bibfield  {author} {\bibinfo {author} {\bibnamefont {Ghosh}, \bibfnamefont
  {S.}},\ }\bibfield  {title} {\enquote {\bibinfo {title} {Determinantal
  processes and completeness of random exponentials: the critical case},}\
  }\href {\doibase 10.1007/s00440-014-0601-9} {\bibfield  {journal} {\bibinfo
  {journal} {Probab. Theory Related Fields}\ }\textbf {\bibinfo {volume}
  {163}},\ \bibinfo {pages} {643--665} (\bibinfo {year} {2015})}\BibitemShut
  {NoStop}%
\bibitem [{\citenamefont {Ghosh}\ and\ \citenamefont
  {Lebowitz}(2017{\natexlab{a}})}]{GhoLeb-17b}%
  \BibitemOpen
  \bibfield  {author} {\bibinfo {author} {\bibnamefont {Ghosh}, \bibfnamefont
  {S.}}\ and\ \bibinfo {author} {\bibnamefont {Lebowitz}, \bibfnamefont {J.}},\
  }\bibfield  {title} {\enquote {\bibinfo {title} {Number rigidity in
  superhomogeneous random point fields},}\ }\href {\doibase
  10.1007/s10955-016-1633-6} {\bibfield  {journal} {\bibinfo  {journal} {J.
  Stat. Phys.}\ }\textbf {\bibinfo {volume} {166}},\ \bibinfo {pages}
  {1016--1027} (\bibinfo {year} {2017}{\natexlab{a}})}\BibitemShut {NoStop}%
\bibitem [{\citenamefont {Ghosh}\ and\ \citenamefont
  {Lebowitz}(2017{\natexlab{b}})}]{GhoLeb-17}%
  \BibitemOpen
  \bibfield  {author} {\bibinfo {author} {\bibnamefont {Ghosh}, \bibfnamefont
  {S.}}\ and\ \bibinfo {author} {\bibnamefont {Lebowitz}, \bibfnamefont
  {J.~L.}},\ }\bibfield  {title} {\enquote {\bibinfo {title} {Fluctuations,
  large deviations and rigidity in hyperuniform systems: a brief survey},}\
  }\href {\doibase 10.1007/s13226-017-0248-1} {\bibfield  {journal} {\bibinfo
  {journal} {Indian J. Pure Appl. Math.}\ }\textbf {\bibinfo {volume} {48}},\
  \bibinfo {pages} {609--631} (\bibinfo {year}
  {2017}{\natexlab{b}})}\BibitemShut {NoStop}%
\bibitem [{\citenamefont {Ghosh}\ and\ \citenamefont
  {Peres}(2017)}]{GhoPer-17}%
  \BibitemOpen
  \bibfield  {author} {\bibinfo {author} {\bibnamefont {Ghosh}, \bibfnamefont
  {S.}}\ and\ \bibinfo {author} {\bibnamefont {Peres}, \bibfnamefont {Y.}},\
  }\bibfield  {title} {\enquote {\bibinfo {title} {Rigidity and tolerance in
  point processes: {G}aussian zeros and {G}inibre eigenvalues},}\ }\href
  {\doibase 10.1215/00127094-2017-0002} {\bibfield  {journal} {\bibinfo
  {journal} {Duke Math. J.}\ }\textbf {\bibinfo {volume} {166}},\ \bibinfo
  {pages} {1789--1858} (\bibinfo {year} {2017})}\BibitemShut {NoStop}%
\bibitem [{\citenamefont {Gigante}\ and\ \citenamefont
  {Leopardi}(2017)}]{GigLeo-17}%
  \BibitemOpen
  \bibfield  {author} {\bibinfo {author} {\bibnamefont {Gigante}, \bibfnamefont
  {G.}}\ and\ \bibinfo {author} {\bibnamefont {Leopardi}, \bibfnamefont {P.}},\
  }\bibfield  {title} {\enquote {\bibinfo {title} {Diameter bounded equal
  measure partitions of {A}hlfors regular metric measure spaces},}\ }\href
  {\doibase 10.1007/s00454-016-9834-y} {\bibfield  {journal} {\bibinfo
  {journal} {Discrete Comput. Geom.}\ }\textbf {\bibinfo {volume} {57}},\
  \bibinfo {pages} {419--430} (\bibinfo {year} {2017})}\BibitemShut {NoStop}%
\bibitem [{\citenamefont {Ginibre}(1965)}]{Ginibre-65}%
  \BibitemOpen
  \bibfield  {author} {\bibinfo {author} {\bibnamefont {Ginibre}, \bibfnamefont
  {J.}},\ }\bibfield  {title} {\enquote {\bibinfo {title} {Statistical
  ensembles of complex, quaternion, and real matrices},}\ }\href {\doibase
  10.1063/1.1704292} {\bibfield  {journal} {\bibinfo  {journal} {J. Math.
  Phys.}\ }\textbf {\bibinfo {volume} {6}},\ \bibinfo {pages} {440--449}
  (\bibinfo {year} {1965})}\BibitemShut {NoStop}%
\bibitem [{\citenamefont {Ginibre}(1967)}]{Ginibre-67}%
  \BibitemOpen
  \bibfield  {author} {\bibinfo {author} {\bibnamefont {Ginibre}, \bibfnamefont
  {J.}},\ }\bibfield  {title} {\enquote {\bibinfo {title} {Rigorous lower bound
  on the compressibility of a classical system},}\ }\href {\doibase
  https://doi.org/10.1016/0375-9601(67)90575-0} {\bibfield  {journal} {\bibinfo
   {journal} {Physics Letters A}\ }\textbf {\bibinfo {volume} {24}},\ \bibinfo
  {pages} {223--224} (\bibinfo {year} {1967})}\BibitemShut {NoStop}%
\bibitem [{\citenamefont {Giuliani}\ and\ \citenamefont
  {Vignale}(2005)}]{GiuVig-05}%
  \BibitemOpen
  \bibfield  {author} {\bibinfo {author} {\bibnamefont {Giuliani},
  \bibfnamefont {G.}}\ and\ \bibinfo {author} {\bibnamefont {Vignale},
  \bibfnamefont {G.}},\ }\href {http://www.google.dk/books?id=kFkIKRfgUpsC}
  {\emph {\bibinfo {title} {Quantum Theory of the Electron Liquid}}}\ (\bibinfo
   {publisher} {Cambridge University Press},\ \bibinfo {year}
  {2005})\BibitemShut {NoStop}%
\bibitem [{\citenamefont {Goldman}, \citenamefont {Novaga},\ and\ \citenamefont
  {Ruffini}(2015)}]{GolNovRuf-15}%
  \BibitemOpen
  \bibfield  {author} {\bibinfo {author} {\bibnamefont {Goldman}, \bibfnamefont
  {M.}}, \bibinfo {author} {\bibnamefont {Novaga}, \bibfnamefont {M.}}, \ and\
  \bibinfo {author} {\bibnamefont {Ruffini}, \bibfnamefont {B.}},\ }\bibfield
  {title} {\enquote {\bibinfo {title} {Existence and stability for a non-local
  isoperimetric model of charged liquid drops},}\ }\href {\doibase
  10.1007/s00205-014-0827-9} {\bibfield  {journal} {\bibinfo  {journal} {Arch.
  Ration. Mech. Anal.}\ }\textbf {\bibinfo {volume} {217}},\ \bibinfo {pages}
  {1--36} (\bibinfo {year} {2015})}\BibitemShut {NoStop}%
\bibitem [{\citenamefont {Goldman}, \citenamefont {Novaga},\ and\ \citenamefont
  {Ruffini}(2022)}]{GolNovRuf-22_ppt}%
  \BibitemOpen
  \bibfield  {author} {\bibinfo {author} {\bibnamefont {Goldman}, \bibfnamefont
  {M.}}, \bibinfo {author} {\bibnamefont {Novaga}, \bibfnamefont {M.}}, \ and\
  \bibinfo {author} {\bibnamefont {Ruffini}, \bibfnamefont {B.}},\ }\href@noop
  {} {\enquote {\bibinfo {title} {Rigidity of the ball for an isoperimetric
  problem with strong capacitary repulsion},}\ } (\bibinfo {year} {2022}),\
  \Eprint {http://arxiv.org/abs/2201.04376} {arXiv:2201.04376 [math.AP]}
  \BibitemShut {NoStop}%
\bibitem [{\citenamefont {Good}(1970)}]{Good-70}%
  \BibitemOpen
  \bibfield  {author} {\bibinfo {author} {\bibnamefont {Good}, \bibfnamefont
  {I.~J.}},\ }\bibfield  {title} {\enquote {\bibinfo {title} {Short proof of a
  conjecture by {D}yson},}\ }\href {\doibase 10.1063/1.1665339} {\bibfield
  {journal} {\bibinfo  {journal} {J. Mathematical Phys.}\ }\textbf {\bibinfo
  {volume} {11}},\ \bibinfo {pages} {1884} (\bibinfo {year}
  {1970})}\BibitemShut {NoStop}%
\bibitem [{\citenamefont {G{\"o}tze}(2004)}]{Gotze-04}%
  \BibitemOpen
  \bibfield  {author} {\bibinfo {author} {\bibnamefont {G{\"o}tze},
  \bibfnamefont {F.}},\ }\bibfield  {title} {\enquote {\bibinfo {title}
  {Lattice point problems and values of quadratic forms},}\ }\href {\doibase
  10.1007/s00222-004-0366-3} {\bibfield  {journal} {\bibinfo  {journal}
  {Invent. Math.}\ }\textbf {\bibinfo {volume} {157}},\ \bibinfo {pages}
  {195--226} (\bibinfo {year} {2004})}\BibitemShut {NoStop}%
\bibitem [{\citenamefont {Graf}\ and\ \citenamefont
  {Schenker}(1995{\natexlab{a}})}]{GraSch-95b}%
  \BibitemOpen
  \bibfield  {author} {\bibinfo {author} {\bibnamefont {Graf}, \bibfnamefont
  {G.~M.}}\ and\ \bibinfo {author} {\bibnamefont {Schenker}, \bibfnamefont
  {D.}},\ }\bibfield  {title} {\enquote {\bibinfo {title} {The free energy of
  systems with net charge},}\ }\href {\doibase 10.1007/BF00739156} {\bibfield
  {journal} {\bibinfo  {journal} {Lett. Math. Phys.}\ }\textbf {\bibinfo
  {volume} {35}},\ \bibinfo {pages} {75--83} (\bibinfo {year}
  {1995}{\natexlab{a}})}\BibitemShut {NoStop}%
\bibitem [{\citenamefont {Graf}\ and\ \citenamefont
  {Schenker}(1995{\natexlab{b}})}]{GraSch-95}%
  \BibitemOpen
  \bibfield  {author} {\bibinfo {author} {\bibnamefont {Graf}, \bibfnamefont
  {G.~M.}}\ and\ \bibinfo {author} {\bibnamefont {Schenker}, \bibfnamefont
  {D.}},\ }\bibfield  {title} {\enquote {\bibinfo {title} {On the molecular
  limit of {C}oulomb gases},}\ }\href {\doibase 10.1007/BF02099471} {\bibfield
  {journal} {\bibinfo  {journal} {Commun. Math. Phys.}\ }\textbf {\bibinfo
  {volume} {174}},\ \bibinfo {pages} {215--227} (\bibinfo {year}
  {1995}{\natexlab{b}})}\BibitemShut {NoStop}%
\bibitem [{\citenamefont {Gregg}(1989)}]{Gregg-89}%
  \BibitemOpen
  \bibfield  {author} {\bibinfo {author} {\bibnamefont {Gregg}, \bibfnamefont
  {J.~N.}},\ }\bibfield  {title} {\enquote {\bibinfo {title} {The existence of
  the thermodynamic limit in {C}oulomb-like systems},}\ }\href
  {https://projecteuclid.org:443/euclid.cmp/1104178763} {\bibfield  {journal}
  {\bibinfo  {journal} {Comm. Math. Phys.}\ }\textbf {\bibinfo {volume}
  {123}},\ \bibinfo {pages} {255--276} (\bibinfo {year} {1989})}\BibitemShut
  {NoStop}%
\bibitem [{\citenamefont {Grimes}\ and\ \citenamefont
  {Adams}(1979)}]{GriAda-79}%
  \BibitemOpen
  \bibfield  {author} {\bibinfo {author} {\bibnamefont {Grimes}, \bibfnamefont
  {C.~C.}}\ and\ \bibinfo {author} {\bibnamefont {Adams}, \bibfnamefont {G.}},\
  }\bibfield  {title} {\enquote {\bibinfo {title} {Evidence for a
  liquid-to-crystal phase transition in a classical, two-dimensional sheet of
  electrons},}\ }\href {\doibase 10.1103/PhysRevLett.42.795} {\bibfield
  {journal} {\bibinfo  {journal} {Phys. Rev. Lett.}\ }\textbf {\bibinfo
  {volume} {42}},\ \bibinfo {pages} {795--798} (\bibinfo {year}
  {1979})}\BibitemShut {NoStop}%
\bibitem [{\citenamefont {Groeneveld}(1967)}]{Groeneveld-67}%
  \BibitemOpen
  \bibfield  {author} {\bibinfo {author} {\bibnamefont {Groeneveld},},\ }in\
  \href {https://books.google.fr/books?id=rYkuAAAAIAAJ} {\emph {\bibinfo
  {booktitle} {{Statistical Mechanics: Foundations and Applications;
  Proceedings of the IUPAP meeting, Copenhagen 1966}}}},\ \bibinfo {editor}
  {edited by\ \bibinfo {editor} {\bibfnamefont {T.}~\bibnamefont {Bak}}},\
  \bibinfo {organization} {IUPAP (International Union of Pure and Applied
  Physics)}\ (\bibinfo  {publisher} {W. A. Benjamin},\ \bibinfo {year}
  {1967})\BibitemShut {NoStop}%
\bibitem [{\citenamefont {Gruber}, \citenamefont {Lebowitz},\ and\
  \citenamefont {Martin}(1981)}]{GruLebMar-81}%
  \BibitemOpen
  \bibfield  {author} {\bibinfo {author} {\bibnamefont {Gruber}, \bibfnamefont
  {C.}}, \bibinfo {author} {\bibnamefont {Lebowitz}, \bibfnamefont {J.}}, \
  and\ \bibinfo {author} {\bibnamefont {Martin}, \bibfnamefont {P.}},\
  }\bibfield  {title} {\enquote {\bibinfo {title} {Sum rules for inhomogeneous
  {C}oulomb systems},}\ }\href {\doibase 10.1063/1.442093} {\bibfield
  {journal} {\bibinfo  {journal} {J. Chem. Phys.}\ }\textbf {\bibinfo {volume}
  {75}},\ \bibinfo {pages} {944--954} (\bibinfo {year} {1981})}\BibitemShut
  {NoStop}%
\bibitem [{\citenamefont {Gruber}, \citenamefont {Lugrin},\ and\ \citenamefont
  {Martin}(1978)}]{GruLugMar-78}%
  \BibitemOpen
  \bibfield  {author} {\bibinfo {author} {\bibnamefont {Gruber}, \bibfnamefont
  {C.}}, \bibinfo {author} {\bibnamefont {Lugrin}, \bibfnamefont {C.}}, \ and\
  \bibinfo {author} {\bibnamefont {Martin}, \bibfnamefont {P.~A.}},\ }\bibfield
   {title} {\enquote {\bibinfo {title} {Equilibrium equations for classical
  systems with long range forces and application to the one dimensional
  {C}oulomb gas},}\ }\href@noop {} {\bibfield  {journal} {\bibinfo  {journal}
  {Helv Phys. Acta}\ }\textbf {\bibinfo {volume} {51}},\ \bibinfo {pages}
  {829--866} (\bibinfo {year} {1978})}\BibitemShut {NoStop}%
\bibitem [{\citenamefont {{Gruber}}, \citenamefont {{Lugrin}},\ and\
  \citenamefont {{Martin}}(1980)}]{GruLugMar-80}%
  \BibitemOpen
  \bibfield  {author} {\bibinfo {author} {\bibnamefont {{Gruber}},
  \bibfnamefont {C.}}, \bibinfo {author} {\bibnamefont {{Lugrin}},
  \bibfnamefont {C.}}, \ and\ \bibinfo {author} {\bibnamefont {{Martin}},
  \bibfnamefont {P.~A.}},\ }\bibfield  {title} {\enquote {\bibinfo {title}
  {Equilibrium properties of classical systems with long-range forces. {BBGKY}
  equation, neutrality, screening, and sum rules},}\ }\href {\doibase
  10.1007/BF01008049} {\bibfield  {journal} {\bibinfo  {journal} {J. Stat.
  Phys.}\ }\textbf {\bibinfo {volume} {22}},\ \bibinfo {pages} {193--236}
  (\bibinfo {year} {1980})}\BibitemShut {NoStop}%
\bibitem [{\citenamefont {Gruber}\ and\ \citenamefont
  {Martin}(1981)}]{GruMar-81}%
  \BibitemOpen
  \bibfield  {author} {\bibinfo {author} {\bibnamefont {Gruber}, \bibfnamefont
  {C.}}\ and\ \bibinfo {author} {\bibnamefont {Martin}, \bibfnamefont {P.}},\
  }\bibfield  {title} {\enquote {\bibinfo {title} {Translation invariance in
  statistical mechanics of classical continuous systems},}\ }\href {\doibase
  10.1016/0003-4916(81)90183-4} {\bibfield  {journal} {\bibinfo  {journal}
  {Ann. Physics}\ }\textbf {\bibinfo {volume} {131}},\ \bibinfo {pages}
  {56--72} (\bibinfo {year} {1981})}\BibitemShut {NoStop}%
\bibitem [{\citenamefont {Gruber}, \citenamefont {Martin},\ and\ \citenamefont
  {Oguey}(1982)}]{GruMarOgu-82}%
  \BibitemOpen
  \bibfield  {author} {\bibinfo {author} {\bibnamefont {Gruber}, \bibfnamefont
  {C.}}, \bibinfo {author} {\bibnamefont {Martin}, \bibfnamefont {P.~A.}}, \
  and\ \bibinfo {author} {\bibnamefont {Oguey}, \bibfnamefont {C.}},\
  }\bibfield  {title} {\enquote {\bibinfo {title} {Euclidean invariance in
  statistical mechanics of classical continuous system},}\ }\href
  {http://projecteuclid.org/euclid.cmp/1103921045} {\bibfield  {journal}
  {\bibinfo  {journal} {Comm. Math. Phys.}\ }\textbf {\bibinfo {volume} {84}},\
  \bibinfo {pages} {55--69} (\bibinfo {year} {1982})}\BibitemShut {NoStop}%
\bibitem [{\citenamefont {Gruber}(2012)}]{Gruber-12}%
  \BibitemOpen
  \bibfield  {author} {\bibinfo {author} {\bibnamefont {Gruber}, \bibfnamefont
  {P.~M.}},\ }\bibfield  {title} {\enquote {\bibinfo {title} {Application of an
  idea of {V}orono\"{\i} to lattice zeta functions},}\ }\href {\doibase
  10.1134/S0081543812010099} {\bibfield  {journal} {\bibinfo  {journal} {Proc.
  Steklov Inst. Math.}\ }\textbf {\bibinfo {volume} {276}},\ \bibinfo {pages}
  {103--124} (\bibinfo {year} {2012})}\BibitemShut {NoStop}%
\bibitem [{\citenamefont {Guhr}, \citenamefont {M{\"u}ller-Groeling},\ and\
  \citenamefont {Weidenm{\"u}ller}(1998)}]{GuhMulWei-98}%
  \BibitemOpen
  \bibfield  {author} {\bibinfo {author} {\bibnamefont {Guhr}, \bibfnamefont
  {T.}}, \bibinfo {author} {\bibnamefont {M{\"u}ller-Groeling}, \bibfnamefont
  {A.}}, \ and\ \bibinfo {author} {\bibnamefont {Weidenm{\"u}ller},
  \bibfnamefont {H.~A.}},\ }\bibfield  {title} {\enquote {\bibinfo {title}
  {Random-matrix theories in quantum physics: common concepts},}\ }\href
  {\doibase https://doi.org/10.1016/S0370-1573(97)00088-4} {\bibfield
  {journal} {\bibinfo  {journal} {Physics Reports}\ }\textbf {\bibinfo {volume}
  {299}},\ \bibinfo {pages} {189 -- 425} (\bibinfo {year} {1998})}\BibitemShut
  {NoStop}%
\bibitem [{\citenamefont {Gunson}(1962)}]{Gunson-62}%
  \BibitemOpen
  \bibfield  {author} {\bibinfo {author} {\bibnamefont {Gunson}, \bibfnamefont
  {J.}},\ }\bibfield  {title} {\enquote {\bibinfo {title} {Proof of a
  conjecture by {D}yson in the statistical theory of energy levels},}\ }\href
  {\doibase 10.1063/1.1724277} {\bibfield  {journal} {\bibinfo  {journal} {J.
  Mathematical Phys.}\ }\textbf {\bibinfo {volume} {3}},\ \bibinfo {pages}
  {752--753} (\bibinfo {year} {1962})}\BibitemShut {NoStop}%
\bibitem [{\citenamefont {Gustafsson}(2004)}]{Gustafsson-04}%
  \BibitemOpen
  \bibfield  {author} {\bibinfo {author} {\bibnamefont {Gustafsson},
  \bibfnamefont {B.}},\ }\bibfield  {title} {\enquote {\bibinfo {title}
  {Lectures on balayage},}\ }in\ \href@noop {} {\emph {\bibinfo {booktitle}
  {Clifford algebras and potential theory}}},\ \bibinfo {series} {Univ. Joensuu
  Dept. Math. Rep. Ser.}, Vol.~\bibinfo {volume} {7}\ (\bibinfo  {publisher}
  {Univ. Joensuu, Joensuu},\ \bibinfo {year} {2004})\ pp.\ \bibinfo {pages}
  {17--63}\BibitemShut {NoStop}%
\bibitem [{\citenamefont {Gustafsson}\ and\ \citenamefont
  {Putinar}(2007)}]{GusPut-07}%
  \BibitemOpen
  \bibfield  {author} {\bibinfo {author} {\bibnamefont {Gustafsson},
  \bibfnamefont {B.}}\ and\ \bibinfo {author} {\bibnamefont {Putinar},
  \bibfnamefont {M.}},\ }\bibfield  {title} {\enquote {\bibinfo {title}
  {Selected topics on quadrature domains},}\ }\href {\doibase
  10.1016/j.physd.2007.04.015} {\bibfield  {journal} {\bibinfo  {journal}
  {Phys. D}\ }\textbf {\bibinfo {volume} {235}},\ \bibinfo {pages} {90--100}
  (\bibinfo {year} {2007})}\BibitemShut {NoStop}%
\bibitem [{\citenamefont {Gustafsson}\ and\ \citenamefont
  {Shapiro}(2005)}]{GusSha-05}%
  \BibitemOpen
  \bibfield  {author} {\bibinfo {author} {\bibnamefont {Gustafsson},
  \bibfnamefont {B.}}\ and\ \bibinfo {author} {\bibnamefont {Shapiro},
  \bibfnamefont {H.~S.}},\ }\bibfield  {title} {\enquote {\bibinfo {title}
  {What is a quadrature domain?}}\ }in\ \href {\doibase
  10.1007/3-7643-7316-4\_1} {\emph {\bibinfo {booktitle} {Quadrature domains
  and their applications}}},\ \bibinfo {series} {Oper. Theory Adv. Appl.},
  Vol.\ \bibinfo {volume} {156}\ (\bibinfo  {publisher} {Birkh\"{a}user,
  Basel},\ \bibinfo {year} {2005})\ pp.\ \bibinfo {pages} {1--25}\BibitemShut
  {NoStop}%
\bibitem [{\citenamefont {Ha}(1994)}]{Ha-94}%
  \BibitemOpen
  \bibfield  {author} {\bibinfo {author} {\bibnamefont {Ha}, \bibfnamefont
  {Z.~N.~C.}},\ }\bibfield  {title} {\enquote {\bibinfo {title} {Exact
  dynamical correlation functions of {C}alogero-{S}utherland model and
  one-dimensional fractional statistics},}\ }\href {\doibase
  10.1103/PhysRevLett.73.1574} {\bibfield  {journal} {\bibinfo  {journal}
  {Phys. Rev. Lett.}\ }\textbf {\bibinfo {volume} {73}},\ \bibinfo {pages}
  {1574--1577} (\bibinfo {year} {1994})}\BibitemShut {NoStop}%
\bibitem [{\citenamefont {Hainzl}, \citenamefont {Lewin},\ and\ \citenamefont
  {Solovej}(2009{\natexlab{a}})}]{HaiLewSol_1-09}%
  \BibitemOpen
  \bibfield  {author} {\bibinfo {author} {\bibnamefont {Hainzl}, \bibfnamefont
  {C.}}, \bibinfo {author} {\bibnamefont {Lewin}, \bibfnamefont {M.}}, \ and\
  \bibinfo {author} {\bibnamefont {Solovej}, \bibfnamefont {J.~P.}},\
  }\bibfield  {title} {\enquote {\bibinfo {title} {The thermodynamic limit of
  quantum {C}oulomb systems. {P}art {I}. {G}eneral theory},}\ }\href {\doibase
  10.1016/j.aim.2008.12.010} {\bibfield  {journal} {\bibinfo  {journal}
  {Advances in Math.}\ }\textbf {\bibinfo {volume} {221}},\ \bibinfo {pages}
  {454--487} (\bibinfo {year} {2009}{\natexlab{a}})},\ \Eprint
  {http://arxiv.org/abs/0806.1708} {0806.1708} \BibitemShut {NoStop}%
\bibitem [{\citenamefont {Hainzl}, \citenamefont {Lewin},\ and\ \citenamefont
  {Solovej}(2009{\natexlab{b}})}]{HaiLewSol_2-09}%
  \BibitemOpen
  \bibfield  {author} {\bibinfo {author} {\bibnamefont {Hainzl}, \bibfnamefont
  {C.}}, \bibinfo {author} {\bibnamefont {Lewin}, \bibfnamefont {M.}}, \ and\
  \bibinfo {author} {\bibnamefont {Solovej}, \bibfnamefont {J.~P.}},\
  }\bibfield  {title} {\enquote {\bibinfo {title} {The thermodynamic limit of
  quantum {C}oulomb systems. {P}art {II}. {A}pplications},}\ }\href {\doibase
  10.1016/j.aim.2008.12.011} {\bibfield  {journal} {\bibinfo  {journal}
  {Advances in Math.}\ }\textbf {\bibinfo {volume} {221}},\ \bibinfo {pages}
  {488--546} (\bibinfo {year} {2009}{\natexlab{b}})},\ \Eprint
  {http://arxiv.org/abs/0806.1709} {0806.1709} \BibitemShut {NoStop}%
\bibitem [{\citenamefont {Haldane}(1981)}]{Haldane-81}%
  \BibitemOpen
  \bibfield  {author} {\bibinfo {author} {\bibnamefont {Haldane}, \bibfnamefont
  {F.~D.~M.}},\ }\bibfield  {title} {\enquote {\bibinfo {title} {Effective
  harmonic-fluid approach to low-energy properties of one-dimensional quantum
  fluids},}\ }\href {\doibase 10.1103/PhysRevLett.47.1840} {\bibfield
  {journal} {\bibinfo  {journal} {Phys. Rev. Lett.}\ }\textbf {\bibinfo
  {volume} {47}},\ \bibinfo {pages} {1840--1843} (\bibinfo {year}
  {1981})}\BibitemShut {NoStop}%
\bibitem [{\citenamefont {Haldane}(1983)}]{Haldane-83}%
  \BibitemOpen
  \bibfield  {author} {\bibinfo {author} {\bibnamefont {Haldane}, \bibfnamefont
  {F.~D.~M.}},\ }\bibfield  {title} {\enquote {\bibinfo {title} {{Fractional
  Quantization of the Hall Effect: A Hierarchy of Incompressible Quantum Fluid
  States}},}\ }\href {\doibase 10.1103/PhysRevLett.51.605} {\bibfield
  {journal} {\bibinfo  {journal} {Phys. Rev. Lett.}\ }\textbf {\bibinfo
  {volume} {51}},\ \bibinfo {pages} {605--608} (\bibinfo {year}
  {1983})}\BibitemShut {NoStop}%
\bibitem [{\citenamefont {Haldane}(1991)}]{Haldane-91}%
  \BibitemOpen
  \bibfield  {author} {\bibinfo {author} {\bibnamefont {Haldane}, \bibfnamefont
  {F.~D.~M.}},\ }\bibfield  {title} {\enquote {\bibinfo {title} {``{F}ractional
  statistics'' in arbitrary dimensions: A generalization of the {P}auli
  principle},}\ }\href {\doibase 10.1103/PhysRevLett.67.937} {\bibfield
  {journal} {\bibinfo  {journal} {Phys. Rev. Lett.}\ }\textbf {\bibinfo
  {volume} {67}},\ \bibinfo {pages} {937--940} (\bibinfo {year}
  {1991})}\BibitemShut {NoStop}%
\bibitem [{\citenamefont {Hall}(1979)}]{Hall-79}%
  \BibitemOpen
  \bibfield  {author} {\bibinfo {author} {\bibnamefont {Hall}, \bibfnamefont
  {G.~L.}},\ }\bibfield  {title} {\enquote {\bibinfo {title} {Correction to
  {F}uchs' calculation of the electrostatic energy of a {W}igner solid},}\
  }\href {\doibase 10.1103/PhysRevB.19.3921} {\bibfield  {journal} {\bibinfo
  {journal} {Phys. Rev. B}\ }\textbf {\bibinfo {volume} {19}},\ \bibinfo
  {pages} {3921--3932} (\bibinfo {year} {1979})}\BibitemShut {NoStop}%
\bibitem [{\citenamefont {Hall}(1981)}]{Hall-81}%
  \BibitemOpen
  \bibfield  {author} {\bibinfo {author} {\bibnamefont {Hall}, \bibfnamefont
  {G.~L.}},\ }\bibfield  {title} {\enquote {\bibinfo {title} {Response to
  ``{C}omment on the average potential of a {W}igner solid"},}\ }\href
  {\doibase 10.1103/PhysRevB.24.7415} {\bibfield  {journal} {\bibinfo
  {journal} {Phys. Rev. B}\ }\textbf {\bibinfo {volume} {24}},\ \bibinfo
  {pages} {7415--7418} (\bibinfo {year} {1981})}\BibitemShut {NoStop}%
\bibitem [{\citenamefont {Hall}\ and\ \citenamefont {Rice}(1980)}]{HalRic-80}%
  \BibitemOpen
  \bibfield  {author} {\bibinfo {author} {\bibnamefont {Hall}, \bibfnamefont
  {G.~L.}}\ and\ \bibinfo {author} {\bibnamefont {Rice}, \bibfnamefont
  {T.~R.}},\ }\bibfield  {title} {\enquote {\bibinfo {title} {Wigner solids,
  classical coulomb lattices, and invariant average potential},}\ }\href
  {\doibase 10.1103/PhysRevB.21.3757} {\bibfield  {journal} {\bibinfo
  {journal} {Phys. Rev. B}\ }\textbf {\bibinfo {volume} {21}},\ \bibinfo
  {pages} {3757--3759} (\bibinfo {year} {1980})}\BibitemShut {NoStop}%
\bibitem [{\citenamefont {Halperin}(1984)}]{Halperin-83}%
  \BibitemOpen
  \bibfield  {author} {\bibinfo {author} {\bibnamefont {Halperin},
  \bibfnamefont {B.~I.}},\ }\bibfield  {title} {\enquote {\bibinfo {title}
  {{Statistics of Quasiparticles and the Hierarchy of Fractional Quantized Hall
  States}},}\ }\href {\doibase 10.1103/PhysRevLett.52.1583} {\bibfield
  {journal} {\bibinfo  {journal} {Phys. Rev. Lett.}\ }\textbf {\bibinfo
  {volume} {52}},\ \bibinfo {pages} {1583--1586} (\bibinfo {year}
  {1984})}\BibitemShut {NoStop}%
\bibitem [{\citenamefont {Halperin}\ and\ \citenamefont
  {Nelson}(1978)}]{HalNel-78}%
  \BibitemOpen
  \bibfield  {author} {\bibinfo {author} {\bibnamefont {Halperin},
  \bibfnamefont {B.~I.}}\ and\ \bibinfo {author} {\bibnamefont {Nelson},
  \bibfnamefont {D.~R.}},\ }\bibfield  {title} {\enquote {\bibinfo {title}
  {Theory of two-dimensional melting},}\ }\href {\doibase
  10.1103/PhysRevLett.41.121} {\bibfield  {journal} {\bibinfo  {journal} {Phys.
  Rev. Lett.}\ }\textbf {\bibinfo {volume} {41}},\ \bibinfo {pages} {121--124}
  (\bibinfo {year} {1978})}\BibitemShut {NoStop}%
\bibitem [{\citenamefont {Hansen}, \citenamefont {Levesque},\ and\
  \citenamefont {Weis}(1979)}]{HanLevWei-79}%
  \BibitemOpen
  \bibfield  {author} {\bibinfo {author} {\bibnamefont {Hansen}, \bibfnamefont
  {J.~P.}}, \bibinfo {author} {\bibnamefont {Levesque}, \bibfnamefont {D.}}, \
  and\ \bibinfo {author} {\bibnamefont {Weis}, \bibfnamefont {J.~J.}},\
  }\bibfield  {title} {\enquote {\bibinfo {title} {Self-diffusion in the
  two-dimensional, classical electron gas},}\ }\href {\doibase
  10.1103/PhysRevLett.43.979} {\bibfield  {journal} {\bibinfo  {journal} {Phys.
  Rev. Lett.}\ }\textbf {\bibinfo {volume} {43}},\ \bibinfo {pages} {979--982}
  (\bibinfo {year} {1979})}\BibitemShut {NoStop}%
\bibitem [{\citenamefont {Haq}, \citenamefont {Pandey},\ and\ \citenamefont
  {Bohigas}(1982)}]{HaqPanBoh-82}%
  \BibitemOpen
  \bibfield  {author} {\bibinfo {author} {\bibnamefont {Haq}, \bibfnamefont
  {R.~U.}}, \bibinfo {author} {\bibnamefont {Pandey}, \bibfnamefont {A.}}, \
  and\ \bibinfo {author} {\bibnamefont {Bohigas}, \bibfnamefont {O.}},\
  }\bibfield  {title} {\enquote {\bibinfo {title} {Fluctuation properties of
  nuclear energy levels: Do theory and experiment agree?}}\ }\href {\doibase
  10.1103/PhysRevLett.48.1086} {\bibfield  {journal} {\bibinfo  {journal}
  {Phys. Rev. Lett.}\ }\textbf {\bibinfo {volume} {48}},\ \bibinfo {pages}
  {1086--1089} (\bibinfo {year} {1982})}\BibitemShut {NoStop}%
\bibitem [{\citenamefont {Hardin}\ \emph {et~al.}(2018)\citenamefont {Hardin},
  \citenamefont {Lebl\'{e}}, \citenamefont {Saff},\ and\ \citenamefont
  {Serfaty}}]{HarLebSafSer-18}%
  \BibitemOpen
  \bibfield  {author} {\bibinfo {author} {\bibnamefont {Hardin}, \bibfnamefont
  {D.~P.}}, \bibinfo {author} {\bibnamefont {Lebl\'{e}}, \bibfnamefont {T.}},
  \bibinfo {author} {\bibnamefont {Saff}, \bibfnamefont {E.~B.}}, \ and\
  \bibinfo {author} {\bibnamefont {Serfaty}, \bibfnamefont {S.}},\ }\bibfield
  {title} {\enquote {\bibinfo {title} {Large deviation principles for
  hypersingular {R}iesz gases},}\ }\href {\doibase 10.1007/s00365-018-9431-9}
  {\bibfield  {journal} {\bibinfo  {journal} {Constr. Approx.}\ }\textbf
  {\bibinfo {volume} {48}},\ \bibinfo {pages} {61--100} (\bibinfo {year}
  {2018})}\BibitemShut {NoStop}%
\bibitem [{\citenamefont {Hardin}, \citenamefont {Michaels},\ and\
  \citenamefont {Saff}(2019)}]{HarMicSaf-19}%
  \BibitemOpen
  \bibfield  {author} {\bibinfo {author} {\bibnamefont {Hardin}, \bibfnamefont
  {D.~P.}}, \bibinfo {author} {\bibnamefont {Michaels}, \bibfnamefont {T.~J.}},
  \ and\ \bibinfo {author} {\bibnamefont {Saff}, \bibfnamefont {E.~B.}},\
  }\bibfield  {title} {\enquote {\bibinfo {title} {Asymptotic linear
  programming lower bounds for the energy of minimizing {R}iesz and {G}auss
  configurations},}\ }\href {\doibase 10.1112/s0025579318000360} {\bibfield
  {journal} {\bibinfo  {journal} {Mathematika}\ }\textbf {\bibinfo {volume}
  {65}},\ \bibinfo {pages} {157--180} (\bibinfo {year} {2019})}\BibitemShut
  {NoStop}%
\bibitem [{\citenamefont {Hardin}\ \emph {et~al.}(2019)\citenamefont {Hardin},
  \citenamefont {Reznikov}, \citenamefont {Saff},\ and\ \citenamefont
  {Volberg}}]{HarRezSafVol-19}%
  \BibitemOpen
  \bibfield  {author} {\bibinfo {author} {\bibnamefont {Hardin}, \bibfnamefont
  {D.~P.}}, \bibinfo {author} {\bibnamefont {Reznikov}, \bibfnamefont {A.}},
  \bibinfo {author} {\bibnamefont {Saff}, \bibfnamefont {E.~B.}}, \ and\
  \bibinfo {author} {\bibnamefont {Volberg}, \bibfnamefont {A.}},\ }\bibfield
  {title} {\enquote {\bibinfo {title} {Local properties of {R}iesz minimal
  energy configurations and equilibrium measures},}\ }\href {\doibase
  10.1093/imrn/rnx262} {\bibfield  {journal} {\bibinfo  {journal} {Int. Math.
  Res. Not. IMRN}\ ,\ \bibinfo {pages} {5066--5086}} (\bibinfo {year}
  {2019})}\BibitemShut {NoStop}%
\bibitem [{\citenamefont {Hardin}\ and\ \citenamefont
  {Saff}(2004)}]{HarSaf-04}%
  \BibitemOpen
  \bibfield  {author} {\bibinfo {author} {\bibnamefont {Hardin}, \bibfnamefont
  {D.~P.}}\ and\ \bibinfo {author} {\bibnamefont {Saff}, \bibfnamefont
  {E.~B.}},\ }\bibfield  {title} {\enquote {\bibinfo {title} {Discretizing
  manifolds via minimum energy points},}\ }\href@noop {} {\bibfield  {journal}
  {\bibinfo  {journal} {Notices of the AMS}\ }\textbf {\bibinfo {volume}
  {51}},\ \bibinfo {pages} {1186--1194} (\bibinfo {year} {2004})}\BibitemShut
  {NoStop}%
\bibitem [{\citenamefont {Hardin}\ and\ \citenamefont
  {Saff}(2005)}]{HarSaf-05}%
  \BibitemOpen
  \bibfield  {author} {\bibinfo {author} {\bibnamefont {Hardin}, \bibfnamefont
  {D.~P.}}\ and\ \bibinfo {author} {\bibnamefont {Saff}, \bibfnamefont
  {E.~B.}},\ }\bibfield  {title} {\enquote {\bibinfo {title} {Minimal {R}iesz
  energy point configurations for rectifiable {$d$}-dimensional manifolds},}\
  }\href {\doibase 10.1016/j.aim.2004.05.006} {\bibfield  {journal} {\bibinfo
  {journal} {Adv. Math.}\ }\textbf {\bibinfo {volume} {193}},\ \bibinfo {pages}
  {174--204} (\bibinfo {year} {2005})}\BibitemShut {NoStop}%
\bibitem [{\citenamefont {Hardin}, \citenamefont {Saff},\ and\ \citenamefont
  {Simanek}(2014)}]{HarSafSim-14}%
  \BibitemOpen
  \bibfield  {author} {\bibinfo {author} {\bibnamefont {Hardin}, \bibfnamefont
  {D.~P.}}, \bibinfo {author} {\bibnamefont {Saff}, \bibfnamefont {E.~B.}}, \
  and\ \bibinfo {author} {\bibnamefont {Simanek}, \bibfnamefont {B.}},\
  }\bibfield  {title} {\enquote {\bibinfo {title} {Periodic discrete energy for
  long-range potentials},}\ }\href {\doibase 10.1063/1.4903975} {\bibfield
  {journal} {\bibinfo  {journal} {J. Math. Phys.}\ }\textbf {\bibinfo {volume}
  {55}},\ \bibinfo {pages} {123509, 27} (\bibinfo {year} {2014})}\BibitemShut
  {NoStop}%
\bibitem [{\citenamefont {Hardin}\ \emph {et~al.}(2017)\citenamefont {Hardin},
  \citenamefont {Saff}, \citenamefont {Simanek},\ and\ \citenamefont
  {Su}}]{HarSafSimSu-17}%
  \BibitemOpen
  \bibfield  {author} {\bibinfo {author} {\bibnamefont {Hardin}, \bibfnamefont
  {D.~P.}}, \bibinfo {author} {\bibnamefont {Saff}, \bibfnamefont {E.~B.}},
  \bibinfo {author} {\bibnamefont {Simanek}, \bibfnamefont {B.~Z.}}, \ and\
  \bibinfo {author} {\bibnamefont {Su}, \bibfnamefont {Y.}},\ }\bibfield
  {title} {\enquote {\bibinfo {title} {Next order energy asymptotics for
  {R}iesz potentials on flat tori},}\ }\href {\doibase 10.1093/imrn/rnw049}
  {\bibfield  {journal} {\bibinfo  {journal} {Int. Math. Res. Not. IMRN}\ ,\
  \bibinfo {pages} {3529--3556}} (\bibinfo {year} {2017})}\BibitemShut
  {NoStop}%
\bibitem [{\citenamefont {Hardin}, \citenamefont {Saff},\ and\ \citenamefont
  {Vlasiuk}(2017)}]{HarSafVla-17}%
  \BibitemOpen
  \bibfield  {author} {\bibinfo {author} {\bibnamefont {Hardin}, \bibfnamefont
  {D.~P.}}, \bibinfo {author} {\bibnamefont {Saff}, \bibfnamefont {E.~B.}}, \
  and\ \bibinfo {author} {\bibnamefont {Vlasiuk}, \bibfnamefont {O.~V.}},\
  }\bibfield  {title} {\enquote {\bibinfo {title} {Generating point
  configurations via hypersingular {R}iesz energy with an external field},}\
  }\href {\doibase 10.1137/16M107414X} {\bibfield  {journal} {\bibinfo
  {journal} {SIAM J. Math. Anal.}\ }\textbf {\bibinfo {volume} {49}},\ \bibinfo
  {pages} {646--673} (\bibinfo {year} {2017})}\BibitemShut {NoStop}%
\bibitem [{\citenamefont {Hardin}, \citenamefont {Saff},\ and\ \citenamefont
  {Whitehouse}(2012)}]{HarSafWhi-12}%
  \BibitemOpen
  \bibfield  {author} {\bibinfo {author} {\bibnamefont {Hardin}, \bibfnamefont
  {D.~P.}}, \bibinfo {author} {\bibnamefont {Saff}, \bibfnamefont {E.~B.}}, \
  and\ \bibinfo {author} {\bibnamefont {Whitehouse}, \bibfnamefont {J.~T.}},\
  }\bibfield  {title} {\enquote {\bibinfo {title} {Quasi-uniformity of minimal
  weighted energy points on compact metric spaces},}\ }\href {\doibase
  10.1016/j.jco.2011.10.009} {\bibfield  {journal} {\bibinfo  {journal} {J.
  Complexity}\ }\textbf {\bibinfo {volume} {28}},\ \bibinfo {pages} {177--191}
  (\bibinfo {year} {2012})}\BibitemShut {NoStop}%
\bibitem [{\citenamefont {Hayashi}\ and\ \citenamefont
  {Tachibana}(1994)}]{HayTac-94}%
  \BibitemOpen
  \bibfield  {author} {\bibinfo {author} {\bibnamefont {Hayashi}, \bibfnamefont
  {Y.}}\ and\ \bibinfo {author} {\bibnamefont {Tachibana}, \bibfnamefont
  {K.}},\ }\bibfield  {title} {\enquote {\bibinfo {title} {Observation of
  {C}oulomb-crystal formation from carbon particles grown in a methane
  plasma},}\ }\href {http://stacks.iop.org/1347-4065/33/i=6A/a=L804} {\bibfield
   {journal} {\bibinfo  {journal} {Jpn J. Appl. Phys.}\ }\textbf {\bibinfo
  {volume} {33}},\ \bibinfo {pages} {L804} (\bibinfo {year}
  {1994})}\BibitemShut {NoStop}%
\bibitem [{\citenamefont {He}\ \emph {et~al.}(2003)\citenamefont {He},
  \citenamefont {Cui}, \citenamefont {Ma}, \citenamefont {Liu},\ and\
  \citenamefont {Zou}}]{HeCuiMaLiuZou-03}%
  \BibitemOpen
  \bibfield  {author} {\bibinfo {author} {\bibnamefont {He}, \bibfnamefont
  {W.~J.}}, \bibinfo {author} {\bibnamefont {Cui}, \bibfnamefont {T.}},
  \bibinfo {author} {\bibnamefont {Ma}, \bibfnamefont {Y.~M.}}, \bibinfo
  {author} {\bibnamefont {Liu}, \bibfnamefont {Z.~M.}}, \ and\ \bibinfo
  {author} {\bibnamefont {Zou}, \bibfnamefont {G.~T.}},\ }\bibfield  {title}
  {\enquote {\bibinfo {title} {Phase transition in a classical two-dimensional
  electron system},}\ }\href {\doibase 10.1103/PhysRevB.68.195104} {\bibfield
  {journal} {\bibinfo  {journal} {Phys. Rev. B}\ }\textbf {\bibinfo {volume}
  {68}},\ \bibinfo {pages} {195104} (\bibinfo {year} {2003})}\BibitemShut
  {NoStop}%
\bibitem [{\citenamefont {Herring}(1952)}]{Herring-52}%
  \BibitemOpen
  \bibfield  {author} {\bibinfo {author} {\bibnamefont {Herring}, \bibfnamefont
  {C.}},\ }\bibfield  {title} {\enquote {\bibinfo {title} {Discussion note on
  {E}wald and {J}uretschke},}\ }in\ \href@noop {} {\emph {\bibinfo {booktitle}
  {Structure and Properties of Solid Surfaces}}},\ \bibinfo {editor} {edited
  by\ \bibinfo {editor} {\bibfnamefont {R.}~\bibnamefont {Gomer}}\ and\
  \bibinfo {editor} {\bibfnamefont {C.}~\bibnamefont {Stanley}}}\ (\bibinfo
  {publisher} {University of Chicago Press},\ \bibinfo {year} {1952})\ p.\
  \bibinfo {pages} {117},\ \bibinfo {note} {a Conference Arranged by The
  National Research Council, Lake Geneva, Wisconsin, Sept 1952}\BibitemShut
  {NoStop}%
\bibitem [{\citenamefont {Hess}\ \emph {et~al.}(1989)\citenamefont {Hess},
  \citenamefont {Robinson}, \citenamefont {Dynes}, \citenamefont {Valles},\
  and\ \citenamefont {Waszczak}}]{Hess_etal-89}%
  \BibitemOpen
  \bibfield  {author} {\bibinfo {author} {\bibnamefont {Hess}, \bibfnamefont
  {H.~F.}}, \bibinfo {author} {\bibnamefont {Robinson}, \bibfnamefont {R.~B.}},
  \bibinfo {author} {\bibnamefont {Dynes}, \bibfnamefont {R.~C.}}, \bibinfo
  {author} {\bibnamefont {Valles}, \bibfnamefont {J.~M.}}, \ and\ \bibinfo
  {author} {\bibnamefont {Waszczak}, \bibfnamefont {J.~V.}},\ }\bibfield
  {title} {\enquote {\bibinfo {title} {Scanning-tunneling-microscope
  observation of the abrikosov flux lattice and the density of states near and
  inside a fluxoid},}\ }\href {\doibase 10.1103/PhysRevLett.62.214} {\bibfield
  {journal} {\bibinfo  {journal} {Phys. Rev. Lett.}\ }\textbf {\bibinfo
  {volume} {62}},\ \bibinfo {pages} {214--216} (\bibinfo {year}
  {1989})}\BibitemShut {NoStop}%
\bibitem [{\citenamefont {Hiraoka}\ \emph {et~al.}(2020)\citenamefont
  {Hiraoka}, \citenamefont {Yang}, \citenamefont {Hagiya}, \citenamefont
  {Niozu}, \citenamefont {Matsuda}, \citenamefont {Huotari}, \citenamefont
  {Holzmann},\ and\ \citenamefont {Ceperley}}]{Hiraoka-etal-20}%
  \BibitemOpen
  \bibfield  {author} {\bibinfo {author} {\bibnamefont {Hiraoka}, \bibfnamefont
  {N.}}, \bibinfo {author} {\bibnamefont {Yang}, \bibfnamefont {Y.}}, \bibinfo
  {author} {\bibnamefont {Hagiya}, \bibfnamefont {T.}}, \bibinfo {author}
  {\bibnamefont {Niozu}, \bibfnamefont {A.}}, \bibinfo {author} {\bibnamefont
  {Matsuda}, \bibfnamefont {K.}}, \bibinfo {author} {\bibnamefont {Huotari},
  \bibfnamefont {S.}}, \bibinfo {author} {\bibnamefont {Holzmann},
  \bibfnamefont {M.}}, \ and\ \bibinfo {author} {\bibnamefont {Ceperley},
  \bibfnamefont {D.~M.}},\ }\bibfield  {title} {\enquote {\bibinfo {title}
  {Direct observation of the momentum distribution and renormalization factor
  in lithium},}\ }\href {\doibase 10.1103/PhysRevB.101.165124} {\bibfield
  {journal} {\bibinfo  {journal} {Phys. Rev. B}\ }\textbf {\bibinfo {volume}
  {101}},\ \bibinfo {pages} {165124} (\bibinfo {year} {2020})}\BibitemShut
  {NoStop}%
\bibitem [{\citenamefont {Hohenberg}\ and\ \citenamefont
  {Kohn}(1964)}]{HohKoh-64}%
  \BibitemOpen
  \bibfield  {author} {\bibinfo {author} {\bibnamefont {Hohenberg},
  \bibfnamefont {P.}}\ and\ \bibinfo {author} {\bibnamefont {Kohn},
  \bibfnamefont {W.}},\ }\bibfield  {title} {\enquote {\bibinfo {title}
  {Inhomogeneous electron gas},}\ }\href {\doibase 10.1103/PhysRev.136.B864}
  {\bibfield  {journal} {\bibinfo  {journal} {Phys. Rev.}\ }\textbf {\bibinfo
  {volume} {136}},\ \bibinfo {pages} {B864--B871} (\bibinfo {year}
  {1964})}\BibitemShut {NoStop}%
\bibitem [{\citenamefont {Holzmann}\ and\ \citenamefont
  {Moroni}(2020)}]{HolMor-20}%
  \BibitemOpen
  \bibfield  {author} {\bibinfo {author} {\bibnamefont {Holzmann},
  \bibfnamefont {M.}}\ and\ \bibinfo {author} {\bibnamefont {Moroni},
  \bibfnamefont {S.}},\ }\bibfield  {title} {\enquote {\bibinfo {title}
  {Itinerant-electron magnetism: The importance of many-body correlations},}\
  }\href {\doibase 10.1103/PhysRevLett.124.206404} {\bibfield  {journal}
  {\bibinfo  {journal} {Phys. Rev. Lett.}\ }\textbf {\bibinfo {volume} {124}},\
  \bibinfo {pages} {206404} (\bibinfo {year} {2020})}\BibitemShut {NoStop}%
\bibitem [{\citenamefont {Hoover}, \citenamefont {Gray},\ and\ \citenamefont
  {Johnson}(1971)}]{HooGraJoh-71}%
  \BibitemOpen
  \bibfield  {author} {\bibinfo {author} {\bibnamefont {Hoover}, \bibfnamefont
  {W.~G.}}, \bibinfo {author} {\bibnamefont {Gray}, \bibfnamefont {S.~G.}}, \
  and\ \bibinfo {author} {\bibnamefont {Johnson}, \bibfnamefont {K.~W.}},\
  }\bibfield  {title} {\enquote {\bibinfo {title} {Thermodynamic properties of
  the fluid and solid phases for inverse power potentials},}\ }\href {\doibase
  10.1063/1.1676196} {\bibfield  {journal} {\bibinfo  {journal} {J. Chem.
  Phys.}\ }\textbf {\bibinfo {volume} {55}},\ \bibinfo {pages} {1128--1136}
  (\bibinfo {year} {1971})}\BibitemShut {NoStop}%
\bibitem [{\citenamefont {Hoover}\ \emph {et~al.}(1970)\citenamefont {Hoover},
  \citenamefont {Ross}, \citenamefont {Johnson}, \citenamefont {Henderson},
  \citenamefont {Barker},\ and\ \citenamefont {Brown}}]{HooRosJohHenBarBro-70}%
  \BibitemOpen
  \bibfield  {author} {\bibinfo {author} {\bibnamefont {Hoover}, \bibfnamefont
  {W.~G.}}, \bibinfo {author} {\bibnamefont {Ross}, \bibfnamefont {M.}},
  \bibinfo {author} {\bibnamefont {Johnson}, \bibfnamefont {K.~W.}}, \bibinfo
  {author} {\bibnamefont {Henderson}, \bibfnamefont {D.}}, \bibinfo {author}
  {\bibnamefont {Barker}, \bibfnamefont {J.~A.}}, \ and\ \bibinfo {author}
  {\bibnamefont {Brown}, \bibfnamefont {B.~C.}},\ }\bibfield  {title} {\enquote
  {\bibinfo {title} {Soft-sphere equation of state},}\ }\href {\doibase
  10.1063/1.1672728} {\bibfield  {journal} {\bibinfo  {journal} {J. Chem.
  Phys.}\ }\textbf {\bibinfo {volume} {52}},\ \bibinfo {pages} {4931--4941}
  (\bibinfo {year} {1970})}\BibitemShut {NoStop}%
\bibitem [{\citenamefont {Hoover}, \citenamefont {Young},\ and\ \citenamefont
  {Grover}(1972)}]{HooYouGro-72}%
  \BibitemOpen
  \bibfield  {author} {\bibinfo {author} {\bibnamefont {Hoover}, \bibfnamefont
  {W.~G.}}, \bibinfo {author} {\bibnamefont {Young}, \bibfnamefont {D.~A.}}, \
  and\ \bibinfo {author} {\bibnamefont {Grover}, \bibfnamefont {R.}},\
  }\bibfield  {title} {\enquote {\bibinfo {title} {Statistical mechanics of
  phase diagrams. {I}. {I}nverse power potentials and the close-packed to
  body-centered cubic transition},}\ }\href {\doibase 10.1063/1.1677521}
  {\bibfield  {journal} {\bibinfo  {journal} {J. Chem. Phys.}\ }\textbf
  {\bibinfo {volume} {56}},\ \bibinfo {pages} {2207--2210} (\bibinfo {year}
  {1972})}\BibitemShut {NoStop}%
\bibitem [{\citenamefont {van Hove}(1949)}]{vanHove-49}%
  \BibitemOpen
  \bibfield  {author} {\bibinfo {author} {\bibnamefont {van Hove},
  \bibfnamefont {L.}},\ }\bibfield  {title} {\enquote {\bibinfo {title}
  {Quelques propri{\'e}t{\'e}s g{\'e}n{\'e}rales de l'int{\'e}grale de
  configuration d'un syst{\`e}me de particules avec interaction},}\ }\href
  {\doibase 10.1016/0031-8914(49)90059-2} {\bibfield  {journal} {\bibinfo
  {journal} {Physica}\ }\textbf {\bibinfo {volume} {15}},\ \bibinfo {pages}
  {951--961} (\bibinfo {year} {1949})}\BibitemShut {NoStop}%
\bibitem [{\citenamefont {van Hove}(1950)}]{vanHove-50}%
  \BibitemOpen
  \bibfield  {author} {\bibinfo {author} {\bibnamefont {van Hove},
  \bibfnamefont {L.}},\ }\bibfield  {title} {\enquote {\bibinfo {title} {Sur
  l'int{\'e}grale de configuration pour les syst{\`e}mes de particules {\`a}
  une dimension},}\ }\href@noop {} {\bibfield  {journal} {\bibinfo  {journal}
  {Physica}\ }\textbf {\bibinfo {volume} {16}},\ \bibinfo {pages} {137--143}
  (\bibinfo {year} {1950})}\BibitemShut {NoStop}%
\bibitem [{\citenamefont {Hughes}(2006)}]{Hugues-06}%
  \BibitemOpen
  \bibfield  {author} {\bibinfo {author} {\bibnamefont {Hughes}, \bibfnamefont
  {R.}},\ }\bibfield  {title} {\enquote {\bibinfo {title} {Theoretical
  practice: the {B}ohm-{P}ines quartet},}\ }\href {\doibase
  10.1162/posc.2006.14.4.457} {\bibfield  {journal} {\bibinfo  {journal}
  {Perspect. Sci.}\ }\textbf {\bibinfo {volume} {14}},\ \bibinfo {pages}
  {457--524} (\bibinfo {year} {2006})}\BibitemShut {NoStop}%
\bibitem [{\citenamefont {Hughes}(1985)}]{Hughes-85}%
  \BibitemOpen
  \bibfield  {author} {\bibinfo {author} {\bibnamefont {Hughes}, \bibfnamefont
  {W.}},\ }\bibfield  {title} {\enquote {\bibinfo {title} {Thermodynamics for
  {C}oulomb systems: a problem at vanishing particle densities},}\ }\href@noop
  {} {\bibfield  {journal} {\bibinfo  {journal} {J. Statist. Phys.}\ }\textbf
  {\bibinfo {volume} {41}},\ \bibinfo {pages} {975--1013} (\bibinfo {year}
  {1985})}\BibitemShut {NoStop}%
\bibitem [{\citenamefont {Huotari}\ \emph {et~al.}(2010)\citenamefont
  {Huotari}, \citenamefont {Soininen}, \citenamefont {Pylkk{\"a}nen},
  \citenamefont {H{\"a}m{\"a}l{\"a}inen}, \citenamefont {Issolah},
  \citenamefont {Titov}, \citenamefont {McMinis}, \citenamefont {Kim},
  \citenamefont {Esler}, \citenamefont {Ceperley}, \citenamefont {Holzmann},\
  and\ \citenamefont {Olevano}}]{Huotari-etal-10}%
  \BibitemOpen
  \bibfield  {author} {\bibinfo {author} {\bibnamefont {Huotari}, \bibfnamefont
  {S.}}, \bibinfo {author} {\bibnamefont {Soininen}, \bibfnamefont {J.~A.}},
  \bibinfo {author} {\bibnamefont {Pylkk{\"a}nen}, \bibfnamefont {T.}},
  \bibinfo {author} {\bibnamefont {H{\"a}m{\"a}l{\"a}inen}, \bibfnamefont
  {K.}}, \bibinfo {author} {\bibnamefont {Issolah}, \bibfnamefont {A.}},
  \bibinfo {author} {\bibnamefont {Titov}, \bibfnamefont {A.}}, \bibinfo
  {author} {\bibnamefont {McMinis}, \bibfnamefont {J.}}, \bibinfo {author}
  {\bibnamefont {Kim}, \bibfnamefont {J.}}, \bibinfo {author} {\bibnamefont
  {Esler}, \bibfnamefont {K.}}, \bibinfo {author} {\bibnamefont {Ceperley},
  \bibfnamefont {D.~M.}}, \bibinfo {author} {\bibnamefont {Holzmann},
  \bibfnamefont {M.}}, \ and\ \bibinfo {author} {\bibnamefont {Olevano},
  \bibfnamefont {V.}},\ }\bibfield  {title} {\enquote {\bibinfo {title}
  {Momentum distribution and renormalization factor in sodium and the electron
  gas},}\ }\href {\doibase 10.1103/PhysRevLett.105.086403} {\bibfield
  {journal} {\bibinfo  {journal} {Phys. Rev. Lett.}\ }\textbf {\bibinfo
  {volume} {105}},\ \bibinfo {pages} {086403} (\bibinfo {year}
  {2010})}\BibitemShut {NoStop}%
\bibitem [{\citenamefont {Ihm}\ and\ \citenamefont {Cohen}(1980)}]{IhmCoh-80}%
  \BibitemOpen
  \bibfield  {author} {\bibinfo {author} {\bibnamefont {Ihm}, \bibfnamefont
  {J.}}\ and\ \bibinfo {author} {\bibnamefont {Cohen}, \bibfnamefont {M.~L.}},\
  }\bibfield  {title} {\enquote {\bibinfo {title} {Comment on "{C}orrection to
  {F}uchs' calculation of the electrostatic energy of a {W}igner solid"},}\
  }\href {\doibase 10.1103/PhysRevB.21.3754} {\bibfield  {journal} {\bibinfo
  {journal} {Phys. Rev. B}\ }\textbf {\bibinfo {volume} {21}},\ \bibinfo
  {pages} {3754--3756} (\bibinfo {year} {1980})}\BibitemShut {NoStop}%
\bibitem [{\citenamefont {Imbrie}(1982)}]{Imbrie-82}%
  \BibitemOpen
  \bibfield  {author} {\bibinfo {author} {\bibnamefont {Imbrie}, \bibfnamefont
  {J.~Z.}},\ }\bibfield  {title} {\enquote {\bibinfo {title} {Debye screening
  for jellium and other {C}oulomb systems},}\ }\href
  {http://projecteuclid.org/euclid.cmp/1103922133} {\bibfield  {journal}
  {\bibinfo  {journal} {Comm. Math. Phys.}\ }\textbf {\bibinfo {volume} {87}},\
  \bibinfo {pages} {515--565} (\bibinfo {year} {1982})}\BibitemShut {NoStop}%
\bibitem [{\citenamefont {Isakov}(1994)}]{Isakov-94}%
  \BibitemOpen
  \bibfield  {author} {\bibinfo {author} {\bibnamefont {Isakov}, \bibfnamefont
  {S.~B.}},\ }\bibfield  {title} {\enquote {\bibinfo {title} {Statistical
  mechanics for a class of quantum statistics},}\ }\href {\doibase
  10.1103/PhysRevLett.73.2150} {\bibfield  {journal} {\bibinfo  {journal}
  {Phys. Rev. Lett.}\ }\textbf {\bibinfo {volume} {73}},\ \bibinfo {pages}
  {2150--2153} (\bibinfo {year} {1994})}\BibitemShut {NoStop}%
\bibitem [{\citenamefont {Ivi\'{c}}\ \emph {et~al.}(2006)\citenamefont
  {Ivi\'{c}}, \citenamefont {Kr\"{a}tzel}, \citenamefont {K\"{u}hleitner},\
  and\ \citenamefont {Nowak}}]{IviKraKuhNow-06}%
  \BibitemOpen
  \bibfield  {author} {\bibinfo {author} {\bibnamefont {Ivi\'{c}},
  \bibfnamefont {A.}}, \bibinfo {author} {\bibnamefont {Kr\"{a}tzel},
  \bibfnamefont {E.}}, \bibinfo {author} {\bibnamefont {K\"{u}hleitner},
  \bibfnamefont {M.}}, \ and\ \bibinfo {author} {\bibnamefont {Nowak},
  \bibfnamefont {W.~G.}},\ }\bibfield  {title} {\enquote {\bibinfo {title}
  {Lattice points in large regions and related arithmetic functions: recent
  developments in a very classic topic},}\ }in\ \href@noop {} {\emph {\bibinfo
  {booktitle} {Elementare und analytische {Z}ahlentheorie}}},\ \bibinfo
  {series} {Schr. Wiss. Ges. Johann Wolfgang Goethe Univ. Frankfurt am Main},
  Vol.~\bibinfo {volume} {20}\ (\bibinfo  {publisher} {Franz Steiner Verlag
  Stuttgart, Stuttgart},\ \bibinfo {year} {2006})\ pp.\ \bibinfo {pages}
  {89--128}\BibitemShut {NoStop}%
\bibitem [{\citenamefont {Jagannath}\ and\ \citenamefont
  {Trogdon}(2017)}]{JagTro-17}%
  \BibitemOpen
  \bibfield  {author} {\bibinfo {author} {\bibnamefont {Jagannath},
  \bibfnamefont {A.}}\ and\ \bibinfo {author} {\bibnamefont {Trogdon},
  \bibfnamefont {T.}},\ }\bibfield  {title} {\enquote {\bibinfo {title} {Random
  matrices and the {N}ew {Y}ork {C}ity subway system},}\ }\href {\doibase
  10.1103/PhysRevE.96.030101} {\bibfield  {journal} {\bibinfo  {journal} {Phys.
  Rev. E}\ }\textbf {\bibinfo {volume} {96}},\ \bibinfo {pages} {030101}
  (\bibinfo {year} {2017})}\BibitemShut {NoStop}%
\bibitem [{\citenamefont {Jancovici}(1981)}]{Jancovici-81}%
  \BibitemOpen
  \bibfield  {author} {\bibinfo {author} {\bibnamefont {Jancovici},
  \bibfnamefont {B.}},\ }\bibfield  {title} {\enquote {\bibinfo {title} {Exact
  results for the two-dimensional one-component plasma},}\ }\href {\doibase
  10.1103/PhysRevLett.46.386} {\bibfield  {journal} {\bibinfo  {journal} {Phys.
  Rev. Lett.}\ }\textbf {\bibinfo {volume} {46}},\ \bibinfo {pages} {386--388}
  (\bibinfo {year} {1981})}\BibitemShut {NoStop}%
\bibitem [{\citenamefont {Jancovici}\ and\ \citenamefont
  {Lebowitz}(2001)}]{JanLeb-01}%
  \BibitemOpen
  \bibfield  {author} {\bibinfo {author} {\bibnamefont {Jancovici},
  \bibfnamefont {B.}}\ and\ \bibinfo {author} {\bibnamefont {Lebowitz},
  \bibfnamefont {J.~L.}},\ }\bibfield  {title} {\enquote {\bibinfo {title}
  {Bounded fluctuations and translation symmetry breaking: a solvable model},}\
  }\href {\doibase 10.1023/A:1010349517967} {\bibfield  {journal} {\bibinfo
  {journal} {J. Statist. Phys.}\ }\textbf {\bibinfo {volume} {103}},\ \bibinfo
  {pages} {619--624} (\bibinfo {year} {2001})},\ \bibinfo {note} {special issue
  dedicated to the memory of Joaquin M. Luttinger}\BibitemShut {NoStop}%
\bibitem [{\citenamefont {Jancovici}, \citenamefont {Lebowitz},\ and\
  \citenamefont {Manificat}(1993)}]{JanLebMag-93}%
  \BibitemOpen
  \bibfield  {author} {\bibinfo {author} {\bibnamefont {Jancovici},
  \bibfnamefont {B.}}, \bibinfo {author} {\bibnamefont {Lebowitz},
  \bibfnamefont {J.~L.}}, \ and\ \bibinfo {author} {\bibnamefont {Manificat},
  \bibfnamefont {G.}},\ }\bibfield  {title} {\enquote {\bibinfo {title} {Large
  charge fluctuations in classical {C}oulomb systems},}\ }\href {\doibase
  10.1007/BF01048032} {\bibfield  {journal} {\bibinfo  {journal} {J. Statist.
  Phys.}\ }\textbf {\bibinfo {volume} {72}},\ \bibinfo {pages} {773--787}
  (\bibinfo {year} {1993})}\BibitemShut {NoStop}%
\bibitem [{\citenamefont {Jancovici}\ and\ \citenamefont
  {T\'{e}llez}(1998)}]{JanTel-98}%
  \BibitemOpen
  \bibfield  {author} {\bibinfo {author} {\bibnamefont {Jancovici},
  \bibfnamefont {B.}}\ and\ \bibinfo {author} {\bibnamefont {T\'{e}llez},
  \bibfnamefont {G.}},\ }\bibfield  {title} {\enquote {\bibinfo {title}
  {Two-dimensional {C}oulomb systems on a surface of constant negative
  curvature},}\ }\href {\doibase 10.1023/A:1023079916489} {\bibfield  {journal}
  {\bibinfo  {journal} {J. Statist. Phys.}\ }\textbf {\bibinfo {volume} {91}},\
  \bibinfo {pages} {953--977} (\bibinfo {year} {1998})}\BibitemShut {NoStop}%
\bibitem [{\citenamefont {Jansen}\ and\ \citenamefont
  {Jung}(2014)}]{JanJun-14}%
  \BibitemOpen
  \bibfield  {author} {\bibinfo {author} {\bibnamefont {Jansen}, \bibfnamefont
  {S.}}\ and\ \bibinfo {author} {\bibnamefont {Jung}, \bibfnamefont {P.}},\
  }\bibfield  {title} {\enquote {\bibinfo {title} {Wigner crystallization in
  the quantum 1d jellium at all densities},}\ }\href {\doibase
  10.1007/s00220-014-2032-y} {\bibfield  {journal} {\bibinfo  {journal} {Comm.
  Math. Phys.}\ ,\ \bibinfo {pages} {1--22}} (\bibinfo {year}
  {2014})}\BibitemShut {NoStop}%
\bibitem [{\citenamefont {Jansen}, \citenamefont {Lieb},\ and\ \citenamefont
  {Seiler}(2009)}]{JanLieSei-09}%
  \BibitemOpen
  \bibfield  {author} {\bibinfo {author} {\bibnamefont {Jansen}, \bibfnamefont
  {S.}}, \bibinfo {author} {\bibnamefont {Lieb}, \bibfnamefont {E.~H.}}, \ and\
  \bibinfo {author} {\bibnamefont {Seiler}, \bibfnamefont {R.}},\ }\bibfield
  {title} {\enquote {\bibinfo {title} {Symmetry breaking in {L}aughlin's state
  on a cylinder},}\ }\href {\doibase 10.1007/s00220-008-0576-4} {\bibfield
  {journal} {\bibinfo  {journal} {Comm. Math. Phys.}\ }\textbf {\bibinfo
  {volume} {285}},\ \bibinfo {pages} {503--535} (\bibinfo {year}
  {2009})}\BibitemShut {NoStop}%
\bibitem [{\citenamefont {Jones}\ and\ \citenamefont
  {Ceperley}(1996)}]{JonCep-96}%
  \BibitemOpen
  \bibfield  {author} {\bibinfo {author} {\bibnamefont {Jones}, \bibfnamefont
  {M.~D.}}\ and\ \bibinfo {author} {\bibnamefont {Ceperley}, \bibfnamefont
  {D.~M.}},\ }\bibfield  {title} {\enquote {\bibinfo {title} {{Crystallization
  of the One-Component Plasma at Finite Temperature}},}\ }\href {\doibase
  10.1103/PhysRevLett.76.4572} {\bibfield  {journal} {\bibinfo  {journal}
  {Phys. Rev. Lett.}\ }\textbf {\bibinfo {volume} {76}},\ \bibinfo {pages}
  {4572--4575} (\bibinfo {year} {1996})}\BibitemShut {NoStop}%
\bibitem [{\citenamefont {Kapfer}\ and\ \citenamefont
  {Krauth}(2015)}]{KapKra-15}%
  \BibitemOpen
  \bibfield  {author} {\bibinfo {author} {\bibnamefont {Kapfer}, \bibfnamefont
  {S.~C.}}\ and\ \bibinfo {author} {\bibnamefont {Krauth}, \bibfnamefont
  {W.}},\ }\bibfield  {title} {\enquote {\bibinfo {title} {Two-dimensional
  melting: From liquid-hexatic coexistence to continuous transitions},}\ }\href
  {\doibase 10.1103/PhysRevLett.114.035702} {\bibfield  {journal} {\bibinfo
  {journal} {Phys. Rev. Lett.}\ }\textbf {\bibinfo {volume} {114}},\ \bibinfo
  {pages} {035702} (\bibinfo {year} {2015})}\BibitemShut {NoStop}%
\bibitem [{\citenamefont {Kato}(1995)}]{Kato}%
  \BibitemOpen
  \bibfield  {author} {\bibinfo {author} {\bibnamefont {Kato}, \bibfnamefont
  {T.}},\ }\href@noop {} {\emph {\bibinfo {title} {Perturbation theory for
  linear operators}}},\ \bibinfo {edition} {2nd}\ ed.\ (\bibinfo  {publisher}
  {Springer},\ \bibinfo {year} {1995})\BibitemShut {NoStop}%
\bibitem [{\citenamefont {Katz}\ and\ \citenamefont
  {Duneau}(1984)}]{KatDun-84}%
  \BibitemOpen
  \bibfield  {author} {\bibinfo {author} {\bibnamefont {Katz}, \bibfnamefont
  {A.}}\ and\ \bibinfo {author} {\bibnamefont {Duneau}, \bibfnamefont {M.}},\
  }\bibfield  {title} {\enquote {\bibinfo {title} {The convergence of the
  one-dimensional ground states to an infinite lattice},}\ }\href {\doibase
  10.1007/BF01012914} {\bibfield  {journal} {\bibinfo  {journal} {J. Statist.
  Phys.}\ }\textbf {\bibinfo {volume} {37}},\ \bibinfo {pages} {257--268}
  (\bibinfo {year} {1984})}\BibitemShut {NoStop}%
\bibitem [{\citenamefont {Kethepalli}\ \emph {et~al.}(2021)\citenamefont
  {Kethepalli}, \citenamefont {Kulkarni}, \citenamefont {Kundu}, \citenamefont
  {Majumdar}, \citenamefont {Mukamel},\ and\ \citenamefont
  {Schehr}}]{KetKulKunMajMukSch-21}%
  \BibitemOpen
  \bibfield  {author} {\bibinfo {author} {\bibnamefont {Kethepalli},
  \bibfnamefont {J.}}, \bibinfo {author} {\bibnamefont {Kulkarni},
  \bibfnamefont {M.}}, \bibinfo {author} {\bibnamefont {Kundu}, \bibfnamefont
  {A.}}, \bibinfo {author} {\bibnamefont {Majumdar}, \bibfnamefont {S.~N.}},
  \bibinfo {author} {\bibnamefont {Mukamel}, \bibfnamefont {D.}}, \ and\
  \bibinfo {author} {\bibnamefont {Schehr}, \bibfnamefont {G.}},\ }\bibfield
  {title} {\enquote {\bibinfo {title} {Harmonically confined long-ranged
  interacting gas in the presence of a hard wall},}\ }\href {\doibase
  10.1088/1742-5468/ac2896} {\bibfield  {journal} {\bibinfo  {journal} {J.
  Stat. Mech. Theory Exp.}\ ,\ \bibinfo {pages} {Paper No. 103209, 36}}
  (\bibinfo {year} {2021})}\BibitemShut {NoStop}%
\bibitem [{\citenamefont {Kethepalli}\ \emph {et~al.}(2022)\citenamefont
  {Kethepalli}, \citenamefont {Kulkarni}, \citenamefont {Kundu}, \citenamefont
  {Majumdar}, \citenamefont {Mukamel},\ and\ \citenamefont
  {Schehr}}]{KetKulKunMajMukSch-22}%
  \BibitemOpen
  \bibfield  {author} {\bibinfo {author} {\bibnamefont {Kethepalli},
  \bibfnamefont {J.}}, \bibinfo {author} {\bibnamefont {Kulkarni},
  \bibfnamefont {M.}}, \bibinfo {author} {\bibnamefont {Kundu}, \bibfnamefont
  {A.}}, \bibinfo {author} {\bibnamefont {Majumdar}, \bibfnamefont {S.~N.}},
  \bibinfo {author} {\bibnamefont {Mukamel}, \bibfnamefont {D.}}, \ and\
  \bibinfo {author} {\bibnamefont {Schehr}, \bibfnamefont {G.}},\ }\bibfield
  {title} {\enquote {\bibinfo {title} {Edge fluctuations and third-order phase
  transition in harmonically confined long-range systems},}\ }\href {\doibase
  10.1088/1742-5468/ac52b2} {\bibfield  {journal} {\bibinfo  {journal} {J.
  Stat. Mech. Theory Exp.}\ }\textbf {\bibinfo {volume} {2022}},\ \bibinfo
  {pages} {033203} (\bibinfo {year} {2022})}\BibitemShut {NoStop}%
\bibitem [{\citenamefont {Kiessling}(1989)}]{Kiessling-89}%
  \BibitemOpen
  \bibfield  {author} {\bibinfo {author} {\bibnamefont {Kiessling},
  \bibfnamefont {M.~K.~H.}},\ }\bibfield  {title} {\enquote {\bibinfo {title}
  {On the equilibrium statistical mechanics of isothermal classical
  self-gravitating matter},}\ }\href {\doibase 10.1007/BF01042598} {\bibfield
  {journal} {\bibinfo  {journal} {J. Statist. Phys.}\ }\textbf {\bibinfo
  {volume} {55}},\ \bibinfo {pages} {203--257} (\bibinfo {year}
  {1989})}\BibitemShut {NoStop}%
\bibitem [{\citenamefont {Kiessling}(1993)}]{Kiessling-93}%
  \BibitemOpen
  \bibfield  {author} {\bibinfo {author} {\bibnamefont {Kiessling},
  \bibfnamefont {M.~K.-H.}},\ }\bibfield  {title} {\enquote {\bibinfo {title}
  {Statistical mechanics of classical particles with logarithmic
  interactions},}\ }\href {\doibase 10.1002/cpa.3160460103} {\bibfield
  {journal} {\bibinfo  {journal} {Comm. Pure. Appl. Math.}\ }\textbf {\bibinfo
  {volume} {46}},\ \bibinfo {pages} {27--56} (\bibinfo {year}
  {1993})}\BibitemShut {NoStop}%
\bibitem [{\citenamefont {Kiessling}\ and\ \citenamefont
  {Spohn}(1999)}]{KieSpo-99}%
  \BibitemOpen
  \bibfield  {author} {\bibinfo {author} {\bibnamefont {Kiessling},
  \bibfnamefont {M.~K.-H.}}\ and\ \bibinfo {author} {\bibnamefont {Spohn},
  \bibfnamefont {H.}},\ }\bibfield  {title} {\enquote {\bibinfo {title} {A note
  on the eigenvalue density of random matrices},}\ }\href {\doibase
  10.1007/s002200050516} {\bibfield  {journal} {\bibinfo  {journal} {Comm.
  Math. Phys.}\ }\textbf {\bibinfo {volume} {199}},\ \bibinfo {pages}
  {683--695} (\bibinfo {year} {1999})}\BibitemShut {NoStop}%
\bibitem [{\citenamefont {Killip}\ and\ \citenamefont
  {Nenciu}(2004)}]{KilNen-04}%
  \BibitemOpen
  \bibfield  {author} {\bibinfo {author} {\bibnamefont {Killip}, \bibfnamefont
  {R.}}\ and\ \bibinfo {author} {\bibnamefont {Nenciu}, \bibfnamefont {I.}},\
  }\bibfield  {title} {\enquote {\bibinfo {title} {Matrix models for circular
  ensembles},}\ }\href {\doibase 10.1155/S1073792804141597} {\bibfield
  {journal} {\bibinfo  {journal} {Int. Math. Res. Not.}\ ,\ \bibinfo {pages}
  {2665--2701}} (\bibinfo {year} {2004})}\BibitemShut {NoStop}%
\bibitem [{\citenamefont {Killip}\ and\ \citenamefont
  {Stoiciu}(2009)}]{KilSto-09}%
  \BibitemOpen
  \bibfield  {author} {\bibinfo {author} {\bibnamefont {Killip}, \bibfnamefont
  {R.}}\ and\ \bibinfo {author} {\bibnamefont {Stoiciu}, \bibfnamefont {M.}},\
  }\bibfield  {title} {\enquote {\bibinfo {title} {Eigenvalue statistics for
  {CMV} matrices: from {P}oisson to clock via random matrix ensembles},}\
  }\href {\doibase 10.1215/00127094-2009-001} {\bibfield  {journal} {\bibinfo
  {journal} {Duke Math. J.}\ }\textbf {\bibinfo {volume} {146}},\ \bibinfo
  {pages} {361--399} (\bibinfo {year} {2009})}\BibitemShut {NoStop}%
\bibitem [{\citenamefont {{Klevtsov}}(2016)}]{Klevtsov-16}%
  \BibitemOpen
  \bibfield  {author} {\bibinfo {author} {\bibnamefont {{Klevtsov}},
  \bibfnamefont {S.}},\ }\bibfield  {title} {\enquote {\bibinfo {title}
  {{Geometry and large $N$ limits in Laughlin states}},}\ }in\ \href@noop {}
  {\emph {\bibinfo {booktitle} {{Geometry and quantization. Lectures presented
  at the 6th school GEOQUANT, ICMAT, Madrid, Spain, September 7--18, 2015}}}}\
  (\bibinfo  {publisher} {Luxembourg: University of Luxembourg, Faculty of
  Science, Technology and Communication},\ \bibinfo {year} {2016})\ pp.\
  \bibinfo {pages} {63--127}\BibitemShut {NoStop}%
\bibitem [{\citenamefont {Klevtsov}\ \emph {et~al.}(2017)\citenamefont
  {Klevtsov}, \citenamefont {Ma}, \citenamefont {Marinescu},\ and\
  \citenamefont {Wiegmann}}]{KleXiaGeoWie-17}%
  \BibitemOpen
  \bibfield  {author} {\bibinfo {author} {\bibnamefont {Klevtsov},
  \bibfnamefont {S.}}, \bibinfo {author} {\bibnamefont {Ma}, \bibfnamefont
  {X.}}, \bibinfo {author} {\bibnamefont {Marinescu}, \bibfnamefont {G.}}, \
  and\ \bibinfo {author} {\bibnamefont {Wiegmann}, \bibfnamefont {P.}},\
  }\bibfield  {title} {\enquote {\bibinfo {title} {Quantum {H}all effect and
  {Q}uillen metric},}\ }\href {\doibase 10.1007/s00220-016-2789-2} {\bibfield
  {journal} {\bibinfo  {journal} {Comm. Math. Phys.}\ }\textbf {\bibinfo
  {volume} {349}},\ \bibinfo {pages} {819--855} (\bibinfo {year}
  {2017})}\BibitemShut {NoStop}%
\bibitem [{\citenamefont {Knighton}\ \emph {et~al.}(2018)\citenamefont
  {Knighton}, \citenamefont {Wu}, \citenamefont {Huang}, \citenamefont
  {Serafin}, \citenamefont {Xia}, \citenamefont {Pfeiffer},\ and\ \citenamefont
  {West}}]{KniWuHuaSerXiaPfeWes-18}%
  \BibitemOpen
  \bibfield  {author} {\bibinfo {author} {\bibnamefont {Knighton},
  \bibfnamefont {T.}}, \bibinfo {author} {\bibnamefont {Wu}, \bibfnamefont
  {Z.}}, \bibinfo {author} {\bibnamefont {Huang}, \bibfnamefont {J.}}, \bibinfo
  {author} {\bibnamefont {Serafin}, \bibfnamefont {A.}}, \bibinfo {author}
  {\bibnamefont {Xia}, \bibfnamefont {J.~S.}}, \bibinfo {author} {\bibnamefont
  {Pfeiffer}, \bibfnamefont {L.~N.}}, \ and\ \bibinfo {author} {\bibnamefont
  {West}, \bibfnamefont {K.~W.}},\ }\bibfield  {title} {\enquote {\bibinfo
  {title} {Evidence of two-stage melting of {W}igner solids},}\ }\href
  {\doibase 10.1103/PhysRevB.97.085135} {\bibfield  {journal} {\bibinfo
  {journal} {Phys. Rev. B}\ }\textbf {\bibinfo {volume} {97}},\ \bibinfo
  {pages} {085135} (\bibinfo {year} {2018})}\BibitemShut {NoStop}%
\bibitem [{\citenamefont {Kohn}\ and\ \citenamefont {Sham}(1965)}]{KohSha-65}%
  \BibitemOpen
  \bibfield  {author} {\bibinfo {author} {\bibnamefont {Kohn}, \bibfnamefont
  {W.}}\ and\ \bibinfo {author} {\bibnamefont {Sham}, \bibfnamefont {L.~J.}},\
  }\bibfield  {title} {\enquote {\bibinfo {title} {Self-consistent equations
  including exchange and correlation effects},}\ }\href {\doibase
  10.1103/PhysRev.140.A1133} {\bibfield  {journal} {\bibinfo  {journal} {Phys.
  Rev. (2)}\ }\textbf {\bibinfo {volume} {140}},\ \bibinfo {pages}
  {A1133--A1138} (\bibinfo {year} {1965})}\BibitemShut {NoStop}%
\bibitem [{\citenamefont {Kosterlitz}\ and\ \citenamefont
  {Thouless}(1972)}]{KosTho-72}%
  \BibitemOpen
  \bibfield  {author} {\bibinfo {author} {\bibnamefont {Kosterlitz},
  \bibfnamefont {J.~M.}}\ and\ \bibinfo {author} {\bibnamefont {Thouless},
  \bibfnamefont {D.~J.}},\ }\bibfield  {title} {\enquote {\bibinfo {title}
  {Long range order and metastability in two dimensional solids and
  superfluids. ({A}pplication of dislocation theory)},}\ }\href
  {http://stacks.iop.org/0022-3719/5/i=11/a=002} {\bibfield  {journal}
  {\bibinfo  {journal} {J. Phys. C: Solid State Phys}\ }\textbf {\bibinfo
  {volume} {5}},\ \bibinfo {pages} {L124} (\bibinfo {year} {1972})}\BibitemShut
  {NoStop}%
\bibitem [{\citenamefont {Kosterlitz}\ and\ \citenamefont
  {Thouless}(1973)}]{KosTho-73}%
  \BibitemOpen
  \bibfield  {author} {\bibinfo {author} {\bibnamefont {Kosterlitz},
  \bibfnamefont {J.~M.}}\ and\ \bibinfo {author} {\bibnamefont {Thouless},
  \bibfnamefont {D.~J.}},\ }\bibfield  {title} {\enquote {\bibinfo {title}
  {Ordering, metastability and phase transitions in two-dimensional systems},}\
  }\href {http://stacks.iop.org/0022-3719/6/i=7/a=010} {\bibfield  {journal}
  {\bibinfo  {journal} {J. Phys. C: Solid State Phys}\ }\textbf {\bibinfo
  {volume} {6}},\ \bibinfo {pages} {1181} (\bibinfo {year} {1973})}\BibitemShut
  {NoStop}%
\bibitem [{\citenamefont {Krb{\'{a}}lek}\ and\ \citenamefont
  {\v{S}eba}(2000)}]{KrbSeb-00}%
  \BibitemOpen
  \bibfield  {author} {\bibinfo {author} {\bibnamefont {Krb{\'{a}}lek},
  \bibfnamefont {M.}}\ and\ \bibinfo {author} {\bibnamefont {\v{S}eba},
  \bibfnamefont {P.}},\ }\bibfield  {title} {\enquote {\bibinfo {title} {The
  statistical properties of the city transport in {C}uernavaca ({M}exico) and
  random matrix ensembles},}\ }\href {\doibase 10.1088/0305-4470/33/26/102}
  {\bibfield  {journal} {\bibinfo  {journal} {J. Phys. A, Math. Gen.}\ }\textbf
  {\bibinfo {volume} {33}},\ \bibinfo {pages} {L229--L234} (\bibinfo {year}
  {2000})}\BibitemShut {NoStop}%
\bibitem [{\citenamefont {Kuijlaars}\ and\ \citenamefont {Mi\~{n}a
  D\'{\i}az}(2019)}]{KuiMin-19}%
  \BibitemOpen
  \bibfield  {author} {\bibinfo {author} {\bibnamefont {Kuijlaars},
  \bibfnamefont {A.~B.~J.}}\ and\ \bibinfo {author} {\bibnamefont {Mi\~{n}a
  D\'{\i}az}, \bibfnamefont {E.}},\ }\bibfield  {title} {\enquote {\bibinfo
  {title} {Universality for conditional measures of the sine point process},}\
  }\href {\doibase 10.1016/j.jat.2019.03.002} {\bibfield  {journal} {\bibinfo
  {journal} {J. Approx. Theory}\ }\textbf {\bibinfo {volume} {243}},\ \bibinfo
  {pages} {1--24} (\bibinfo {year} {2019})}\BibitemShut {NoStop}%
\bibitem [{\citenamefont {Kuijlaars}\ and\ \citenamefont
  {Saff}(1998)}]{KuiSaf-98}%
  \BibitemOpen
  \bibfield  {author} {\bibinfo {author} {\bibnamefont {Kuijlaars},
  \bibfnamefont {A.~B.~J.}}\ and\ \bibinfo {author} {\bibnamefont {Saff},
  \bibfnamefont {E.~B.}},\ }\bibfield  {title} {\enquote {\bibinfo {title}
  {Asymptotics for minimal discrete energy on the sphere},}\ }\href {\doibase
  10.1090/S0002-9947-98-02119-9} {\bibfield  {journal} {\bibinfo  {journal}
  {Trans. Amer. Math. Soc.}\ }\textbf {\bibinfo {volume} {350}},\ \bibinfo
  {pages} {523--538} (\bibinfo {year} {1998})}\BibitemShut {NoStop}%
\bibitem [{\citenamefont {Kunz}(1974)}]{Kunz-74}%
  \BibitemOpen
  \bibfield  {author} {\bibinfo {author} {\bibnamefont {Kunz}, \bibfnamefont
  {H.}},\ }\bibfield  {title} {\enquote {\bibinfo {title} {The one-dimensional
  classical electron gas},}\ }\href {\doibase 10.1016/0003-4916(74)90413-8}
  {\bibfield  {journal} {\bibinfo  {journal} {Ann. Phys. (NY)}\ }\textbf
  {\bibinfo {volume} {85}},\ \bibinfo {pages} {303--335} (\bibinfo {year}
  {1974})}\BibitemShut {NoStop}%
\bibitem [{\citenamefont {Lacroix-A-Chez-Toine}, \citenamefont {Majumdar},\
  and\ \citenamefont {Schehr}(2019)}]{LacMajSch-19}%
  \BibitemOpen
  \bibfield  {author} {\bibinfo {author} {\bibnamefont {Lacroix-A-Chez-Toine},
  \bibfnamefont {B.}}, \bibinfo {author} {\bibnamefont {Majumdar},
  \bibfnamefont {S.~N.}}, \ and\ \bibinfo {author} {\bibnamefont {Schehr},
  \bibfnamefont {G.}},\ }\bibfield  {title} {\enquote {\bibinfo {title}
  {Rotating trapped fermions in two dimensions and the complex {G}inibre
  ensemble: {E}xact results for the entanglement entropy and number
  variance},}\ }\href {\doibase 10.1103/PhysRevA.99.021602} {\bibfield
  {journal} {\bibinfo  {journal} {Phys. Rev. A}\ }\textbf {\bibinfo {volume}
  {99}},\ \bibinfo {pages} {021602} (\bibinfo {year} {2019})}\BibitemShut
  {NoStop}%
\bibitem [{\citenamefont {Lahbabi}(2013)}]{Lahbabi-PhD}%
  \BibitemOpen
  \bibfield  {author} {\bibinfo {author} {\bibnamefont {Lahbabi}, \bibfnamefont
  {S.}},\ }\emph {\bibinfo {title} {Etude math{\'e}matique de mod{\`e}les
  quantiques et classiques pour les mat{\'e}riaux al{\'e}atoires {\`a}
  l'{\'e}chelle atomique}},\ \href
  {http://tel.archives-ouvertes.fr/tel-00873213} {Ph.D. thesis},\ \bibinfo
  {school} {Universit{\'e} de Cergy-Pontoise} (\bibinfo {year}
  {2013})\BibitemShut {NoStop}%
\bibitem [{\citenamefont {Laird}\ and\ \citenamefont
  {Haymet}(1992)}]{LaiHay-92}%
  \BibitemOpen
  \bibfield  {author} {\bibinfo {author} {\bibnamefont {Laird}, \bibfnamefont
  {B.~B.}}\ and\ \bibinfo {author} {\bibnamefont {Haymet}, \bibfnamefont
  {A.}},\ }\bibfield  {title} {\enquote {\bibinfo {title} {Phase diagram for
  the inverse sixth power potential system from molecular dynamics computer
  simulation},}\ }\href {\doibase 10.1080/00268979200100071} {\bibfield
  {journal} {\bibinfo  {journal} {Mol. Phys.}\ }\textbf {\bibinfo {volume}
  {75}},\ \bibinfo {pages} {71--80} (\bibinfo {year} {1992})}\BibitemShut
  {NoStop}%
\bibitem [{\citenamefont {Landau}(1915)}]{Landau-15}%
  \BibitemOpen
  \bibfield  {author} {\bibinfo {author} {\bibnamefont {Landau}, \bibfnamefont
  {E.}},\ }\bibfield  {title} {\enquote {\bibinfo {title} {{Zur analytischen
  Zahlentheorie der definiten quadratischen Formen}},}\ }\href@noop {}
  {\bibfield  {journal} {\bibinfo  {journal} {Berliner Akademieber}\ ,\
  \bibinfo {pages} {458--476}} (\bibinfo {year} {1915})}\BibitemShut {NoStop}%
\bibitem [{\citenamefont {Landau}(1924)}]{Landau-24}%
  \BibitemOpen
  \bibfield  {author} {\bibinfo {author} {\bibnamefont {Landau}, \bibfnamefont
  {E.}},\ }\bibfield  {title} {\enquote {\bibinfo {title} {\"{U}ber
  {G}itterpunkte in mehrdimensionalen {E}llipsoiden},}\ }\href {\doibase
  10.1007/BF01187457} {\bibfield  {journal} {\bibinfo  {journal} {Math. Z.}\
  }\textbf {\bibinfo {volume} {21}},\ \bibinfo {pages} {126--132} (\bibinfo
  {year} {1924})}\BibitemShut {NoStop}%
\bibitem [{\citenamefont {Landkof}(1972)}]{Landkof-72}%
  \BibitemOpen
  \bibfield  {author} {\bibinfo {author} {\bibnamefont {Landkof}, \bibfnamefont
  {N.~S.}},\ }\href@noop {} {\emph {\bibinfo {title} {Foundations of modern
  potential theory}}}\ (\bibinfo  {publisher} {Springer-Verlag, New
  York-Heidelberg},\ \bibinfo {year} {1972})\ pp.\ \bibinfo {pages} {x+424},\
  \bibinfo {note} {translated from the Russian by A. P. Doohovskoy, Die
  Grundlehren der mathematischen Wissenschaften, Band 180}\BibitemShut
  {NoStop}%
\bibitem [{\citenamefont {Lanford}(1973)}]{Lanford-73}%
  \BibitemOpen
  \bibfield  {author} {\bibinfo {author} {\bibnamefont {Lanford}, \bibfnamefont
  {O.~E.}},\ }\bibfield  {title} {\enquote {\bibinfo {title} {Entropy and
  equilibrium states in classical statistical mechanics},}\ }in\ \href
  {\doibase 10.1007/BFb0112756} {\emph {\bibinfo {booktitle} {Statistical
  Mechanics and Mathematical Problems}}},\ \bibinfo {series} {Lecture Notes in
  Physics}, Vol.~\bibinfo {volume} {20},\ \bibinfo {editor} {edited by\
  \bibinfo {editor} {\bibfnamefont {A.}~\bibnamefont {Lenard}}}\ (\bibinfo
  {publisher} {Springer Berlin Heidelberg},\ \bibinfo {year} {1973})\ pp.\
  \bibinfo {pages} {1--113}\BibitemShut {NoStop}%
\bibitem [{\citenamefont {Lanford}\ and\ \citenamefont
  {Ruelle}(1969)}]{LanRue-69}%
  \BibitemOpen
  \bibfield  {author} {\bibinfo {author} {\bibnamefont {Lanford}, \bibfnamefont
  {III, O.~E.}}\ and\ \bibinfo {author} {\bibnamefont {Ruelle}, \bibfnamefont
  {D.}},\ }\bibfield  {title} {\enquote {\bibinfo {title} {Observables at
  infinity and states with short range correlations in statistical
  mechanics},}\ }\href {http://projecteuclid.org/euclid.cmp/1103841575}
  {\bibfield  {journal} {\bibinfo  {journal} {Comm. Math. Phys.}\ }\textbf
  {\bibinfo {volume} {13}},\ \bibinfo {pages} {194--215} (\bibinfo {year}
  {1969})}\BibitemShut {NoStop}%
\bibitem [{\citenamefont {Laughlin}(1983)}]{Laughlin-83}%
  \BibitemOpen
  \bibfield  {author} {\bibinfo {author} {\bibnamefont {Laughlin},
  \bibfnamefont {R.~B.}},\ }\bibfield  {title} {\enquote {\bibinfo {title}
  {{Anomalous Quantum Hall Effect: An Incompressible Quantum Fluid with
  Fractionally Charged Excitations}},}\ }\href {\doibase
  10.1103/PhysRevLett.50.1395} {\bibfield  {journal} {\bibinfo  {journal}
  {Phys. Rev. Lett.}\ }\textbf {\bibinfo {volume} {50}},\ \bibinfo {pages}
  {1395--1398} (\bibinfo {year} {1983})}\BibitemShut {NoStop}%
\bibitem [{\citenamefont {Lauritsen}(2021)}]{Lauritsen-21}%
  \BibitemOpen
  \bibfield  {author} {\bibinfo {author} {\bibnamefont {Lauritsen},
  \bibfnamefont {A.~r. B.~k.}},\ }\bibfield  {title} {\enquote {\bibinfo
  {title} {Floating {W}igner crystal and periodic jellium configurations},}\
  }\href {\doibase 10.1063/5.0053494} {\bibfield  {journal} {\bibinfo
  {journal} {J. Math. Phys.}\ }\textbf {\bibinfo {volume} {62}},\ \bibinfo
  {pages} {Paper No. 083305, 17} (\bibinfo {year} {2021})}\BibitemShut
  {NoStop}%
\bibitem [{\citenamefont {Lebl{\'e}}(2015)}]{Leble-15}%
  \BibitemOpen
  \bibfield  {author} {\bibinfo {author} {\bibnamefont {Lebl{\'e}},
  \bibfnamefont {T.}},\ }\bibfield  {title} {\enquote {\bibinfo {title} {A
  uniqueness result for minimizers of the 1{D} log-gas renormalized energy},}\
  }\href {https://doi.org/10.1016/j.jfa.2014.11.023} {\bibfield  {journal}
  {\bibinfo  {journal} {J. Funct. Anal.}\ }\textbf {\bibinfo {volume} {268}},\
  \bibinfo {pages} {1649--1677} (\bibinfo {year} {2015})}\BibitemShut {NoStop}%
\bibitem [{\citenamefont {Lebl{\'e}}(2016)}]{Leble-16}%
  \BibitemOpen
  \bibfield  {author} {\bibinfo {author} {\bibnamefont {Lebl{\'e}},
  \bibfnamefont {T.}},\ }\bibfield  {title} {\enquote {\bibinfo {title}
  {Logarithmic, {C}oulomb and {R}iesz energy of point processes},}\ }\href
  {https://doi.org/10.1007/s10955-015-1425-4} {\bibfield  {journal} {\bibinfo
  {journal} {J. Stat. Phys.}\ }\textbf {\bibinfo {volume} {162}},\ \bibinfo
  {pages} {887--923} (\bibinfo {year} {2016})}\BibitemShut {NoStop}%
\bibitem [{\citenamefont {Lebl\'{e}}(2017)}]{Leble-17}%
  \BibitemOpen
  \bibfield  {author} {\bibinfo {author} {\bibnamefont {Lebl\'{e}},
  \bibfnamefont {T.}},\ }\bibfield  {title} {\enquote {\bibinfo {title} {Local
  microscopic behavior for 2{D} {C}oulomb gases},}\ }\href {\doibase
  10.1007/s00440-016-0744-y} {\bibfield  {journal} {\bibinfo  {journal}
  {Probab. Theory Related Fields}\ }\textbf {\bibinfo {volume} {169}},\
  \bibinfo {pages} {931--976} (\bibinfo {year} {2017})}\BibitemShut {NoStop}%
\bibitem [{\citenamefont {{Lebl{\'e}}}\ and\ \citenamefont
  {{Serfaty}}(2017)}]{LebSer-17}%
  \BibitemOpen
  \bibfield  {author} {\bibinfo {author} {\bibnamefont {{Lebl{\'e}}},
  \bibfnamefont {T.}}\ and\ \bibinfo {author} {\bibnamefont {{Serfaty}},
  \bibfnamefont {S.}},\ }\bibfield  {title} {\enquote {\bibinfo {title} {{Large
  Deviation Principle for Empirical Fields of Log and Riesz Gases}},}\ }\href
  {\doibase 10.1007/s00222-017-0738-0} {\bibfield  {journal} {\bibinfo
  {journal} {Invent. Math.}\ }\textbf {\bibinfo {volume} {210}},\ \bibinfo
  {pages} {645--757} (\bibinfo {year} {2017})}\BibitemShut {NoStop}%
\bibitem [{\citenamefont {Lebowitz}(1983)}]{Lebowitz-83}%
  \BibitemOpen
  \bibfield  {author} {\bibinfo {author} {\bibnamefont {Lebowitz},
  \bibfnamefont {J.~L.}},\ }\bibfield  {title} {\enquote {\bibinfo {title}
  {Charge fluctuations in {C}oulomb systems},}\ }\href {\doibase
  10.1103/PhysRevA.27.1491} {\bibfield  {journal} {\bibinfo  {journal} {Phys.
  Rev. A}\ }\textbf {\bibinfo {volume} {27}},\ \bibinfo {pages} {1491--1494}
  (\bibinfo {year} {1983})}\BibitemShut {NoStop}%
\bibitem [{\citenamefont {Lebowitz}\ and\ \citenamefont
  {Martin}(1984)}]{LebMar-84}%
  \BibitemOpen
  \bibfield  {author} {\bibinfo {author} {\bibnamefont {Lebowitz},
  \bibfnamefont {J.~L.}}\ and\ \bibinfo {author} {\bibnamefont {Martin},
  \bibfnamefont {P.~A.}},\ }\bibfield  {title} {\enquote {\bibinfo {title} {On
  potential and field fluctuations in classical charged systems},}\ }\href
  {https://doi.org/10.1007/BF01770360} {\bibfield  {journal} {\bibinfo
  {journal} {J. Statist. Phys.}\ }\textbf {\bibinfo {volume} {34}},\ \bibinfo
  {pages} {287--311} (\bibinfo {year} {1984})}\BibitemShut {NoStop}%
\bibitem [{\citenamefont {Lebowitz}\ and\ \citenamefont
  {Percus}(1961)}]{LebPer-61}%
  \BibitemOpen
  \bibfield  {author} {\bibinfo {author} {\bibnamefont {Lebowitz},
  \bibfnamefont {J.~L.}}\ and\ \bibinfo {author} {\bibnamefont {Percus},
  \bibfnamefont {J.~K.}},\ }\bibfield  {title} {\enquote {\bibinfo {title}
  {Long-range correlations in a closed system with applications to nonuniform
  fluids},}\ }\href {\doibase 10.1103/PhysRev.122.1675} {\bibfield  {journal}
  {\bibinfo  {journal} {Phys. Rev.}\ }\textbf {\bibinfo {volume} {122}},\
  \bibinfo {pages} {1675--1691} (\bibinfo {year} {1961})}\BibitemShut {NoStop}%
\bibitem [{\citenamefont {Lebowitz}\ and\ \citenamefont
  {Percus}(1963)}]{LebPer-63}%
  \BibitemOpen
  \bibfield  {author} {\bibinfo {author} {\bibnamefont {Lebowitz},
  \bibfnamefont {J.~L.}}\ and\ \bibinfo {author} {\bibnamefont {Percus},
  \bibfnamefont {J.~K.}},\ }\bibfield  {title} {\enquote {\bibinfo {title}
  {Statistical thermodynamics of nonuniform fluids},}\ }\href {\doibase
  10.1063/1.1703877} {\bibfield  {journal} {\bibinfo  {journal} {J. Math.
  Phys.}\ }\textbf {\bibinfo {volume} {4}},\ \bibinfo {pages} {116--123}
  (\bibinfo {year} {1963})}\BibitemShut {NoStop}%
\bibitem [{\citenamefont {de~Leeuw}\ and\ \citenamefont
  {Perram}(1982)}]{LeePerrSmi-82}%
  \BibitemOpen
  \bibfield  {author} {\bibinfo {author} {\bibnamefont {de~Leeuw},
  \bibfnamefont {S.}}\ and\ \bibinfo {author} {\bibnamefont {Perram},
  \bibfnamefont {J.}},\ }\bibfield  {title} {\enquote {\bibinfo {title}
  {Statistical mechanics of two-dimensional {C}oulomb systems: {II}. {T}he
  two-dimensional one-component plasma},}\ }\href {\doibase
  10.1016/0378-4371(82)90156-X} {\bibfield  {journal} {\bibinfo  {journal}
  {Physica A}\ }\textbf {\bibinfo {volume} {113}},\ \bibinfo {pages} {546--558}
  (\bibinfo {year} {1982})}\BibitemShut {NoStop}%
\bibitem [{\citenamefont {{Leinaas}}\ and\ \citenamefont
  {{Myrheim}}(1977)}]{LeiMyr-77}%
  \BibitemOpen
  \bibfield  {author} {\bibinfo {author} {\bibnamefont {{Leinaas}},
  \bibfnamefont {J.~M.}}\ and\ \bibinfo {author} {\bibnamefont {{Myrheim}},
  \bibfnamefont {J.}},\ }\bibfield  {title} {\enquote {\bibinfo {title} {On the
  theory of identical particles},}\ }\href {\doibase 10.1007/BF02727953}
  {\bibfield  {journal} {\bibinfo  {journal} {Nuovo Cimento B Serie}\ }\textbf
  {\bibinfo {volume} {37}},\ \bibinfo {pages} {1--23} (\bibinfo {year}
  {1977})}\BibitemShut {NoStop}%
\bibitem [{\citenamefont {Lenard}(1963)}]{Lenard-63}%
  \BibitemOpen
  \bibfield  {author} {\bibinfo {author} {\bibnamefont {Lenard}, \bibfnamefont
  {A.}},\ }\bibfield  {title} {\enquote {\bibinfo {title} {Exact statistical
  mechanics of a one-dimensional system with {C}oulomb forces. {III}.
  {S}tatistics of the electric field},}\ }\href
  {https://doi.org/10.1063/1.1703988} {\bibfield  {journal} {\bibinfo
  {journal} {J. Mathematical Phys.}\ }\textbf {\bibinfo {volume} {4}},\
  \bibinfo {pages} {533--543} (\bibinfo {year} {1963})}\BibitemShut {NoStop}%
\bibitem [{\citenamefont {Lenard}\ and\ \citenamefont
  {Dyson}(1968)}]{DysLen-68}%
  \BibitemOpen
  \bibfield  {author} {\bibinfo {author} {\bibnamefont {Lenard}, \bibfnamefont
  {A.}}\ and\ \bibinfo {author} {\bibnamefont {Dyson}, \bibfnamefont {F.~J.}},\
  }\bibfield  {title} {\enquote {\bibinfo {title} {Stability of matter.
  {II}},}\ }\href {\doibase 10.1063/1.1664631} {\bibfield  {journal} {\bibinfo
  {journal} {J. Math. Phys.}\ }\textbf {\bibinfo {volume} {9}},\ \bibinfo
  {pages} {698--711} (\bibinfo {year} {1968})}\BibitemShut {NoStop}%
\bibitem [{\citenamefont {Lewin}\ and\ \citenamefont {Lieb}(2015)}]{LewLie-15}%
  \BibitemOpen
  \bibfield  {author} {\bibinfo {author} {\bibnamefont {Lewin}, \bibfnamefont
  {M.}}\ and\ \bibinfo {author} {\bibnamefont {Lieb}, \bibfnamefont {E.~H.}},\
  }\bibfield  {title} {\enquote {\bibinfo {title} {Improved {L}ieb-{O}xford
  exchange-correlation inequality with gradient correction},}\ }\href {\doibase
  10.1103/PhysRevA.91.022507} {\bibfield  {journal} {\bibinfo  {journal} {Phys.
  Rev. A}\ }\textbf {\bibinfo {volume} {91}},\ \bibinfo {pages} {022507}
  (\bibinfo {year} {2015})},\ \Eprint {http://arxiv.org/abs/1408.3358}
  {arXiv:1408.3358 [math-ph]} \BibitemShut {NoStop}%
\bibitem [{\citenamefont {Lewin}, \citenamefont {Lieb},\ and\ \citenamefont
  {Seiringer}(2018)}]{LewLieSei-18}%
  \BibitemOpen
  \bibfield  {author} {\bibinfo {author} {\bibnamefont {Lewin}, \bibfnamefont
  {M.}}, \bibinfo {author} {\bibnamefont {Lieb}, \bibfnamefont {E.~H.}}, \ and\
  \bibinfo {author} {\bibnamefont {Seiringer}, \bibfnamefont {R.}},\ }\bibfield
   {title} {\enquote {\bibinfo {title} {Statistical mechanics of the {U}niform
  {E}lectron {G}as},}\ }\href {\doibase 10.5802/jep.64} {\bibfield  {journal}
  {\bibinfo  {journal} {J. {\'E}c. polytech. Math.}\ }\textbf {\bibinfo
  {volume} {5}},\ \bibinfo {pages} {79--116} (\bibinfo {year} {2018})},\
  \Eprint {http://arxiv.org/abs/1705.10676} {arXiv:1705.10676 [math-ph]}
  \BibitemShut {NoStop}%
\bibitem [{\citenamefont {Lewin}, \citenamefont {Lieb},\ and\ \citenamefont
  {Seiringer}(2019{\natexlab{a}})}]{LewLieSei-19b}%
  \BibitemOpen
  \bibfield  {author} {\bibinfo {author} {\bibnamefont {Lewin}, \bibfnamefont
  {M.}}, \bibinfo {author} {\bibnamefont {Lieb}, \bibfnamefont {E.~H.}}, \ and\
  \bibinfo {author} {\bibnamefont {Seiringer}, \bibfnamefont {R.}},\ }\bibfield
   {title} {\enquote {\bibinfo {title} {Floating {W}igner crystal with no
  boundary charge fluctuations},}\ }\href {\doibase
  10.1103/PhysRevB.100.035127} {\bibfield  {journal} {\bibinfo  {journal}
  {Phys. Rev. B}\ }\textbf {\bibinfo {volume} {100}},\ \bibinfo {pages}
  {035127} (\bibinfo {year} {2019}{\natexlab{a}})},\ \Eprint
  {http://arxiv.org/abs/1905.09138} {arXiv:1905.09138} \BibitemShut {NoStop}%
\bibitem [{\citenamefont {Lewin}, \citenamefont {Lieb},\ and\ \citenamefont
  {Seiringer}(2019{\natexlab{b}})}]{LewLieSei-19}%
  \BibitemOpen
  \bibfield  {author} {\bibinfo {author} {\bibnamefont {Lewin}, \bibfnamefont
  {M.}}, \bibinfo {author} {\bibnamefont {Lieb}, \bibfnamefont {E.~H.}}, \ and\
  \bibinfo {author} {\bibnamefont {Seiringer}, \bibfnamefont {R.}},\ }\bibfield
   {title} {\enquote {\bibinfo {title} {The {L}ocal {D}ensity {A}pproximation
  in {D}ensity {F}unctional {T}heory},}\ }\href {\doibase
  10.2140/paa.2020.2.35} {\bibfield  {journal} {\bibinfo  {journal} {Pure Appl.
  Anal.}\ }\textbf {\bibinfo {volume} {2}},\ \bibinfo {pages} {35--73}
  (\bibinfo {year} {2019}{\natexlab{b}})},\ \Eprint
  {http://arxiv.org/abs/1903.04046} {arXiv:1903.04046} \BibitemShut {NoStop}%
\bibitem [{\citenamefont {Lewin}, \citenamefont {Lieb},\ and\ \citenamefont
  {Seiringer}(2020)}]{LewLieSei-19_ppt}%
  \BibitemOpen
  \bibfield  {author} {\bibinfo {author} {\bibnamefont {Lewin}, \bibfnamefont
  {M.}}, \bibinfo {author} {\bibnamefont {Lieb}, \bibfnamefont {E.~H.}}, \ and\
  \bibinfo {author} {\bibnamefont {Seiringer}, \bibfnamefont {R.}},\ }\bibfield
   {title} {\enquote {\bibinfo {title} {{Universal Functionals in Density
  Functional Theory}},}\ }\href@noop {} {\bibfield  {journal} {\bibinfo
  {journal} {ArXiv e-prints}\ } (\bibinfo {year} {2020})},\ \bibinfo {note}
  {chapter in a book ``Density Functional Theory" edited by \'Eric Canc\`es,
  Gero Friesecke \& Lin Lin},\ \Eprint {http://arxiv.org/abs/1912.10424}
  {arXiv:1912.10424} \BibitemShut {NoStop}%
\bibitem [{\citenamefont {Lewin}\ and\ \citenamefont
  {Seiringer}(2009)}]{LewSei-09}%
  \BibitemOpen
  \bibfield  {author} {\bibinfo {author} {\bibnamefont {Lewin}, \bibfnamefont
  {M.}}\ and\ \bibinfo {author} {\bibnamefont {Seiringer}, \bibfnamefont
  {R.}},\ }\bibfield  {title} {\enquote {\bibinfo {title} {Strongly correlated
  phases in rapidly rotating {B}ose gases},}\ }\href {\doibase
  10.1007/s10955-009-9833-y} {\bibfield  {journal} {\bibinfo  {journal} {J.
  Stat. Phys.}\ }\textbf {\bibinfo {volume} {137}},\ \bibinfo {pages}
  {1040--1062} (\bibinfo {year} {2009})},\ \Eprint
  {http://arxiv.org/abs/0906.0741} {arXiv:0906.0741} \BibitemShut {NoStop}%
\bibitem [{\citenamefont {Lewis}(1988{\natexlab{a}})}]{Lewis-88e}%
  \BibitemOpen
  \bibfield  {author} {\bibinfo {author} {\bibnamefont {Lewis}, \bibfnamefont
  {J.~T.}},\ }\bibfield  {title} {\enquote {\bibinfo {title} {The large
  deviation principle in statistical mechanics: an expository account},}\ }in\
  \href {\doibase 10.1007/BFb0077923} {\emph {\bibinfo {booktitle} {Stochastic
  mechanics and stochastic processes ({S}wansea, 1986)}}},\ \bibinfo {series}
  {Lecture Notes in Math.}, Vol.\ \bibinfo {volume} {1325}\ (\bibinfo
  {publisher} {Springer, Berlin},\ \bibinfo {year} {1988})\ pp.\ \bibinfo
  {pages} {141--155}\BibitemShut {NoStop}%
\bibitem [{\citenamefont {Lewis}(1988{\natexlab{b}})}]{Lewis-88a}%
  \BibitemOpen
  \bibfield  {author} {\bibinfo {author} {\bibnamefont {Lewis}, \bibfnamefont
  {J.~T.}},\ }\bibfield  {title} {\enquote {\bibinfo {title} {Limit theorems
  for stochastic processes associated with a boson gas},}\ }in\ \href@noop {}
  {\emph {\bibinfo {booktitle} {Mark {K}ac {S}eminar on {P}robability and
  {P}hysics. {S}yllabus 1985--1987 ({A}msterdam, 1985--1987)}}},\ \bibinfo
  {series} {CWI Syllabi}, Vol.~\bibinfo {volume} {17}\ (\bibinfo  {publisher}
  {Math. Centrum, Centrum Wisk. Inform., Amsterdam},\ \bibinfo {year} {1988})\
  pp.\ \bibinfo {pages} {137--146}\BibitemShut {NoStop}%
\bibitem [{\citenamefont {Lewis}, \citenamefont {Zagrebnov},\ and\
  \citenamefont {Pul\'{e}}(1988)}]{LewZagPul-88}%
  \BibitemOpen
  \bibfield  {author} {\bibinfo {author} {\bibnamefont {Lewis}, \bibfnamefont
  {J.~T.}}, \bibinfo {author} {\bibnamefont {Zagrebnov}, \bibfnamefont
  {V.~A.}}, \ and\ \bibinfo {author} {\bibnamefont {Pul\'{e}}, \bibfnamefont
  {J.~V.}},\ }\bibfield  {title} {\enquote {\bibinfo {title} {The large
  deviation principle for the {K}ac distribution},}\ }\href@noop {} {\bibfield
  {journal} {\bibinfo  {journal} {Helv. Phys. Acta}\ }\textbf {\bibinfo
  {volume} {61}},\ \bibinfo {pages} {1063--1078} (\bibinfo {year}
  {1988})}\BibitemShut {NoStop}%
\bibitem [{\citenamefont {{Li}}\ \emph {et~al.}(2021)\citenamefont {{Li}},
  \citenamefont {{Li}}, \citenamefont {{Regan}}, \citenamefont {{Wang}},
  \citenamefont {{Zhao}}, \citenamefont {{Kahn}}, \citenamefont {{Yumigeta}},
  \citenamefont {{Blei}}, \citenamefont {{Taniguchi}}, \citenamefont
  {{Watanabe}}, \citenamefont {{Tongay}}, \citenamefont {{Zettl}},
  \citenamefont {{Crommie}},\ and\ \citenamefont {{Wang}}}]{Li_etal-21}%
  \BibitemOpen
  \bibfield  {author} {\bibinfo {author} {\bibnamefont {{Li}}, \bibfnamefont
  {H.}}, \bibinfo {author} {\bibnamefont {{Li}}, \bibfnamefont {S.}}, \bibinfo
  {author} {\bibnamefont {{Regan}}, \bibfnamefont {E.~C.}}, \bibinfo {author}
  {\bibnamefont {{Wang}}, \bibfnamefont {D.}}, \bibinfo {author} {\bibnamefont
  {{Zhao}}, \bibfnamefont {W.}}, \bibinfo {author} {\bibnamefont {{Kahn}},
  \bibfnamefont {S.}}, \bibinfo {author} {\bibnamefont {{Yumigeta}},
  \bibfnamefont {K.}}, \bibinfo {author} {\bibnamefont {{Blei}}, \bibfnamefont
  {M.}}, \bibinfo {author} {\bibnamefont {{Taniguchi}}, \bibfnamefont {T.}},
  \bibinfo {author} {\bibnamefont {{Watanabe}}, \bibfnamefont {K.}}, \bibinfo
  {author} {\bibnamefont {{Tongay}}, \bibfnamefont {S.}}, \bibinfo {author}
  {\bibnamefont {{Zettl}}, \bibfnamefont {A.}}, \bibinfo {author} {\bibnamefont
  {{Crommie}}, \bibfnamefont {M.~F.}}, \ and\ \bibinfo {author} {\bibnamefont
  {{Wang}}, \bibfnamefont {F.}},\ }\bibfield  {title} {\enquote {\bibinfo
  {title} {{Imaging two-dimensional generalized Wigner crystals}},}\ }\href
  {\doibase 10.1038/s41586-021-03874-9} {\bibfield  {journal} {\bibinfo
  {journal} {Nature}\ }\textbf {\bibinfo {volume} {597}},\ \bibinfo {pages}
  {650--654} (\bibinfo {year} {2021})}\BibitemShut {NoStop}%
\bibitem [{\citenamefont {Lieb}\ and\ \citenamefont
  {Lebowitz}(1972)}]{LieLeb-72}%
  \BibitemOpen
  \bibfield  {author} {\bibinfo {author} {\bibnamefont {Lieb}, \bibfnamefont
  {E.~H.}}\ and\ \bibinfo {author} {\bibnamefont {Lebowitz}, \bibfnamefont
  {J.~L.}},\ }\bibfield  {title} {\enquote {\bibinfo {title} {The constitution
  of matter: {E}xistence of thermodynamics for systems composed of electrons
  and nuclei},}\ }\href {\doibase 10.1016/0001-8708(72)90023-0} {\bibfield
  {journal} {\bibinfo  {journal} {Advances in Math.}\ }\textbf {\bibinfo
  {volume} {9}},\ \bibinfo {pages} {316--398} (\bibinfo {year}
  {1972})}\BibitemShut {NoStop}%
\bibitem [{\citenamefont {Lieb}\ and\ \citenamefont {Loss}(2001)}]{LieLos-01}%
  \BibitemOpen
  \bibfield  {author} {\bibinfo {author} {\bibnamefont {Lieb}, \bibfnamefont
  {E.~H.}}\ and\ \bibinfo {author} {\bibnamefont {Loss}, \bibfnamefont {M.}},\
  }\href@noop {} {\emph {\bibinfo {title} {Analysis}}},\ \bibinfo {edition}
  {2nd}\ ed.,\ \bibinfo {series} {Graduate Studies in Mathematics},
  Vol.~\bibinfo {volume} {14}\ (\bibinfo  {publisher} {American Mathematical
  Society},\ \bibinfo {address} {Providence, RI},\ \bibinfo {year} {2001})\
  pp.\ \bibinfo {pages} {xxii+346}\BibitemShut {NoStop}%
\bibitem [{\citenamefont {Lieb}\ and\ \citenamefont
  {Narnhofer}(1975)}]{LieNar-75}%
  \BibitemOpen
  \bibfield  {author} {\bibinfo {author} {\bibnamefont {Lieb}, \bibfnamefont
  {E.~H.}}\ and\ \bibinfo {author} {\bibnamefont {Narnhofer}, \bibfnamefont
  {H.}},\ }\bibfield  {title} {\enquote {\bibinfo {title} {The thermodynamic
  limit for jellium},}\ }\href {\doibase 10.1007/BF01012066} {\bibfield
  {journal} {\bibinfo  {journal} {J. Stat. Phys.}\ }\textbf {\bibinfo {volume}
  {12}},\ \bibinfo {pages} {291--310} (\bibinfo {year} {1975})}\BibitemShut
  {NoStop}%
\bibitem [{\citenamefont {Lieb}, \citenamefont {Rougerie},\ and\ \citenamefont
  {Yngvason}(2018)}]{LieRouYng-18}%
  \BibitemOpen
  \bibfield  {author} {\bibinfo {author} {\bibnamefont {Lieb}, \bibfnamefont
  {E.~H.}}, \bibinfo {author} {\bibnamefont {Rougerie}, \bibfnamefont {N.}}, \
  and\ \bibinfo {author} {\bibnamefont {Yngvason}, \bibfnamefont {J.}},\
  }\bibfield  {title} {\enquote {\bibinfo {title} {Rigidity of the {L}aughlin
  liquid},}\ }\href {\doibase 10.1007/s10955-018-2082-1} {\bibfield  {journal}
  {\bibinfo  {journal} {J. Stat. Phys.}\ }\textbf {\bibinfo {volume} {172}},\
  \bibinfo {pages} {544--554} (\bibinfo {year} {2018})}\BibitemShut {NoStop}%
\bibitem [{\citenamefont {Lieb}, \citenamefont {Rougerie},\ and\ \citenamefont
  {Yngvason}(2019)}]{LieRouYng-19}%
  \BibitemOpen
  \bibfield  {author} {\bibinfo {author} {\bibnamefont {Lieb}, \bibfnamefont
  {E.~H.}}, \bibinfo {author} {\bibnamefont {Rougerie}, \bibfnamefont {N.}}, \
  and\ \bibinfo {author} {\bibnamefont {Yngvason}, \bibfnamefont {J.}},\
  }\bibfield  {title} {\enquote {\bibinfo {title} {Local incompressibility
  estimates for the {L}aughlin phase},}\ }\href {\doibase
  10.1007/s00220-018-3181-1} {\bibfield  {journal} {\bibinfo  {journal} {Comm.
  Math. Phys.}\ }\textbf {\bibinfo {volume} {365}},\ \bibinfo {pages}
  {431--470} (\bibinfo {year} {2019})}\BibitemShut {NoStop}%
\bibitem [{\citenamefont {Lieb}\ and\ \citenamefont
  {Seiringer}(2010)}]{LieSei-09}%
  \BibitemOpen
  \bibfield  {author} {\bibinfo {author} {\bibnamefont {Lieb}, \bibfnamefont
  {E.~H.}}\ and\ \bibinfo {author} {\bibnamefont {Seiringer}, \bibfnamefont
  {R.}},\ }\href@noop {} {\emph {\bibinfo {title} {The {S}tability of {M}atter
  in {Q}uantum {M}echanics}}}\ (\bibinfo  {publisher} {Cambridge Univ. Press},\
  \bibinfo {year} {2010})\BibitemShut {NoStop}%
\bibitem [{\citenamefont {Lieb}\ and\ \citenamefont
  {Simon}(1977)}]{LieSim-77b}%
  \BibitemOpen
  \bibfield  {author} {\bibinfo {author} {\bibnamefont {Lieb}, \bibfnamefont
  {E.~H.}}\ and\ \bibinfo {author} {\bibnamefont {Simon}, \bibfnamefont {B.}},\
  }\bibfield  {title} {\enquote {\bibinfo {title} {The {T}homas-{F}ermi theory
  of atoms, molecules and solids},}\ }\href {\doibase
  10.1016/0001-8708(77)90108-6} {\bibfield  {journal} {\bibinfo  {journal}
  {Adv. Math.}\ }\textbf {\bibinfo {volume} {23}},\ \bibinfo {pages} {22--116}
  (\bibinfo {year} {1977})}\BibitemShut {NoStop}%
\bibitem [{\citenamefont {Likos}(2001)}]{Likos-01}%
  \BibitemOpen
  \bibfield  {author} {\bibinfo {author} {\bibnamefont {Likos}, \bibfnamefont
  {C.~N.}},\ }\bibfield  {title} {\enquote {\bibinfo {title} {Effective
  interactions in soft condensed matter physics},}\ }\href {\doibase
  10.1016/S0370-1573(00)00141-1} {\bibfield  {journal} {\bibinfo  {journal}
  {Phys. Rep.}\ }\textbf {\bibinfo {volume} {348}},\ \bibinfo {pages}
  {267--439} (\bibinfo {year} {2001})}\BibitemShut {NoStop}%
\bibitem [{\citenamefont {Lin}, \citenamefont {Zheng},\ and\ \citenamefont
  {Trimper}(2006)}]{LinZheTri-06}%
  \BibitemOpen
  \bibfield  {author} {\bibinfo {author} {\bibnamefont {Lin}, \bibfnamefont
  {S.~Z.}}, \bibinfo {author} {\bibnamefont {Zheng}, \bibfnamefont {B.}}, \
  and\ \bibinfo {author} {\bibnamefont {Trimper}, \bibfnamefont {S.}},\
  }\bibfield  {title} {\enquote {\bibinfo {title} {Computer simulations of
  two-dimensional melting with dipole-dipole interactions},}\ }\href {\doibase
  10.1103/PhysRevE.73.066106} {\bibfield  {journal} {\bibinfo  {journal} {Phys.
  Rev. E}\ }\textbf {\bibinfo {volume} {73}},\ \bibinfo {pages} {066106}
  (\bibinfo {year} {2006})}\BibitemShut {NoStop}%
\bibitem [{\citenamefont {Loos}\ and\ \citenamefont {Gill}(2011)}]{LooGil-11}%
  \BibitemOpen
  \bibfield  {author} {\bibinfo {author} {\bibnamefont {Loos}, \bibfnamefont
  {P.-F.}}\ and\ \bibinfo {author} {\bibnamefont {Gill}, \bibfnamefont
  {P.~M.~W.}},\ }\bibfield  {title} {\enquote {\bibinfo {title} {Thinking
  outside the box: The uniform electron gas on a hypersphere},}\ }\href
  {\doibase 10.1063/1.3665393} {\bibfield  {journal} {\bibinfo  {journal} {J.
  Chem. Phys.}\ }\textbf {\bibinfo {volume} {135}},\ \bibinfo {pages} {214111}
  (\bibinfo {year} {2011})}\BibitemShut {NoStop}%
\bibitem [{\citenamefont {Loos}\ and\ \citenamefont {Gill}(2013)}]{LooGil-13}%
  \BibitemOpen
  \bibfield  {author} {\bibinfo {author} {\bibnamefont {Loos}, \bibfnamefont
  {P.-F.}}\ and\ \bibinfo {author} {\bibnamefont {Gill}, \bibfnamefont
  {P.~M.~W.}},\ }\bibfield  {title} {\enquote {\bibinfo {title} {{Uniform
  electron gases. I. Electrons on a ring}},}\ }\href {\doibase
  10.1063/1.4802589} {\bibfield  {journal} {\bibinfo  {journal} {J. Chem.
  Phys.}\ }\textbf {\bibinfo {volume} {138}},\ \bibinfo {pages} {164124}
  (\bibinfo {year} {2013})}\BibitemShut {NoStop}%
\bibitem [{\citenamefont {Loos}\ and\ \citenamefont {Gill}(2016)}]{LooGil-16}%
  \BibitemOpen
  \bibfield  {author} {\bibinfo {author} {\bibnamefont {Loos}, \bibfnamefont
  {P.-F.}}\ and\ \bibinfo {author} {\bibnamefont {Gill}, \bibfnamefont
  {P.~M.~W.}},\ }\bibfield  {title} {\enquote {\bibinfo {title} {The uniform
  electron gas},}\ }\href {\doibase https://doi.org/10.1002/wcms.1257}
  {\bibfield  {journal} {\bibinfo  {journal} {WIREs Computational Molecular
  Science}\ }\textbf {\bibinfo {volume} {6}},\ \bibinfo {pages} {410--429}
  (\bibinfo {year} {2016})}\BibitemShut {NoStop}%
\bibitem [{\citenamefont {Lugrin}\ and\ \citenamefont
  {Martin}(1982)}]{LugMar-82}%
  \BibitemOpen
  \bibfield  {author} {\bibinfo {author} {\bibnamefont {Lugrin}, \bibfnamefont
  {C.}}\ and\ \bibinfo {author} {\bibnamefont {Martin}, \bibfnamefont
  {P.~A.}},\ }\bibfield  {title} {\enquote {\bibinfo {title} {Functional
  integration treatment of one-dimensional ionic mixtures},}\ }\href {\doibase
  10.1063/1.525284} {\bibfield  {journal} {\bibinfo  {journal} {J. Math.
  Phys.}\ }\textbf {\bibinfo {volume} {23}},\ \bibinfo {pages} {2418--2429}
  (\bibinfo {year} {1982})}\BibitemShut {NoStop}%
\bibitem [{\citenamefont {Lundqvist}\ and\ \citenamefont
  {March}(1983)}]{LunMar-83}%
  \BibitemOpen
  \bibinfo {editor} {\bibnamefont {Lundqvist}, \bibfnamefont {S.}}\ and\
  \bibinfo {editor} {\bibnamefont {March}, \bibfnamefont {N.}},\ eds.,\
  \href@noop {} {\emph {\bibinfo {title} {Theory of the Inhomogeneous Electron
  Gas}}},\ Physics of Solids and Liquids\ (\bibinfo  {publisher} {Springer
  US},\ \bibinfo {year} {1983})\BibitemShut {NoStop}%
\bibitem [{\citenamefont {Madelung}(1918)}]{Madelung-18}%
  \BibitemOpen
  \bibfield  {author} {\bibinfo {author} {\bibnamefont {Madelung},
  \bibfnamefont {E.}},\ }\bibfield  {title} {\enquote {\bibinfo {title} {Das
  elektrische {F}eld in {S}ystemen von regelm{\"a}{\ss}ig angeordneten
  {P}unktladungen},}\ }\href@noop {} {\bibfield  {journal} {\bibinfo  {journal}
  {Physik. Z.}\ }\textbf {\bibinfo {volume} {19}},\ \bibinfo {pages} {524--533}
  (\bibinfo {year} {1918})}\BibitemShut {NoStop}%
\bibitem [{\citenamefont {Madison}\ \emph {et~al.}(2000)\citenamefont
  {Madison}, \citenamefont {Chevy}, \citenamefont {Wohlleben},\ and\
  \citenamefont {Dalibard}}]{Dalibard-00}%
  \BibitemOpen
  \bibfield  {author} {\bibinfo {author} {\bibnamefont {Madison}, \bibfnamefont
  {K.~W.}}, \bibinfo {author} {\bibnamefont {Chevy}, \bibfnamefont {F.}},
  \bibinfo {author} {\bibnamefont {Wohlleben}, \bibfnamefont {W.}}, \ and\
  \bibinfo {author} {\bibnamefont {Dalibard}, \bibfnamefont {J.}},\ }\bibfield
  {title} {\enquote {\bibinfo {title} {Vortex formation in a stirred
  {B}ose-{E}instein condensate},}\ }\href {\doibase 10.1103/PhysRevLett.84.806}
  {\bibfield  {journal} {\bibinfo  {journal} {Phys. Rev. Lett.}\ }\textbf
  {\bibinfo {volume} {84}},\ \bibinfo {pages} {806--809} (\bibinfo {year}
  {2000})}\BibitemShut {NoStop}%
\bibitem [{\citenamefont {Marchioro}\ and\ \citenamefont
  {Presutti}(1972)}]{MarPre-72}%
  \BibitemOpen
  \bibfield  {author} {\bibinfo {author} {\bibnamefont {Marchioro},
  \bibfnamefont {C.}}\ and\ \bibinfo {author} {\bibnamefont {Presutti},
  \bibfnamefont {E.}},\ }\bibfield  {title} {\enquote {\bibinfo {title}
  {Thermodynamics of particle systems in the presence of external macroscopic
  fields. {I}. {C}lassical case},}\ }\href
  {http://projecteuclid.org/euclid.cmp/1103858213} {\bibfield  {journal}
  {\bibinfo  {journal} {Comm. Math. Phys.}\ }\textbf {\bibinfo {volume} {27}},\
  \bibinfo {pages} {146--154} (\bibinfo {year} {1972})}\BibitemShut {NoStop}%
\bibitem [{\citenamefont {Martin}\ \emph {et~al.}(2004)\citenamefont {Martin},
  \citenamefont {Ilani}, \citenamefont {Verdene}, \citenamefont {Smet},
  \citenamefont {Umansky}, \citenamefont {Mahalu}, \citenamefont {Schuh},
  \citenamefont {Abstreiter},\ and\ \citenamefont {Yacoby}}]{Martin_etal-04}%
  \BibitemOpen
  \bibfield  {author} {\bibinfo {author} {\bibnamefont {Martin}, \bibfnamefont
  {J.}}, \bibinfo {author} {\bibnamefont {Ilani}, \bibfnamefont {S.}}, \bibinfo
  {author} {\bibnamefont {Verdene}, \bibfnamefont {B.}}, \bibinfo {author}
  {\bibnamefont {Smet}, \bibfnamefont {J.}}, \bibinfo {author} {\bibnamefont
  {Umansky}, \bibfnamefont {V.}}, \bibinfo {author} {\bibnamefont {Mahalu},
  \bibfnamefont {D.}}, \bibinfo {author} {\bibnamefont {Schuh}, \bibfnamefont
  {D.}}, \bibinfo {author} {\bibnamefont {Abstreiter}, \bibfnamefont {G.}}, \
  and\ \bibinfo {author} {\bibnamefont {Yacoby}, \bibfnamefont {A.}},\
  }\bibfield  {title} {\enquote {\bibinfo {title} {Localization of fractionally
  charged quasi-particles},}\ }\href {\doibase 10.1126/science.1099950}
  {\bibfield  {journal} {\bibinfo  {journal} {Science}\ }\textbf {\bibinfo
  {volume} {305}},\ \bibinfo {pages} {980--983} (\bibinfo {year}
  {2004})}\BibitemShut {NoStop}%
\bibitem [{\citenamefont {Martin}(1988)}]{Martin-88}%
  \BibitemOpen
  \bibfield  {author} {\bibinfo {author} {\bibnamefont {Martin}, \bibfnamefont
  {P.~A.}},\ }\bibfield  {title} {\enquote {\bibinfo {title} {Sum rules in
  charged fluids},}\ }\href {\doibase 10.1103/RevModPhys.60.1075} {\bibfield
  {journal} {\bibinfo  {journal} {Rev. Mod. Phys.}\ }\textbf {\bibinfo {volume}
  {60}},\ \bibinfo {pages} {1075--1127} (\bibinfo {year} {1988})}\BibitemShut
  {NoStop}%
\bibitem [{\citenamefont {{Martin}}\ and\ \citenamefont
  {{Yalcin}}(1980)}]{MarYal-80}%
  \BibitemOpen
  \bibfield  {author} {\bibinfo {author} {\bibnamefont {{Martin}},
  \bibfnamefont {P.~A.}}\ and\ \bibinfo {author} {\bibnamefont {{Yalcin}},
  \bibfnamefont {T.}},\ }\bibfield  {title} {\enquote {\bibinfo {title} {The
  charge fluctuations in classical {C}oulomb systems},}\ }\href {\doibase
  10.1007/BF01012866} {\bibfield  {journal} {\bibinfo  {journal} {J. Stat.
  Phys.}\ }\textbf {\bibinfo {volume} {22}},\ \bibinfo {pages} {435--463}
  (\bibinfo {year} {1980})}\BibitemShut {NoStop}%
\bibitem [{\citenamefont {Martinelli}\ and\ \citenamefont
  {Merlini}(1984)}]{MarMer-84}%
  \BibitemOpen
  \bibfield  {author} {\bibinfo {author} {\bibnamefont {Martinelli},
  \bibfnamefont {F.}}\ and\ \bibinfo {author} {\bibnamefont {Merlini},
  \bibfnamefont {D.}},\ }\bibfield  {title} {\enquote {\bibinfo {title} {A
  refined {M}ermin argument for the two-dimensional jellium},}\ }\href
  {https://doi.org/10.1007/BF01770361} {\bibfield  {journal} {\bibinfo
  {journal} {J. Statist. Phys.}\ }\textbf {\bibinfo {volume} {34}},\ \bibinfo
  {pages} {313--318} (\bibinfo {year} {1984})}\BibitemShut {NoStop}%
\bibitem [{\citenamefont {Mart\'{\i}nez-Finkelshtein}\ \emph
  {et~al.}(2004)\citenamefont {Mart\'{\i}nez-Finkelshtein}, \citenamefont
  {Maymeskul}, \citenamefont {Rakhmanov},\ and\ \citenamefont
  {Saff}}]{MarMayRakSaf-04}%
  \BibitemOpen
  \bibfield  {author} {\bibinfo {author} {\bibnamefont
  {Mart\'{\i}nez-Finkelshtein}, \bibfnamefont {A.}}, \bibinfo {author}
  {\bibnamefont {Maymeskul}, \bibfnamefont {V.}}, \bibinfo {author}
  {\bibnamefont {Rakhmanov}, \bibfnamefont {E.~A.}}, \ and\ \bibinfo {author}
  {\bibnamefont {Saff}, \bibfnamefont {E.~B.}},\ }\bibfield  {title} {\enquote
  {\bibinfo {title} {Asymptotics for minimal discrete {R}iesz energy on curves
  in {$\mathbb{R}^d$}},}\ }\href {\doibase 10.4153/CJM-2004-024-1} {\bibfield
  {journal} {\bibinfo  {journal} {Canad. J. Math.}\ }\textbf {\bibinfo {volume}
  {56}},\ \bibinfo {pages} {529--552} (\bibinfo {year} {2004})}\BibitemShut
  {NoStop}%
\bibitem [{\citenamefont {Mazars}(2011)}]{Mazars-11}%
  \BibitemOpen
  \bibfield  {author} {\bibinfo {author} {\bibnamefont {Mazars}, \bibfnamefont
  {M.}},\ }\bibfield  {title} {\enquote {\bibinfo {title} {Long ranged
  interactions in computer simulations and for quasi-2{D} systems},}\ }\href
  {\doibase https://doi.org/10.1016/j.physrep.2010.11.004} {\bibfield
  {journal} {\bibinfo  {journal} {Physics Reports}\ }\textbf {\bibinfo {volume}
  {500}},\ \bibinfo {pages} {43--116} (\bibinfo {year} {2011})}\BibitemShut
  {NoStop}%
\bibitem [{\citenamefont {Mazars}(2015)}]{Mazars-15}%
  \BibitemOpen
  \bibfield  {author} {\bibinfo {author} {\bibnamefont {Mazars}, \bibfnamefont
  {M.}},\ }\bibfield  {title} {\enquote {\bibinfo {title} {The melting of the
  classical two-dimensional {W}igner crystal},}\ }\href {\doibase
  10.1209/0295-5075/110/26003} {\bibfield  {journal} {\bibinfo  {journal}
  {{EPL} (Europhysics Letters)}\ }\textbf {\bibinfo {volume} {110}},\ \bibinfo
  {pages} {26003} (\bibinfo {year} {2015})}\BibitemShut {NoStop}%
\bibitem [{\citenamefont {Mazars}\ and\ \citenamefont
  {Salazar}(2019)}]{MazSal-19}%
  \BibitemOpen
  \bibfield  {author} {\bibinfo {author} {\bibnamefont {Mazars}, \bibfnamefont
  {M.}}\ and\ \bibinfo {author} {\bibnamefont {Salazar}, \bibfnamefont {R.}},\
  }\bibfield  {title} {\enquote {\bibinfo {title} {Topological defects in the
  two-dimensional melting},}\ }\href {\doibase 10.1209/0295-5075/126/56002}
  {\bibfield  {journal} {\bibinfo  {journal} {{EPL} (Europhysics Letters)}\
  }\textbf {\bibinfo {volume} {126}},\ \bibinfo {pages} {56002} (\bibinfo
  {year} {2019})}\BibitemShut {NoStop}%
\bibitem [{\citenamefont {Mehta}(2004)}]{Mehta-10}%
  \BibitemOpen
  \bibfield  {author} {\bibinfo {author} {\bibnamefont {Mehta}, \bibfnamefont
  {M.~L.}},\ }\href@noop {} {\emph {\bibinfo {title} {Random matrices}}},\
  \bibinfo {edition} {3rd}\ ed.,\ \bibinfo {series} {Pure and Applied
  Mathematics (Amsterdam)}, Vol.\ \bibinfo {volume} {142}\ (\bibinfo
  {publisher} {Elsevier/Academic Press, Amsterdam},\ \bibinfo {year} {2004})\
  pp.\ \bibinfo {pages} {xviii+688}\BibitemShut {NoStop}%
\bibitem [{\citenamefont {Mehta}\ and\ \citenamefont
  {Dyson}(1963)}]{DysMeh-63b}%
  \BibitemOpen
  \bibfield  {author} {\bibinfo {author} {\bibnamefont {Mehta}, \bibfnamefont
  {M.~L.}}\ and\ \bibinfo {author} {\bibnamefont {Dyson}, \bibfnamefont
  {F.~J.}},\ }\bibfield  {title} {\enquote {\bibinfo {title} {Statistical
  theory of the energy levels of complex systems. {V}},}\ }\href {\doibase
  10.1063/1.1704009} {\bibfield  {journal} {\bibinfo  {journal} {J.
  Mathematical Phys.}\ }\textbf {\bibinfo {volume} {4}},\ \bibinfo {pages}
  {713--719} (\bibinfo {year} {1963})}\BibitemShut {NoStop}%
\bibitem [{\citenamefont {Mermin}(1967)}]{Mermin-67}%
  \BibitemOpen
  \bibfield  {author} {\bibinfo {author} {\bibnamefont {Mermin}, \bibfnamefont
  {N.~D.}},\ }\bibfield  {title} {\enquote {\bibinfo {title} {Absence of
  ordering in certain classical systems},}\ }\href {\doibase 10.1063/1.1705316}
  {\bibfield  {journal} {\bibinfo  {journal} {J. Math. Phys.}\ }\textbf
  {\bibinfo {volume} {8}},\ \bibinfo {pages} {1061--1064} (\bibinfo {year}
  {1967})}\BibitemShut {NoStop}%
\bibitem [{\citenamefont {Mermin}(1968)}]{Mermin-68}%
  \BibitemOpen
  \bibfield  {author} {\bibinfo {author} {\bibnamefont {Mermin}, \bibfnamefont
  {N.~D.}},\ }\bibfield  {title} {\enquote {\bibinfo {title} {Crystalline order
  in two dimensions},}\ }\href {\doibase 10.1103/PhysRev.176.250} {\bibfield
  {journal} {\bibinfo  {journal} {Phys. Rev.}\ }\textbf {\bibinfo {volume}
  {176}},\ \bibinfo {pages} {250--254} (\bibinfo {year} {1968})}\BibitemShut
  {NoStop}%
\bibitem [{\citenamefont {Mermin}\ and\ \citenamefont
  {Wagner}(1966)}]{MerWag-66}%
  \BibitemOpen
  \bibfield  {author} {\bibinfo {author} {\bibnamefont {Mermin}, \bibfnamefont
  {N.~D.}}\ and\ \bibinfo {author} {\bibnamefont {Wagner}, \bibfnamefont
  {H.}},\ }\bibfield  {title} {\enquote {\bibinfo {title} {Absence of
  ferromagnetism or antiferromagnetism in one- or two-dimensional isotropic
  heisenberg models},}\ }\href {\doibase 10.1103/PhysRevLett.17.1133}
  {\bibfield  {journal} {\bibinfo  {journal} {Phys. Rev. Lett.}\ }\textbf
  {\bibinfo {volume} {17}},\ \bibinfo {pages} {1133--1136} (\bibinfo {year}
  {1966})}\BibitemShut {NoStop}%
\bibitem [{\citenamefont {Messer}\ and\ \citenamefont
  {Spohn}(1982)}]{MesSpo-82}%
  \BibitemOpen
  \bibfield  {author} {\bibinfo {author} {\bibnamefont {Messer}, \bibfnamefont
  {J.}}\ and\ \bibinfo {author} {\bibnamefont {Spohn}, \bibfnamefont {H.}},\
  }\bibfield  {title} {\enquote {\bibinfo {title} {Statistical mechanics of the
  isothermal {L}ane-{E}mden equation},}\ }\href {\doibase 10.1007/BF01342187}
  {\bibfield  {journal} {\bibinfo  {journal} {J. Statist. Phys.}\ }\textbf
  {\bibinfo {volume} {29}},\ \bibinfo {pages} {561--578} (\bibinfo {year}
  {1982})}\BibitemShut {NoStop}%
\bibitem [{\citenamefont {Millard}(1972)}]{Millard-72}%
  \BibitemOpen
  \bibfield  {author} {\bibinfo {author} {\bibnamefont {Millard}, \bibfnamefont
  {K.}},\ }\bibfield  {title} {\enquote {\bibinfo {title} {A statistical
  mechanical approach to the problem of a fluid in an external field},}\ }\href
  {\doibase 10.1063/1.1665958} {\bibfield  {journal} {\bibinfo  {journal} {J.
  Mathematical Phys.}\ }\textbf {\bibinfo {volume} {13}},\ \bibinfo {pages}
  {222--226} (\bibinfo {year} {1972})}\BibitemShut {NoStop}%
\bibitem [{\citenamefont {Montgomery}(1973)}]{Montgomery-73}%
  \BibitemOpen
  \bibfield  {author} {\bibinfo {author} {\bibnamefont {Montgomery},
  \bibfnamefont {H.~L.}},\ }\bibfield  {title} {\enquote {\bibinfo {title} {The
  pair correlation of zeros of the zeta function},}\ }in\ \href@noop {} {\emph
  {\bibinfo {booktitle} {Analytic number theory ({P}roc. {S}ympos. {P}ure
  {M}ath., {V}ol. {XXIV}, {S}t. {L}ouis {U}niv., {S}t. {L}ouis, {M}o.,
  1972)}}}\ (\bibinfo {year} {1973})\ pp.\ \bibinfo {pages}
  {181--193}\BibitemShut {NoStop}%
\bibitem [{\citenamefont {Montgomery}(1988)}]{Montgomery-88}%
  \BibitemOpen
  \bibfield  {author} {\bibinfo {author} {\bibnamefont {Montgomery},
  \bibfnamefont {H.~L.}},\ }\bibfield  {title} {\enquote {\bibinfo {title}
  {Minimal theta functions},}\ }\href {\doibase 10.1017/S0017089500007047}
  {\bibfield  {journal} {\bibinfo  {journal} {Glasgow Math. J.}\ }\textbf
  {\bibinfo {volume} {30}},\ \bibinfo {pages} {75--85} (\bibinfo {year}
  {1988})}\BibitemShut {NoStop}%
\bibitem [{\citenamefont {Moore}\ and\ \citenamefont
  {P{\'e}rez-Garrido}(1999)}]{MooPer-99}%
  \BibitemOpen
  \bibfield  {author} {\bibinfo {author} {\bibnamefont {Moore}, \bibfnamefont
  {M.~A.}}\ and\ \bibinfo {author} {\bibnamefont {P{\'e}rez-Garrido},
  \bibfnamefont {A.}},\ }\bibfield  {title} {\enquote {\bibinfo {title}
  {Absence of a finite-temperature melting transition in the classical
  two-dimensional one-component plasma},}\ }\href {\doibase
  10.1103/PhysRevLett.82.4078} {\bibfield  {journal} {\bibinfo  {journal}
  {Phys. Rev. Lett.}\ }\textbf {\bibinfo {volume} {82}},\ \bibinfo {pages}
  {4078--4081} (\bibinfo {year} {1999})}\BibitemShut {NoStop}%
\bibitem [{\citenamefont {Moser}(1975)}]{Moser-75}%
  \BibitemOpen
  \bibfield  {author} {\bibinfo {author} {\bibnamefont {Moser}, \bibfnamefont
  {J.}},\ }\bibfield  {title} {\enquote {\bibinfo {title} {Three integrable
  {H}amiltonian systems connected with isospectral deformations},}\ }\href
  {\doibase 10.1016/0001-8708(75)90151-6} {\bibfield  {journal} {\bibinfo
  {journal} {Advances in Math.}\ }\textbf {\bibinfo {volume} {16}},\ \bibinfo
  {pages} {197--220} (\bibinfo {year} {1975})}\BibitemShut {NoStop}%
\bibitem [{\citenamefont {Murray}\ and\ \citenamefont
  {Van~Winkle}(1987)}]{MurWin-87}%
  \BibitemOpen
  \bibfield  {author} {\bibinfo {author} {\bibnamefont {Murray}, \bibfnamefont
  {C.~A.}}\ and\ \bibinfo {author} {\bibnamefont {Van~Winkle}, \bibfnamefont
  {D.~H.}},\ }\bibfield  {title} {\enquote {\bibinfo {title} {Experimental
  observation of two-stage melting in a classical two-dimensional screened
  {C}oulomb system},}\ }\href {\doibase 10.1103/PhysRevLett.58.1200} {\bibfield
   {journal} {\bibinfo  {journal} {Phys. Rev. Lett.}\ }\textbf {\bibinfo
  {volume} {58}},\ \bibinfo {pages} {1200--1203} (\bibinfo {year}
  {1987})}\BibitemShut {NoStop}%
\bibitem [{\citenamefont {Murthy}\ and\ \citenamefont
  {Shankar}(1994)}]{MurSha-94}%
  \BibitemOpen
  \bibfield  {author} {\bibinfo {author} {\bibnamefont {Murthy}, \bibfnamefont
  {M.~V.~N.}}\ and\ \bibinfo {author} {\bibnamefont {Shankar}, \bibfnamefont
  {R.}},\ }\bibfield  {title} {\enquote {\bibinfo {title} {Thermodynamics of a
  one-dimensional ideal gas with fractional exclusion statistics},}\ }\href
  {\doibase 10.1103/PhysRevLett.73.3331} {\bibfield  {journal} {\bibinfo
  {journal} {Phys. Rev. Lett.}\ }\textbf {\bibinfo {volume} {73}},\ \bibinfo
  {pages} {3331--3334} (\bibinfo {year} {1994})}\BibitemShut {NoStop}%
\bibitem [{\citenamefont {Muto}\ and\ \citenamefont {Aoki}(1999)}]{MutAok-99}%
  \BibitemOpen
  \bibfield  {author} {\bibinfo {author} {\bibnamefont {Muto}, \bibfnamefont
  {S.}}\ and\ \bibinfo {author} {\bibnamefont {Aoki}, \bibfnamefont {H.}},\
  }\bibfield  {title} {\enquote {\bibinfo {title} {Crystallization of a
  classical two-dimensional electron system: Positional and orientational
  orders},}\ }\href {\doibase 10.1103/PhysRevB.59.14911} {\bibfield  {journal}
  {\bibinfo  {journal} {Phys. Rev. B}\ }\textbf {\bibinfo {volume} {59}},\
  \bibinfo {pages} {14911--14914} (\bibinfo {year} {1999})}\BibitemShut
  {NoStop}%
\bibitem [{\citenamefont {Myrheim}(1999)}]{Myrheim-99}%
  \BibitemOpen
  \bibfield  {author} {\bibinfo {author} {\bibnamefont {Myrheim}, \bibfnamefont
  {J.}},\ }\bibfield  {title} {\enquote {\bibinfo {title} {Anyons},}\ }in\
  \href@noop {} {\emph {\bibinfo {booktitle} {Topological aspects of low
  dimensional systems}}},\ \bibinfo {series} {Les Houches - \'Ecole d'\'Et\'e
  de Physique Th\'eorique}, Vol.~\bibinfo {volume} {69},\ \bibinfo {editor}
  {edited by\ \bibinfo {editor} {\bibfnamefont {A.}~\bibnamefont {Comtet}},
  \bibinfo {editor} {\bibfnamefont {T.}~\bibnamefont {Jolic{\oe}ur}}, \bibinfo
  {editor} {\bibfnamefont {S.}~\bibnamefont {Ouvry}}, \ and\ \bibinfo {editor}
  {\bibfnamefont {F.}~\bibnamefont {David}}}\ (\bibinfo {year} {1999})\ pp.\
  \bibinfo {pages} {265--413}\BibitemShut {NoStop}%
\bibitem [{\citenamefont {Nakano}(2014)}]{Nakano-14}%
  \BibitemOpen
  \bibfield  {author} {\bibinfo {author} {\bibnamefont {Nakano}, \bibfnamefont
  {F.}},\ }\bibfield  {title} {\enquote {\bibinfo {title} {Level statistics for
  one-dimensional {S}chr\"{o}dinger operators and {G}aussian beta ensemble},}\
  }\href {\doibase 10.1007/s10955-014-0987-x} {\bibfield  {journal} {\bibinfo
  {journal} {J. Stat. Phys.}\ }\textbf {\bibinfo {volume} {156}},\ \bibinfo
  {pages} {66--93} (\bibinfo {year} {2014})}\BibitemShut {NoStop}%
\bibitem [{\citenamefont {Nelson}\ and\ \citenamefont
  {Halperin}(1979)}]{NelHal-79}%
  \BibitemOpen
  \bibfield  {author} {\bibinfo {author} {\bibnamefont {Nelson}, \bibfnamefont
  {D.~R.}}\ and\ \bibinfo {author} {\bibnamefont {Halperin}, \bibfnamefont
  {B.~I.}},\ }\bibfield  {title} {\enquote {\bibinfo {title}
  {Dislocation-mediated melting in two dimensions},}\ }\href {\doibase
  10.1103/PhysRevB.19.2457} {\bibfield  {journal} {\bibinfo  {journal} {Phys.
  Rev. B}\ }\textbf {\bibinfo {volume} {19}},\ \bibinfo {pages} {2457--2484}
  (\bibinfo {year} {1979})}\BibitemShut {NoStop}%
\bibitem [{\citenamefont {Nemkov}\ and\ \citenamefont
  {Klevtsov}(2021)}]{NemKle-21}%
  \BibitemOpen
  \bibfield  {author} {\bibinfo {author} {\bibnamefont {Nemkov}, \bibfnamefont
  {N.}}\ and\ \bibinfo {author} {\bibnamefont {Klevtsov}, \bibfnamefont {S.}},\
  }\bibfield  {title} {\enquote {\bibinfo {title} {Liouville perturbation
  theory for {L}aughlin state and {C}oulomb gas},}\ }\href {\doibase
  10.1088/1751-8121/ac1483} {\bibfield  {journal} {\bibinfo  {journal} {J.
  Phys. A}\ }\textbf {\bibinfo {volume} {54}},\ \bibinfo {pages} {Paper No.
  335204, 22} (\bibinfo {year} {2021})}\BibitemShut {NoStop}%
\bibitem [{\citenamefont {Nijboer}(1975)}]{Nijboer-75}%
  \BibitemOpen
  \bibfield  {author} {\bibinfo {author} {\bibnamefont {Nijboer}, \bibfnamefont
  {B.~R.~A.}},\ }\bibfield  {title} {\enquote {\bibinfo {title} {On a relation
  between the scattering cross-section in dense media and the energy of a
  dilute electron gas},}\ }\href@noop {} {\bibfield  {journal} {\bibinfo
  {journal} {Philips Res. Rep.}\ }\textbf {\bibinfo {volume} {30}},\ \bibinfo
  {pages} {74} (\bibinfo {year} {1975})}\BibitemShut {NoStop}%
\bibitem [{\citenamefont {Nijboer}\ and\ \citenamefont
  {Ruijgrok}(1988)}]{NijRui-88}%
  \BibitemOpen
  \bibfield  {author} {\bibinfo {author} {\bibnamefont {Nijboer}, \bibfnamefont
  {B.~R.~A.}}\ and\ \bibinfo {author} {\bibnamefont {Ruijgrok}, \bibfnamefont
  {T.~W.}},\ }\bibfield  {title} {\enquote {\bibinfo {title} {On the energy per
  particle in three- and two-dimensional {W}igner lattices},}\ }\href {\doibase
  10.1007/BF01011562} {\bibfield  {journal} {\bibinfo  {journal} {J. Statist.
  Phys.}\ }\textbf {\bibinfo {volume} {53}},\ \bibinfo {pages} {361--382}
  (\bibinfo {year} {1988})}\BibitemShut {NoStop}%
\bibitem [{\citenamefont {Nonnenmacher}\ and\ \citenamefont
  {Voros}(1998)}]{NonVor-98}%
  \BibitemOpen
  \bibfield  {author} {\bibinfo {author} {\bibnamefont {Nonnenmacher},
  \bibfnamefont {S.}}\ and\ \bibinfo {author} {\bibnamefont {Voros},
  \bibfnamefont {A.}},\ }\bibfield  {title} {\enquote {\bibinfo {title}
  {Chaotic eigenfunctions in phase space},}\ }\href {\doibase
  10.1023/A:1023080303171} {\bibfield  {journal} {\bibinfo  {journal} {J. Stat.
  Phys.}\ }\textbf {\bibinfo {volume} {92}},\ \bibinfo {pages} {431--518}
  (\bibinfo {year} {1998})}\BibitemShut {NoStop}%
\bibitem [{Note1()}]{Note1}%
  \BibitemOpen
  \bibinfo {note} {For a smooth domain (for instance satisfying~\protect
  \textup {\hbox {\mathsurround \z@ \protect \normalfont (\ignorespaces \ref
  {eq:hyp_domain}\unskip \@@italiccorr )}} below), there are of the order of
  $\ell ^{d-1}R$ points located at a distance $R$ from the boundary. Those see
  a bounded potential. For the $N+o(N)$ particles inside, at a distance
  $\geqslant R$, the potential induced by the particles outside is of order
  $O(R^{d-s})$ by Lemma~\ref {lem:simple_estim_sum}. Thus the energy shift is
  of order $\ell ^{d-1}R+\ell ^dR^{d-s}$. After optimizing over $R$, this
  provides the claimed $O(\ell ^{d-\protect \frac {s-d}{s-d+1}})$.}\BibitemShut
  {Stop}%
\bibitem [{Note2()}]{Note2}%
  \BibitemOpen
  \bibinfo {note} {In the physics literature, the name `Jellium' is often
  employed for electrons (which are quantum with spin), whereas the
  `one-component plasma' is mainly used for classical particles as considered
  in the present article.}\BibitemShut {Stop}%
\bibitem [{Note3()}]{Note3}%
  \BibitemOpen
  \bibinfo {note} {The shift $\DOTSI \intop \ilimits@ _Q|y|^2\protect \tmspace
  +\thinmuskip {.1667em}{\protect \rm d}y$ is also called the \protect \emph
  {lattice quantizer}~\cite {ConSlo-99} and it is minimal for the BCC
  lattice~\cite {BarSlo-83} in 3D.}\BibitemShut {Stop}%
\bibitem [{Note4()}]{Note4}%
  \BibitemOpen
  \bibinfo {note} {The particles have no spin and one should use the magnetic
  Laplacian. This is also equivalent to a Fermi system in a harmonic trap
  rotating at the largest possible speed~\cite {LacMajSch-19}.}\BibitemShut
  {Stop}%
\bibitem [{\citenamefont {Odlyzko}(1987)}]{Odlyzko-87}%
  \BibitemOpen
  \bibfield  {author} {\bibinfo {author} {\bibnamefont {Odlyzko}, \bibfnamefont
  {A.~M.}},\ }\bibfield  {title} {\enquote {\bibinfo {title} {On the
  distribution of spacings between zeros of the zeta function},}\ }\href
  {\doibase 10.2307/2007890} {\bibfield  {journal} {\bibinfo  {journal} {Math.
  Comp.}\ }\textbf {\bibinfo {volume} {48}},\ \bibinfo {pages} {273--308}
  (\bibinfo {year} {1987})}\BibitemShut {NoStop}%
\bibitem [{\citenamefont {Onsager}(1939)}]{Onsager-39}%
  \BibitemOpen
  \bibfield  {author} {\bibinfo {author} {\bibnamefont {Onsager}, \bibfnamefont
  {L.}},\ }\bibfield  {title} {\enquote {\bibinfo {title} {Electrostatic
  interaction of molecules},}\ }\href {\doibase 10.1021/j150389a001} {\bibfield
   {journal} {\bibinfo  {journal} {J. Phys. Chem.}\ }\textbf {\bibinfo {volume}
  {43}},\ \bibinfo {pages} {189--196} (\bibinfo {year} {1939})}\BibitemShut
  {NoStop}%
\bibitem [{\citenamefont {Osgood}, \citenamefont {Phillips},\ and\
  \citenamefont {Sarnak}(1988)}]{OsgPhiSar-88}%
  \BibitemOpen
  \bibfield  {author} {\bibinfo {author} {\bibnamefont {Osgood}, \bibfnamefont
  {B.}}, \bibinfo {author} {\bibnamefont {Phillips}, \bibfnamefont {R.}}, \
  and\ \bibinfo {author} {\bibnamefont {Sarnak}, \bibfnamefont {P.}},\
  }\bibfield  {title} {\enquote {\bibinfo {title} {Extremals of determinants of
  {L}aplacians},}\ }\href {\doibase 10.1016/0022-1236(88)90070-5} {\bibfield
  {journal} {\bibinfo  {journal} {J. Funct. Anal.}\ }\textbf {\bibinfo {volume}
  {80}},\ \bibinfo {pages} {148--211} (\bibinfo {year} {1988})}\BibitemShut
  {NoStop}%
\bibitem [{\citenamefont {Papangelou}(1987)}]{Papangelou-87}%
  \BibitemOpen
  \bibfield  {author} {\bibinfo {author} {\bibnamefont {Papangelou},
  \bibfnamefont {F.}},\ }\bibfield  {title} {\enquote {\bibinfo {title} {On the
  absence of phase transition in continuous one-dimensional {G}ibbs systems
  with no hard core},}\ }\href {https://doi.org/10.1007/BF00363511} {\bibfield
  {journal} {\bibinfo  {journal} {Probab. Theory Related Fields}\ }\textbf
  {\bibinfo {volume} {74}},\ \bibinfo {pages} {485--496} (\bibinfo {year}
  {1987})}\BibitemShut {NoStop}%
\bibitem [{\citenamefont {Park}(1977)}]{Park-77}%
  \BibitemOpen
  \bibfield  {author} {\bibinfo {author} {\bibnamefont {Park}, \bibfnamefont
  {Y.~M.}},\ }\bibfield  {title} {\enquote {\bibinfo {title} {Massless quantum
  sine-{G}ordon equation in two space-time dimensions: correlation inequalities
  and infinite volume limit},}\ }\href {https://doi.org/10.1063/1.523230}
  {\bibfield  {journal} {\bibinfo  {journal} {J. Mathematical Phys.}\ }\textbf
  {\bibinfo {volume} {18}},\ \bibinfo {pages} {2423--2426} (\bibinfo {year}
  {1977})}\BibitemShut {NoStop}%
\bibitem [{\citenamefont {Parr}\ and\ \citenamefont {Yang}(1994)}]{ParYan-94}%
  \BibitemOpen
  \bibfield  {author} {\bibinfo {author} {\bibnamefont {Parr}, \bibfnamefont
  {R.}}\ and\ \bibinfo {author} {\bibnamefont {Yang}, \bibfnamefont {W.}},\
  }\href@noop {} {\emph {\bibinfo {title} {Density-Functional Theory of Atoms
  and Molecules}}},\ International Series of Monographs on Chemistry\ (\bibinfo
   {publisher} {Oxford University Press, USA},\ \bibinfo {year}
  {1994})\BibitemShut {NoStop}%
\bibitem [{\citenamefont {Pass}(2015)}]{Pass-15}%
  \BibitemOpen
  \bibfield  {author} {\bibinfo {author} {\bibnamefont {Pass}, \bibfnamefont
  {B.}},\ }\bibfield  {title} {\enquote {\bibinfo {title} {Multi-marginal
  optimal transport: theory and applications},}\ }\href {\doibase
  10.1051/m2an/2015020} {\bibfield  {journal} {\bibinfo  {journal} {ESAIM Math.
  Model. Numer. Anal.}\ }\textbf {\bibinfo {volume} {49}},\ \bibinfo {pages}
  {1771--1790} (\bibinfo {year} {2015})}\BibitemShut {NoStop}%
\bibitem [{\citenamefont {Paulin}, \citenamefont {Ackerson},\ and\
  \citenamefont {Wolfe}(1996)}]{PauAckWol-96}%
  \BibitemOpen
  \bibfield  {author} {\bibinfo {author} {\bibnamefont {Paulin}, \bibfnamefont
  {S.}}, \bibinfo {author} {\bibnamefont {Ackerson}, \bibfnamefont {B.~J.}}, \
  and\ \bibinfo {author} {\bibnamefont {Wolfe}, \bibfnamefont {M.}},\
  }\bibfield  {title} {\enquote {\bibinfo {title} {Equilibrium and shear
  induced nonequilibrium phase behavior of pmma microgel spheres},}\ }\href
  {\doibase 10.1006/jcis.1996.0113} {\bibfield  {journal} {\bibinfo  {journal}
  {J. Colloid Interf. Sci.}\ }\textbf {\bibinfo {volume} {178}},\ \bibinfo
  {pages} {251--262} (\bibinfo {year} {1996})}\BibitemShut {NoStop}%
\bibitem [{\citenamefont {Penrose}(1963)}]{Penrose-63}%
  \BibitemOpen
  \bibfield  {author} {\bibinfo {author} {\bibnamefont {Penrose}, \bibfnamefont
  {O.}},\ }\bibfield  {title} {\enquote {\bibinfo {title} {Convergence of
  fugacity expansions for fluids and lattice gases},}\ }\href {\doibase
  10.1063/1.1703906} {\bibfield  {journal} {\bibinfo  {journal} {J.
  Mathematical Phys.}\ }\textbf {\bibinfo {volume} {4}},\ \bibinfo {pages}
  {1312--1320} (\bibinfo {year} {1963})}\BibitemShut {NoStop}%
\bibitem [{\citenamefont {Penrose}\ and\ \citenamefont
  {Smith}(1972)}]{PenSmi-72}%
  \BibitemOpen
  \bibfield  {author} {\bibinfo {author} {\bibnamefont {Penrose}, \bibfnamefont
  {O.}}\ and\ \bibinfo {author} {\bibnamefont {Smith}, \bibfnamefont {E.~R.}},\
  }\bibfield  {title} {\enquote {\bibinfo {title} {Thermodynamic limit for
  classical systems with {C}oulomb interactions in a constant external
  field},}\ }\href {http://projecteuclid.org/euclid.cmp/1103858003} {\bibfield
  {journal} {\bibinfo  {journal} {Comm. Math. Phys.}\ }\textbf {\bibinfo
  {volume} {26}},\ \bibinfo {pages} {53--77} (\bibinfo {year}
  {1972})}\BibitemShut {NoStop}%
\bibitem [{\citenamefont {Perdew}(1991)}]{Perdew-91}%
  \BibitemOpen
  \bibfield  {author} {\bibinfo {author} {\bibnamefont {Perdew}, \bibfnamefont
  {J.~P.}},\ }\bibfield  {title} {\enquote {\bibinfo {title} {Unified {T}heory
  of {E}xchange and {C}orrelation {B}eyond the {L}ocal {D}ensity
  {A}pproximation},}\ }in\ \href@noop {} {\emph {\bibinfo {booktitle}
  {Electronic Structure of Solids '91}}},\ \bibinfo {editor} {edited by\
  \bibinfo {editor} {\bibfnamefont {P.}~\bibnamefont {Ziesche}}\ and\ \bibinfo
  {editor} {\bibfnamefont {H.}~\bibnamefont {Eschrig}}}\ (\bibinfo  {publisher}
  {Akademie Verlag, Berlin},\ \bibinfo {year} {1991})\ pp.\ \bibinfo {pages}
  {11--20}\BibitemShut {NoStop}%
\bibitem [{\citenamefont {Perdew}, \citenamefont {Burke},\ and\ \citenamefont
  {Ernzerhof}(1996)}]{PerBurErn-96}%
  \BibitemOpen
  \bibfield  {author} {\bibinfo {author} {\bibnamefont {Perdew}, \bibfnamefont
  {J.~P.}}, \bibinfo {author} {\bibnamefont {Burke}, \bibfnamefont {K.}}, \
  and\ \bibinfo {author} {\bibnamefont {Ernzerhof}, \bibfnamefont {M.}},\
  }\bibfield  {title} {\enquote {\bibinfo {title} {Generalized gradient
  approximation made simple},}\ }\href {\doibase 10.1103/PhysRevLett.77.3865}
  {\bibfield  {journal} {\bibinfo  {journal} {Phys. Rev. Lett.}\ }\textbf
  {\bibinfo {volume} {77}},\ \bibinfo {pages} {3865--3868} (\bibinfo {year}
  {1996})}\BibitemShut {NoStop}%
\bibitem [{\citenamefont {Perdew}\ and\ \citenamefont
  {Kurth}(2003)}]{PerKur-03}%
  \BibitemOpen
  \bibfield  {author} {\bibinfo {author} {\bibnamefont {Perdew}, \bibfnamefont
  {J.~P.}}\ and\ \bibinfo {author} {\bibnamefont {Kurth}, \bibfnamefont {S.}},\
  }\enquote {\bibinfo {title} {Density functionals for non-relativistic
  {C}oulomb systems in the new century},}\ in\ \href {\doibase
  10.1007/3-540-37072-2_1} {\emph {\bibinfo {booktitle} {A Primer in Density
  Functional Theory}}},\ \bibinfo {editor} {edited by\ \bibinfo {editor}
  {\bibfnamefont {C.}~\bibnamefont {Fiolhais}}, \bibinfo {editor}
  {\bibfnamefont {F.}~\bibnamefont {Nogueira}}, \ and\ \bibinfo {editor}
  {\bibfnamefont {M.~A.~L.}\ \bibnamefont {Marques}}}\ (\bibinfo  {publisher}
  {Springer Berlin Heidelberg},\ \bibinfo {address} {Berlin, Heidelberg},\
  \bibinfo {year} {2003})\ pp.\ \bibinfo {pages} {1--55}\BibitemShut {NoStop}%
\bibitem [{\citenamefont {Perdew}\ and\ \citenamefont
  {Wang}(1992)}]{PerWan-92}%
  \BibitemOpen
  \bibfield  {author} {\bibinfo {author} {\bibnamefont {Perdew}, \bibfnamefont
  {J.~P.}}\ and\ \bibinfo {author} {\bibnamefont {Wang}, \bibfnamefont {Y.}},\
  }\bibfield  {title} {\enquote {\bibinfo {title} {Accurate and simple analytic
  representation of the electron-gas correlation energy},}\ }\href {\doibase
  10.1103/PhysRevB.45.13244} {\bibfield  {journal} {\bibinfo  {journal} {Phys.
  Rev. B}\ }\textbf {\bibinfo {volume} {45}},\ \bibinfo {pages} {13244--13249}
  (\bibinfo {year} {1992})}\BibitemShut {NoStop}%
\bibitem [{\citenamefont {Peres}\ and\ \citenamefont {Sly}(2014)}]{PerSly-14}%
  \BibitemOpen
  \bibfield  {author} {\bibinfo {author} {\bibnamefont {Peres}, \bibfnamefont
  {Y.}}\ and\ \bibinfo {author} {\bibnamefont {Sly}, \bibfnamefont {A.}},\
  }\href@noop {} {\enquote {\bibinfo {title} {Rigidity and tolerance for
  perturbed lattices},}\ } (\bibinfo {year} {2014}),\ \Eprint
  {http://arxiv.org/abs/1409.4490} {arXiv:1409.4490 [math.PR]} \BibitemShut
  {NoStop}%
\bibitem [{\citenamefont {Petrache}\ and\ \citenamefont
  {Rota~Nodari}(2018)}]{PetRot-18}%
  \BibitemOpen
  \bibfield  {author} {\bibinfo {author} {\bibnamefont {Petrache},
  \bibfnamefont {M.}}\ and\ \bibinfo {author} {\bibnamefont {Rota~Nodari},
  \bibfnamefont {S.}},\ }\bibfield  {title} {\enquote {\bibinfo {title}
  {Equidistribution of jellium energy for {C}oulomb and {R}iesz
  interactions},}\ }\href {\doibase 10.1007/s00365-017-9395-1} {\bibfield
  {journal} {\bibinfo  {journal} {Constr. Approx.}\ }\textbf {\bibinfo {volume}
  {47}},\ \bibinfo {pages} {163--210} (\bibinfo {year} {2018})}\BibitemShut
  {NoStop}%
\bibitem [{\citenamefont {Petrache}\ and\ \citenamefont
  {Serfaty}(2017)}]{PetSer-17}%
  \BibitemOpen
  \bibfield  {author} {\bibinfo {author} {\bibnamefont {Petrache},
  \bibfnamefont {M.}}\ and\ \bibinfo {author} {\bibnamefont {Serfaty},
  \bibfnamefont {S.}},\ }\bibfield  {title} {\enquote {\bibinfo {title} {{Next
  Order Asymptotics and Renormalized Energy for Riesz Interactions}},}\ }\href
  {\doibase 10.1017/S1474748015000201} {\bibfield  {journal} {\bibinfo
  {journal} {J. Inst. Math. Jussieu}\ }\textbf {\bibinfo {volume} {16}},\
  \bibinfo {pages} {501--569} (\bibinfo {year} {2017})}\BibitemShut {NoStop}%
\bibitem [{\citenamefont {{Petrache}}\ and\ \citenamefont
  {{Serfaty}}(2020)}]{PetSer-20}%
  \BibitemOpen
  \bibfield  {author} {\bibinfo {author} {\bibnamefont {{Petrache}},
  \bibfnamefont {M.}}\ and\ \bibinfo {author} {\bibnamefont {{Serfaty}},
  \bibfnamefont {S.}},\ }\bibfield  {title} {\enquote {\bibinfo {title}
  {{Crystallization for Coulomb and Riesz interactions as a consequence of the
  Cohn-Kumar conjecture}},}\ }\href {\doibase 10.1090/proc/15003} {\bibfield
  {journal} {\bibinfo  {journal} {{Proc. Am. Math. Soc.}}\ }\textbf {\bibinfo
  {volume} {148}},\ \bibinfo {pages} {3047--3057} (\bibinfo {year}
  {2020})}\BibitemShut {NoStop}%
\bibitem [{\citenamefont {Placzek}, \citenamefont {Nijboer},\ and\
  \citenamefont {Hove}(1951)}]{PlaNijHov-51}%
  \BibitemOpen
  \bibfield  {author} {\bibinfo {author} {\bibnamefont {Placzek}, \bibfnamefont
  {G.}}, \bibinfo {author} {\bibnamefont {Nijboer}, \bibfnamefont {B.~R.~A.}},
  \ and\ \bibinfo {author} {\bibnamefont {Hove}, \bibfnamefont {L.~V.}},\
  }\bibfield  {title} {\enquote {\bibinfo {title} {Effect of short wavelength
  interference on neuteron scattering by dense systems of heavy nuclei},}\
  }\href {\doibase 10.1103/PhysRev.82.392} {\bibfield  {journal} {\bibinfo
  {journal} {Phys. Rev.}\ }\textbf {\bibinfo {volume} {82}},\ \bibinfo {pages}
  {392--403} (\bibinfo {year} {1951})}\BibitemShut {NoStop}%
\bibitem [{\citenamefont {Pollock}\ and\ \citenamefont
  {Hansen}(1973)}]{PolHan-73}%
  \BibitemOpen
  \bibfield  {author} {\bibinfo {author} {\bibnamefont {Pollock}, \bibfnamefont
  {E.}}\ and\ \bibinfo {author} {\bibnamefont {Hansen}, \bibfnamefont {J.}},\
  }\bibfield  {title} {\enquote {\bibinfo {title} {{Statistical Mechanics of
  Dense Ionized Matter. II. Equilibrium Properties and Melting Transition of
  the Crystallized One-Component Plasma}},}\ }\href {\doibase
  10.1103/PhysRevA.8.3110} {\bibfield  {journal} {\bibinfo  {journal} {Phys.
  Rev. A}\ }\textbf {\bibinfo {volume} {8}},\ \bibinfo {pages} {3110--3122}
  (\bibinfo {year} {1973})}\BibitemShut {NoStop}%
\bibitem [{\citenamefont {Prager}(1962)}]{Prager-62}%
  \BibitemOpen
  \bibfield  {author} {\bibinfo {author} {\bibnamefont {Prager}, \bibfnamefont
  {S.}},\ }\bibfield  {title} {\enquote {\bibinfo {title} {The one-dimensional
  plasma},}\ }in\ \href@noop {} {\emph {\bibinfo {booktitle} {Advances in
  {C}hemical {P}hysics, {V}ol. {IV}}}}\ (\bibinfo  {publisher} {Interscience,
  New York},\ \bibinfo {year} {1962})\ pp.\ \bibinfo {pages}
  {201--224}\BibitemShut {NoStop}%
\bibitem [{\citenamefont {Prestipino}, \citenamefont {Saija},\ and\
  \citenamefont {Giaquinta}(2005)}]{PreSaiGia-05}%
  \BibitemOpen
  \bibfield  {author} {\bibinfo {author} {\bibnamefont {Prestipino},
  \bibfnamefont {S.}}, \bibinfo {author} {\bibnamefont {Saija}, \bibfnamefont
  {F.}}, \ and\ \bibinfo {author} {\bibnamefont {Giaquinta}, \bibfnamefont
  {P.~V.}},\ }\bibfield  {title} {\enquote {\bibinfo {title} {Phase diagram of
  softly repulsive systems: The {G}aussian and inverse-power-law potentials},}\
  }\href {\doibase 10.1063/1.2064639} {\bibfield  {journal} {\bibinfo
  {journal} {J. Chem. Phys.}\ }\textbf {\bibinfo {volume} {123}},\ \bibinfo
  {pages} {144110} (\bibinfo {year} {2005})}\BibitemShut {NoStop}%
\bibitem [{\citenamefont {Preston}(1974)}]{Preston-74}%
  \BibitemOpen
  \bibfield  {author} {\bibinfo {author} {\bibnamefont {Preston}, \bibfnamefont
  {C.~J.}},\ }\href@noop {} {\emph {\bibinfo {title} {Gibbs states on countable
  sets}}},\ Cambridge Tracts in Mathematics, No. 68\ (\bibinfo  {publisher}
  {Cambridge University Press, London-New York},\ \bibinfo {year} {1974})\ pp.\
  \bibinfo {pages} {ix+128}\BibitemShut {NoStop}%
\bibitem [{\citenamefont {Radin}(1987)}]{Radin-87}%
  \BibitemOpen
  \bibfield  {author} {\bibinfo {author} {\bibnamefont {Radin}, \bibfnamefont
  {C.}},\ }\bibfield  {title} {\enquote {\bibinfo {title} {Low temperature and
  the origin of crystalline symmetry},}\ }\href {\doibase
  10.1142/S0217979287001675} {\bibfield  {journal} {\bibinfo  {journal}
  {Internat. J. Modern Phys. B}\ }\textbf {\bibinfo {volume} {1}},\ \bibinfo
  {pages} {1157--1191} (\bibinfo {year} {1987})}\BibitemShut {NoStop}%
\bibitem [{\citenamefont {Rankin}(1953)}]{Rankin-53}%
  \BibitemOpen
  \bibfield  {author} {\bibinfo {author} {\bibnamefont {Rankin}, \bibfnamefont
  {R.~A.}},\ }\bibfield  {title} {\enquote {\bibinfo {title} {A minimum problem
  for the {E}pstein zeta-function},}\ }\href {\doibase
  10.1017/S2040618500035668} {\bibfield  {journal} {\bibinfo  {journal} {Proc.
  Glasgow Math. Assoc.}\ }\textbf {\bibinfo {volume} {1}},\ \bibinfo {pages}
  {149--158} (\bibinfo {year} {1953})}\BibitemShut {NoStop}%
\bibitem [{\citenamefont {Rebenko}(1988)}]{Rebenko-88}%
  \BibitemOpen
  \bibfield  {author} {\bibinfo {author} {\bibnamefont {Rebenko}, \bibfnamefont
  {A.~L.}},\ }\bibfield  {title} {\enquote {\bibinfo {title} {Mathematical
  foundations of the equilibrium classical statistical mechanics of charged
  particles},}\ }\href {\doibase 10.1070/RM1988v043n03ABEH001744} {\bibfield
  {journal} {\bibinfo  {journal} {Uspekhi Mat. Nauk}\ }\textbf {\bibinfo
  {volume} {43}},\ \bibinfo {pages} {55--97, 271, 272} (\bibinfo {year}
  {1988})}\BibitemShut {NoStop}%
\bibitem [{\citenamefont {Reda}\ and\ \citenamefont
  {Najnudel}(2018)}]{RedNaj-18}%
  \BibitemOpen
  \bibfield  {author} {\bibinfo {author} {\bibnamefont {Reda}, \bibfnamefont
  {C.}}\ and\ \bibinfo {author} {\bibnamefont {Najnudel}, \bibfnamefont {J.}},\
  }\bibfield  {title} {\enquote {\bibinfo {title} {Rigidity of the {${\rm
  Sine}_{\beta}$} process},}\ }\href {\doibase 10.1214/18-ECP195} {\bibfield
  {journal} {\bibinfo  {journal} {Electron. Commun. Probab.}\ }\textbf
  {\bibinfo {volume} {23}},\ \bibinfo {pages} {Paper No. 94, 8} (\bibinfo
  {year} {2018})}\BibitemShut {NoStop}%
\bibitem [{\citenamefont {Reed}\ and\ \citenamefont {Simon}(1978)}]{ReeSim4}%
  \BibitemOpen
  \bibfield  {author} {\bibinfo {author} {\bibnamefont {Reed}, \bibfnamefont
  {M.}}\ and\ \bibinfo {author} {\bibnamefont {Simon}, \bibfnamefont {B.}},\
  }\href@noop {} {\emph {\bibinfo {title} {Methods of {M}odern {M}athematical
  {P}hysics. {IV}. {A}nalysis of operators}}}\ (\bibinfo  {publisher} {Academic
  Press},\ \bibinfo {address} {New York},\ \bibinfo {year} {1978})\ pp.\
  \bibinfo {pages} {xv+396}\BibitemShut {NoStop}%
\bibitem [{\citenamefont {Requardt}\ and\ \citenamefont
  {Wagner}(1990)}]{ReqWag-90}%
  \BibitemOpen
  \bibfield  {author} {\bibinfo {author} {\bibnamefont {Requardt},
  \bibfnamefont {M.}}\ and\ \bibinfo {author} {\bibnamefont {Wagner},
  \bibfnamefont {H.~J.}},\ }\bibfield  {title} {\enquote {\bibinfo {title}
  {Wigner crystallization and its relation to the poor decay of pair
  correlations in one-component plasmas of arbitrary dimension},}\ }\href
  {https://doi.org/10.1007/BF01026570} {\bibfield  {journal} {\bibinfo
  {journal} {J. Statist. Phys.}\ }\textbf {\bibinfo {volume} {58}},\ \bibinfo
  {pages} {1165--1180} (\bibinfo {year} {1990})}\BibitemShut {NoStop}%
\bibitem [{\citenamefont {Riesz}(8 40)}]{Riesz-38}%
  \BibitemOpen
  \bibfield  {author} {\bibinfo {author} {\bibnamefont {Riesz}, \bibfnamefont
  {M.}},\ }\bibfield  {title} {\enquote {\bibinfo {title} {Intégrales de
  {R}iemann--{L}iouville et potentiels},}\ }\href@noop {} {\bibfield  {journal}
  {\bibinfo  {journal} {Acta Sci. Math. (Szeged)}\ }\textbf {\bibinfo {volume}
  {9}},\ \bibinfo {pages} {1--42} (\bibinfo {year} {1938-40})}\BibitemShut
  {NoStop}%
\bibitem [{\citenamefont {Ros-Oton}\ and\ \citenamefont
  {Serra}(2016)}]{RosSer-16}%
  \BibitemOpen
  \bibfield  {author} {\bibinfo {author} {\bibnamefont {Ros-Oton},
  \bibfnamefont {X.}}\ and\ \bibinfo {author} {\bibnamefont {Serra},
  \bibfnamefont {J.}},\ }\bibfield  {title} {\enquote {\bibinfo {title}
  {Regularity theory for general stable operators},}\ }\href {\doibase
  10.1016/j.jde.2016.02.033} {\bibfield  {journal} {\bibinfo  {journal} {J.
  Differential Equations}\ }\textbf {\bibinfo {volume} {260}},\ \bibinfo
  {pages} {8675--8715} (\bibinfo {year} {2016})}\BibitemShut {NoStop}%
\bibitem [{\citenamefont {Ros-Oton}\ and\ \citenamefont
  {Serra}(2017)}]{RosSer-17}%
  \BibitemOpen
  \bibfield  {author} {\bibinfo {author} {\bibnamefont {Ros-Oton},
  \bibfnamefont {X.}}\ and\ \bibinfo {author} {\bibnamefont {Serra},
  \bibfnamefont {J.}},\ }\bibfield  {title} {\enquote {\bibinfo {title}
  {Boundary regularity estimates for nonlocal elliptic equations in {$C^1$} and
  {$C^{1,\alpha}$} domains},}\ }\href {\doibase 10.1007/s10231-016-0632-1}
  {\bibfield  {journal} {\bibinfo  {journal} {Ann. Mat. Pura Appl. (4)}\
  }\textbf {\bibinfo {volume} {196}},\ \bibinfo {pages} {1637--1668} (\bibinfo
  {year} {2017})}\BibitemShut {NoStop}%
\bibitem [{\citenamefont {{Rota Nodari}}\ and\ \citenamefont
  {Serfaty}(2015)}]{RotSer-15}%
  \BibitemOpen
  \bibfield  {author} {\bibinfo {author} {\bibnamefont {{Rota Nodari}},
  \bibfnamefont {S.}}\ and\ \bibinfo {author} {\bibnamefont {Serfaty},
  \bibfnamefont {S.}},\ }\bibfield  {title} {\enquote {\bibinfo {title}
  {Renormalized energy equidistribution and local charge balance in 2d
  {C}oulomb system},}\ }\href {\doibase 10.1093/imrn/rnu031} {\bibfield
  {journal} {\bibinfo  {journal} {Int. Math. Res. Not. (IMRN)}\ }\textbf
  {\bibinfo {volume} {11}},\ \bibinfo {pages} {3035--3093} (\bibinfo {year}
  {2015})}\BibitemShut {NoStop}%
\bibitem [{\citenamefont {{Rougerie}}(2015)}]{Rougerie-LMU}%
  \BibitemOpen
  \bibfield  {author} {\bibinfo {author} {\bibnamefont {{Rougerie}},
  \bibfnamefont {N.}},\ }\bibfield  {title} {\enquote {\bibinfo {title} {De
  {F}inetti theorems, mean-field limits and {B}ose-{E}instein condensation},}\
  }\href@noop {} {\bibfield  {journal} {\bibinfo  {journal} {ArXiv e-prints}\ }
  (\bibinfo {year} {2015})},\ \Eprint {http://arxiv.org/abs/1506.05263}
  {arXiv:1506.05263 [math-ph]} \BibitemShut {NoStop}%
\bibitem [{\citenamefont {Rougerie}(2022)}]{Rougerie-22_ppt}%
  \BibitemOpen
  \bibfield  {author} {\bibinfo {author} {\bibnamefont {Rougerie},
  \bibfnamefont {N.}},\ }\enquote {\bibinfo {title} {Book in honor of {E}lliott
  {L}ieb 90th birthday},}\ \ (\bibinfo  {publisher} {EMS Press},\ \bibinfo
  {year} {2022})\ Chap.\ \bibinfo {chapter} {The classical {J}ellium and the
  {L}aughlin phase}\BibitemShut {NoStop}%
\bibitem [{\citenamefont {{Rougerie}}\ and\ \citenamefont
  {{Serfaty}}(2016)}]{RouSer-16}%
  \BibitemOpen
  \bibfield  {author} {\bibinfo {author} {\bibnamefont {{Rougerie}},
  \bibfnamefont {N.}}\ and\ \bibinfo {author} {\bibnamefont {{Serfaty}},
  \bibfnamefont {S.}},\ }\bibfield  {title} {\enquote {\bibinfo {title}
  {{Higher Dimensional Coulomb Gases and Renormalized Energy Functionals}},}\
  }\href {\doibase 10.1002/cpa.21570} {\bibfield  {journal} {\bibinfo
  {journal} {Comm. Pure Appl. Math.}\ }\textbf {\bibinfo {volume} {69}},\
  \bibinfo {pages} {519--605} (\bibinfo {year} {2016})}\BibitemShut {NoStop}%
\bibitem [{\citenamefont {Rougerie}, \citenamefont {Serfaty},\ and\
  \citenamefont {Yngvason}(2013)}]{RouSerYng-13b}%
  \BibitemOpen
  \bibfield  {author} {\bibinfo {author} {\bibnamefont {Rougerie},
  \bibfnamefont {N.}}, \bibinfo {author} {\bibnamefont {Serfaty}, \bibfnamefont
  {S.}}, \ and\ \bibinfo {author} {\bibnamefont {Yngvason}, \bibfnamefont
  {J.}},\ }\bibfield  {title} {\enquote {\bibinfo {title} {{Quantum Hall Phases
  and Plasma Analogy in Rotating Trapped Bose Gases}},}\ }\href {\doibase
  10.1007/s10955-013-0766-0} {\bibfield  {journal} {\bibinfo  {journal} {J.
  Stat. Phys.}\ ,\ \bibinfo {pages} {1--49}} (\bibinfo {year} {2013})},\
  \Eprint {http://arxiv.org/abs/1301.1043} {1301.1043} \BibitemShut {NoStop}%
\bibitem [{\citenamefont {Rudnick}\ and\ \citenamefont
  {Sarnak}(1996)}]{RudSar-96}%
  \BibitemOpen
  \bibfield  {author} {\bibinfo {author} {\bibnamefont {Rudnick}, \bibfnamefont
  {Z.}}\ and\ \bibinfo {author} {\bibnamefont {Sarnak}, \bibfnamefont {P.}},\
  }\bibfield  {title} {\enquote {\bibinfo {title} {Zeros of principal
  {$L$}-functions and random matrix theory},}\ }\href {\doibase
  10.1215/S0012-7094-96-08115-6} {\bibfield  {journal} {\bibinfo  {journal}
  {Duke Math. J.}\ }\textbf {\bibinfo {volume} {81}},\ \bibinfo {pages}
  {269--322} (\bibinfo {year} {1996})},\ \bibinfo {note} {a celebration of John
  F. Nash, Jr.}\BibitemShut {Stop}%
\bibitem [{\citenamefont {Ruelle}(1963{\natexlab{a}})}]{Ruelle-63a}%
  \BibitemOpen
  \bibfield  {author} {\bibinfo {author} {\bibnamefont {Ruelle}, \bibfnamefont
  {D.}},\ }\bibfield  {title} {\enquote {\bibinfo {title} {Classical
  statistical mechanics of a system of particles},}\ }\href@noop {} {\bibfield
  {journal} {\bibinfo  {journal} {Helv. Phys. Acta}\ }\textbf {\bibinfo
  {volume} {36}},\ \bibinfo {pages} {183--197} (\bibinfo {year}
  {1963}{\natexlab{a}})}\BibitemShut {NoStop}%
\bibitem [{\citenamefont {Ruelle}(1963{\natexlab{b}})}]{Ruelle-63}%
  \BibitemOpen
  \bibfield  {author} {\bibinfo {author} {\bibnamefont {Ruelle}, \bibfnamefont
  {D.}},\ }\bibfield  {title} {\enquote {\bibinfo {title} {Correlation
  functions of classical gases},}\ }\href {\doibase
  10.1016/0003-4916(63)90336-1} {\bibfield  {journal} {\bibinfo  {journal}
  {Annals of Physics}\ }\textbf {\bibinfo {volume} {25}},\ \bibinfo {pages}
  {109--120} (\bibinfo {year} {1963}{\natexlab{b}})}\BibitemShut {NoStop}%
\bibitem [{\citenamefont {Ruelle}(1964)}]{Ruelle-64}%
  \BibitemOpen
  \bibfield  {author} {\bibinfo {author} {\bibnamefont {Ruelle}, \bibfnamefont
  {D.}},\ }\bibfield  {title} {\enquote {\bibinfo {title} {Cluster property of
  the correlation functions of classical gases},}\ }\href {\doibase
  10.1103/RevModPhys.36.580} {\bibfield  {journal} {\bibinfo  {journal} {Rev.
  Modern Phys.}\ }\textbf {\bibinfo {volume} {36}},\ \bibinfo {pages}
  {580--584} (\bibinfo {year} {1964})}\BibitemShut {NoStop}%
\bibitem [{\citenamefont {Ruelle}(1970)}]{Ruelle-70}%
  \BibitemOpen
  \bibfield  {author} {\bibinfo {author} {\bibnamefont {Ruelle}, \bibfnamefont
  {D.}},\ }\bibfield  {title} {\enquote {\bibinfo {title} {Superstable
  interactions in classical statistical mechanics},}\ }\href
  {http://projecteuclid.org/euclid.cmp/1103842505} {\bibfield  {journal}
  {\bibinfo  {journal} {Comm. Math. Phys.}\ }\textbf {\bibinfo {volume} {18}},\
  \bibinfo {pages} {127--159} (\bibinfo {year} {1970})}\BibitemShut {NoStop}%
\bibitem [{\citenamefont {Ruelle}(1999)}]{Ruelle}%
  \BibitemOpen
  \bibfield  {author} {\bibinfo {author} {\bibnamefont {Ruelle}, \bibfnamefont
  {D.}},\ }\href@noop {} {\emph {\bibinfo {title} {Statistical mechanics.
  Rigorous results}}}\ (\bibinfo  {publisher} {{Singapore: World Scientific.
  London: Imperial College Press}},\ \bibinfo {year} {1999})\BibitemShut
  {NoStop}%
\bibitem [{\citenamefont {Ruijsenaars}(1995)}]{Ruijsenaars-95}%
  \BibitemOpen
  \bibfield  {author} {\bibinfo {author} {\bibnamefont {Ruijsenaars},
  \bibfnamefont {S.}},\ }\bibfield  {title} {\enquote {\bibinfo {title}
  {Action-angle maps and scattering theory for some finite-dimensional
  integrable systems. {III}. {S}utherland type systems and their duals},}\
  }\href {\doibase 10.2977/prims/1195164440} {\bibfield  {journal} {\bibinfo
  {journal} {Publ. Res. Inst. Math. Sci.}\ }\textbf {\bibinfo {volume} {31}},\
  \bibinfo {pages} {247--353} (\bibinfo {year} {1995})}\BibitemShut {NoStop}%
\bibitem [{\citenamefont {Ry\v{s}kov}(1973)}]{Ryshkov-73}%
  \BibitemOpen
  \bibfield  {author} {\bibinfo {author} {\bibnamefont {Ry\v{s}kov},
  \bibfnamefont {S.~S.}},\ }\bibfield  {title} {\enquote {\bibinfo {title} {On
  the question of the final {$\zeta $}-optimality of lattices that yield the
  densest packing of {$n$}-dimensional balls},}\ }\href@noop {} {\bibfield
  {journal} {\bibinfo  {journal} {Siberian Math. J.}\ }\textbf {\bibinfo
  {volume} {14}},\ \bibinfo {pages} {743--750} (\bibinfo {year}
  {1973})}\BibitemShut {NoStop}%
\bibitem [{\citenamefont {Saff}\ and\ \citenamefont
  {Kuijlaars}(1997)}]{SafKui-97}%
  \BibitemOpen
  \bibfield  {author} {\bibinfo {author} {\bibnamefont {Saff}, \bibfnamefont
  {E.~B.}}\ and\ \bibinfo {author} {\bibnamefont {Kuijlaars}, \bibfnamefont
  {A.~B.~J.}},\ }\bibfield  {title} {\enquote {\bibinfo {title} {Distributing
  many points on a sphere},}\ }\href {\doibase 10.1007/BF03024331} {\bibfield
  {journal} {\bibinfo  {journal} {Math. Intelligencer}\ }\textbf {\bibinfo
  {volume} {19}},\ \bibinfo {pages} {5--11} (\bibinfo {year}
  {1997})}\BibitemShut {NoStop}%
\bibitem [{\citenamefont {Sakai}(1982)}]{Sakai-82}%
  \BibitemOpen
  \bibfield  {author} {\bibinfo {author} {\bibnamefont {Sakai}, \bibfnamefont
  {M.}},\ }\href@noop {} {\emph {\bibinfo {title} {Quadrature domains}}},\
  \bibinfo {series} {Lecture Notes in Mathematics}, Vol.\ \bibinfo {volume}
  {934}\ (\bibinfo  {publisher} {Springer-Verlag, Berlin-New York},\ \bibinfo
  {year} {1982})\ pp.\ \bibinfo {pages} {i+133}\BibitemShut {NoStop}%
\bibitem [{\citenamefont {Salazar}(2017)}]{Salazar-PhD}%
  \BibitemOpen
  \bibfield  {author} {\bibinfo {author} {\bibnamefont {Salazar}, \bibfnamefont
  {R.}},\ }\emph {\bibinfo {title} {{Exact results and melting theories in
  two-dimensional systems}}},\ \href
  {https://tel.archives-ouvertes.fr/tel-01689945} {Ph.D. thesis},\ \bibinfo
  {school} {{Universit{\'e} Paris Saclay ; Universidad de los Andes
  (Bogot{\'a})}} (\bibinfo {year} {2017})\BibitemShut {NoStop}%
\bibitem [{\citenamefont {Salazar}\ and\ \citenamefont
  {T\'{e}llez}(2016)}]{SalTel-16}%
  \BibitemOpen
  \bibfield  {author} {\bibinfo {author} {\bibnamefont {Salazar}, \bibfnamefont
  {R.}}\ and\ \bibinfo {author} {\bibnamefont {T\'{e}llez}, \bibfnamefont
  {G.}},\ }\bibfield  {title} {\enquote {\bibinfo {title} {Exact energy
  computation of the one component plasma on a sphere for even values of the
  coupling parameter},}\ }\href {\doibase 10.1007/s10955-016-1562-4} {\bibfield
   {journal} {\bibinfo  {journal} {J. Stat. Phys.}\ }\textbf {\bibinfo {volume}
  {164}},\ \bibinfo {pages} {969--999} (\bibinfo {year} {2016})}\BibitemShut
  {NoStop}%
\bibitem [{\citenamefont {{Salpeter}}(1961)}]{Salpeter-61}%
  \BibitemOpen
  \bibfield  {author} {\bibinfo {author} {\bibnamefont {{Salpeter}},
  \bibfnamefont {E.~E.}},\ }\bibfield  {title} {\enquote {\bibinfo {title}
  {Energy and pressure of a zero-temperature plasma},}\ }\href {\doibase
  10.1086/147194} {\bibfield  {journal} {\bibinfo  {journal} {Astrophys. J.}\
  }\textbf {\bibinfo {volume} {134}},\ \bibinfo {pages} {669} (\bibinfo {year}
  {1961})}\BibitemShut {NoStop}%
\bibitem [{\citenamefont {Salzberg}\ and\ \citenamefont
  {Prager}(1963)}]{SalPra-63}%
  \BibitemOpen
  \bibfield  {author} {\bibinfo {author} {\bibnamefont {Salzberg},
  \bibfnamefont {A.~M.}}\ and\ \bibinfo {author} {\bibnamefont {Prager},
  \bibfnamefont {S.}},\ }\bibfield  {title} {\enquote {\bibinfo {title}
  {Equation of state for a two‐dimensional electrolyte},}\ }\href {\doibase
  10.1063/1.1733553} {\bibfield  {journal} {\bibinfo  {journal} {J. Chem.
  Phys.}\ }\textbf {\bibinfo {volume} {38}},\ \bibinfo {pages} {2587--2587}
  (\bibinfo {year} {1963})},\ \Eprint
  {http://arxiv.org/abs/https://doi.org/10.1063/1.1733553}
  {https://doi.org/10.1063/1.1733553} \BibitemShut {NoStop}%
\bibitem [{\citenamefont {Saminadayar}\ \emph {et~al.}(1997)\citenamefont
  {Saminadayar}, \citenamefont {Glattli}, \citenamefont {Jin},\ and\
  \citenamefont {Etienne}}]{SamGlaJinEti-97}%
  \BibitemOpen
  \bibfield  {author} {\bibinfo {author} {\bibnamefont {Saminadayar},
  \bibfnamefont {L.}}, \bibinfo {author} {\bibnamefont {Glattli}, \bibfnamefont
  {D.~C.}}, \bibinfo {author} {\bibnamefont {Jin}, \bibfnamefont {Y.}}, \ and\
  \bibinfo {author} {\bibnamefont {Etienne}, \bibfnamefont {B.}},\ }\bibfield
  {title} {\enquote {\bibinfo {title} {{Observation of the $e/3$ Fractionally
  Charged Laughlin Quasiparticle}},}\ }\href {\doibase
  10.1103/PhysRevLett.79.2526} {\bibfield  {journal} {\bibinfo  {journal}
  {Phys. Rev. Lett.}\ }\textbf {\bibinfo {volume} {79}},\ \bibinfo {pages}
  {2526--2529} (\bibinfo {year} {1997})}\BibitemShut {NoStop}%
\bibitem [{\citenamefont {Sandier}\ and\ \citenamefont
  {Serfaty}(2012)}]{SanSer-12}%
  \BibitemOpen
  \bibfield  {author} {\bibinfo {author} {\bibnamefont {Sandier}, \bibfnamefont
  {E.}}\ and\ \bibinfo {author} {\bibnamefont {Serfaty}, \bibfnamefont {S.}},\
  }\bibfield  {title} {\enquote {\bibinfo {title} {From the {G}inzburg-{L}andau
  model to vortex lattice problems},}\ }\href {\doibase
  10.1007/s00220-012-1508-x} {\bibfield  {journal} {\bibinfo  {journal}
  {Commun. Math. Phys.}\ }\textbf {\bibinfo {volume} {313}},\ \bibinfo {pages}
  {635--743} (\bibinfo {year} {2012})}\BibitemShut {NoStop}%
\bibitem [{\citenamefont {Sandier}\ and\ \citenamefont
  {Serfaty}(2014)}]{SanSer-14a}%
  \BibitemOpen
  \bibfield  {author} {\bibinfo {author} {\bibnamefont {Sandier}, \bibfnamefont
  {E.}}\ and\ \bibinfo {author} {\bibnamefont {Serfaty}, \bibfnamefont {S.}},\
  }\bibfield  {title} {\enquote {\bibinfo {title} {1d log gases and the
  renormalized energy: crystallization at vanishing temperature},}\ }\href
  {\doibase 10.1007/s00440-014-0585-5} {\bibfield  {journal} {\bibinfo
  {journal} {Probab. Theory Related Fields}\ ,\ \bibinfo {pages} {1--52}}
  (\bibinfo {year} {2014})}\BibitemShut {NoStop}%
\bibitem [{\citenamefont {{Sandier}}\ and\ \citenamefont
  {{Serfaty}}(2015)}]{SanSer-15}%
  \BibitemOpen
  \bibfield  {author} {\bibinfo {author} {\bibnamefont {{Sandier}},
  \bibfnamefont {E.}}\ and\ \bibinfo {author} {\bibnamefont {{Serfaty}},
  \bibfnamefont {S.}},\ }\bibfield  {title} {\enquote {\bibinfo {title} {{2D
  Coulomb Gases and the Renormalized Energy}},}\ }\href {\doibase
  10.1214/14-AOP927} {\bibfield  {journal} {\bibinfo  {journal} {Annals of
  Proba.}\ }\textbf {\bibinfo {volume} {43}},\ \bibinfo {pages} {2026--2083}
  (\bibinfo {year} {2015})}\BibitemShut {NoStop}%
\bibitem [{\citenamefont {{Santra}}\ \emph {et~al.}(2021)\citenamefont
  {{Santra}}, \citenamefont {{Kethepalli}}, \citenamefont {{Agarwal}},
  \citenamefont {{Dhar}}, \citenamefont {{Kulkarni}},\ and\ \citenamefont
  {{Kundu}}}]{SanKetAgaDhaKulKun-21}%
  \BibitemOpen
  \bibfield  {author} {\bibinfo {author} {\bibnamefont {{Santra}},
  \bibfnamefont {S.}}, \bibinfo {author} {\bibnamefont {{Kethepalli}},
  \bibfnamefont {J.}}, \bibinfo {author} {\bibnamefont {{Agarwal}},
  \bibfnamefont {S.}}, \bibinfo {author} {\bibnamefont {{Dhar}}, \bibfnamefont
  {A.}}, \bibinfo {author} {\bibnamefont {{Kulkarni}}, \bibfnamefont {M.}}, \
  and\ \bibinfo {author} {\bibnamefont {{Kundu}}, \bibfnamefont {A.}},\
  }\bibfield  {title} {\enquote {\bibinfo {title} {{Gap statistics for confined
  particles with power-law interactions}},}\ }\href@noop {} {\bibfield
  {journal} {\bibinfo  {journal} {arXiv e-prints}\ ,\ \bibinfo {eid}
  {arXiv:2109.15026}} (\bibinfo {year} {2021})},\ \Eprint
  {http://arxiv.org/abs/2109.15026} {arXiv:2109.15026 [cond-mat.stat-mech]}
  \BibitemShut {NoStop}%
\bibitem [{\citenamefont {{Sari}}\ and\ \citenamefont
  {{Merlini}}(1976)}]{SarMer-76}%
  \BibitemOpen
  \bibfield  {author} {\bibinfo {author} {\bibnamefont {{Sari}}, \bibfnamefont
  {R.~R.}}\ and\ \bibinfo {author} {\bibnamefont {{Merlini}}, \bibfnamefont
  {D.}},\ }\bibfield  {title} {\enquote {\bibinfo {title} {On the
  {$\nu$}-dimensional one-component classical plasma: The thermodynamic limit
  problem revisited},}\ }\href {\doibase 10.1007/BF01011761} {\bibfield
  {journal} {\bibinfo  {journal} {J. Statist. Phys.}\ }\textbf {\bibinfo
  {volume} {14}},\ \bibinfo {pages} {91--100} (\bibinfo {year}
  {1976})}\BibitemShut {NoStop}%
\bibitem [{\citenamefont {Sari}, \citenamefont {Merlini},\ and\ \citenamefont
  {Calinon}(1976)}]{SarMerCal-76}%
  \BibitemOpen
  \bibfield  {author} {\bibinfo {author} {\bibnamefont {Sari}, \bibfnamefont
  {R.~R.}}, \bibinfo {author} {\bibnamefont {Merlini}, \bibfnamefont {D.}}, \
  and\ \bibinfo {author} {\bibnamefont {Calinon}, \bibfnamefont {R.}},\
  }\bibfield  {title} {\enquote {\bibinfo {title} {On the ground state of the
  one-component classical plasma},}\ }\href {\doibase
  10.1088/0305-4470/9/9/014} {\bibfield  {journal} {\bibinfo  {journal} {J.
  Phys. A}\ }\textbf {\bibinfo {volume} {9}},\ \bibinfo {pages} {1539--1551}
  (\bibinfo {year} {1976})}\BibitemShut {NoStop}%
\bibitem [{\citenamefont {Sarnak}\ and\ \citenamefont
  {Str{\"o}mbergsson}(2006)}]{SarStr-06}%
  \BibitemOpen
  \bibfield  {author} {\bibinfo {author} {\bibnamefont {Sarnak}, \bibfnamefont
  {P.}}\ and\ \bibinfo {author} {\bibnamefont {Str{\"o}mbergsson},
  \bibfnamefont {A.}},\ }\bibfield  {title} {\enquote {\bibinfo {title} {Minima
  of {E}pstein's zeta function and heights of flat tori},}\ }\href {\doibase
  10.1007/s00222-005-0488-2} {\bibfield  {journal} {\bibinfo  {journal}
  {Invent. Math.}\ }\textbf {\bibinfo {volume} {165}},\ \bibinfo {pages}
  {115--151} (\bibinfo {year} {2006})}\BibitemShut {NoStop}%
\bibitem [{\citenamefont {Schulz}(1981)}]{Schulz-81}%
  \BibitemOpen
  \bibfield  {author} {\bibinfo {author} {\bibnamefont {Schulz}, \bibfnamefont
  {H.}},\ }\bibfield  {title} {\enquote {\bibinfo {title} {Pairing transition
  of a one-dimensional classical plasma},}\ }\href
  {http://stacks.iop.org/0305-4470/14/3277} {\bibfield  {journal} {\bibinfo
  {journal} {J. Phys. A}\ }\textbf {\bibinfo {volume} {14}},\ \bibinfo {pages}
  {3277--3300} (\bibinfo {year} {1981})}\BibitemShut {NoStop}%
\bibitem [{\citenamefont {Schwartz}(1966)}]{Schwartz}%
  \BibitemOpen
  \bibfield  {author} {\bibinfo {author} {\bibnamefont {Schwartz},
  \bibfnamefont {L.}},\ }\href@noop {} {\emph {\bibinfo {title} {Th\'{e}orie
  des distributions}}},\ Publications de l'Institut de Math\'{e}matique de
  l'Universit\'{e} de Strasbourg, IX-X\ (\bibinfo  {publisher} {Hermann,
  Paris},\ \bibinfo {year} {1966})\ pp.\ \bibinfo {pages} {xiii+420},\ \bibinfo
  {note} {nouvelle \'{e}dition, enti\`{e}rement corrig\'{e}e, refondue et
  augment\'{e}e}\BibitemShut {NoStop}%
\bibitem [{\citenamefont {{\v{S}}eba}(2007)}]{Seba-07}%
  \BibitemOpen
  \bibfield  {author} {\bibinfo {author} {\bibnamefont {{\v{S}}eba},
  \bibfnamefont {P.}},\ }\bibfield  {title} {\enquote {\bibinfo {title}
  {Parking in the city},}\ }\href {\doibase 10.12693/APhysPolA.112.681}
  {\bibfield  {journal} {\bibinfo  {journal} {Acta Phys. Pol. A}\ }\textbf
  {\bibinfo {volume} {112}},\ \bibinfo {pages} {681--690} (\bibinfo {year}
  {2007})}\BibitemShut {NoStop}%
\bibitem [{\citenamefont {{\v{S}}eba}(2009)}]{Seba-09}%
  \BibitemOpen
  \bibfield  {author} {\bibinfo {author} {\bibnamefont {{\v{S}}eba},
  \bibfnamefont {P.}},\ }\bibfield  {title} {\enquote {\bibinfo {title}
  {Parking and the visual perception of space},}\ }\href {\doibase
  10.1088/1742-5468/2009/10/l10002} {\bibfield  {journal} {\bibinfo  {journal}
  {J. Stat. Mech.: Theory Exp.}\ }\textbf {\bibinfo {volume} {2009}},\ \bibinfo
  {pages} {L10002} (\bibinfo {year} {2009})}\BibitemShut {NoStop}%
\bibitem [{\citenamefont {Seidl}\ and\ \citenamefont
  {Perdew}(1994)}]{SeiPer-94}%
  \BibitemOpen
  \bibfield  {author} {\bibinfo {author} {\bibnamefont {Seidl}, \bibfnamefont
  {M.}}\ and\ \bibinfo {author} {\bibnamefont {Perdew}, \bibfnamefont
  {J.~P.}},\ }\bibfield  {title} {\enquote {\bibinfo {title} {Size-dependent
  ionization energy of a metallic cluster: {R}esolution of the classical
  image-potential paradox},}\ }\href {\doibase 10.1103/PhysRevB.50.5744}
  {\bibfield  {journal} {\bibinfo  {journal} {Phys. Rev. B}\ }\textbf {\bibinfo
  {volume} {50}},\ \bibinfo {pages} {5744--5747} (\bibinfo {year}
  {1994})}\BibitemShut {NoStop}%
\bibitem [{\citenamefont {Seiringer}\ and\ \citenamefont
  {Yngvason}(2020)}]{SeiYng-20}%
  \BibitemOpen
  \bibfield  {author} {\bibinfo {author} {\bibnamefont {Seiringer},
  \bibfnamefont {R.}}\ and\ \bibinfo {author} {\bibnamefont {Yngvason},
  \bibfnamefont {J.}},\ }\bibfield  {title} {\enquote {\bibinfo {title}
  {Emergence of {H}aldane pseudo-potentials in systems with short-range
  interactions},}\ }\href {\doibase 10.1007/s10955-020-02586-0} {\bibfield
  {journal} {\bibinfo  {journal} {J. Stat. Phys.}\ }\textbf {\bibinfo {volume}
  {181}},\ \bibinfo {pages} {448--464} (\bibinfo {year} {2020})}\BibitemShut
  {NoStop}%
\bibitem [{\citenamefont {Senff}\ and\ \citenamefont
  {Richtering}(1999)}]{SenRic-99}%
  \BibitemOpen
  \bibfield  {author} {\bibinfo {author} {\bibnamefont {Senff}, \bibfnamefont
  {H.}}\ and\ \bibinfo {author} {\bibnamefont {Richtering}, \bibfnamefont
  {W.}},\ }\bibfield  {title} {\enquote {\bibinfo {title} {Temperature
  sensitive microgel suspensions: Colloidal phase behavior and rheology of soft
  spheres},}\ }\href {\doibase 10.1063/1.479430} {\bibfield  {journal}
  {\bibinfo  {journal} {J. Chem. Phys.}\ }\textbf {\bibinfo {volume} {111}},\
  \bibinfo {pages} {1705--1711} (\bibinfo {year} {1999})}\BibitemShut {NoStop}%
\bibitem [{\citenamefont {Serfaty}(2014)}]{Serfaty-14b}%
  \BibitemOpen
  \bibfield  {author} {\bibinfo {author} {\bibnamefont {Serfaty}, \bibfnamefont
  {S.}},\ }\bibfield  {title} {\enquote {\bibinfo {title} {{Ginzburg--Landau
  vortices, Coulomb gases, and Abrikosov lattices}},}\ }\href {\doibase
  https://doi.org/10.1016/j.crhy.2014.06.001} {\bibfield  {journal} {\bibinfo
  {journal} {C. R. Phys.}\ }\textbf {\bibinfo {volume} {15}},\ \bibinfo {pages}
  {539--546} (\bibinfo {year} {2014})}\BibitemShut {NoStop}%
\bibitem [{\citenamefont {Serfaty}(2019)}]{Serfaty-19}%
  \BibitemOpen
  \bibfield  {author} {\bibinfo {author} {\bibnamefont {Serfaty}, \bibfnamefont
  {S.}},\ }\bibfield  {title} {\enquote {\bibinfo {title} {Microscopic
  description of {L}og and {C}oulomb gases},}\ }in\ \href@noop {} {\emph
  {\bibinfo {booktitle} {Random matrices}}},\ \bibinfo {series} {IAS/Park City
  Math. Ser.}, Vol.~\bibinfo {volume} {26}\ (\bibinfo  {publisher} {Amer. Math.
  Soc., Providence, RI},\ \bibinfo {year} {2019})\ pp.\ \bibinfo {pages}
  {341--387}\BibitemShut {NoStop}%
\bibitem [{\citenamefont {Shapir}\ \emph {et~al.}(2019)\citenamefont {Shapir},
  \citenamefont {Hamo}, \citenamefont {Pecker}, \citenamefont {Moca},
  \citenamefont {Legeza}, \citenamefont {Zarand},\ and\ \citenamefont
  {Ilani}}]{Shapir-etal-19}%
  \BibitemOpen
  \bibfield  {author} {\bibinfo {author} {\bibnamefont {Shapir}, \bibfnamefont
  {I.}}, \bibinfo {author} {\bibnamefont {Hamo}, \bibfnamefont {A.}}, \bibinfo
  {author} {\bibnamefont {Pecker}, \bibfnamefont {S.}}, \bibinfo {author}
  {\bibnamefont {Moca}, \bibfnamefont {C.~P.}}, \bibinfo {author} {\bibnamefont
  {Legeza}, \bibfnamefont {{\"O}.}}, \bibinfo {author} {\bibnamefont {Zarand},
  \bibfnamefont {G.}}, \ and\ \bibinfo {author} {\bibnamefont {Ilani},
  \bibfnamefont {S.}},\ }\bibfield  {title} {\enquote {\bibinfo {title}
  {{Imaging the electronic Wigner crystal in one dimension}},}\ }\href
  {\doibase 10.1126/science.aat0905} {\bibfield  {journal} {\bibinfo  {journal}
  {Science}\ }\textbf {\bibinfo {volume} {364}},\ \bibinfo {pages} {870--875}
  (\bibinfo {year} {2019})}\BibitemShut {NoStop}%
\bibitem [{\citenamefont {Sholl}(1967)}]{Sholl-67}%
  \BibitemOpen
  \bibfield  {author} {\bibinfo {author} {\bibnamefont {Sholl}, \bibfnamefont
  {C.~A.}},\ }\bibfield  {title} {\enquote {\bibinfo {title} {The calculation
  of electrostatic energies of metals by plane-wise summation},}\ }\href
  {\doibase 10.1088/0370-1328/92/2/321} {\bibfield  {journal} {\bibinfo
  {journal} {Proc. Physical Soc.}\ }\textbf {\bibinfo {volume} {92}} (\bibinfo
  {year} {1967}),\ 10.1088/0370-1328/92/2/321}\BibitemShut {NoStop}%
\bibitem [{\citenamefont {Simmons}\ and\ \citenamefont
  {Garrod}(1973)}]{SimGar-73}%
  \BibitemOpen
  \bibfield  {author} {\bibinfo {author} {\bibnamefont {Simmons}, \bibfnamefont
  {C.~S.}}\ and\ \bibinfo {author} {\bibnamefont {Garrod}, \bibfnamefont
  {C.}},\ }\bibfield  {title} {\enquote {\bibinfo {title} {The density of a
  nonuniform system in the thermodynamic limit},}\ }\href {\doibase
  10.1063/1.1666440} {\bibfield  {journal} {\bibinfo  {journal} {J.
  Mathematical Phys.}\ }\textbf {\bibinfo {volume} {14}},\ \bibinfo {pages}
  {1075--1087} (\bibinfo {year} {1973})}\BibitemShut {NoStop}%
\bibitem [{\citenamefont {{Simons}}, \citenamefont {{Szafer}},\ and\
  \citenamefont {{Altshuler}}(1993)}]{SimSzaAlt-93}%
  \BibitemOpen
  \bibfield  {author} {\bibinfo {author} {\bibnamefont {{Simons}},
  \bibfnamefont {B.~D.}}, \bibinfo {author} {\bibnamefont {{Szafer}},
  \bibfnamefont {A.}}, \ and\ \bibinfo {author} {\bibnamefont {{Altshuler}},
  \bibfnamefont {B.~L.}},\ }\bibfield  {title} {\enquote {\bibinfo {title}
  {{Universality in quantum chaotic spectra}},}\ }\href@noop {} {\bibfield
  {journal} {\bibinfo  {journal} {JETP Lett.}\ }\textbf {\bibinfo {volume}
  {57}},\ \bibinfo {pages} {276} (\bibinfo {year} {1993})}\BibitemShut
  {NoStop}%
\bibitem [{\citenamefont {Smale}(1998)}]{Smale-98}%
  \BibitemOpen
  \bibfield  {author} {\bibinfo {author} {\bibnamefont {Smale}, \bibfnamefont
  {S.}},\ }\bibfield  {title} {\enquote {\bibinfo {title} {Mathematical
  problems for the next century},}\ }\href {\doibase 10.1007/BF03025291}
  {\bibfield  {journal} {\bibinfo  {journal} {Math. Intell.}\ }\textbf
  {\bibinfo {volume} {20}},\ \bibinfo {pages} {7--15} (\bibinfo {year}
  {1998})}\BibitemShut {NoStop}%
\bibitem [{\citenamefont {Smith}\ and\ \citenamefont
  {Penrose}(1975)}]{PenSmi-75}%
  \BibitemOpen
  \bibfield  {author} {\bibinfo {author} {\bibnamefont {Smith}, \bibfnamefont
  {E.~R.}}\ and\ \bibinfo {author} {\bibnamefont {Penrose}, \bibfnamefont
  {O.}},\ }\bibfield  {title} {\enquote {\bibinfo {title} {An upper bound on
  the free energy for classical systems with {C}oulomb interactions in a
  varying external field},}\ }\href
  {http://projecteuclid.org/euclid.cmp/1103860525} {\bibfield  {journal}
  {\bibinfo  {journal} {Comm. Math. Phys.}\ }\textbf {\bibinfo {volume} {40}},\
  \bibinfo {pages} {197--213} (\bibinfo {year} {1975})}\BibitemShut {NoStop}%
\bibitem [{\citenamefont {{Smole{\'n}ski}}\ \emph {et~al.}(2021)\citenamefont
  {{Smole{\'n}ski}}, \citenamefont {{Dolgirev}}, \citenamefont {{Kuhlenkamp}},
  \citenamefont {{Popert}}, \citenamefont {{Shimazaki}}, \citenamefont
  {{Back}}, \citenamefont {{Lu}}, \citenamefont {{Kroner}}, \citenamefont
  {{Watanabe}}, \citenamefont {{Taniguchi}}, \citenamefont {{Esterlis}},
  \citenamefont {{Demler}},\ and\ \citenamefont
  {{Imamo{\v{g}}lu}}}]{Smolenski_etal-21}%
  \BibitemOpen
  \bibfield  {author} {\bibinfo {author} {\bibnamefont {{Smole{\'n}ski}},
  \bibfnamefont {T.}}, \bibinfo {author} {\bibnamefont {{Dolgirev}},
  \bibfnamefont {P.~E.}}, \bibinfo {author} {\bibnamefont {{Kuhlenkamp}},
  \bibfnamefont {C.}}, \bibinfo {author} {\bibnamefont {{Popert}},
  \bibfnamefont {A.}}, \bibinfo {author} {\bibnamefont {{Shimazaki}},
  \bibfnamefont {Y.}}, \bibinfo {author} {\bibnamefont {{Back}}, \bibfnamefont
  {P.}}, \bibinfo {author} {\bibnamefont {{Lu}}, \bibfnamefont {X.}}, \bibinfo
  {author} {\bibnamefont {{Kroner}}, \bibfnamefont {M.}}, \bibinfo {author}
  {\bibnamefont {{Watanabe}}, \bibfnamefont {K.}}, \bibinfo {author}
  {\bibnamefont {{Taniguchi}}, \bibfnamefont {T.}}, \bibinfo {author}
  {\bibnamefont {{Esterlis}}, \bibfnamefont {I.}}, \bibinfo {author}
  {\bibnamefont {{Demler}}, \bibfnamefont {E.}}, \ and\ \bibinfo {author}
  {\bibnamefont {{Imamo{\v{g}}lu}}, \bibfnamefont {A.}},\ }\bibfield  {title}
  {\enquote {\bibinfo {title} {{Signatures of Wigner crystal of electrons in a
  monolayer semiconductor}},}\ }\href {\doibase 10.1038/s41586-021-03590-4}
  {\bibfield  {journal} {\bibinfo  {journal} {Nature}\ }\textbf {\bibinfo
  {volume} {595}},\ \bibinfo {pages} {53--57} (\bibinfo {year}
  {2021})}\BibitemShut {NoStop}%
\bibitem [{\citenamefont {So}\ \emph {et~al.}(1995)\citenamefont {So},
  \citenamefont {Anlage}, \citenamefont {Ott},\ and\ \citenamefont
  {Oerter}}]{SoAnlOttOer-95}%
  \BibitemOpen
  \bibfield  {author} {\bibinfo {author} {\bibnamefont {So}, \bibfnamefont
  {P.}}, \bibinfo {author} {\bibnamefont {Anlage}, \bibfnamefont {S.~M.}},
  \bibinfo {author} {\bibnamefont {Ott}, \bibfnamefont {E.}}, \ and\ \bibinfo
  {author} {\bibnamefont {Oerter}, \bibfnamefont {R.~N.}},\ }\bibfield  {title}
  {\enquote {\bibinfo {title} {Wave chaos experiments with and without time
  reversal symmetry: Gue and goe statistics},}\ }\href {\doibase
  10.1103/PhysRevLett.74.2662} {\bibfield  {journal} {\bibinfo  {journal}
  {Phys. Rev. Lett.}\ }\textbf {\bibinfo {volume} {74}},\ \bibinfo {pages}
  {2662--2665} (\bibinfo {year} {1995})}\BibitemShut {NoStop}%
\bibitem [{\citenamefont {Soshnikov}(2000)}]{Soshnikov-00}%
  \BibitemOpen
  \bibfield  {author} {\bibinfo {author} {\bibnamefont {Soshnikov},
  \bibfnamefont {A.}},\ }\bibfield  {title} {\enquote {\bibinfo {title}
  {Determinantal random point fields},}\ }\href {\doibase
  10.1070/rm2000v055n05ABEH000321} {\bibfield  {journal} {\bibinfo  {journal}
  {Uspekhi Mat. Nauk}\ }\textbf {\bibinfo {volume} {55}},\ \bibinfo {pages}
  {107--160} (\bibinfo {year} {2000})}\BibitemShut {NoStop}%
\bibitem [{\citenamefont {Stephenson}(2003)}]{Stephenson-PhD}%
  \BibitemOpen
  \bibfield  {author} {\bibinfo {author} {\bibnamefont {Stephenson},
  \bibfnamefont {A.}},\ }\emph {\bibinfo {title} {Studies of the {O}ne
  {C}omponent {P}lasma}},\ \href@noop {} {Ph.D. thesis} (\bibinfo {year}
  {2003})\BibitemShut {NoStop}%
\bibitem [{\citenamefont {Stormer}, \citenamefont {Tsui},\ and\ \citenamefont
  {Gossard}(1999)}]{StoTsuGos-99}%
  \BibitemOpen
  \bibfield  {author} {\bibinfo {author} {\bibnamefont {Stormer}, \bibfnamefont
  {H.~L.}}, \bibinfo {author} {\bibnamefont {Tsui}, \bibfnamefont {D.~C.}}, \
  and\ \bibinfo {author} {\bibnamefont {Gossard}, \bibfnamefont {A.~C.}},\
  }\bibfield  {title} {\enquote {\bibinfo {title} {The fractional quantum
  {H}all effect},}\ }\href {\doibase 10.1103/RevModPhys.71.S298} {\bibfield
  {journal} {\bibinfo  {journal} {Rev. Mod. Phys.}\ }\textbf {\bibinfo {volume}
  {71}},\ \bibinfo {pages} {S298--S305} (\bibinfo {year} {1999})}\BibitemShut
  {NoStop}%
\bibitem [{\citenamefont {Strandburg}(1988)}]{Strandburg-88}%
  \BibitemOpen
  \bibfield  {author} {\bibinfo {author} {\bibnamefont {Strandburg},
  \bibfnamefont {K.~J.}},\ }\bibfield  {title} {\enquote {\bibinfo {title}
  {Two-dimensional melting},}\ }\href {\doibase 10.1103/RevModPhys.60.161}
  {\bibfield  {journal} {\bibinfo  {journal} {Rev. Mod. Phys.}\ }\textbf
  {\bibinfo {volume} {60}},\ \bibinfo {pages} {161--207} (\bibinfo {year}
  {1988})}\BibitemShut {NoStop}%
\bibitem [{\citenamefont {{Sun}}, \citenamefont {{Perdew}},\ and\ \citenamefont
  {{Ruzsinszky}}(2015)}]{SunPerRuz-15}%
  \BibitemOpen
  \bibfield  {author} {\bibinfo {author} {\bibnamefont {{Sun}}, \bibfnamefont
  {J.}}, \bibinfo {author} {\bibnamefont {{Perdew}}, \bibfnamefont {J.~P.}}, \
  and\ \bibinfo {author} {\bibnamefont {{Ruzsinszky}}, \bibfnamefont {A.}},\
  }\bibfield  {title} {\enquote {\bibinfo {title} {Semilocal density functional
  obeying a strongly tightened bound for exchange},}\ }\href {\doibase
  10.1073/pnas.1423145112} {\bibfield  {journal} {\bibinfo  {journal} {Proc.
  Nat. Acad. Sci. U.S.A.}\ }\textbf {\bibinfo {volume} {112}},\ \bibinfo
  {pages} {685--689} (\bibinfo {year} {2015})}\BibitemShut {NoStop}%
\bibitem [{\citenamefont {Sun}\ \emph {et~al.}(2016)\citenamefont {Sun},
  \citenamefont {Remsing}, \citenamefont {Zhang}, \citenamefont {Sun},
  \citenamefont {Ruzsinszky}, \citenamefont {Peng}, \citenamefont {Yang},
  \citenamefont {Paul}, \citenamefont {Waghmare}, \citenamefont {Wu},
  \citenamefont {Klein},\ and\ \citenamefont {Perdew}}]{Perdew_etal-16}%
  \BibitemOpen
  \bibfield  {author} {\bibinfo {author} {\bibnamefont {Sun}, \bibfnamefont
  {J.}}, \bibinfo {author} {\bibnamefont {Remsing}, \bibfnamefont {R.~C.}},
  \bibinfo {author} {\bibnamefont {Zhang}, \bibfnamefont {Y.}}, \bibinfo
  {author} {\bibnamefont {Sun}, \bibfnamefont {Z.}}, \bibinfo {author}
  {\bibnamefont {Ruzsinszky}, \bibfnamefont {A.}}, \bibinfo {author}
  {\bibnamefont {Peng}, \bibfnamefont {H.}}, \bibinfo {author} {\bibnamefont
  {Yang}, \bibfnamefont {Z.}}, \bibinfo {author} {\bibnamefont {Paul},
  \bibfnamefont {A.}}, \bibinfo {author} {\bibnamefont {Waghmare},
  \bibfnamefont {U.}}, \bibinfo {author} {\bibnamefont {Wu}, \bibfnamefont
  {X.}}, \bibinfo {author} {\bibnamefont {Klein}, \bibfnamefont {M.~L.}}, \
  and\ \bibinfo {author} {\bibnamefont {Perdew}, \bibfnamefont {J.~P.}},\
  }\bibfield  {title} {\enquote {\bibinfo {title} {Accurate first-principles
  structures and energies of diversely bonded systems from an efficient density
  functional},}\ }\href {\doibase 10.1038/nchem.2535} {\bibfield  {journal}
  {\bibinfo  {journal} {Nature Chemistry}\ }\textbf {\bibinfo {volume} {8}},\
  \bibinfo {pages} {831--836} (\bibinfo {year} {2016})}\BibitemShut {NoStop}%
\bibitem [{\citenamefont {{Sutherland}}(1971{\natexlab{a}})}]{Sutherland-71}%
  \BibitemOpen
  \bibfield  {author} {\bibinfo {author} {\bibnamefont {{Sutherland}},
  \bibfnamefont {B.}},\ }\bibfield  {title} {\enquote {\bibinfo {title}
  {Quantum many-body problem in one dimension: Ground state},}\ }\href
  {\doibase 10.1063/1.1665584} {\bibfield  {journal} {\bibinfo  {journal} {J.
  Mathematical Phys.}\ }\textbf {\bibinfo {volume} {12}},\ \bibinfo {pages}
  {246--250} (\bibinfo {year} {1971}{\natexlab{a}})}\BibitemShut {NoStop}%
\bibitem [{\citenamefont {{Sutherland}}(1971{\natexlab{b}})}]{Sutherland-71b}%
  \BibitemOpen
  \bibfield  {author} {\bibinfo {author} {\bibnamefont {{Sutherland}},
  \bibfnamefont {B.}},\ }\bibfield  {title} {\enquote {\bibinfo {title}
  {Quantum many-body problem in one dimension: Thermodynamics},}\ }\href
  {\doibase 10.1063/1.1665585} {\bibfield  {journal} {\bibinfo  {journal} {J.
  Mathematical Phys.}\ }\textbf {\bibinfo {volume} {12}},\ \bibinfo {pages}
  {251--256} (\bibinfo {year} {1971}{\natexlab{b}})}\BibitemShut {NoStop}%
\bibitem [{\citenamefont {Sutherland}(2004)}]{Sutherland-04}%
  \BibitemOpen
  \bibfield  {author} {\bibinfo {author} {\bibnamefont {Sutherland},
  \bibfnamefont {B.}},\ }\href {\doibase 10.1142/5552} {\emph {\bibinfo {title}
  {Beautiful Models: 70 Years of Exactly Solved Quantum Many-Body Problems}}}\
  (\bibinfo  {publisher} {World Scientific Press},\ \bibinfo {year}
  {2004})\BibitemShut {NoStop}%
\bibitem [{\citenamefont {Tao}(2012)}]{Tao-12}%
  \BibitemOpen
  \bibfield  {author} {\bibinfo {author} {\bibnamefont {Tao}, \bibfnamefont
  {T.}},\ }\href
  {https://terrytao.wordpress.com/2012/11/11/a-direct-proof-of-the-stationarity-of-the-dyson-sine-process-under-dyson-brownian-motion/}
  {\enquote {\bibinfo {title} {A direct proof of the stationarity of the
  {D}yson sine process under {D}yson {B}rownian motion},}\ }\bibinfo
  {howpublished} {blog article} (\bibinfo {year} {2012})\BibitemShut {NoStop}%
\bibitem [{\citenamefont {Tao}(2013)}]{Tao-13}%
  \BibitemOpen
  \bibfield  {author} {\bibinfo {author} {\bibnamefont {Tao}, \bibfnamefont
  {T.}},\ }\href {\doibase 10.1090/mbk/081} {\emph {\bibinfo {title}
  {Compactness and contradiction}}}\ (\bibinfo  {publisher} {American
  Mathematical Society, Providence, RI},\ \bibinfo {year} {2013})\ pp.\
  \bibinfo {pages} {xii+256}\BibitemShut {NoStop}%
\bibitem [{\citenamefont {Tao}\ and\ \citenamefont {Vu}(2011)}]{TaoVu-11}%
  \BibitemOpen
  \bibfield  {author} {\bibinfo {author} {\bibnamefont {Tao}, \bibfnamefont
  {T.}}\ and\ \bibinfo {author} {\bibnamefont {Vu}, \bibfnamefont {V.}},\
  }\bibfield  {title} {\enquote {\bibinfo {title} {Random matrices:
  universality of local eigenvalue statistics},}\ }\href {\doibase
  10.1007/s11511-011-0061-3} {\bibfield  {journal} {\bibinfo  {journal} {Acta
  Math.}\ }\textbf {\bibinfo {volume} {206}},\ \bibinfo {pages} {127--204}
  (\bibinfo {year} {2011})}\BibitemShut {NoStop}%
\bibitem [{\citenamefont {T\'{e}llez}\ and\ \citenamefont
  {Forrester}(1999)}]{TelFor-99}%
  \BibitemOpen
  \bibfield  {author} {\bibinfo {author} {\bibnamefont {T\'{e}llez},
  \bibfnamefont {G.}}\ and\ \bibinfo {author} {\bibnamefont {Forrester},
  \bibfnamefont {P.~J.}},\ }\bibfield  {title} {\enquote {\bibinfo {title}
  {Exact finite-size study of the 2{D} {OCP} at {$\Gamma=4$} and
  {$\Gamma=6$}},}\ }\href {\doibase 10.1023/A:1004654923170} {\bibfield
  {journal} {\bibinfo  {journal} {J. Statist. Phys.}\ }\textbf {\bibinfo
  {volume} {97}},\ \bibinfo {pages} {489--521} (\bibinfo {year}
  {1999})}\BibitemShut {NoStop}%
\bibitem [{\citenamefont {T\'{e}llez}\ and\ \citenamefont
  {Forrester}(2012)}]{TelFor-12}%
  \BibitemOpen
  \bibfield  {author} {\bibinfo {author} {\bibnamefont {T\'{e}llez},
  \bibfnamefont {G.}}\ and\ \bibinfo {author} {\bibnamefont {Forrester},
  \bibfnamefont {P.~J.}},\ }\bibfield  {title} {\enquote {\bibinfo {title}
  {Expanded {V}andermonde powers and sum rules for the two-dimensional
  one-component plasma},}\ }\href {\doibase 10.1007/s10955-012-0551-5}
  {\bibfield  {journal} {\bibinfo  {journal} {J. Stat. Phys.}\ }\textbf
  {\bibinfo {volume} {148}},\ \bibinfo {pages} {824--855} (\bibinfo {year}
  {2012})}\BibitemShut {NoStop}%
\bibitem [{\citenamefont {Thomas}\ \emph {et~al.}(1994)\citenamefont {Thomas},
  \citenamefont {Morfill}, \citenamefont {Demmel}, \citenamefont {Goree},
  \citenamefont {Feuerbacher},\ and\ \citenamefont
  {M{\"o}hlmann}}]{Thoetal-94}%
  \BibitemOpen
  \bibfield  {author} {\bibinfo {author} {\bibnamefont {Thomas}, \bibfnamefont
  {H.}}, \bibinfo {author} {\bibnamefont {Morfill}, \bibfnamefont {G.~E.}},
  \bibinfo {author} {\bibnamefont {Demmel}, \bibfnamefont {V.}}, \bibinfo
  {author} {\bibnamefont {Goree}, \bibfnamefont {J.}}, \bibinfo {author}
  {\bibnamefont {Feuerbacher}, \bibfnamefont {B.}}, \ and\ \bibinfo {author}
  {\bibnamefont {M{\"o}hlmann}, \bibfnamefont {D.}},\ }\bibfield  {title}
  {\enquote {\bibinfo {title} {Plasma crystal: Coulomb crystallization in a
  dusty plasma},}\ }\href {\doibase 10.1103/PhysRevLett.73.652} {\bibfield
  {journal} {\bibinfo  {journal} {Phys. Rev. Lett.}\ }\textbf {\bibinfo
  {volume} {73}},\ \bibinfo {pages} {652--655} (\bibinfo {year}
  {1994})}\BibitemShut {NoStop}%
\bibitem [{\citenamefont {Thompson}(2015)}]{Thompson-15}%
  \BibitemOpen
  \bibfield  {author} {\bibinfo {author} {\bibnamefont {Thompson},
  \bibfnamefont {R.~C.}},\ }\bibfield  {title} {\enquote {\bibinfo {title} {Ion
  coulomb crystals},}\ }\href {\doibase 10.1080/00107514.2014.989715}
  {\bibfield  {journal} {\bibinfo  {journal} {Contemporary Physics}\ }\textbf
  {\bibinfo {volume} {56}},\ \bibinfo {pages} {63--79} (\bibinfo {year}
  {2015})}\BibitemShut {NoStop}%
\bibitem [{\citenamefont {Thomson}(1904)}]{Thomson-04}%
  \BibitemOpen
  \bibfield  {author} {\bibinfo {author} {\bibnamefont {Thomson}, \bibfnamefont
  {J.}},\ }\bibfield  {title} {\enquote {\bibinfo {title} {Xxiv. {O}n the
  structure of the atom: an investigation of the stability and periods of
  oscillation of a number of corpuscles arranged at equal intervals around the
  circumference of a circle; with application of the results to the theory of
  atomic structure},}\ }\href {\doibase 10.1080/14786440409463107} {\bibfield
  {journal} {\bibinfo  {journal} {Philos. Mag.}\ }\textbf {\bibinfo {volume}
  {7}},\ \bibinfo {pages} {237--265} (\bibinfo {year} {1904})}\BibitemShut
  {NoStop}%
\bibitem [{\citenamefont {Torquato}(2018)}]{Torquato-18}%
  \BibitemOpen
  \bibfield  {author} {\bibinfo {author} {\bibnamefont {Torquato},
  \bibfnamefont {S.}},\ }\bibfield  {title} {\enquote {\bibinfo {title}
  {Hyperuniform states of matter},}\ }\href {\doibase
  10.1016/j.physrep.2018.03.001} {\bibfield  {journal} {\bibinfo  {journal}
  {Phys. Rep.}\ }\textbf {\bibinfo {volume} {745}},\ \bibinfo {pages} {1--95}
  (\bibinfo {year} {2018})}\BibitemShut {NoStop}%
\bibitem [{\citenamefont {Torquato}\ and\ \citenamefont
  {Stillinger}(2003)}]{TorSti-03}%
  \BibitemOpen
  \bibfield  {author} {\bibinfo {author} {\bibnamefont {Torquato},
  \bibfnamefont {S.}}\ and\ \bibinfo {author} {\bibnamefont {Stillinger},
  \bibfnamefont {F.~H.}},\ }\bibfield  {title} {\enquote {\bibinfo {title}
  {Local density fluctuations, hyperuniformity, and order metrics},}\ }\href
  {\doibase 10.1103/PhysRevE.68.041113} {\bibfield  {journal} {\bibinfo
  {journal} {Phys. Rev. E}\ }\textbf {\bibinfo {volume} {68}},\ \bibinfo
  {pages} {041113} (\bibinfo {year} {2003})}\BibitemShut {NoStop}%
\bibitem [{\citenamefont {Touchette}(2009)}]{Touchette-09}%
  \BibitemOpen
  \bibfield  {author} {\bibinfo {author} {\bibnamefont {Touchette},
  \bibfnamefont {H.}},\ }\bibfield  {title} {\enquote {\bibinfo {title} {The
  large deviation approach to statistical mechanics},}\ }\href {\doibase
  10.1016/j.physrep.2009.05.002} {\bibfield  {journal} {\bibinfo  {journal}
  {Phys. Rep.}\ }\textbf {\bibinfo {volume} {478}},\ \bibinfo {pages} {1--69}
  (\bibinfo {year} {2009})}\BibitemShut {NoStop}%
\bibitem [{\citenamefont {Travesset}(2014)}]{Traversset-14}%
  \BibitemOpen
  \bibfield  {author} {\bibinfo {author} {\bibnamefont {Travesset},
  \bibfnamefont {A.}},\ }\bibfield  {title} {\enquote {\bibinfo {title} {Phase
  diagram of power law and {L}ennard-{J}ones systems: {C}rystal phases},}\
  }\href {\doibase 10.1063/1.4898371} {\bibfield  {journal} {\bibinfo
  {journal} {J. Chem. Phys.}\ }\textbf {\bibinfo {volume} {141}},\ \bibinfo
  {pages} {164501} (\bibinfo {year} {2014})}\BibitemShut {NoStop}%
\bibitem [{\citenamefont {Uhlenbeck}(1968)}]{Uhlenbeck-68}%
  \BibitemOpen
  \bibfield  {author} {\bibinfo {author} {\bibnamefont {Uhlenbeck},
  \bibfnamefont {G.}},\ }\bibfield  {title} {\enquote {\bibinfo {title} {An
  outline of statistical mechanics},}\ }in\ \href@noop {} {\emph {\bibinfo
  {booktitle} {Fundamental problems in statistical mechanics {II}}}},\ \bibinfo
  {editor} {edited by\ \bibinfo {editor} {\bibfnamefont {E.}~\bibnamefont
  {Cohen}}}\ (\bibinfo  {publisher} {North-Holland Publishing Company},\
  \bibinfo {address} {Amsterdam},\ \bibinfo {year} {1968})\ pp.\ \bibinfo
  {pages} {1--29}\BibitemShut {NoStop}%
\bibitem [{\citenamefont {Valk\'{o}}\ and\ \citenamefont
  {Vir\'{a}g}(2009)}]{ValVir-09}%
  \BibitemOpen
  \bibfield  {author} {\bibinfo {author} {\bibnamefont {Valk\'{o}},
  \bibfnamefont {B.}}\ and\ \bibinfo {author} {\bibnamefont {Vir\'{a}g},
  \bibfnamefont {B.}},\ }\bibfield  {title} {\enquote {\bibinfo {title}
  {Continuum limits of random matrices and the {B}rownian carousel},}\ }\href
  {\doibase 10.1007/s00222-009-0180-z} {\bibfield  {journal} {\bibinfo
  {journal} {Invent. Math.}\ }\textbf {\bibinfo {volume} {177}},\ \bibinfo
  {pages} {463--508} (\bibinfo {year} {2009})}\BibitemShut {NoStop}%
\bibitem [{\citenamefont {Valk\'{o}}\ and\ \citenamefont
  {Vir\'{a}g}(2017)}]{ValVir-17}%
  \BibitemOpen
  \bibfield  {author} {\bibinfo {author} {\bibnamefont {Valk\'{o}},
  \bibfnamefont {B.}}\ and\ \bibinfo {author} {\bibnamefont {Vir\'{a}g},
  \bibfnamefont {B.}},\ }\bibfield  {title} {\enquote {\bibinfo {title} {The
  {$\rm Sine_\beta$} operator},}\ }\href {\doibase 10.1007/s00222-016-0709-x}
  {\bibfield  {journal} {\bibinfo  {journal} {Invent. Math.}\ }\textbf
  {\bibinfo {volume} {209}},\ \bibinfo {pages} {275--327} (\bibinfo {year}
  {2017})}\BibitemShut {NoStop}%
\bibitem [{\citenamefont {{Van Horn}}(2015)}]{VanHorn-15}%
  \BibitemOpen
  \bibfield  {author} {\bibinfo {author} {\bibnamefont {{Van Horn}},
  \bibfnamefont {H.~M.}},\ }\href {\doibase 10.1007/978-3-319-09369-7} {\emph
  {\bibinfo {title} {Unlocking the Secrets of White Dwarf Stars}}},\
  Astronomers' Universe\ (\bibinfo  {publisher} {Springer International
  Publishing Switzerland},\ \bibinfo {year} {2015})\BibitemShut {NoStop}%
\bibitem [{\citenamefont {Vaninsky}(1997)}]{Vaninsky-97}%
  \BibitemOpen
  \bibfield  {author} {\bibinfo {author} {\bibnamefont {Vaninsky},
  \bibfnamefont {K.~L.}},\ }\bibfield  {title} {\enquote {\bibinfo {title}
  {Gibbs' states for {M}oser-{C}alogero potentials},}\ }\href {\doibase
  10.1142/S0217979297000277} {\bibfield  {journal} {\bibinfo  {journal}
  {International Journal of Modern Physics B}\ }\textbf {\bibinfo {volume}
  {11}},\ \bibinfo {pages} {203--211} (\bibinfo {year} {1997})}\BibitemShut
  {NoStop}%
\bibitem [{\citenamefont {Vaninsky}(2000)}]{Vaninsky-00}%
  \BibitemOpen
  \bibfield  {author} {\bibinfo {author} {\bibnamefont {Vaninsky},
  \bibfnamefont {K.~L.}},\ }\bibfield  {title} {\enquote {\bibinfo {title}
  {Thermodynamics of {M}oser-{C}alogero potentials and {S}eiberg-{W}itten exact
  solution},}\ }in\ \href@noop {} {\emph {\bibinfo {booktitle}
  {Calogero-{M}oser-{S}utherland models ({M}ontr\'{e}al, {QC}, 1997)}}},\
  \bibinfo {series and number} {CRM Ser. Math. Phys.}\ (\bibinfo  {publisher}
  {Springer, New York},\ \bibinfo {year} {2000})\ pp.\ \bibinfo {pages}
  {497--506}\BibitemShut {NoStop}%
\bibitem [{\citenamefont {Ventevogel}(1978)}]{Ventevogel-78}%
  \BibitemOpen
  \bibfield  {author} {\bibinfo {author} {\bibnamefont {Ventevogel},
  \bibfnamefont {W.}},\ }\bibfield  {title} {\enquote {\bibinfo {title} {On the
  configuration of a one-dimensional system of interacting particles with
  minimum potential energy per particle},}\ }\href {\doibase
  10.1016/0378-4371(78)90136-X} {\bibfield  {journal} {\bibinfo  {journal}
  {Phys. A}\ }\textbf {\bibinfo {volume} {92}},\ \bibinfo {pages} {343--361}
  (\bibinfo {year} {1978})}\BibitemShut {NoStop}%
\bibitem [{\citenamefont {Ventevogel}\ and\ \citenamefont
  {Nijboer}(1979)}]{VenNij-79-2}%
  \BibitemOpen
  \bibfield  {author} {\bibinfo {author} {\bibnamefont {Ventevogel},
  \bibfnamefont {W.~J.}}\ and\ \bibinfo {author} {\bibnamefont {Nijboer},
  \bibfnamefont {B.~R.~A.}},\ }\bibfield  {title} {\enquote {\bibinfo {title}
  {On the configuration of systems of interacting particles with minimum
  potential energy per particle},}\ }\href {\doibase
  10.1016/0378-4371(79)90178-X} {\bibfield  {journal} {\bibinfo  {journal}
  {Phys. A}\ }\textbf {\bibinfo {volume} {98}},\ \bibinfo {pages} {274--288}
  (\bibinfo {year} {1979})}\BibitemShut {NoStop}%
\bibitem [{\citenamefont {Viazovska}(2021)}]{Viazovska-21}%
  \BibitemOpen
  \bibfield  {author} {\bibinfo {author} {\bibnamefont {Viazovska},
  \bibfnamefont {M.}},\ }\bibfield  {title} {\enquote {\bibinfo {title} {Almost
  impossible ${E}_{8}$ and {L}eech lattices},}\ }\href {\doibase
  10.4171/MAG-47} {\bibfield  {journal} {\bibinfo  {journal} {EMS Magazine}\
  }\textbf {\bibinfo {volume} {121}} (\bibinfo {year} {2021}),\
  10.4171/MAG-47}\BibitemShut {NoStop}%
\bibitem [{\citenamefont {\v{S}amaj}(2004)}]{Samaj-04}%
  \BibitemOpen
  \bibfield  {author} {\bibinfo {author} {\bibnamefont {\v{S}amaj},
  \bibfnamefont {L.}},\ }\bibfield  {title} {\enquote {\bibinfo {title} {Is the
  two-dimensional one-component plasma exactly solvable?}}\ }\href {\doibase
  10.1023/B:JOSS.0000044056.19438.2c} {\bibfield  {journal} {\bibinfo
  {journal} {J. Statist. Phys.}\ }\textbf {\bibinfo {volume} {117}},\ \bibinfo
  {pages} {131--158} (\bibinfo {year} {2004})}\BibitemShut {NoStop}%
\bibitem [{\citenamefont {Wehrl}(1978)}]{Wehrl-78}%
  \BibitemOpen
  \bibfield  {author} {\bibinfo {author} {\bibnamefont {Wehrl}, \bibfnamefont
  {A.}},\ }\bibfield  {title} {\enquote {\bibinfo {title} {General properties
  of entropy},}\ }\href {\doibase 10.1103/RevModPhys.50.221} {\bibfield
  {journal} {\bibinfo  {journal} {Rev. Modern Phys.}\ }\textbf {\bibinfo
  {volume} {50}},\ \bibinfo {pages} {221--260} (\bibinfo {year}
  {1978})}\BibitemShut {NoStop}%
\bibitem [{\citenamefont {de~Wette}(1980)}]{DeWette-80}%
  \BibitemOpen
  \bibfield  {author} {\bibinfo {author} {\bibnamefont {de~Wette},
  \bibfnamefont {F.~W.}},\ }\bibfield  {title} {\enquote {\bibinfo {title}
  {Comments on the electrostatic energy of a {W}igner solid},}\ }\href
  {\doibase 10.1103/PhysRevB.21.3751} {\bibfield  {journal} {\bibinfo
  {journal} {Phys. Rev. B}\ }\textbf {\bibinfo {volume} {21}},\ \bibinfo
  {pages} {3751--3753} (\bibinfo {year} {1980})}\BibitemShut {NoStop}%
\bibitem [{\citenamefont {Wigner}(1934)}]{Wigner-34}%
  \BibitemOpen
  \bibfield  {author} {\bibinfo {author} {\bibnamefont {Wigner}, \bibfnamefont
  {E.~P.}},\ }\bibfield  {title} {\enquote {\bibinfo {title} {On the
  interaction of electrons in metals},}\ }\href {\doibase
  10.1103/PhysRev.46.1002} {\bibfield  {journal} {\bibinfo  {journal} {Phys.
  Rev.}\ }\textbf {\bibinfo {volume} {46}},\ \bibinfo {pages} {1002--1011}
  (\bibinfo {year} {1934})}\BibitemShut {NoStop}%
\bibitem [{\citenamefont {Wigner}(1938)}]{Wigner-38}%
  \BibitemOpen
  \bibfield  {author} {\bibinfo {author} {\bibnamefont {Wigner}, \bibfnamefont
  {E.~P.}},\ }\bibfield  {title} {\enquote {\bibinfo {title} {Effects of the
  electron interaction on the energy levels of electrons in metals},}\ }\href
  {\doibase 10.1039/TF9383400678} {\bibfield  {journal} {\bibinfo  {journal}
  {Trans. Faraday Soc.}\ }\textbf {\bibinfo {volume} {34}},\ \bibinfo {pages}
  {678--685} (\bibinfo {year} {1938})}\BibitemShut {NoStop}%
\bibitem [{\citenamefont {Wigner}(1955)}]{Wigner-55}%
  \BibitemOpen
  \bibfield  {author} {\bibinfo {author} {\bibnamefont {Wigner}, \bibfnamefont
  {E.~P.}},\ }\bibfield  {title} {\enquote {\bibinfo {title} {Characteristic
  vectors of bordered matrices with infinite dimensions},}\ }\href
  {http://www.jstor.org/stable/1970079} {\bibfield  {journal} {\bibinfo
  {journal} {Ann. of Math.}\ }\bibinfo {series} {Second Series},\ \textbf
  {\bibinfo {volume} {62}},\ \bibinfo {pages} {pp. 548--564} (\bibinfo {year}
  {1955})}\BibitemShut {NoStop}%
\bibitem [{\citenamefont {Wigner}(1958)}]{Wigner-58}%
  \BibitemOpen
  \bibfield  {author} {\bibinfo {author} {\bibnamefont {Wigner}, \bibfnamefont
  {E.~P.}},\ }\bibfield  {title} {\enquote {\bibinfo {title} {On the
  distribution of the roots of certain symmetric matrices},}\ }\href {\doibase
  10.2307/1970008} {\bibfield  {journal} {\bibinfo  {journal} {Ann. of Math.
  (2)}\ }\textbf {\bibinfo {volume} {67}},\ \bibinfo {pages} {325--327}
  (\bibinfo {year} {1958})}\BibitemShut {NoStop}%
\bibitem [{\citenamefont {Wigner}(1967)}]{Wigner-67}%
  \BibitemOpen
  \bibfield  {author} {\bibinfo {author} {\bibnamefont {Wigner}, \bibfnamefont
  {E.~P.}},\ }\bibfield  {title} {\enquote {\bibinfo {title} {Random matrices
  in physics},}\ }\href {\doibase 10.1137/1009001} {\bibfield  {journal}
  {\bibinfo  {journal} {SIAM Review}\ }\textbf {\bibinfo {volume} {9}},\
  \bibinfo {pages} {1--23} (\bibinfo {year} {1967})}\BibitemShut {NoStop}%
\bibitem [{\citenamefont {Wilson}(1962)}]{Wilson-62}%
  \BibitemOpen
  \bibfield  {author} {\bibinfo {author} {\bibnamefont {Wilson}, \bibfnamefont
  {K.~G.}},\ }\bibfield  {title} {\enquote {\bibinfo {title} {Proof of a
  conjecture by {D}yson},}\ }\href {\doibase 10.1063/1.1724291} {\bibfield
  {journal} {\bibinfo  {journal} {J. Mathematical Phys.}\ }\textbf {\bibinfo
  {volume} {3}},\ \bibinfo {pages} {1040--1043} (\bibinfo {year}
  {1962})}\BibitemShut {NoStop}%
\bibitem [{\citenamefont {Wintgen}\ and\ \citenamefont
  {Friedrich}(1986)}]{WinFrie-86}%
  \BibitemOpen
  \bibfield  {author} {\bibinfo {author} {\bibnamefont {Wintgen}, \bibfnamefont
  {D.}}\ and\ \bibinfo {author} {\bibnamefont {Friedrich}, \bibfnamefont
  {H.}},\ }\bibfield  {title} {\enquote {\bibinfo {title} {Regularity and
  irregularity in spectra of the magnetized hydrogen atom},}\ }\href {\doibase
  10.1103/PhysRevLett.57.571} {\bibfield  {journal} {\bibinfo  {journal} {Phys.
  Rev. Lett.}\ }\textbf {\bibinfo {volume} {57}},\ \bibinfo {pages} {571--574}
  (\bibinfo {year} {1986})}\BibitemShut {NoStop}%
\bibitem [{\citenamefont {Witten}\ and\ \citenamefont
  {Pincus}(1986)}]{WitPin-86}%
  \BibitemOpen
  \bibfield  {author} {\bibinfo {author} {\bibnamefont {Witten}, \bibfnamefont
  {T.~A.}}\ and\ \bibinfo {author} {\bibnamefont {Pincus}, \bibfnamefont
  {P.~A.}},\ }\bibfield  {title} {\enquote {\bibinfo {title} {Colloid
  stabilization by long grafted polymers},}\ }\href {\doibase
  10.1021/ma00164a009} {\bibfield  {journal} {\bibinfo  {journal}
  {Macromolecules}\ }\textbf {\bibinfo {volume} {19}},\ \bibinfo {pages}
  {2509--2513} (\bibinfo {year} {1986})}\BibitemShut {NoStop}%
\bibitem [{\citenamefont {Wu}(1994)}]{Wu-94}%
  \BibitemOpen
  \bibfield  {author} {\bibinfo {author} {\bibnamefont {Wu}, \bibfnamefont
  {Y.-S.}},\ }\bibfield  {title} {\enquote {\bibinfo {title} {Statistical
  distribution for generalized ideal gas of fractional-statistics particles},}\
  }\href {\doibase 10.1103/PhysRevLett.73.922} {\bibfield  {journal} {\bibinfo
  {journal} {Phys. Rev. Lett.}\ }\textbf {\bibinfo {volume} {73}},\ \bibinfo
  {pages} {922--925} (\bibinfo {year} {1994})}\BibitemShut {NoStop}%
\bibitem [{\citenamefont {Yang}(1987)}]{Yang-87}%
  \BibitemOpen
  \bibfield  {author} {\bibinfo {author} {\bibnamefont {Yang}, \bibfnamefont
  {W.-S.}},\ }\bibfield  {title} {\enquote {\bibinfo {title} {Debye screening
  for two-dimensional {C}oulomb systems at high temperatures},}\ }\href
  {\doibase 10.1007/BF01009952} {\bibfield  {journal} {\bibinfo  {journal} {J.
  Statist. Phys.}\ }\textbf {\bibinfo {volume} {49}},\ \bibinfo {pages} {1--32}
  (\bibinfo {year} {1987})}\BibitemShut {NoStop}%
\bibitem [{\citenamefont {Young}(1979)}]{Young-79}%
  \BibitemOpen
  \bibfield  {author} {\bibinfo {author} {\bibnamefont {Young}, \bibfnamefont
  {A.~P.}},\ }\bibfield  {title} {\enquote {\bibinfo {title} {Melting and the
  vector {C}oulomb gas in two dimensions},}\ }\href {\doibase
  10.1103/PhysRevB.19.1855} {\bibfield  {journal} {\bibinfo  {journal} {Phys.
  Rev. B}\ }\textbf {\bibinfo {volume} {19}},\ \bibinfo {pages} {1855--1866}
  (\bibinfo {year} {1979})}\BibitemShut {NoStop}%
\bibitem [{\citenamefont {Zabrodin}\ and\ \citenamefont
  {Wiegmann}(2006)}]{WieZab-00}%
  \BibitemOpen
  \bibfield  {author} {\bibinfo {author} {\bibnamefont {Zabrodin},
  \bibfnamefont {A.}}\ and\ \bibinfo {author} {\bibnamefont {Wiegmann},
  \bibfnamefont {P.}},\ }\bibfield  {title} {\enquote {\bibinfo {title}
  {{Large-$N$ expansion for the 2D Dyson gas}},}\ }\href {\doibase
  10.1088/0305-4470/39/28/s10} {\ \textbf {\bibinfo {volume} {39}},\ \bibinfo
  {pages} {8933--8963} (\bibinfo {year} {2006})}\BibitemShut {NoStop}%
\bibitem [{\citenamefont {Zahn}, \citenamefont {Lenke},\ and\ \citenamefont
  {Maret}(1999)}]{ZahLenMar-99}%
  \BibitemOpen
  \bibfield  {author} {\bibinfo {author} {\bibnamefont {Zahn}, \bibfnamefont
  {K.}}, \bibinfo {author} {\bibnamefont {Lenke}, \bibfnamefont {R.}}, \ and\
  \bibinfo {author} {\bibnamefont {Maret}, \bibfnamefont {G.}},\ }\bibfield
  {title} {\enquote {\bibinfo {title} {Two-stage melting of paramagnetic
  colloidal crystals in two dimensions},}\ }\href {\doibase
  10.1103/PhysRevLett.82.2721} {\bibfield  {journal} {\bibinfo  {journal}
  {Phys. Rev. Lett.}\ }\textbf {\bibinfo {volume} {82}},\ \bibinfo {pages}
  {2721--2724} (\bibinfo {year} {1999})}\BibitemShut {NoStop}%
\bibitem [{\citenamefont {Zeitouni}\ and\ \citenamefont
  {Zelditch}(2010)}]{ZeiZel-10}%
  \BibitemOpen
  \bibfield  {author} {\bibinfo {author} {\bibnamefont {Zeitouni},
  \bibfnamefont {O.}}\ and\ \bibinfo {author} {\bibnamefont {Zelditch},
  \bibfnamefont {S.}},\ }\bibfield  {title} {\enquote {\bibinfo {title} {Large
  deviations of empirical measures of zeros of random polynomials},}\ }\href
  {\doibase 10.1093/imrn/rnp233} {\bibfield  {journal} {\bibinfo  {journal}
  {{Int. Math. Res. Not.}}\ }\textbf {\bibinfo {volume} {2010}},\ \bibinfo
  {pages} {3935--3992} (\bibinfo {year} {2010})}\BibitemShut {NoStop}%
\bibitem [{\citenamefont {{Zhang}}\ \emph {et~al.}(2017)\citenamefont
  {{Zhang}}, \citenamefont {{Pagano}}, \citenamefont {{Hess}}, \citenamefont
  {{Kyprianidis}}, \citenamefont {{Becker}}, \citenamefont {{Kaplan}},
  \citenamefont {{Gorshkov}}, \citenamefont {{Gong}},\ and\ \citenamefont
  {{Monroe}}}]{Zhang_etal-17}%
  \BibitemOpen
  \bibfield  {author} {\bibinfo {author} {\bibnamefont {{Zhang}}, \bibfnamefont
  {J.}}, \bibinfo {author} {\bibnamefont {{Pagano}}, \bibfnamefont {G.}},
  \bibinfo {author} {\bibnamefont {{Hess}}, \bibfnamefont {P.~W.}}, \bibinfo
  {author} {\bibnamefont {{Kyprianidis}}, \bibfnamefont {A.}}, \bibinfo
  {author} {\bibnamefont {{Becker}}, \bibfnamefont {P.}}, \bibinfo {author}
  {\bibnamefont {{Kaplan}}, \bibfnamefont {H.}}, \bibinfo {author}
  {\bibnamefont {{Gorshkov}}, \bibfnamefont {A.~V.}}, \bibinfo {author}
  {\bibnamefont {{Gong}}, \bibfnamefont {Z.~X.}}, \ and\ \bibinfo {author}
  {\bibnamefont {{Monroe}}, \bibfnamefont {C.}},\ }\bibfield  {title} {\enquote
  {\bibinfo {title} {{Observation of a many-body dynamical phase transition
  with a 53-qubit quantum simulator}},}\ }\href {\doibase 10.1038/nature24654}
  {\bibfield  {journal} {\bibinfo  {journal} {Nature}\ }\textbf {\bibinfo
  {volume} {551}},\ \bibinfo {pages} {601--604} (\bibinfo {year}
  {2017})}\BibitemShut {NoStop}%
\bibitem [{\citenamefont {{Zhou}}\ \emph {et~al.}(2021)\citenamefont {{Zhou}},
  \citenamefont {{Sung}}, \citenamefont {{Brutschea}}, \citenamefont
  {{Esterlis}}, \citenamefont {{Wang}}, \citenamefont {{Scuri}}, \citenamefont
  {{Gelly}}, \citenamefont {{Heo}}, \citenamefont {{Taniguchi}}, \citenamefont
  {{Watanabe}}, \citenamefont {{Zar{\'a}nd}}, \citenamefont {{Lukin}},
  \citenamefont {{Kim}}, \citenamefont {{Demler}},\ and\ \citenamefont
  {{Park}}}]{Zhou_etal-21}%
  \BibitemOpen
  \bibfield  {author} {\bibinfo {author} {\bibnamefont {{Zhou}}, \bibfnamefont
  {Y.}}, \bibinfo {author} {\bibnamefont {{Sung}}, \bibfnamefont {J.}},
  \bibinfo {author} {\bibnamefont {{Brutschea}}, \bibfnamefont {E.}}, \bibinfo
  {author} {\bibnamefont {{Esterlis}}, \bibfnamefont {I.}}, \bibinfo {author}
  {\bibnamefont {{Wang}}, \bibfnamefont {Y.}}, \bibinfo {author} {\bibnamefont
  {{Scuri}}, \bibfnamefont {G.}}, \bibinfo {author} {\bibnamefont {{Gelly}},
  \bibfnamefont {R.~J.}}, \bibinfo {author} {\bibnamefont {{Heo}},
  \bibfnamefont {H.}}, \bibinfo {author} {\bibnamefont {{Taniguchi}},
  \bibfnamefont {T.}}, \bibinfo {author} {\bibnamefont {{Watanabe}},
  \bibfnamefont {K.}}, \bibinfo {author} {\bibnamefont {{Zar{\'a}nd}},
  \bibfnamefont {G.}}, \bibinfo {author} {\bibnamefont {{Lukin}}, \bibfnamefont
  {M.~D.}}, \bibinfo {author} {\bibnamefont {{Kim}}, \bibfnamefont {P.}},
  \bibinfo {author} {\bibnamefont {{Demler}}, \bibfnamefont {E.}}, \ and\
  \bibinfo {author} {\bibnamefont {{Park}}, \bibfnamefont {H.}},\ }\bibfield
  {title} {\enquote {\bibinfo {title} {{Bilayer Wigner crystals in a transition
  metal dichalcogenide heterostructure}},}\ }\href {\doibase
  10.1038/s41586-021-03560-w} {\bibfield  {journal} {\bibinfo  {journal}
  {Nature}\ }\textbf {\bibinfo {volume} {595}},\ \bibinfo {pages} {48--52}
  (\bibinfo {year} {2021})}\BibitemShut {NoStop}%
\bibitem [{\citenamefont {Zong}, \citenamefont {Lin},\ and\ \citenamefont
  {Ceperley}(2002)}]{ZonLinCep-02}%
  \BibitemOpen
  \bibfield  {author} {\bibinfo {author} {\bibnamefont {Zong}, \bibfnamefont
  {F.~H.}}, \bibinfo {author} {\bibnamefont {Lin}, \bibfnamefont {C.}}, \ and\
  \bibinfo {author} {\bibnamefont {Ceperley}, \bibfnamefont {D.~M.}},\
  }\bibfield  {title} {\enquote {\bibinfo {title} {Spin polarization of the
  low-density three-dimensional electron gas},}\ }\href {\doibase
  10.1103/PhysRevE.66.036703} {\bibfield  {journal} {\bibinfo  {journal} {Phys.
  Rev. E}\ }\textbf {\bibinfo {volume} {66}},\ \bibinfo {pages} {036703}
  (\bibinfo {year} {2002})}\BibitemShut {NoStop}%
\end{thebibliography}

%merlin.mbs aipauth4-1.bst 2010-07-25 4.21a (PWD, AO, DPC) hacked
%Control: key (0)
%Control: author (9) reversed initials
%Control: editor formatted (0) differently from author
%Control: production of article title (0) allowed
%Control: page (1) range
%Control: year (1) truncated
%Control: production of eprint (0) enabled
%

\end{document}